\newcommand{\longv}[1]{##1}
\newcommand{\shortv}[1]{}
\newcommand{\longversion}{\renewcommand{\longv}[1]{##1}\renewcommand{\shortv}[1]{}}
\numberwithin{theorem}{section}
\numberwithin{equation}{section} 
\def\m{\mathbf{m}} 
\newcommand{\nn}{\nonumber}
\newcommand{\x}{{\cX}}
\newcommand\cnv{\txt{cnv}}
\newcommand{\card}[1]{|#1|}
\newcommand{\ptwise}{\preceq}
\newcommand{\prefix}{\sqsubseteq}
\newcommand{\postfix}{\sqsupseteq}
\newcommand{\ltri}{\triangleleft}
\newcommand{\frto}[2]{#1:#2}
\newcommand{\len}[1]{l(#1)}
\renewcommand{\pair}[2]{\mathopen(#1,#2\mathclose)}
\renewcommand{\tup}[1]{(#1)}
\newcommand{\tupcod}[1]{\mathopen\langle #1\mathclose\rangle}
\newcommand{\tupdec}[1]{\mathopen[#1\mathclose]}
\newcommand{\dom}{\operatorname{dom}}
\newcommand{\Cl}[1]{\overline{#1}}
\newcommand{\Int}[1]{#1^{\txt{o}}}
\newcommand{\s}{s}
\renewcommand{\S}{S}
\renewcommand{\t}{\mathbf{t}}
\renewcommand{\d}{\mathbf{d}}
\newcommand{\En}{\operatorname{En}}
\newcommand{\Cf}{\operatorname{Cf}}
\newcommand{\F}{\operatorname{F}}
\newcommand{\Lip}{\operatorname{Lip}}
\newcommand{\id}{\operatorname{id}}
\renewcommand{\u}{u}
\begin{document}
\frontmatter
\title{Lecture notes on descriptional complexity and randomness}
\author{ Peter G\'acs\\ Boston University\\ gacs@bu.edu}
\bookmark[page=1,level=0]{Lecture notes on descriptional complexity and randomness}

\date{}

\maketitle
\renewcommand{\abstractname}{}
 \begin{abstract}
A didactical survey of the foundations of Algorithmic Information Theory.
These notes are short on motivation, history and background but introduce
some of the main techniques and concepts of the field.

The ``manuscript'' has been evolving over the years.
Please, look at ``Version history'' below to see what has changed when.
\end{abstract}
\thispagestyle{empty} 

\cleardoublepage

\tableofcontents

\section*{Version history}

May 2021: changed the notation to the now widely accepted one: Kolmogorov's original complexity
is denoted \( C(x) \) (instead of the earlier \( K(x) \)), while the prefix complexity is denoted \( K(x) \)
instead of the earlier \( H(x) \).

June 2013: corrected a slight mistake in the section on the section on randomness via algorithmic
probability.

February 2010: chapters introduced, and recast using the {\sffamily memoir} class.

April 2009: besides various corrections, a section is added
on infinite sequences.  This is not new material, just places the
most classical results on randomness of infinite sequences before the more
abstract theory.

January 2008: major rewrite.
 \begin{itemize}
  \item Added formal definitions throughout.
  \item Made corrections in the part on uniform tests and generalized
    complexity, based on remarks of Hoyrup, Rojas and Shen.
  \item Rearranged material.
  \item Incorporated material on uniform tests from the work of Hoyrup-Rojas.
  \item Added material on class tests.
 \end{itemize}

\mainmatter
\maxsecnumdepth{subsection}
\setsecnumdepth{subsection}

\chapter{Complexity}

\epigraph{
  Je n'ai fait celle-ci plus longue que parce que
  je n'ai pas eu le loisir de la faire plus courte.
}{Pascal}

\epigraph{
Ainsi, au jeu de \emph{croix ou pile}, 
 l'arriv\'ee de croix cent fois de suite, 
nous para\^it extraordinaire; parce que 
le nombre presque infini des combinaisons 
qui peuvent arriver en cent coups
\'etant partag\'e en s\'eries reguli\`eres, 
ou dans lesquelles nous voyons 
r\'egner un ordre facile \`a saisir, 
et en s\'eries irreguli\`eres; celles-ci sont 
incomparablement plus nombreuses.
}{Laplace}

\section{Introduction}

The present section can be read as an independent survey on the
problems of randomness.  It serves as some motivation for the dryer
stuff to follow.

If we toss a coin 100 times and it shows each time Head, we feel
lifted to a land of wonders, like Rosencrantz and Guildenstern 
in~\cite{StoppardRosencr}. 
The argument that 100 heads are just as probable as any
other outcome convinces us only that the axioms of Probability
Theory, as developed in~\cite{KolmFound}, do not solve all mysteries
they are sometimes supposed to. 
We feel that the sequence consisting
of 100 heads is \emph{not random}, though others with the same 
probability are. 
Due to the somewhat philosophical character of this 
paradox, its history is marked by an amount of controversy unusual for
ordinary mathematics. 
Before the reader hastes to propose a solution,
let us consider a less trivial example, due to L.~A.~Levin. 

Suppose that in some country, the share of votes for the ruling party
in 30 consecutive elections formed a sequence \(  0.99 x_{i}  \) where for
every even \( i \), the number \( x_{i} \) is the \( i \)-th digit of 
\( \pi=3.1415\ldots \). Though many of us would feel that the election
results were manipulated, it turns out to be surprisingly difficult
to prove this by referring to some general principle.

In a sequence of \( n \) fair elections, every sequence \( \og \) of \( n \)
digits has approximately the probability \( Q_{n}(\og)=10^{-n} \) to appear
as the actual sequence of third digits. Let us fix \( n \). We are given
a particular sequence \( \og \) and want to \emph{test} the validity of
the government's claim that the elections were fair. We interpret the
assertion ``\( \og \) is random with respect to \( Q_{n} \)'' as a synonym for
``there is no reason to challenge the claim that \( \og \) arose from the
distribution \( Q_{n} \)''. 

How can such a claim be challenged at all? The government, just like
the weather forecaster who announces 30\%{} chance of rain, does not
guarantee any particular set of outcomes. However, to stand behind
its claim, it must agree to any \emph{bet} based on the announced
distribution. Let us call a \emph{payoff function} with respect the
distribution \( P \) any nonnegative function \( t(\og) \) with \( \sum_\og
P(\og) t(\og) \le 1 \). If a ``nonprofit'' gambling casino asks 1
dollar for a game and claims that each outcome has probability 
\( P(\og) \) then it must agree to pay \( t(\og) \) dollars on outcome \( \og \).
We would propose to the government the following payoff function \( t_{0} \)
with respect to \( Q_{n} \): let \( t_{0}(\og)=10^{n/2} \) for all sequences \( \og \)
whose even digits are given by \( \pi \), and 0 otherwise. This bet would
cost the government \( 10^{n/2}-1 \) dollars. 

Unfortunately, we must propose the bet \emph{before} the elections
take place and it is unlikely that we would have come up exactly with
the payoff function \( t_{0} \). Is then the suspicion unjustifiable? 

No. 
Though the function \( t_{0} \) is not as natural as to guess it in
advance, it is still highly ``regular''. 
And already Laplace assumed
in~\cite{LaplacePhil} that the number of ``regular'' bets is so
small we can afford to make them \emph{all} in advance and still win by
a wide margin.

\begin{sloppypar}
Kolmogorov discovered in~\cite{Kolm65} and~\cite{KolmI3E68}
(probably without knowing about~\cite{LaplacePhil} but following a four
decades long controversy on von Mises' concept of randomness, 
see~\cite{MisesGeiringer}) that to make this approach work we must define
``regular''  or ``simple'' as ``having a short description'' (in some
formal sense to be specified below). 
There cannot be many objects
having a short description because there are not many short strings
of symbols to be used as descriptions. 
  \end{sloppypar}

We thus come to the principle saying that on a random outcome, all
sufficiently simple payoff functions must take small values. It turns
out below that this can be replaced by the more elegant principle
saying that \emph{a random outcome} itself \emph{is not too simple}. If
descriptions are written in a 2-letter alphabet then a typical
sequence of \( n \) digits takes \( n\log 10 \) letters to describe (if not
stated otherwise, all logarithms in these notes are to the base 2).
The digits of \( \pi \) can be generated by an algorithm whose
description takes up only some constant length.  Therefore the
sequence \( x_{1} \ldots x_{n} \) above can be described with approximately
\( (n/2)\log 10 \) letters, since every other digit comes from \( \pi \).  It
is thus significantly simpler than a typical sequence and can be
declared nonrandom.

\subsection{Formal results}

The theory of randomness is more impressive for infinite sequences
than for finite ones, since sharp distinction can be made between
random and nonrandom infinite sequences. 
For technical simplicity,
first we will confine ourselves to finite sequences, especially a
\df{discrete sample space} \( \Og \), which we identify with the set of
natural numbers. 
Binary strings will be identified with the natural numbers they denote. 

 \begin{definition}
A Turing machine is an imaginary computing device consisting of the 
following. 
A \df{control state} belonging to a finite set \( A \) of possible control states. 
A fixed number of infinite (or infinitely
extendable) strings of cells called \df{tapes}. 
Each cell contains a symbol belonging to a finite \df{tape alphabet} \( B \). 
On each tape, there is a read-write head observing one of the tape cells. 
The machine's \df{configuration} (global state) is determined at each
instant by giving its control state, the contents of the tapes and
the positions of the read-write heads. 
The ``hardware program'' of
the machine determines its configuration in the next step as a
functon of the control state and the tape symbols observed. 
It can change the control state, the content of the observed cells and the
position of the read-write heads (the latter by one step to the left
or right). 
Turing machines can emulate computers of any known design
(see for example~\cite{Yasuhara71}). 
The tapes are used for storing programs, input data, output and as memory. 
 \end{definition}

The Turing machines that we will use to interpret descriptions will be
somewhat restricted.

 \begin{definition}
Consider a Turing machine \( F \) which from a binary string \( p \)
and a natural number \( x \) computes the output \( F(p,x) \) (if anything at
all).  
We will say that \( F \) \df{interprets} \( p \) as a \df{description} of 
\( F(p,x) \) in the presence of the side information \( x \).
We suppose that if \( F(p,x) \) is defined then \( F(q,x) \) is not defined
for any  prefix \( q \) of \( p \). Such machines are called \df{self-delimiting}
(s.d.).

The \df{conditional complexity} \( K_{F}(x \mvert y) \) of the number \( x \) with
respect to the number \( y \) is the length of the shortest description 
\( p \) for which \( F(p,y)=x \). 
 \end{definition}

Kolmogorov and Solomonoff observed that the
function \( K_{F}(x \mvert y) \) depends only weakly on the machine \( F \), because
there are universal Turing machines capable of simulating the work of
any other Turing machine whose description is supplied.
More formally, the following theorem holds:

 \begin{theorem}[Invariance Theorem]
There is a s.d.~Turnig 
machine \( T \) such that for any s.d.~machine \( F \) a constant \( c_{F} \)
exists such that for all \( x,y \) we have 
\( K_{T}(x \mvert y) \le K_{F}(x \mvert y) + c_{F} \). 
 \end{theorem}

This theorem motivates the following definition.

 \begin{definition}
Let us fix \( T \) and define \( K(x \mvert y)=K_{T}(x \mvert y) \) and \( K(x) = K(x \mvert 0) \).  
 \end{definition}

The function \( K(x \mvert y) \) is not computable. 
We can compute a
nonincreasing, convergent sequence of approximations to \( K(x) \) (it is
\df{semicomputable} from above), but will not know how far to go in
this sequence for some prescribed accuracy.

If \( x \) is a binary string of length \( n \) then 
\( K(x) \le n + 2\log n + c \) for some constant \( c \). 
The description of \( x \)  with this length
gives \( x \) bit-for-bit, along with some information of length \( 2\log n \) 
which helps the s.d.~machine find the end of the description. 
For most binary strings of lenght \( n \), no significantly shorter
description exists, since the number of short descriptions is small.
Below, this example is generalized and sharpened for the case when
instead of counting the number of simple sequences, we measure their
probability. 

 \begin{definition}
Denote by \( x^{*} \) the first one among the shortest descriptions of \( x \).
 \end{definition}

The correspondence \( x \to x^{*} \) is a \df{code} in which no codeword
is the prefix of another one. 
This implies by an argument well-known in Information Theory the inequality
\begin{equation}\label{eq:Hkraft}
   \sum_{x} 2^{-K(x \mvert y)} \le 1, 
\end{equation}
  hence only a few objects \( x \) can have small complexity. 
The converse of the same argument goes as follows. 
Let \( \mu \) be a \df{computable}
probability distribution, one for which there is a binary program
computing \( \mu(\og) \) for each \( \og \) to any given degree of accuracy. 
Let \( K(\mu) \) be the length of the shortest one of these programs.
Then 
\begin{equation}  \label{eq:HbyMu}
   K(\og) \le -\log\mu(\og) + K(\mu) + c. 
 \end{equation}
 Here \( c \) is a universal constant. 
These two inequalities are the key
to the estimates of complexity and the characterization of randomness
by complexity. 

Denote
 \[
 d_{\mu}(\og) = -\log\mu(\og) - K(\og).
 \]
 Inequality~\eqref{eq:Hkraft} implies
 \[
  t_{\mu}(\og) = 2^{d_{\mu}(\og)}
 \]
can be viewed as a \df{payoff function}. 
Now we are in a position to solve the election paradox.
We propose the payoff function 
 \[
 2^{-\log Q_{n}(\og) - K(\og \mvert n)}
 \]
 to the government. 
(We used the conditional complexity \( K(\og \mvert n) \)
because the uniform distribution \( Q_{n} \) depends on \( n \).) 
If every
other digit of the outcome \( x \) comes from \( \pi \) then 
\( K(x \mvert n) \le (n/2)\log 10 + c_{0} \) hence we win a sum \( 2^{t(x)} \ge c_{1} 10^{n/2} \)
from the government (for some constants \( c_{0},c_{1}>0 \)), even though the
bet does not contain any reference to the number \( \pi \). 

The fact that \( t_{\mu}(\og) \) is a payoff function implies by Markov's
Inequality for any \( k>0 \) 
 \begin{equation}   \label{eq:Hstat1} 
 \mu \setof{ \og : K(\og) < -\log \mu(\og) - k } < 2^{-k}.
 \end{equation}
  Inequalities~\eqref{eq:HbyMu} and~\eqref{eq:Hstat1} say that with large
probability, the complexity \( K(\og) \) of a random outcome \( \og \) is
close to its upper bound \(  -\log \mu(\og)+K(\mu) \). 
This law occupies
distinguished place among the ``laws of probability'', because if
the outcome \( \og \) violates \df{any} such law, the complexity falls
far below the upper bound. 
Indeed, a proof of some ``law of
probability'' (like the law of large numbers, the law of iterated
logarithm, etc.) always gives rise to some simple computable payoff
function \( t(\og) \) taking large values on the outcomes violating the
law, just as in the election example. 
Let \( m \) be some large number,
suppose that \( t(\og) \) has complexity \( < m/2 \), and that 
\( t(\og_{0}) > 2^{m} \). 
Then inequality~\eqref{eq:HbyMu} can be applied to 
\( \nu(\og) = \mu(\og) t(\og) \), and we get 
 \begin{align*} 
      K(\og) &\le -\log \mu(\og) - m + K(\nu) + c_{0} 
 \\          &\le -\log \mu(\og) - m/2 + K(\mu) + c_{1}
 \end{align*}
 for some constants \( c_{0},c_{1} \). 

More generally, the payoff function \( t_{\mu}(\og) \) is \df{maximal} 
(up to a multiplicative constant) among all payoff functions that are
semicomputable (from below). 
Hence the quantity \(  -\log \mu(\og) - K(\og)  \) 
is a \df{universal test of randomness}.  
Its value measures
the \df{deficiency of randomness} in the outcome \( \og \) with respect
to the distribution \( \mu \), or the extent of justified suspicion
against the hypothesis \( \mu \) given the outcome \( \og \).

\subsection{Applications}

Algorithmic Information Theory (AIT) justifies the intuition of 
random sequences as nonstandard analysis justifies infinitely small 
quantities. Any statement of classical probability theory is provable 
without the notion of randomness, but some of them are easier to find 
using this notion. Due to the incomputability of the universal 
randomness test, only its approximations can be used in practice. 

\df{Pseudorandom sequences} are sequences generated by some 
algorithm, with \emph{some} randomness properties with respect to the
coin-tossing distribution. 
They have very low complexity (depending
on the strength of the tests they withstand, see for 
example~\cite{Daley75}),
hence are not random. 
Useful pseudorandom sequences can be defined
using the notion of \df{computational complexity}, for example 
the number
of steps needed by a Turing machine to compute a certain function.
The existence of such sequences can be proved using some difficult
unproven (but plausible) assumptions of computation theory.
See~\cite{BlumMic84}, \cite{YaoPseu82}, \cite{GoGoMic86}, 
\cite{LevinPseu87}.

\subsubsection*{Inductive inference}
 The incomputable ``distribution'' \( \m(\og)=2^{-K(\og)} \) has the 
remarkable property that, the test \( d(\og \mvert \m) \), shows all outcomes
\( \og \) ``random'' with respect to it. Relations~\eqref{eq:HbyMu} 
and~\eqref{eq:Hstat1} can be read as saying that if the real distribution is
\( \mu \) then \( \mu(\og) \) and \( \m(\og) \) are close to each other with
large probability. 
Therefore if we know that \( \og \) comes from some
unknown simple distribution \( \mu \) then we can use \( \m(\og) \) as an
estimate of \( \mu(\og) \). 
This suggests to call \( \m \) the ``apriori probability'' (but we will not use this
term much).
The randomness test \( d_{\mu}(\og) \) can be interpreted
in the framework of hypothesis testing: it is the likelihood ratio
between the hypothesis \( \mu \) and the fixed alternative
\hbox{hypothesis \( \m \).}

In ordinary statistical hypothesis testing, some properties of the 
unknown distribution \( \mu \) are taken for granted. 
The sample \( \og \) is generally a large independent sample:
\( \mu \) is supposed to be a product distribution. 
Under these conditions, the universal test
could probably be reduced to some of the tests used in statistical
practice. 
However, these conditions do not hold in other
applications: for example testing for independence in a proposed random 
sequence, predicting some time series of economics, or pattern 
recognition.  

If the apriori probability \( \m \) is a good estimate of the actual 
probability then we can use the conditional apriori probability for
prediction, without reference to the unknown distribution \( \mu \). 
For
this purpose, one first has to define apriori probability for the set
of infinite sequences of natural numbers, as done in~\cite{ZvLe70}.
We denote this function by \( M \). 
For any finite sequence \( x \), the
number \( M(x) \) is the apriori probability that the outcome is some
extension of \( x \).  
Let \( x,y \) be finite sequences. 
The formula
\begin{equation}\label{eq:infer}
  \frac{M(xy)}{M(x)} 
 \end{equation}
 is an estimate of the conditional probability that the next terms of
the outcome will be given by \( y \) provided that the first terms are
given by \( x \). 
It converges to the actual conditional probability 
\( \mu(xy)/\mu(x) \) with \( \mu \)-probability 1 for any computable
distribution \( \mu \) (see for example~\cite{Solomonoff78}). 
To understand the
surprising generality of the formula~\eqref{eq:infer}, suppose that some
infinite sequence \( z(i) \) is given in the following way. 
Its even
terms are the subsequent digits of \( \pi \), its odd terms are uniformly
distributed, independently drawn random digits. 
Let 
 \[
   z(\frto{1}{n}) = z(1) \dotsb z(n). 
 \]
Then \( M(z(\frto{1}{2i}) a / M(z(\frto{1}{2i})) \) converges to \( 0.1 \) for 
\( a=0,\dots, 9 \), while \( M(z(\frto{1}{2i+1}) a) / M(z(\frto{1}{2i+1})) \)
converges to 1 if \( a \) is the \( i \)-th digit of \( \pi \), and to 0 otherwise. 

The inductive inference formula using conditional apriori probability 
can be viewed as a mathematical form of ``Occam's Razor'': the advice 
to predict by the simplest rule fitting the data. 
It can also viewed as a realization of Bayes'Rule, with a universally applicable apriori 
distribution. 
Since the distribution \( M \) is incomputable, we view the 
main open problem of inductive inference to find maximally efficient 
approximations to it. Sometimes, even a simple approximation gives 
nontrivial results (see~\cite{BarzFreiv72}).

\subsubsection*{Information theory}
\begin{sloppypar}
 Since with large probability, \( K(\og) \) is close to \( -\log \mu(\og) \),
the \df{entropy} \( -\sum_\og \mu(\og) \log \mu(\og) \) of the 
distribution \( \mu \) is close to the \df{average complexity} 
\( \sum_\og\mu(\og) K(\og) \). 
The complexity \( K(x) \) of an object \( x \) can indeed
be interpreted as the distribution-free definition of \emph{information content}. 
The correspondence \( x \mapsto x^{*} \) is a sort of
\emph{universal code}: its average (even individual) ``rate'', or
codeword length is almost equally good for any simple computable 
distribution.
  \end{sloppypar}

It is of great conceptual help to students of statistical physics that 
entropy can be defined now not only for ensembles (probability 
distributions), but for individual states of a physical system. 
The notion of an individual \emph{incompressible sequence}, a sequence whose 
complexity is maximal, proved also extremely useful in finding 
information-theoretic lower bounds on the computing speed of certain 
Turing machines (see~\cite{PaulSeifSimon81}). 

\begin{sloppypar}
The conditional complexity \( K(x \mvert y) \) obeys identities analogous to 
the information-theoretical identities for conditional entropy, but
these identities are less trivial to prove in AIT.  
The information
\( I(x:y) = K(x)+K(y)-K(x,y) \) has several interpretations.  
It is known
that \( I(x:y) \) is equal, to within an additive constant, to 
\( K(y) - K(y \mvert x^{*}) \), the amount by which the object \( x^{*} \) (as
defined in the previous section) decreases our uncertainty about \( y \).
But it can be written as \( -\log\m^2(x,y)-K(x,y) = d((x,y) \mvert \m^2) \)
where \( \m^2=\m\times \m \) is the product distribution of \( \m \) with
itself. 
It is thus the deficiency of randomness of the pair \( (x,y) \)
with respect to this product distribution. 
Since any object is random
with respect to \( \m \), we can view the randomness of the pair \( (x,y) \)
with respect to the product \( \m^2 \) as the \emph{independence} of \( x \)
and \( y \) from each other. 
Thus ``information'' measures the ``deficiency of independence''. 
  \end{sloppypar}
 
\subsubsection*{Logic}
 Some theorems of Mathematical Logic (in particular, G\"odel's
theorem) have a strong quantitative form in AIT, with new
philosophical implications (see~\cite{Chaitin74}, \cite{ChaitinSciAm75},
\cite{Levin74}, \cite{LevinRandCons84}). 
Levin based a new system of
intuitionistic analysis on his Independence Principle (see below) 
in~\cite{LevinRandCons84}.

\subsection{History of the problem}
 P.~S.~Laplace thought that the number of  ``regular'' sequences 
(whatever ``regular'' means) is much smaller than the number of 
irregular ones (see~\cite{LaplacePhil}). 
In the first attempt at 
formalization hundred years later, R.~von Mises defined an infinite 
binary sequence as random (a ``Kollektiv'') if the relative
frequencies converge in any subsequence selected according to some 
(non-anticipating) ``rule'' (whatever ``rule'' means,
see~\cite{MisesGeiringer}). 
As pointed out by A.~Wald and others, Mises's 
definitions are sound only if a countable set of possible rules is
fixed. 
The logician A.~Church, in accordance with his famous thesis,
proposed to understand ``rule'' here as ``recursive (computable)
function''. 

The Mises selection rules can be considered as special randomness
tests. 
In the ingenious work~\cite{Ville39}, J.~Ville proved that they do 
not capture all relevant properties of random sequences. 
In particular, a 
Kollektiv can violate the law of iterated logarithm. 
He proposed to 
consider arbitrary payoff functions (a countable set of them), as 
defined on the set of infinite sequences---these are more commonly 
known as \emph{martingales}.  

For the solution of the problem of inductive inference, R.~Solomonoff 
introduced complexity and apriori probability in~\cite{Solomonoff64I} and 
proved the Invariance Theorem. 
A.~N.~Kolmogorov independently introduced
complexity as a measure of individual information content and 
randomness, and proved the Invariance Theorem (see~\cite{Kolm65}
and~\cite{KolmI3E68}).  
The incomputability properties of complexity
have noteworthy philosophical implications (see~\cite{BennSciAm79},
\cite{Chaitin74}, \cite{ChaitinSciAm75}.

P.~Martin-L\"of defined in~\cite{MLof66art} randomness for infinite
sequences. His concept is essentially the synthesis of Ville and
Church (as noticed in~\cite{SchnorrBook71}). 
He recognized the existence
of a universal test, and pointed out the close connection between the
randomness of an infinite sequence and the complexity of its initial
segments.  

L.~A.~Levin defined the apriori probability \( M \) as a maximal (to within a 
multiplicative constant) semicomputable measure.
With the help of a modified complexity definition, he gave a simple and
general characterization of random sequences by the behavior of the 
complexity of their initial segments (see~\cite{ZvLe70},
\cite{LevinRand73}). 
In~\cite{GacsSymm74} and~\cite{Levin74},
the information-theoretical properties of the self-delimiting complexity 
(as defined above) are exactly described. 
See also~\cite{Chaitin75}, \cite{Schnorr73} and~\cite{Willis70}.

In~\cite{LevinRandCons84}, Levin defined a deficiency of randomness 
\( d(\og \mvert \mu) \) in a uniform manner for all (computable or incomputable)
measures \( \mu \). 
He proved that all outcomes are random with respect
to the apriori probability \( M \). 
In this and earlier papers, he also
proved the Law of Information Conservation, stating that the
information \( I(\ag ; \bg) \) in a sequence \( \ag \) about a sequence \( \bg \)
cannot be significantly increased by any algorithmic processing of
\( \ag \) (even using random number generators). 
He derived this
law from a so-called Law of Randomness Conservation via the
definition of information \( I(\ag ; \bg) \) as deficiency of randomness
with respect to the product distribution \( M^2 \). 
Levin suggested the 
Independence Principle saying that any sequence \( \ag \) arising in
\emph{nature} contains only finite information \( I(\ag ; \bg) \) about
any sequence \( \bg \) defined by \emph{mathematical} means. 
With this
principle, he showed that the use of more powerful notions of
definability in randomness tests (or the notion of complexity) does
not lead to fewer random sequences among those arising in nature. 

The monograph~\cite{Fine73} is very useful as background information
on the various attempts in this century at solving the paradoxes of
probability theory. 
The work~\cite{LevinRandCons84} is quite comprehensive but
very dense; it can be recommended only to devoted readers. 
The work~\cite{ZvLe70} is comprehensive and readable but not quite
up-to-date. 
The surveys~\cite{ChaitinSurv77}, \cite{SchnorrBook71} 
and~\cite{SchnorrSurv75} can be used to complement it.

The most up-to-date and complete survey, which subsumes most of
these notes, is~\cite{LiViBook97}.

AIT created many interesting problems of its own. 
See for 
example~\cite{Chaitin75}, \cite{Chaitin76}, \cite{GacsSymm74},
\cite{GacsExact80}, \cite{GacsRel83}, \cite{GacsEnc86},
\cite{LevinVyugin77}, \cite{LevinRandCons84}, \cite{Loveland69},
\cite{SchnorrBook71}, \cite{SchnorrFuchs77}, \cite{SolovayManu}, and
the technically difficult results~\cite{Solovay3H77} and~\cite{Vyugin76}.

\subsubsection*{Acknowledgment}  The form of exposition of the results in
these notes and the general point of view represented in them
were greatly influenced by Leonid Levin.
More recent communication with Paul Vit\'anyi, Mathieu Hoyrup, Crist\'obal Rojas and
Alexander Shen has also been very important.


\section{Notation }

When not stated otherwise, \( \log \) means base 2 logarithm.  
The cardinality of a set \( A \) will be denoted by \( \card{A} \).
(Occasionally there will be inconsistency, sometimes denoting it
by \( |A| \), sometimes by \( \# A \).)
If \( A \) is a set then \( 1_{A}(x) \) is its indicator function:
 \begin{align*}
   1_{A}(x) = \begin{cases}
      1 & \txt{if \( x\in A \),}
\\  0 & \txt{otherwise.}
 \end{cases}
 \end{align*}
The empty string is denoted by \( \Lg \).
The set \( A^{*} \) is the set of all finite strings of elements of \( A \), including
the empty string.  
Let \( \len{x} \) denote the length of string \( x \).  
(Occasionally there will be inconsistency, sometimes denoting it
by \( |x| \), sometimes by \( l(x) \)).
For sequences \( x \) and \( y \), let \( x\prefix y \) denote that \( x \) is a prefix of \( y \).  
For a string \( x \) and a (finite or infinite) sequence \( y \), we
denote their concatenation by \( x y \). 
For a sequence \( x \) and \( n \le\len{x} \), the \( n \)-th element of \( x \) is  \( x(n) \), and 
 \[
  x(\frto{i}{j}) = x(i) \dotsb x(j).  
 \]
Sometimes we will also write
 \[
    x^{\le n} = x(\frto{1}{n}).
 \]
The string \( x_{1}\dotsb x_{n} \) will sometimes be written also as
\( \tup{x_{1},\dots,x_{n}} \).
For natural number \( m \) let \( \bg(m) \) be the binary
sequence denoting \( m \) in the binary notation. 
We denote by \( X^{\dN} \) the set of all infinite sequences of elements of \( X \).

The sets of natural numbers, integers, rational numbers, real numbers
and complex numbers
will be denoted respectively by \( \dN, \dZ, \dQ, \dR,\dC \).
The set of nonnegative real numbers will be denoted by \( \dR_{+} \).
The set of real numbers with \( -\infty,\infty \) added (with the appropriate
topology making it compact) will be denoted by \( \ol\dR \).
Let 
 \[
  \dS_{r} = \{0,\ldots,r-1\}^*,\quad
  \dS = \dN^*, \quad
  \dB = \{0,1\}.
 \]
We use \( \land \) and \( \lor \) to denote \( \min \) and \( \max \), further
 \[
   |x|^{+} = x\lor 0,\quad |x|^{-} = |-x|^{+}
 \]
for real numbers \( x \).

Let \( \tupcod{ \cdot } \) be some standard one-to-one encoding of \( \dN^* \)
to \( \dN \), with partial inverses \( \tupdec{ \cdot }_{i} \) where
 \( \tupdec{\ang{x}}_{i} = x(i) \) for \( i \le \len{x} \).
For example, we can have
 \[
   \tupcod{i, j} = \frac{1}{2}(i+1)(i+j+1) + j,
\quad \tupcod{n_{1},\dots,n_{k+1}} = \tupcod{\tupcod{n_{1},\dots,n_{k}},n_{k+1}}.
 \]
We will use \( \ang{\cdot,\cdot} \) in particular as a pairing function over \( \dN \).
Similar pairing functions
will be assumed over other domains, and they will also be denoted by
\( \ang{\cdot,\cdot} \).

Another use of the notation \( \tup{\cdot,\cdot} \) may arise when the usual
notation \( (x,y) \) of an ordered pair of real numbers could be confused with the
same notation of an open interval.
In this case, we will use \( \tup{x,y} \) to denote the pair.

The relations
 \[
  f \lea g,\quad f\lem g
 \]
mean inequality to within an additive constant and multiplicative constant
respectively.
The first is equivalent to \( f \le g + O(1) \), the second to \( f = O(g) \). 
The relation \( f\eqm g \) means \( f \lem g \) and \( f \gem g \).

\section{Kolmogorov complexity } \label{sec:kolm}

\subsection{Invariance }

It is natural to try to define the \emph{information content} of some 
text as the size of the smallest string (code) from which it can be 
reproduced by some decoder, interpreter. 
We do not want too much
information ``to be hidden in the decoder'' we want it to be a
``simple'' function.  
Therefore, unless stated otherwise, we require
that our interpreters be \emph{computable}, that is partial recursive
functions. 

 \begin{definition}
A partial recursive 
function \( A \) from \( \dS_{2} \times \dS \) to \( \dS \) will be called a (binary)
\df{interpreter}, or \df{decoder}. 
We will often refer to its first argument as the \df{program}, or \df{code}. 
 \end{definition}

Partial recursive functions are relatively simple; on the
other hand, the class of partial recursive functions has some
convenient closure properties. 

 \begin{definition}
For any binary interpreter \( A \) and strings \( x,y \in \dS \), the number
 \[
   C_A(x \mvert y) = \min \setof{ \len{p} : A(p,y) = x }
 \] 
is called the \df{conditional Kolmogorov-complexity of \( x \) given
\( y \), with respect to the interpreter \( A \) }. 
(If the set after the ``min'' is empty, then the minimum is \( \infty \)). 
We write  \( C_A(x)=C_A(x \mvert \Lg) \).  
 \end{definition}

The number \( C_A(x) \) measures the length of
the shortest description for \( x \) when the algorithm \( A \) is used to
interpret the descriptions. 

The value \( C_A(x) \) depends, of course, on the underlying function
\( A \).  
But, as Theorem~\ref{thm:invar} shows, if we restrict ourselves
to sufficiently powerful interpreters \( A \), then switching between
them can change the complexity function only by amounts bounded by
some additive constant. 
Therefore complexity can be considered an intrinsic characteristic of finite objects.

 \begin{definition}
A binary p.r.~interpreter \( U \) is called \df{optimal}
if for any binary p.r.~interpreter \( A \) there is a constant \( c < \infty \) such that 
for all \( x,y \) we have
 \begin{equation}  
C_U(x \mvert y) \le C_A(x \mvert y) + c. \label{eq:opt}
 \end{equation}
 \end{definition}

\begin{theorem}[Invariance Theorem] \label{thm:invar}
 There is an optimal p.r.~binary interpreter.
 \end{theorem}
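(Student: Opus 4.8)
The plan is to construct the optimal interpreter $U$ by a standard universal-machine simulation, exploiting the fact that the partial recursive functions of two string arguments can be effectively enumerated.

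\medskip

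First I would fix an effective enumeration $A_1, A_2, A_3, \dots$ of all partial recursive binary interpreters, i.e.\ of all partial recursive functions from $\dS_2 \times \dS$ to $\dS$; this is available because the p.r.\ functions on these (effectively countable) domains have a standard G\"odel numbering, and we may as well take the domains literally to be $\dS_2\times\dS$ and $\dS$ by fixing computable bijections with $\dN$. Next I would choose a computable prefix-free encoding $e \colon \dN \to \dS_2$ of the indices — for instance $e(i) = 1^i 0$, or any self-delimiting code — so that from an infinite (or long enough finite) binary string one can read off, in a prefix-free way, a first block coding an index $i$ and then treat the remainder as a program for $A_i$. Then I would define
 \[
   U(e(i)\,q,\; y) = A_i(q,\; y)
 \]
for all $i \in \dN$, $q \in \dS_2$, $y \in \dS$, and leave $U$ undefined on programs whose prefix is not of the form $e(i)$ for some $i$. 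Because $e$ is prefix-free, the decomposition of the program into $e(i)$ followed by $q$ is unique, so $U$ is well defined; and because $e$ is computable and the map $(i,q,y)\mapsto A_i(q,y)$ is partial recursive (this is exactly the universality of the enumeration $A_i$), $U$ is a partial recursive binary interpreter.

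\medskip

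Finally I would verify the optimality inequality~\eqref{eq:opt}. Given an arbitrary binary p.r.\ interpreter $A$, it equals $A_i$ for some fixed index $i$. For any $x, y$ with $C_A(x\mvert y) < \infty$, pick a shortest $p$ with $A(p,y) = x$, so $\len{p} = C_A(x\mvert y)$. Then $U(e(i)p, y) = A_i(p,y) = A(p,y) = x$, so $C_U(x\mvert y) \le \len{e(i)p} = \len{e(i)} + \len{p} = C_A(x\mvert y) + c$ with $c = \len{e(i)}$, a constant depending only on $A$ (through $i$), not on $x$ or $y$. When $C_A(x\mvert y) = \infty$ the inequality is trivial. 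This is exactly~\eqref{eq:opt}, so $U$ is optimal.

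\medskip

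I do not expect a serious obstacle here; the one point deserving care is the well-definedness of $U$, which is why the index code $e$ must be prefix-free (otherwise the split of the program string into ``index part'' and ``rest'' is ambiguous and $U$ need not even be a function). A secondary routine point is to spell out that an effective enumeration of \emph{all} p.r.\ binary interpreters exists and that the associated evaluation map is itself partial recursive — this is the usual enumeration theorem for partial recursive functions, transported along fixed computable bijections $\dS_2 \times \dS \cong \dN$ and $\dS \cong \dN$, and I would simply cite it rather than reprove it.
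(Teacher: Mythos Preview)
Your proof is correct and follows essentially the same approach as the paper: both construct $U$ by prefixing the program with a self-delimiting code for an interpreter index and then simulating the indexed interpreter on the remainder. The paper uses the specific prefix-free encoding $a^{o}=a(1)0a(2)0\cdots a(n)1$ of binary strings (rather than your $e(i)=1^{i}0$ of natural numbers) and a universal function $V(a,p,x)$ indexed by strings, but this is only a cosmetic difference.
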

\begin{proof} 
The idea is to use interpreters that come from universal partial recursive
functions.
However,  not all such functions can be used for universal interpreters.
Let us introduce an appropriate pairing function.
For \( x \in \dB^{n} \), let 
 \[
   x^{o} = x(1) 0 x(2) 0 \ldots x(n-1) 0 x(n) 1 
 \]
  where \( x^o = x1 \) for \( \len{x} =1 \). Any binary sequence \( x \) can be
uniquely represented in the form \( p = a^o b \). 

We know there is a p.r.~function
\( V : \dS_{2} \times \dS_{2} \times \dS \to \dS \) that
is \emph{universal}: for any
p.r.~binary interpreter \( A \), there is a string \( a \) such that for all
\( p,x \), we have \( A(p,x) = V(a,p,x) \). 
Let us define the function \( U(p,x) \) as follows. 
We represent the string \( p \) in the form 
\( p = u^o v \) and define \(  U(p,x) = V(u,v,x)  \). 
Then \( U \) is a p.r.~interpreter.
Let us verify that it is optimal. 
Let \( A \) be a p.r.~binary
interpreter, \( a \) a binary string such that \( A(p,x) = U(a,p,x) \) for
all \( p,x \). 
Let \( x,y \) be two strings. 
If \( C_A(x \mvert y)=\infty \), then~\eqref{eq:opt} holds trivially. 
Otherwise, let \( p \) be a binary string of
length \( C_A(x \mvert y) \) with \( A(p,y)=x \). 
Then we have
 \[
  U(a^o p, y) = V(a,p,y) = A(p,y) = x, 
 \]
  and
 \[
  C_U(x \mvert y) \le 2\len{a} + C_A(x \mvert y).
 \]
  \end{proof}

The constant \( 2\len{a} \) is in the above
proof a bound on the complexity of description of the interpreter \( A \)
for the optimal interpreter \( U \). 
Let us note that for any two optimal
interpreters \( U^{(1)},U^{(2)} \), there is a constant \( c \) such that for
all \( x,y \), we have
 \begin{equation}
  |C_{U^{(1)}}(x \mvert y) - C_{U^{(2)}}(x \mvert y)| < c .
 \label{eq:invar}  \end{equation} 
  Hence the complexity \( C_U(x) \) of description of an object \( x \) does
not depend strongly on the interpreter \( U \). 
Still, for every string
\( x \), there is an optimal interpreter \( U \) with \( C_U(x)=0 \).  
Imposing a
universal bound on the table size of the Turing machines used to
implement optimal interpreters, we can get a universal bound on the
constants in~\eqref{eq:invar}.

The theorem motivates the following definition.

 \begin{definition}
We fix an optimal binary p.r.~interpreter \( U \) and write 
\( C(x \mvert y) \) for \( C_U(x \mvert y) \). 
 \end{definition}

Theorem~\ref{thm:invar} (as well as other invariance theorems) is used
in AIT for much more than just to show that \( C_A(x) \) is a proper
concept. 
It is the principal tool to find upper bounds on \( C(x) \);
this is why most such upper bounds are proved to hold only to within
an additive constant. 

The optimal interpreter \( U(p,x) \) defined in the proof of 
Theorem~\ref{thm:invar} is obviously a universal partial recursive function.
Because of its convenient properties we will use it from now on as
our standard universal p.r.~function, and we will refer to an
arbitrary p.r.~function as \( U_{p}(x)=U(p,x) \).

 \begin{definition}
We define \( U_{p}(x_{1},\ldots,x_{k}) = U_{p}(\ang{ x_{1}\,\ldots, x_{k}}) \). 
Similarly,
we will refer to an arbitrary computably enumerable
set as the range \( W_{p} \) of some \( U_{p} \).  
We often do not need the second argument of the function \( U(p,x) \).  
We therefore define 
 \[ U(p) = U(p,\Lg).
 \]
 \end{definition}

It is of no consequence that we chose binary strings as as
descriptions.  
It is rather easy to define, (Exercise) for any two
natural numbers \( r,s \), a standard encoding \( \cnv_s^r \) of base \( r \)
strings \( x \) into base \( s \) strings with the property
 \[   \len{\cnv_s^r(x)} \le \len{x}\frac{\log r}{\log s} + 1.
 \] 
 Now, with \( r \)-ary strings as descriptions, we must define
\( C_A(x) \) as the minimum of \( \len{p}\log r \) over all programs \( p \)
in \( \dS_{r} \) with \( A(p)=x \).  
The equivalence of the definitions for
different values of \( r \) is left as an exercise.

\subsection{Simple quantitative estimates}

We found it meaningful to speak about the information content, 
complexity of a finite object. 
But how to compute this quantity? 
It turns out that complexity is not a computable function,
but much is known about its statistical behavior. 

The following notation will be useful in the future.  

 \begin{definition}
The relation 
\( f \lea g \) means inequality to within an additive constant, that
there is a constant \( c \) such that for all \( x \), \( f(x) \le g(x) \). 
We can write this also as \( f \le g + O(1) \). 
We also say that \( g \) \df{additively dominates} \( f \). 
The relation \( f \eqa g \) means \( f \lea g \)
and \( f \gea g \). 
The relation \( f \lem g \) among nonnegative functions
means inequality to within a multiplicative constant. 
It is
equivalent to \( \log f \lea \log g \) and \( f = O(g) \). 
We also say that
\( g \) \df{multiplicatively dominates} \( f \). 
The relation \( f\eqm g \) means \( f \lem g \) and \( f \gem g \).
 \end{definition}

With the above notation, here are the simple properties.

\begin{theorem} \label{thm:upbd}
The following simple upper bound holds.
 \begin{alphenum}  
  \item \label{thebd} For any natural number \( m \), we have 
 \begin{equation} C(m) \lea \log m. \label{eq:upbd}
 \end{equation}
 \item \label{howsharp}
  For any positive real number \( v \) and string \( y \), every finite set 
	\( E \) of size \( m \) has at least \( m(1-2^{-v+1}) \) elements \( x \) with 
	\( C(x \mvert y) \ge \log m - v  \).
 \end{alphenum}
\end{theorem}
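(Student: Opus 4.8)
The plan is to encode a natural number $m$ by its binary representation $\bg(m)$, which has length $\lea \log m$, and feed it to a trivial interpreter that simply reads its program as a binary numeral. Concretely, let $A(p,y)$ be the p.r.\ interpreter that interprets $p$ as $\bg(m)$ for some $m$ and outputs (a string coding) $m$; then $C_A(m) \le \len{\bg(m)} \le \log m + O(1)$. By optimality of $U$ (Theorem~\ref{thm:invar}), $C(m) = C_U(m) \le C_A(m) + c \lea \log m$. The only minor wrinkle is the usual one of treating $m$ as an element of $\dS$ versus $\dN$, which the fixed encoding $\bg$ and the pairing conventions handle up to an additive constant.

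**Part (b).** This is a counting argument. Fix $v$ and $y$. For each fixed $y$, the map $p \mapsto U(p,y)$ is a function, so the number of strings $x$ with $C(x\mvert y) < \log m - v$ is at most the number of programs of length less than $\log m - v$, namely at most $\sum_{i < \log m - v} 2^{i} < 2^{\log m - v} = m \cdot 2^{-v}$. Hence at most $m\cdot 2^{-v}$ elements of $E$ can have $C(x\mvert y) < \log m - v$. I would then observe $m \cdot 2^{-v} \le m \cdot 2^{-v+1}$... — wait, that is the wrong direction; let me instead just use the sharper bound: the count is $< 2^{\lceil \log m - v\rceil} \le 2^{\log m - v + 1} = 2m\cdot 2^{-v} $ is too weak, so I keep the geometric-series bound $\sum_{i=0}^{k-1}2^i = 2^k - 1 < 2^k$ with $k = \lceil \log m - v \rceil$, giving fewer than $m\cdot 2^{-v+1}$ bad elements once one is careful that $k \le \log m - v + 1$. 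Therefore at least $m - m\cdot 2^{-v+1} = m(1 - 2^{-v+1})$ elements $x \in E$ satisfy $C(x\mvert y) \ge \log m - v$.

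**Main obstacle.** There is no real obstacle here; both parts are elementary. The only thing requiring a little care is the off-by-one in part (b): whether to count programs of length strictly less than $\log m - v$ or at most, and how the ceiling interacts with the bound, which is exactly why the statement has $2^{-v+1}$ rather than $2^{-v}$. I would make sure the inequality $\sum_{i<k}2^i = 2^k-1$ is applied with $k$ an integer bounding $\log m - v$ from above by at most $1$, so that $2^k \le 2\cdot 2^{\log m - v} = m\,2^{-v+1}$, which closes the argument.
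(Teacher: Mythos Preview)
Your approach matches the paper's for both parts: define a trivial interpreter for (a), and count short programs for (b). The final argument you land on is correct.

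However, your exposition of (b) is tangled and contains an actual error. The first claim, that \(\sum_{i < \log m - v} 2^{i} < 2^{\log m - v}\), is false in general: take \(\log m - v = 2.5\), then the left side is \(1+2+4=7 > 2^{2.5}\approx 5.66\). You then dismiss your second attempt, \(< 2^{\lceil \log m - v\rceil} \le 2^{\log m - v + 1} = m\cdot 2^{-v+1}\), as ``too weak'', but this is exactly the bound the theorem asks for. And the inequality \(m\cdot 2^{-v}\le m\cdot 2^{-v+1}\) you called ``the wrong direction'' is the right direction. The clean statement (which is what the paper proves) is simply
\[
\card{\{x: C(x\mvert y)\le u\}} < 2^{u+1}
\]
for every real \(u\), because there are only \(2^{\lfloor u\rfloor+1}-1<2^{u+1}\) binary strings of length \(\le u\). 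Apply this with \(u=\log m - v\) and you are done in one line; no ceilings or case analysis needed.
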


\begin{corollary}  \( \lim_{n\to\infty} C(n) = \infty \).
\end{corollary}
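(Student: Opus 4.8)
The plan is to show that for every constant \( c \), the set \( \setof{n \in \dN : C(n) \le c} \) is finite; since \( c \) is arbitrary, this is exactly the assertion that \( C(n) \to \infty \). First I would unwind the definition: \( C(n) = C_U(n) = C_U(n \mvert \Lg) = \min\setof{\len{p} : U(p,\Lg) = n} \), where \( U \) is the fixed optimal p.r.\ interpreter. So if \( C(n) \le c \), there is a program \( p \in \dB^* \) with \( \len{p} \le c \) and \( U(p) = n \).

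Next I would count: the number of binary strings of length at most \( c \) is \( \sum_{i=0}^{c} 2^{i} = 2^{c+1}-1 \), a finite quantity. Since \( U(\cdot) \) is a (partial) \emph{function}, each such program \( p \) produces at most one value \( U(p) \). Hence
 \[
   \card{\setof{n \in \dN : C(n) \le c}} \le 2^{c+1}-1 < \infty.
 \]
Therefore all but finitely many \( n \) satisfy \( C(n) > c \), and letting \( c \) range over \( \dN \) gives \( \lim_{n\to\infty} C(n) = \infty \).

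There is essentially no obstacle here — the argument is a pigeonhole count — and the only point requiring a moment's care is making sure the complexity is measured with respect to the single fixed interpreter \( U \), so that ``one program, at most one output'' is legitimate. (Alternatively, one could derive the statement from Theorem~\ref{thm:upbd}\ref{howsharp} applied with \( y = \Lg \) and \( E = \setof{0,1,\dots,m-1} \): for any \( v \), all but a \( 2^{-v+1} \)-fraction of the first \( m \) integers have complexity at least \( \log m - v \); since such exceptional small integers number fewer than \( 2^{c+1} \) when \( \log m - v > c \), the same finiteness follows. The direct counting argument is cleaner, so I would present that as the proof.)
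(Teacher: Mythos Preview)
Your proof is correct and is essentially the paper's own approach: the corollary is stated without separate proof, as an immediate consequence of the counting bound \( \card{\setof{x : C(x\mvert y)\le u}} < 2^{u+1} \) established in the proof of Theorem~\ref{thm:upbd} (equation~\eqref{eq:howsharp}), which is exactly your pigeonhole count with \( y=\Lg \).
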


Theorem~\ref{thm:upbd} suggests that if there are so few strings of low
complexity, then \( C(n) \) converges fast to \( \infty \).  In fact this
convergence is extremely slow, as shown in a later section.

\begin{proof}[Proof of Theorem \protect\ref{thm:upbd}]
 First we prove~\eqref{thebd}.  Let the interpreter \( A \) be defined such
that if \( \bg(m) = p \) then \( A(p,y) = m \). Since
 \( |\bg(m)| = \cei{ \log m} \), we have \( C(m) \lea C_A(m) < \log m + 1 \).

Part~\eqref{howsharp} says that the trivial estimate~\eqref{eq:upbd} is
quite sharp for {\em most} numbers under \( m \). The reason for this is
that since there are only few short programs, there are only few
objects of low complexity. For any string \( y \), and any positive real
natural number \( u \), we have 
 \begin{equation} \card{\setof{ x : C(x \mvert y) \le u }} < 2^{u+1}. 
\label{eq:howsharp} \end{equation} 
  To see this, let \( n=\flo{ \log u } \). The number \( 2^{n+1}-1 \) of 
different binary strings of length \( \le n \) is an upper 
bound on the number of different shortest programs of length \( \le u \). 
Now~\eqref{howsharp} follows immediately from~\eqref{eq:howsharp}.
 \end{proof}

The three properties of complexity contained in Theorems~\ref{thm:invar} 
and~\ref{thm:upbd} are the ones responsible for the
applicabiliy of the complexity concept.  
Several important later
results can be considered transformations, generalizations or 
analogons of these. 
						
Let us elaborate on the upper bound~\eqref{eq:upbd}. 
For any string \( x \in \dS_{r} \), we have 
 \begin{equation}  \label{eq:binupbd}
C(x) \lea n\log r + 2 \log r.
 \end{equation}
  In particular, for any binary string \( x \), we have 
 \begin{equation*} 
  C(x) \lea \len{x}.
 \end{equation*}
  Indeed, let the interpreter \( A \) be defined such that if \( p =
\bg(r)^o\cnv_{2}^r(x) \) then \( A(p,y)=x \). 
We have \(  C_A(x) \le
2^{\len{\bg(r)}}+n\log r +1 \le (n+2)\log r + 3 \).  
Since \( C(x) \lea C_A(x) \), we are done. 

We will have many more upper bound proofs of this form, and will not 
always explicitly write out the reference to 
\( C(x \mvert y) \lea C_A(x\mvert y) \), needed for the last step. 

Apart from the conversion, inequality~\eqref{eq:binupbd} says that since 
the beginning of a program can command the optimal interpreter to copy 
the rest, the complexity of a binary sequence \( x \) of length \( n \) is not 
larger than \( n \). 

Inequality~\eqref{eq:binupbd} contains the term \( 2\log r \) instead of \( \log r \) 
since we have to apply the encoding \( w^o \) to the string \( w=\bg(r) \); 
otherwise the interpreter cannot detach it from the program \( p \). 
We could use the code \( \bg(|w|)^o w \), which is also a prefix code,
since the first, detachable part of the codeword tells its length. 
								       
For binary strings \( x \), natural numbers \( n \) and real numbers \( u \ge 1 \) 
we define
 \begin{alignat*}{3} 
        &J(u)	&&= u + 2 \log u, 
  \\ &\ig(x)	&&= \bg(|x|)^o x, 			
  \\ &\ig(n)	&&= \ig(\bg(n)).				
 \end{alignat*}
  We have \( \len{\ig(x)} \lea  J(\len{x}) \) for binary sequences \( x \), and 
\( \len{\ig(r)} \lea J(\log r) \) for numbers \( r \). 
Therefore~\eqref{eq:binupbd} is true with \( J(\log r) \) in place of \( 2\log r \). 
Of course, \( J(x) \)
could be replaced by still smaller functions, for example \( x + J(\log x) \).
We return to this topic later.

\section{Simple properties of information } \label{sec:propert}

If we transform a string \( x \) by applying to it a p.r.~function, 
then we cannot gain information over the amount contained in \( x \) plus 
the amount needed to describe the transformation. 
The string \( x \) becomes 
easier to describe, but it helps less in describing other strings \( y \). 

\begin{theorem} \label{thm:rectransf}
 For any partial recursive function \( U_{q} \), over 
strings, we have
 \begin{align*} C(U_{q}(x) \mvert y)	&\lea  C(x \mvert y) + J(\len{q}),
  \\   C(y \mvert U_{q}(x))	&\gea  C(y \mvert x) - J(\len{q}).
 \end{align*}
\end{theorem}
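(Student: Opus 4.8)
The plan is to handle the two inequalities by two symmetric, almost identical constructions. In each case I will build a single partial recursive interpreter $A$ --- one that does \emph{not} depend on $q$ --- whose program consists of a self-delimiting copy $\ig(q)$ of $q$ followed by a shortest $U$-program for the object in question; running $A$ then costs $\len{\ig(q)}$ extra bits over that shortest program. The two ingredients are the bound $\len{\ig(q)} \lea J(\len q)$ recorded above, and the fact (also noted above) that $\bg(\len q)^o q = \ig(q)$ is a prefix code, so that $A$ can recover $q$ and the remaining program from their concatenation. The final step in both cases is the optimality inequality $C(\cdot \mvert \cdot) \lea C_A(\cdot \mvert \cdot)$ coming from Theorem~\ref{thm:invar}.

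For the first inequality we may assume $C(x \mvert y) < \infty$ and that $U_q(x)$ is defined (otherwise there is nothing to prove). Define a binary p.r.\ interpreter $A$ as follows: on input $(r,y)$, write $r = \ig(q)\,p$ --- this parse is unique since $\ig$ is a prefix code --- then compute $z = U(p,y)$ and output $U(q,z) = U_q(z)$. Now pick $p$ with $U(p,y) = x$ and $\len p = C(x \mvert y)$; then $A(\ig(q)\,p,\,y) = U_q(x)$, so
\[
  C_A(U_q(x) \mvert y) \le \len{\ig(q)} + \len p \lea J(\len q) + C(x \mvert y),
\]
and $C(U_q(x) \mvert y) \lea C_A(U_q(x) \mvert y)$ gives the first line.

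For the second inequality, rewrite it equivalently as $C(y \mvert x) \lea C(y \mvert U_q(x)) + J(\len q)$, and assume the right side finite (so in particular $U_q(x)$ is defined). Define $A$ by: on input $(r,x)$, write $r = \ig(q)\,p$, compute $z = U(q,x) = U_q(x)$, and output $U(p,z)$. Choosing $p$ with $U(p, U_q(x)) = y$ and $\len p = C(y \mvert U_q(x))$, we get $A(\ig(q)\,p,\,x) = y$, hence
\[
  C_A(y \mvert x) \le \len{\ig(q)} + \len p \lea J(\len q) + C(y \mvert U_q(x)),
\]
and again $C(y \mvert x) \lea C_A(y \mvert x)$ closes the argument.

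I do not expect a genuine obstacle here; the one point that must be checked is that the additive constants are uniform in $q,x,y$. They are: the constant in $C_U \lea C_A$ depends only on the fixed interpreter $A$, which does not see $q$ (the string $q$ is part of $A$'s input, not of its code), and the constant in $\len{\ig(q)} \lea J(\len q)$ is absolute. The only mildly delicate ingredient is the parsing step inside $A$, which works precisely because $\ig$ --- i.e.\ $\bg(\len q)^o$ followed by $q$ --- is a prefix code, so the detachable first block $\bg(\len q)^o$ determines $\len q$ and hence where $q$ ends and $p$ begins.
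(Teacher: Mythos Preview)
Your proof is correct and follows essentially the same approach as the paper: the paper defines exactly the two interpreters $A(\ig(q)p,y)=U_{q}(U(p,y))$ and $B(\ig(q)p,x)=U(p,U_{q}(x))$ that you do. Your write-up is in fact more careful than the paper's, spelling out the parsing step, the choice of a shortest $p$, and the uniformity of the additive constants in $q,x,y$.
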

					    
\begin{proof}
 To define an interpreter \( A \) with \( C_A(U_{q}(x) \mvert y) \lea C(x \mvert y )
+ J(\len{q}) \), we define \( A(\ig(q)p,y) = U_{q}(U(p,y)) \). 
To define an
interpreter \( B \) with \( C_B(y \mvert x) \lea C(y \mvert U_{q}(x)) + J(q) \), we
define \( B(\ig(q)p,x) = U(p,U_{q}(x)) \).
 \end{proof}
	
The following notation is natural.

 \begin{definition}
The definition of conditional complexity is extended to pairs, etc. by 
 \[
 C(x_{1},\ldots,x_{m} \mvert y_{1},\ldots, y_{n}) 
	= C(\ang{ x_{1},\ldots,x_{m}}  \mvert \ang{ y_{1},\ldots,y_{n}} ).
 \]
 \end{definition}

With the new notation, here are some new results.

\begin{corollary} \label{c:upbdrec}
 For any one-to-one p.r. function \( U_{p} \), we have 
 \[ |C(x) - C(U_{p}(x))| \lea J(\len{p}).
 \]
  Further,
 \begin{align} 	C(x \mvert y,z)	&\lea  C(x \mvert U_{p}(y),z)+J(\len{p}), \nn
  \\	C(U_{p}(x)) 	&\lea  C(x) + J(\len{p}),\nn	
  \\	C(x \mvert z)	&\lea  C(x,y \mvert z), \label{eq:x.xy} 		
  \\	C(x \mvert y,z)	&\lea  C(x \mvert y),   \nn		
  \\	C(x,x)		&\eqa  C(x),       \label{eq:xx.x}
  \\	C(x,y \mvert z)	&\eqa  C(y,x \mvert z),	\nn	
  \\	C(x \mvert y,z)	&\eqa  C(x \mvert z,y),	\nn	
  \\	C(x,y \mvert x,z)	&\eqa  C(y \mvert x,z),	\nn	
  \\	C(x \mvert x,z)	&\eqa  C(x \mvert x) \eqa 0.  \nn		
 \end{align}
 \end{corollary}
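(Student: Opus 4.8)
The plan is to derive all the listed inequalities from two sources: Theorem~\ref{thm:rectransf} (applied to suitably chosen partial recursive functions built from projections and pairing) and the elementary observation, already used implicitly, that $C_A(x\mvert y)\lea C(x\mvert y)$ for any p.r.\ interpreter $A$. The organizing idea is that every statement here is about how complexity changes when we apply a fixed computable operation to the arguments or the conditions, so each follows by exhibiting an interpreter of the required form, or by invoking Theorem~\ref{thm:rectransf} with an appropriate $U_q$.

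First I would handle the ``further'' block line by line. For $C(U_p(x))\lea C(x)+J(\len{p})$, this is the unconditional ($y=\Lg$) instance of the first inequality in Theorem~\ref{thm:rectransf}. For the two-sided bound $|C(x)-C(U_p(x))|\lea J(\len{p})$ when $U_p$ is one-to-one: the direction $C(U_p(x))\lea C(x)+J(\len{p})$ is the line just proved; the reverse direction $C(x)\lea C(U_p(x))+J(\len{p'})$ follows by applying the same inequality to the p.r.\ inverse $U_{p'}$ of $U_p$ (which exists and is p.r.\ because $U_p$ is one-to-one and p.r.), noting $x=U_{p'}(U_p(x))$, and absorbing $J(\len{p'})$ into the additive constant since $p'$ is fixed. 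For $C(x\mvert y,z)\lea C(x\mvert U_p(y),z)+J(\len{p})$: define an interpreter $A$ by $A(\ig(p)q,\ang{y,z})=U(q,\ang{U_p(y),z})$, so that a shortest program for $x$ given $\ang{U_p(y),z}$, prefixed by $\ig(p)$, is a program for $x$ given $\ang{y,z}$ under $A$; then $C(x\mvert y,z)\lea C_A(x\mvert y,z)\lea C(x\mvert U_p(y),z)+J(\len p)$. For $C(x\mvert z)\lea C(x,y\mvert z)$ (eq.~\eqref{eq:x.xy}): use the projection $\pi_1\ang{x,y}=x$, which is p.r.\ with a fixed program, so this is an instance of the first inequality of Theorem~\ref{thm:rectransf} with the fixed $J$-term absorbed; more directly, an interpreter that runs $U$ and then extracts the first coordinate. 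For $C(x\mvert y,z)\lea C(x\mvert y)$: an interpreter that given $\ang{y,z}$ ignores $z$, decodes $y$, and runs $U(\cdot,y)$; the discarded information only helps.

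Next the identities. For $C(x,x)\eqa C(x)$ (eq.~\eqref{eq:xx.x}): the map $x\mapsto\ang{x,x}$ is p.r.\ and one-to-one, so $|C(x)-C(x,x)|\lea J(O(1))=O(1)$ by the two-sided bound already established. For the three symmetry/reordering identities $C(x,y\mvert z)\eqa C(y,x\mvert z)$, $C(x\mvert y,z)\eqa C(x\mvert z,y)$, and $C(x,y\mvert x,z)\eqa C(y\mvert x,z)$: in each case the relevant coordinate rearrangement is a p.r.\ bijection on the encoded tuples with a fixed program, so apply $C(x\mvert y,z)\lea C(x\mvert U_p(y),z)+J(\len p)$ (and its symmetric counterpart for rearranging the left side via the one-to-one map lemma) in both directions and absorb the now-constant $J$-terms. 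For $C(x,y\mvert x,z)\eqa C(y\mvert x,z)$ specifically, the $\gea$ direction is the instance of $C(\cdot\mvert y,z)\lea C(\cdot,w\mvert y,z)$ type already covered, while the $\lea$ direction uses that from $\ang{x,z}$ and a program for $y$ we can reconstruct $\ang{x,y}$. Finally $C(x\mvert x,z)\eqa C(x\mvert x)\eqa 0$: there is an interpreter that on the empty program, given any $\ang{x,\cdot}$ or given $x$, outputs $x$ by reading its first coordinate (resp.\ copying), hence $C_A(x\mvert x,z)=0$ and $C_A(x\mvert x)=0$, and optimality gives $C\lea C_A$, so these complexities are $O(1)$, i.e.\ $\eqa 0$; the remaining inequality $C(x\mvert x)\lea C(x\mvert x,z)$ is the ``more conditions help'' inequality $C(x\mvert y)\lea C(x\mvert y,z')$ read appropriately, but it is moot since both sides are $\eqa 0$.

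I expect no single deep obstacle here — this corollary is bookkeeping — but the step that needs the most care is making sure the additive and $J$-type constants are handled honestly: whenever a program $p$ (for a projection, a pairing, a coordinate swap, or a p.r.\ inverse) is \emph{fixed} we are entitled to absorb $J(\len p)$ into $O(1)$, whereas in the displayed inequalities where $p$ is a free variable the $J(\len p)$ term must be kept. A secondary point of care is the p.r.\ inverse used for the two-sided bound: one must invoke that a one-to-one p.r.\ function has a p.r.\ (partial) inverse, defined on its range, and that a fixed index for it exists — this is standard and may simply be cited. Modulo these, every line reduces to ``exhibit the interpreter and apply $C\lea C_A$,'' or ``cite Theorem~\ref{thm:rectransf},'' as sketched above.
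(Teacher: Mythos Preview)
The paper states this result as a corollary of Theorem~\ref{thm:rectransf} without giving a separate proof, so your line-by-line unpacking is exactly the kind of routine verification the text leaves implicit, and your overall strategy---exhibit an interpreter, or apply Theorem~\ref{thm:rectransf} with a suitable fixed transformation---matches what is intended.

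There is one genuine slip. For the reverse direction of $|C(x)-C(U_p(x))|\lea J(\len p)$ you invoke the partial recursive inverse $U_{p'}$ and say $J(\len{p'})$ can be absorbed ``since $p'$ is fixed.'' But $p$ is the free variable in this line (you correctly note this yourself in your last paragraph), so $p'$ depends on $p$ and $J(\len{p'})$ cannot be absorbed into a universal additive constant. The fix is to invert \emph{uniformly} in $p$: define a single interpreter $A(\ig(p)q,\cdot)$ that first computes $w=U(q)$, then enumerates the graph of $U_p$ until it finds some $u$ with $U_p(u)=w$, and outputs that $u$. When $U_p$ is one-to-one this returns $x$, giving $C_A(x)\le J(\len p)+C(U_p(x))$ and hence $C(x)\lea C(U_p(x))+J(\len p)$ with a constant independent of $p$. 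The remaining items in your derivation are handled correctly.
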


We made several times implicit use of a basic additivity 
property of complexity which makes it possible to estimate the joint 
complexity of a pair by the complexities of its constituents. 
As expressed in Theorem~\ref{thm:addit}, it says essentially that to
describe the pair of strings, it is enough to know the description of
the first member and of a method to find the second member using our
knowledge of the first one.

 \begin{theorem} \label{thm:addit}
 \[  C(x,y) \lea J(C(x)) + C(y \mvert x).
 \]
 \end{theorem}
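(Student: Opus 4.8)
The plan is to construct an interpreter $A$ that rebuilds the code $\ang{x,y}$ from a shortest description of $x$, made self-delimiting, followed by a shortest description of $y$ relative to $x$, and then to transfer the bound to the optimal $U$ via~\eqref{eq:opt}.

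First I would fix a $p$ with $U(p)=x$ and $\len{p}=C(x)$, and a $q$ with $U(q,x)=y$ and $\len{q}=C(y\mvert x)$; if either of $C(x)$, $C(y\mvert x)$ is infinite the inequality is vacuous, and bounded values of $C(x)$ cause no trouble since the defect is swept into the additive constant hidden by $\lea$. Then I would look at the string $\ig(p)\,q=\bg(\len{p})^{o}\,p\,q$ and define $A$ as follows: on input $(r,z)$, use the unique decomposition $r=a^{o}b$ from the proof of Theorem~\ref{thm:invar} to read a number $k$ off $a$ (leaving $A$ undefined if $a$ is not of the form $\bg(k)$ or if $\len{b}<k$), split $b$ into its first $k$ symbols $s$ and the remainder $t$, and output $\ang{U(s),\,U(t,U(s))}$. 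This $A$ is a partial recursive interpreter — parsing the prefix, the two calls to $U$, and the pairing are all effective — and by construction $A(\ig(p)\,q)=\ang{U(p),U(q,U(p))}=\ang{x,y}$.

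Hence
\[
 C_A(\ang{x,y}) \le \len{\ig(p)} + \len{q} \lea J(\len{p}) + \len{q} = J(C(x)) + C(y\mvert x),
\]
using the estimate $\len{\ig(p)}\lea J(\len{p})$ noted right after the definition of $\ig$. Finally, optimality of $U$ gives $C(x,y)=C(\ang{x,y})\lea C_A(\ang{x,y})$, and composing the two bounds yields $C(x,y)\lea J(C(x))+C(y\mvert x)$. The construction is routine; the only point that really requires care is that $\ig(p)\,q$ parses unambiguously into the three blocks $\bg(\len{p})^{o}$, $p$, $q$ — which is precisely why $p$ is wrapped in the self-delimiting code $\ig$ rather than merely concatenated with $q$ — and the unique-decomposition fact needed for this is already available from the Invariance Theorem, so no real obstacle remains.
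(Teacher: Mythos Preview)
Your proof is correct and takes essentially the same approach as the paper: define an interpreter that parses its input as $\ig(w)q$, then outputs the pair $\ang{U(w),\,U(q,U(w))}$, and bound its program length by $\len{\ig(w)}+\len{q}\lea J(C(x))+C(y\mvert x)$. Your write-up is in fact more explicit than the paper's (which compresses the construction into a single line and even omits the pairing in the output expression), but the idea and the decomposition are identical.
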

									
\begin{proof}
 We define an interpreter \( A \) as follows. 
We decompose any binary
string \( p \) into the form \( p = \ig(w)q \), and let \( A(p,z) = U(q,U(w)) \).
Then \( C_A(x,y) \) is bounded, to within an additive constant, by the 
right-hand side of the above inequality.
 \end{proof}

\begin{corollary} \label{cy:funcofpair}
 For any p.r.~function \( U_{p} \) over pairs of strings, we have 
 \begin{align*} 
C(U_{p}(x,y)) &\lea  J(C(x)) + C(y \mvert x) + J(\len{p})
  \\		  &\lea  J(C(x)) + C(y) + J(\len{p}),
 \end{align*}
  and in particular,
 \[
 	C(x,y) \lea J(C(x)) + C(y).
 \]
 \end{corollary}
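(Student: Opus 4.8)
The plan is to obtain both displayed bounds by chaining Theorem~\ref{thm:rectransf}, Theorem~\ref{thm:addit}, and the monotonicity of conditional complexity, so that nothing new has to be built.

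First I would treat the top inequality. Since by definition \( U_{p}(x,y) = U_{p}(\ang{x,y}) \), I apply the first inequality of Theorem~\ref{thm:rectransf} with the single string \( \ang{x,y} \) playing the role of the argument, which gives
\[
  C(U_{p}(x,y)) \lea C(x,y) + J(\len{p}).
\]
Then Theorem~\ref{thm:addit} supplies \( C(x,y) \lea J(C(x)) + C(y \mvert x) \), and combining the two displays yields
\[
  C(U_{p}(x,y)) \lea J(C(x)) + C(y \mvert x) + J(\len{p}),
\]
the first claimed bound.

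For the second bound I only need \( C(y \mvert x) \lea C(y) \). This is the standard ``dropping the side information'' inequality: the interpreter \( A \) defined by \( A(p,w) = U(p) \) for all \( w \) ignores its second argument, so \( C_{A}(y \mvert x) = C(y) \) and hence \( C(y \mvert x) \lea C(y) \); it is also among the basic facts recorded after Corollary~\ref{c:upbdrec}. Substituting it into the first bound gives \( C(U_{p}(x,y)) \lea J(C(x)) + C(y) + J(\len{p}) \). Finally, the ``in particular'' statement is the special case in which \( U_{p} \) is taken to be a fixed program for the identity on pairs, so that \( U_{p}(x,y) = \ang{x,y} \) and \( \len{p} = O(1) \); alternatively it follows at once from Theorem~\ref{thm:addit} together with \( C(y \mvert x) \lea C(y) \). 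I do not expect a real obstacle; the only point needing a word of care is the passage from conditioning on \( x \) to conditioning on \( \Lg \), which is exactly the trivial-interpreter argument just described.
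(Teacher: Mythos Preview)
Your proof is correct and follows exactly the route the paper intends: the corollary is stated immediately after Theorem~\ref{thm:addit} with no separate proof, and your chaining of Theorem~\ref{thm:rectransf} (or equivalently the line \( C(U_{p}(x)) \lea C(x) + J(\len{p}) \) from Corollary~\ref{c:upbdrec}) with Theorem~\ref{thm:addit}, followed by the drop-the-condition inequality \( C(y\mvert x)\lea C(y) \), is precisely the implied derivation. Your handling of the ``in particular'' clause, via either the identity program or directly from Theorem~\ref{thm:addit}, is also fine.
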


This corollary implies the following continuity property of 
the function \( C(n) \): for any natural numbers \( n,h \), we have 
 \begin{equation}  
|C(n+h)-C(n)| \lea J(\log h). \label{eq:continui}
 \end{equation}
 Indeed, \( n+h \) is a recursive function of \( n \) and \( h \).
The term \( 2 \log C(x) \) making up the difference between
\( C(x) \) and \( J(C(x)) \) in Theorem~\ref{thm:addit} is attributable to the
fact that minimal descriptions cannot be concatenated without loosing
an ``endmarker''. 
It cannot be eliminated, since there is a constant
\( c \) such that for all \( n \), there are binary strings \( x,y \) of
length \( \le n \) with
 \[ 
  C(x) + C(y) + \log n < C(x,y) + c.
 \]
  Indeed, there are \( n 2^{n} \) pairs of binary strings whose sum of
lengths is \( n \). 
Hence by Theorem~\ref{thm:upbd}~\ref{howsharp}, there
will be a pair \( (x,y) \) of binary strings whose sum of lengths is \( n \),
with \( C(x,y) > n + \log n - 1 \). 
For these strings, 
inequality~\eqref{eq:binupbd} implies \( C(x)+C(y) \lea \len{x} + \len{y} \). 
Hence \( C(x)+C(y)+\log n \lea n + \log n < C(x,y) + 1 \). 

Regularities in a string will, in general, decrease its complexity
radically. 
If the whole string \( x \) of length \( n \) is given by some
rule, that is we have \( x(k) = U_{p}(k) \), for some recursive function
\( U_{p} \), then 
 \[  C(x) \lea C(n) + J(\len{p}).
 \]
  Indeed, let us define the p.r.~function \( V(q,k)=U_{q}(1)\ldots U_{q}(k) \). 
Then \( x=V(p,n) \) and the above estimate follows from
Corollary~\ref{c:upbdrec}.

For another example, let \( x=y_{1}y_{1}y_{2}y_{2} \cdots y_{n}y_{n} \), and 
\( y=y_{1}y_{2} \cdots y_{n} \). 
Then \( C(x) \eqa C(y) \) even though the string
\( x \) is twice longer. This follows from Corollary~\ref{c:upbdrec}
since \( x \) and \( y \) can be obtained from each other by a simple
recursive operation.

Not all ``regularities'' decrease the complexity of a string, only 
those which distinguish it from the mass of all strings, putting it 
into some class which is both small and algorithmically definable. For
a binary string of length \( n \), it is nothing unusual to have its number
of 0-s between \( n/2-\sqrt n \) and \( n/2 \). 
Therefore such strings can 
have maximal or almost maximal complexity. 
If \( x \) has \( k \) zeroes then the inequality 
 \[  
  C(x) \lea \log \binom{n}{k} + J(\log n) + J(\log k)
 \]
  follows from Theorems~\ref{thm:upbd} and~\ref{thm:condupbd}.
									
\begin{theorem} \label{thm:condupbd}
 Let \( E=W_{p} \) be an enumerable set of pairs of strings defined
enumerated with the help of the program \( p \) for example as 
 \begin{equation} 
 W_{p}= \setof{ U(p,x): x\in \dN}. \label{eq:Wp}
 \end{equation}
  We define the \df{section} 
 \begin{equation} 
 E^a = \setof{ x :  \tupcod{a,x}  \in E }. \label{eq:sectiondef}
 \end{equation}
  Then for all \( a \) and \( x \in E^a \), we have 
 \[  
  C(x \mvert a) \lea \log\card{E^a} + J(\len{p}).
 \]
 \end{theorem}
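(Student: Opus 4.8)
The plan is to construct an interpreter $A$ that, given $a$ as conditional input and a program $p'$ encoding the pair $(p, i)$, enumerates the section $E^a$ and outputs its $i$-th element. The point is that since $E^a$ is an enumerable set whose enumeration procedure is specified by $p$, any particular element $x \in E^a$ can be named by its index in this enumeration, and that index needs only about $\log \card{E^a}$ bits.

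Concretely, I would first describe how to enumerate $E^a$ from $p$ and $a$: run the universal machine to enumerate $W_p = \setof{U(p,z) : z \in \dN}$, and for each enumerated element of the form $\tupcod{a,x}$ (recognizing this via the partial inverses $\tupdec{\cdot}_i$), emit $x$. This produces a (possibly finite) list $x_1, x_2, \dots$ of all elements of $E^a$, with no element appearing before stage depending only on $p, a$. Then I would define the interpreter $A$ on a program of the form $\ig(p)\,\bg(i)$, with conditional input $a$: it decodes $p$ and $i$ from the program (using that $\ig(\cdot)$ is a self-delimiting code so $p$ can be detached), runs the enumeration of $E^a$ just described, and outputs the $i$-th element $x_i$ if and when it appears. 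For $x \in E^a$, if $x = x_i$ in this enumeration, then $A(\ig(p)\bg(i), a) = x$, so
\[
C_A(x \mvert a) \le \len{\ig(p)} + \len{\bg(i)} \lea J(\len{p}) + \log i \le J(\len{p}) + \log\card{E^a},
\]
using $\len{\ig(p)} \lea J(\len{p})$ and $i \le \card{E^a}$. Since $C(x\mvert a) \lea C_A(x \mvert a)$ by optimality of $U$, the claimed bound follows.

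The main thing to be careful about — rather than a deep obstacle — is the bookkeeping around the encoding of the program: I must package $p$ and the index $i$ into a single binary program in a way the interpreter can parse, and I want the length overhead in $p$ to be $J(\len{p})$ (matching the statement) rather than, say, $2\len{p}$; this is exactly why the self-delimiting code $\ig(\cdot)$ with $\len{\ig(p)} \lea J(\len{p})$ is used instead of the cruder $p \mapsto p^o$. A second minor point is the case $\card{E^a} = \infty$, where the bound is vacuous, and the case where $E^a$ is finite but the enumeration never halts looking for a nonexistent $i$-th element — but that only happens for $i > \card{E^a}$, i.e. for programs we never need. Everything else is routine, and the argument is essentially the conditional, "enumerable set" analogue of the counting bound already used in Theorem~\ref{thm:upbd}.
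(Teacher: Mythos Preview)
Your proposal is correct and follows essentially the same approach as the paper: define an interpreter that parses its program as $\ig(p)\bg(i)$, enumerates the section $E^{a}$ from $p$ and the conditional input $a$, and outputs the $i$-th element. The only detail the paper makes explicit that you leave implicit is that the enumeration of $E^{a}$ must be \emph{without repetition}, so that the index $i$ of any $x\in E^{a}$ really satisfies $i\le\card{E^{a}}$; this is trivial to arrange and is clearly what you intend.
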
									
\begin{proof}
 We define an interpreter \( A(q,b) \) as follows. 
We decompose \( q \) into \( q = \ig(p) \bg(t) \). 
From a standard enumeration \( (a(k),x(k)) \)
\( (k=1,2,\ldots) \) of the set \( W_{p} \), we produce an enumeration \( x(k,b) \)
\( (k=1,2,\ldots) \), without repetition, of the set \( W_{p}^b \) for each
\( b \), and define \( A(q,b)=x(t,b) \). 
For this interpreter \( A \), we get
\( k_A(x \mvert a) \lea J(\len{p}) + \log \card{E^a} \). 
 \end{proof}

It follows from Theorem~\ref{thm:upbd} that whenever the set \( E^a \) is
finite, the estimate of Theorem~\ref{thm:condupbd} is sharp for most of
its elements \( x \).

The shortest descriptions of a string \( x \) seem to carry some extra 
information in them, above the description of \( x \). 
This is suggested by the following strange identities.

\begin{theorem} \label{thm:xKx}
We have \( C(x,C(x)) \eqa C(x) \).
 \end{theorem}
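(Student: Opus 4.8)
The inequality $C(x) \lea C(x, C(x))$ is immediate, since $x$ is a recursive function of the pair $(x, C(x))$ — apply Corollary~\ref{c:upbdrec} (the line $C(x \mvert z) \lea C(x,y \mvert z)$ with $z = \Lg$, or directly $C(U_p(x,y)) \lea \ldots$ with $U_p$ the first projection). So the whole content is the reverse direction $C(x, C(x)) \lea C(x)$.

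The plan is to show that from a shortest program $p$ for $x$ — one with $\len{p} = C(x)$ and $U(p) = x$ — one can recover not only $x$ but also the number $C(x)$ itself, at no extra cost. The key observation is that $\len{p}$ \emph{equals} $C(x)$: once we are handed $p$ with the promise that it is a minimal program, computing its length gives us $C(x)$ for free. So I would build an interpreter $A$ that, on input $p$, runs $U(p)$ to get some string $z$, and then outputs the pair $\ang{z, \len{p}}$; that is, $A(p) = \ang{U(p), \len{p}}$. This $A$ is clearly partial recursive. Now feed $A$ the particular minimal program $p$ for $x$: we get $A(p) = \ang{x, \len{p}} = \ang{x, C(x)}$, and the program $p$ used has length exactly $C(x)$. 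Hence $C_A(x, C(x)) \le C(x)$, and by optimality of $U$ (Theorem~\ref{thm:invar}, via $C \lea C_A$) we conclude $C(x, C(x)) \lea C(x)$.

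Combining the two directions gives $C(x, C(x)) \eqa C(x)$.

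The step to watch is making sure the argument genuinely uses a \emph{minimal} program and not just any program for $x$: the interpreter $A$ outputs $\len{p}$, which is the right answer $C(x)$ only when $p$ is optimal. But that is exactly the program we are entitled to choose when bounding $C_A(x, C(x))$ from above, so there is no real obstacle — the ``strangeness'' the theorem alludes to is precisely that a shortest description secretly encodes its own length, hence the complexity of the object it describes.
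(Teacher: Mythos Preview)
Your proof is correct and follows exactly the paper's approach: the easy direction via the projection inequality~\eqref{eq:x.xy}, and the key direction by defining the interpreter \( A(p) = \ang{U(p), \len{p}} \) so that a minimal program for \( x \) yields the pair \( \ang{x, C(x)} \) at the same cost. Your discussion of why one is entitled to pick a minimal \( p \) when upper-bounding \( C_A \) is a nice clarifying remark, but the argument itself is identical to the paper's.
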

\begin{proof}
 The inequality \( \gea \) follows from~\eqref{eq:x.xy}. 
To prove \( \lea \), let \( A(p) =  \ang{ U(p,\len{p})}  \). 
Then \( C_A(x,C(x)) \le C(x) \).
 \end{proof}

More generally, we have the following inequality.

\begin{theorem} \label{thm:strange}
 \[	
C(y\mvert x,\ i-C(y\mvert x,i)) \lea C(y\mvert x,i).
 \]
 \end{theorem}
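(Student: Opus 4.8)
The plan is to imitate the upper-bound half of Theorem~\ref{thm:xKx}: I would construct an interpreter that, when fed a shortest program for \( y \) relative to \( \ang{x,i} \) together with the shifted parameter \( i-C(y\mvert x,i) \), reconstructs \( i \) from the \emph{length} of that program and then simulates \( U \).

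In detail, set \( k=C(y\mvert x,i) \) and fix a binary string \( p \) with \( \len{p}=k \) and \( U(p,\ang{x,i})=y \); put \( j=i-k \), so that \( i=j+\len{p} \). I would define a p.r.\ interpreter \( A \) that reads its second argument as a pair \( \ang{x,j} \) and sets
\[
  A(q,\ang{x,j}) = U\bigl(q,\ang{x,\;j+\len{q}}\bigr).
\]
Then \( A(p,\ang{x,j}) = U(p,\ang{x,i}) = y \), so \( C_A(y\mvert x,j)\le\len{p}=k \), and the usual appeal to the conditional optimality of \( U \) (the step \( C(\,\cdot\mvert\cdot\,)\lea C_A(\,\cdot\mvert\cdot\,) \) that is by now suppressed) gives \( C(y\mvert x,j)\lea C_A(y\mvert x,j)\le k \), which is exactly
\[
  C\bigl(y\mvert x,\; i-C(y\mvert x,i)\bigr)\lea C(y\mvert x,i).
\]

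I expect no genuinely hard step: the entire content is the trick that transmitting \( i-\len{p} \) rather than \( i \) lets the interpreter recover \( i \) ``for free'' from the program length it already sees. The only point to watch is that \( i-C(y\mvert x,i) \) be a sensible (nonnegative) argument; I would handle this by reading the statement in the regime \( C(y\mvert x,i)\le i \) — equivalently, replacing the conditioning entry by \( |i-C(y\mvert x,i)|^{+} \), which leaves the construction untouched since \( A \) still recovers \( i=j+\len{q} \) whenever \( k\le i \).
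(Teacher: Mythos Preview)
Your proof is correct and is precisely the intended argument: the paper leaves this theorem as an exercise, and your construction \( A(q,\ang{x,j})=U(q,\ang{x,j+\len{q}}) \) is the natural generalization of the trick used for Theorem~\ref{thm:xKx}. Your handling of the possible negativity of \( i-C(y\mvert x,i) \) is also appropriate.
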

 The proof is exercise.

The above inequalities are in some sense ``pathological'', and do not
necessarily hold for all ``reasonable'' definitions of descriptional
complexity.

\section{Algorithmic properties of complexity}

The function \( C(x) \) is not computable, as it will be shown in this 
section. 
However, it has a property closely related to computability.
Let \( \dQ \) be the set of rational numbers.

\begin{definition}[Computability]
Let \( f:\dS\to\dR \) be a function.
It is \df{computable} if there is a recursive function \( g(x,n) \) with rational
values, and  \( |f(x)-g(x,n)| < 1/n \). 
\end{definition}

 \begin{definition}[Semicomputability]\label{def:semicomputability}
A function \( f : \dS \to \rint{-\infty}{\infty} \) is \df{(lower) semicomputable}
if the set 
 \[
	\setof{ \tup{x,r} : x \in \dS,\; r \in \dQ,\; r<f(x) }
 \] 
 is recursively enumerable. 
A function \( f \) is called \df{upper semicomputable} if \( -f \) is
lower semicomputable.
 \end{definition}

It is easy to show the following:

\begin{bullets}
\item A function \( f:\dS\to\dR \) is lower semicomputable iff there 
exists a recursive function with rational values, (or, equivalently, a 
computable real function) \( g(x,n) \) nondecreasing in \( n \), with 
\( f(x) = \lim_{n\to\infty}g(x,n) \). 
 \item A function \( f:\dS\to\dR \) is computable if it is both lower and upper
   semicomputable. 
\end{bullets}

The notion of semicomputibility is naturally
extendable over other discrete domains, like \( \dS \times \dS \).
It is also useful to extend it to functions \( \dR\to\dR \).
Let \( \cI \) denote the set of open rational intervals of \( \dR \), that is
 \begin{align*}
 \bg=\setof{\opint{p}{q}: p,q\in\dQ, p<q}.
 \end{align*}

\begin{definition}[Computability for real functions]
Let \( f:\dR\to\dR \) be a function.
It is \df{computable} if there is a recursively enumerable set 
\( \cF\sbsq \bg^{2} \)
such that denoting \( \cF_{J}=\setof{I: (I,J)\in \cF} \) we have
 \begin{align*}
   f^{-1}(J) = \bigcup_{I\in\cF_{J}}I.
 \end{align*}
\end{definition}
This says that if \( f(x)\in I \) then sooner or later we will find
an interval \( I\in\cF_{J} \) with the property that \( f(z)\in J \) for all \( z\in I \).
Note that computability implies continuity.

\begin{definition}
  We say that \( f:\dR\to\ol\dR \) is
\df{lower semicomputable} if there is a recursively enumerable set 
\( \cG\sbsq \cQ\times\bg \)
such that denoting \( \cG_{q}=\setof{I: (I,q)\in \cG} \) we have
 \begin{align*}
   f^{-1}(\opint{q}{\infty}) = \bigcup_{I\in\cF_{J}}I.
 \end{align*}
\end{definition}
This says that if \( f(x)>r \) then sooner or later we will find
an interval \( I\in\cF_{J} \) with the property that \( f(z)>q \) for all \( z\in I \).

The following facts are useful and simple to verify:

\begin{proposition}\label{propo:l-sc-properties}\ 
  \begin{alphenum}
  \item 
  The function \( \cei{\cdot}:\dR\to\dR \) is lower semicomputable.
  \item If \( f,g:\dR\to\dR \) are computable then their composition
    \( f(g(\cdot)) \) is, too.
  \item If \( f:\dR\to\dR \) is lower semicomputable and \( g:\dR\to\dR \) 
(or \( \dS\to\dR \)) is computable then \( f(g(\cdot)) \) is lower semicomputable.
  \item If \( f:\dR\to\dR \) is lower semicomputable and monotonic,
and \( g:\dR\to\dR \) (or \( \dS\to\dR \)) is lower 
semicomputable then \( f(g(\cdot)) \) is lower semicomputable.
  \end{alphenum}
\end{proposition}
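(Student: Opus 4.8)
The plan is to reduce every item to manipulations of preimages, using throughout that for $h\colon\dR\to\dR$ computability means $h^{-1}(J)$ is a union of rational open intervals, uniformly r.e.\ in the rational interval $J$, and lower semicomputability of $h$ means $h^{-1}(\opint{q}{\infty})$ is a union of rational intervals, uniformly r.e.\ in $q\in\dQ$ (with the discrete-domain variant of Definition~\ref{def:semicomputability} when the argument ranges over $\dS$). For part (a): since $\cei{\cdot}$ takes only integer values, for rational $q$ we have $\cei{x}>q$ iff $\cei{x}\ge m$, where $m=\flo{q}+1$ is the least integer exceeding $q$, iff $x>m-1$; hence
\[
  \cei{\cdot}^{-1}(\opint{q}{\infty})=\opint{m-1}{\infty}=\bigcup_{k\in\dZ,\ k>m-1}\opint{m-1}{k},
\]
a list of rational intervals computable uniformly in $q$, which provides the required r.e.\ set.

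Parts (b) and (c) are the same computation. If $g^{-1}(K)=\bigcup_{I\in\cF^{g}_{K}}I$ for all rational intervals $K$ and $f^{-1}(J)=\bigcup_{K\in\cF^{f}_{J}}K$, then, since $g^{-1}$ commutes with arbitrary unions,
\[
  (f\circ g)^{-1}(J)=g^{-1}\!\Bigl(\bigcup_{K\in\cF^{f}_{J}}K\Bigr)=\bigcup_{K\in\cF^{f}_{J}}\ \bigcup_{I\in\cF^{g}_{K}}I ,
\]
and the set of pairs $(I,J)$ obtained this way is r.e.\ because $\cF^{f}$ and $\cF^{g}$ are; this gives (b). Part (c) is the same with $\opint{q}{\infty}$ in place of $J$ and the r.e.\ set $\cG^{f}$ witnessing lower semicomputability of $f$ in place of $\cF^{f}$ (the inner sets $\cF^{g}_{K}$ are still legitimate because $K$ ranges over genuine rational intervals). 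When $g\colon\dS\to\dR$ is computable one argues directly that $\setof{\tup{x,s}:f(g(x))>s}$ is r.e.: semidecide $f(g(x))>s$ by searching for $I\in\cG^{f}_{s}$ and a rational $p\in I$ with $g(x)>p$, the last relation being semidecidable since $g(x)$ can be approximated to precision $1/n$.

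Part (d) is the only item that needs an idea. Assume first that $f$ is nondecreasing. Then $f^{-1}(\opint{s}{\infty})=\bigcup_{I\in\cG^{f}_{s}}I$ is open and an up-set, hence equals $\opint{a_{s}}{\infty}$ for some $a_{s}$ (allowing $a_{s}=\pm\infty$, which covers the cases $\emptyset$ and $\dR$). Consequently $f(g(x))>s$ iff $g(x)>a_{s}$ iff there is a rational $p$ with $a_{s}<p<g(x)$, and ``$a_{s}<p$'' is exactly the condition ``$p\in I$ for some $I\in\cG^{f}_{s}$''. So, setting
\[
  \cG^{f\circ g}_{s}=\setof{\,I'\ :\ \exists\,p\in\dQ\ \exists\,I\ \bigl(I\in\cG^{f}_{s}\ \wedge\ p\in I\ \wedge\ I'\in\cG^{g}_{p}\bigr)\,},
\]
we get $(f\circ g)^{-1}(\opint{s}{\infty})=\bigcup_{I'\in\cG^{f\circ g}_{s}}I'$, and $\cG^{f\circ g}$ is r.e.\ because $\cG^{f}$ and $\cG^{g}$ are. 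The step to check is soundness of this identity: if $p\in I\in\cG^{f}_{s}$ then $f(p)>s$, so $g(x)>p$ forces $f(g(x))\ge f(p)>s$ by monotonicity; conversely if $f(g(x))>s$ then $g(x)>a_{s}$ and any rational in $\opint{a_{s}}{g(x)}$ works. The same argument applies to $g\colon\dS\to\dR$ via the r.e.\ set $\setof{\tup{x,p}:g(x)>p}$; and if ``monotonic'' is meant in the nonincreasing sense, the dual statement holds with $g$ upper semicomputable, since then $f^{-1}(\opint{s}{\infty})$ is an open down-set $\opint{-\infty}{a_{s}}$.

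I expect the up-set observation in part (d) to be the main obstacle: parts (a)--(c) are bookkeeping with r.e.\ indices, whereas (d) genuinely uses both that $f$ is monotone and that lower semicomputability approximates values from below, and it goes through precisely because these two ``directions'' agree --- which is also why some monotonicity hypothesis is needed at all.
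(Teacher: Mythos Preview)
The paper does not give a proof of this proposition; it is introduced as ``useful and simple to verify'' and left at that. So there is nothing to compare against, and your write-up stands on its own.

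Your arguments for (a), (b), the \( \dR\to\dR \) case of (c), and (d) are correct. The up-set observation in (d) --- that for nondecreasing \( f \) the set \( f^{-1}(\opint{s}{\infty}) \) is an open up-set, hence an interval \( \opint{a_{s}}{\infty} \), so membership reduces to a one-sided comparison that matches the direction in which \( g \) is approximable --- is exactly the substance of the item, and your soundness check is clean.

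There is one slip. In (c) for \( g:\dS\to\dR \), you propose to semidecide \( f(g(x))>s \) by finding \( I\in\cG^{f}_{s} \) and a rational \( p\in I \) with \( g(x)>p \). That is not sufficient: it only yields \( g(x) \) larger than the left endpoint of \( I \), not \( g(x)\in I \). (Take \( I=\opint{0}{1} \), \( g(x)=5 \), \( p=\tfrac12 \).) The ``\( g(x)>p \)'' test is precisely what works in (d) thanks to the up-set structure; in (c) no monotonicity is assumed, so you must semidecide \( g(x)\in I \) itself. That is easy since \( g \) is computable: approximate \( g(x) \) to within \( 1/n \) and wait until the approximation lies in \( I \) with slack \( 1/n \) on both sides. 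With this correction, (c) goes through.
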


 \begin{definition}
We introduce a \df{universal lower semicomputable function}. 
For every binary string \( p \), and \( x \in \dS \), let
 \[ 
  S_{p}(x)= \sup\setof{ y/z : y,z\in Z, z \not = 0, 
	\ang{\ang{ x } ,y,z }  \in W_{p} }
 \]
where \( W_{p} \) was defined in~\eqref{eq:Wp}.  
 \end{definition}

The function \( S_{p}(x) \) is lower
semicomputable as a function of the pair \( (p,x) \), and for different
values of \( p \), it enumerates all semicomputable functions of \( x \). 

 \begin{definition}
Let \( S_{p}(x_{1},\ldots,x_{k})= S_{p}(\ang{ x_{1},\ldots,x_{k} } ) \).  
For any lower semicomputable function \( f \), we call any binary string \( p \) with
\( S_{p}=f \) a \df{G\"odel number} of \( f \). 
There is no universal computable function, but for any computable 
function \( g \), we call any number \(  \tupcod{  \tupcod{ p } , \tupcod{ q } }  \) 
a G\"odel number of \( g \) if \( g=S_{p}=-S_{q} \).
 \end{definition}

With this notion, we claim the following.

\begin{theorem}	      \label{thm:Ksemicp}
 The function \( C(x \mvert y) \) is upper semicomputable.
 \end{theorem}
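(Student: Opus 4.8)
The plan is to show that $C(x\mvert y)$ is upper semicomputable by exhibiting a recursive function that, for each pair $(x,y)$, produces a nonincreasing sequence of rational (indeed, integer) upper bounds converging to $C(x\mvert y)$. Recall that $C(x\mvert y)=\min\setof{\len{p}:U(p,y)=x}$, where $U$ is our fixed optimal p.r.~interpreter. The key point is that although we cannot decide whether $U(p,y)$ halts, we can \emph{dovetail}: run $U$ on all pairs $(p,y)$ with $\len{p}$ small, for increasing numbers of steps.

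Concretely, I would define $g(x,y,t)$ as follows. Enumerate, in a standard dovetailing fashion for $t=1,2,\dots$, the finitely many computations $U(p,y)$ for all binary strings $p$ (there are only finitely many halting ones discovered by step $t$). Let $g(x,y,t)$ be the minimum of $\len{p}$ over all $p$ such that the computation $U(p,y)$ has been observed to halt within $t$ steps of dovetailing and yields output $x$; if no such $p$ has yet been found, set $g(x,y,t)=\infty$ (or, to keep values finite and rational, one may instead use the trivial upper bound from Theorem~\ref{thm:upbd}, namely $C_A(x)<\len{x}+\text{const}$ via the copying interpreter, as the initial value). Then $g$ is a recursive function, $g(x,y,t)$ is nonincreasing in $t$, and $\lim_{t\to\infty}g(x,y,t)=C(x\mvert y)$: indeed, whenever $U(p,y)=x$ with $\len{p}=C(x\mvert y)$, that halting computation is discovered at some finite stage $t_0$, so $g(x,y,t)\le C(x\mvert y)$ for $t\ge t_0$; and conversely $g(x,y,t)\ge C(x\mvert y)$ always, since any $p$ contributing to $g$ actually satisfies $U(p,y)=x$. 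Hence $-C(\cdot\mvert\cdot)$ is the nondecreasing limit of a recursive function, which is exactly the characterization of lower semicomputability from the bulleted facts following Definition~\ref{def:semicomputability}. Therefore $C(x\mvert y)$ is upper semicomputable.

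There is no real obstacle here beyond setting up the dovetailing carefully; the one subtlety worth stating explicitly is why the limit is the \emph{true} complexity and not merely some upper bound. This is where optimality of $U$ is not even needed — any fixed p.r.~interpreter would do for semicomputability — but since $C(x\mvert y)=C_U(x\mvert y)$ by our convention, the argument applies verbatim to $U$. The only thing to be slightly careful about is bookkeeping: we must enumerate programs of \emph{all} lengths (not cap the length), because a long program might halt quickly while the shortest program halts only much later; dovetailing over all $(p,t)$ handles this automatically. I would present the construction of $g$, verify monotonicity and the limit identity in two short sentences each, and invoke the stated equivalence to conclude.
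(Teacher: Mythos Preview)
Your proposal is correct and follows essentially the same approach as the paper: the paper also defines a time-bounded approximation $K^{t}(x\mvert y)$ as the minimum of the trivial upper bound from Theorem~\ref{thm:upbd} and $\min\setof{\len{p}\le t : U^{t}(p,y)=x}$, observes it is computable and monotone in $t$, and takes the limit. The only cosmetic difference is that the paper caps program length at $t$ rather than dovetailing over all lengths, but this is an implementation detail with the same effect.
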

\begin{proof} 
 By Theorem~\ref{thm:upbd} there is a constant \( c \) such that for
any strings \( x,y \), we have \( C(x \mvert y) < \log \tupcod{  x  }  + c \).  
Let some Turing machine compute our optimal binary p.r.~interpreter. 
Let \( U^{t}(p,y) \) be defined as \( U(p,y) \) if this machine, when started on
input \( (p,y) \), gives an output in \( t \) steps, undefined otherwise. 
Let \( K^{t}(x \mvert y) \) be the smaller of \( \log  \tupcod{  x  }  + c \) and 
 \[  
   \min\setof{ \len{p} \le t : U^{t}(p,y)=x }.
 \]
Then the function \( K^{t}(x \mvert y) \) is computable, monotonic in \( t \) and 
\( \lim_{t\to\infty} K^{t}(x \mvert y) = C(x \mvert y) \). 
 \end{proof}

The function \( C(n) \) is not computable. Moreover, it has no nontrivial 
partial recursive lower bound.

\begin{theorem} \label{thm:nolb}
 Let \( U_{p}(n) \) be a partial recursive function whose values are numbers
smaller than \( C(n) \) whenever defined. Then 
 \[  
 U_{p}(n) \lea J(\len{p}).
 \]
 \end{theorem}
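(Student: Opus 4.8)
The plan is to argue by contradiction, exploiting the fact that a partial recursive lower bound on $C$ that grows without bound would let us generate strings of arbitrarily large complexity, cheaply. This is the classical Berry-paradox argument underlying the non-computability of $C$, adapted to the conditional-constant formulation.

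First I would suppose $U_p(n) < C(n)$ wherever $U_p(n)$ is defined. If the range of $U_p$ is bounded, say by $m$, then $J(\len p) \gea 0$ already gives the claim for that $p$ (trivially), so the interesting case is when $U_p$ takes arbitrarily large values. In that case, define a new partial recursive function $g$ that on input $k$ searches through $n = 0, 1, 2, \dots$ (running the computation of $U_p$ dovetailed) until it finds the first $n$ for which $U_p(n)$ is defined and $U_p(n) \ge k$, and outputs that $n$. Since $U_p$ is unbounded, $g(k)$ is defined for every $k$, and $g = U_q$ for some $q$; moreover $q$ can be obtained from $p$ by a fixed recursive transformation, so $\len q \lea \len p$ — more precisely $\len q \le \len p + O(1)$, the constant not depending on $p$.

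Now the key step: on the one hand, $g(k)$ is a recursive function of $k$ and the program $q$, so by Theorem~\ref{thm:rectransf} (or directly, building the interpreter that reads $\ig(q)$ followed by a code for $k$) we get $C(g(k)) \lea J(\len q) + C(k) \lea J(\len p) + \log k + O(1)$, using Theorem~\ref{thm:upbd}\ref{thebd} for $C(k) \lea \log k$. On the other hand, by construction $U_p(g(k)) \ge k$, and by the hypothesis $U_p(g(k)) < C(g(k))$, so $C(g(k)) > k$. Combining, $k < J(\len p) + \log k + O(1)$, which forces $k \lea J(\len p)$; but $k$ was arbitrary, so this is absurd unless $J(\len p)$ already dominates all the $k$ for which we can run the argument — i.e. unless $U_p$ was bounded by something of size $\lem 2^{J(\len p)}$, which is exactly the content of $U_p(n) \lea J(\len p)$.

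The main obstacle — really the only subtle point — is bookkeeping the constants so that they genuinely do not depend on $p$. The transformation $p \mapsto q$ (wrap $U_p$ in the dovetailed search for the first witness exceeding the input) is a fixed recursive operation on programs, so there is a single constant $c_0$ with $\len q \le \len p + c_0$; and the interpreter realizing $g(k) \mapsto$ via $\ig(q)$ plus a self-delimiting code for $k$ contributes a further fixed constant. Once these are pinned down, the inequality $k \le J(\len p) + \log k + O(1)$ holds with an absolute $O(1)$, and solving it (note $\log k$ is eventually dominated by $k/2$) yields $k \le 2J(\len p) + O(1)$ for all $k$ in the range of $U_p$; since $U_p(n)$ is one such $k$, we conclude $U_p(n) \lea J(\len p)$ as claimed. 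I would present the bounded case and the unbounded case uniformly by simply noting that the displayed inequality $k \le J(\len p) + O(1)$, valid for every value $k = U_p(n)$ attained, is the statement to be proved.
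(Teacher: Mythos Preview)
Your approach is exactly the paper's: the Berry-paradox search for a witness of large \( U_p \)-value, followed by an upper bound on the complexity of that witness in terms of \( p \) and the target value. The paper phrases it with \( f(p,m) \) as a two-argument function and invokes Corollary~\ref{cy:funcofpair}, while you hardwire \( p \) into a program \( q \) and invoke Theorem~\ref{thm:rectransf}; these are the same idea.

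There is, however, a bookkeeping slip in your final step. From \( k < C(g(k)) \lea C(k) + J(\len{q}) \lea \log k + J(\len p) \) you correctly extract \( k \le 2J(\len p)+O(1) \), and after one bootstrap \( k \le J(\len p) + \log\len p + O(1) \). But neither of these is \( \lea J(\len p) \): the differences \( J(\len p) \) and \( \log\len p \) are unbounded in \( p \), so the additive-constant relation \( \lea \) fails. The loss comes from placing the self-delimiting overhead \( J(\cdot) \) on the \emph{large} argument \( p \) rather than the \emph{small} argument \( k \). If instead you treat the search as a fixed two-argument function \( h(p,k) \) and apply Corollary~\ref{cy:funcofpair} in the form \( C(h(p,k)) \lea C(p) + J(C(k)) + O(1) \lea \len p + J(\log k) \lea \len p + 1.5\log k \), then solving \( k < \len p + 1.5\log k + O(1) \) does give \( k \le \len p + 1.5\log\len p + O(1) \le J(\len p)+O(1) \), which is the claimed bound. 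This is exactly what the paper does.
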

\begin{proof}
 The proof of this theorem resembles ``Berry's paradox '', which says:
``The least number that cannot be defined in less than 100 words'' is a 
definition of that number in 12 words. 
The paradox can be used to 
prove, whenever we agree on a formal notion of ``definition'', that the 
sentence in quotes is not a definition. 

Let \( x(p,k) \) for \( k=1,2,\ldots \) be an enumeration of the domain of 
\( U_{p} \). For any natural number \( m \) in the range of \( U_{p} \), let \( f(p,m) \) be 
equal to the first \( x(p,k) \) with \( U_{p}(x(p,k)) = m \). Then by definition, 
\( m < C(f(p,m)) \). On the other hand, \( f(p,m) \) is a p.r.~function, hence 
applying Corollary~\ref{cy:funcofpair} and~\eqref{eq:binupbd} we get 
 \begin{align*}
     m	&< C(f(p,m)) \lea C(p) + J(C(m))
   \\	&\lea \len{p} + 1.5\log m. 		
 \end{align*}
 \end{proof}

Church's first example of an undecidable recursively enumerable set 
used universal partial recursive functions and the diagonal technique 
of G\"odel and Cantor, which is in close connection to Russel's
paradox. 
The set he constructed is \df{complete} in the sense of 
``many-one reducibility''.

The first example of sets not complete in this sense were Post's 
simple sets. 

 \begin{definition}
A computably enumerable set is \df{simple} if its complement is infinite but
does not contain any infinite computably enumerable subsets. 
 \end{definition}

The paradox used in the
above proof is not equivalent in any trivial way to Russell's
paradox, since the undecidable computably enumerable
sets we get from Theorem~\ref{thm:nolb} 
are simple. 
Let \( f(n) \le \log n \) be any recursive 
function with \(  \lim_{n\to\infty} f(n) = \infty  \). 
We show that the set \( E=\setof{ n : C(n) \le f(n) } \) is simple. 
It follows from
Theorem~\ref{thm:Ksemicp} that \( E \) is recursively enumerable, and from
Theorem~\ref{thm:upbd} that its complement is infinite. 
Let \( A \) be any computably enumerable set disjoint from \( E \). 
The restriction \( f_A(n) \) of the function
\( f(n) \) to \( A \) is a p.r.~lower bound for \( C(n) \). 
It follows from Theorem~\ref{thm:nolb} that \( f(A) \) is bounded. 
Since \( f(n)\to\infty \), this is possible only if \( A \) is finite. 

G\"odel used the diagonal technique to prove the incompleteness of
any sufficiently rich logical theory with a recursively enumerable
axiom system. 
His technique provides for any sufficiently rich computably enumerable theory 
a concrete example of an undecidable sentence. 
Theorem~\ref{thm:nolb} leads to a new proof of this result, with essentially
different examples of undecidable propositions. Let us consider any
first-order language \( L \) containing the language of standard
first-order arithmetic. 
Let \(  \tupcod{ \cdot }  \) be a standard
encoding of the formulas of this theory into natural numbers.  
Let the computably enumerable theory \( T_{p} \) be the theory for which the set of codes 
\( \tupcod{ \Phi }  \) of its axioms \( \Phi \) is the computably enumerable set \( W_{p} \) of
natural numbers. 
									
\begin{corollary}
There is a constant \( c<\infty \) such that for any \( p \), if any sentences
with the meaning ``\( m < C(n) \)'' for \( m > J(\len{p}) + c \)  are 
provable in theory \( T_{p} \) then some of these sentences are false. 
 \end{corollary}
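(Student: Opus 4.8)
The plan is to prove the contrapositive, using the same Berry-paradox device that drives Theorem~\ref{thm:nolb}. So I assume that \emph{every} sentence provable in $T_p$ with the meaning ``$m < C(n)$'' is true, and I aim for a bound $m \le J(\len{p}) + c$ on all such provable sentences, with $c$ a single constant not depending on $p$; the corollary then follows at once.

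The core of the argument is a partial recursive function built uniformly from $p$. Since the axioms of $T_p$ form the enumerable set $W_p$, we can effectively enumerate the theorems of $T_p$; define $g(\tupcod{n,i})$ to run this enumeration, locate the $i$-th theorem whose meaning is ``$m < C(n)$'' for this particular $n$, and output that value $m$ (undefined if there are fewer than $i$ such theorems). Then $g$ is partial recursive, and by the soundness assumption $g(\tupcod{n,i}) < C(n)$ whenever it is defined. To apply Theorem~\ref{thm:nolb} I need a function whose values lie strictly below $C$ of \emph{its own} argument; but $C(n) \lea C(n,i) = C(\tupcod{n,i})$ by \eqref{eq:x.xy}, so subtracting a fixed additive constant from $g$ and discarding nonpositive values yields a partial recursive $g'$ with $g'(k) < C(k)$ whenever defined. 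A standard $s$-$m$-$n$ step gives $g'$ a G\"odel number $q$ obtained from $p$ by prepending a fixed block of code, so $\len{q} \le \len{p} + O(1)$ and hence $J(\len{q}) \le J(\len{p}) + O(1)$. Theorem~\ref{thm:nolb} now yields $g'(k) \lea J(\len{q})$, hence $g(\tupcod{n,i}) \le J(\len{p}) + c$ for a constant $c$ independent of $p$. Finally, if $T_p$ proved some sentence with meaning ``$m < C(n)$'' with $m > J(\len{p}) + c$, that sentence would be the $i$-th relevant one for $C(n)$ for some $i$, so $g(\tupcod{n,i}) = m > J(\len{p}) + c$, contradicting the bound just established.

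The main obstacle is not conceptual but a matter of keeping the constants honest. The delicate points are: that converting a code $p$ for the axiom set into a code $q$ for $g'$ costs only an additive constant in length (so the final $c$ really is uniform in $p$); that the passage from $C(n)$ to $C(\tupcod{n,i})$ together with the ``pick out the $i$-th relevant theorem'' bookkeeping is harmless; and that one is precise about the reading of ``some of these sentences are false'' — the clean content is that soundness of $T_p$ on the whole class of ``$m<C(n)$'' sentences forces $m \le J(\len{p}) + c$, so any $T_p$ proving such a sentence with larger $m$ must already prove a false one of this form. One could instead inline the Berry argument rather than cite Theorem~\ref{thm:nolb}, but routing through that theorem keeps the bookkeeping shorter.
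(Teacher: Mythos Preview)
Your proof is correct and follows the same Berry-paradox strategy as the paper. The one technical difference worth noting: the paper defines \( A(p,n) = \max\{m : T_p \vdash \text{``}m < C(n)\text{''}\} \), which is merely lower semicomputable rather than partial recursive, and then invokes an (immediate) generalization of Theorem~\ref{thm:nolb} to semicomputable lower bounds. Your enumeration-index device --- working with \( g(\tupcod{n,i}) \) rather than the max --- keeps the function genuinely partial recursive, so Theorem~\ref{thm:nolb} applies verbatim; the price is the small detour through \( C(n) \lea C(\tupcod{n,i}) \) via~\eqref{eq:x.xy}. Either route works, and yours is arguably more self-contained since it avoids stating the semicomputable extension.
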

\begin{proof} For some \( p \), suppose that all sentences ``\( m<C(n) \)''
provable in \( T_{p} \) are true. 
For a sentence \( \Phi \), let \( T_{p} \vdash \Phi \) 
denote that \( \Phi \) is a theorem of the theory \( T_{p} \). 
Let us define the function 
 \[
 A(p,n) = \max\setof{ m : T_{p} \vdash \text{``}m < C(n)\text{''} }.
 \]
 This function is semicomputable since the set
\( \setof{(p,\tupcod{\Phi}): T_{p}\vdash\Phi} \) is recursively
enumerable.  
Therefore there is a binary string \( a \) such that
\( A(p,n)=S(a,\tupcod{ p,n}) \). 
It is easy to see that we have a constant string \( b \) such that 
 \[
   S(a, \tupcod{ p,n } ) = S(q, n)
 \]
  where \( q = \ol b \ol a p \). 
(Formally, this statement follows for example from the so-called \( S_{m}^{n} \)-theorem.)  
The function \( S_{q}(n) \) is a
lower bound on the complexity function \( C(n) \). 
By an immediate
generalization of Theorem~\ref{thm:nolb} for semicomputable functions),
we have \( S(q,n) \lea J(\len{q}) \lea  J(\len{p}) \).
 \end{proof}

We have seen that the complexity function \( C(x \mvert y) \) is not
computable, and does not have any nontrivial recursive lower bounds.
It has many interesting upper bounds, however, for example the ones in
Theorems~\ref{thm:upbd} and~\ref{thm:condupbd}.  
The following theorem
characterizes all upper semicomputable upper bounds (\df{majorants})
of \( C(x \mvert y) \).

\begin{theorem}[Levin]  \label{thm:Kmajor}
Let \( F(x,y) \) be a function of strings semicomputable from above. 
The relation \( C(x \mvert y) \lea F(x,y) \) holds for all \( x,y \) if and only if 
we have 
 \begin{equation} \label{eq:Kmajor}
  \log |\setof{ x : F(x,y ) < m } | \lea m  
 \end{equation}
  for all strings \( y \) and natural numbers \( m \). 
 \end{theorem}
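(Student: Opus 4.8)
The plan is to prove both directions separately, using the Invariance Theorem and Theorem~\ref{thm:condupbd} for the ``only if'' direction and a direct construction of an interpreter for the ``if'' direction.

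For the ``only if'' direction, suppose $C(x \mvert y) \lea F(x,y)$, say $C(x \mvert y) \le F(x,y) + c_0$. Fix $y$ and $m$. Every $x$ with $F(x,y) < m$ then satisfies $C(x \mvert y) < m + c_0$, so the set $\setof{x : F(x,y) < m}$ is contained in $\setof{x : C(x \mvert y) < m + c_0}$. By~\eqref{eq:howsharp} (the counting bound from Theorem~\ref{thm:upbd}), the latter set has fewer than $2^{m + c_0 + 1}$ elements, whence $\log |\setof{x : F(x,y) < m}| < m + c_0 + 1$, which is~\eqref{eq:Kmajor}.

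For the ``if'' direction --- the substantive one --- assume~\eqref{eq:Kmajor}: there is a constant $c_1$ with $\log|\setof{x : F(x,y) < m}| \le m + c_1$ for all $y,m$, i.e. $|\setof{x : F(x,y) < m}| \le 2^{m+c_1}$. Since $F$ is upper semicomputable, we can effectively enumerate, for each $y$, the pairs $(x, r)$ with $r \in \dQ$, $r > F(x,y)$; in particular we can enumerate, in order of increasing threshold, all $x$ together with increasingly good rational upper bounds for $F(x,y)$. The idea is to build an interpreter $A$ that, on a program which codes $y$-free data, assigns to the $x$'s short codes based on how small their (approximate) $F(x,y)$ value is. Concretely: for each integer $m \ge 1$, run the enumeration until we have identified a batch of strings $x$ for which we have confirmed $F(x,y) < m$; there are at most $2^{m+c_1}$ of them, so we can injectively assign each such $x$ (that has not already received a code at an earlier stage $m' < m$) a binary string of length roughly $m + c_1 + O(1)$ --- we need an endmarker-free scheme, so budget length $m + c_1 + O(1)$, say using $\lceil m + c_1 \rceil + 2$ bits arranged so the decoder knows when the codeword ends (this is where a small additive overhead, absorbed into the $\lea$, appears). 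Define $A(p, y)$ to output the string $x$ assigned to codeword $p$ in this process (for the current $y$). Then for any $x$, taking $m = \lceil F(x,y) \rceil + 1$ we get $C_A(x \mvert y) \le m + c_1 + O(1) \le F(x,y) + O(1)$, and by optimality of $U$, $C(x \mvert y) \lea C_A(x \mvert y) \lea F(x,y)$.

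The main obstacle is the bookkeeping in the construction of $A$: one must check that the codeword-assignment across the stages $m = 1, 2, 3, \dots$ is consistent (each $x$ gets exactly one codeword, assigned at the first stage $m$ at which $F(x,y) < m$ is confirmed), that the total number of codewords of length $\le m+c_1+O(1)$ used up to stage $m$ does not exceed the $2^{m+c_1+O(1)}$ available (this is exactly what~\eqref{eq:Kmajor} guarantees, with room to spare for the endmarker overhead), and that the whole procedure is uniformly computable in $(p,y)$ so that $A$ is genuinely a partial recursive interpreter. Once those routine verifications are in place, the length bound $C_A(x \mvert y) \le F(x,y) + O(1)$ follows by choosing the right stage $m$, and the theorem follows from the Invariance Theorem.
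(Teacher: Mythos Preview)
Your ``only if'' direction is fine and matches the paper.

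For the ``if'' direction you have the right core idea---let the length of the program encode $m$, and use the hypothesis $|\{x:F(x,y)<m\}|\le 2^{m+c_1}$ to fit all such $x$ into codewords of that length---but two points need correcting.

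First, the ``endmarker'' concern is spurious. The interpreter $A(p,y)$ receives the full string $p$, so it already knows $|p|$; no self-delimitation is needed. The only relevant fact is that $|p|$ determines $m$.

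Second, and this is a genuine gap: your single-codeword scheme (``each $x$ gets exactly one codeword, assigned at the first stage $m$ at which $F(x,y)<m$ is confirmed'') does not give the bound you claim. The set $\{x:F(x,y)<m\}$ is only computably enumerable, not decidable, so ``stage $m$'' cannot mean ``after listing all of it''. If instead you dovetail the enumerations, the first pair $(x,m)$ to appear for a given $x$ may have $m$ arbitrarily larger than $\lceil F(x,y)\rceil+1$, and then the one codeword you hand $x$ is too long. Your final sentence (``the length bound follows by choosing the right stage $m$'') contradicts the single-codeword commitment.

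The fix is to drop uniqueness. Define $A(p,y)$ directly: set $m=|p|-c_1$, enumerate $\{x:F(x,y)<m\}$ (uniformly in $y,m$), and output the element whose index in this enumeration equals the numerical value of $p$. The set has at most $2^{m+c_1}=2^{|p|}$ elements, so $A$ is a well-defined p.r.\ interpreter, and for every $m>F(x,y)$ the string $x$ is hit by some $p$ of length $m+c_1$; taking $m=\lceil F(x,y)\rceil+1$ gives $C_A(x\mid y)\le F(x,y)+c_1+2$.

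The paper's proof is this same argument factored through two earlier lemmas: Theorem~\ref{thm:condupbd} applied to the c.e.\ set $E=\{(x,y,m):F(x,y)<m\}$ yields $C(x\mid y,m)\lea m$, and Theorem~\ref{thm:strange} (whose proof is exactly the ``$|p|$ encodes $m$'' padding trick) eliminates $m$ from the condition.
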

\begin{proof}
 Suppose that \( C(x \mvert y) \lea F(x,y) \) holds. 
Then~\eqref{eq:Kmajor} follows from~\ref{howsharp} of Theorem~\ref{thm:upbd}.

Now suppose that~\eqref{eq:Kmajor} holds for all \( y,m \). 
Then let \( E \) be the computably enumerable set of triples \( (x,y,m) \) with \( F(x,y) < m \). 
It follows from~\eqref{eq:Kmajor} and Theorem~\ref{thm:condupbd} that for all
\( (x,y,m)\in E \) we have \( C(x \mvert y,m) \lea m \).  
By Theorem~\ref{thm:strange} then \( C(x \mvert y) \lea m \). 
 \end{proof}

The minimum of any finite number of majorants is again a majorant. 
Thus, 
we can combine different heuristics for recognizing patterns of sequences. 

\section{The coding theorem} \label{sec:codingth} 

\subsection{Self-delimiting complexity}

Some theorems on the addition of complexity do not have as simple a form as
desirable.
For example, Theorem~\ref{thm:addit} implies
 \[
   C(x, y) \lea J(C(x)) + C(y).
 \]
We would like to see just \( C(x) \) on the right-hand side, but 
the inequality \( C(x, y) \lea C(x) + C(y) \) does not always hold
(see Exercise~\ref{xrc.need-prefix}).
The problem is that if we compose a program for \( (x,y) \) from those of \( x \)
and \( y \) then we have to separate them from each other somehow.
In this section, we introduce a variant of Kolmogorov's 
complexity discovered by Levin and independently by Chaitin and Schnorr,
which has the property that ``programs'' are \emph{self-delimiting}:
this will free us of the necessity to separate them from each other.

 \begin{definition}
A set of strings is called \df{prefix-free} if for any pair \( x,y \) of
elements in it, \( x \) is not a prefix of \( y \).

A one-to-one function into a set of strings is a
\df{prefix code}, or \df{instantaneous code} if its domain of definition is
prefix free. 

An interpreter
\( f(p,x) \) is \df{self-delimiting (s.d.)} if for each \( x \), the set
\( D_{x} \) of strings \( p \) for which \( f(p,x) \) is defined is a prefix-free set.
 \end{definition}

 \begin{example}
The mapping \( p \to p^o \) is a prefix code.
 \end{example}

A self-delimiting p.r.~function \( f(p) \) can be imagined as a function 
computable on a special self-delimiting Turing machine.

\begin{definition}
A Turing machine \( \cT \) is \df{self-delimiting} if it
has no input tape, but can ask any time for a new input symbol. 
After some computation and a few such requests (if ever) the
machine decides to write the output and stop. 
\end{definition}

The essential 
difference between a self-delimiting machine \( \cT \)
and an ordinary Turing machine is that 
\( \cT \) does not know in advance how many input symbols suffice; she
must compute this information from the input symbols themselves,
without the help of an ``endmarker''. 

Requiring an interpreter \( A(p,x) \) to be self-delimiting in \( p \) has
many advantages. 
First of all, when we concatenate descriptions, the
interpreter will be able to separate them without endmarkers. 
Lost endmarkers were responsible for the additional logarithmic terms and
the use of the function \( J(n) \) in Theorem~\ref{thm:addit} and several
other formulas for \( K \). 

\begin{definition}
A self-delimiting partial recursive 
interpreter \( T \) is called \df{optimal} if for any other s.d.~p.r.~interpreter \( F \), 
we have 
 \begin{equation} 	C_{T} \lea C_{F}. \label{eq:sdopt}  
 \end{equation}
\end{definition}

\begin{theorem} \label{thm:sdopt} 
There is an optimal s.d.~interpreter.
\end{theorem}
\begin{proof}								
The proof of this theorem is similar to the proof of Theorem~\ref{thm:invar}.
We take the universal partial recursive function \( V(a,p,x) \). 
We transform each function 
\( V_a(p,x)=V(a,p,x) \) into a self-delimiting function
\( W_a(p,x)=W(a,p,x) \), but so as not to change the functions \( V_a \)
which are already self-delimiting. 
Then we form \( T \) from \( W \) just as we formed \( U \) from \( V \). 
It is easy to check that the function \( T \)
thus defined has the desired properties. 
Therefore we are done if we construct \( W \).

Let some Turing machine \( \cM \) compute \( y=V_a(p,x) \) in \( t \) steps.
We define \( W_a(p,x)=y \) if \( \len{p} \le t \) and if \( \cM \) does not
compute in \( t \) or fewer steps any output \( V_a(q,x) \) from any
extension or prefix \( q \) of length \( \le t \) of \( p \). 
If \( V_a \) is
self-delimiting then \( W_a=V_a \). But \( W_a \) is always self-delimiting.
Indeed, suppose that \( p_{0} \prefix p_{1} \) and that \( \cM \) computes
\( V_a(p_{i},x) \) in \( t_{i} \) steps. 
The value \( W_a(p_{i},x) \) can be defined only if \( \len{p_{i}} \le t_{i} \). 
The definition of \( W_a \) guarantees that if
\( t_{0} \le t_{1} \) then \( W_a(p_{1},x) \) is undefined, otherwise \( W_a(p_{0},x) \)
is undefined. 
 \end{proof}

 \begin{definition}
We fix an optimal self-delimiting p.r.~interpreter \( T(p,x) \) and write
 \[ 	
K(x \mvert y) = C_{T}(x \mvert y).
 \]
We call \( K(y \mvert x) \) the (self-delimiting) \df{conditional complexity} 
of \( x \) with respect to \( y \). 
 \end{definition}

We will use the expression
``self-delimiting'' only if confusion may arise. Otherwise, under
complexity, we generally understand the self-delimiting complexity.
Let \( T(p)=T(p,\Lg) \). 
All definitions for unconditional complexity,
joint complexity etc.~are automatically in force  for \( H=C_{T} \). 

Let us remark that the s.d.~interpreter defined in Theorem~\ref{thm:sdopt} 
has the following stronger property.  
For any other
s.d.~interpreter \( G \) there is a string \( g \) such that we have
 \begin{equation}  
 G(x)=T(gx) \label{eq:univsd}
 \end{equation}
  for all \( x \).  
We could say that the interpreter \( T \) is \df{universal}.

Let us show that the functions \( K \) and \( H \) are asymptotically equal. 
			      
\begin{theorem} \label{thm:HK} 
 \begin{equation} K \lea H \lea J(K). 
 \end{equation}
 \end{theorem}
\begin{proof} Obviously, \( K \lea H \). We define a self-delimiting
p.r.~function \( F(p,x) \). 
The machine computing \( F \) tries to decompose 
\( p \) into the form \( p=u^o v \) such that \( u \) is the number \( \len{v} \) in binary
notation. 
If it succeeds, it outputs \( U(v,x) \). 
We have 
 \[ 
K(y \mvert x) \lea C_{F}(y \mvert x) \lea C(y \mvert x) + 2\log C(y \mvert x).
 \]
 \end{proof}

Let us review some of the relations proved in Sections~\ref{sec:kolm}
and~\ref{sec:propert}. 
Instead of the simple estimate \( C(x) \lea n \) for
a binary string \( x \), of length \( n \), only the obvious consequence of
Theorem~\ref{thm:HK} holds that is
 \[
   K(x) \lea J(n).  
 \]
 We must similarly change Theorem~\ref{thm:condupbd}.  
We will use the
universal p.r.~function \( T_{p}(x) = T(p,x) \). The r.e.\ set \( V_{p} \) is
the range of the function \( T_{p} \).  Let \( E=V_{p} \) be an enumerable set of
pairs of strings. 
The section \( E^a \) is defined as in~\eqref{eq:sectiondef}.  
Then for all \( x \in E^a \), we have
 \begin{equation} 
K(x \mvert a) \lea J(\log \card{E^a}) + \len{p}. 
\label{eq:condHupbd} 
 \end{equation}
  We do not need \( J(\len{p}) \) since the function \( T \) is self-delimiting.
However, the real analogon of Theorem~\ref{thm:condupbd} 
is the Coding Theorem, to be proved later in this section.

Part~\ref{howsharp} of Theorem~\ref{thm:upbd} holds for \( H \), but is not
the strongest what can be said. 
The counterpart of this theorem is
the inequality~\eqref{eq:condHsum} below. 
Theorem~\ref{thm:rectransf} and
its corollary hold without change for \( H \). 
The additivity relations
will be proved in a sharper form for \( H \) in the next section.

\subsection{Universal semimeasure}

Self-delimiting complexity has an interesting characterization 
that is very useful in applications.

  \begin{definition}
A function \( w : S \to \clint{0}{1} \) is called a \df{semimeasure}
over the space \( S \) if 
 \begin{equation} \label{eq:dscrsemi} 
  \sum_{x} w(x) \le 1. 
 \end{equation}
 It is called a \df{measure} (a probability distribution) if
equality holds here. 

A semimeasure is called \df{constructive} if it
is lower semicomputable. 
A constructive semimeasure which
multiplicatively dominates every other constructive semimeasure is
called \df{universal}.
  \end{definition}

 \begin{remark}
Just as in an earlier remark, we warn that semimeasures will later be
defined over the space \( \dN^\dN \). 
They will be characterized by a
nonnegative function \( w \) over \( \dS=\dN^{*} \) but with a condition different
from~\eqref{eq:dscrsemi}.
 \end{remark}

The following remark is useful.

 \begin{proposition}
If a constructive semimeasure \( w \) is also a measure then it is
computable. 
 \end{proposition}
 \begin{proof}
If we compute an approximation \( w_{t} \) of the
function \( w \) from below for which \( 1-\eps < \sum_{x} w_{t}(x) \) then we
have \( | w(x)-w_{t}(x) | < \eps \) for all \( x \). 
 \end{proof}

 \begin{theorem} \label{thm:univsem} 
There is a universal constructive semimeasure.
 \end{theorem}
\begin{proof} 
 Every computable family \( w_{i} \) \( (i=1,2,\dotsc) \) of semimeasures is
dominated by the semimeasure \( \sum_{i} 2^{-i}w_{i} \). 
Since there are only
countably many constructive semimeasures there exists a semimeasure
dominating them all. 
The only point to prove is that the dominating
semimeasure can be made constructive. 

Below, we construct a function \( \mu_{p}(x) \) lower
semicomputable in \( p,x \) such 
that \( \mu_{p} \) as a function of \( x \) is a constructive semimeasure for
all \( p \) and all constructive semimeasures occur among the \( w_{p} \). 
Once we have \( \mu_{p} \) we are done because for any positive computable
function \( \dg(p) \) with \( \sum_{p} \dg(p) \le 1 \) the semimeasure 
\( \sum_{p} \dg(p) \mu_{p} \) is obviously lower
semicomputable and dominates all the
\( \mu_{p} \). 

Let \( S_{p}(x) \) be the lower semicomputable function with G\"odel number \( p \)
and let \( S_{p}^{t}(x) \) be a function recursive in \( p,t,x \) with rational
values, dondecreasing in \( t \), such that 
\( \lim_{t} S^{t}_{p}(x) = \max\{0,S_{p}(x)\} \) and for each \( t,p \), the function 
\( S_{p}^{t}(x) \) is
different from 0 only for \( x \le t \). Let us define \( \mu_{p}^{t} \)
recursively in \( t \) as follows. 
Let \( \mu_{p}^0(x) = 0 \). 
Suppose that \( \mu_{p}^{t} \) is already defined. 
If \( S_{p}^{t+1} \) is a semimeasure then we
define \( \mu_{p}^{t+1}=S_{p}^{t+1} \), otherwise \( \mu_{p}^{t+1} = \mu_{p}^{t} \).
Let \( \mu_{p} = \lim_{t} \mu_{p}^{t} \). 
The function \( \mu_{p}(x) \) is, by its
definition, lower semicomputable in \( p,x \) and a semimeasure for each fixed
\( p \). 
It is equal to \( S_{p} \) whenever \( S_{p} \) is a semimeasure, hence it
enumerates all constructive semimeasures.
 \end{proof}

The above theorem justifies the following notation.

\begin{definition}\label{def:m}
We choose a fixed universal constructive semimeasure and call it \( \m(x) \).
\end{definition}

With this notation, we can make it a little more precise in which sense
this measure dominates all other constructive semimeasures.

 \begin{definition}
For an arbitrary constructive semimeasure \( \nu \) let us define
 \begin{equation}\label{eq:m(nu)}
   \m(\nu) = 
\sum\setof{\m(p) : \txt{the (self-delimiting) program \( p \) computes \( \nu \)}}.
 \end{equation}
Similar notation will be used for other objects: for example
now \( \m(f) \) make sense for a recursive function \( f \).
 \end{definition}

Let us apply the above concept to some interesting cases.

 \begin{theorem}\label{thm:exact-domin} 
For all constructive semimeasures \( \nu \) and for all strings \( x \), we have 
 \begin{equation}\label{eq:exact-domin}
  \m(\nu) \nu(x) \lem \m(x).
 \end{equation}
 \end{theorem}
 \begin{proof}
Let us repeat the proof of Theorem~\ref{thm:univsem}, using \( \m(p) \) in place
of \( \dg(p) \).
We obtain a new constructive semimeasure \( \m'(x) \) with the property
\( \m(\nu)\nu(x) \le \m'(x) \) for all \( x \).
Noting \( \m'(x) \eqm \m(x) \) finishes the proof.
 \end{proof}

\subsection{Prefix codes}\label{subsec:prefix-codes}

The main theorem of this section says \( K(x) \eqa -\log \m(x) \). 
Before
proving it, we relate descriptional complexity to the classical coding
problem of information theory.

 \begin{definition}
A (binary) \df{code} is any
mapping from a set \( E \) of binary sequences to a set of objects. 
This mapping is the \df{decoding function} of the code. 
If the mapping is
one-to-one, then sometimes the set \( E \) itself is also called a code.

A code is a \df{prefix code} if \( E \) is a prefix-free set.
 \end{definition}

The interpreter \( U(p) \) is a decoding function. 

\begin{definition}
We call string \( p \) the first
shortest description of \( x \) if it is the lexicographically first
binary word \( q \) with \( |q| = C(x) \) and \( U(q)=x \). 
\end{definition}

The set of first shortest descriptions is a code.  
This imposes the implicit lower bound stated in~\eqref{eq:Kmajor} on \( C(x) \):
 \begin{equation}
	\card{\setof{ x : C(x)=n }} \le 2^{n}. 
 \end{equation}

The least shortest descriptions in the definition of \( K(x) \) form moreover a 
\emph{prefix code}.
We recall a classical result of information theory.

 \begin{lemma}[Kraft's Inequality, see~\protect\cite{CsiszarKornerBook81}]
\label{l:kraft} 
For any sequence \( l_{1},l_{2},\ldots \) of
natural numbers, there is a prefix code with exactly these numbers as
codeword lengths, if and only if
 \begin{equation}	\sum_{i} 2^{-l_{i}} \le 1. \label{eq:kraft} 
 \end{equation}
 \end{lemma}
\begin{proof} 
Recall the standard correspondence \( x \leftrightarrow [x] \) between
binary strings and binary subintervals of the interval \( \clint{0}{1} \).  
A prefix code corresponds to a set of disjoint intervals, and the 
length of the interval \( [x] \) is \( 2^{-\len{x}} \). 
This proves 
that~\eqref{eq:kraft} holds for the lengths of codewords of a prefix code.

Suppose that \( l_{i} \) is given and~\eqref{eq:kraft} holds. 
We can assume that the sequence \( l_{i} \) is nondecreasing. 
Chop disjoint, adjacent 
intervals \( I_{1},I_{2},\ldots \) of length \( 2^{-l_{1}},2^{-l_{2}},\ldots \) from
the left end of the interval \( [0,1] \). 
The right end of \( I_{k} \) is \( \sum_{j=1}^k 2^{-l_{j}} \). 
Since 
the sequence \( l_{j} \) is nondecreasing, 
all intervals \( I_{j} \) are binary intervals. 
Take the binary string  corresponding to \( I_{j} \) as the \( j \)-th codeword. 
 \end{proof}
							
 \begin{corollary}
We have \( -\log \m(x) \lea K(x) \).
 \end{corollary}
 \begin{proof}
The lemma implies 
 \begin{equation} \sum_y 2^{-K(y \mvert x)} \le 1.  \label{eq:condHsum} 
 \end{equation}
  Since \( K(x) \) is an upper semicomputable function, the function
\( 2^{-K(x)} \) is lower semicomputable. 
Hence it is a constructive semimeasure, and we have \( -\log \m(x) \lea K(x) \).
 \end{proof}

The construction in the second part of the proof of Lemma~\ref{l:kraft}
 has some disadvantages.
We do not always want to rearrange the numbers \( l_{j} \),
for example because we want that the order of the codewords reflect the
order of our original objects. 
Without rearrangement, we can still
achieve a code with only slightly longer codewords.  

\begin{lemma}[Shannon-Fano code] (see~\cite{CsiszarKornerBook81})
 Let \( w_{1},w_{2},\ldots \) be positive numbers with \( \sum_{j} w_{j} \le 1 \).
There is a binary prefix code \( p_{1},p_{2},\ldots \) where the codewords
are in lexicographical order, such that 
 \begin{equation} 	|p_{j}| \le -\log w_{j} + 2. \label{eq:shfano} 
 \end{equation}
 \end{lemma}
 \begin{proof} We follow the construction above and cut off disjoint,
adjacent (not necessarily binary) intervals \( I_{j} \) of length \( w_{j} \) from 
the left end of \( [0,1] \). Let \( v_{j} \) be the length of the longest binary 
intervals contained in \( I_{j} \). 
Let \( p_{j} \) be the binary word corresponding to the first one of these. 
Four or fewer intervals of length \( v_{j} \) cover \( I_{j} \). 
Therefore~\eqref{eq:shfano} holds. 
\end{proof}

\subsection{The coding theorem for \texorpdfstring{\( K(x) \)}{\emph{K(x)}}}

The above results suggest \( K(x) \le -\log\m(x) \). 
But in the proof, 
we have to deal with the problem that \( \m \) is not computable. 

\begin{theorem}[Coding Theorem, see~\protect\cite{Levin74,GacsSymm74,Chaitin75}]
\label{thm:coding} 
  \begin{equation}  	K(x) \eqa -\log\m(x) \label{eq:codingth} 
  \end{equation}
 \end{theorem}
							
\begin{proof} 
 We construct a self-delimiting p.r.~function \( F(p) \) with the
property that \( C_{F}(x) \le -\log\m(x) + 4 \). The function \( F \) to be 
constructed is the decoding function of a prefix code hence the 
code-construction of Lemma~\ref{l:kraft} proposes itself. 
But since
the function \( \m(x) \) is only lower semicomputable, it is given only by a
sequence converging to it from below. 

Let \( \setof{ \pair{z_{t}}{k_{t}} : t=1,2,\ldots } \) be a recursive enumeration
of the set \( \setof{ \pair{x}{k} : k < \m(x) } \) without repetition. Then 
 \[ 
  \sum_{t} 2^{-k_{t}} = \sum_{x} \sum_{z_{t}=1} 2^{-k_{t}} \le \sum_{x} 2\m(x) < 2.
 \]
  Let us cut off consecutive adjacent, disjoint intervals \( I_{t} \) of
length \( 2^{-k_{t}-1} \) from the left side of the interval \( \clint{0}{1} \). 
We define \( F \) as follows. If \( [p] \) is a largest binary subinterval of
some \( I_{t} \) then \( F(p)=z_{t} \). Otherwise \( F(p) \) is undefined. 

The function \( F \) is obviously self-delimiting and partial recursive. It 
follows from the construction that for every \( x \) there is a \( t \) with 
\( z_{t}=x \) and \( 0.5\m(x) < 2^{-k_{t}} \). Therefore, for every \( x \) 
there is a \( p \) such that \( F(p)=x \) and \( |p| \le -\log\m(x) + 4 \). 
 \end{proof}

The Coding Theorem can be straightforwardly generalized as follows.
Let \( f(x,y) \) be a lower semicomputable nonnegative function.
Then we have
 \begin{equation}\label{eq:gencodth} 
 	K(y \mvert x) \lea -\log f(x,y)  
 \end{equation}
for all \( x \) with \( \sum_y f(x,y) \le 1 \). The proof is the same.

\subsection{Algorithmic probability}

We can interpret the Coding Theorem as follows. 
It is known in
classical information theory that for any probability distribution
\( w(x) \) \( (x \in S) \) a binary prefix code \( f_w(x) \) can be constructed
such that \( \len{f_w(x)} \le -\log w(x) + 1 \). 
We learned that for
computable, moreover, even for contstructive distributions \( w \) there
is a \df{universal code} with a self-delimiting partial-recursive
decoding function \( T \) independent of \( w \) such that for the codeword
length \( K(x) \) we have 
 \[
	K(x) \le -\log w(x) + c_w.
 \]
  Here, only the additive constant \( c_w \) depends on the distribution
\( w \). 

Let us imagine the self-delimiting Turing machine \( \cT \) computing
our interpreter \( T(p) \). 
Every time the machine asks for a new bit of
the description, we could toss a coin to decide whether to give 0 or 1.

\begin{definition}
Let \( P_{T}(x) \) be the probability that the self-delimiting
machine \( \cT \) gives out result \( x \) after receiving random bits as inputs.
\end{definition}

We can write \( P_{T}(x)=\sum W_{T}(x)  \) where \( W_{T}(x) \) is the set 
\( \setof{ 2^{-\len{p}} : T(p) = x } \).  
Since \( 2^{-K(x)} = \max W_{T}(x) \), we have \( -\log P_{T}(x) \le K(x) \).  
The semimeasure \( P_{T} \) is constructive, hence \( P \lem \m \), hence
 \[
	H \lea -\log\m \lea -\log P_{T} \le H,
 \]
  hence
 \[
	-\log P_{T} \eqa -\log\m \eqa H. 
 \]
  Hence the sum \( P_{T}(x) \) of the set \( W_{T}(x) \) is at most a constant
times larger than its maximal element \( 2^{-K(x)} \). 
This relation was not obvious in advance. 
The outcome \( x \) might have high probability
because it has many long descriptions. 
But we found that then it must have a short description too.
In what follows it will be convenient to fix the definition of \( \m(x) \) as
follows.

 \begin{definition}From now on let us define
 \begin{equation}\label{eq:m(x)}
   \m(x) = P_{T}(x).
 \end{equation}
 \end{definition}

\section{The statistics of description length } 

We can informally summarize the relation $H \eqa -\log P_{T}$ saying that
\emph{if an object has many long descriptions then it has a short one}.
But how many descriptions does an object really have?
With the Coding Theorem, we can answer several questions of this kind.
The reader can view this section as a series of exercises in the technique
acquired up to now.

\begin{theorem} \label{thm:Hstat} 
  Let $f(x,n)$ be the number of binary strings $p$ of length $n$ with
$T(p)=x$ for the universal s.d.~interpreter $T$ defined in 
Theorem~\ref{thm:sdopt}.
  Then for every $n \ge K(x)$, we have
  \begin{equation}
	\log f(x,n) \eqa n - K(x,n). \label{eq:Hstat} 
  \end{equation}
 \end{theorem}

  Using $K(x) \lea K(x,n)$, which holds just as~\eqref{eq:x.xy}, and
substituting $n=K(x)$ we obtain
$\log f(x,K(x)) \eqa K(x) - K(x,K(x)) \lea 0$, implying the following.
 
  \begin{corollary} The number of shortest descriptions of any object is
bounded by a universal constant.
 \end{corollary}

  Since the number $\log f(x,K(x))$ is nonnegative, we also derived the
identity
 \begin{equation} \label{eq:xHx} 
 	K(x, K(x)) \eqa K(x)
 \end{equation}
  which could have been proven also in the same way as
Theorem~\ref{thm:xKx}. 
					  
\begin{proof}[Proof of Theorem \protect\ref{thm:Hstat}]
 It is more convenient to prove equation~\eqref{eq:Hstat}
in the form 
 \begin{equation} \label{eq:Hstatmult} 
 	2^{-n}f(x,n) \eqm \m(x,n)
 \end{equation}
 where $\m(x,y) = \m(\ang{ x,y} )$.
First we prove $\lem$.
Let us define a p.r.~self-delimiting function $F$ by
 $F(p) = \tupcod{T(p), \len{p}}$.
Applying $F$ to a coin-tossing argument, $2^{-n}f(x,n)$ is the
probability that the pair $\ang{x, n} $ is obtained.
Therefore the left-hand side of~\eqref{eq:Hstatmult}, as a function of
$\ang{ x,n}$, is a constructive semimeasure, dominated by $\m(x,n)$.

  Now we prove $\gem$.
  We define a self-delimiting p.r.~function $G$ as follows.
  The machine computing $G(p)$ tries to decompose $p$ into
three segments $p=\bg(c)^o v w$ in such a way that $T(v)$ is a pair 
$\ang{ x, \len{p} + c }$.
  If it succeeds then it outputs $x$.
  By the universality of $T$, there is a binary string $g$ such that
$T(gp)=G(p)$ for all $p$.
  Let $r=\len{g}$.
 For an arbitrary pair $x,n$, let $q$ be a shortest binary string with
$T(q) = \tupcod{ x,n}$, and $w$ an arbitrary string of length
 \[
 	l = n - \len{q} - r - \len{\bg(r)^o}.
 \]
  Then $G(\bg(r)^o q w) = T(g \bg(r)^o q w) = x$.
  Since $w$ is arbitrary here, there are $2^l$ binary strings $p$ of length
$n$ with $T(p) = x$.
 \end{proof}

  How many objects are there with a given complexity $n$?
  We can answer this question with a good approximation.

 \begin{definition}
  Let $g_{T}(n)$ be the number of objects $x \in \dS$ with $K(x)=n$, and $D_{n}$
the set of binary strings $p$ of length $n$ for which $T(p)$ is
defined.
  Let us define the moving average
 \[
	h_{T}(n,c) = \frac{1}{2c+1}\sum_{i = -c}^c g_{T}(n+i).
 \]
 \end{definition}

Here is an estimation of these numbers.
							
\begin{theorem} \label{thm:exactstat} (\cite{SolovayManu}) 
 There is a natural number $c$ such that 
 \begin{align}
	\log\card{D_{n}}	&\eqa n-K(n),	\label{eq:Dn}
  \\	\log h_{T}(n,c)	&\eqa n-K(n).	\label{eq:aver}
 \end{align}
 \end{theorem}

Since we are interested in general only in accuracy up to additive
constants, we could omit the normalizing factor $1/(2c + 1)$ from  the
moving average $h_{T}$.
We do not know whether this average can be replaced by $g_{T}(n)$.
This might depend on the special universal
partial recursive function we use to construct the optimal 
s.d.~interpreter $T(p)$.
But equation~\eqref{eq:aver} is true for any \df{optimal} s.d.~interpreter
$T$ (such that the inequality~\eqref{eq:sdopt} holds for all $F$) while
there is an optimal s.d.~interpreter $F$ for which $g_{F}(n)=0$ for every
odd $n$.
Indeed, let $F(00p)=T(p)$ if $\len{p}$ is even, $F(1p)=T(p)$ if $\len{p}$ is odd
and let $F$ be undefined in all other cases.
Then $F$ is defined only for inputs of even length, while $C_{F} \le C_{T}+2$.

\begin{lemma} \label{l:projmeas} 
 \[
 	\sum_y \m(x,y) \eqm \m(x).
 \]
 \end{lemma}

\begin{proof}
  The left-hand side is a constructive semimeasure therefore it is
dominated by the right side.
  To show $\gem$, note that by equation~\eqref{eq:xx.x} as applied to $H$, we
have $K(x,x) \eqa K(x)$.
  Therefore $\m(x) \eqm \m(x,x) < \sum_y \m(x,y)$.
 \end{proof}

\begin{proof}[Proof of Theorem \protect\ref{thm:exactstat}:]
 Let $d_{n} = \card{D_{n}}$. Using Lemma~\ref{l:projmeas} and 
Theorem~\ref{thm:Hstat} we have 
 \[
	g_{T}(n) \le d_{n} = \sum_{x} f(x,n) \eqm 2^{n} \sum_{x} \m(x,n) 
			\eqm 2^{n} \m(n).
 \]
  Since the complexity $H$ has the same continuity 
property~\eqref{eq:continui} as $K$, we can derive from here
 $h_{T}(n,c) \le 2^{n}\m(n)O(2^c c^2)$.
  To prove $h_{T}(n,c) \gem d_{n}$ for an appropriate $c$, we will prove that
the complexity of at least half of the elements of $D_{n}$ is near $n$.

  For some constant $c_{0}$, we have $K(p) \le n + c_{0}$ for all $p \in
D_{n}$. 
  Indeed, if the s.d.~p.r.~interpreter $F(p)$ is equal to $p$
where $T(p)$ is defined and is undefined otherwise then $C_{F}(p) \le
n$ for $p \in D_{n}$.
  By $K(p,\len{p})\eqa K(p)$, and a variant of Lemma~\ref{l:projmeas}, we have
 \[
 \sum_{p \in D_{n}}\m(p) \eqm \sum_{p \in D_{n}} \m(p,n) \eqm \m(n).
 \]
  Using the expression~\eqref{eq:Dn} for $d_{n}$, we see that there is a
constant $c_{1}$ such that 
 \[
  d_{n}^{-1} \sum_{p \in D_{n}} 2^{-K(p)} \le 2^{-n+c_{1}}.
 \]
  Using Markov's Inequality, hence the number of elements $p$ of $D_{n}$
with $K(p) \ge n-c_{1}-1$ is at least $d_{n}/2$.
  We finish the proof making $c$ to be the maximium of $c_{0}$ and $c_{1}+1$.
 \end{proof}

  What can be said about the complexity of a binary string of length
$n$?
  We mentioned earlier the estimate $K(x) \lea n + 2\log n$. The same
argument gives the stronger estimate
 \begin{equation} 	K(x) \lea \len{x} + K(\len{x}). \label{eq:monbd}
 \end{equation}
  By a calculation similar to the proof of Theorem~\ref{thm:exactstat},
we can show that the estimate~\eqref{eq:monbd} is exact for most binary
strings.
  First, it follows just as in Lemma~\ref{l:projmeas} that there is a
constant $c$ such that
 \[
  	2^{-n}\sum_{\len{p} = n} 2^{-K(p)} \le c2^{-n}\m(n).
 \]
  From this, Markov's Inequality gives that the number of strings 
$p\in \dB^{n}$ with $K(p) < n-k$ is at most $c2^{n-k}$. 

  There is a more polished way to express this result. 

 \begin{definition}
Let us introduce the following function of natural numbers:
 \[	
  K^{+}(n) = \max_{k \le n} K(k).
 \]
 \end{definition}

  This is the smallest monotonic function above $K(n)$. 
It is upper semicomputable just as $H$ is, hence $2^{-K^+}$ is
universal among the monotonically decreasing constructive
semimeasures.
The following theorem expresses $K^{+}$ more directly in terms of $K$. 

\begin{theorem} \label{thm:monbd} 
We have $K^+(n) \eqa \log n + K(\flo{ \log n })$.
  \end{theorem}				
\begin{proof} 
 To prove $\lea$, we construct a s.d.~interpreter as follows.
  The machine computing $F(p)$ finds a decomposition $uv$ of $p$ such that
$T(u)=\len{v}$, then outputs the number whose binary representation (with
possible leading zeros) is the binary string $v$.
  With this $F$, we have
 \begin{equation} \label{eq:monbd1}   
   K(k) \lea C_{F}(k) \le \log n + H (\flo{ \log n }) 
 \end{equation}
  for all $k \le n$.
 The numbers between $n/2$ and $n$ are the ones whose binary representation
has length $\flo{ \log n} + 1$.
  Therefore if the bound~\eqref{eq:monbd} is sharp for most $x$ then the 
bound~\eqref{eq:monbd1} is sharp for most $k$.
 \end{proof}

  The sharp monotonic estimate $\log n + K(\flo{ \log n })$ for $K(n)$ is
less satisfactory than the sharp estimate $\log n$ for Kolmogorov's
complexity $C(n)$, because it is not a computable function.
  We can derive several computable upper bounds from it, for example
 $\log n + 2\log\log n$, $\log n + \log\log n + 2\log\log\log n$, etc. but
none of these is sharp for large $n$.
  We can still hope to find computable upper bounds of $H$ which are sharp
infinitely often.
  The next theorem provides one.

\begin{theorem}[see~\protect\cite{SolovayManu}] \label{thm:infsharp} 
  There is a computable upper bound $G(n)$ of the function $K(n)$ with the
property that $G(n)=K(n)$ holds for infinitely many $n$.
 \end{theorem}

  For the proof, we make use of a s.d.~Turing machine $\cT$ computing the
function $T$.
Similarly to the proof of Theorem~\ref{thm:Ksemicp}, we introduce a
time-restricted complexity. 

 \begin{definition}
We know that for some constant $c$, the
function $K(n)$ is bounded by $2\log n + c$.
Let us fix such a $c$.

For any number $n$ and s.d.~ Turing machine $\cM$, let $K_{\cM}(n;t)$ be
the minimum of $2\log n + c$ and the lengths of descriptions from which
$\cM$ computes $n$ in $t$ or fewer steps.
 \end{definition}

  At the time we proved the Invariance Theorem we were not interested in
exactly how the universal p.r.~ function $V$ was to be computed.
Now we need to be more specific.

 \begin{definition}
Let us agree now that the universal p.r.~function
is computed by a universal Turing
machine $\cV$ with the property that for any Turing machine $\cM$ there is
a constant $m$ such that the number of steps needed to simulate $t$ steps
of $\cM$ takes no more than $mt$ steps on $\cV$.
Let $\cT$ be the Turing machine constructed from $\cV$ computing the
optimal s.d.~interpreter $T$, and let us write 
 \begin{align*}
  K(n,t)=K_{\cT}(n,t).
 \end{align*}
 \end{definition}

With these definitions, 
for each s.d.~ Turing machine $\cM$ there are constants $m_{0},m_{1}$ with
 \begin{equation}  \label{eq:fastopt} 
 	K(n,m_{0}t) \le K_{\cM}(n,t) + m_{1}.
 \end{equation}
  The function $K(n;t)$ is nonincreasing in $t$. 

\begin{proof}[Proof of Theorem \protect\ref{thm:infsharp}]
 First we prove that there are constants $c_{0},c_{1}$ such that $K(n;t) <
K(n;t-1)$ implies 
 \begin{equation} \label{eq:timetoo} 
 	K(\ang{ n,t} ; c_{0} t) \le K(n;t) + c_{1}.
 \end{equation}
  Indeed, we can construct a Turing machine $\cM$ simulating the 
work of $\cT$ such that if $\cT$ outputs a number $n$ in $t$
steps then $\cM$ outputs the pair $\ang{n,t}$ in $2t$ steps:
 \[
	K_{\cM}(\tupcod{ n,t} ; 2t) \le K(n,t).
 \]
  Combining this with the inequality~\eqref{eq:fastopt} gives the
inequality~\eqref{eq:timetoo}.
  We define now a recursive function $F$ as follows.
  To compute $F(x)$, the s.d.~Turing machine first tries
to find $n,t$ such that $x = \ang{ n,t}$.
 (It stops even if it did not find them.)
  Then it outputs $K(x; c_{0} t)$.
  Since $F(x)$ is the length of some description of $x$, it is an upper
bound for $K(x)$.
  On the other hand, suppose that $x$ is one of the infinitely many
integers of the form $\ang{ n,t } $ with
 \[
	K(n) = K(n;t) < K(n; t-1). 
 \]
  Then the inequality~\eqref{eq:timetoo} implies $F(x) \le K(n) + c_{1}$ while
$K(n) \lea K(x)$ is known (it is the equivalent of~\ref{thm:xKx} for $H$), so
$F(x) \le K(n) + c$ for some constant $c$. Now if $F$ would not be equal to
$H$ at infinitely many places, only close to it (closer than $c$) then we
can decrease it by some additive constant less than $c$ and modify it at
finitely many places to obtain the desired function $G$.
 \end{proof}
 



\chapter{Randomness}
\section{Uniform distribution} \label{sec:tests}

Speaking of a ``random binary string'', one generally understands 
randomness with respect to the coin-toss distribution (when the 
probability $P(x)$ of a binary sequence $x$ of length $n$ is $2^{-n}$). 
This is the only distribution considered in the present section. 

One can hope to distinguish sharply between random and nonrandom 	
\emph{infinite sequences}. 
Indeed, an infinite binary sequence whose 
elements are 0 with only finitely many exceptions, can be considered 
nonrandom without qualification. 
This section defines randomness for finite strings. 
For a sequence $x$ of length $n$, it would be
unnatural to fix some number $k$, declare $x$ nonrandom when its
first $k$ elements are 0, but permit it to be random if only the
first $k-1$ elements are 0. 
So, we will just declare some finite strings less random than others. 
For
this, we introduce a certain real-valued function $d(x) \ge 0$ measuring
the \df{deficiency of randomness} in the string $x$. 
The occurrence
of a nonrandom sequence is considered an  exceptional event,
therefore the function $d(x)$ can assume large values only with small
probability.
What is ``large'' and ``small'' depends only on our ``unit of measurement''.
We require for all $n,k$
 \begin{equation} \label{eq:testcond}
\sum\setof{ P(x): x\in \dB^{n},\ d(x) > k } < 2^{-k},
 \end{equation}
saying that there be at most $2^{n-k}$ binary sequences $x$ of length
$n$ with $d(x)>k$.
Under this condition, we even allow $d(x)$ to take the value $\infty$.

To avoid arbitrariness in the distinction between random and
nonrandom, the function $d(x)$ must be \emph{simple}. 
We assume
therefore that the set $\setof{ (n,k,x) : \len{x} = n,\; d(x)>k }$ is
recursively enumerable, or, which is the same, that the function 
 \begin{align*}
  \d : \Og\to\rint{-\infty}{\infty}   
 \end{align*}
is lower semicomputable.

 \begin{remark}
We do not assume that the set of strings $x \in \dB^{n}$ with \emph{small} 
deficiency of randomness is enumerable, because then it would be easy 
to construct a ``random'' sequence.
 \end{remark}

\begin{definition}
A lower semicomputable function $d : \dS_{2} \to \dR$ (where $\dR$ is the
set of real numbers) is called a \df{Martin-L\"of test (ML-test)}, or a
\df{probability-bounded test} if it satisfies~\eqref{eq:testcond}.

A ML-test $d_{0}(x)$ is \df{universal} if it additively dominates all
other ML-tests: for any other ML-test $d(x)$ there is a
constant $c<\infty$ such that for all $x$ we have 
$d(x) < d_{0}(x) + c$. 
\end{definition}

If a test $d_{0}$ is universal then any other test $d$ of randomness
can discover at most by a constant amount more deficiency of
randomness in any sequence $x$ than $d_{0}(x)$. 
Obviously, the
difference of any two universal ML-tests is bounded by some
constant.

The following theorem reveals a simple connnection between descriptional 
complexity and a certain randomness property.

 \begin{definition}
Let us define the following function for binary strings $x$:
 \begin{align}\label{eq:d0}
 d_{0}(x) = \len{x} - C(x \mvert\; \len{x}).
 \end{align}
 \end{definition}

\begin{theorem}[Martin-L\"of] \label{thm:MLcompl}
 The function $d_{0}(x)$ is a universal Martin-L\"of test.
 \end{theorem}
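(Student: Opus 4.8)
The plan is to verify the two halves of the claim separately: first that $d_0(x) = \len{x} - C(x \mvert \len{x})$ is itself a Martin-L\"of test, and then that it additively dominates every other ML-test.

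For the first half, I need lower semicomputability and the probability bound~\eqref{eq:testcond}. Lower semicomputability of $d_0$ is immediate: $\len{x}$ is computable and $C(x \mvert \len{x})$ is upper semicomputable by Theorem~\ref{thm:Ksemicp}, so their difference is lower semicomputable. For the probability bound, fix $n$ and $k$ and count the strings $x \in \dB^n$ with $d_0(x) > k$, i.e. $C(x \mvert n) < n - k$. By~\eqref{eq:howsharp} (from Theorem~\ref{thm:upbd}), the number of strings $x$ with $C(x \mvert n) < n-k$ is less than $2^{n-k}$; multiplying by the per-string probability $2^{-n}$ gives $\sum\setof{P(x) : x \in \dB^n,\ d_0(x) > k} < 2^{-k}$, which is exactly~\eqref{eq:testcond}. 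So $d_0$ is an ML-test.

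For the second half, let $d$ be an arbitrary ML-test. I want $d(x) \lea d_0(x)$, equivalently $C(x \mvert \len{x}) \lea \len{x} - d(x)$. The idea is to use $d$ to build an enumerable set whose sections are small, and then invoke Theorem~\ref{thm:condupbd}. Concretely, for each $n$ and each integer $k$, consider the set $V_{n,k} = \setof{x \in \dB^n : d(x) > k}$; since $d$ is lower semicomputable this is enumerable uniformly in $(n,k)$, and by the test condition $\card{V_{n,k}} < 2^{n-k}$. I will package these into a single enumerable set $E$ of pairs so that the section $E^{\tup{n,k}}$ equals $V_{n,k}$, with a fixed program $p$. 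Theorem~\ref{thm:condupbd} then yields $C(x \mvert n,k) \lea \log\card{V_{n,k}} + J(\len{p}) \lea n - k$ for every $x \in V_{n,k}$. Now given $x$ of length $n$, set $k = \lfloor d(x) \rfloor$, so that $x \in V_{n,k}$ and $k > d(x) - 1$; this gives $C(x \mvert n, k) \lea n - d(x)$. The remaining task is to remove the dependence on the second conditioning argument $k$: since $k \lea n - C(x \mvert n,k) \lea d(x)$ and $d(x) \le n$, the number $k$ is itself essentially determined by $x$ and $n$, so one can apply Theorem~\ref{thm:strange} (the ``strange'' inequality $C(y \mvert x, i - C(y\mvert x,i)) \lea C(y \mvert x,i)$, with $y = x$, $x = n$, $i = n$) — or argue directly — to conclude $C(x \mvert n) \lea n - d(x)$, i.e. $d(x) \lea d_0(x)$.

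The main obstacle is this last bookkeeping step: stripping the auxiliary argument $k$ out of the conditional complexity. The cleanest route is to notice that $k$ is recoverable from $n$ and $d_0$-type data with only an additive-constant loss, and to cite Theorem~\ref{thm:strange} in the same way it was used to finish the proof of Theorem~\ref{thm:Kmajor} earlier in the text; alternatively one can search, given $n$ and a candidate code length $\ell$, over all $k$ with $\ell \approx n-k$ simultaneously, exploiting that the sections $V_{n,k}$ shrink geometrically in $k$ so the total count of candidates stays $O(2^{\ell})$. Either way the loss is $O(1)$, which is all that is needed.
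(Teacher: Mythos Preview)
Your proof is correct and follows essentially the same route as the paper. The only difference is packaging: the paper handles the entire second half in one line by invoking Theorem~\ref{thm:Kmajor} directly (setting $F(x,y)=y-d(x)$ for $y=\len{x}$ and $\infty$ otherwise, checking~\eqref{eq:Kmajor} from the test condition, and concluding $C(x\mvert\len{x})\lea\len{x}-d(x)$), whereas you unfold that theorem's proof via Theorem~\ref{thm:condupbd} and then Theorem~\ref{thm:strange}. Your ``main obstacle'' of stripping the auxiliary argument $k$ is exactly the step that Theorem~\ref{thm:strange} is designed for (and is precisely how the paper's own proof of Theorem~\ref{thm:Kmajor} finishes), so citing Theorem~\ref{thm:Kmajor} once would save you the bookkeeping.
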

\begin{proof}
 Since $C(x \mvert y)$ is semicomputable from above, $d_{0}$ is
semicomputable from below. 
The property of semicomputability holds
for $d_{0}$ as a straightforward consequence of Theorem~\ref{thm:Ksemicp}
Therefore $d_{0}$ is a ML-test. 
We must show that it is larger (to
within an additive constant) than any other ML-test. Let $d$ be a
ML-test. 
Let us define the function $F(x,y)$ to be $y - d(x)$ for 
$y = \len{x}$, and $\infty$ otherwise. 
Then $F$ is upper semicomputable, and
satisfies~\eqref{eq:Kmajor}. 
Therefore by Theorem~\ref{thm:Kmajor}, we
have $C(x \mvert \len{x}) \lea \len{x} - d(x)$.  
 \end{proof}

Theorem~\ref{thm:MLcompl} says that under very general assumptions
about randomness, those strings $x$ are random whose descriptional
complexity is close to its maximum, $\len{x}$. 
The more ``regularities'' are
discoverable in a sequence, the less random it is. 
However, this is
true only of regularities which decrease the descriptional
complexity. 
``Laws of randomness'' are regularities whose probability
is high, for example the law of large numbers, the law of iterated
logarithm, the arcsine law, etc. 
A random sequence will satisfy \emph{all such laws}. 

 \begin{example}
The law of large numbers 
says that in most binary sequences of length $n$, the number of 0's
is close to the numer of 1's. 
We prove this law of probability theory
by constructing a ML-test $d(x)$ taking large values on sequences in
which the number of 0's is far from the number of 1's. 

Instead of requiring~\eqref{eq:testcond} we require a somewhat stronger
property of $d(x)$: for all $n$,
 \begin{equation}  \label{eq:llntestcd}
 \sum_{x\in \dB^{n}} P(x)2^{d(x)} \le 1.  
 \end{equation}
 
 \begin{sloppypar}
From this, the inequality~\eqref{eq:testcond} follows by the following
well-known inequality called the Markov Inequality which says the
following.  
Let $P$ be any probability distribution, $f$ any
nonnegative function, with expected value $E_{P}(f)=\sum_{x}P(x)f(x)$.  
For all $\lg\ge 0$ we have
 \begin{equation} \label{eq:Markov}
 \sum\setof{ P(x): f(x) > \lg E_{P}(f)} \le \lg.
 \end{equation}
    \end{sloppypar}

For any string $x\in \dS$, and natural number $i$, let $N(i \mvert x)$ 
denote the number of occurrences of $i$ in $x$. 
For a binary string $x$
of length $n$ define $p_{x} = N(1 \mvert x)/n$, and
 \begin{align*} 
   P_{x}(y)	&= p_{x}^{N(1 \mvert y)}(1-p_{x})^{N(0 \mvert y)}
 \\ d(x)	&= \log P_{x}(x) + n - \log(n+1).
 \end{align*}
  We show that $d(x)$ is a ML-test. 
It is obviously computable. 
We prove~\eqref{eq:llntestcd}. 
We have
 \begin{align*}
  \sum_{x} P(x) 2^{d(x)} &= \sum_{x} 2^{-n} 2^{n} P_{x}(x)\frac{1}{n+1}
	= \frac{1}{n+1}\sum_{x} P_{x}(x) 				
  \\ &= \frac{1}{n+1}\sum_{k=0}^{n}\binom{n}{k}(\frac{k}{n})^k
   (1-\frac{k}{n})^{n-k} < \frac{1}{n+1}\sum_{k=0}^{n} 1 = 1.
 \end{align*}
The test $d(x)$ expresses the (weak) law of large numbers in a rather 
strong version. We rewrite $d(x)$ as
 \[
 d(x) = n(1 - h(p_{x})) - \log(n+1)
 \] 
 where $h(p) = -p\log p - (1-p)\log(1-p)$. 
The entropy function
$h(p)$ achieves its maximum 1 at $p=1/2$. 
Therefore the test $d(x)$
tells us that the probability of sequences $x$ with $p_{x} < p < 1/2$
for some constant $p$ is bounded by the exponentially decreasing
quantitiy $(n+1)2^{-n(1-h(p))}$. 
We also see that since for some
constant $c$ we have
 \[
   1-h(p) > c(p-1/2)^2,
 \]
 therefore if the difference $| p_{x} - 1/2 |$ is much larger than
 \[
   \sqrt\frac{\log n}{cn}
 \]
  then the sequence $x$ is ``not random'', and hence, by 
Theorem~\ref{thm:MLcompl} its complexity is significantly smaller than $n$. 
 \end{example}

\section{Computable distributions}
 \label{sec:discrand}

Let us generalize the notion of randomness to arbitrary discrete computable
probability distributions.

\subsection{Two kinds of test}

Let $P(x)$ be a probability distribution over some discrete countable 
space $\Og$, which we will identify for simplicity with the set $\dS$ of 
finite strings of natural numbers. 
Thus, $P$ is a nonnegative function with 
 \[
 \sum_{x} P(x) = 1. 
 \]
  Later we will consider probability distributions $P$ over the space
$\dN^\dN$, and they will also be characterized by a nonnegative function
$P(x)$ over $\dS$. 
However, the above condition will be replaced by a different one. 

We assume $P$ to be computable, with some G\"odel number $e$. 
We want to define a test $d(x)$ of randomness with respect to $P$. 
It will
measure how justified is the assumption that $x$ is the outcome of an
experiment with distribution $P$. 

 \begin{definition}
A function $d:\dS\to\dR$ is an \df{integrable test}, or 
\df{expectation-bounded test} of randomness
with respect to $P$ if it is lower semicomputable and satisfies the condition
 \begin{equation} 	\label{eq:newtestcond}
\sum_{x} P(x)2^{d(x)} \le 1. 
 \end{equation}
It is \df{universal} if it dominates all other integrable tests to within an
additive constant.
 \end{definition}

(A similar terminology is used in~\cite{LiViBook97} in order to distinguish
tests satisfying condition~\eqref{eq:newtestcond} from Martin-L\"of tests.)

\begin{remark}
  We must allow $d(x)$ to take negative values, even if only, say values 
$\ge -1$, since otherwise condition~\eqref{eq:newtestcond} could only be satisfied
with $d(x)=0$ for all $x$ with $P(x)>0$.
\end{remark}

 \begin{proposition}\label{propo:ml-integrable}
Let $c = \log(2\pi^{2}/6)$.
If a function $d(x)$ satisfies~\eqref{eq:newtestcond} then it 
satisfies~\eqref{eq:testcond}.
If $d(x)$ satisfies~\eqref{eq:testcond} then $d - 2\log d - c$ 
satisfies~\eqref{eq:newtestcond}.
 \end{proposition}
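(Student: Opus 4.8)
The plan is to prove the two implications separately, using the Markov inequality in one direction and a careful summation bound in the other.

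For the first implication, suppose $d(x)$ satisfies~\eqref{eq:newtestcond}, i.e., $\sum_{x} P(x)2^{d(x)} \le 1$. I want to deduce~\eqref{eq:testcond}, that $\sum\setof{P(x) : x\in\dB^{n},\ d(x)>k} < 2^{-k}$. Apply the Markov inequality~\eqref{eq:Markov} to the nonnegative function $f(x) = 2^{d(x)}$ (restricted to $\dB^{n}$, or more precisely noting $E_{P}(f)\le 1$): the event $d(x)>k$ is the event $f(x) > 2^{k} \ge 2^{k}E_{P}(f)$, so its probability is at most $2^{-k}$. This gives~\eqref{eq:testcond} (with $\le$ rather than $<$, but the strictness is cosmetic and can be absorbed — or one notes the sum over a proper subset is strict when the full sum is $\le 1$; in any case the paper is relaxed about this). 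Also note that if $d$ is lower semicomputable then so is $d$ itself, so the semicomputability condition for being a ML-test is inherited directly.

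For the second implication, suppose $d(x)$ satisfies~\eqref{eq:testcond}, and set $d'(x) = d(x) - 2\log d(x) - c$ where $c = \log(2\pi^{2}/6)$. First, $d'$ is lower semicomputable: $d$ is lower semicomputable, $x\mapsto x - 2\log x$ is monotonically increasing (for $x$ large enough; one must handle small or negative values of $d$ by a convention, e.g. treating the correction as trivial there), so by Proposition~\ref{propo:l-sc-properties}(d) the composition is lower semicomputable, and subtracting a constant preserves this. The main work is verifying $\sum_{x} P(x)2^{d'(x)} \le 1$. Write $2^{d'(x)} = 2^{-c}\,2^{d(x)}/d(x)^{2}$. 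The strategy is to slice the sum according to the integer value $k = \lfloor d(x)\rfloor$: on the slice where $k \le d(x) < k+1$, we have $2^{d(x)}/d(x)^{2} \le 2^{k+1}/k^{2}$, and by~\eqref{eq:testcond} the total $P$-mass of that slice is at most $2^{-k}$ (strictly, $P(d(x)>k-1) < 2^{-(k-1)}$, which bounds the mass where $d(x)\ge k$). Hence the contribution of slice $k$ is at most $2^{-c}\cdot 2^{-(k-1)}\cdot 2^{k+1}/k^{2} = 2^{-c}\cdot 4/k^{2}$. Wait — I need to be careful with the constant: summing $\sum_{k\ge 1} 1/k^{2} = \pi^{2}/6$, and I want the final bound to come out to $\le 1$, which is exactly what the choice $c = \log(2\pi^{2}/6)$ (together with bookkeeping of the factor between $2^{-k}$ and the slice mass) is designed to achieve. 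So the bound becomes $2^{-c}\cdot 2 \cdot \pi^{2}/6 = 1$, modulo getting the powers of $2$ exactly right.

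The main obstacle I expect is precisely this constant-chasing: getting the slicing indices and the factor relating ``$P$-mass where $d(x) > k$'' to ``$P$-mass where $k \le d(x) < k+1$'' lined up so that the geometric factors $2^{d(x)}$ cancel the $2^{-k}$ exactly and leave only $\sum 1/k^{2}$, producing the stated $c = \log(2\pi^{2}/6)$. A secondary nuisance is the behavior of $2\log d(x)$ when $d(x)$ is small (say in $[0,1]$) or — if one allows it — negative: there the ``correction'' $-2\log d(x)$ is ill-defined or unhelpful, so one should interpret $d - 2\log d - c$ as, e.g., $\min(d, d - 2\log d) - c$ or simply restrict attention to the range $d(x) \ge 1$ (the slices with $k=0$, i.e. $d(x) < 1$, contribute at most $P$-mass $1$ times $2^{1-c}$, which must also be checked to fit under the budget, and indeed the $k=1$ term of $\sum 1/k^2$ plus this leftover is what the factor $2$ in $2\pi^2/6$ accommodates). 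None of this is deep; it is just careful accounting, and I would present it as a single displayed chain of inequalities summing over $k$.
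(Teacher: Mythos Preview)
Your approach matches the paper's: Markov's inequality for the first implication, and a direct slicing computation for the second. The paper itself says only ``immediate computation'' for the second part, so you have actually written more than the original.

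Two small points on the bookkeeping you flagged. First, the constant comes out exactly if you slice as $k < d(x) \le k+1$ for $k \ge 1$: then the mass is $\le P(d>k) \le 2^{-k}$ directly, and on that slice $2^{d(x)}/d(x)^{2} \le 2^{k+1}/k^{2}$, giving contribution $\le 2/k^{2}$; summing yields $2\pi^{2}/6 = 2^{c}$, so multiplying by $2^{-c}$ gives $1$ on the nose. Your half-open interval $[k,k+1)$ forced you through $P(d>k-1)<2^{-(k-1)}$ and an extra factor of $2$. Second, the issue with $d(x)\le 1$ that you raised is genuine: $d-2\log d$ blows up as $d\to 0^{+}$, so the statement implicitly carries a convention (take the expression to be, say, $-c$ or $-\infty$ there, or note that only the large-$d$ behaviour matters for the asymptotic equivalence the proposition is used for). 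The paper does not spell this out either; your instinct to flag it is correct.
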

 \begin{proof}
The first statement follows by Markov's Inequality~\eqref{eq:Markov}.
The second one can be checked by immediate computation.
 \end{proof}

Thus, condition~\ref{eq:newtestcond} is nearly equivalent
to condition~\eqref{eq:testcond}: to each test $d$ according to one of them, 
there is a test $d'$ asymptotically equal to $d$ satisfying the other.
But condition~\eqref{eq:newtestcond} has the advantage
of being just one inequality instead of the infinitely many ones.

\subsection{Randomness via complexity}

We want to find a universal integrable test. 
Let us introduce the function $w(x) = P(x)2^{d(x)}$. 
The above conditions on $d(x)$
imply that $w(x)$ is semicomputable from below
and satisfies $\sum_{x} w(x) \le 1$: so, it is a constructive semimeasure.
With the universal constructive semimeasure of Theorem~\ref{thm:univsem}, 
we also obtain a universal test for any computable probability distribution $P$.

\begin{definition}
For an arbitrary measure $P$ over a discrete space $\Og$, let us denote
 \begin{equation}\label{eq:d_c}
  \ol\d_{P}(x) = \log\frac{\m(x)}{P(x)} = -\log P(x) - K(x).
 \end{equation}
\end{definition}

The following theorem shows that randomness can be tested by checking how close
is the complexity $K(x)$ to its upper bound $-\log P(x)$.

\begin{theorem}\label{thm:univ-integrable-test} 
The function $\ol\d_{P}(x)=-\log P(x)-K(x)$
is a universal integrable test for any fixed computable probability
distribution $P$.
More exactly, it is lower
semicomputable, satisfies~\eqref{eq:newtestcond} and
for all integrable tests $d(x)$ for $P$, we have
 \begin{equation*}
   d(x) \lea \ol\d_{P}(x) + K(d) + K(P).
 \end{equation*}
 \end{theorem}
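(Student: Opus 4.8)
The plan is to prove the three claimed properties of $\ol\d_P(x) = -\log P(x) - K(x)$ in turn: lower semicomputability, the integral bound~\eqref{eq:newtestcond}, and universality. For lower semicomputability, I would use that $K(x)$ is upper semicomputable (the prefix-complexity analogue of Theorem~\ref{thm:Ksemicp}), so $-K(x)$ is lower semicomputable; and that $P$ is computable with G\"odel number $e$, so $-\log P(x)$ is computable wherever $P(x)>0$ and $+\infty$ where $P(x)=0$. Hence the sum $-\log P(x) - K(x)$ is lower semicomputable. Alternatively — and this is the cleaner route given the preceding paragraph — I would observe that $\m(x)/P(x) = 2^{\ol\d_P(x)}$ is lower semicomputable because $\m(x) \eqm 2^{-K(x)}$ is lower semicomputable (the universal constructive semimeasure of Theorem~\ref{thm:univsem}) and $1/P(x)$ is computable; taking $\log$ preserves lower semicomputability by Proposition~\ref{propo:l-sc-properties}.

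For~\eqref{eq:newtestcond}, set $w(x) = P(x)2^{\ol\d_P(x)} = \m(x)$. Then $\sum_x P(x)2^{\ol\d_P(x)} = \sum_x \m(x) \le 1$ because $\m$ is a constructive semimeasure (subprobability). So~\eqref{eq:newtestcond} holds, and combined with the previous paragraph $\ol\d_P$ is a bona fide integrable test.

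For universality, let $d(x)$ be any integrable test for $P$: lower semicomputable with $\sum_x P(x)2^{d(x)} \le 1$. Define $w_d(x) = P(x)2^{d(x)}$. As noted in the text before the definition, $w_d$ is lower semicomputable and $\sum_x w_d(x)\le 1$, so $w_d$ is a constructive semimeasure. By the maximality property of $\m$ (Theorem~\ref{thm:univsem}), $\m(x) \gem w_d(x)$, with the multiplicative constant controlled by the description length of $w_d$ — quantitatively $w_d(x) \lem 2^{K(w_d)}\m(x)$, and since a G\"odel number of $w_d$ can be computed from G\"odel numbers of $d$ and $P$, we get $\log w_d(x) \le \log\m(x) + K(d) + K(P) + O(1)$. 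Substituting $w_d(x) = P(x)2^{d(x)}$ and $\m(x) = 2^{-K(x)+O(1)}$ and taking logs: $\log P(x) + d(x) \le -K(x) + K(d) + K(P) + O(1)$, i.e. $d(x) \lea \ol\d_P(x) + K(d) + K(P)$, as claimed.

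The main obstacle is bookkeeping the constants in the universality step — specifically, making precise the claim that the multiplicative constant in $\m(x)\gem w_d(x)$ is $2^{K(w_d)+O(1)}$ and that $K(w_d) \lea K(d) + K(P)$. This requires that the universal semimeasure $\m$ dominate each lower semicomputable semimeasure with a constant bounded in terms of that semimeasure's index, which is exactly the refined form of Theorem~\ref{thm:univsem} (the constant equals the a~priori probability, equivalently $2^{-K(\cdot)}$, of the index); and it requires that one can effectively combine the machine enumerating $d$ from below with the machine computing $P$ to get a machine enumerating $w_d$ from below, whence $K(w_d) \lea K(d)+K(P)$ by subadditivity of $K$. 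Everything else is routine substitution, so I would state the refined domination property of $\m$ explicitly (citing Theorem~\ref{thm:univsem}) and then let the inequalities fall out.
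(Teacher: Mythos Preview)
Your proposal is correct and follows essentially the same route as the paper: define $\nu(x)=P(x)2^{d(x)}$, observe it is a constructive semimeasure computable from programs for $d$ and $P$, and invoke the quantitative domination property of $\m$ (inequality~\eqref{eq:exact-domin} together with~\eqref{eq:m(nu)}) to get $\nu(x)\lem 2^{K(d)+K(P)}\m(x)$, then take logarithms. You are slightly more explicit than the paper about the lower-semicomputability and the integral bound, but the universality argument is identical.
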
 
 \begin{proof}
Let $d(x)$ be an integrable test for $P$.
Then $\nu(x) = 2^{d(x)}P(x)$ is a constructive semimeasure, and it
has a self-delimiting program of length $\lea K(d) + K(P)$.
It follows (using the definition~\eqref{eq:m(nu)})
that $m(\nu) \lem 2^{K(d) + K(P)}$; hence inequality~\eqref{eq:exact-domin}
gives
 \[
   \nu(x) 2^{K(d) + K(P)} \lem \m(x).
 \]
Taking logarithms finishes the proof.
 \end{proof}

The simple form of our universal test suggests remarkable interpretations.  
It says that the outcome $x$ is random with respect to the distribution 
$P$, if the latter assigns to $x$ a large enough
probability---however, not in absolute terms, only relatively to the
universal semimeasure $\m(x)$. 
The relativization is essential, since
otherwise we could not distinguish between random and nonrandom
outcomes for the uniform distribution $P_{n}$ defined in 
Section~\ref{sec:tests}.

For any (not necessarily computable) distribution $P$, the
$P$-expected value of the function $\m(x)/P(x)$ is at most 1,
therefore the relation 
 \[
 	\m(x) \le k P(x)
 \] 
  holds with probability not smaller than $1-1/k$. 
If $P$ is computable then the inequality 
 \[
	\m(P) P(x) \lem \m(x)
 \]
  holds for all $x$. 
Therefore the following applications are obtained.

\begin{itemize} 

 \item If we assume $x$ to be the outcome of an experiment with some 
simple computable probability distribution $P$ then $\m(x)$ is a good
estimate of $P(x)$. 
The goodness depends on how simple $P$ is to
define and how random $x$ is with respect to $P$: how justified
is the assumption. 
Of course, we cannot compute $\m(x)$ but it is
nevertheless well defined.

 \item If we trust that a given object $x$ is random with respect to a
given distribution $P$ then we can use $P(x)$ as an estimate of $\m(x)$. 
The degree of approximation depends on the same two factors. 

 \end{itemize}

Let us illustrate the behavior of the universal semimeasure $\m(x)$
when $x$ runs over the set of natural numbers.  
We define the
semimeasures $v$ and $w$ as follows. 
Let $v(n)=c/n^2$ for an
appropriate constant $c$, let $w(n)=v(x)$ if $n=2^{2^k}$, and 0 
otherwise. 
Then $\m(n)$ dominates both $v(n)$ and $w(n)$. 
The function $\m(n)$ dominates $1/n\log^2n$, but jumps at many places
higher than that. 
Indeed, it is easy to see that $\m(n)$ converges to
0 slower than any positive computable function converging to 0: in
particular, it is not computable. 
Hence it cannot be a measure, that is $\sum_{x} \m(x) < 1$. 
We feel that we can make $\m(x)$ ``large'' whenever $x$ is ``simple''. 

We cannot compare the universal test defined in the present section
directly with the one defined in Section~\ref{sec:tests}.
There, we investigated tests for a whole family $P_{n}$ of
distributions, where $P_{n}$ is the uniform distribution on the set
$\dB^{n}$. 
That the domain of these distributions is a different one for
each $n$ is not essential since we can identify the set $\dB^{n}$ with
the set $\{2^{n},\ldots,2^{n+1}-1\}$. 
Let us rewrite the function $d_{0}(x)$ defined in~\eqref{eq:d0} as 
 \[
 	d_{0}(x) = n - C(x \mvert n)
 \]
 for $x \in \dB^{n}$, and set it $\infty$ for $x$ outside $\dB^{n}$. Now the 
similarity to the expression 
 \[
  	\ol\d_{P}(x) = -\log P(x) - \log \m(x)
 \]
  is striking because $n = \log P_{n}(x)$. 
Together with the remark
made in the previous paragraph, this observation suggests that the
value of $-\log \m(x)$ is close to the complexity $C(x)$.

\subsection{Conservation of randomness}

The idea of conservation of randomness has been expressed first by Levin,
who has published some papers culminating in~\cite{LevinRandCons84}
developing a general theory.
In this section, we address the issue in its simplest form only.
Suppose that we want to talk about the randomness of integers.
One and the same integer $x$
can be represented in decimal, binary, or in some
other notation: this way, every time a different string will represent the
same number.
Still, the form of the representation should not affect the question of
randomness of $x$ significantly.
How to express this formally?

Let $P$ be a computable measure and let $f : \dS \to \dS$ be a computable
function.
We expect that under certain conditions, the randomness of $x$ should imply
the randomness of $f(x)$---but, for which distribution?
For example, if $x$ is a binary string and $f(x)$ is the decimal notation
for the number expressed by $1x$ then we expect $f(x)$ to be random not
with respect to $P$, but with respect to the measure $f^{*}P$ that is
the image of $P$ under $f$.
This measure is defined by the formula
 \begin{equation}\label{eq:measure-image}
   (f^{*}P)(y) = P(f^{-1}(y)) = \sum_{f(x) = y} P(x).
 \end{equation}
 
 \begin{proposition}
If $f$ is a computable function and $P$ is a computable measure
then $f^{*}P$ is a computable measure.
 \end{proposition}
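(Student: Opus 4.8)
The plan is to verify the two requirements separately: that $f^*P$ is a measure (i.e.\ a nonnegative function summing to~$1$ over $\dS$), and that it is computable in the sense of the \emph{Computability} definition for functions $\dS\to\dR$, namely that there is a recursive rational-valued $g(y,n)$ with $|(f^*P)(y)-g(y,n)|<1/n$. The measure property is immediate: $(f^*P)(y)=\sum_{f(x)=y}P(x)\ge 0$, and $\sum_y (f^*P)(y)=\sum_y\sum_{f(x)=y}P(x)=\sum_x P(x)=1$, since the fibers $f^{-1}(y)$ partition $\dS$ (here we use that $f$ is total). So the real work is computability.

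The difficulty is that the defining sum $\sum_{f(x)=y}P(x)$ ranges over the potentially infinite set $f^{-1}(y)$, and although $P$ is computable we only have rational approximations to each $P(x)$, not exact values; moreover we cannot decide membership in $f^{-1}(y)$ in advance, we can only enumerate it. The key observation that rescues us is that $f^*P$ is simultaneously \emph{lower} and \emph{upper} semicomputable, and by the second bulleted fact after Definition~\ref{def:semicomputability} a function that is both is computable. Lower semicomputability is the easy direction: enumerate all $x$ with $f(x)=y$ (running $f$ on all inputs dovetailed), and for each such $x$ add in a lower-semicomputable approximation of $P(x)$ (which exists since $P$, being computable, is in particular lower semicomputable); the running sum is nondecreasing and converges to $(f^*P)(y)$, which is exactly what lower semicomputability requires.

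For upper semicomputability I would exploit that $P$ is a genuine probability measure, so the "missing mass" can be bounded. Fix $y$ and a target accuracy $1/n$. Since $\sum_x P(x)=1$ and $P$ is computable, I can compute a finite set $F\subseteq\dS$ and a stage of approximation such that the total $P$-mass outside $F$ is provably less than $1/(2n)$ (enumerate a lower bound for $\sum_{x\in F}P(x)$ over growing finite $F$ until it exceeds $1-1/(2n)$; this halts because the series converges to~$1$). Then $(f^*P)(y)$ differs by less than $1/(2n)$ from $\sum_{x\in F,\ f(x)=y}P(x)$, a \emph{finite} sum over a decidable (given $F$) index set, each term of which I can approximate to within $1/(2n|F|)$ using the computability of $P$. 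This yields a rational approximation within $1/n$ of $(f^*P)(y)$, uniformly and recursively in $y$ and $n$, establishing computability directly—so in fact one need not even route through the semicomputability lemma, though mentioning it makes the lower-bound half trivial. The main obstacle, and the only point requiring care, is precisely this tail-control argument: making explicit that the convergence $\sum_x P(x)=1$ together with the computability of $P$ lets us effectively truncate the sum; everything else is bookkeeping.
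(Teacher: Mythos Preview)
Your proof is correct and follows essentially the same idea as the paper. The paper's version is terser: it simply notes that $f^{*}P$ is lower semicomputable (immediate from the definition) and then invokes the previously established fact that a lower semicomputable probability measure is automatically computable---your tail-control argument is precisely the unpacked content of that lemma.
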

 \begin{proof}
It is sufficient to show that $f^{*}P$ is lower semicomputable, since we
have seen that a lower semicomputable measure is computable.
But, the semicomputability can be seen immediately from the definition.
 \end{proof}

Let us see that randomness is indeed preserved for the image.
Similarly to~\eqref{eq:m(nu)}, let us define for an arbitrary computable
function $f$:
 \begin{equation}\label{eq:m(f)}
   \m(f) = \sum\setof{\m(p) : \txt{program $p$ computes $f$}}.
 \end{equation}

 \begin{theorem}\label{thm:simple-rand-cons-det}
 Let $f$ be a computable function.
Between the randomness test for $P$ and
the test for the image of $P$, the following relation holds, for
all $x$:
  \begin{equation}\label{eq:simple-rand-cons-det}
  \ol\d_{f^{*} P }(f(x)) \lea \ol\d_{P}(x) + K(f) + K(P).
  \end{equation}
 \end{theorem}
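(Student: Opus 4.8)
The plan is to show that $f(x)$ is random with respect to $f^*P$ by exhibiting a lower semicomputable function on the range of $f$ that behaves like an integrable test for $f^*P$, and then invoking the universality of $\ol\d_{f^*P}$ proved in Theorem~\ref{thm:univ-integrable-test}. The natural candidate is to push forward the universal test for $P$: define $d(y) = \ol\d_P(x)$ evaluated "at the best preimage", or more precisely work at the level of semimeasures, where the argument is cleaner. Recall that $\ol\d_P(x) = \log(\m(x)/P(x))$, so $2^{\ol\d_P(x)} = \m(x)/P(x)$, and the key object is the semimeasure $\m$ itself.

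First I would define, for $y$ in the range of $f$, the quantity
 \begin{equation*}
   \nu(y) = \sum_{f(x) = y} \m(x).
 \end{equation*}
Since $\m$ is lower semicomputable and $f$ is computable, $\nu$ is lower semicomputable, and $\sum_y \nu(y) = \sum_x \m(x) \le 1$, so $\nu$ is a constructive semimeasure. By the domination property of the universal semimeasure (inequality~\eqref{eq:exact-domin}, together with a bound on $\m(\nu)$), I get $\nu(y) \cdot 2^{K(\nu)} \lem \m(y)$. The point is to bound $K(\nu)$: a program for $\nu$ can be assembled from a program for $f$ and the fixed program for $\m$, so $K(\nu) \lea K(f) + O(1)$; since we also want to allow $P$ to be arbitrary computable (matching the right-hand side of~\eqref{eq:simple-rand-cons-det}), I will simply carry the $K(P)$ term along as slack — it only weakens the inequality. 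Thus $\nu(y) \lem 2^{K(f)+K(P)} \m(y)$, i.e. $-\log\m(y) \lea -\log\nu(y) + K(f) + K(P)$.

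Next I would bound $\nu(y)$ from below in terms of $P$ and $\m$ at preimages. For the fixed $x$ in the statement, since $f(x)=y$ is one of the terms,
 \begin{equation*}
   \nu(f(x)) \ge \m(x) \ge 2^{\ol\d_P(x)} P(x) \Big/ 2^{O(1)} \cdot 2^{-?}
 \end{equation*}
— more carefully, $\m(x) \gem 2^{\ol\d_P(x)} P(x)$ by definition of $\ol\d_P$ (it is an equality up to the $\m$-vs-$K$ convention), while $(f^*P)(f(x)) = \sum_{f(x')=f(x)} P(x') \ge P(x)$. Combining: $-\log(f^*P)(f(x)) \le -\log P(x)$ and $-\log\nu(f(x)) \le -\log\m(x) = -\ol\d_P(x) - \log P(x) + O(1)$. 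Therefore
 \begin{align*}
  \ol\d_{f^*P}(f(x)) &= -\log(f^*P)(f(x)) - \log\m(f(x)) \\
   &\lea -\log P(x) + \big(-\log\nu(f(x)) + K(f) + K(P)\big) \\
   &\lea -\log P(x) - \ol\d_P(x) - \log P(x)\ \text{?}
 \end{align*}
— here I must be careful with signs; the correct bookkeeping is $-\log\m(f(x)) \lea -\log\nu(f(x)) + K(f)+K(P) \lea -\log\m(x) + K(f)+K(P) = -\ol\d_P(x) -\log P(x) + K(f)+K(P)$, and adding $-\log(f^*P)(f(x)) \le -\log P(x)$ gives $\ol\d_{f^*P}(f(x)) \lea -2\log P(x) - \ol\d_P(x) + \cdots$, which is \emph{not} what we want. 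This tells me the preimage bound must be used the other way: I should instead observe $(f^*P)(y) = \sum_{f(x)=y}P(x)$ and $\nu(y) = \sum_{f(x)=y}\m(x) \gem \sum_{f(x)=y} 2^{\ol\d_P(x)}P(x)$, but what I actually need is a pointwise relation. The clean route: $\ol\d_{f^*P}(y) = \log\frac{\m(y)}{(f^*P)(y)}$; I have $\m(y) \gem 2^{-K(f)-K(P)}\nu(y)$, so $\ol\d_{f^*P}(y) \gea \log\frac{\nu(y)}{(f^*P)(y)} - K(f) - K(P)$ — but that is a lower bound on $\ol\d_{f^*P}$, whereas~\eqref{eq:simple-rand-cons-det} is an upper bound. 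So the genuinely correct approach reverses roles: I build an \emph{integrable test} $e(x)$ for $P$ out of $\ol\d_{f^*P}\circ f$ and apply Theorem~\ref{thm:univ-integrable-test}'s universality to $e$.

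Here is the fix, which is the real plan: define $e(x) = \ol\d_{f^*P}(f(x))$. This is lower semicomputable (composition of the lower semicomputable $\ol\d_{f^*P}$ with the computable $f$, using Proposition~\ref{propo:l-sc-properties}(c)), and it satisfies the integral condition for $P$:
 \begin{equation*}
   \sum_x P(x) 2^{e(x)} = \sum_x P(x) 2^{\ol\d_{f^*P}(f(x))} = \sum_y 2^{\ol\d_{f^*P}(y)} \sum_{f(x)=y} P(x) = \sum_y (f^*P)(y) 2^{\ol\d_{f^*P}(y)} \le 1,
 \end{equation*}
the last step because $\ol\d_{f^*P}$ satisfies~\eqref{eq:newtestcond} as an integrable test for $f^*P$ (Theorem~\ref{thm:univ-integrable-test}). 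So $e$ is an integrable test for $P$, and by the universality clause of Theorem~\ref{thm:univ-integrable-test}, $e(x) \lea \ol\d_P(x) + K(e) + K(P)$. It remains to bound $K(e)$: a self-delimiting program for $e$ is obtained from one for $f$ and one for $\ol\d_{f^*P}$; the latter is built uniformly from programs for $f$ and $P$ (since $f^*P$ is computable with Gödel number effectively obtained from those of $f$ and $P$, as in the Proposition preceding Theorem~\ref{thm:simple-rand-cons-det}, and $\ol\d_{f^*P} = -\log(f^*P) - \log\m$ with $\m$ fixed). Hence $K(e) \lea K(f) + K(P)$. Substituting gives exactly $\ol\d_{f^*P}(f(x)) = e(x) \lea \ol\d_P(x) + K(f) + K(P)$, as claimed.

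**Main obstacle.** The one subtle point — the thing I'd spend the most care on — is the bound $K(e) \lea K(f) + K(P)$, i.e. that $\ol\d_{f^*P}$ has a short program computable uniformly from short programs for $f$ and $P$. This requires noting that (i) a Gödel number for the measure $f^*P$ is produced effectively from Gödel numbers for $f$ and $P$ (immediate from~\eqref{eq:measure-image} and the cited Proposition), and (ii) $\m$ is a fixed lower semicomputable semimeasure with a fixed program, so $-\log(f^*P) - \log\m$ is assembled with only $O(1)$ extra bits. Everything else — lower semicomputability of $e$, the interchange of summation, and the final substitution — is routine. I would present the argument in the order: define $e$, check it is an integrable test for $P$, bound $K(e)$, apply universality.
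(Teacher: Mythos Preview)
Your proof (after the false start you correctly abandoned) is exactly the paper's argument: define $e(x)=\ol\d_{f^{*}P}(f(x))$, verify via the change of variables $\sum_{x}P(x)2^{e(x)}=\sum_{y}(f^{*}P)(y)2^{\ol\d_{f^{*}P}(y)}=\sum_{y}\m(y)\le 1$ that it is an integrable $P$-test, bound $K(e)\lea K(f)+K(P)$, and invoke universality of $\ol\d_{P}$. The only cosmetic difference is that the paper writes the integral bound as $\sum_{y}\m(y)\le 1$ directly rather than as $\sum_{y}(f^{*}P)(y)2^{\ol\d_{f^{*}P}(y)}\le 1$, which is the same thing.
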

 \begin{proof}
Let us denote the function on the left-hand side 
of~\eqref{eq:simple-rand-cons-det} by
 \[
   d_{P}(x) = \ol\d_{f^{*}P}(f(x)) = \log\frac{\m(f(x))}{(f^{*}P)(f(x))}.
 \]
It is lower semicomputable, with the help of a program of length
$K(f) + K(P)$, by its very definition.
Let us check that it is an integrable test, so it
satisfies the inequality~\eqref{eq:newtestcond}.
We have
 \[
 \sum_{x} P(x) 2^{d_{P}(x)} 
     = \sum_{y} (f^{*}P)(y) 2^{\ol\d_{f^{*}P}(y)} = 
       \sum_{y} \m(y) \le 1.
 \]
Hence $d_{P}(x)$ is a test, and hence it is 
$\lea \ol\d_{P}(x) + K(f) + K(P)$, by the universality of the test $\ol\d_{P}(x)$.
(Strictly speaking, we must modify the proof of
Theorem~\ref{thm:univ-integrable-test} and see that a program of length $K(f) +
K(P)$ is also sufficient to define the semimeasure $P(x) d(x \mvert P)$.)
 \end{proof}

The appearance of the term $K(f)$ on the right-hand side is understandable:
using some very complex function $f$, we can certainly turn any string
into any other one, also a random string into a much less random one.
On the other hand, the term $K(P)$ appears only due to the
imperfection of our concepts.
It will disappear in the more advanced theory presented in later sections,
where we develop \emph{uniform tests}.

\begin{example}
For each string $x$ of length $n \ge 1$, let 
 \[
  P(x) = \frac{2^{-n}}{n(n+1)}.
 \] 
This distribution is uniform on strings of equal length.
Let $f(x)$ be the function that erases the first $k$ bits.
Then for $n \ge 1$ and for any string $x$ of length $n$ 
we have 
 \[
  (f^{*}P)(x) = \frac{2^{-n}}{(n+k)(n+k+1)}.
 \]
(The empty string has the rest of the weight of $f^{*}P$.)
The  new distribution is still uniform on strings of equal length,
though 
the total probability of strings of length $n$ has changed somewhat.
We certainly expect randomnes conservation here:
after erasing the first $k$ bits of
a random string of length $n+k$, we should
still get a random string of length $n$.
\end{example}

The computable
function $f$ applied to a string $x$ can be viewed as some kind of
transformation $x \mapsto f(x)$.
As we have seen, it does not make $x$ less random.
Suppose now that we introduce randomization into the transformation
process itself: for example, the machine computing $f(x)$ can also toss
some coins.
We want to say that randomness is also conserved under such more general
transformations, but first of all, how to express such a transformation
mathematically?
We will describe it by a ``matrix'':
a computable probability transition function $T(x,y) \ge 0$ with the
property that
 \[
   \sum_{y} T(x, y) = 1.
 \]
Now, the image of the distribution $P$ under this transformation can be
written as $T^{*}P$, and defined as follows:
 \[
   (T^{*}P)(y) = \sum_{x} P(x) T(x, y).
 \]
How to express now randomness conservation?
It is certainly not true that every possible outcome $y$ is as random 
with respect to $T^{*}P$ as $x$ is with respect to $P$.
We can only expect that for each $x$, the conditional probability 
(in terms of the transition $T(x, y)$) of those $y$ whose non-randomness is
larger than that of $x$, is small.
This will indeed be expressed by the corollary below.
To get there, we upperbound the $T(x, \cdot)$-expected value
of $2^{\ol\d_{T^{*}P}(y)}$.
Let us go over to exponential notation:

 \begin{definition}
Denote
 \begin{align}\label{eq:e.ol-t}
    \ol\t_{P}(x) = 2^{\ol\d_{P}(x)}=\frac{\m(x)}{P(x)}.
  \end{align}
 \end{definition}

 \begin{theorem} We have
 \begin{equation}\label{eq:simple-rand-cons-rand}
 \log \sum_{y} T(x, y) 2^{\ol\d_{T^{*} P}(y)} \lea \ol\d_{P}(x) + K(T) + K(P).
 \end{equation}
 \end{theorem}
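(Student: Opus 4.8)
The plan is to mimic the proof of Theorem~\ref{thm:simple-rand-cons-det}, replacing the deterministic image $f^{*}P$ by the stochastic image $T^{*}P$ and the pushforward along $f$ by averaging against the transition kernel $T(x,\cdot)$. Define
\[
  d_{P}(x) = \log \sum_{y} T(x, y)\,\ol\t_{T^{*}P}(y)
           = \log \sum_{y} T(x, y)\,\frac{\m(y)}{(T^{*}P)(y)}.
\]
I want to show this is (a constant plus $K(T)+K(P)$ away from) an integrable test for $P$, and then invoke universality of $\ol\d_{P}$.

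First I would check lower semicomputability of $d_{P}$. The kernel $T$ is computable, $\m$ is lower semicomputable, and $T^{*}P$ is a computable measure (same argument as the proposition preceding Theorem~\ref{thm:simple-rand-cons-det}: it is lower semicomputable, hence computable), so the summand $T(x,y)\m(y)/(T^{*}P)(y)$ is lower semicomputable in $(x,y)$, and an infinite sum of lower semicomputable nonnegative terms is lower semicomputable; finally $\log$ is lower semicomputable and monotone, so Proposition~\ref{propo:l-sc-properties} applies. The program realizing all this has length $\lea K(T)+K(P)$, since that is what it costs to name $T$ and $P$ (and $\m$ is a fixed universal object).

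Next, the integrability condition~\eqref{eq:newtestcond}. This is the computational heart, and it is a short interchange-of-summation:
\[
 \sum_{x} P(x)\,2^{d_{P}(x)}
   = \sum_{x} P(x) \sum_{y} T(x,y)\,\frac{\m(y)}{(T^{*}P)(y)}
   = \sum_{y} \frac{\m(y)}{(T^{*}P)(y)} \sum_{x} P(x) T(x,y)
   = \sum_{y} \m(y) \le 1,
\]
using $\sum_{x} P(x)T(x,y) = (T^{*}P)(y)$ and $\sum_{y}\m(y)\le 1$. So $2^{-K(T)-K(P)}\cdot(\text{the function }2^{d_{P}})$ — or more precisely, after the standard correction of Theorem~\ref{thm:univ-integrable-test}, the function $d_{P}(x) - K(T) - K(P)$ (up to an additive constant) — is an integrable test for $P$. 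Universality of $\ol\d_{P}$ then gives $d_{P}(x) \lea \ol\d_{P}(x) + K(T) + K(P)$, which is exactly~\eqref{eq:simple-rand-cons-rand}.

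**The main obstacle** is bookkeeping rather than conceptual: I need to make sure the program length for the combined object is genuinely $\lea K(T)+K(P)$ and not something larger, which requires the same caveat flagged parenthetically at the end of the proof of Theorem~\ref{thm:univ-integrable-test} — that one can build the semimeasure $P(x)\,2^{d_{P}(x)}$ from self-delimiting descriptions of $T$ and $P$ with only a constant overhead. The Fubini-type interchange of the double sum is harmless since all terms are nonnegative (Tonelli), so no convergence subtlety arises there. One degenerate case worth a word: if $(T^{*}P)(y) = 0$ then $P(x)T(x,y) = 0$ for all $x$, so that $y$ contributes nothing to either side and can be dropped from the sums.
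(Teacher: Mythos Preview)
Your proof is correct and follows essentially the same approach as the paper: define the left-hand side as a candidate test, verify lower semicomputability with program length \(\lea K(T)+K(P)\), check the integral condition \(\sum_{x}P(x)2^{d_{P}(x)}\le 1\) by interchanging the double sum and using \(\sum_{x}P(x)T(x,y)=(T^{*}P)(y)\), then invoke universality of \(\ol\d_{P}\). Your extra remark on the degenerate case \((T^{*}P)(y)=0\) is a nice touch the paper omits.
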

 \begin{proof}
Let us denote the function on the left-hand side
of~\eqref{eq:simple-rand-cons-rand} by $t_{P}(x)$.
It is lower semicomputable by its very construction, using a program of
length $\lea K(T) + K(P)$.
Let us check that it satisfies $\sum_{x} P(x) t_{P}(x) \le 1$ which,
in this notation, corresponds to inequality~\eqref{eq:newtestcond}.
We have
 \begin{align*}
   \sum_{x} P(x) t_{P}(x) &= \sum_{x} P(x) \sum_{y} T(x, y) t_{T^{*} P}(y) 
\\   &= \sum_{x} P(x) \sum_{y} T(x, y) \frac{\m(y)}{(T^{*}P)(y)}
      = \sum_{y} \m(y) \le 1.
 \end{align*}
It follows that $d_{P}(x) = \log\, t_{P}(x)$ is an integrable
test and hence $d_{P}(x) \lea \ol\d_{P}(x) + K(T) + K(P)$.
(See the remark at the end of the proof of
Theorem~\ref{thm:simple-rand-cons-det}.)
 \end{proof}

 \begin{corollary}
There is a constant $c_{0}$ 
such that for every integer $k \ge 0$, for all $x$ we have
 \[
 \sum\setof{ T(x, y) : \ol\d_{T^{*}P}(y) - \ol\d_{P}(x) > k + K(T) + K(P)} 
 \le 2^{-k + c_{0}}. 
 \]
 \end{corollary}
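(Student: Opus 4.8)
The plan is to derive the corollary directly from the theorem immediately preceding it, by a single application of Markov's Inequality in the form~\eqref{eq:Markov}. The theorem states that, writing $t_{P}(x) = \sum_{y} T(x, y) 2^{\ol\d_{T^{*}P}(y)}$, we have $\log t_{P}(x) \lea \ol\d_{P}(x) + K(T) + K(P)$; equivalently, there is a constant $c_{0}$ such that $t_{P}(x) \le 2^{c_{0}}\, \ol\t_{P}(x)\, 2^{K(T) + K(P)}$ for all $x$, where $\ol\t_{P}(x) = 2^{\ol\d_{P}(x)}$. So the $T(x,\cdot)$-expected value of $2^{\ol\d_{T^{*}P}(y)}$ is bounded by $2^{c_{0}}\,\ol\t_{P}(x)\,2^{K(T)+K(P)}$.

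Next I would apply Markov's Inequality~\eqref{eq:Markov} with $P$ taken to be the probability distribution $T(x,\cdot)$ on the outcomes $y$ (recall $\sum_{y} T(x,y) = 1$), with $f(y) = 2^{\ol\d_{T^{*}P}(y)}$, and with the threshold multiplier $\lg$ chosen so that $\lg\cdot E(f)$ equals $2^{k + K(T) + K(P)}$. Since $E(f) = t_{P}(x) \le 2^{c_{0}}\, 2^{\ol\d_{P}(x)}\, 2^{K(T)+K(P)}$, the choice
 \[
   \lg = \frac{2^{k + K(T) + K(P)}}{t_{P}(x)} \ge 2^{k - c_{0} - \ol\d_{P}(x)}
 \]
makes the event $\{\, f(y) > \lg E(f)\,\}$ contain the event $\{\, 2^{\ol\d_{T^{*}P}(y)} > 2^{k + K(T) + K(P)}\,\}$, which is exactly $\{\, \ol\d_{T^{*}P}(y) - \ol\d_{P}(x) > k + K(T) + K(P)\,\}$ after recalling $2^{\ol\d_{P}(x)} = \ol\t_{P}(x)$ and rearranging; hence Markov bounds its $T(x,\cdot)$-probability by $\lg^{-1} \le 2^{-k + c_{0} + \ol\d_{P}(x)}$. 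This is slightly off from the claimed form, so the cleanest route is to instead substitute the upper bound for $E(f)$ \emph{before} invoking Markov: set $\lg' = 2^{k}$ and apply~\eqref{eq:Markov} to get $\sum\setof{T(x,y) : f(y) > 2^{k} t_{P}(x)} \le 2^{-k}$, then enlarge the summation set using $t_{P}(x) \le 2^{c_{0}}\,\ol\t_{P}(x)\,2^{K(T)+K(P)}$ to conclude
 \[
   \sum\setof{T(x,y) : \ol\d_{T^{*}P}(y) - \ol\d_{P}(x) > k + K(T) + K(P)} \le 2^{-k + c_{0}}.
 \]

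There is no real obstacle here; the only point requiring a little care is bookkeeping with the exponents—making sure the additive constant $c_{0}$ absorbed from the $\lea$ in the theorem ends up on the right side of the final inequality and not inside the threshold, and checking that enlarging the index set of a sum of nonnegative terms only increases it. The constant $c_{0}$ in the corollary is precisely the implied constant in the $\lea$ of the preceding theorem. Everything else is the verbatim translation between the logarithmic statement of the theorem and the probabilistic (Markov) statement of the corollary.
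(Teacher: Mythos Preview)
Your proposal is correct and follows the same route as the paper: exponentiate the theorem to get an upper bound on the $T(x,\cdot)$-expectation of $2^{\ol\d_{T^{*}P}(y)}$, then apply Markov's inequality~\eqref{eq:Markov}. The paper's proof is in fact terser than yours---it simply notes that the theorem upperbounds this expectation by $\lem \ol\t_{P}(x)\,2^{K(T)+K(P)}$ and says ``Applying Markov's inequality to this function yields the desired result.''

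One small wording issue in your second approach: you say ``enlarge the summation set,'' but what you actually need (and what you are implicitly doing) is the opposite containment. From $t_{P}(x) \le 2^{c_{0}}\,\ol\t_{P}(x)\,2^{K(T)+K(P)}$ you get that the target set $\{\,\ol\d_{T^{*}P}(y)-\ol\d_{P}(x) > k+K(T)+K(P)\,\}$ is a \emph{subset} of the Markov set $\{\,f(y) > 2^{k-c_{0}} t_{P}(x)\,\}$, whose measure is $\le 2^{-k+c_{0}}$. Equivalently, your choice $\lg'=2^{k}$ gives the bound $2^{-k}$ for threshold $k+c_{0}+K(T)+K(P)$, and a reparametrisation $k\mapsto k-c_{0}$ produces the stated form. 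This is purely bookkeeping; the argument is sound.
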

 \begin{proof}
The theorem says
 \[
  \sum_{y} T(x, y) \ol\t_{T^{*} P}(y) \lem t_{P}(x)2^{K(T) + K(P)}. 
 \]
Thus, it upperbounds the expected value of 
the function  $2^{\ol\d_{T^{*}P)}(y)}$ according to the distribution $T(x, \cdot)$.
Applying Markov's inequality~\eqref{eq:Markov}
to this function yields the desired result.
 \end{proof}

\section{Infinite sequences}

These lecture notes treat the theory of randomness over continuous spaces mostly
separately, starting in Section~\ref{sec:cont-spaces}.
But the case of computable measures over infinite sequences is particularly
simple and appealing, so we give some of its results here, even if most
follow from the more general results given later.

In this section, we will fix a finite or countable \df{alphabet}
\( \Sg=\set{s_{1},s_{2},\dots} \), and consider
probability distributions over the set 
\begin{align*}
   X = \Sg^{\dN}
 \end{align*}
of infinite sequences with members in \( \Sg \).
An alternative way of speaking of this is to consider a sequence of
\df{random variables} \( X_{1},X_{2},\dots \), where \( X_{i}\in\Sg \), with a joint
distribution.
The two interesting extreme special cases are \( \Sg=\dB=\{0,1\} \), giving the set
of infinite 0-1 sequences, and \( \Sg=\dN \), giving the set sequences of natural
numbers.

\subsection{Null sets}

Our goal is here to illustrate the measure theory developed in 
Section~\ref{sec:measures}, through Subsection~\ref{subsec:integral},
on the concrete example of the set
of infinite sequences, and then to develop the theory of randomness
in it.

We will distinguish a few simple kinds of subsets of the set \( \S \) of sequences,
those for which probability can be defined especially easily.

\begin{definition}\
  \begin{bullets}
  \item 
  For string \( z\in\Sg^{*} \), we will denote by \( zX \) the set of elements of \( X \)
with prefix \( z \).
Such sets will be called \df{cylinder sets}.
 \item
A subset of \( X \) is \df{open} if it is the union of any number (not necessarily
finite) of cylinder sets.
It is called \df{closed} if its complement is open.
 \item
An open set is called \df{constructive} if it is the union of a recursively
enumerable set of cylinder sets.
\item A set is called \( G_{\dg} \) if it is the intersection of a sequence of open sets.
 It is called \( F_{\sg} \) if it is the union of a sequence of closed sets.
\item
 We say that a set \( E\sbsq X \) is \df{finitely determined} if there is an \( n \)
and an \( \cE\sbsq \Sg^{n} \) such that
\begin{align*}
  E = \cE X=\bigcup_{s\in\cE}s X.
 \end{align*}
Let \( \cF \) be the class of all finitely determined subsets of \( X \).
  \item
A class of subsets of \( X \) is called an \df{algebra} if it is closed with respect
to finite intersections and complements (and then of course, also with respect
to finite unions).
It is called a \( \sg \)-algebra (sigma-algebra) 
when it is also closed with respect to countable
intersections (and then of course, also with respect to countable unions).
\end{bullets}
\end{definition}

\begin{example}
An example open set that is not finitely determined, is the set \( E \) of all sequences that
contain a substring \( 11 \).
\end{example}

The following observations are easy to prove.

\begin{proposition}\
  \begin{alphenum}
 \item
 Every open set \( G \) can be represented as the union of a sequence of \emph{disjoint}
cylinder sets.
 \item 
The set of open sets is closed with respect to finite intersection
and arbitrarily large union.
  \item
 Each finitely determined set is both open and closed.
 \item
The class \( \cF \) of finitely determined sets forms an algebra.
 \end{alphenum}
\end{proposition}

Probability is generally defined as a nonnegative, monotonic
function (a ``measure'', see later) on some subsets of the event
space (in our case the set \( X \)) that has a certain additivity property.
We do not need immediately the complete definition of measures, but let us say
what we mean by an additive set function.

\begin{definition}
Consider a nonnegative, function \( \mu \) is defined over some subsets of \( X \).
that is also \df{monotonic}, that is \( A\sbsq B \) implies  \( \mu(A)\sbsq\mu(B) \).
We say that it is \df{additive} if,
whenever it is defined on the sets
\( E_{1},\dots,E_{n} \), and on \( E=\bigcup_{i=1}^{n}E_{i} \)
and the sets \( E_{i} \) are mutually disjoint, then we have
\( \mu(E)=\mu(E_{1})+\dots+\mu(E_{n}) \).

We say that \( \mu \) is \df{countably additive}, or \( \sg \)-additive (sigma-additive),
if in addition whenever it is defined on the sets
\( E_{1},E_{2},\dots \), and on \( E=\bigcup_{i=1}^{\infty}E_{i} \)
and the sets \( E_{i} \) are mutually disjoint, then we have
\( \mu(E)=\sum_{i=1}^{\infty}\mu(E_{i}) \).
\end{definition}

First we consider only measures defined on cylinder sets.

\begin{definition}[Measure over \( \Sg^{\dN} \)]
Let \( \mu:\Sg^{*}\to\dR_{+} \) be a function assigning a nonnegative number to
each string in \( \Sg^{*} \).
We will call \( \mu \) a \df{measure} if it satisfies the condition
 \begin{align}\label{eq:extension-sum}
   \mu(x)=\sum_{s\in\Sg}\mu(xs).
 \end{align}
We will take the liberty to also write
 \begin{align*}
   \mu(s X)=\mu(s)
 \end{align*}
for all \( s\in\Sg^{*} \), and \( \mu(\emptyset)=0 \).
We call \( \mu \) a \df{probability measure} if \( \mu(\Lg)=1 \) and correspondingly,
\( \mu(X)=1 \).

A measure is called \df{computable} if it is computable as a function
\( \mu:\Sg^{*}\to\dR_{+} \) according to Definition~\ref{def:semicomputability}.
\end{definition}

The following observation lets us extend measures.

\begin{lemma}\label{lem:cyl-measure}
  Let \( \mu \) be a measure defined as above.
If a cylinder set \( s X \) is the disjoint union \( xX=x_{1}X\cup x_{2}X\cup\dots \)
of cylinder sets, then we have \( \mu(z)=\sum_{i}\mu(x_{i}) \).
\end{lemma}
\begin{proof}
Let us call an arbitrary \( y\in\Sg^{*} \) \df{good} if 
 \begin{align*}
   \mu(zy)=\sum_{i}\mu(zyX\cap x_{i}X)
 \end{align*}
holds.
We have to show that the empty string \( \Lg \) is good.

It is easy to see that \( y \) is good if \( zyX\sbsq x_{i}X \) for some \( i \).
Also, if \( ys \) is good for all \( s\in\Sg \) then \( y \) is good.
Now assume that \( \Lg \) is not good.
Then there is also an \( s_{1}\in\Sg \) that is not good.
But then there is also an \( s_{2}\in\Sg \) such that \( s_{1}s_{2} \) is not good.
And so on, we obtain an infinite sequence \( s_{1}s_{2}\dotsm \) such that
\( s_{1}\dotsm s_{n} \) is not good for any \( n \).
But then the sequence \( z s_{1}s_{2}\dotsm \) is not contained in
\( \bigcup_{i}x_{i}X \), contrary to the assumption .
\end{proof}

\begin{corollary}
  Two disjoint union representations of the same open set
 \begin{align*}
   \bigcup_{i}x_{i}X = \bigcup_{i}y_{i}X
 \end{align*}
imply \( \sum_{i}\mu(x_{i})=\sum_{i}\mu(y_{i}) \).
\end{corollary}
\begin{proof}
  The set \( \bigcup_{i}x_{i}X \) can also be written as
 \begin{align*}
  \bigcup_{i}x_{i}X=\bigcup_{i,j} x_{i}X\cap y_{j}X.
\end{align*}
An element \( x_{i}X\cap y_{j}X \) is nonempty only if one of the strings
\( x_{i},y_{j} \) is a continuation of the other.
Calling this string \( z_{ij} \) we have \( x_{i}X\cap y_{j}X=z_{ij}X \).
Now Lemma~\ref{lem:cyl-measure} is applicable to the union
\( x_{i}X=\bigcup_{j}(x_{i}X\cap y_{j}X) \),
giving
 \begin{align*}
        \mu(x_{i}) &=\sum_{j}\mu(x_{i}X\cap y_{j}X),
\\   \sum_{i}\mu(x_{i}) &=\sum_{i,j}\mu(x_{i}X\cap y_{j}X).
\end{align*}
The right-hand side is also equal similarly to \( \sum_{j}\mu(y_{j}) \).
\end{proof}

The above corollary allows the following definition:

\begin{definition}
For an open set \( G \) given as a disjoint union of cylinder sets \( x_{i}X \)
let \( \mu(G)=\sum_{i}\mu(x_{i}) \).
For a closed set \( F=X\xcpt G \) where \( G \) is open, let
\( \mu(F)=\mu(X)-\mu(G) \).
\end{definition}

The following is easy to check.

\begin{proposition}
The measure as defined on open sets is monotonic and
countably additive.
In particular, it is countably additive 
on the algebra of finitely determined sets.
\end{proposition}

Before extending measures to a wider range of sets, consider an important
special case.
With infinite sequences, there are certain events that are not impossible, but
still have probability 0.

\begin{definition}
Given a measure \( \mu \), a set \( N\sbsq X \) is called a \df{null set} with respect
to \( \mu \) 
if there is a sequence \( G_{1},G_{2},\dots \) of open sets with the property
\( N\sbsq \bigcap_{m}G_{m} \), and \( \mu(G_{m})\le 2^{-m} \).
We will say that the sequence \( G_{m} \) \df{witnesses} the fact that \( N \) is a null set.
\end{definition}

\begin{examples}\label{example:null-set}
Let \( \Sg=\{0,1\} \), let us define the measure \( \lg \) by
\( \lg(x)=2^{-n} \) for all \( n \) and for all \( x\in\Sg^{n} \).
  \begin{enumerate}
  \item\label{i:singleton}
For every infinite sequence \( \xi\in X \), the one-element set \( \{\xi\} \) is a null set
with respect to \( \lg \).
Indeed, for each natural number \( n \), let 
\( H_{n}=\setof{\xi\in X: \xi(0)=\xi(0),\xi(1)=\xi(1),\dots,\xi(n)=\xi(n)} \).
Then \( \lg(H_{n})=2^{-n-1} \), and \( \{s\}=\bigcap_{n}H_{n} \).
\item\label{i:even-places}
For a less trivial example, 
consider the set \( E \) of those elements \( t\in X \) that are 1 in each positive even
position, that is
 \begin{align*}
   E=\setof{t\in X: t(2)=1,t(4)=1,t(6)=1\dots}.
 \end{align*}
Then \( E \) is a null set.
Indeed, for each natural number \( n \), let 
\( G_{n}=\setof{t\in X: t(2)=1,t(4)=1,\dots,t(2n)=1} \).
This helps expressing \( E \) as \( E=\bigcap_{n}G_{n} \), where \( \lg(G_{n})=2^{-n} \).  
  \end{enumerate}
\end{examples}

\begin{proposition}\label{propo:infin-null-sets}
Let \( N_{1},N_{2},\dots \) be a sequence of null sets with respect to a measure
\( \mu \).
Their union \( N=\bigcup_{i}N_{i} \) is also a null set.
\end{proposition}
\begin{proof}
  Let \( G_{i,1},G_{i,2},\dots \) be the infinite sequence witnessing the fact that
\( N_{i} \) is a null set.
Let \( H_{m}=\bigcup_{i=1}^{\infty} G_{i,m+i} \).
Then the sequence \( H_{m} \) of open sets witnesses the fact that \( N \) is a null set.
\end{proof}

We would like to extend our measure to null sets \( N \) and say \( \mu(N)=0 \).
The proposition we have just proved shows that such an extension would be
countably additive on the null sets.
But we do not need this extension for the moment, so we postpone it.
Still, given a probability measure \( P \) over \( X \), it becomes meaningful to say
that a certain property holds with probability 1.
What this means is that the set of those sequences that do not have this
property is a null set.

Following the idea of Martin-L\"of, we would like to call a sequence
nonrandom, if it is contained in some \emph{simple} null set; that is it has
some easily definable property with probability 0.
As we have seen  in part~\ref{i:singleton} of Example~\ref{example:null-set}, 
it is important to insist on simplicity, otherwise (say
with respect to the measure \( \lg \)) every sequence might be contained in a null set,
namely the one-element set consisting of itself.
But most of these sets are not defined simply at all.
An example of a simply defined null set is given is
part~\ref{i:even-places} of Example~\ref{example:null-set}.
These reflections justify the following definition, in which ``simple'' is
specified as ``constructive''.

\begin{definition}
Let \( \mu \) be a computable measure.
A set \( N\sbsq X \) is called a \df{constructive null set} if there is 
recursively enumerable set \( \Gg\sbsq\dN\times\Sg^{*} \)
with the property that denoting \( \Gg_{m}=\setof{x: \tup{m,x}\in\Gg} \)
and \( G_{m}=\bigcup_{x\in\Gg_{m}}xX \)
we have \( N\sbsq\bigcap_{m}G_{m} \), and \( \mu(G_{m})\le 2^{-m} \).
\end{definition}

In words, the difference between the definition of null sets and constructive
null sets is that the sets \( G_{m}=\bigcup_{x\in\Gg_{m}}xX \) here
are required to be constructive open, moreover, in such a way that from \( m \) one
can compute the program generating \( G_{m} \).
In even looser words, a set is a constructive null set if for any \( \eps>0 \) one
can construct effectively
a union of cylinder sets containing it, with total measure \( \le\eps \).

Now we are in a position to define random infinite sequences.

\begin{definition}[Random sequence]
An infinite sequence \( \xi \) is \df{random} with respect to a probability measure
\( P \) if and only if \( \xi \) is not contained in any constructive null set with
respect to \( P \).  
\end{definition}

It is easy to see that the set of nonrandom sequences is a null set.
Indeed, there is only a countable number or constructive null sets,
so even their union is a null set.
The following theorem strengthens this observation significantly.

\begin{theorem}\label{thm:univ-seq-test}
Let us fix a \emph{computable} probability measure \( P \).
The set of all nonrandom sequences is a constructive null set.
\end{theorem}
Thus, there is a \df{universal} constructive null set, containing all other
constructive null sets.
A sequence is random when it does not belong to this set.
\begin{proof}[Proof of Theorem~\protect\ref{thm:univ-seq-test}]
The proof uses the projection technique that has appeared several times in this
book, for example in proving the existence of a universal lower semicomputable
semimeasure. 

We know that it is possible to list all recursively enumerable subsets of the
set \( \dN\times\Sg^{*} \) into a sequence, namely that there is a recursively
enumerable set \( \Dg\sbsq\dN^{2}\times\Sg^{*} \)
with the property that for every recursively enumerable
set \( \Gg\sbsq\dN\times\Sg^{*} \) there is an \( e \) with 
\( \Gg=\setof{\tup{m,x}: \tup{e,m,x}\in\Dg} \).
We will write
 \begin{align*}
  \Dg_{e,m}&=\setof{x:\tup{e,m,x}\in\Dg},
\\ D_{e,m}&=\bigcup_{x\in\Dg_{e,m}}xX.
\end{align*}
 We transform the set \( \Dg \) into another set \( \Dg' \)
with the following property.
\begin{bullets}
\item For each \( e,m \) we have \( \sum_{\tup{e,m,x}\in\Dg'}\mu(x) \le 2^{-m+1} \).
\item If  for some \( e \) we have 
\( \sum_{\tup{e,m,x}\in\Dg}\mu(x)\le 2^{-m} \) for all \( m \) then for all \( m \) we have
\( \setof{x: \tup{e,m,x}\in\Dg'}=\setof{x: \tup{e,m,x}\in\Dg} \).
\end{bullets}
This transformation is routine, so we leave it to the reader.
By the construction of \( \Dg' \), for every constructive null set \( N \) there is an \( e \)
with
 \begin{align*}
   N\sbsq\bigcap_{m}D'_{e,m}.
 \end{align*}
Define the recursively enumerable set
 \begin{align*}
   \hat\Gg=\setof{\tup{m,x}: \exist{e} \tup{e,m+e+2,x}\in\Dg'}.
 \end{align*}
Then \( \hat G_{m}=\bigcup_{e}D'_{e,m+e+2} \).
For all \( m \) we have
 \begin{align*}
\sum_{x\in\hat\Gg_{m}}\mu(x)&=\sum_{e}\sum_{\tup{e,m+e+2,x}\in\Dg'}\mu(x)
 \le \sum_{e}2^{-m-e-1} = 2^{-m}.   
 \end{align*}
This shows that \( \hat\Gg \) defines a constructive null set.
Let \( \Gg \) be any other recursively enumerable subset of \( \dN\times\Sg^{*} \) that
defines a constructive null set.
Then there is an \( e \) such that for all \( m \) we have \( G_{m}=D'_{e,m} \).
The universality follows now from
 \begin{align*}
\bigcap_{m}G_{m}=\bigcap_{m}D'_{e,m}\sbsq \bigcap_{m}D'_{e,m+e+2} \sbsq\bigcap_{m}\hat G_{m}.   
 \end{align*}
\end{proof}

\subsection{Probability space}

Now we are ready to extend measure to a much wider range of subsets of \( X \).

\begin{definition}
 Elements of the smallest \( \sg \)-algebra \( \cA \) containing the cylinder sets of \( X \) are
called the \df{Borel} sets of \( X \).
The pair \( (X,\cA) \) is an example of a \df{measureable space}, where the Borel
sets are called the \df{measureable sets}.

A nonnegative sigma-additive function \( \mu \) over \( \cA \) is called a \df{measure}.
It is called a \df{probability measure} if \( \mu(X)=1 \).
If \( \mu \) is fixed then the triple \( (X,\cA,\mu) \) is called a \df{measure space}.
If it is a probability measure then the space is called a \df{probability space}.
\end{definition}

In the Appendix we cited a central theorem of measure theory,
Caratheodory's extension theorem.
It implies that if a measure is defined on an algebra \( \cL \) in a sigma-additive way
then it can be extended uniquely
to the \( \sg \)-algebra generated by \( \cL \), that is the
smallest \( \sg \)-algebra containing \( \cL \).
We defined a measure \( \mu \) over \( X \) as a nonnegative function
\( \mu:\Sg^{*}\to\dR_{+} \)
satisfying the equality~\eqref{eq:extension-sum}.
Then we defined \( \mu(xX)=\mu(x) \), and further
extended \( \mu \) in a \( \sg \)-additive way to all elements of the algebra \( \cF \).
Now Caratheodory's theorem allows us to extend it uniquely to all Borel sets,
and thus to define a measureable space \( (X,\cA,\mu) \).

Of course, all null sets in \( \cA \) get measure 0.

Open,  closed and measureable sets can also be defined in the set of real numbers.

\begin{definition}
  A subset \( G\sbsq\dR \) is \df{open} if it is the union of a set of open
  intervals \( \opint{a_{i}}{b_{i}} \).
It is \df{closed} if its complement is open.

The set \( \cB \) of \df{Borel sets} of \( \dB \) is defined as the smallest
\( \sg \)-algebra containing all open sets.
\end{definition}

The pair \( (\dR,\cB) \) is another example of a measureable space.

\begin{definition}[Lebesgue measure]
Consider the set left-closed intervals of the line (including intervals
of the form \( \opint{-\infty}{a} \).
Let \( \cL \) be the set of finite disjoint unions of such intervals.
This is an algebra.
We define the function \( \lg \) over \( \cL \) as follows:
\( \lg(\bigcup_{i} \lint{a_{i}}{b_{i}}) = \sum_{i} b_{i} - a_{i} \). 
It is easy to check that this is a \( \sg \)-additive measure and therefore by
Caratheodory's theorem can be extended to the set \( \cB \) of all Borel sets.
This function is called the \df{Lebesgue measure} over \( \dR \), giving us the
measureable space \( (\dR,\cB,\lg) \).
\end{definition}

Finally, we can define the notion of a measureable function over \( X \).

\begin{definition}[Measureable functions]
A function \( f:X\to \dR \) is called \df{measureable} if and only if
\( f^{-1}(E)\in\cA \) for all \( E\in\cB \).
\end{definition}

The following is easy to prove.

\begin{proposition}
Function \( f:X\to\dR \) is measureable if and only if all sets of the form
\( f^{-1}(\opint{r}{\infty})=\setof{x: f(x)>r} \) are measureable, where \( r \) is a
rational number.
\end{proposition}

\subsection{Computability}

Measureable functions are quite general; it is worth introducing some more
resticted kinds of function over the set of sequences.

\begin{definition}
  A function \( f:X\to\dR \) is \df{continuous} if and only if for every \( x\in X \), for
every \( \eps>0 \) there is a cylinder set \( C\ni x \) such that \( |f(y)-f(x)|<\eps \) for all
\( y\in C \).

  A function \( f:X\to\dR \) is \df{lower semicontinuous} if for every \( r\in\dR \) the set
\( \setof{x\in X: f(x)>r} \) is open.
 It is \df{upper semicontinuous} if \( -f \) is lower semicontinuous.
\end{definition}

The following is easy to verify.

\begin{proposition}
A function \( f:X\to\dR \) is continuous if and only if it is both upper and lower semicontinuous.
\end{proposition}

\begin{definition}
  A function \( f:X\to\dR \) is \df{computable} if for every open rational interval
\( \opint{u}{v} \) the set \( f^{-1}(\opint{u}{v}) \) is a constructive open set of
\( X \), uniformly in \( u,v \).
\end{definition}

Informally this means that if for the infinite sequence \( \xi\in X \) we have
\( u<f(\xi)<v \) then from \( u,v \) sooner or later we will find a prefix \( x \)
of \( \xi \) with the property \( u<f(xX)<v \).

\begin{definition}
A function \( f:X\to\dR \) is \df{lower semicomputable} 
if for every rational \( r \) the set \( \setof{s\in X: f(s)>r} \) 
is a constructive open set of \( X \), uniformly in \( r \).
It is \df{upper semicomputable} if \( -f \) is lower semicomputable.
\end{definition}

The following is easy to verify.

\begin{proposition}
  A function \( X\to\dR \) is computable if and only if it is both lower and upper
  semicomputable. 
\end{proposition}

\subsection{Integral}

The definition of integral over a measure space is given in the
Appendix, in Subsection~\ref{subsec:integral}.
Here, we give an exposition specialized to infinite sequences.

 \begin{definition}
A measurable function \( f : X \to \dR \) is called a \df{step function} if
its range is finite.
The set of step functions will be called \( \cE \).

Given a step function \( f \) which takes values \( x_{i} \) on sets \( A_{i} \), and a
finite measure \( \mu \), we define 
 \[
 \mu(f) = \mu f = \int f\,d\mu = \int f(x) \mu(d x) 
   = \sum_{i} x_{i} \mu(A_{i}).
 \]
 \end{definition}

Proposition~\ref{propo:Riesz-extension}, when specified to our situation here, says
the following.

 \begin{proposition}\label{propo:Riesz-extension-seqs}
The functional \( \mu \) defined above on step functions 
can be extended to the set 
\( \cE_{+} \) of monotonic limits of nonnegative elements of \( \cE \), by
continuity.
The set \( \cE_{+} \) is the set of all nonnegative measurable functions.
 \end{proposition}

Now we extend the notion of integral to a wider class of functions.

 \begin{definition}
A measurable function \( f \) is called \df{integrable} with respect to a finite
measure \( \mu \) if \( \mu |f|^{+} < \infty \) and \( \mu |f|^{-} < \infty \).
In this case, we define \( \mu f = \mu |f|^{+} - \mu |f|^{-} \).
 \end{definition}

It is easy to see that the 
mapping \( f\mapsto \mu f \) is linear when \( f \) runs through the set of
measureable functions with \( \mu |f|<\infty \).

The following is also easy to check.

\begin{proposition}
Let \( \mu \) be a computable measure. 
\begin{alphenum}
   \item
If \( f \) is computable then a program to compute 
\( \mu f \) can be found from the program to compute \( f \).
   \item
If \( f \) is lower semicomputable then a program to lower semicompute 
\( \mu f \) can be found from a program to lower semicompute \( f \).
\end{alphenum}
\end{proposition}

\subsection{Randomness tests}

We can now define randomness tests similarly to Section~\ref{sec:discrand}.

 \begin{definition}
A function \( d:X\to\dR \) is an \df{integrable test}, or \df{expectation-bounded test}
of randomness with respect to the probability measure \( P \)
if it is lower semicomputable and satisfies the condition
 \begin{equation} 	\label{eq:newtestcond-seq}
   \int 2^{d(x)}P(dx) \le 1. 
 \end{equation}
It is called a \df{Martin-L\"of test}, or \df{probability-bounded test},
if instead of the latter condition
only the weaker one is satisfied saying \( P(d(x)>m)<2^{-m} \) for each 
positive integer \( m \).

It is \df{universal} if it dominates all other integrable tests to within an
additive constant.
Universal Martin-L\"of tests are defined in the same way.
 \end{definition}

Randomness tests and constructive null sets are closely related.

\begin{theorem}  Let \( P \) be a computable probability measure
over the set \( S=\Sg^{\infty} \).
There is a correspondence between constructive null sets and
randomness tests:
  \begin{alphenum}
    \item
  For every randomness tests \( d \), the set \( N=\setof{\xi: d(\xi)=\infty} \) is a
  constructive null set.
    \item
For every constructive null set \( N \) there is a randomness test \( d \) with
\( N=\setof{\xi: d(\xi)=\infty} \).
  \end{alphenum}
\end{theorem}
\begin{proof}
Let \( d \) be a randomness test: then for each \( k \) the set \( G_{k}=\setof{\xi: d(\xi)>k} \) is a
constructive open set, and by Markov's inequality
(wich is proved in the continuous case just as in the discrete case)
 we have \( P(G_{k})\le 2^{-k} \).
The sets \( G_{k} \) witness that \( N \) is a constructive null set.

Let \( N \) be a constructive null set with \( N\sbsq \bigcap_{k=1}^{\infty} G_{k} \),
where \( G_{k} \) is a uniform sequence of constructive open sets with
\( P(G_{k})=2^{-k} \).
Without loss of generality assume that the sequence \( G_{k} \) is decreasing.
Then the function \( d(\xi)=\sup\setof{k: \xi\in G_{k}} \) is lower semicomputable
and satisfies \( P\setof{\xi:d(\xi)\ge k} \le 2^{-k} \), so it is a Martin-L\"of test.
Just as in Proposition~\ref{propo:ml-integrable}, 
it is easy to check that \( d(x)-2\log d(x) -c \) is an integrable test for some constant \( c \).
\end{proof}

Just as there is a universal constructive null set, there are universal
randomness tests.

\begin{theorem}\label{thm:univ-integrable-test-seq} 
For a computable measure \( P \), there is a universal integrable test \( \ol\d_{P}(\xi) \).
More exactly, the function \( \xi\mapsto\ol\d_{P}(\xi) \) is lower semicomputable, satisfies~\eqref{eq:newtestcond} and
for all integrable tests \( d(\xi) \) for \( P \), we have
 \begin{equation*}
   d(\xi) \lea \ol\d_{P}(\xi) + K(d) + K(P).
 \end{equation*}
 \end{theorem} 
The proof is similar to the proof Theorem~\ref{thm:univsem} 
on the existence of a universal constructive semimeasure.
It uses the projection technique and a weighted sum.

\subsection{Randomness and complexity}

We hope for a result connecting complexity with randomness similar to
Theorem~\ref{thm:univ-integrable-test}.
Somewhat surprisingly, there is indeed such a result.
This is a surprise since in an infinite sequence, arbitrarily large oscillations
of any quantity depending on the prefixes are actually to be expected.

 \begin{theorem}\label{thm:defic-charac-cpt-seqs}
Let \( X=\Sg^{\dN} \) be the set of infinite sequences.
For all computable measures \( \mu \) over \( X \), we have
 \begin{equation}\label{eq:defic-charac-seq}
 \ol\d_{\mu}(\xi) \eqa \sup_{n}\Paren{-\log \mu(\xi_{1:n}) - K(\xi_{1:n})}.
 \end{equation}
Here, the constant in \( \eqa \) depends on the computable measure \( \mu \).
 \end{theorem}

 \begin{corollary}
Let \( \lg \) be the uniform distribution over the set of infinite binary sequences.
Then
 \begin{equation}
 \ol\d_{\lg}(\xi) \eqa \sup_{n}\Paren{n - K(\xi_{1:n})}.
 \end{equation}
 \end{corollary}
In other words, an infinite binary sequence is random (with respect to the
uniform distribution) if and only if
the complexity \( K(\xi_{1:n}) \) of its initial
segments of length \( n \) never decreases below \( n \) by more than an
additive constant.
 \begin{proof}
To prove \( \lea \), define the function
 \[
 f_{\mu}(\xi) = \sum_{s} 1_{sX}(\xi) \frac{\m(s \mvert \mu)}{\mu(s)}
  = \sum_{n} \frac{\m(\xi_{1:n} \mvert \mu)}{\mu(\xi_{1:n})}
  \ge \sup_{n} \frac{\m(\xi_{1:n} \mvert \mu)}{\mu(\xi_{1:n})}.
 \]
The function \( \xi \mapsto f_{\mu}(\xi) \) 
is lower semicomputable with \( \mu^{\xi} f_{\mu}(\xi) \le 1 \), and hence 
 \[
  \ol\d_{\mu}(\xi) \gea \log f(\xi) \gea 
   \sup_{n}\Paren{-\log\mu(\xi_{1:n}) - K(\xi_{1:n} \mvert \mu)}.
 \]

The proof of \( \lea \) reproduces the proof of Theorem 5.2
of~\cite{GacsExact80}.
Let us abbreviate:
 \begin{equation*}
   1_{y}(\xi) = 1_{yX}(\xi).
 \end{equation*}
Since \( \cei{d(\xi)} \) only takes integer values and is lower semicomputable, 
there are computable
sequences \( y_{i} \in \Sg^{*} \) and \( k_{i} \in \dN \) with
 \begin{align*}
   2^{\cei{d(\xi)}} &= \sup_{i}\; 2^{k_{i}} 1_{y_{i}}(\xi) \ge 
  \frac{1}{2}\sum_{i} 2^{k_{i}} 1_{y_{i}}(\xi)
 \end{align*}
with the property that if \( i<j \) and \( 1_{y_{i}}(\xi)=1_{y_{j}}(\xi)=1 \) then
\( k_{i}<k_{j} \).
The inequality follows from the fact that for any finite sequence 
\( n_{1}<n_{2}<\dots \), \( \sum_{j} 2^{n_{j}} \le 2 \max_{j} 2^{n_{j}} \).
The function \( \gm(y)=\sum_{y_{i}=y}2^{k_{i}} \) is lower semicomputable.
With it, we have
\begin{align*}
\sum_{i} 2^{k_{i}} 1_{y_{i}}(\xi)  &= \sum_{y\in\Sg^{*}}1_{y}(\xi)\gm(y).
 \end{align*}
Since \( \mu 2^{\cei{d}} \le 2 \), we have \( \sum_{y}\mu(y)\gm(y) \le 2 \), hence
\( \mu(y)\gm(y)\lem\m(y) \), that is \( \gm(y)\lem \m(y)/\mu(y) \).
It follows that
 \[
   2^{d(\xi)} \lem \sup_{y \in \Sg^{*}}\; 1_{y}(\xi)\frac{\m(y)}{\mu(y)}
  = \sup_{n} \frac{\m(\xi_{1:n})}{\mu(\xi_{1:n})}.
 \]
Taking logarithms, we obtain the \( \lea \) part of the theorem.
 \end{proof}

\subsection{Universal semimeasure, algorithmic probability}

Universal semimeasures exist over infinite sequences just as in the discrete
space.

\begin{definition}
  A function \( \mu:\Sg^{*}\to\cR_{+} \) is a \df{semimeasure} if it satisfies
 \begin{align*}
     \mu(x)      &\ge\sum_{s\in\Sg}\mu(xs),
\\ \mu(\Lg) &\le 1.
 \end{align*}
If it is lower semicomputable we will call the semimeasure \df{constructive}.
\end{definition}
These requirements imply, just as for measures,
the following generalization to an arbitrary prefix-free
set \( A\sbsq \Sg^{*} \):
 \begin{align*}
   \sum_{x\in A}\mu(x)\le 1.
 \end{align*}

Standard technique gives the following:

\begin{theorem}[Universal semimeasure]
  There is a universal constructive semimeasure \( \mu \), that is a
constructive semimeasure with the property that for every other constructive
semimeasure \( \nu \) there is a constant \( c_{\nu}>0 \) with
\( \mu(x)\ge c_{\nu}\nu(x) \).
\end{theorem}

\begin{definition}
  Let us fix a universal constructive semimeasure over \( X \) and denote it by
\( M(x) \). 
\end{definition}

There is a graphical way to represent a universal semimeasure, via monotonic
Turing machines, which can be viewed a generalization of self-delimiting
machines. 

\begin{definition}
A Turing machine \( \cT \) is \df{monotonic} if it
has no input tape or output tape, but can ask repeatedly for input symbols and can emit
repeatedly output symbols.
The input is assumed to be an infinite binary string, but it is not assumed that
\( \cT \) will ask for all of its symbols, even in infinite time.
If the finite or 
infinite input string is \( p=p_{1}p_{2}\dotsm \), the (finite or infinite) output string is
written as \( T(p)\in\Sg^{*}\cup\Sg^{\dN} \).
\end{definition}

Just as we can generalize self-delimiting machines to obtain the notion of
monotonic machines, we can generalize algorithmic probability, defined for a
discrete space, to a version defined for infinite sequences.
Imagine the monotonic Turing machine \( \cT \) computing the function \( T(p) \)
as in the definition, and assume that its input symbols come from coin-tossing.

\begin{definition}[Algorithmic probability over infinite sequences]
For a string \( x\in\Sg^{*} \), let \( P_{T}(x) \) be the probability of \( T(\pi)\postfix x \),
when \( \pi=\pi_{1}\pi_{2}\dotsm \) is the infinite coin-tossing sequence.
\end{definition}

We can compute
 \begin{align*}
  P_{T}(x)=\sum_{T(p) \postfix x, p\text{ minimal}} 2^{-\len{p}},
 \end{align*}
where ``minimal'' means that no prefix \( p' \) of \( p \) gives \( T(p')\postfix x \).
This expression shows that \( P_{T}(x) \) is lower semicomputable.
It is also easy to check that it is a semimeasure, so we have a constructive
semimeasure.
The following theorem is obtained using a standard construction:

\begin{theorem}
Every constructive semimeasure \( \mu(x) \) can be represented as
\( \mu(x)=P_{T}(x) \) with the help of an appropriate a monotonic machine \( \cT \).
\end{theorem}

This theorem justifies the following.

\begin{definition}
From now on we will call the universal semimeasure \( M(x) \) also
the \df{algorithmic probability}.
A monotonic Turing machine \( \cT \) giving \( P_{T}(x)=M(x) \) will be called an
\df{optimal machine}.  
\end{definition}

We will see that \( -\log M(x) \) should also be considered a kind of
description complexity:

  \begin{definition}
Denote
 \begin{align*}
   KM(x) =-\log M(x).
 \end{align*}
  \end{definition}

How does \( KM(x) \) compare to \( K(x) \)?

\begin{proposition}\label{prop:KmHCompare}
We have the bounds
 \begin{align*}
 KM(x) \lea K(x) \lea KM(x) + K(\len{x}).
 \end{align*}
\end{proposition}
\begin{proof}
To prove \( KM(x)\lea K(x) \) define the lower semicomputable function
 \begin{align*}
   \mu(x) = \sup_{S}\sum_{y\in S}\m(xy),
 \end{align*}
By its definition this is a constructive semimeasure and it is 
\( \ge \m(x) \).
This implies \( M(x)\gem \m(x) \), and thus \( KM(x)\lea K(x) \).

For the other direction, define the following lower semicomputable function \( \nu \)
over \( \Sg^{*} \): 
 \begin{align*}
   \nu(x)=\m(\len{x})\cdot M(x).
 \end{align*}
To show that it is a semimeasure compute:
 \begin{align*}
   \sum_{x\in\Sg^{*}}\nu(x)=\sum_{n}\m(n)\sum_{x\in\Sg^{n}}M(x)\le\sum_{n}\m(n)\le 1.
 \end{align*}
It follows that \( \m(x)\gem\nu(x) \) and hence \( K(x) \lea KM(x)+K(\len{x}) \).  
\end{proof}

\subsection{Randomness via algorithmic probability}

Fix a computable measure \( P(x) \) over the space \( X \).
Just as in the discrete case, the universal semimeasure gives rise to a
Martin-L\"of test.

\begin{definition}
Denote
 \begin{align*}
 \d'_{P}(\xi) = \log\sup_{n}\frac{M(\xi_{1:n})}{P(\xi_{1:n})}.
 \end{align*}
\end{definition}

By Proposition~\ref{prop:KmHCompare}, this function is
is \( \gem\ol\d_{P}(\xi) \), defined in Theorem~\ref{thm:univ-integrable-test-seq}.
It can be shown that it is no longer expectation-bounded.
But the theorem below shows that it is still a Martin-L\"of
(probability-bounded) test, so it defines the same infinite random sequences.

\begin{theorem}
The function \( \d'_{P}(\xi) \) is a Martin-L\"of test.
\end{theorem}
\begin{proof}
Lower semicomputability follows from the form of definition. 
By standard methods, 
produce a list of finite strings \( y_{1},y_{2},\dots \) with: 
\begin{alphenum}
\item \( M(y_{i})/P(y_{i})>2^{m} \)
\item The cylinder sets \( y_{i}X \) cover of \( G_{m} \), and are disjoint (thus the
  set \( \{y_{1},y_{2},\dots\} \) is prefix-free).
\end{alphenum}
We have
 \begin{align*}
   P(G_{m})=\sum_{i}P(y_{i})<2^{-m}\sum_{i}M(y_{i})\le 2^{-m}.
 \end{align*}
\end{proof}

Just as in the discrete case, the universality of \( M(x) \) implies
 \begin{align*}
   KM(x) \lea -\log P(x) +K(P).
 \end{align*}
      On the other hand,
 \begin{align*}
   \sup_{n} -\log P(\xi_{1:n})-KM(\xi_{1:n})
 \end{align*}
is a randomness test.
So for random \( \xi \), the value \( KM(\xi_{1:n}) \) remains within a constant of
\( -\log P(x) \): it does not oscillate much.




\chapter{Information}

\section{Information-theoretic relations} 

In this subsection, we use some material
from~\cite{GacsTrompVitAlgorStat00}.

\subsection{The information-theoretic identity}

Information-theoretic entropy is related to complexity in both a formal and a
more substantial way.

\subsubsection{Classical entropy and classical coding}

Entropy is used in information theory in order to characterize the
compressibility of a statistical source.

 \begin{definition}
The \df{entropy} of a discrete probability distribution \( P \) is defined as 
 \[
   \cH(P) = -\sum_{x} P(x)\log P(x).
 \]
 \end{definition}
A simple argument shows that this quantity is non-negative.
It is instructive to recall a more general inequality, taken
from~\cite{CsiszarKornerBook81}, implying it:

 \begin{theorem}\label{thm:CsiszarConc}
Let \( a_{i}, b_{i} > 0 \), \( a = \sum_{i} a_{i} \), \( b = \sum_{i} b_{i} \).
Then we have
 \begin{equation}\label{eq:CsiszarConc}
  \sum_{i} a_{i} \log \frac{a_{i}}{b_{i}} \ge a \log \frac{a}{b},
 \end{equation}
with equality only if \( a_{i}/b_{i} \) is constant.
In particular, if \( \sum_{i}a_{i}=1 \) and \( \sum_{i}b_{i}\le 1 \) then we have
\( \sum_{i} a_{i} \log \frac{a_{i}}{b_{i}}\ge 0 \).
 \end{theorem}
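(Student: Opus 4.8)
The plan is to reduce the inequality to the elementary fact that \( \log \) is concave, i.e.\ that \( \ln t \le t - 1 \) for all \( t > 0 \), with equality only at \( t = 1 \). This is the standard proof of the so-called log-sum inequality, and it is the only real idea involved.

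First I would normalize. Set \( p_{i} = a_{i}/a \) and \( q_{i} = b_{i}/b \), so that \( \sum_{i} p_{i} = \sum_{i} q_{i} = 1 \), and observe that
 \[
   \sum_{i} a_{i} \log \frac{a_{i}}{b_{i}} - a \log \frac{a}{b}
     = a \sum_{i} p_{i} \log \frac{p_{i}}{q_{i}}.
 \]
Thus it suffices to prove that \( \sum_{i} p_{i} \log (p_{i}/q_{i}) \ge 0 \) whenever \( p, q \) are probability vectors, with equality only when \( p_{i} = q_{i} \) for all \( i \) (which, unwinding, is exactly the condition that \( a_{i}/b_{i} \) be constant). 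Here I would be slightly careful about terms with \( p_{i} = 0 \): such terms contribute \( 0 \) by the usual convention \( 0 \log 0 = 0 \), and may simply be dropped, so assume \( p_{i} > 0 \) for all \( i \) in what follows.

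Next I would apply the bound \( \ln t \le t - 1 \) with \( t = q_{i}/p_{i} \), after switching to natural logarithms (the base only rescales everything by a positive constant). This gives
 \[
   \sum_{i} p_{i} \ln \frac{q_{i}}{p_{i}} \le \sum_{i} p_{i}\Bigl(\frac{q_{i}}{p_{i}} - 1\Bigr)
     = \sum_{i} q_{i} - \sum_{i} p_{i} \le 1 - 1 = 0,
 \]
where the last inequality uses \( \sum_{i} q_{i} \le 1 \) in the general case and equality when \( \sum_{i} q_{i} = 1 \). Negating yields \( \sum_{i} p_{i} \ln (p_{i}/q_{i}) \ge 0 \), hence the claimed inequality after multiplying by \( a / \ln 2 > 0 \). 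For the equality case, equality in \( \ln t \le t - 1 \) forces \( q_{i}/p_{i} = 1 \) for every \( i \), i.e.\ \( p_{i} = q_{i} \), which translates back to \( a_{i}/b_{i} = a/b \) constant; and conversely that case clearly gives equality. Finally, for the ``in particular'' clause I would just take \( b_{i} \) with \( \sum_{i} b_{i} \le 1 \) and \( a_{i} \) with \( \sum_{i} a_{i} = 1 \), so \( a = 1 \), \( b \le 1 \), and the main inequality reads \( \sum_{i} a_{i} \log (a_{i}/b_{i}) \ge \log (1/b) \ge 0 \).

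There is no serious obstacle here; the only point needing a little care is the bookkeeping around zero terms and the possibility \( \sum_i b_i < 1 \) (rather than \( = 1 \)), which is precisely why the estimate \( \sum_i q_i \le 1 \) rather than \( = 1 \) is what gets used. Nothing deeper than convexity of the exponential is required.
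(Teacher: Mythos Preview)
Your argument is correct and is essentially the same approach the paper hints at: the termwise bound \(\ln t \le t-1\) is exactly the tangent-line form of the concavity of \(\log\), so this is Jensen's inequality unpacked. Two minor cosmetic points: since the hypothesis is \(a_{i},b_{i}>0\), the digression on \(p_{i}=0\) terms is unnecessary; and after your normalization \(\sum_{i} q_{i}=1\) exactly, so the remark about \(\sum_{i} q_{i}\le 1\) ``in the general case'' is not needed there---the only place the inequality \(\sum_{i} b_{i}\le 1\) enters is in the ``in particular'' clause, which you handle correctly.
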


\begin{proof} Exercise, an application of Jensen's inequality to the
concave function \( \log x \).
\end{proof}

If \( P \) is a probability distribution and \( Q(x) \ge 0 \), 
\( \sum_{x} Q(x) \le 1 \) then this inequality implies
 \[
   K(P) \le \sum_{x} P(x) \log Q(x).
 \]
We can interpret this inequality as follows.
We have seen in Subsection~\ref{subsec:prefix-codes} that if 
\( x \mapsto z_{x} \) is a prefix code, mapping the objects \( x \) into binary
strings \( z_{x} \) that are not prefixes of each other then
\( \sum_{x} 2^{-\len{z_{x}}} \le 1 \).
Thus, substituting \( Q(x) = 2^{-\len{z_{x}}} \) above we obtain Shannon's result:
 
 \begin{theorem}[Shannon]\label{thm:Shannon-coding-lb}
If \( x \mapsto z_{x} \) is a prefix code then for its expected codeword length
we have the lower bound:
 \[
   \sum_{x} P(x) \len{z_{x}} \ge K(P).
 \]
 \end{theorem}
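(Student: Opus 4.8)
The plan is to derive Shannon's lower bound directly from Theorem~\ref{thm:CsiszarConc} by a suitable substitution, exactly as the paragraph preceding the statement already suggests. First I would set \( Q(x) = 2^{-\len{z_{x}}} \). Because \( x \mapsto z_{x} \) is a prefix code, the Kraft inequality (recalled in Subsection~\ref{subsec:prefix-codes}) gives \( \sum_{x} Q(x) = \sum_{x} 2^{-\len{z_{x}}} \le 1 \), so \( Q \) is a sub-probability distribution and Theorem~\ref{thm:CsiszarConc} applies with \( a_{i} = P(x) \), \( b_{i} = Q(x) \), \( a = 1 \), \( b \le 1 \).

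Next I would unwind the resulting inequality. The ``in particular'' clause of Theorem~\ref{thm:CsiszarConc} yields \( \sum_{x} P(x)\log\frac{P(x)}{Q(x)} \ge 0 \), i.e.
 \[
   \sum_{x} P(x)\log P(x) \ge \sum_{x} P(x)\log Q(x) = -\sum_{x} P(x)\len{z_{x}}.
 \]
Rearranging and using the definition \( \cH(P) = -\sum_{x} P(x)\log P(x) \) gives \( \sum_{x} P(x)\len{z_{x}} \ge \cH(P) \), which is the claim (with \( \cH(P) \) in place of \( K(P) \); the excerpt writes \( K(P) \), but the intended quantity is the classical entropy just defined).

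There is essentially no obstacle here: the only thing to be careful about is the direction of the inequality when moving the logarithm of \( Q \) across, and the fact that terms with \( P(x) = 0 \) contribute nothing (with the usual convention \( 0\log 0 = 0 \)), so the sum ranges effectively over the support of \( P \), where \( Q(x) > 0 \) as well. If one wants the strict-equality discussion, one notes from Theorem~\ref{thm:CsiszarConc} that equality holds iff \( P(x)/Q(x) \) is constant, i.e. iff \( \len{z_{x}} = -\log P(x) \) for all \( x \) in the support — the case of an exactly optimal code. That refinement is optional; the bare inequality is immediate from the substitution.
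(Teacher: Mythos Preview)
Your proposal is correct and follows exactly the approach the paper itself indicates: the sentence immediately preceding the theorem says to substitute \( Q(x) = 2^{-\len{z_{x}}} \) (using Kraft's inequality from Subsection~\ref{subsec:prefix-codes}) into the consequence of Theorem~\ref{thm:CsiszarConc}, which is precisely what you do. You also correctly flag that the displayed \( K(P) \) is a slip for \( \cH(P) \).
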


\subsubsection{Entropy as expected complexity}
Applying Shannon's theorem~\ref{thm:Shannon-coding-lb}
to the code obtained by taking \( z_{x} \) as the
shortest self-delimiting description of \( x \), we obtain the inequality
 \[
   \sum_{x} P(x)K(x) \ge \cH(P)
 \]
On the other hand, since \( \m(x) \) is a universal semimeasure, we have
\( \m(x) \gem P(x) \): more precisely, \( K(x) \lea -\log P(x) + K(P) \), 
leading to the following theorem.

 \begin{theorem} For a computable distribution \( P \) we have
 \begin{equation}\label{eq:exp-entr-compl}
   \cH(P) \le \sum_{x} P(x)K(x) \lea \cH(P) + K(P).
 \end{equation}
 \end{theorem}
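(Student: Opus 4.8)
The plan is to establish the two inequalities separately: the left one is Shannon's source-coding bound, and the right one comes from the universality of \( \m \). For the lower bound \( \cH(P)\le\sum_x P(x)K(x) \) I would apply Theorem~\ref{thm:Shannon-coding-lb}. For each \( x \) pick a shortest self-delimiting description \( z_x \) of \( x \), so that \( \len{z_x}=K(x) \). Because the domain of the reference prefix machine is prefix-free, the chosen strings \( z_x \) are pairwise incomparable, and the assignment \( x\mapsto z_x \) is injective; hence it is a prefix code in the sense of Subsection~\ref{subsec:prefix-codes}. Feeding this code into Theorem~\ref{thm:Shannon-coding-lb} gives \( \sum_x P(x)K(x)=\sum_x P(x)\len{z_x}\ge\cH(P) \). (If \( \cH(P)=\infty \) this is vacuous; in that case the right-hand inequality proved below forces \( \sum_x P(x)K(x)=\infty \) as well, so the two bounds stay consistent.)

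For the upper bound I would take as input the pointwise estimate recorded just above the statement: since \( \m \) is universal and \( P \) is computable, \( \m \) multiplicatively dominates \( P \) with a factor comparable to \( \m(P) \), which in logarithmic form says \( K(x)\le-\log P(x)+K(P)+c \) for a constant \( c \) not depending on \( x \). Multiplying by \( P(x)\ge 0 \) and summing over all \( x \),
\[
  \sum_x P(x)K(x)\le\sum_x P(x)\bigl(-\log P(x)\bigr)+\bigl(K(P)+c\bigr)\sum_x P(x)=\cH(P)+K(P)+c,
\]
which is exactly \( \sum_x P(x)K(x)\lea\cH(P)+K(P) \).

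The one place that requires attention — and essentially the only obstacle — is verifying that the slack \( c \) in \( K(x)\lea-\log P(x)+K(P) \) is genuinely uniform in \( x \), so that \( P \)-averaging keeps it bounded rather than letting it accumulate over the infinite sum; this is immediate once one notes, as in the proof of Theorem~\ref{thm:univ-integrable-test}, that the computable measure \( P \) has a self-delimiting program of length \( \lea K(P) \), a quantity independent of \( x \), so the dominating constant hidden in \( \m(x)\gem P(x) \) is a single constant. Granting that, both halves of the inequality follow with no further computation.
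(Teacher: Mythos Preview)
Your proof is correct and follows exactly the same route as the paper: the lower bound comes from applying Shannon's coding theorem~\ref{thm:Shannon-coding-lb} to the prefix code of shortest self-delimiting descriptions, and the upper bound comes from the universality estimate \( K(x)\lea -\log P(x)+K(P) \) summed against \( P \). Your added remarks about the uniformity of the additive constant and the \( \cH(P)=\infty \) edge case are sound and make the argument slightly more explicit than the paper's terse version.
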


Note that \( \cH(P) \) is the entropy of the distribution \( P \) while \( K(P) \) is
just the description complexity of the function \( x \mapsto P(x) \), it has
nothing to do directly
with the magnitudes of the values \( P(x) \) for each \( x \) as real numbers.
These two relations give
 \[ 
K(P) \eqa \sum_{x} P(x)K(x) - K(P),
 \]
the entropy is within an additive constant equal to the
\df{expected complexity}.
Our intended interpretation of \( K(x) \) as information content of the
individual object \( x \) is thus supported by a tight quantitative
relationship to Shannon's statistical concept.

\subsubsection{Identities, inequalities}
The statistical entropy obeys meaningful identities which immediately
follow from the definition of conditional entropy. 

 \begin{definition}
Let \( X \) and \( Y \) be two discrete random variables with a joint distribution.
 The \emph{conditional entropy} of \( Y \) with respect to \( X \) is defined as 
 \[
     \cH(Y \mvert X) =
   -\sum_{x} P[X=x]\sum_y P[Y=y \mvert X=x]\log P[Y=y \mvert X=x].
 \]
The \df{joint entropy} of \( X \) and \( Y \) is defined as the entropy of
the pair \( (X,Y) \). 
 \end{definition}

The following \df{additivity property} is then verifiable by simple calculation:
 \begin{equation}\label{eq:entraddit} 
   \cH(X,Y) = \cH(X) + \cH(Y \mvert X).
 \end{equation}
  Its meaning is that the amount of information in the pair \( (X,Y) \) is
equal to the amount of information in \( X \) plus the amount of our
residual uncertainty about \( Y \) once the value of \( X \) is known.

Though intuitively expected, the following identity needs proof:
 \begin{align*}
   \cH(Y\mvert X) \le \cH(Y).
 \end{align*}
We will prove it after we define the difference below.

 \begin{definition}
The \df{information} in \( X \) about \( Y \) is defined by
 \[
 \cI(X : Y) = \cH(Y) - \cH(Y \mvert X).
 \]
\end{definition}

This is the amount by which our knowledge of \( X \) decreases our 
uncertainty about \( Y \).

The identity below comes from equation~\eqref{eq:entraddit}.
 \[
 \cI(X : Y) = \cI(Y : X) = \cH(X) + \cH(Y) - \cH(X,Y)
 \]
The meaning of the symmetry relation is 
that the \emph{quantity} of information in \( Y \) about
\( X \) is equal to the quantity of information in \( X \) about \( Y \). 
Despite its simple derivation, this fact is not very intuitive; especially
since only the quantities are equal, the actual contents are generally
not (see~\cite{GacsKornComnInf73}). 
We also call \( \cI(X : Y) \) the \df{mutual information} of \( X \) and \( Y \).
We can also write
 \begin{align}\label{eq:info-via-exp-val}
   \cI(X : Y) = \sum_{x,y}P[X=x,Y=y]\log\frac{P[X=x,Y=y]}{P[X=x]P[Y=y]}.
 \end{align}
\begin{proposition}
  The information \( \cI(X:Y) \) is nonnegative and is 0 only if \( X,Y \) are independent.
\end{proposition}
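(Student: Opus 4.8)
The plan is to derive both claims from the Csisz\'ar--K\"orner-type inequality of Theorem~\ref{thm:CsiszarConc} via the expression~\eqref{eq:info-via-exp-val} for mutual information. First I would write
 \[
   \cI(X:Y) = \sum_{x,y} P[X=x,Y=y] \log\frac{P[X=x,Y=y]}{P[X=x]P[Y=y]},
 \]
and observe that this is exactly a sum of the form \( \sum_{i} a_{i}\log(a_{i}/b_{i}) \) where the index \( i \) ranges over pairs \( (x,y) \), with \( a_{i} = P[X=x,Y=y] \) and \( b_{i} = P[X=x]P[Y=y] \). Here \( \sum_{i} a_{i} = 1 \) since the \( a_{i} \) form the joint distribution, and \( \sum_{i} b_{i} = \bigl(\sum_{x}P[X=x]\bigr)\bigl(\sum_{y}P[Y=y]\bigr) = 1 \). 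Hence the ``in particular'' clause of Theorem~\ref{thm:CsiszarConc} applies and gives \( \cI(X:Y) \ge 0 \).

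For the equality characterization, I would invoke the sharper statement of Theorem~\ref{thm:CsiszarConc}: equality holds only when \( a_{i}/b_{i} \) is constant, i.e. when there is a constant \( c \) with \( P[X=x,Y=y] = c\, P[X=x]P[Y=y] \) for all pairs \( (x,y) \) (at least all those with \( b_{i}>0 \)). Summing this identity over all \( x,y \) forces \( c = 1 \), so \( P[X=x,Y=y] = P[X=x]P[Y=y] \) for all \( x,y \); that is precisely the statement that \( X \) and \( Y \) are independent. Conversely, if \( X \) and \( Y \) are independent then every term in the sum vanishes, so \( \cI(X:Y) = 0 \); this gives the ``only if''. (One should handle the degenerate terms where \( P[X=x]=0 \) or \( P[Y=y]=0 \) by the usual convention that \( 0\log 0 = 0 \) and by noting such pairs contribute nothing and do not affect independence.)

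Finally, the promised inequality \( \cH(Y\mvert X) \le \cH(Y) \) follows immediately, since by definition \( \cI(X:Y) = \cH(Y) - \cH(Y\mvert X) \ge 0 \). I expect the only mild subtlety to be the bookkeeping around zero-probability atoms in the equality case — making sure the ``constant ratio'' conclusion of Theorem~\ref{thm:CsiszarConc} is correctly transported back to a statement about independence across the full product index set — but this is routine rather than a genuine obstacle.
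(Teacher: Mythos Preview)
Your proposal is correct and follows essentially the same route as the paper: apply Theorem~\ref{thm:CsiszarConc} to the expression~\eqref{eq:info-via-exp-val}, obtaining nonnegativity from the ``in particular'' clause and the equality case from the constant-ratio condition. The paper's proof is just the two-line version of what you wrote.
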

\begin{proof}
Apply Theorem~\ref{thm:CsiszarConc} to~\eqref{eq:info-via-exp-val}.
This is 0 only if \( P[X=x,Y=y]=P[X=x]P[Y=y] \) for all \( x,y \).
\end{proof}

\subsubsection{The algorithmic addition theorem}
For the notion of algorithmic entropy \( K(x) \) defined in 
Section~\ref{sec:codingth}, it is a serious test of fitness to ask whether it
has an additivity property analogous to the identity~\eqref{eq:entraddit}. 
Such an identity would express some deeper
relationship than the trivial identity~\eqref{eq:entraddit} since the
conditional complexity is defined using the notion of 
\emph{conditional computation}, and not by an algebraic formula involving
the unconditional definition.

A literal translation of the identity~\eqref{eq:entraddit} turns out to
be true for the function \( K(x) \) as well as for Kolmogorov's
complexity \( C(x) \) with good approximation. 
But the exact additivity property of algorithmic entropy is subtler. 

\begin{theorem}[Addition] We have
 \begin{equation}\label{eq:Haddit} 
	K(x, y) \eqa K(x) + K(y \mvert x, K(x)). 
 \end{equation}
 \end{theorem}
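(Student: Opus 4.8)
The plan is to prove the two inequalities in~\eqref{eq:Haddit} separately. The easy direction is $\lea$: we want $K(x,y) \lea K(x) + K(y\mvert x, K(x))$. The idea is to build a self-delimiting machine that, given a shortest self-delimiting program $p$ for $x$ (so $\len{p} = K(x)$), reads it to recover both $x$ and, from $\len{p}$, the number $K(x)$; then it reads a shortest self-delimiting program for $y$ given the pair $(x, K(x))$ and outputs $\tup{x,y}$. The key point that makes the concatenation work without an extra endmarker is that prefix (self-delimiting) programs can be concatenated: $p$ is self-delimiting, so the machine knows where $p$ ends, hence knows $\len{p} = K(x)$, and the remainder is then interpreted as a program with conditional information $(x, K(x))$ available. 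This gives a program for $\tup{x,y}$ of length $K(x) + K(y\mvert x, K(x)) + O(1)$.

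The hard direction is $\gea$, i.e. $K(x) + K(y\mvert x, K(x)) \lea K(x,y)$. Set $k = K(x,y)$; from a shortest program for $\tup{x,y}$ we certainly get $K(x) \lea k$ and $K(y \mvert x) \lea k$, but we need the sharper statement with $K(y \mvert x, K(x))$ on the left and no loss beyond additive constants. The standard approach here is a counting/covering argument at the level of the conditional object $(x, K(x))$. For fixed $x$ and fixed value $m = K(x)$, consider the set of all $y$ such that $K(x,y) \le k$; one enumerates, for each candidate complexity level, the pairs of that complexity, and shows that once $x$ and $m$ are known, the number of $y$ with $K(x,y) \le k$ is at most about $2^{\,k - m}$ (this is where the precise value $m = K(x)$, rather than just an upper bound, is essential — it is exactly the mechanism behind Theorem~\ref{thm:strange} / Theorem~\ref{thm:xKx} in the plain-complexity world, and the same ``knowing $K(x)$ lets you count efficiently'' phenomenon). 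Then by a conditional coding theorem of the type in Theorem~\ref{thm:condupbd} (its self-delimiting analogue), each such $y$ gets $K(y \mvert x, m) \lea k - m$, which rearranges to the desired inequality.

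The main obstacle I expect is the $\gea$ direction, specifically justifying that the relevant set $\{\, y : K(x,y) \le k \,\}$ has size $\lem 2^{k-m}$ when $m = K(x)$ — one has to be careful that the enumeration is uniform in the conditioning data $(x,m)$ and that the Kraft-inequality bookkeeping for the self-delimiting programs closes up with only an $O(1)$ loss. A clean way to package this is to define a lower-semicomputable semimeasure, in the pair-variable, built from $2^{-K(x,y)}$, and compare it against $\m(\cdot \mvert x, K(x))$ via the universality of the conditional a priori probability (Theorem~\ref{thm:univsem}-style domination), converting the whole argument into one semimeasure-domination inequality; that is likely the least error-prone route and avoids grinding through the enumeration explicitly. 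The remaining steps — verifying the semimeasure is indeed lower semicomputable uniformly in the conditions, and taking logarithms — are routine.
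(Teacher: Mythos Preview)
Your proposal is correct and follows essentially the same route as the paper: the \( \lea \) direction is exactly the concatenation-of-self-delimiting-programs argument the paper gives, and for \( \gea \) the paper does precisely the semimeasure-domination packaging you recommend, defining \( f(\tup{x,k},y)=2^{k-K(x,y)-c} \) and applying the (conditional) coding theorem. The one ingredient you leave implicit and should name explicitly is the projection lemma \( \sum_{y}2^{-K(x,y)} \le 2^{-K(x)+c} \) (the paper's Lemma~\ref{l:projmeas}), which is exactly what makes that function a semimeasure in \( y \) when \( k=K(x) \) and hence what closes the ``Kraft bookkeeping'' you allude to.
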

 \begin{proof}
  We first prove \( \lea \). 
Let \( F \) be the following s.d.~interpreter.
The machine computing \( F(p) \) tries to parse \( p \) into \( p = uv \) so that
\( T(u) \) is defined, then outputs the pair 
\( \tup{ T(u), T(v,\ang{ T(u), \len{u} } )} \). 
If \( u \) is a shortest description of \( x \) and \( v \) a
shortest description of \( y \) given the pair \( \tup{ x,K(x)} \) then the
output of \( F \) is the pair \( \tup{ x,y}  \). 

Now we prove \( \gea \). 
Let \( c \) be a constant (existing in force of 
Lemma~\ref{l:projmeas}) such that for all \( x \) we have 
\( \sum_y2^{-K(x,y)} \le 2^{-K(x)+c} \). 
Let us define the semicomputable function \( f(z,y) \) by 
 \[	
  f(\tup{ x,k} ,y) = 2^{k - K(x,y) - c}.
 \]
  Then for \( k=K(x) \) we have \( \sum_y f(z,y) \le 1 \), hence the
generalization~\eqref{eq:gencodth} of the coding theorem implies 
 \(  K(y \mvert x, K(x)) \lea K(x,y) - K(x)  \).
 \end{proof}

Recall the definition of \( x^{*} \) as the first shortest description of \( x \).

 \begin{corollary}
 We have
 \begin{equation}\label{eq:star-joint}
  K(x^{*}, y) \eqa K(x, y).
 \end{equation}
 \end{corollary}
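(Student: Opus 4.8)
The plan is to derive \( K(x^{*}, y) \eqa K(x, y) \) directly from the Addition Theorem~\eqref{eq:Haddit}, applied twice: once to the pair \( (x, y) \) and once to the pair \( (x^{*}, y) \). The key observation is that \( K(x) \) and \( K(x^{*}) \) differ by only an additive constant, and moreover the conditional complexities \( K(y \mvert x, K(x)) \) and \( K(y \mvert x^{*}, K(x^{*})) \) agree to within a constant, because \( x \) and \( K(x) \) are recoverable from \( x^{*} \) (run the optimal interpreter on \( x^{*} \) to get \( x \), and \( \len{x^{*}} = K(x) \)), and conversely \( x^{*} \) is the first shortest description of \( x \), hence computable from \( x \) together with \( K(x) \).

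First I would record the two instances of~\eqref{eq:Haddit}:
 \begin{align*}
   K(x, y) &\eqa K(x) + K(y \mvert x, K(x)), \\
   K(x^{*}, y) &\eqa K(x^{*}) + K(y \mvert x^{*}, K(x^{*})).
 \end{align*}
Then I would argue \( K(x^{*}) \eqa K(x) \): the inequality \( K(x^{*}) \lea K(x) \) holds because any shortest description of \( x \) also describes \( x^{*} \) (one more fixed program extracts the first minimal code), and \( K(x) \lea K(x^{*}) \) because \( x = U(x^{*}) \) is a recursive function of \( x^{*} \). Next I would show that the pairs \( (x, K(x)) \) and \( (x^{*}, K(x^{*})) \) are recursively interconvertible up to an additive constant in complexity: from \( x^{*} \) one computes \( x = U(x^{*}) \) and \( K(x) = \len{x^{*}} \); from \( (x, K(x)) \) one computes \( x^{*} \) by searching for the first program of length \( K(x) \) producing \( x \), and then \( K(x^{*}) \eqa K(x) \) as just shown. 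Consequently, using the basic fact (a corollary of Theorem~\ref{thm:rectransf}, as in Corollary~\ref{c:upbdrec}) that conditioning complexities are invariant under such recursive bijections of the condition, we get \( K(y \mvert x, K(x)) \eqa K(y \mvert x^{*}, K(x^{*})) \).

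Combining these two equalities with the two instances of the Addition Theorem yields \( K(x^{*}, y) \eqa K(x, y) \) immediately. The main obstacle I anticipate is being careful about the prefix-complexity versions of the auxiliary lemmas: the excerpt states Theorem~\ref{thm:rectransf} and Corollary~\ref{c:upbdrec} for plain complexity \( C \), whereas here we need the analogous statements for \( K \), which are presumably established in the coding-theory section referenced as Section~\ref{sec:codingth}. I would invoke those prefix analogues (recursive invariance of \( K \) under computable injections, and invariance of the condition) rather than re-prove them, and the only real content is checking that the translation between \( (x, K(x)) \) and \( (x^{*}, K(x^{*})) \) is genuinely effective in both directions, which it is for the reasons sketched above.
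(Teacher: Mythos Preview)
Your proof is correct and takes essentially the approach one would expect for a corollary of the Addition Theorem; the paper itself states the result without proof. One small point: in arguing that \( (x,K(x)) \) and \( (x^{*},K(x^{*})) \) are interconvertible, the step ``and then \( K(x^{*}) \eqa K(x) \)'' needs a word more—it does not make \( K(x^{*}) \) \emph{computable} from \( K(x) \), but since the difference is bounded by a universal constant it can be specified with \( O(1) \) bits, which is all you need for \( K\bigl((x^{*},K(x^{*}))\bigm| (x,K(x))\bigr)=O(1) \).

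A slightly shorter route avoids the second application of the Addition Theorem altogether: the direction \( K(x,y)\lea K(x^{*},y) \) is immediate since \( x=T(x^{*}) \); for the other direction, revisit the interpreter \( F \) in the proof of~\eqref{eq:Haddit}. Given a shortest self-delimiting code \( u \) for \( x \) and a code \( v \) for \( y \) given \( (x,K(x)) \), one has \( (x,K(x))=(T(u),\len{u}) \), so \( x^{*} \) can be recovered and \( (x^{*},y) \) output. Hence \( K(x^{*},y)\lea K(x)+K(y\mvert x,K(x))\eqa K(x,y) \). This sidesteps any discussion of \( K(x^{*}) \) or of the pair \( (x^{*},K(x^{*})) \).
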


The Addition Theorem implies 
 \[ 	K(x,y) \lea K(x) + K(y \mvert x) \lea K(x) + K(y) 
 \]
  while these relations do not hold for Kolmogorov's complexity
\( C(x) \), as pointed out in the discussion of Theorem~\ref{thm:addit}. 
For an ordinary interpreter, extra information is needed to separate
the descriptions of \( x \) and \( y \).  
The self-delimiting interpreter
accepts only a description that provides the neccesary information to
determine its end. 

The term \( K(x) \) cannot be omitted from the condition in \( K(y \mvert x, K(x)) \) 
in the identity~\eqref{eq:Haddit}. 
Indeed, it follows from
equation~\eqref{eq:xHx} that with \( y=K(x) \), we have \( K(x,y)-K(x) \eqa 0 \). 
But \( K(K(x) \mvert x) \) can be quite large, as shown in the following
theorem given here without proof.

\begin{theorem}[see~\protect\cite{GacsSymm74}]\label{thm:compl-compl}
 There is a constant \( c \) such that for all \( n \) 
there is a binary string \( x \) of length \( n \) with 
 \[  K(K(x) \mvert x) \ge \log n - \log\log n -c.
 \]
 \end{theorem}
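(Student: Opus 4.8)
The plan is to argue by contradiction. Fix $n$ and suppose that \emph{every} $x\in\dB^{n}$ satisfies $K(K(x)\mid x)<m$; the goal is to force $m\gea\log n-\log\log n$. The starting observation is that the values $K(x)$ for $x\in\dB^{n}$ all lie between $\eqa K(n)$ (attained by $0^{n}$) and $\eqa n+K(n)$ (attained by a maximally complex string), so they occupy an interval of length $\eqa n$: about $n$ possible values. The hypothesis says that each string narrows its own complexity down to fewer than $2^{m}$ explicitly listed candidates, namely the outputs of the programs of length $<m$ run on that string.

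The first step is to replace the $2^{n}$ individual ``advice programs'' by a single algorithm. For each $x$ pick $p_{x}$ with $\len{p_{x}}<m$ and $U(p_{x},x)=K(x)$; since there are fewer than $2^{m}$ strings of length $<m$ but $2^{n}$ strings $x$, by the pigeonhole principle some fixed $p^{*}$ with $\len{p^{*}}<m$ satisfies $U(p^{*},x)=K(x)$ for all $x$ in a set $T\sbsq\dB^{n}$ with $\card{T}\ge 2^{n-m}$. Thus $\phi(\cdot)=U(p^{*},\cdot)$ is a partial recursive function of description length $\lea m$ (given $n$) that agrees with $K$ on a subset of $\dB^{n}$ of density $\ge 2^{-m}$. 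The second step is to turn $\phi$ into an impossibly short description of a complex string: combine $\phi$ with the upper semicomputation of $K$ from Theorem~\ref{thm:Ksemicp} to discard every $x$ for which a program shorter than $\phi(x)$ is ever found, so that the survivors form a set on which $K(x)\ge\phi(x)$ and which still contains a density-$\gea 2^{-m}$ family of strings of complexity $\gea n$. Using the prefix analogue of the counting estimates — few strings of small $K$, and the conditional upper bound in the spirit of Theorem~\ref{thm:condupbd} — any such survivor can be singled out from $n$, from $p^{*}$, and from $O(1)$ auxiliary cardinalities (each costing an extra $J(\log n)$ bits via the prefix encoding $\ig$), by an index into a set whose size is governed by those cardinalities. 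Bounding that index by a careful choice of the complexity threshold gives a description of a string of complexity $\gea n$ that is too short unless $2^{m}\gea n/\log n$, i.e.\ $m\gea\log n-\log\log n-O(1)$; this bookkeeping is exactly the argument carried out in~\cite{GacsSymm74}.

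The hard part will be that pigeonhole produces a large set $T$ on which $\phi$ is correct but gives no handle on \emph{which} strings these are, nor on how their complexities are distributed; in particular $T$ might be concentrated near a single complexity value, in which case $\phi$ is essentially a constant and the argument must instead analyze which complexity values can be cheaply read off from the strings that realize them (some value must be shared by $\ge 2^{n}/n$ strings, since there are only $\eqa n$ levels). Reconciling these cases, and — crucially — arranging the counting so that the main term is exactly $\log n$ while the loss is only $\log\log n$, with that overhead coming entirely from the self-delimiting ($\ig$-style) encodings of level indices and cardinalities, is the delicate point of the proof. The whole argument is morally a ``one level up'' version of the Berry-paradox diagonalization used for Theorem~\ref{thm:nolb}: there a recursive lower bound on $C$ is impossible; here an \emph{$x$-uniform, $o(\log n)$-bit} way of recovering $K(x)$ for all $x\in\dB^{n}$ is impossible.
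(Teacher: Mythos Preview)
Your proposal is an outline that openly defers the hard part to the reference; what you call ``the delicate point'' is in fact the whole proof. The pigeonhole step is fine: some $p^{*}$ of length $<m$ has $U(p^{*},x)=K(x)$ on an (unknown) set $T\sbsq\dB^{n}$ of size $\ge 2^{n-m}$. But the extraction step does not go through as written. Your survivor set $\{x\in\dB^{n}:K(x)\ge\phi(x)\}$ (with $\phi=U(p^{*},\cdot)$) is only co-r.e., so to name a specific member you need either an index in a nonexistent enumeration or the exact size of its complement in $\dB^{n}$, and that number is a priori of order $2^{n}$ and costs $n$ bits to specify, not $J(\log n)$. The assertion that ``$O(1)$ auxiliary cardinalities, each of cost $J(\log n)$'' suffice is exactly the missing lemma, and nothing in the sketch supplies it. The alternative branch you raise (a level set $\{x:K(x)=k\}$ of size $\ge 2^{n}/n$) has the same defect: that set is neither r.e.\ nor co-r.e., so knowing $k$ and its cardinality still does not let you enumerate it. (A minor slip as well: counting only guarantees that $T$ contains a string of complexity $\gea n-m$, not $\gea n$.)

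The paper's proof, following~\cite{GacsSymm74}, is not a pigeonhole-and-extract argument. It is a direct construction: one builds a Kraft--Chaitin request set that drives $K(x)$ for a chosen string through a sequence of values, moving the target each time one of the $<2^{m}$ programs $p$ produces $U(p,x)$ equal to the current target. The bound $2^{m}\lea n/\log n$ emerges from how many such moves the Kraft budget can afford across the $\eqa n$ available complexity levels; the $\log\log n$ loss is the overhead of self-delimiting the stage indices inside that construction. So your ``one-level-up Berry paradox'' intuition is sound, but the mechanism that realises it is a game against the $2^{m}$ guessers, not a counting argument over an uncontrolled co-r.e.\ set.
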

We will prove this theorem in Section~\ref{sec:compl-compl}.
For the study of information relations, let us introduce yet another
notation.

 \begin{definition}
For any functions \( f,g,h \) with values in \( \dS \), let 
 \[  f \preceq g \bmod h
 \] 
  mean that there is a constant \( c \) such that \( K(f(x) \mvert g(x),h(x))
\le c \) for all \( x \). 
We will omit \( \bmod h \) if \( h \) is not in the condition. 
The sign \( \asymp \) will mean inequality in both directions.
 \end{definition}
  
The meaning of \( x \asymp y \) is that the objects \( x \) and \( y \) hold
essentially the same information. 
Theorem~\ref{thm:rectransf}
implies that if \( f\asymp g \) then we can replace \( f \) by \( g \) wherever
it occurs as an argument of the functions \( H \) or \( K \).  
As an illustration of this definition, let us prove the relation
 \[ 	x^{*} \asymp \ang{ x, K(x) } 
 \]
  which implies \( K(y \mvert x, K(x)) \eqa K(y \mvert x^{*}) \), and hence 
 \begin{equation} 	K(x,y) - K(x) \eqa K(y \mvert x^{*}).
\label{eq:Haddit.alt} 
 \end{equation}
  The direction \( \preceq \) holds because we can compute \( x=T(x^{*}) \)
and \( K(x)=\len{x^{*}} \) from \( x^{*} \). 
On the other hand, if we have \( x \)
and \( K(x) \) then we can enumerate the set \( E(x) \) of all shortest
descriptions of \( x \). 
By the corollary of Theorem~\ref{thm:Hstat}, the
number of elements of this set is bounded by a constant. 
The estimate~\eqref{eq:condHupbd} implies \( K(x^{*} \mvert x,K(x)) \eqa 0 \). 

Though the string \( x^{*} \) and the pair \( \tup{ x, K(x) } \) contain the
same information, they are not equivalent under some stronger
criteria on the decoder. 
To obtain \( x \) from the shortest description
may take extremely long time. 
But even if this time \( t \) is
acceptable, to find any description of length \( K(x) \) for \( x \) might
require about \( t2^{K(x)} \) steps of search. 

The Addition Theorem can be made formally analogous to its 
information-theoretic couterpart using Chaitin's notational trick. 

 \begin{definition}
Let \( K^{*}(y \mvert x) = K(y \mvert x^{*}) \).
 \end{definition}

Of course, \( K^{*}(x) = K(x) \). 
Now we can formulate the relation~\eqref{eq:Haddit.alt} as
 \[	K^{*}(x, y) \eqa K^{*}(x) + K^{*}(y \mvert x).
 \]
  However, we prefer to use the function \( K(y \mvert x) \) since the
function \( K^{*}(y \mvert x) \) is not semicomputable. 
Indeed, if there was
a function \( h(x,y) \) upper semicomputable in \( \ang{ x,y} \) such that
\( K(x,y) - K(x) \eqa h(x,y) \) then by the argument of the proof of the 
Addition Theorem, we would have \( K(y \mvert x) \lea h(x,y) \) which is not
true, according to remark after the proof of the Addition Theorem.

\begin{definition}
We define the algorithmic mutual information with the same formula as classical
information: 
 \[
   I(x : y) = K(y) - K(y \mvert x).
 \]
\end{definition}

Since the quantity \(  I(x : y) \) is not quite equal to 
\( K(x) - K(x \mvert y) \) (in fact, Levin showed (see~\cite{GacsSymm74}) that
they are not even asymptotically equal), we prefer to define information also by a 
symmetrical formula.

 \begin{definition}\label{def:I*}
The mutual information of two objects \( x \) and \( y \) is 
 \[
	I^{*}(x : y) = K(x) + K(y) - K(x, y).
 \]
 \end{definition}

The Addition Theorem implies the identity
 \[
 	I^{*}(x : y) \eqa I(x^{*} : y) \eqa I(y^{*} : x).
 \]
Let us show that classical information is indeed the expected value of
algorithmic information, for a computable probability distribution \( p \):

\begin{lemma}
Given a computable joint probability mass distribution \( p(x,y) \) over \( (x,y) \)
we have 
\begin{align}\label{eq:eqamipmi}
\cI(X : Y) - K(p) & \lea  \sum_{x} \sum_y p(x,y) I^{*}(x : y) 
\\& \lea \cI(X : Y) + 2 K(p) , 
\nonumber
\end{align}
where \( K(p) \) is the length of the shortest prefix-free program that computes 
\( p(x,y) \) from input \( (x,y) \).
\end{lemma}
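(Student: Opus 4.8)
The plan is to expand the classical mutual information using~\eqref{eq:info-via-exp-val} and compare it termwise with the expected algorithmic information $\sum_{x,y} p(x,y) I^{*}(x:y)$, where $I^{*}(x:y) = K(x) + K(y) - K(x,y)$. The natural bridge is the relation between algorithmic complexity and a computable distribution: for the computable marginals $p_{X}(x) = \sum_{y} p(x,y)$, $p_{Y}(y) = \sum_{x} p(x,y)$, and the joint $p(x,y)$ itself, Theorem~\ref{thm:univ-integrable-test} (in the form $K(x) \lea -\log p_{X}(x) + K(p_{X})$ and the matching lower bound from the coding-theorem side, i.e.\ $-\log p_{X}(x) \lea K(x) + \ol\d_{p_{X}}(x)$) pins down each of $K(x)$, $K(y)$, $K(x,y)$ up to the deficiency-of-randomness terms $\ol\d$ and description costs $K(p_{X}), K(p_{Y}), K(p)$. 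First I would note that $p_{X}, p_{Y}$ are computable from $p$ with $K(p_{X}), K(p_{Y}) \lea K(p) + O(1)$.

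Next I would substitute these estimates. Writing $\ol\d_{p}(x,y) = -\log p(x,y) - K(x,y)$ and similarly for the marginals, one gets
\[
  I^{*}(x:y) = \log\frac{p(x,y)}{p_{X}(x)p_{Y}(y)} + \ol\d_{p}(x,y) - \ol\d_{p_{X}}(x) - \ol\d_{p_{Y}}(y) + O(K(p)).
\]
Taking the $p$-expectation, the first term integrates exactly to $\cI(X:Y)$ by~\eqref{eq:info-via-exp-val}. So everything reduces to bounding the $p$-expected values of the three deficiency terms. Each $\ol\d$ is an integrable test for its distribution, so by~\eqref{eq:newtestcond} (Theorem~\ref{thm:univ-integrable-test}) we have $\sum p(x,y) 2^{\ol\d_{p}(x,y)} \le 1$, and since $p$-marginalizing gives $\sum_{x} p_{X}(x) 2^{\ol\d_{p_{X}}(x)} \le 1$ as well, Jensen's inequality (concavity of $\log$, as in Theorem~\ref{thm:CsiszarConc}) yields $\sum p(x,y)\ol\d_{p}(x,y) \le 0$ and likewise $\le 0$ for the marginal deficiencies. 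That gives the upper bound $\sum p(x,y) I^{*}(x:y) \lea \cI(X:Y) + O(K(p))$. For the lower bound, the deficiency terms could be negative, so I cannot simply drop them; instead I would use that $\ol\d_{p_{X}}(x) \gea -K(p_{X}) \gea -K(p)$ pointwise (since $\m(x) \gem p_{X}(p_{X})p_{X}(x)$ — more precisely $K(x) \lea -\log p_{X}(x) + K(p_{X})$), which bounds $\sum p(x,y)\ol\d_{p_{X}}(x)$ and $\sum p(x,y)\ol\d_{p_{Y}}(y)$ from below by $-O(K(p))$, while $\sum p(x,y)\ol\d_{p}(x,y) \ge -O(K(p))$ similarly. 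Collecting, one arrives at $\sum p(x,y) I^{*}(x:y) \gea \cI(X:Y) - O(K(p))$.

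The main obstacle, and the place I would be most careful, is accounting honestly for the additive constants: getting the coefficient of $K(p)$ right (the statement claims $K(p)$ on the left and $2K(p)$ on the right, so the bookkeeping is not symmetric and one must not be cavalier about which direction absorbs which description cost), and making sure that the pointwise lower bounds $\ol\d \gea -O(K(p))$ are justified via the version of Theorem~\ref{thm:univ-integrable-test} that tracks $K(P)$ explicitly rather than merely a hidden constant. A secondary technical point is that $\cI(X:Y)$ and the expectation of the first term agree \emph{exactly}, not just up to a constant, so that identity must be invoked cleanly before any $\lea$ slack is introduced; everything else is then routine estimation.
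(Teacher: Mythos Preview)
Your approach is essentially the same as the paper's, just phrased in terms of deficiencies of randomness rather than the expected-complexity bounds~\eqref{eq:exp-entr-compl}. Note that your identity
\[
  I^{*}(x:y) = \log\frac{p(x,y)}{p_{X}(x)p_{Y}(y)} + \ol\d_{p}(x,y) - \ol\d_{p_{X}}(x) - \ol\d_{p_{Y}}(y)
\]
is \emph{exact} (no \(+O(K(p))\) term), and the bound \(E_{q}[\ol\d_{q}] \in [-K(q)+O(1),\,0]\) is literally a restatement of \(\cH(q) \le \sum_{x} q(x) K(x) \lea \cH(q) + K(q)\), which is exactly what the paper invokes for each of \(p_{X},p_{Y},p\).

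However, your bookkeeping of which direction uses which bound is wrong. The marginal deficiency terms enter the identity with a \emph{minus} sign, so you cannot use Jensen (giving \(\le 0\)) on all three terms and deduce an upper bound: subtracting a nonpositive quantity gives a nonnegative contribution. The correct pairing is mixed. For the upper bound \(\lea \cI + 2K(p)\): use Jensen on the joint term (\(E_{p}[\ol\d_{p}] \le 0\)) and the pointwise bound \(\ol\d_{p_{X}} \gea -K(p_{X})\), \(\ol\d_{p_{Y}} \gea -K(p_{Y})\) on the marginals, yielding \(+K(p_{X})+K(p_{Y}) \lea 2K(p)\). For the lower bound \(\gea \cI - K(p)\): use the pointwise bound \(\ol\d_{p} \gea -K(p)\) on the joint and Jensen (\(\le 0\)) on the two marginals, yielding \(-K(p)\). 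This is precisely why the coefficients \(1\) and \(2\) in the statement are asymmetric, a point you correctly flagged as the place to be careful --- and then got backwards.
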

\begin{proof}
We have
\begin{align*}
\sum_{x,y} p(x,y) I^{*}(x : y) &\eqa 
\sum_{x,y} p(x, y)  [K(x) + K(y) - K(x, y)].
\end{align*}
 Define \( \sum_y p(x,y) = p_{1}(x) \) and \( \sum_{x} p(x,y) = p_{2}(y) \)
to obtain
\begin{align*}
\sum_{x,y} p(x,y) I(x:y) &\eqa
 \sum_{x} p_{1} (x) K(x) + \sum_y p_{2} (y) K(y) - \sum_{x,y} p(x,y) K(x, y).
\end{align*}
The distributions \( p_{i} \) (\( i=1,2 \)) are computable.
We have seen in~\eqref{eq:exp-entr-compl} that
\( \cH(q) \lea \sum_{x} q(x) K(x) \lea \cH(q) + K(q) \).

Hence, \( \cH(p_{i}) \lea \sum_{x} p_{i} (x) K(x) \lea \cH(p_{i}) + K(p_{i}) \).
(\( i=1,2 \)), and \( \cH(p) \lea \sum_{x,y} p(x,y) K(x,y) \lea \cH(p) + K(p) \).
On the other hand, the probabilistic mutual information
is expressed in the entropies by \( \cI(X : Y) = \cH(p_{1}) + \cH(p_{2}) - \cH(p) \).
By construction of the \( q_{i} \)'s above,
we have \( K(p_{1}), K(p_{2}) \lea K(p) \). 
Since the complexities are positive, substitution establishes the lemma.
\end{proof}

 \begin{remark}\label{rem:inf-def-rand}
The information \( I^{*}(x : y) \) can be written as
 \[
   I^{*}(x : y) \eqa \log \frac{\m(x, y)}{\m(x)\m(y)}.
 \]
Formally, \( I^{*}(x : y) \) looks like 
\( \ol\d_{\m\times\m}((x, y) \mvert \m \times \m) \) with the function 
\( \ol\d_{\cdot}(\cdot) \) introduced in~\eqref{eq:d_c}.
Thus, it looks like \( I^{*}(x : y) \) measures the deficiency of randomness
with respect to the distribution \( \m \times \m \).
The latter distribution expresses our ``hypothesis'' that \( x,y \) are
``independent'' from each other.
There is a serious technical problem with this interpretation: the function
\( \ol\d_{P}(x) \) was only defined for computable measures \( P \).
Though the definition can be extended, it is not clear at all that the
expression \( \ol\d_{P}(x) \) will also play the role of universal test
for arbitrary non-computable distributions.
Levin's theory culminating in~\cite{LevinRandCons84} develops this idea,
and we return to it in later parts of these notes.
 \end{remark}





Let us examine the size of the
defects of naive additivity and information symmetry.

 \begin{corollary} For the defect of additivity, we have
 \begin{equation}
  K(x) + K(y \mvert x) - K(x, y) \eqa I(K(x) : y \mvert x)
 \lea K(K(x) \mvert x).
  \end{equation}
For information asymmetry, we have
 \begin{equation}
 I(x : y) - I(y : x) \eqa I(K(y) : x \mvert y) - I(K(x) : y \mvert x).
 \end{equation}
 \end{corollary}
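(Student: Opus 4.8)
The plan is to derive both identities from the Addition Theorem~\eqref{eq:Haddit} together with the corollary~\eqref{eq:Haddit.alt}, which I will use in the form $K(x,y) - K(x) \eqa K(y \mvert x^{*})$ and the definition $K^{*}(y\mvert x) = K(y\mvert x^{*})$. First I would attack the defect of additivity. Starting from $K(x,y) \eqa K(x) + K(y \mvert x, K(x))$, I rewrite the left side and obtain
 \[
   K(x) + K(y \mvert x) - K(x,y) \eqa K(y \mvert x) - K(y \mvert x, K(x)).
 \]
Now the right-hand side is, by the very definition of algorithmic information relativized to $x$, exactly $I(K(x) : y \mvert x)$: conditioning everything on $x$, the quantity $K(y\mvert x) - K(y\mvert x, K(x))$ is the information that the extra datum $K(x)$ carries about $y$. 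This gives the first asymptotic equality. For the inequality $I(K(x):y\mvert x) \lea K(K(x)\mvert x)$, I would invoke the conditional (relativized to $x$) version of the trivial bound $I(u:v) \lea K(u)$, i.e.\ information about $y$ provided by $K(x)$ cannot exceed the complexity of $K(x)$ itself, all in the presence of $x$ in the condition; concretely $K(y\mvert x) \lea K(y \mvert x, K(x)) + K(K(x)\mvert x)$ because from $x$ and a description of $K(x)$ relative to $x$ one reconstructs the pair $(x, K(x))$ and then runs the optimal decoder for $y$ given that pair.

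For the information asymmetry, I would write out $I(x:y) - I(y:x) = \big(K(y) - K(y\mvert x)\big) - \big(K(x) - K(x\mvert y)\big)$ and substitute the Addition Theorem twice, once as $K(x,y) - K(x) \eqa K(y\mvert x^{*})$ and once with the roles of $x$ and $y$ exchanged. Using $K(y\mvert x^{*}) \eqa K(y\mvert x, K(x))$ and the analogous relation for $K(x\mvert y^{*})$, the expression $K(x) + K(y\mvert x) - K(x,y)$ appears with a $+$ sign and $K(y) + K(x\mvert y) - K(x,y)$ with a $-$ sign, so the two $K(x,y)$ terms cancel and one is left with
 \[
   I(x:y) - I(y:x) \eqa \big(K(x) + K(y\mvert x) - K(x,y)\big) - \big(K(y) + K(x\mvert y) - K(x,y)\big).
 \]
Feeding each bracket through the first identity just proved turns the right side into $I(K(x):y\mvert x) - I(K(y):x\mvert y)$; a sign-tracking check of which term is the ``defect'' for $(x,y)$ versus $(y,x)$ then yields the stated form $I(K(y):x\mvert y) - I(K(x):y\mvert x)$.

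The main obstacle I anticipate is not any deep estimate but the bookkeeping: making sure the relativization-to-$x$ reading of $I(K(x):y\mvert x)$ is set up so that the conditional analogue of $I(u:v)\lea K(u)$ applies verbatim, and keeping the signs straight when the two instances of the Addition Theorem are subtracted. In particular one must be careful that the constants hidden in $\eqa$ when passing between $K(y\mvert x^{*})$ and $K(y\mvert x, K(x))$ (justified via $x^{*} \asymp \ang{x, K(x)}$ and Theorem~\ref{thm:rectransf}) are uniform, so that the chain of $\eqa$'s composes. Everything else is a direct substitution into~\eqref{eq:Haddit} and~\eqref{eq:Haddit.alt}.
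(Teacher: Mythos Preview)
Your proposal is correct and is exactly what the paper has in mind: its entire proof reads ``Immediate from the addition theorem,'' and you have simply unpacked that, rewriting the defect as $K(y\mvert x)-K(y\mvert x,K(x))=I(K(x):y\mvert x)$ via~\eqref{eq:Haddit} and then taking the difference of the two defects (using $K(x,y)\eqa K(y,x)$) to obtain the asymmetry identity. The only slip is in your sign narration midway---it is the $(y,x)$-defect that carries the $+$ sign---but you catch this in your final ``sign-tracking check'' and land on the correct $I(K(y):x\mvert y)-I(K(x):y\mvert x)$.
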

 \begin{proof}
Immediate from the addition theorem.
 \end{proof}

 \begin{theorem}[Levin]
The information \( I(x : y) \) is not even asymptotically symmetric.
 \end{theorem}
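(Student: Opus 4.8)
The plan is to feed a carefully chosen pair into the corollary just established, namely
\[
  I(x:y) - I(y:x) \eqa I(K(y):x\mvert y) - I(K(x):y\mvert x),
\]
and to show that its right-hand side is negative with unbounded absolute value. I would take $y = K(x)$, where $x$ is a binary string of length $n$ furnished by Theorem~\ref{thm:compl-compl}, so that $K(K(x)\mvert x) \ge \log n - \log\log n - c$. The whole argument then rests on two asymmetric estimates for this pair: the term $I(K(x):y\mvert x)$ is forced to be \emph{large} because the object $K(x)$ already sits in the condition, whereas the dual term $I(K(y):x\mvert y)$ stays \emph{small} because $K(K(x))$ is a tiny object carrying almost no information.

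For the large term: with $y = K(x)$ we have $I(K(x):y\mvert x) = K(K(x)\mvert x) - K(K(x)\mvert K(x),x)$, and the subtracted complexity is $O(1)$ since $K(x)$ is literally present in the condition; hence $I(K(x):y\mvert x) \eqa K(K(x)\mvert x) \ge \log n - \log\log n - c$ by Theorem~\ref{thm:compl-compl}. For the small term: here $K(y) = K(K(x))$, so using the (conditional form of the) elementary inequality $I(u:v)\lea K(u)$ we get $I(K(y):x\mvert y) \lea K(K(y)\mvert y) \lea K(K(K(x)))$; and since $K(x) \lea n + O(\log n)$, the integer $K(K(x))$ has size $O(\log n)$, whence $K(K(K(x))) = O(\log\log n)$.

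Substituting the two estimates into the corollary yields $I(x:K(x)) - I(K(x):x) \lea O(\log\log n) - (\log n - \log\log n - c)$, so that
\[
  |I(K(x):x) - I(x:K(x))| \ge \log n - O(\log\log n),
\]
which is unbounded as $n \to \infty$. I expect the only genuinely non-elementary ingredient to be Theorem~\ref{thm:compl-compl} itself (cited there without proof); given it, the remaining care goes entirely into the second estimate. The obstacle there is that the naive bound on $I(K(y):x\mvert y)$ goes through $K(x\mvert K(x))\approx n$, which is far too weak; the point is to notice instead that $K(y)=K(K(x))$ is an $O(\log n)$-bit number and therefore can contribute at most $O(\log\log n)$ bits of information about anything, which is exactly the quantitative gap that produces the asymmetry.
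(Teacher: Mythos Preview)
Your proof is correct, and in fact sharper than the paper's. The paper does not use the asymmetry identity from the corollary; instead it argues by disjunction. It takes \( x \) with \( K(K(x)\mvert x)\ge\log n-\log\log n-c \) and compares two pairs: from \( I(x:K(x))=K(K(x))-K(K(x)\mvert x)\lea 3\log\log n \) and \( I(x^{*}:K(x))=K(K(x))-K(K(x)\mvert x^{*})\eqa K(K(x))\gea \log n-\log\log n-c \) (since \( K(K(x)\mvert x^{*})\eqa 0 \)), together with \( I(K(x):x)\eqa I(K(x):x^{*}) \), it concludes that whichever value this last quantity takes, it is far from one of the two displayed numbers. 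So symmetry fails either on \( (x,K(x)) \) or on \( (x^{*},K(x)) \), but the paper does not say which.

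Your route via the corollary pins this down: by bounding \( I(K(y):x\mvert y)\lea K(K(K(x)))=O(\log\log n) \) and \( I(K(x):y\mvert x)\eqa K(K(x)\mvert x)\gea\log n-\log\log n-c \), you show directly that \( I(K(x):x)-I(x:K(x))\gea\log n-O(\log\log n) \). In particular, this proves that in the paper's disjunction it is actually the pair \( (x,K(x)) \) that breaks symmetry, with \( I(K(x):x) \) large. The price you pay is invoking the corollary and the inequality \( I(u:v\mvert w)\lea K(u\mvert w) \) (which follows from conditional additivity), whereas the paper's argument is slightly more bare-handed; both rest squarely on Theorem~\ref{thm:compl-compl}.
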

 \begin{proof}
For some constant \( c \), assume
\( |x| = n \) and \( K(K(x) \mvert x) \ge \log n - \log\log n - c \).
Consider \( x, x^{*} \) and \( K(x) \).
Then we can check immediately the relations
 \[
  I(x : K(x)) \lea 3 \log \log n 
              < \log n - \log\log n - c \lea I(x^{*} : K(x)).
 \]
and
 \[
    I(K(x) : x) \eqa I(K(x) : x^{*}).
 \]
It follows that symmetry breaks down in an exponentially great measure
either on the pair \( (x, K(x)) \) or the pair \( (x^{*}, K(x)) \).
 \end{proof}

\subsubsection{Data processing}
The following useful identity is also classical, and is called the
\df{data processing identity}:
 \begin{equation}\label{eq:data-proc-ident}
   \cI(Z : (X, Y)) = \cI(Z : X) + \cI(Z : Y \mvert X).
 \end{equation}
Let us see what corresponds to this for algorithmic information.

Here is the correct conditional version of the addition theorem:
 \begin{theorem}
We have
 \begin{equation}\label{eq:cond-addit}
  K(x, y \mvert z) \eqa K(x \mvert z) + K(y \mvert x, K(x \mvert z), z).
 \end{equation}
 \end{theorem}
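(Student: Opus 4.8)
I would prove the conditional addition theorem by imitating the proof of the (unconditional) Addition Theorem~\eqref{eq:Haddit}, carrying the extra argument $z$ along as a fixed side condition throughout. So the two directions $\lea$ and $\gea$ are obtained separately, each by relativizing the corresponding half of that earlier argument.

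For the direction $\lea$, I would build a self-delimiting interpreter $F$ that uses $z$ as its conditioning input. On input $(p,z)$ the machine computing $F$ tries to parse $p$ as $p=uv$ so that $T(u,z)$ is defined, and then outputs the pair $\tup{ T(u,z),\; T(v,\ang{ T(u,z), \len{u}, z }) }$. Now take $u$ to be a shortest s.d.\ description of $x$ given $z$, so $\len{u}=K(x\mvert z)$ and $T(u,z)=x$, and take $v$ to be a shortest s.d.\ description of $y$ given the bundle $\ang{ x, K(x\mvert z), z }$, so $\len{v}=K(y\mvert x,K(x\mvert z),z)$ and $T(v,\ang{ x,K(x\mvert z),z })=y$. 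Then $F(uv,z)=\tup{ x,y }$ with $\len{uv}=K(x\mvert z)+K(y\mvert x,K(x\mvert z),z)$, which yields the inequality.

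For the direction $\gea$, I would first relativize Lemma~\ref{l:projmeas}: there is a constant $c$ with $\sum_y 2^{-K(x,y\mvert z)}\le 2^{-K(x\mvert z)+c}$ for all $x,z$, with a $z$-independent constant since that lemma's proof is uniform in the condition. Then I define
\[
  f(\tup{x,k},z,y) = 2^{\,k-K(x,y\mvert z)-c},
\]
which is lower semicomputable in all its arguments because $K(x,y\mvert z)$ is upper semicomputable. For the particular value $k=K(x\mvert z)$ the relativized projection inequality gives $\sum_y f(\tup{x,k},z,y)\le 1$, so the conditional form of the generalization~\eqref{eq:gencodth} of the coding theorem yields $K(y\mvert \tup{x,k},z)\lea -\log f(\tup{x,k},z,y)=K(x,y\mvert z)-K(x\mvert z)+c$. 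Substituting $k=K(x\mvert z)$ gives $K(x\mvert z)+K(y\mvert x,K(x\mvert z),z)\lea K(x,y\mvert z)$.

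The routine calculations are exactly as in the unconditional case; the one point that needs care — and the only real obstacle — is that every tool invoked (the s.d.\ interpreter construction, Lemma~\ref{l:projmeas}, and the generalized coding theorem) must be used in a form whose additive constants do not depend on the side condition $z$. Each of these is uniform in the condition, so the argument goes through, but this uniformity is what should be checked rather than taken for granted.
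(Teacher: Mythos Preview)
Your proposal is correct and is exactly what the paper intends: its entire proof reads ``The same as for the unconditional version,'' and you have carried out that relativization explicitly, correctly flagging the one genuine care point, namely that the constants in the conditional projection lemma and the conditional coding theorem are uniform in the side argument $z$.
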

 \begin{proof}
The same as for the unconditional version.
 \end{proof}

 \begin{remark}
The reader may have guessed the inequality
 \[
  K(x, y \mvert z) \eqa K(x \mvert z) + K(y \mvert x, K(x), z),
 \]
but it is incorrect: taking \( z = x \), \( y = K(x) \),
the left-hand side equals \( K(x^{*} \mvert x) \), and the right-hand side
equals \( K(x \mvert x) + K(K(x) \mvert x^{*}, x) \eqa 0 \).
 \end{remark}

The following inequality is a useful tool.
It shows that conditional complexity is a kind of one-sided distance, and
it is the algorithmic analogue of the well-known (and not difficult
classical inequality)
 \[
   \cH(X \mvert Y) \le \cH(Z \mvert Y) + \cH(X \mvert Z).
 \]

 \begin{theorem}[Simple triangle inequality]\label{thm:simple-triangle}
We have
 \begin{equation}\label{eq:simple-triangle}
  K(z \mvert x) \lea K(y, z \mvert x) \lea K(y \mvert x) + K(z \mvert y).
 \end{equation}
 \end{theorem}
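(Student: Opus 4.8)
The plan is to prove the two inequalities in~\eqref{eq:simple-triangle} separately, each by exhibiting a self-delimiting interpreter and invoking the invariance property (the analogue of $C(x\mvert y)\lea C_A(x\mvert y)$ for the prefix complexity $K$). The left-hand inequality $K(z\mvert x)\lea K(y,z\mvert x)$ is the conditional version of~\eqref{eq:x.xy}: from a shortest self-delimiting program for the pair $\tup{y,z}$ given $x$, we recover $\tup{y,z}$ and then output its second component $z$. This costs only an additive constant, so $K(z\mvert x)\lea K(y,z\mvert x)$.

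For the right-hand inequality $K(y,z\mvert x)\lea K(y\mvert x)+K(z\mvert y)$, the idea is to concatenate a shortest self-delimiting description $p$ of $y$ given $x$ with a shortest self-delimiting description $q$ of $z$ given $y$. I would define an s.d.~interpreter $A(pq,x)$ that first runs the reference machine on $p$ with condition $x$ to obtain $y=U(p,x)$ (the self-delimiting property of $p$ tells the machine where $p$ ends), then runs the reference machine on $q$ with condition $y$ to obtain $z=U(q,y)$, and finally outputs the pair $\tup{y,z}$. Since $p$ is self-delimiting, the concatenation $pq$ is parsed unambiguously, and $A$ is a legitimate s.d.~interpreter; taking $p,q$ of minimal length gives $K_A(y,z\mvert x)\le K(y\mvert x)+K(z\mvert y)$, whence the claim by optimality of the reference s.d.~interpreter.

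The only subtlety — and the step to be careful about — is making sure that the relevant notion of prefix-free (self-delimiting) interpreter and its invariance theorem have been set up in the part of the notes that the excerpt abbreviates as \verb|\input coding|; granting that, the argument is the standard ``paste two self-delimiting programs together'' construction that already appeared implicitly in the proof of the Addition Theorem~\eqref{eq:Haddit}. No probabilistic or counting input is needed here, unlike in the plain-complexity world where concatenation of minimal descriptions fails for lack of an endmarker; that failure is exactly what the self-delimiting framework is designed to avoid, which is why the clean triangle inequality holds for $K$ but not for $C$.
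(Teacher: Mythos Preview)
Your argument is correct. The left inequality is handled the same way in the paper. For the right inequality you give a direct construction, concatenating a shortest self-delimiting program for \( y \) given \( x \) with one for \( z \) given \( y \); this works because the domain of the resulting interpreter is prefix-free in the input for each fixed \( x \).

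The paper proceeds differently for the right inequality: it quotes the conditional addition theorem~\eqref{eq:cond-addit}, which gives
\[
  K(y,z\mvert x) \eqa K(y\mvert x) + K(z\mvert y, K(y\mvert x), x),
\]
and then observes that the second summand is \( \lea K(z\mvert y) \) since adding conditions can only decrease complexity. Your route is more self-contained and essentially re-derives the easy (\( \lea \)) direction of the addition theorem in this special case; the paper's route is a one-liner once~\eqref{eq:cond-addit} is available, and has the side benefit of exhibiting the slack in the inequality as exactly \( K(z\mvert y) - K(z\mvert y, K(y\mvert x), x) \).
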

 \begin{proof}
  According to~\eqref{eq:cond-addit}, we have
\[
  K(y, z \mvert x) \eqa K(y \mvert x) + K(z \mvert y, K(y \mvert x), x).
\]
The left-hand side is \( \gea K(z \mvert x) \), the second term of the
right-hand side is \( \lea K(z \mvert y) \).
 \end{proof}

Equation~\eqref{eq:cond-addit} justifies the following.
 \begin{definition}
 \begin{equation}\label{eq:cond-inf}
 I^{*}(x : y \mvert z) = K(x \mvert z) + K(y \mvert z) - K(x, y \mvert z). 
 \end{equation}
 \end{definition}

Then we have
 \begin{align*}
    I^{*}(x : y \mvert z) &\eqa K(y \mvert z) - K(y \mvert x, K(x \mvert z), z)
\\                      &\eqa K(x \mvert z) - K(x \mvert y, K(y \mvert z), z).
 \end{align*}

 \begin{theorem}[Algorithmic data processing identity]
 We have
 \begin{equation}\label{eq:alg-data-proc-ident}
   I^{*}(z : (x, y)) \eqa I^{*}(z : x) + I^{*}(z : y \mvert x^{*}).
 \end{equation}
 \end{theorem}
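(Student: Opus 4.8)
The plan is to reduce the algorithmic data processing identity~\eqref{eq:alg-data-proc-ident} to the conditional addition theorem~\eqref{eq:cond-addit}, exactly as the classical identity~\eqref{eq:data-proc-ident} follows from the classical chain rule. First I would expand the left-hand side using the definition~\eqref{eq:cond-inf} of conditional information together with the unconditional definition~\ref{def:I*}: writing out $I^{*}(z : (x,y)) = K(z) + K(x,y) - K(x,y,z)$, and similarly $I^{*}(z : x) = K(z) + K(x) - K(x,z)$ and $I^{*}(z : y \mvert x^{*}) = K(y \mvert x^{*}) + K(z \mvert x^{*}) - K(y, z \mvert x^{*})$. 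Adding the last two, I get $K(z) + K(x) + K(y\mvert x^{*}) + K(z\mvert x^{*}) - K(x,z) - K(y,z\mvert x^{*})$, so the target identity becomes, after cancelling $K(z)$,
 \[
   K(x,y) - K(x,y,z) \eqa K(x) + K(y\mvert x^{*}) + K(z\mvert x^{*}) - K(x,z) - K(y,z\mvert x^{*}).
 \]

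Now I would bring in the Addition Theorem in the form~\eqref{eq:Haddit.alt}, $K(x,y) - K(x) \eqa K(y\mvert x^{*})$, which handles the unconditional pair, and its conditional analogue obtained from~\eqref{eq:cond-addit} using $x^{*}\asymp\ang{x,K(x)}$, giving $K(y,z\mvert x^{*}) \eqa K(y\mvert x^{*}) + K(z\mvert x^{*}, y, K(y\mvert x^{*}))$ and $K(x,y,z) - K(x) \eqa K(y,z\mvert x^{*})$ (the latter again via~\eqref{eq:Haddit.alt} applied to the pair $(y,z)$ conditioned on $x^{*}$). Substituting these, the left side $K(x,y) - K(x,y,z)$ becomes $K(y\mvert x^{*}) - K(y,z\mvert x^{*})$, and the right side, after using $K(x,z) - K(x) \eqa K(z\mvert x^{*})$, becomes $K(y\mvert x^{*}) + K(z\mvert x^{*}) - K(y,z\mvert x^{*}) - K(z\mvert x^{*}) = K(y\mvert x^{*}) - K(y,z\mvert x^{*})$. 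So both sides collapse to the same expression and we are done. In short: the identity is just the conditional addition theorem applied twice — once to the pair $(y,z)$ conditioned on $x^{*}$, and once unconditionally — bookkeeping everything through the $\asymp$-equivalence $x^{*}\asymp\ang{x,K(x)}$.

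The one genuine point of care — and the place I expect a reader to stumble — is the appearance of $x^{*}$ rather than $x$ in the conditions. The naive version with $x$ in place of $x^{*}$ is false (this is precisely the content of the Remark following~\eqref{eq:cond-addit}), so at every step where I invoke~\eqref{eq:cond-addit} I must make sure the conditioning object that shows up is $x^{*}$, equivalently the pair $\ang{x, K(x)}$, and not $x$ alone. Concretely, the step $K(x,y,z) - K(x) \eqa K(y,z\mvert x^{*})$ requires writing $K(x,y,z) \eqa K(x) + K((y,z)\mvert x, K(x))$ via the Addition Theorem and then replacing $\ang{x,K(x)}$ by $x^{*}$ using $x^{*}\asymp\ang{x,K(x)}$ and Theorem~\ref{thm:rectransf} (which allows substituting $\asymp$-equivalent objects into conditions of $K$). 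Everything else is straightforward cancellation of terms and is the kind of routine manipulation the paper routinely suppresses; the only substantive input is the addition theorem and the characterization of $x^{*}$.
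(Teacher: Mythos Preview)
Your proof is correct and follows essentially the same route as the paper: expand everything via the definitions of $I^{*}$ and $I^{*}(\cdot:\cdot\mvert x^{*})$, then apply the Addition Theorem in the form~\eqref{eq:Haddit.alt} three times (to $K(x,y)$, $K(x,z)$, and $K(x,y,z)$) to reduce both sides to the same expression. The paper organizes the bookkeeping slightly differently---it computes the difference $I^{*}(z:(x,y))-I^{*}(z:x)$ directly and identifies it with $I^{*}(z:y\mvert x^{*})$---but the substance is identical, and your cautionary remark about $x^{*}$ versus $x$ is exactly the point.
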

 \begin{proof}
We have
 \begin{align*}
    I^{*}(z : (x, y)) &\eqa K(x, y) + K(z) - K(x, y, z)
\\  I^{*}(z : x)      &\eqa K(x)    + K(z) - K(x, z)
\\ I^{*}(z : (x, y)) - I^{*}(z : x) 
            &\eqa K(x, y) + K(x, z) - K(x, y, z) - K(x)
\\          &\eqa K(y \mvert x^{*}) + K(z \mvert x^{*}) - K(y, z \mvert x^{*})
\\          &\eqa I^{*}(z : y \mvert x^{*}),
 \end{align*}
where we used additivity and the definition~\eqref{eq:cond-inf} repeatedly.
 \end{proof}
 \begin{corollary}
The information \( I^{*} \) is monotonic:
 \begin{equation}\label{eq:inf-mon}
   I^{*}(x : (y, z)) \gea I^{*}(x : y).
 \end{equation}
  \end{corollary}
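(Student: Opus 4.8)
The plan is to derive the inequality directly from the algorithmic data processing identity~\eqref{eq:alg-data-proc-ident}. Instantiating that identity with $z$ replaced by $x$, with $x$ replaced by $y$, and with $y$ replaced by $z$, we obtain
\[
 I^{*}(x : (y, z)) \eqa I^{*}(x : y) + I^{*}(x : z \mvert y^{*}),
\]
so that $I^{*}(x : (y, z)) - I^{*}(x : y) \eqa I^{*}(x : z \mvert y^{*})$. Hence it suffices to show that the conditional information $I^{*}(x : z \mvert y^{*})$ is nonnegative to within an additive constant; monotonicity then follows at once.

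To see that $I^{*}(u : v \mvert w) \gea 0$ for all $u, v, w$, I would invoke the conditional addition theorem~\eqref{eq:cond-addit}, which gives $K(u, v \mvert w) \eqa K(u \mvert w) + K(v \mvert u, K(u \mvert w), w)$. Substituting this into the definition~\eqref{eq:cond-inf} of $I^{*}(u : v \mvert w)$ yields
\[
 I^{*}(u : v \mvert w) \eqa K(v \mvert w) - K(v \mvert u, K(u \mvert w), w).
\]
The right-hand side is $\gea 0$ because enlarging the conditioning information cannot increase complexity by more than an additive constant: a single interpreter that discards the extra arguments $u$ and $K(u \mvert w)$ witnesses $K(v \mvert u, K(u \mvert w), w) \lea K(v \mvert w)$, exactly as in Corollary~\ref{c:upbdrec}. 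Taking $u = x$, $v = z$, $w = y^{*}$ completes the argument.

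There is essentially no serious obstacle here: the statement is purely formal once the Addition Theorem and its conditional version are in hand. The only two points that need a little care are (i) matching the variables correctly when instantiating~\eqref{eq:alg-data-proc-ident}, so that it is $y$ (and not $z$) whose starred version appears in the condition, and (ii) recognizing that the ``monotonicity of conditional complexity'' step is just the standard inequality $K(v \mvert w', w) \lea K(v \mvert w)$. All of the additive constants introduced along the way are absorbed into the single constant hidden in $\gea$.
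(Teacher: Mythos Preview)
Your proposal is correct and follows exactly the route the paper intends: the corollary is stated immediately after the algorithmic data processing identity with no separate proof, precisely because instantiating that identity and observing that the residual term \(I^{*}(x : z \mvert y^{*}) \gea 0\) (via the conditional addition theorem and monotonicity of conditional complexity, which the paper records just before Definition~\ref{def:I*} and after~\eqref{eq:cond-inf}) yields the result at once. Your write-up simply spells out these implicit steps, including the variable substitution, correctly.
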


The following theorem is also sometimes useful.

 \begin{theorem}[Starry triangle inequality]\label{lem:triangle}
For all \( x,y,z \), we have
 \begin{equation}\label{eq:starry-triangle}
  K(x \mvert y^*) \lea K(x, z \mvert y^{*}) 
                \lea K(z \mvert y^*) + K(x \mvert z^*).
 \end{equation}
 \end{theorem}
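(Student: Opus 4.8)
The plan is to derive both inequalities from the conditional addition theorem~\eqref{eq:cond-addit} together with two routine facts about prefix complexity: that enlarging a condition cannot increase complexity, and that a fixed computable map applied in the first argument cannot increase complexity.

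For the left inequality, I would apply~\eqref{eq:cond-addit} with the pair $(x,z)$ and condition $y^{*}$, obtaining
\[
  K(x, z \mid y^{*}) \eqa K(x \mid y^{*}) + K(z \mid x,\ K(x \mid y^{*}),\ y^{*}),
\]
and then simply drop the (nonnegative) second term, which gives $K(x \mid y^{*}) \lea K(x, z \mid y^{*})$. Equivalently this is monotonicity of $K(\cdot \mid y^{*})$ under the projection $\langle x,z\rangle \mapsto x$, the prefix analogue of~\eqref{eq:x.xy}.

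For the right inequality the first move is to replace $z$ by its shortest description $z^{*}$: since $\langle x,z\rangle$ is a fixed computable function of $\langle x,z^{*}\rangle$, we have $K(x, z \mid y^{*}) \lea K(x, z^{*} \mid y^{*})$. Now apply~\eqref{eq:cond-addit} to the pair written in the order $(z^{*},x)$:
\[
  K(x, z^{*} \mid y^{*}) \eqa K(z^{*} \mid y^{*}) + K(x \mid z^{*},\ K(z^{*}\mid y^{*}),\ y^{*}) \lea K(z^{*} \mid y^{*}) + K(x \mid z^{*}),
\]
the last step merely discarding information from the condition. So everything reduces to the identity $K(z^{*} \mid y^{*}) \eqa K(z \mid y^{*})$.

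The only real subtlety is this last identity, and I expect it to be the main obstacle — the naive route, computing $z^{*}$ from what is available in the condition $\langle z, K(z\mid y^{*}), y^{*}\rangle$, does not work, since it would amount to pinning down $K(z)$ exactly, which costs roughly $\log K(y)$ extra bits rather than $O(1)$. Instead I would argue algebraically "from the outside", using the identities already proved: by~\eqref{eq:Haddit.alt} we have $K(z^{*}\mid y^{*}) \eqa K(y,z^{*}) - K(y)$ and $K(z\mid y^{*}) \eqa K(y,z) - K(y)$, while $K(y,z^{*}) \eqa K(y,z)$ by~\eqref{eq:star-joint} (and symmetry of the pairing); subtracting gives the claim. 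Chaining the three displays above then yields $K(x, z \mid y^{*}) \lea K(z\mid y^{*}) + K(x\mid z^{*})$, completing the proof. The conceptual point to get right is that $z^{*}$, not $z$, must serve as the hinge object, and that its conditional complexity is controlled by the addition-theorem identities rather than by any explicit simulation.
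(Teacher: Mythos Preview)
Your proof is correct, but it takes a different route from the paper for the right inequality. The paper first establishes the exact identity
\[
  K(x,z\mid y^{*}) \;\eqa\; K(z\mid y^{*}) + K\bigl(x \mid y,\ K(y\mid z^{*}),\ z^{*}\bigr)
\]
by expanding the data-processing identity \( I^{*}(y:(x,z)) \eqa I^{*}(y:z) + I^{*}(y:x\mid z^{*}) \) in terms of complexities and cancelling \( K(x,z)\eqa K(z)+K(x\mid z^{*}) \); both inequalities then fall out at once by projection on the left and by dropping \( y,\ K(y\mid z^{*}) \) from the condition on the right. You instead pass through the intermediate object \( \langle x,z^{*}\rangle \) and reduce everything to the lemma \( K(z^{*}\mid y^{*}) \eqa K(z\mid y^{*}) \), which you obtain cleanly from~\eqref{eq:Haddit.alt} and~\eqref{eq:star-joint}. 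Your argument is somewhat more elementary in that it avoids the mutual-information machinery; the paper's route, on the other hand, yields an exact conditional decomposition as a by-product, with a concrete residual term \( K(x\mid y, K(y\mid z^{*}), z^{*}) \).
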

 \begin{proof}
Using the algorithmic data-processing
identity~\eqref{eq:alg-data-proc-ident} and the definitions, we have
 \begin{align*}
 K(z) & - K(z \mvert y^{*}) + 
 K(x \mvert z^{*}) - K(x \mvert y, K(y \mvert z^{*}), z^{*})  
\\ &\eqa I^{*}(y : z) + I^{*}(y : x \mvert z^{*}) \eqa I^{*}(y : (x, z)) 
\\ &\eqa K(x, z) - K(x, z \mvert y^{*}),
 \end{align*}
which gives, after cancelling \( K(z) + K(x \mvert z^{*}) \) on the
left-hand side with \( K(x, z) \) on the right-hand side, to which it is equal
according to additivity, and changing sign:
 \[
  K(x, z \mvert y^{*}) \eqa
  K(z \mvert y^{*}) + K(x \mvert y, K(y \mvert z^{*}), z^{*}).
 \]
From here, we proceed as in the proof of the simple triangle
inequality~\eqref{eq:simple-triangle}
\end{proof}

\subsection{Information non-increase}

In this subsection, we somewhat follow the exposition  
of~\cite{GacsTrompVitAlgorStat00}.

The classical data processing identity has a simple but important
application.
Suppose that the random variables \( Z, X, Y \) form a Markov chain in this
order.
Then \( I(Z : Y \mvert X) = 0 \) and hence \( I(Z : (X, Y)) = I(Z : X) \).
In words: all information in \( Z \) about \( Y \) is coming through \( X \): a Markov
transition from \( X \) to \( Y \) cannot increase the information about \( Z \).
Let us try to find the algorithmic counterpart of this theorem.

Here, we rigorously show that this is the case in the algorithmic statistics
setting:
the information in one object about another
cannot be increased by any deterministic algorithmic method
by more than a constant. 
With added randomization this holds with overwhelming probability.
For more elaborate versions see~\cite{Levin74,LevinRandCons84}.

Suppose we want to obtain information about a certain object \( x \). 
It does not seem to be a good policy to guess blindly. 
This is confirmed by the following inequality.
 \begin{equation}\label{eq:info-guess}
	\sum_y \m(y)2^{I(x : y)} \lem 1 
 \end{equation}
  which says that for each \( x \), the expected value of \( 2^{I(x : y)} \) is 
small, even with respect to the universal constructive semimeasure 
\( \m(y) \). 
The proof is immediate if we recognize that by the Coding 
Theorem, 
 \[	2^{I(x : y)} \eqm \frac{2^{-K(y \mvert x)}}{\m(y)}
 \]
  and use the estimate~\eqref{eq:condHsum}.

We prove a strong version of the information non-increase law
under deterministic processing (later we need the attached corollary):

\begin{theorem}
Given \( x \) and \( z \), let \( q \) be a program computing \( z \) from \( x^{*} \).
Then we have
 \begin{equation*}
   I^{*}(y : z) \lea I^{*}(y : x) + K(q).
 \end{equation*}
\end{theorem}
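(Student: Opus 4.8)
The plan is to route everything through the two facts about $I^{*}$ that were just established: monotonicity~\eqref{eq:inf-mon} and the algorithmic data-processing identity~\eqref{eq:alg-data-proc-ident}. Since complexity of a pair is symmetric, $K(x,z)\eqa K(z,x)$, we have $I^{*}(y:(x,z))\eqa I^{*}(y:(z,x))$, so monotonicity (applied to the pair $(z,x)$) gives
 \[
   I^{*}(y:z)\lea I^{*}(y:(x,z)).
 \]
It therefore suffices to bound the right-hand side, and for this I would invoke the data-processing identity~\eqref{eq:alg-data-proc-ident} with the roles $z\mapsto y$, $x\mapsto x$, $y\mapsto z$, yielding
 \[
   I^{*}(y:(x,z)) \eqa I^{*}(y:x) + I^{*}(y:z\mvert x^{*}).
 \]
The whole theorem then reduces to the single conditional estimate $I^{*}(y:z\mvert x^{*})\lea K(q)$.

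To prove that estimate I would expand the conditional information by its definition~\eqref{eq:cond-inf} with the fixed condition $x^{*}$:
 \[
   I^{*}(y:z\mvert x^{*}) = K(y\mvert x^{*}) + K(z\mvert x^{*}) - K(y,z\mvert x^{*}).
 \]
Now I use two elementary bounds. First, the hypothesis that $q$ computes $z$ from $x^{*}$ means $z=U(q,x^{*})$, so running a shortest self-delimiting program for $q$ on the input $x^{*}$ produces $z$; hence $K(z\mvert x^{*})\lea K(q)$. Second, the first projection $(y,z)\mapsto y$ is recursive, so the prefix analogue of~\eqref{eq:x.xy} gives $K(y\mvert x^{*})\lea K(y,z\mvert x^{*})$, i.e.\ $K(y,z\mvert x^{*})\gea K(y\mvert x^{*})$. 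Substituting the upper bound for $K(z\mvert x^{*})$ and the lower bound for $K(y,z\mvert x^{*})$, the two copies of $K(y\mvert x^{*})$ cancel and leave $I^{*}(y:z\mvert x^{*})\lea K(z\mvert x^{*})\lea K(q)$.

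Putting the pieces together, the data-processing step gives $I^{*}(y:(x,z))\lea I^{*}(y:x)+K(q)$, and combining with $I^{*}(y:z)\lea I^{*}(y:(x,z))$ yields exactly the claimed inequality. I expect the only real care to be in orienting the two structural tools correctly: monotonicity must be applied after reordering the pair to $(z,x)$ so that $z$ sits in the protected position, and the data-processing identity must be instantiated so that the residual conditional term is conditioned on the \emph{starred} string $x^{*}$, matching the way $z$ is actually produced (from $x^{*}$, not from $x$ alone). Once that alignment is fixed, the bound $K(z\mvert x^{*})\lea K(q)$ is the decisive input, and the remainder is routine bookkeeping with the additive-constant relations.
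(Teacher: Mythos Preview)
Your proof is correct and follows essentially the same route as the paper: monotonicity to pass from \(I^{*}(y:z)\) to \(I^{*}(y:(x,z))\), the data-processing identity to split off \(I^{*}(y:z\mvert x^{*})\), and then the bound \(I^{*}(y:z\mvert x^{*})\lea K(z\mvert x^{*})\lea K(q)\). Your write-up simply unpacks the last step in more detail than the paper does.
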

\begin{proof}
By monotonicity~\eqref{eq:inf-mon} and
the data processing identity~\eqref{eq:alg-data-proc-ident}:
\[
 I^{*}(y : z) \lea
I^{*}(y : (x, z)) \eqa I^{*}(y : x) + I^{*}(y : z \mvert x^{*}).   
\]
By definition of information and the definition of \( q \):
 \[
   I^{*}(y : z \mvert x^{*}) \lea K(z \mvert x^{*}) \lea K(q).
 \]
\end{proof}

Randomized computation can increase information only with negligible
probability.
Suppose that \( z \) is obtained from \( x \) by some randomized computation.
The probability \( p(z \mvert x) \) of obtaining \( z \) from \( x \) is a semicomputable
distribution over the \( z \)'s.
The information increase \( I^{*}(z : y) - I^{*}(x : y) \) satisfies the
theorem below. 

 \begin{theorem} For all \( x, y, z \) we have
 \[
  \m(z \mvert x^{*}) 2^{I^{*}(z : y) - I^{*}(x : y)} 
  \lem \m(z \mvert x^{*}, y, K(y \mvert x^{*})).
 \]
Therefore it is upperbounded by \( \m(z \mvert x) \lem \m(z \mvert x^{*}) \).
 \end{theorem}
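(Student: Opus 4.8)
The plan is to take logarithms and reduce the whole statement to the Starry Triangle Inequality, Theorem~\ref{lem:triangle}. By the (conditional) Coding Theorem we have $\m(z\mvert w)\eqm 2^{-K(z\mvert w)}$ for any condition $w$, so the asserted bound
\[
 \m(z\mvert x^{*})\,2^{I^{*}(z:y)-I^{*}(x:y)}\lem\m(z\mvert x^{*},y,K(y\mvert x^{*}))
\]
is equivalent, after taking $-\log$, to the additive inequality
\[
 K(z\mvert x^{*},y,K(y\mvert x^{*}))\lea K(z\mvert x^{*})-\bigl(I^{*}(z:y)-I^{*}(x:y)\bigr).
\]
First I would rewrite the information difference on the right. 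Using the Addition Theorem in the form $K(a,b)\eqa K(a)+K(b\mvert a^{*})$ (which is~\eqref{eq:Haddit.alt}), we get $I^{*}(z:y)\eqa K(y)-K(y\mvert z^{*})$ and likewise $I^{*}(x:y)\eqa K(y)-K(y\mvert x^{*})$, hence
\[
 I^{*}(z:y)-I^{*}(x:y)\eqa K(y\mvert x^{*})-K(y\mvert z^{*}),
\]
so the right-hand side above becomes $K(z\mvert x^{*})+K(y\mvert z^{*})-K(y\mvert x^{*})$.

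Next I would simplify the left-hand side with the conditional Addition Theorem~\eqref{eq:cond-addit}: conditioning everything on $x^{*}$ gives $K(y,z\mvert x^{*})\eqa K(y\mvert x^{*})+K(z\mvert y,K(y\mvert x^{*}),x^{*})$, and since a conditioning tuple may be reordered at constant cost, $K(z\mvert x^{*},y,K(y\mvert x^{*}))\eqa K(y,z\mvert x^{*})-K(y\mvert x^{*})$. Substituting this, the inequality to be proved collapses to
\[
 K(y,z\mvert x^{*})\lea K(z\mvert x^{*})+K(y\mvert z^{*}),
\]
which is exactly the right half of~\eqref{eq:starry-triangle}, with $y$ playing the role of $x$, $z$ the role of $z$, and $x$ the role of $y$. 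This finishes the main inequality, and the closing remark records only the routine domination $\m(z\mvert x)\lem\m(z\mvert x^{*})$, which holds because $x$ is computable from $x^{*}$ (weakening the condition costs at most a multiplicative constant).

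I expect the main work to be purely bookkeeping: keeping track that $K(y\mvert x^{*})$ occurs simultaneously as an arithmetic term and as a component of a conditioning tuple, and applying the Coding Theorem in its conditional form. It is worth pointing out why one cannot argue more directly: one would like to say that $\m(z\mvert x^{*})\,\m(y\mvert z^{*})$ is a lower semicomputable semimeasure in the pair $(y,z)$ given $x^{*}$ (its total mass is visibly $\le 1$), which would yield $\m(z\mvert x^{*})\,\m(y\mvert z^{*})\lem\m(y,z\mvert x^{*})$ and hence the claim at once. This fails because $\m(y\mvert z^{*})$ is not jointly lower semicomputable in $(y,z)$ — it needs $K(z)$ — and that gap is precisely what the data-processing proof behind Theorem~\ref{lem:triangle} bridges. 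So routing the argument through $K$ and the Starry Triangle Inequality, rather than through a naive semimeasure product, is the one genuine point of care.
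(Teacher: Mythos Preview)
Your proof is correct. The paper argues directly from the data-processing identity~\eqref{eq:alg-data-proc-ident} together with monotonicity~\eqref{eq:inf-mon}: it writes
\(I^{*}(y:(x,z))-I^{*}(y:x)\eqa I^{*}(y:z\mvert x^{*})\eqa K(z\mvert x^{*})-K(z\mvert y,K(y\mvert x^{*}),x^{*})\),
then applies \(I^{*}(y:(x,z))\gea I^{*}(y:z)\) and exponentiates. You instead unpack everything to complexities and finish with the Starry Triangle Inequality~\eqref{eq:starry-triangle}. Since the paper proves that inequality precisely via the data-processing identity, the two arguments are the same computation in different packaging; your route is one step more indirect but equally valid, and your commentary on why the naive semimeasure-product shortcut fails is a nice addition.
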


\begin{remark}
The theorem implies 
 \begin{equation}\label{eq:inf-incre-exp}
   \sum_{z} \m(z \mvert x^{*}) 2^{I^{*}(z : y) - I^{*}(x : y)} \lem 1.
 \end{equation}
This says that
the \( \m(\cdot \mvert x^{*}) \)-expectation of the exponential of the increase
is bounded by a constant.
It follows that for example, 
the probability of an increase of mutual information
by the amount \( d \) is \( \lem 2^{-d} \).
\end{remark}

 \begin{proof}
We have, by the data processing inequality:
 \[
 I^{*}(y : (x, z)) - I^{*}(y : x) 
   \eqa I^{*}(y : z \mvert x^{*}) 
   \eqa K(z \mvert x^{*}) - K(z \mvert y, K(y \mvert x^{*}), x^{*}).
 \]
Hence, using also \( I^{*}(y: (x, z)) \gea I^{*}(y : z) \) by monotonicity:
 \[
 I^{*}(y : z) - I^{*}(y : x) - K(z \mvert x^{*}) 
   \lea - K(z \mvert y, K(y \mvert x^{*}), x^{*}).
 \]
Putting both sides into the exponent gives the statement of the theorem.
 \end{proof}

\begin{remark}
The theorem on information non-increase and its proof look similar to
the theorems on randomness conservation.
There is a formal connection, via the observation made in
Remark~\ref{rem:inf-def-rand}.
Due to the difficulties mentioned there, we will explore the connection
only later in our notes.
\end{remark}



\section{The complexity of decidable and enumerable sets}\label{sec:halting}

If \( \og = \og(1)\og(2)\dotsb \) is an infinite binary sequence, then we may
be interested in how the complexity of the initial segments
 \[
  \og(\frto{1}{n}) = \og(1) \dotsb \og(n)
 \]
grows with \( n \).
We would guess that if \( \og \) is ``random'' then \( C(\og(\frto{1}{n})) \) is
approximately \( n \); this question will be studied satisfactorily in later
sections.
Here, we want to look at some other questions.

Can we determine whether the sequence \( \og \) is computable, by just looking
at the complexity of its initial segments?
It is easy to see that if \( \og \) is computable then 
\( C(\og(\frto{1}{n})) \lea \log n \).
But of course, if \( C(\og(\frto{1}{n})) \lea \log n \) then it is not necessarily true
yet that \( \og \) is computable.
Maybe, we should put \( n \) into the condition.
If \( \og \) is computable then \( C(\og(\frto{1}{n}) \mvert n) \lea 0 \).
Is it true that \( C(\og(\frto{1}{n}) \mvert n) \lea 0 \) for all \( n \) then indeed
\( \og \) is computable?
Yes, but the proof is quite difficult (see either its original, attributed
to Albert Meyer in~\cite{Loveland69}, or in~\cite{ZvLe70}).
The suspicion arises that when
measuring the complexity of starting segments of infinite sequences,
neither Kolmogorov complexity nor its prefix version are the most natural
choice.
Other examples will support the suspicion, so let us introduce, following
Loveland's paper~\cite{Loveland69}, the following variant. 

\begin{sloppypar}
 \begin{definition}
Let us say that a program \( p \) \df{decides} a string \( x = (x(1),x(2),\dotsc) \)
on a Turing machine \( T \) up to \( n \) if for all \( i \in \{1,\dots,n\} \) we have
 \[
   T(p,i) = x(i).
 \]
Let us further define the \df{decision complexity}
 \[
   C_{T}(x; n) = \min\setof{\len{p} : p \txt{ decides \( x \) on \( T \) up to \( n \)}}.
 \]
 \end{definition}
  \end{sloppypar}

As for the Kolmogorov complexity, an invariance theorem holds, and there is
an optimal machine \( T_{0} \) for this complexity.
Again, we omit the index \( T \), assuming that such an optimal machine has
been fixed.

The differences between \( C(x) \), \( C(x \mvert n) \) and \( C(x ; n) \) are
interesting to explore; intuitively, the important difference is that the
program achieving the decision complexity of \( x \) does not have to offer
precise information about the length of \( x \).
We clearly have
 \[
   C(x \mvert n) \lea C(x ; n) \lea C(x).
 \]
Examples that each of these inequalities can be strict, are left to
exercises.

 \begin{remark}
If \( \og(\frto{1}{n}) \) is a segment of an infinite sequence \( \og \) then we can write
 \[
   C(\og ; n)
 \]
instead of \( C(\og(\frto{1}{n}) ; n) \) without confusion, since there is only one way
to understand this expression.
 \end{remark}

Decision complexity offers an easy characterization of decidable infinite
sequences.

 \begin{theorem}
Let \( \og = (\og(1), \og(2),\dotsc) \) be an infinite sequence.
Then \( \og \) is decidable if and only if
\begin{equation}\label{eq:dec-compl-0}
   C(\og ; n) \lea 0.
\end{equation}
 \end{theorem}
 \begin{sloppypar}
 \begin{proof}
If \( \og \) is decidable then~\eqref{eq:dec-compl-0} is obviously true.
Suppose now that~\eqref{eq:dec-compl-0} holds:
there is a constant \( c \) such that for all \( n \) we have \( C(\og(\frto{1}{n});n) < c \).
Then there is a program \( p \) of length \( \le c \) with the property that for
infinitely many \( n \), decides \( x \) on the optimal machine \( T_{0} \) up to \( n \).
Then for all \( i \), we have \( T_{0}(p,i) = \og(i) \), showing that \( \og \) is
decidable.
 \end{proof}
 \end{sloppypar}

Let us use now the new tool to prove a somewhat more surprising result.
Let \( \og \) be a 0-1 sequence that is the indicator function of a recursively
enumerable set.
In other words, the set \( \setof{i: \og(i) = 1} \) is recursively enumerable.
As we know such sequences are not always decidable, so \( C(\og(\frto{1}{n}) ; n) \) is
not going to be bounded.
How fast can it grow?
The following theorem gives exact answer, showing that it grows
only logarithmically.

 \begin{theorem}[Barzdin, see~\cite{Barzdin68}]\
 \begin{alphenum}
  \item\label{i:Barzdin-upper}
Let \( \og \) be the indicator function of a recursively enumerable set \( E \).
Then we have
 \[
   C(\og ; n) \lea \log n.
 \]
  \item\label{i:Barzdin-lower}
The set \( E \) can be chosen such that for all \( n \) we have 
\( C(\og ; n) \ge \log n \).
 \end{alphenum}
 \end{theorem}
 \begin{proof}
Let us prove~\eqref{i:Barzdin-upper} first.
Let \( n' \) be the first power of 2 larger than \( n \).
Let \( k(n) = \sum_{i \le n'} \og(i) \).
For a constant \( d \), let \( p = p(n, d) \) be a program
that contains an initial segment \( q \) of size \( d \) followed by a
the number \( k(n) \) in binary notation, and padded to \( \cei{\log n} + d \) 
with 0's.
The program \( q \) is self-delimiting, so it sees where \( k(n) \) begins.

The machine \( T_{0}(p, i) \) works as follows, under the control of \( q \).
From the length of the whole \( p \), it determines \( n' \).
It begins to enumerate \( \og(\frto{1}{n}') \) until it found \( k(n) \) 1's.
Then it knows the whole \( \og(\frto{1}{n}') \), so it outputs \( \og(i) \).

Let us prove now~\eqref{i:Barzdin-lower}.
Let us list all possible programs \( q \) for the machine \( T_{0} \) as 
\( q = 1, 2, \dotsc \).
Let \( \og(q) = 1 \) if and only if \( T_{0}(q, q) = 0 \).
The sequence \( \og \) is obviously the indicator function of a recursively
enumerable set.
To show that \( \og \) has the desired property, assume that for some
\( n \) there is a \( p \) with \( T_{0}(p, i) = \og(i) \) for all \( i \le n \).
Then \( p > n \) since \( \og(p) \) is defined to be different from \( T_{0}(p, p) \).
It follows that \( \len{p} \ge \log n \).
 \end{proof}

Decision complexity has been especially convenient for the above theorem.
Neither \( C(\og(\frto{1}{n})) \) nor \( C(\og(\frto{1}{n}) \mvert n) \) would have been suitable to
formulate such a sharp result.
To analyze the phenomenon further, we introduce some more concepts.

 \begin{definition}
Let us denote, for this section, by \( E \) the set of those binary strings \( p \) on
which the optimal prefix machine \( T \) halts:
 \begin{equation}\label{eq:E^{t}}
 \begin{split}
     E^{t} &= \setof{p : T(p) \txt{ halts in \( < t \) steps} },
\\   E     &= E^{\infty}.
 \end{split}
 \end{equation}
Let 
 \begin{equation}\label{eq:chi}
  \chi = \chi_{E}
 \end{equation}
be the infinite sequence that is the indicator function of the
set \( E \), when the latter is viewed as a set of numbers.
 \end{definition}

It is easy to see that the set \( E \) is complete among recursively enumerable
sets with respect to many-one reduction.
The above theorems show that though it contains an infinite amount of
information, this information is not stored in the sequence \( \chi \) densely at
all: there are at most \( \log n \) bits of it in the segment \( \chi(\frto{1}{n}) \).
There is an infinite sequence, though, in which the same information is
stored much more densely: as we will see later, maximally densely.

 \begin{definition}[Chaitin's Omega]
Let
 \begin{equation}\label{eq:Og}
 \begin{split}
   \Og^{t} &= \sum_{p \in E^{t}} 2^{-\len{p}},
\\ \Og     &= \Og^{\infty}.
 \end{split}
 \end{equation}
 \end{definition}

Let \( \Og(\frto{1}{n}) \) be the sequence of the first \( n \) binary digits in the 
expansion of \( \Og \), and let it also denote the binary number 
\( 0.\Og(1)\dotsb\Og(n) \).
Then we have
 \[
    \Og(\frto{1}{n}) < \Og < \Og(\frto{1}{n}) + 2^{-n}.
 \]
  
 \begin{theorem}
The sequences \( \Og \) and \( \chi \) are Turing-equivalent.
 \end{theorem}
 \begin{proof}
Let us show first that given \( \Og \) as an oracle, a Turing machine can
compute \( \chi \).
Suppose that we want to know for a given string \( p \) of length \( k \) 
whether \( \chi(p) = 1 \) that is whether \( T(p) \) halts.
Let \( t(n) \) be the first \( t \) for which \( \Og^{t} > \Og(\frto{1}{n}) \).
If a program \( p \) of length \( n \) is not in \( E^{t(n)} \) then it is not in \( E \)
at all, since \( 2^{-\len{p}} = 2^{n} < \Og - \Og^{t} \).
It follows that \( \chi(1 : 2^{n}) \) can be computed from \( \Og(\frto{1}{n}) \).

To show that \( \Og \) can be computed from \( E \), let
us define the recursively enumerable set
 \[
   E' = \setof{r : r \txt{ rational, } \Og > r}.
 \]
The set \( E' \) is reducible to \( E \) since the latter
is complete among recursively enumerable sets
with respect to many-one reduction.
On the other hand, \( \Og \) is obviously computable from \( E' \).
 \end{proof}

The following theorem shows that \( \Og \) stores the information more densely.

 \begin{theorem}\label{thm:Og-complex}
 We have \( K(\Og(\frto{1}{n})) \gea n \).
 \end{theorem}
 \begin{proof}
Let \( p_{1} \) be a self-delimiting program outputting \( \Og(\frto{1}{n}) \).
Recall the definition of the sets \( E^{t} \) in~\eqref{eq:E^{t}}
and the numbers \( \Og^{t} \) in~\eqref{eq:Og}.
Let \( \Og_{1} \) denote the real number whose binary digits after 0 are given
by \( \Og(\frto{1}{n}) \), and let \( t_{1} \) be the first \( t \) with \( \Og^{t}>\Og_{1} \).
Let \( x_{1} \) be the first string \( x \) such that \( T(p)\ne x \) for any
\( p\in E^{t_{1}} \).

We have \( K(x_{1})\ge n \).
On the other hand, we just computed \( x_{1} \) from \( p_{1} \), so
\( K(x_{1}\mvert p_{1})\lea 0 \).
We found \( n \le K(x_{1}) \lea K(p_{1}) + K(x_{1}\mvert p_{1}) \lea K(p_{1}) \).
 \end{proof}

\section{The complexity of complexity}\label{sec:compl-compl}

\subsection{Complexity is sometimes complex}

This section is devoted to a quantitative estimation of the uncomputability
of the complexity function \( C(x) \).
Actually, we will work with the prefix complexity \( K(x) \), but the results
would be very similar for \( C(x) \).
The first result shows that the value
\( K(x) \) is not only not computable from \( x \), but
its conditional complexity \( K(K(x) \mvert x) \) given \( x \) is sometimes quite
large.
How large can it be expected to be?
Certainly not much larger than \( \log K(x) + 2\log\log K(x) \), since we can
describe any value \( K(x) \) using this many bits.
But it can come close to this, as shown by Theorem~\ref{thm:compl-compl}.
This theorem says that for all \( n \), there exists \( x \) of length \( n \) with
 \begin{equation}\label{eq:compl-compl}
  K(K(x) \mvert x) \gea \log n - \log \log n.
 \end{equation}

 \begin{proof}[Proof of Theorem~\protect\ref{thm:compl-compl}]
Let \( U(p, x) \) be the optimal self-delimiting machine for which
\( K(y \mvert x) = \min_{U(p, x) = y} \len{p} \).
Let \( s \le \log n \) be such that if \( \len{x} = n \) then 
a \( p \) of length \( x \) can be found for which \( U(p, x) = K(x) \).
We will show
 \begin{equation}\label{eq:compl-compl.s-lb}
   s \gea \log n - \log \log n.
 \end{equation}
Let us say that \( p \in \{0, 1\}^{s} \) is \df{suitable} for 
\( x \in \{0,1\}^{n} \) if there exists a \( k = U(p, x) \) and a
\( q \in \{0,1\}^{k} \) with \( U(p, \Lambda) = x \).
Thus, \( p \) is suitable for \( x \) if it produces the length of some program of
\( x \), not necessarily a shortest program.

Let \( M_{i} \) denote the set of those \( x \) of length \( n \)
for which there exist at least \( i \)
different suitable \( p \) of length \( s \).
We will examine the sequence
 \[
   \{0,1\}^{n} = M_{0} \spsq M_{1} \spsq \dots 
   \spsq M_{j} \spsq M_{j+1} = \emptyset,
 \]
where \( M_{j} \ne \emptyset \).
It is clear that \( 2^{s} \ge j \).
To lowerbound \( j \), we will show that the sets \( M_{i} \) decrease only slowly:
 \begin{equation}\label{eq:compl-compl.slow-decr}
  \log \card{M_{i}} \lea \log\card{M_{i+1}} + 4 \log n.
 \end{equation}
We can assume 
 \begin{equation}\label{eq:compl-compl.assume}
   \log\card{M_{i} \xcpt M_{i+1}} \ge \log\card{M_{i}} - 1,
 \end{equation}
otherwise~\eqref{eq:compl-compl.slow-decr} is trivial.
We will write a program that finds an element \( x \) of 
\( M_{i} \xcpt M_{i+1} \) with the property \( K(x) \ge \log\card{M_{i}} - 1 \).
The program works as follows.
 \begin{itemize}

  \item It finds \( i,n \) with the help of descriptions of length
\( \log n + 2\log\log n \), and \( s \) with the help of a description of length
\( 2 \log\log n \).

  \item It finds  \( \card{M_{i+1}} \) with the help of a description of length
\( \log\card{M_{i+1}} + \log n + 2\log\log n \).

  \item From these data, it determines the set \( M_{i+1} \), and then begins
to enumerate the set \( M_{i} \xcpt M_{i+1} \) as \( x_{1}, x_{2},\dotsc \).
For each of these elements \( x_{r} \), it knows that there are exactly \( i \)
programs suitable for \( x_{r} \), find all those, and find the shortest
program for \( x \) produced by these.
Therefore it can compute \( K(x_{r}) \).

  \item According to the assumption~\eqref{eq:compl-compl.assume},
there is an \( x_{r} \) with \( K(x_{r}) \ge \log\card{M_{i}} - 1 \).
The program outputs the first such \( x_{r} \).

 \end{itemize}
The construction of the program shows that its length is
\( \lea \log\card{M_{i+1}} + 4\log n \), hence for the \( x \) 
we found 
 \[
   \log\card{M_{i}} - 1 \le K(x) \lea \log\card{M_{i+1}} + 4\log n,
 \]
which proves~\eqref{eq:compl-compl.slow-decr}.
This implies \( j \ge n/(4 \log n) \), and hence~\eqref{eq:compl-compl.s-lb}.
 \end{proof}

\subsection{Complexity is rarely complex}

Let
 \[
  f(x) = K(K(x) \mvert x).
 \]
We have seen that \( f(x) \) is sometimes large.
Here, we will show that the sequences \( x \) for which this happens are rare.
Recall that we defined \( \chi \) in~\eqref{eq:chi} as
the indicator sequence of the halting problem.

 \begin{theorem}\label{thm:co-rarely-co} 
We have
 \[
   I(\chi : x) \gea f(x) - 2.4 \log f(x).
 \]
 \end{theorem}

In view of the inequality~\eqref{eq:info-guess}, this shows that such
sequences are rare, even in terms of the universal probability, so 
they are certainly rare if we measure them by any
computable distribution.
So, we may even call such sequences ``exotic''.

In the proof, we 
start by showing that the sequences in question are rare.
Then the theorem will follow when we make use of the fact that 
\( f(x) \) is computable from \( \chi \).

We need a lemma about the approximation of one measure by another from
below.

  \begin{lemma}\label{lem:measure-approx}
For any semimeasure \( \nu \) and measure \( \mu \le \nu \), let
\( S_{m} = \setof{x : 2^{m} \mu(x) \le \nu(x)} \).
Then \( \mu(S_{m}) \le 2^{-m} \).
If \( \sum_{x} (\nu(x) - \mu(x)) < 2^{-n} \) then \( \nu(S_{1}) < 2^{-n+1} \).
  \end{lemma}

The proof is straightforward.

Let 
 \[
     X(n) = \setof{x : f(x) = n}.
 \]

 \begin{lemma}\label{lem:co-co-rare} We have
 \[
  \m(X(n)) \lem n^{1.2} 2^{-n}.
 \]
 \end{lemma}
 \begin{proof}
Using the definition of \( E^{t} \) in~\eqref{eq:E^{t}}, let
\begin{align*}
      \m^{t}(x) &= \sum\setof{p : T(p) = x \txt{ in \( < t \) steps} },
\\    K^{t}(x)  &= -\log \m^{t}(x).
\end{align*}
Then according to the definition~\eqref{eq:m(x)} we have
\( \m(x) = \m^{\infty}(x) \), and \( K(x) \eqa K^{\infty}(x) \).
Let
 \begin{align*}
     t(k) &= \min\setof{t : \Og(1 : k) < \Og^{t}},
\\ \mu    &= \m^{t(k)}.
 \end{align*}
Clearly, \( K^{t(k)}(x) \gea K(x) \).
Let us show that \( K^{t(k)}(x) \)
is a good approximation for \( K(x) \) for most \( x \).
Let
 \[
   Y(k) = \setof{x : K(x) \le K^{t(k)}(x) - 1}.
 \]
By definition, for \( x \notin Y(k) \) we have
 \[
   |K^{t(k)}(x) - K(x)| \eqa 0.
 \]
On the other hand, applying Lemma~\ref{lem:measure-approx} with
\( \mu = \m^{t(k)} \), \( \nu = \m \), we obtain 
 \begin{equation}\label{eq:Yk-ub}
   \m(Y(k)) < 2^{-k + 1}.
 \end{equation}
Note that
 \[
   K(K^{t(k)}(x) \mvert x, \Og(1 : k)) \eqa 0,
 \]
therefore for \( x \notin Y(k) \) we have
 \begin{align*}
  K(K(x) \mvert x, \Og(1 : k)) &\eqa 0,
\\   K(K(x) \mvert x) &\lea k + 1.2 \log k.
 \end{align*}
If \( n = k + 1.2\log k \) then \( k \eqa n - 1.2\log n \), and hence,
if \( x \notin Y(n - 1.2\log n) \) then \( K(K(x) \mvert x) \lea n \).
Thus, there is a constant \( c \) such that 
 \[
   X(n) \sbsq Y(n - 1.2\log n - c).
 \]
Using~\eqref{eq:Yk-ub} this gives the statement of the lemma.
 \end{proof}

 \begin{proof}[Proof of Theorem~\protect\ref{thm:co-rarely-co}]
Since \( f(x) \) is computable from \( \Og \), the function
 \[
   \nu(x) = \m(x)2^{f(x)}(f(x))^{-2.4}
 \]
is computable from \( \Og \).
Let us show that it is a semimeasure (within a multiplicative constant).
Indeed, using the above lemma:
 \[
   \sum_{x} \nu(x) = \sum_{k} \sum_{x \in X(k)} \nu(x)
                 = \sum_{k} 2^{k} k^{-2.4} \m(X(k))
                 = \sum_{k} k^{-1.2} \lem 1.
 \]
Since \( \m(\cdot \mvert \Og) \) is the universal semimeasure relative to \( \Og \)
we find \( \m(x \mvert \Og) \gem \nu(x) \), hence
 \begin{align*}
   K(x \mvert \Og) &\lea -\log \nu(x) = K(x) - f(x) + 2.4\log f(x),
\\    I(\Og : x) &\gea f(x) - 2.4 \log f(x).
 \end{align*}
Since \( \Og \) is equivalent to \( \chi \), the proof is complete.
  \end{proof}




\chapter{Generalizations}

 \section{Continuous spaces, noncomputable measures}\label{sec:cont-spaces}

This section starts the consideration of randomness in continuous spaces and
randomness with respect to noncomputable measures.

\subsection{Introduction}

The algorithmic theory of randomness is well developed when the underlying
space is the set of finite or infinite sequences and the underlying
probability distribution is the uniform distribution or a computable
distribution.
These restrictions seem artificial.  
Some progress has been
made to extend the theory to arbitrary Bernoulli distributions by
Martin-L\"of in~\cite{MLof66art}, and to arbitrary distributions, by Levin
in~\cite{LevinRand73,LevinUnif76,LevinRandCons84}.
The paper~\cite{HertlingWeihrauchRand98} by Hertling and Weihrauch 
also works in general spaces, but it is restricted to computable
measures.
Similarly, Asarin's thesis~\cite{Asarin88} defines randomness for sample
paths of the Brownian motion: a fixed random process with computable
distribution.

The exposition here has been inspired mainly by Levin's early 
paper~\cite{LevinUnif76} (and the much more elaborate~\cite{LevinRandCons84}
that uses different definitions): let us summarize part of the content
of~\cite{LevinUnif76}.
The notion of a constructive topological space \( \bX \) and the space of
measures over \( \bX \) is introduced.
Then the paper defines the notion of a uniform test.
Each test is a lower semicomputable function \( (\mu,x) \mapsto f_{\mu}(x) \),
satisfying \( \int f_{\mu}(x) \mu(dx) \le 1 \) for each measure \( \mu \).
There are also some additional conditions.
The main claims are the following.
 \begin{alphenum}

  \item There is a universal test \( \t_{\mu}(x) \), a test 
such that for each other test \( f \) there is a constant \( c > 0 \) with
\( f_{\mu}(x) \le c\cdot \t_{\mu}(x) \).
  The \df{deficiency of randomness} is defined as 
\( \d_{\mu}(x) = \log\t_{\mu}(x) \). 

  \item The universal test has some strong properties of ``randomness
conservation'': these say, essentially,
that a computable mapping or a computable
randomized transition does not decrease randomness.

  \item There is a measure \( M \) with the property that for
every outcome \( x \) we have \( \t_{M}(x) \le 1 \).
In the present work, we will call such measures \df{neutral}.

  \item\label{i:Levin.semimeasure}
Semimeasures (semi-additive measures)
are introduced and it is shown that there is a lower semicomputable
semi-measure that is neutral (let us assume that the \( M \) introduced above
is lower semicomputable).

  \item Mutual information \( I(x : y) \)
is defined with the help of (an appropriate
version of) Kolmogorov complexity, between outcomes \( x \) and \( y \).
It is shown that \( I(x : y) \) is essentially equal to \( \d_{M \times M}(x,y) \).
This interprets mutual information as a kind of ``deficiency of independence''.

 \end{alphenum}
This impressive theory leaves a number of issues unresolved:
  \begin{enumerate}

   \item The space of outcomes is restricted to be a compact topological
space, moreover, a particular compact space: the set of 
sequences over a finite alphabet (or, implicitly in~\cite{LevinRandCons84},
a compactified infinite alphabet).
However, a good deal of modern probability theory happens over
spaces that are not even locally compact:
for example, in case of the Brownian motion, over the
space of continuous functions.

   \item The definition of a uniform randomness test includes some conditions
(different ones in~\cite{LevinUnif76} and 
in~\cite{LevinRandCons84}) that seem somewhat arbitrary.

   \item No simple expression is known for the general
universal test in terms of description complexity.
Such expressions are nice to have if they are available.

  \end{enumerate}

Here, we intend to carry out as much of Levin's program as seems
possible after removing the restrictions.
A number of questions remain open, but we feel that they are worth
to be at least formulated.
A fairly large part of the exposition
is devoted to the necessary conceptual machinery.
This will also allow to carry further some other initiatives started in the
works~\cite{MLof66art} and~\cite{LevinRand73}: the study of tests that test
nonrandomness with respect to a whole class of measures (like the Bernoulli
measures).

Constructive analysis has been developed by several authors,
converging approximately on the same concepts,
We will make use of a simplified  version of the theory introduced  
in~\cite{WeihrauchComputAnal00}.
As we have not found a constructive measure theory in the
literature fitting our purposes, we will develop this theory here,
over (constructive) complete separable metric spaces.
This generality is well supported by standard results in measure
theoretical probability, and is
sufficient for constructivizing a large part of current probability
theory.

The appendix recalls some of the needed topology, measure theory,
constructive analysis and constructive measure theory.
We also make use of the notation introduced there.

\subsection{Uniform tests}

We first define tests of randomness with respect to an arbitrary measure.
Recall the definition of lower semicomputable
real functions on a computable metric space \( \bX \).

\begin{sloppypar}
 \begin{definition}\label{def:mu-test}
Let us be given a computable complete metric space \( \bX = (X, d, D, \ag) \).
For an arbitrary measure \( \mu\in\cM(\bX) \), a \( \mu \)-\df{test of randomness}
is a \( \mu \)-lower semicomputable function \( f:X\to\ol\dR_{+} \) with the property
\( \mu f \le 1 \).
We call an element \( x \) \df{random} with respect to \( \mu \)
if \( f(x)<\infty \) for all \( \mu \)-tests \( f \).
But even among random elements, the size of the tests quantifies
(non-)randomness.

A \df{uniform test} of randomness is a lower semicomputable function
\( f:\bX\times\cM(\bX)\to \ol\dR_{+} \), 
written as \( \tup{x,\mu}\mapsto f_{\mu}(x) \)
such that \( f_{\mu}(\cdot) \) is a \( \mu \)-test for each \( \mu \).
 \end{definition}
\end{sloppypar}

The condition \( \mu f\le 1 \) guarantees that the
probability of those outcomes whose randomness is \( \ge m \) is at most
\( 1/m \).
The definition of tests is in the spirit of Martin-L\"of's tests.
The important difference is in the semicomputability condition:
instead of restricting the measure \( \mu \) to be computable, we require the
test to be lower semicomputable also in its argument \( \mu \).

The following result implies that every \( \mu \)-test can be extended to a uniform
test.

 \begin{theorem}\label{thm:trim}
Let \( \phi_{e} \), \( e=1,2,\dots \) be an enumeration of all lower semicomputable
functions \( \bX\times\bY\times\cM(\bX)\to\ol\dR_{+} \), where \( \bY \) is also a
computable
metric space, and \( s:\bY\times\cM(\bX)\to\ol\dR \) a lower semicomputable
function.
There is a recursive function \( e\mapsto e' \) with the property that
 \begin{alphenum}
   \begin{sloppypar}
  \item For each \( e \), the function \( \phi_{e'}(x,y,\mu) \) is everywhere defined
with \( \mu^{x}\phi_{e'}(x,y,\mu)\le s(y,\mu) \).     
   \end{sloppypar}
  \item For each \( e,y,\mu \), if \( \mu^{x} \phi_{e}(x,y\mu)<s(y,\mu) \) then 
\( \phi_{e'}(\cdot,y,\mu)=\phi_{e}(\cdot,y,\mu) \).
 \end{alphenum}
 \end{theorem}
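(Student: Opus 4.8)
The plan is to modify each $\phi_e$ by a "gradual truncation" that keeps the running integral below the threshold $s(y,\mu)$. The key point is that $s$ is lower semicomputable, so we have an increasing sequence of rational approximations $s_n(y,\mu) \nearrow s(y,\mu)$, and similarly $\phi_e$ is a pointwise supremum of an increasing sequence of "basic" lower semicomputable functions. First I would fix, for each $e$, a computable sequence of nonnegative lower semicomputable functions $\phi_e^{(n)} \nearrow \phi_e$ where each $\phi_e^{(n)}$ is a finite sum of indicator-type bumps on basic balls of $\bX$ times positive rationals (this is the standard normal form for lower semicomputable functions on a computable metric space, available from the appendix). Then I would define $\phi_{e'}$ as the supremum over $n$ of a truncated version $\psi_n$, constructed so that at every stage the partial integral $\mu^x \psi_n(x,y,\mu)$ stays $\le s_n(y,\mu)$, or more safely $\le s(y,\mu)$.

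The construction of $\psi_n$ proceeds by dovetailing: enumerate the bumps making up $\phi_e$ in order of appearance; when a new bump $c\cdot 1_{B}$ on a basic ball $B$ is presented, check whether adding it (scaled by $c$) keeps $\mu^x(\text{current function})$ below the current lower approximation of $s(y,\mu)$. Since $\mu$ and the basic balls give us lower bounds on $\mu(B)$ that converge upward — so an \emph{upper} bound on how much integral a bump contributes is \emph{not} directly available — I would instead use the trick of adding bumps only fractionally: scale the new bump by the largest rational factor $\le c$ for which the (lower-semicomputably estimated) integral still provably stays below $s$. Because both $\mu^x$ of a basic function (as a function of $\mu$) and $s$ are lower semicomputable, the set of "safe" fractional scalings is lower semicomputable in $(y,\mu)$, so this is effective and the resulting $\phi_{e'}$ is lower semicomputable. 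Property (a) then holds because at every finite stage the integral is bounded by $s(y,\mu)$ and the integral of the sup is the sup of the integrals (monotone convergence), giving $\mu^x\phi_{e'}(x,y,\mu)\le s(y,\mu)$.

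For property (b): if $\mu^x\phi_e(\cdot,y,\mu) < s(y,\mu)$ strictly, then there is some rational gap, and since $s_n(y,\mu)$ eventually exceeds $\mu^x\phi_e(\cdot,y,\mu)$, every bump of $\phi_e$ eventually gets admitted with its full coefficient — the fractional scaling factor eventually reaches $1$ for each bump, because adding the entire remaining tail of $\phi_e$ still keeps the integral below $s$. Hence $\phi_{e'}(\cdot,y,\mu) = \phi_e(\cdot,y,\mu)$ in that case. I would need to be a little careful that "eventually each bump gets full weight" is compatible with the dovetailing order, but since there are only countably many bumps and each is revisited infinitely often in the dovetailing, monotonicity of the safe-scaling set in $n$ makes this work.

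The main obstacle I anticipate is the asymmetry in semicomputability: we can lower-bound $\mu^x$ of a basic nonnegative function but we genuinely cannot upper-bound it, so we can never be \emph{sure} that admitting a bump at full strength is safe at a finite stage — which is exactly why the fractional/gradual admission is essential rather than a simple "accept or reject" rule. Getting the bookkeeping right so that (a) is guaranteed unconditionally while (b) still recovers $\phi_e$ exactly (not just up to a constant) under the strict-inequality hypothesis is the delicate part; the rest is routine verification that the construction is uniformly effective in $e$, yielding the recursive map $e \mapsto e'$.
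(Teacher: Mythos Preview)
Your main obstacle is self-inflicted. You choose the ``normal form'' for $\phi_e$ to be a supremum of finite sums of indicators $1_B$ of basic balls, and then correctly observe that $\mu(B)$ is only \emph{lower} semicomputable in $\mu$, so you can never certify an upper bound on the integral of a partial sum. Your fractional-admission scheme does not fix this: at any finite stage you only know a lower approximation to $\mu^x\psi_n$, so ``the lower estimate is still below $s_n$'' says nothing about the true integral, and property~(a) is not guaranteed. The phrase ``the set of safe fractional scalings is lower semicomputable in $(y,\mu)$'' is exactly backwards: the set $\{\alpha:\alpha\cdot\mu(B)+\text{(rest)}<s(y,\mu)\}$ has a lower-semicomputable quantity on the \emph{left}, so membership is not a $\Sigma_1$ condition.

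The paper sidesteps this entirely by using the other normal form: $\phi_e=\sup_i h_{e,i}$ with each $h_{e,i}$ a \emph{computable} (hence bounded continuous) function of $(x,y,\mu)$, monotone in $i$. For such $h_{e,i}$ the integral $\mu^x h_{e,i}(x,y,\mu)$ is \emph{computable} in $(y,\mu)$, not merely lower semicomputable. Writing $s=\sup_j s_j$ with $s_j$ computable, the condition $\mu^x h_{e,i}(x,y,\mu)<s_j(y,\mu)$ is then a strict inequality between two computable reals, hence an open (semidecidable) condition. One simply sets $h'_{e,i,j}=h_{e,i}$ on that open set and $0$ off it, then takes $\phi_{e'}=\sup_i\sup_j h'_{e,i,j}$. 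No fractional scaling, no dovetailed revisiting; each $h_{e,i}$ is accepted whole or not at all, and monotone convergence gives~(a) while the strict-inequality hypothesis in~(b) guarantees that for every $i$ some $j$ works, so $\phi_{e'}=\phi_e$.
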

This theorem generalizes a theorem of Hoyrup and Rojas (in allowing a lower
semicomputable upper bound \( s(y,\mu) \)).
 \begin{proof}
By Proposition~\ref{propo:semicomp-as-limit}, we can represent \( \phi_{e}(x,y,\mu) \) 
as a supremum 
\( \phi_{e}=\sup_{i}h_{e,i} \) where \( h_{e,i}(x,y,\mu) \) is a computable function of
\( e,i \) monotonically increasing in \( i \).
Similarly, we can represent \( s(y,\mu) \) as a supremum \( \sup_{j}s_{j}(y,\mu) \)
where \( s_{j}(y,\mu) \) is a computable function monotonically increasing in \( j \).
The integral \( \mu^{x} h_{e,i}(x,y,\mu) \) is computable as a function of \( (y,\mu) \),
in particular it is upper semicomputable.

\begin{sloppypar}
Define \( h'_{e,i,j}(x,y,\mu) \) as \( h_{e,i}(x,y,\mu) \) for all \( j,y,\mu \)
with \( \mu^{x} h_{e,i}(x,y,\mu)<s_{j}(y,\mu) \), and 0 otherwise.
Since \( s_{j}(y,\mu) \) is computable this 
definition makes the function \( h'_{e,i,j}(x,y,\mu) \) lower semicomputable.
The function \( h''_{e,i}(x,y,\mu)=\sup_{j}h'_{e,i,j}(x,y,\mu) \) is then also lower
semicomputable, with \( h''_{e,i}(x,y,\mu)\le h_{e,i}(x,y,\mu) \),
and \( \mu^{x} h'_{e,i}(x,y,\mu)\le s(y,\mu) \).
Also, \( h''_{e,i}(x,y,\mu) \) is monotonically increasing in \( i \).
The function \( \phi'_{e}(x,y,\mu)=\sup_{i}h''_{e,i}(x,y,\mu) \) is then also lower
semicomputable, and by Fubini's theorem we have
\( \mu^{x}\phi'_{e}(x,y,\mu)\le s(y,\mu) \).
\end{sloppypar}

\begin{sloppypar}
Define \( \phi_{e'}(x,y,\mu)=\phi'_{e}(x,y,\mu) \).
Consider any \( e,y,\mu \) such that \( \mu^{x}\phi_{e}(x,y,\mu)<s(y,\mu) \) holds.
Then for every \( i \) there is a \( j \) with \( \mu^{x}h_{e,i}(x,y,\mu)<s_{j}(y,\mu) \),
and hence \( h'_{e,i,j}(x,y,\mu)=h_{e,i}(x,y,\mu) \).
It follows that \( h''_{e,i}(x,y,\mu)=h_{e,i}(x,y,\mu) \) for all \( i \) and hence
\( \phi'_{e}(x,y,\mu)=\phi_{e}(x,y,\mu) \).
\end{sloppypar}
 \end{proof}

\begin{corollary}[Uniform extension]\label{coroll:trim}
There is an operation \( H_{e}(x,\mu)\mapsto H_{e'}(x,\mu) \) with the property
that \( H_{e'}(x,\mu) \) is a uniform test and if \( 2 H_{e}(\cdot,\mu) \) is a \( \mu \)-test
then \( H_{e'}(x,\mu)=H_{e}(x,\mu) \).
\end{corollary}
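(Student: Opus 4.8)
The plan is to derive the corollary directly from Theorem~\ref{thm:trim} by specializing the data. First I would take \( \bY \) to be the trivial one-point computable metric space, so that the dependence on \( y \) disappears and an enumeration \( \phi_{e} \) of lower semicomputable functions \( \bX\times\cM(\bX)\to\ol\dR_{+} \) is exactly an enumeration of the candidate uniform-test building blocks \( H_{e}(x,\mu) \); concretely set \( \phi_{e}(x,\mu) = 2 H_{e}(x,\mu) \) (or arrange the enumeration so that \( H_{e} \) and \( \phi_{e} \) range over the same list up to the factor 2). Next I would take the bound function to be the constant \( s \equiv 1 \), which is trivially lower semicomputable, so the hypothesis \( \mu^{x}\phi_{e}(x,\mu) < s \) of part (b) becomes \( \mu(2H_{e}(\cdot,\mu)) < 1 \), i.e. ``\( 2H_{e}(\cdot,\mu) \) is (strictly inside) a \( \mu \)-test''.

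With these choices, Theorem~\ref{thm:trim} hands us a recursive map \( e\mapsto e' \) such that \( \phi_{e'} \) is everywhere defined with \( \mu^{x}\phi_{e'}(x,\mu)\le 1 \) — that is, \( \phi_{e'} \) is a \( \mu \)-test for every \( \mu \), and being lower semicomputable jointly in \( (x,\mu) \) it is a uniform test — and such that whenever \( \mu^{x}\phi_{e}(x,\mu)<1 \) we get \( \phi_{e'}(\cdot,\mu)=\phi_{e}(\cdot,\mu) \). Then I would set \( H_{e'}(x,\mu) := \phi_{e'}(x,\mu) \) (possibly dividing back by 2 to keep the normalization consistent with the statement, but this only changes constants). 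The first bullet of the corollary is immediate: \( H_{e'} \) is a uniform test. For the second bullet: if \( 2H_{e}(\cdot,\mu) \) is a \( \mu \)-test, then \( \mu^{x}(2H_{e}(x,\mu)) = \mu^{x}\phi_{e}(x,\mu) \le 1 \); a small wrinkle is that part (b) of the theorem requires strict inequality \( <1 \), whereas ``is a \( \mu \)-test'' only gives \( \le 1 \). I would handle this by applying the theorem with the bound \( s \equiv 1 \) to the scaled function \( \phi_{e} = \tfrac32 H_{e} \) instead (or equivalently absorb a factor into the enumeration), so that \( 2H_{e}(\cdot,\mu) \) being a test, i.e. \( \mu^x(2H_e) \le 1 \), yields \( \mu^{x}\phi_{e} \le \tfrac34 < 1 \), triggering the equality clause and giving \( H_{e'}(x,\mu)=H_{e}(x,\mu) \).

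The only real obstacle is this bookkeeping around the constants and the strict-versus-nonstrict inequality: one must pick the scaling factor built into \( \phi_{e} \) so that the phrase ``\( 2H_{e}(\cdot,\mu) \) is a \( \mu \)-test'' (a non-strict condition) implies the strict hypothesis needed by Theorem~\ref{thm:trim}(b), while simultaneously ensuring \( H_{e'} \) itself comes out as a genuine uniform test satisfying \( \mu H_{e'} \le 1 \). Everything else — that the one-point space is a computable metric space, that the constant \( 1 \) is lower semicomputable, that suppressing the \( y \)-argument is harmless — is routine. I would therefore keep the proof to a few lines: specialize \( \bY \) to a point and \( s \) to a constant in Theorem~\ref{thm:trim}, observe that the resulting \( e\mapsto e' \) is exactly the claimed operation, and note that the factor-of-2 slack in the statement is precisely what makes the non-strict test condition suffice.
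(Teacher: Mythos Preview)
Your approach is correct and is essentially the paper's: specialize Theorem~\ref{thm:trim} to a one-point space \( \bY \) and constant bound \( s\equiv 1 \), so that the trimmed function is automatically a uniform test and the factor-of-\(2\) slack in the hypothesis converts the non-strict test condition into the strict inequality required by part (b). The paper's proof is the single line ``set \( \phi_{e}(x,y,\mu)=\tfrac12 H_{e}(x,\mu) \) with \( s(y,\mu)=1 \)''.

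One remark: your constant-juggling is more elaborate than necessary. The cleanest choice is simply \( \phi_{e}=H_{e} \) and \( H_{e'}:=\phi_{e'} \). Then \( \mu\phi_{e'}\le 1 \) makes \( H_{e'} \) a uniform test, and the hypothesis ``\( 2H_{e}(\cdot,\mu) \) is a \( \mu \)-test'' gives \( \mu H_{e}\le\tfrac12<1 \) directly, triggering the equality clause without any rescaling by \( \tfrac32 \) or dividing back afterwards. The factor of \(2\) in the \emph{statement} of the corollary is precisely what absorbs the strict-versus-nonstrict gap; you don't need to introduce a second scaling on top of it.
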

 \begin{proof}
In Theorem~\ref{thm:trim} set \( \phi_{e}(x,y,\mu)=\frac{1}{2}H_{e}(x,\mu) \) with
\( s(y,\mu)=1 \).
 \end{proof}

\begin{corollary}[Universal generalized test]\label{coroll:trim-univ}
Let \( s:\bY\times\cM(\bX)\to\ol\dR_{+} \) a lower semicomputable
function.
Let \( E \) be the set of 
lower semicomputable functions \( \phi(x,y,\mu)\ge 0 \) with
\( \mu^{x}\phi(x,y,\mu)\le s(y,\mu) \).
There is a function \( \psi\in E \) that is \emph{optimal} in the sense
that for all \( \phi\in E \) there is a constant \( c_{\phi} \) with 
\( \phi\le 2c_{\phi}\psi \).
\end{corollary}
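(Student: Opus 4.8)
The plan is to build the optimal function $\psi$ by a standard weighted-sum-over-all-candidates construction, using the recursive ``trimming'' operation $e \mapsto e'$ supplied by Theorem~\ref{thm:trim} to guarantee that every summand genuinely lies in $E$. First I would fix an enumeration $\phi_e$, $e=1,2,\dots$ of all lower semicomputable functions $\bX\times\bY\times\cM(\bX)\to\ol\dR_+$ (as in the hypothesis of Theorem~\ref{thm:trim}), and apply that theorem with the given lower semicomputable bound $s(y,\mu)$ to obtain the trimmed functions $\phi_{e'}$. By part~(a) of the theorem each $\phi_{e'}$ is everywhere defined with $\mu^x\phi_{e'}(x,y,\mu)\le s(y,\mu)$, so $\phi_{e'}\in E$; by part~(b), whenever the original $\phi_e$ already satisfied $\mu^x\phi_e(x,y,\mu)<s(y,\mu)$ we have $\phi_{e'}=\phi_e$. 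Thus $\{\phi_{e'}\}$ is an effective enumeration of elements of $E$ that ``captures'' every strict element of $E$ exactly.

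Next I would set
 \[
   \psi(x,y,\mu) = \sum_{e\ge 1} 2^{-e-1}\,\phi_{e'}(x,y,\mu).
 \]
This is lower semicomputable, being an effective sum of a uniformly lower semicomputable family, and by term-by-term integration (monotone convergence / Fubini, exactly as invoked at the end of the proof of Theorem~\ref{thm:trim}) we get $\mu^x\psi(x,y,\mu)\le\sum_e 2^{-e-1}s(y,\mu)\le s(y,\mu)$, so $\psi\in E$. For optimality, take any $\phi\in E$. The subtle point is that $\phi$ satisfies $\mu^x\phi\le s$ with possibly non-strict inequality, so it may not itself be among the $\phi_{e'}$; the remedy is to consider $\tfrac12\phi$, which satisfies $\mu^x(\tfrac12\phi)\le\tfrac12 s<s$ wherever $s>0$ (and the inequality is trivial where $s=0$, since then $\phi=0$ $\mu$-a.e. and one argues on the support), hence $\tfrac12\phi$ occurs as some $\phi_e$ in the enumeration and by part~(b) equals $\phi_{e'}$ for that index $e$. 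Then $\psi\ge 2^{-e-1}\phi_{e'}=2^{-e-2}\phi$, so $\phi\le 2^{e+2}\psi$, giving the required constant $c_\phi=2^{e+1}$ in the form $\phi\le 2c_\phi\psi$.

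The main obstacle I expect is the handling of the non-strict versus strict inequality in the definition of $E$: Theorem~\ref{thm:trim}(b) only preserves functions whose integral is \emph{strictly} below $s$, so one cannot directly place an arbitrary $\phi\in E$ into the sum, and the factor-of-two rescaling trick (together with a small argument covering the locus where $s(y,\mu)=0$) is exactly what is needed to bridge this gap. One should also be slightly careful that multiplying $\phi$ by $\tfrac12$ is an effective operation so that $\tfrac12\phi$ really appears in the fixed enumeration $\phi_e$; this is immediate since scalar multiplication by a rational preserves lower semicomputability uniformly. Everything else — lower semicomputability of the sum, the integral bound via Fubini, and bookkeeping of the constant — is routine and follows the pattern already established in the proof of Theorem~\ref{thm:trim} and Corollary~\ref{coroll:trim}.
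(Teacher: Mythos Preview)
Your proposal is correct and follows essentially the same route as the paper: enumerate all lower semicomputable $\phi_e$, trim via Theorem~\ref{thm:trim} to force membership in $E$, take a weighted sum $\psi=\sum_e 2^{-e}\phi_{e'}$ (the paper uses $2^{-e}$, you use $2^{-e-1}$; this is immaterial), and recover optimality by noting that for any $\phi\in E$ the halved function $\tfrac12\phi$ appears as some $\phi_e=\phi_{e'}$. You are in fact more explicit than the paper about why the halving is needed (to convert the non-strict integral bound into a strict one so part~(b) of Theorem~\ref{thm:trim} applies), and you correctly flag the residual issue at points where $s(y,\mu)=0$, which the paper glosses over.
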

 \begin{proof}
Apply the operation \( e\mapsto e' \) of theorem~\ref{thm:trim} to the sequence
\( \phi_{e}(x,y,\mu) \) (\( e=1,2,\dots \)) of all lower semicomputable
functions of \( x,y,\mu \).
The elements of the sequence \( \phi'_{e}(x,y,\mu) \), \( e=1,2,\dots \)  are in \( E \) and
and the sequence \( 2\phi'_{e}(x,y,\mu) \), \( e=1,2,\dots \) contains all
elements of \( E \).
Hence 
the function \( \psi(x,y,\mu)=\sum_{e=1}^{\infty}2^{-e} \phi'_{e}(x,y,\mu) \) is in
\( E \) and has the optimality property.
 \end{proof}

 \begin{definition}
A uniform test \( u \) is called \df{universal} if for every other test \( t \) there
is a constant \( c_{t}>0 \) such that for all \( x,\mu \) we have \( t_{\mu}(x) \le c u_{\mu}(x) \). 
 \end{definition}

\begin{theorem}[Universal test,\cite{HoyrupRojasRandomness09}]
There is a universal uniform test.
\end{theorem}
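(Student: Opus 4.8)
The plan is to apply Corollary~\ref{coroll:trim-univ} (Universal generalized test) with the parameter space \( \bY \) collapsed to a point and with the bound \( s \equiv 1 \). The key observation is that a uniform test in the sense of Definition~\ref{def:mu-test} is precisely a lower semicomputable function \( f : \bX \times \cM(\bX) \to \ol\dR_{+} \) satisfying \( \mu^{x} f_{\mu}(x) \le 1 \) for every \( \mu \), and this is exactly the class of functions to which Corollary~\ref{coroll:trim-univ} produces an optimal element. So essentially all the work has already been done; this theorem is a packaging statement.

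\begin{proof}
Take \( \bY \) to be a one-point computable metric space, so that lower semicomputable functions \( \phi(x,y,\mu) \) are just lower semicomputable functions \( \phi(x,\mu) \), and set \( s(y,\mu) = 1 \), which is trivially lower semicomputable. Then the set \( E \) of Corollary~\ref{coroll:trim-univ} is exactly the set of lower semicomputable \( \phi : \bX\times\cM(\bX)\to\ol\dR_{+} \) with \( \mu^{x}\phi(x,\mu)\le 1 \) for all \( \mu \); that is, \( E \) is the set of uniform tests. Let \( \psi \) be the optimal element of \( E \) furnished by the corollary: for every uniform test \( \phi \) there is a constant \( c_{\phi} \) with \( \phi \le 2c_{\phi}\psi \) everywhere. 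Setting \( u_{\mu}(x) = \psi(x,\mu) \), we obtain a uniform test such that for every uniform test \( t \) there is a constant \( c_{t} = 2c_{t} \) with \( t_{\mu}(x) \le c_{t} u_{\mu}(x) \) for all \( x,\mu \). Hence \( u \) is universal in the sense of the preceding definition.
\end{proof}

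\noindent I do not anticipate a serious obstacle here; the only point requiring a moment's care is the bookkeeping in identifying the class \( E \) with the class of uniform tests (checking that the integrability side condition \( \mu^{x}\phi \le s = 1 \) really is the defining inequality \( \mu f \le 1 \) from Definition~\ref{def:mu-test}, and that ``lower semicomputable in \( (x,\mu) \)'' matches on both sides). If one wants to avoid introducing an artificial one-point space \( \bY \), the alternative is to re-run the construction of Corollary~\ref{coroll:trim-univ} directly: enumerate all lower semicomputable \( \phi_{e}(x,\mu)\ge 0 \), apply the trimming operation \( e\mapsto e' \) of Theorem~\ref{thm:trim} (with \( \bY \) a point and \( s\equiv 1 \)) to force \( \mu^{x}\phi_{e'}(x,\mu)\le 1 \) while leaving genuine tests unchanged, and take \( u_{\mu}(x) = \sum_{e\ge 1} 2^{-e}\phi_{e'}(x,\mu) \), which is again a uniform test dominating each \( \phi_{e'} \), hence dominating every uniform test up to a multiplicative constant.
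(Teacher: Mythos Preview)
Your proposal is correct and takes exactly the same approach as the paper: the paper's entire proof is the single sentence ``This is a special case of Corollary~\ref{coroll:trim-univ} with \( s(y,\mu)=1 \).'' You have merely spelled out the bookkeeping. One cosmetic slip: you wrote ``there is a constant \( c_{t} = 2c_{t} \)'', which is circular; you meant something like ``setting \( c'_{t} = 2c_{\phi} \)'' or simply absorbing the factor into the constant.
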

 \begin{proof}
This is a special case of Corollary~\ref{coroll:trim-univ} with \( s(y,\mu)=1 \).
 \end{proof}

\begin{definition}
Let us fix a universal uniform test, called \( \t_{\mu}(x) \).
An element \( x\in X \) is called \df{random} with respect to measure
\( \mu\in\cM(\bX) \) if \( \t_{\mu}(x)<\infty \).

The \df{deficiency of randomness} is defined as \( \d_{\mu}(x) = \log\t_{\mu}(x) \).
\end{definition}

If the space is discrete then typically all elements are random with respect to
\( \mu \), but they will still be distinguished according to their different values
of \( \d_{\mu}(x) \).

\subsection{Sequences}

Let our computable metric space \( \bX=(X,d,D,\ag) \) be the Cantor space
of Example~\ref{example:constr-metric}.\ref{i:x.constr-metric.Cantor}:
the set of sequences over a (finite or countable) alphabet \( \Sg^{\dN} \).
We may want to measure the non-randomness of finite sequences, viewing them
as initial segments of infinite sequences.
Take the universal test \( \t_{\mu}(x) \).
For this, it is helpful to apply the representation of
Proposition~\ref{propo:semicomp-rep.Cantor}, taking into account that adding
the extra parameter \( \mu \) does not change the validity of the theorem:

 \begin{proposition}\label{propo:finite-test}
There is a function
\( g:\cM(\bX)\times\Sg^{*}\to\ol\dR_{+} \) 
with \( \t_{\mu}(\xi) = \sup_{n} g_{\mu}(\xi^{\le n}) \), and 
with the following properties:
 \begin{alphenum}
  \item \( g \) is lower semicomputable.
  \item \( v\prefix w \) implies \( g_{\mu}(v)\le g_{\mu}(w) \).
  \item\label{i:extended.sum} For all integer \( n\ge 0 \) we have 
\( \sum_{w\in\Sg^{n}}\mu(w)g_{\mu}(w) \le 1 \).
 \end{alphenum}
 \end{proposition}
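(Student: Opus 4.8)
The plan is to start from the universal uniform test $\t_{\mu}(\xi)$ and apply the already-cited representation of lower semicomputable functions on Cantor space (Proposition~\ref{propo:semicomp-rep.Cantor}) with the extra measure-parameter $\mu$ carried along as a harmless spectator. Since $\t_{\mu}(\xi)$ is lower semicomputable in $(\mu,\xi)$, that representation gives a lower semicomputable $h:\cM(\bX)\times\Sg^{*}\to\ol\dR_{+}$, monotonic along prefixes ($v\prefix w\Rightarrow h_{\mu}(v)\le h_{\mu}(w)$), with $\t_{\mu}(\xi)=\sup_{n}h_{\mu}(\xi^{\le n})$. This immediately yields properties (a) and (b); the only thing still to arrange is the integral bound (c), $\sum_{w\in\Sg^{n}}\mu(w)h_{\mu}(w)\le 1$, which need not hold for this raw $h$ — monotone prefix-approximations of a test are not themselves ``levelwise tests''.

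First I would observe that for each fixed $n$ the function $w\mapsto h_{\mu}(w)$ on $\Sg^{n}$, pushed back to the cylinders, is dominated by $\t_{\mu}$: namely $h_{\mu}(\xi^{\le n})\le\t_{\mu}(\xi)$ for every $\xi$. Hence $\sum_{w\in\Sg^{n}}\mu(w)h_{\mu}(w)=\int h_{\mu}(\xi^{\le n})\,\mu(d\xi)\le\int\t_{\mu}(\xi)\,\mu(d\xi)\le 1$ by the defining inequality $\mu\t_{\mu}\le 1$ of a uniform test. So in fact the natural candidate already satisfies (c): $\mu(w)$ here denotes the measure of the cylinder $[w]$, and $\int_{\bX}h_{\mu}(\xi^{\le n})\,\mu(d\xi)$ is precisely $\sum_{w\in\Sg^{n}}\mu([w])h_{\mu}(w)$ because $h_{\mu}$ depends on $\xi$ only through $\xi^{\le n}$. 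Set $g=h$ and we are done.

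The main subtlety — the step I expect to be the real obstacle — is purely bookkeeping about \emph{which} representation of lower semicomputable functions on $\Sg^{\dN}$ is available and in what uniformity: Proposition~\ref{propo:semicomp-rep.Cantor} must be invoked with the parameter $\mu$ ranging over the (non-discrete, non-compact in general) space $\cM(\bX)$, so I would need to confirm that the cited proposition is stated, or trivially extends, to lower semicomputable functions of a pair (outcome, auxiliary point in a computable metric space), producing a prefix-monotone lower semicomputable $h_{\mu}(w)$ with $\sup_{n}h_{\mu}(\xi^{\le n})=\t_{\mu}(\xi)$. Granting that — which the excerpt explicitly licenses (``taking into account that adding the extra parameter $\mu$ does not change the validity of the theorem'') — the remaining integral estimate is the one-line Fubini/monotone-convergence computation above, and properties (a), (b) are read off directly from the representation. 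So the proof is: apply the parametrized prefix-representation to get $g=h$ satisfying (a),(b); then check (c) by integrating $g_{\mu}(\xi^{\le n})\le\t_{\mu}(\xi)$ against $\mu$ and using $\mu\t_{\mu}\le 1$.
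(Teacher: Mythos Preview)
Your proposal is correct and is precisely the approach the paper intends: apply the parametrized version of Proposition~\ref{propo:semicomp-rep.Cantor} to the lower semicomputable map $(\mu,\xi)\mapsto\t_{\mu}(\xi)$ to obtain $g$ with properties (a) and (b), and then derive (c) from $g_{\mu}(\xi^{\le n})\le\t_{\mu}(\xi)$ together with $\mu\t_{\mu}\le 1$. The paper in fact states the proposition without a written-out proof, merely pointing to the representation result; your argument supplies exactly the missing verification of (c), which the paper leaves to the reader.
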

The properties of the function \( g_{\mu}(w) \) clearly imply that 
\( \sup_{n} g_{\mu}(\xi^{\le n}) \) is a uniform test.

The existence of a universal function among the functions \( g \) can be proved by
the usual methods:

 \begin{proposition}\label{propo:finite-univ}
Among the functions \( g_{\mu}(w) \) satisfying the properties listed in
Proposition~\ref{propo:finite-test}, there is one that
dominates to within a multiplicative constant.
 \end{proposition}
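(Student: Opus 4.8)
The plan is to carry out the standard ``weighted sum of all candidates'' construction, exactly analogous to the proof of Corollary~\ref{coroll:trim-univ}, but now working with the functions on finite strings rather than with integral tests directly. First I would invoke Proposition~\ref{propo:finite-test} in its full (relativized) form: there is a lower semicomputable function $g:\cM(\bX)\times\Sg^{*}\to\ol\dR_{+}$, monotone under the prefix relation, satisfying $\sum_{w\in\Sg^{n}}\mu(w)g_{\mu}(w)\le 1$ for all $n$. The key point to extract from its proof (or from Theorem~\ref{thm:trim} applied with the parameter space $\bY=\Sg^{*}$ and bound $s\equiv 1$) is that there is a recursive ``trimming'' operation $e\mapsto e'$ sending an arbitrary lower semicomputable $\phi_{e}:\cM(\bX)\times\Sg^{*}\to\ol\dR_{+}$ to one that is forced to satisfy properties (a)--(c), and that preserves $\phi_{e}$ whenever $\phi_{e}$ already satisfies them. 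I would state this explicitly, since it is the engine of the argument.

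Next I would enumerate all lower semicomputable functions $\phi_{e}(\mu,w)$, $e=1,2,\dots$, of a measure and a finite string, apply the trimming operation to get $\phi'_{e}$, and set
\[
 g^{0}_{\mu}(w) = \sum_{e=1}^{\infty} 2^{-e}\,\phi'_{e,\mu}(w).
\]
Each $\phi'_{e}$ is lower semicomputable and monotone under $\prefix$, and satisfies $\sum_{w\in\Sg^{n}}\mu(w)\phi'_{e,\mu}(w)\le 1$, so the same three properties hold for $g^{0}$: lower semicomputability is preserved under a computable-weight countable sum, prefix-monotonicity is preserved under sums, and by linearity $\sum_{w\in\Sg^{n}}\mu(w)g^{0}_{\mu}(w)\le\sum_{e}2^{-e}=1$. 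Moreover, if $g_{\mu}(w)$ is \emph{any} function with the properties of Proposition~\ref{propo:finite-test}, then a suitable scalar multiple $c\,g$ with $c\le 1$ is a lower semicomputable function that already satisfies (a)--(c), hence is fixed by trimming, hence appears (up to the factor of $2$ introduced by trimming, as in Corollary~\ref{coroll:trim-univ}) as one of the $\phi'_{e}$; therefore $g\lem g^{0}$. This is exactly the required multiplicative domination.

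I do not anticipate a genuine obstacle here: the whole argument is a routine transfer of the machinery already built in Theorem~\ref{thm:trim} and Corollary~\ref{coroll:trim-univ} to the setting of functions on $\Sg^{*}$, the only mild care being to check that the trimming operation of Theorem~\ref{thm:trim} can be run with $\bY=\Sg^{*}$ as the parameter space and that the ``sum over $w\in\Sg^{n}$'' condition plays the role formerly played by the integral condition $\mu^{x}\phi\le 1$ (this works because $w\mapsto\mu(w)$ for $\card{w}=n$ is itself a measure on $\Sg^{n}$, so condition~\ref{i:extended.sum} is literally an integral bound). The one point worth a sentence of comment in the write-up is why lower semicomputability of $g^{0}$ survives the infinite sum: the partial sums $\sum_{e\le N}2^{-e}\phi'_{e,\mu}(w)$ are lower semicomputable uniformly in $N$ and increase to $g^{0}$, so $g^{0}$ is a supremum of a computable family of lower semicomputable functions and hence lower semicomputable. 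With that remark in place the proof is complete.
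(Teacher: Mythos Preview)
The paper does not actually give a proof of this proposition; it simply asserts that the existence of such a universal $g$ ``can be proved by the usual methods.'' Your proposal is precisely the usual method---enumerate, trim, and take a weighted sum---so in spirit you are doing exactly what the paper intends.

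Two small points deserve tightening. First, Theorem~\ref{thm:trim} as stated trims to an integral bound but does nothing about prefix monotonicity; you should explicitly pass to the monotone hull $\hat\phi_{e}(\mu,w)=\sup_{v\sqsubseteq w}\phi_{e}(\mu,v)$ (still lower semicomputable, and equal to $\phi_{e}$ when $\phi_{e}$ is already monotone) before trimming. Second, condition~(c) is not a single integral but a family indexed by $n$, so framing it as ``Theorem~\ref{thm:trim} with $\bY=\Sg^{*}$'' is not quite right. The clean observation is that once the approximant is prefix-monotone and depends only on a prefix of bounded length $N$, the sums $\sum_{w\in\Sg^{n}}\mu(w)\hat h(\mu,w)$ are non-decreasing in $n$ and stabilise at $n=N$; since $\mu\mapsto\mu(w\Sg^{\dN})$ is continuous on cylinder sets, that stabilised value is computable in $\mu$, and one trims against it exactly as in the proof of Theorem~\ref{thm:trim}. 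With these two adjustments your argument goes through.
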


These facts motivate the following definition.

 \begin{definition}[Extended test]\label{def:extended-test}
Over the Cantor space, we
extend the definition of a universal test \( \t_{\mu}(x) \) to finite sequences
as follows.
We fix a function \( \t_{\mu}(w) \) with \( w\in\Sg^{*} \)
whose existence is assured by~Proposition~\ref{propo:finite-univ}.
For infinite sequences \( \xi \) we define 
\( \t_{\mu}(\xi) = \sup_{n}\t_{\mu}(\xi^{\le n}) \).
The test with values defined also on finite sequences will be called an
\df{extended test}.
 \end{definition}

We could try to define extended tests also over arbitrary constructive
metric spaces, extending them
to the canonical balls, with the monotonicity property that
\( \t_{\mu}(v)\le\t_{\mu}(w) \) if ball \( w \) is manifestly included in ball \( v \).
But there is nothing simple and obvious corresponding to the 
integral requirement~\eqref{i:extended.sum}.

Over the space \( \Sg^{\dN} \) for a finite alphabet \( \Sg \),
an extended test could also be extracted directly from a test, 
using the following formula (as observed by Vyugin and Shen).

\begin{definition}\label{def:ol-u}
 \begin{align*}
   \ol\u(z) = \inf\setof{\u(\og) : \og\text{ is an infinite extension of } z}.
 \end{align*}
\end{definition}
This function is lower semicomputable, by Proposition~\ref{propo:lsc-min}.



\subsection{Conservation of randomness}

For \( i=1,0 \), let
\( \bX_{i} = (X_{i}, d_{i}, D_{i}, \ag_{i}) \) be computable metric spaces, 
and let \( \bM_{i} = (\cM(\bX_{i}), \sg_{i}, \nu_{i}) \) 
be the effective topological space of
probability measures over \( \bX_{i} \).
Let \( \Lg \) be a computable probability kernel from \( \bX_{1} \) to \( \bX_{0} \) as
defined in Subsection~\ref{subsec:random-trans}.
In the following theorem, the same notation \( \d_{\mu}(x) \) will
refer to the deficiency of randomness with respect to two different spaces,
\( \bX_{1} \) and \( \bX_{0} \), but this should not cause confusion.
Let us first spell out the conservation theorem before interpreting it.

 \begin{theorem}\label{propo:conservation}
For a computable probability kernel \( \Lg \) from \( \bX_{1} \) to \( \bX_{0} \),
we have 
 \begin{equation}\label{eq:conservation}
  \lg_{x}^{y} \t_{\Lg^{*}\mu}(y) \lem \t_{\mu}(x).
 \end{equation}
 \end{theorem}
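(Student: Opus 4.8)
The plan is to show that the left-hand side of \eqref{eq:conservation}, viewed as a function of $x$ (and jointly of $x$ and $\mu$), is itself a uniform test, up to a multiplicative constant; universality of $\t_\mu(x)$ then yields the inequality. Concretely, define
$$
 t_\mu(x) = \lg_x^y \t_{\Lg^*\mu}(y),
$$
where $\lg_x^y$ denotes integration over $y$ against the kernel $\Lg(x,\cdot)$. First I would verify that $(x,\mu)\mapsto t_\mu(x)$ is lower semicomputable. This is where the machinery of the appendix on computable probability kernels enters: integrating a lower semicomputable function against a computable kernel yields a lower semicomputable function, and the composition $\mu\mapsto\Lg^*\mu$ is computable, so $\t_{\Lg^*\mu}(y)$ is lower semicomputable jointly in $(y,\mu)$, and then its $\Lg(x,\cdot)$-integral is lower semicomputable jointly in $(x,\mu)$.

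Next I would check the integral bound $\mu^x t_\mu(x)\le 1$. By Fubini's theorem for the kernel,
$$
 \mu^x t_\mu(x) = \mu^x \lg_x^y \t_{\Lg^*\mu}(y) = (\Lg^*\mu)^y \t_{\Lg^*\mu}(y) \le 1,
$$
the last inequality because $\t_{\nu}(\cdot)$ is a $\nu$-test for every measure $\nu$, applied here with $\nu = \Lg^*\mu$. Thus $t_\mu(\cdot)$ is a $\mu$-test for each $\mu$, and being jointly lower semicomputable, $(x,\mu)\mapsto t_\mu(x)$ is a uniform test. By universality of $\t_\mu(x)$ there is a constant $c$ with $t_\mu(x)\le c\,\t_\mu(x)$ for all $x,\mu$, which is exactly \eqref{eq:conservation}.

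The main obstacle is the careful justification of the two analytic facts above in the \emph{constructive} setting: that pushing a computable kernel forward preserves computability of measures, and that integrating a lower semicomputable function against such a kernel preserves lower semicomputability (together with the validity of Fubini in this generality). These are presumably established in the referenced subsection on random transitions and in the constructive measure theory of the appendix; modulo those, the argument is the same "the bound is itself a test" pattern already used repeatedly for randomness conservation in the computable-measure case, e.g.\ in the proof of Theorem~\ref{thm:simple-rand-cons-det}. One should also note that no $K(\Lg)$ term appears here, unlike in \eqref{eq:simple-rand-cons-det}: this is because the test is \emph{uniform} in $\mu$, so the description of the kernel is absorbed into the single universal constant rather than appearing additively.
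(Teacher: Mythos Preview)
Your proposal is correct and follows essentially the same approach as the paper: define the left-hand side as a candidate uniform test, verify the integral bound via the adjoint relation \( \mu(\Lg g) = (\Lg^{*}\mu)g \) (your Fubini step), verify joint lower semicomputability using computability of \( \mu\mapsto\Lg^{*}\mu \) together with a supremum-of-computable-functions representation of \( \t_{\nu}(y) \), and conclude by universality of \( \t_{\mu} \). Your closing remark about the absence of a \( K(\Lg) \) term is also apt and goes slightly beyond what the paper spells out.
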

 \begin{proof}
Let \( \t_{\nu}(x) \) be the universal test over \( \bX_{0} \).
The left-hand side of~\eqref{eq:conservation} can be written as
 \[
  u_{\mu} = \Lg \t_{\Lg^{*}\mu}.
 \]
According to~\eqref{eq:Lg-Lg*}, we have 
\( \mu u_{\mu} = (\Lg^{*}\mu) \t_{\Lg^{*}\mu} \) which is \( \le 1 \) since \( \t \) is
a test.
If we show that \( (\mu,x) \mapsto u_{\mu}(x) \) is lower semicomputable
then the universality of \( \t_{\mu} \) will imply
that \( u_{\mu} \lem \t_{\mu} \).

According to Proposition~\ref{propo:semicomp-as-limit}, as
a lower semicomputable function, \( \t_{\nu}(y) \) can be written as 
\( \sup_{n} g_{n}(\nu, y) \), where \( (g_{n}(\nu, y)) \) is a computable sequence
of computable functions.
We pointed out in Subsection~\ref{subsec:random-trans} that the function 
\( \mu \mapsto \Lg^{*}\mu \) is computable.
Therefore the function \( (n, \mu, x) \mapsto g_{n}(\Lg^{*}\mu, f(x)) \) 
is also a computable.
So, \( u_{\mu}(x) \) is the supremum of a computable sequence of computable
functions and as such, lower semicomputable.
 \end{proof}

It is easier to interpret the theorem first in the special case when
\( \Lg = \Lg_{h} \) for a computable function \( h : X_{1} \to X_{0} \),
as in Example~\ref{example:computable-determ-trans}.
Then the theorem simplifies to the following.

\begin{sloppypar}
 \begin{corollary}\label{coroll:conservation-determ}
For a computable function \( h : X_{1} \to X_{0} \), we have
\( \d_{h^{*}\mu}(h(x)) \lea \d_{\mu}(x) \).
 \end{corollary}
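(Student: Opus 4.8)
The plan is to read the corollary off Theorem~\ref{propo:conservation} by specializing the kernel there to the deterministic one attached to $h$. First I would recall from Example~\ref{example:computable-determ-trans} that a computable function $h:X_1\to X_0$ induces a computable probability kernel $\Lg_h$ from $\bX_1$ to $\bX_0$, namely the one for which $\Lg_h(x,\cdot)$ is the Dirac measure concentrated at $h(x)$. Two facts follow at once from this description: first, for every nonnegative function $g$ on $X_0$ the integration operator against the kernel collapses to evaluation, $\lg_x^y g(y) = g(h(x))$; and second, the pushforward $\Lg_h^{*}\mu$ coincides with the image measure $h^{*}\mu$ of~\eqref{eq:measure-image}, since for a Borel set $A$ both sides equal $\mu(h^{-1}(A))$.

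With $\Lg = \Lg_h$, these two observations turn the left-hand side of the conservation inequality~\eqref{eq:conservation} into $\t_{h^{*}\mu}(h(x))$, so Theorem~\ref{propo:conservation} gives
\[
  \t_{h^{*}\mu}(h(x)) \lem \t_{\mu}(x).
\]
Taking logarithms of both sides, and using that $\d_{\nu}(z) = \log\t_{\nu}(z)$ together with the fact that an inequality up to a multiplicative constant becomes, after $\log$, an inequality up to an additive constant, yields $\d_{h^{*}\mu}(h(x)) \lea \d_{\mu}(x)$, which is the assertion.

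I expect no real obstacle here: everything substantive is already packed into Theorem~\ref{propo:conservation} and into the construction of $\Lg_h$. The only things that genuinely need to be checked are that $\Lg_h$ is \emph{computable} as a kernel, so that Theorem~\ref{propo:conservation} applies, and the two identities above, both of which are routine from the definition in Example~\ref{example:computable-determ-trans}. If one wished to bypass the kernel formalism, the corollary could instead be obtained by rerunning the proof of Theorem~\ref{propo:conservation} in this special case: one shows directly that $(\mu,x)\mapsto \t_{h^{*}\mu}(h(x))$ is lower semicomputable --- composing the lower semicomputable universal test over $\bX_0$ with the computable maps $\mu\mapsto h^{*}\mu$ and $x\mapsto h(x)$ --- and that $\mu^{x}\,\t_{h^{*}\mu}(h(x)) = (h^{*}\mu)^{y}\,\t_{h^{*}\mu}(y) \le 1$, so that this function is a uniform test over $\bX_1$ and is therefore dominated up to a multiplicative constant by the universal test $\t_{\mu}$. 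Routing through the already-established kernel version is cleaner, so that is the route I would take.
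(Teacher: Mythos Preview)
The proposal is correct and follows exactly the route the paper intends: the paper presents this corollary immediately after Theorem~\ref{propo:conservation} as the specialization to the deterministic kernel \( \Lg_{h} \) of Example~\ref{example:computable-determ-trans}, without writing out further details. Your argument supplies precisely those details---the identifications \( \lg_{x}^{y}g(y)=g(h(x)) \) and \( \Lg_{h}^{*}\mu=h^{*}\mu \), followed by taking logarithms---and is in fact more explicit than the paper itself.
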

  \end{sloppypar}

Informally, this says that if \( x \) is random with respect to \( \mu \) in
\( \bX_{1} \) then \( h(x) \) is essentially at least as random with respect to the
output distribution \( h^{*}\mu \) in \( \bX_{0} \).
Decrease in randomness
can only be caused by complexity in the definition of the function \( h \).

Let us specialize the theorem even more:

\begin{corollary}\label{coroll:conservation-pair}
For a probability distribution \( \mu \) over the space \( X\times Y \)
let \( \mu_{X} \) be its marginal on the space \( X \).
Then we have
 \begin{align*}
   \d_{\mu_{X}}(x) \lea \d_{\mu}(\tup{x,y}).
 \end{align*}
\end{corollary}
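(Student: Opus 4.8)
The plan is to obtain this as an immediate special case of Corollary~\ref{coroll:conservation-determ}, taking for \( h \) the coordinate projection. First I would recall that the product \( \bX\times\bY \) of two computable metric spaces is again a computable metric space (carrying the effective structure on \( \cM(\bX\times\bY) \) that underlies Theorem~\ref{propo:conservation}), and observe that the projection \( h:\bX\times\bY\to\bX \) defined by \( h(\tup{x,y})=x \) is computable: the preimage of a basic ball of \( \bX \) is a basic-open ``cylinder'' in the product, uniformly effectively in the index, so \( h \) meets the definition of a computable function \( h:X_{1}\to X_{0} \) with \( \bX_{0}=\bX \).

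Next I would check that the pushforward \( h^{*}\mu \) is exactly the marginal \( \mu_{X} \). By the definition of the image measure (the continuous analogue of~\eqref{eq:measure-image}), for every Borel set \( A\subseteq X \) we have \( (h^{*}\mu)(A)=\mu(h^{-1}(A))=\mu(A\times Y)=\mu_{X}(A) \), which is the defining property of the marginal; hence \( h^{*}\mu=\mu_{X} \) as probability measures on \( \bX \).

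With these two facts in hand, Corollary~\ref{coroll:conservation-determ} applied to this \( h \) gives \( \d_{h^{*}\mu}(h(\tup{x,y}))\lea\d_{\mu}(\tup{x,y}) \), and substituting \( h^{*}\mu=\mu_{X} \) and \( h(\tup{x,y})=x \) yields \( \d_{\mu_{X}}(x)\lea\d_{\mu}(\tup{x,y}) \), as claimed. The only step needing genuine verification is that the projection is computable and that the product construction is the one compatible with the setup of Theorem~\ref{propo:conservation}; I expect this to be routine bookkeeping from the appendix's treatment of product spaces, so there is no real obstacle — the corollary is essentially just the deterministic conservation theorem instantiated at the projection kernel.
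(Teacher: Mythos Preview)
Your proposal is correct and follows exactly the approach the paper intends: the paper introduces this corollary with ``Let us specialize the theorem even more,'' so it is meant as an immediate instantiation of Corollary~\ref{coroll:conservation-determ} with \( h \) the projection \( \tup{x,y}\mapsto x \), which gives \( h^{*}\mu=\mu_{X} \) and hence the claimed inequality. You have simply spelled out the two routine verifications (computability of the projection and identification of the pushforward with the marginal) that the paper leaves implicit.
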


This says, informally, that if a pair is random then each of its elements is
random (with respect to the corresponding marginal distribution).

In the general case of the theorem, concerning random transitions,
we cannot bound the randomness of each outcome uniformly.
The theorem asserts that the average nonrandomness, as measured by 
the universal test with respect to the output distribution, does not
increase.
In logarithmic notation:
\( \lg_{x}^{y} 2^{\d_{\Lg^{*}\mu}(y)} \lea \d_{\mu}(x) \), 
or equivalently, 
\( \int 2^{\d_{\Lg^{*}\mu}(y)} \lg_{x}(dy) \lea \d_{\mu}(x) \).

 \begin{corollary}
Let \( \Lg \) be a computable probability kernel from \( \bX_{1} \) to \( \bX_{0} \).
There is a constant \( c \) such that
for every \( x\in\bX^{1} \), and integer \( m > 0 \) we have 
 \[
 \lg_{x}\setof{y : \d_{\Lg^{*}\mu}(y) > \d_{\mu}(x) + m + c} \le 2^{-m}.
 \]
 \end{corollary}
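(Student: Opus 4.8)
The statement is a Markov-inequality corollary of the conservation theorem~\eqref{eq:conservation}, exactly parallel to the corollary at the end of the section on randomized transformations for computable distributions (the one deduced from \( \sum_{x} P(x) t_{P}(x) \le 1 \)). The plan is to exponentiate the deficiency back into the multiplicative test \( \t \), use the inequality \( \lg_{x}^{y} \t_{\Lg^{*}\mu}(y) \lem \t_{\mu}(x) \), and then apply Markov's inequality~\eqref{eq:Markov} to the conditional distribution \( \lg_{x}(\cdot) \).

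In detail: first I would fix the constant \( c \) implicit in \( \lem \) in~\eqref{eq:conservation}, so that \( \lg_{x}^{y}\t_{\Lg^{*}\mu}(y) \le 2^{c}\,\t_{\mu}(x) \) for all \( x \) (here I take \( c \) to also absorb the fact that the \( \lem \) constant is uniform in \( x \), which is exactly what Theorem~\ref{propo:conservation} gives). Next, recall \( \d_{\nu}(z) = \log\t_{\nu}(z) \), so the set in question is
 \[
 \setof{y : \d_{\Lg^{*}\mu}(y) > \d_{\mu}(x) + m + c} = \setof{y : \t_{\Lg^{*}\mu}(y) > 2^{m+c}\,\t_{\mu}(x)}.
 \]
Then I would apply~\eqref{eq:Markov} with the probability distribution \( \lg_{x}(\cdot) \), the nonnegative function \( f(y)=\t_{\Lg^{*}\mu}(y) \), and threshold parameter \( \lambda = 2^{-m} \): since \( E_{\lg_{x}}(f) = \lg_{x}^{y}\t_{\Lg^{*}\mu}(y) \le 2^{c}\t_{\mu}(x) \), the event \( \{f(y) > 2^{m+c}\t_{\mu}(x)\} \) is contained in \( \{f(y) > \lambda E_{\lg_{x}}(f)\} \) as soon as \( 2^{m+c}\t_{\mu}(x) \ge \lambda E_{\lg_{x}}(f) \), i.e.\ as soon as \( 2^{m+c}\t_{\mu}(x)\ge 2^{-m}\cdot 2^{c}\t_{\mu}(x) \), which always holds. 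Hence the \( \lg_{x} \)-measure of that event is \( \le 2^{-m} \), which is the claim.

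The only subtlety — and the one ``obstacle'' worth a sentence — is the case \( \t_{\mu}(x)=\infty \) (i.e.\ \( x \) not random): then the threshold \( 2^{m+c}\t_{\mu}(x) \) is infinite, the set is empty, and the bound holds trivially; and if \( \t_{\mu}(x)=0 \) the Markov step degenerates but the inequality \( \lg_{x}^{y}\t_{\Lg^{*}\mu}(y)\le 2^{c}\cdot 0 \) forces \( \t_{\Lg^{*}\mu}=0 \) \( \lg_{x} \)-a.e., so again the exceptional set has measure zero. Otherwise \( 0<\t_{\mu}(x)<\infty \) and the computation above goes through verbatim. So really there is no hard part; the content is entirely in Theorem~\ref{propo:conservation}, and this corollary just repackages it via Markov's inequality.
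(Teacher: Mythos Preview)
Your proposal is correct and follows exactly the route the paper intends: the corollary is an immediate Markov-inequality consequence of Theorem~\ref{propo:conservation}, and the paper states it without a separate proof precisely because nothing more is needed. Your handling of the degenerate cases \( \t_{\mu}(x)\in\{0,\infty\} \) is a nice extra bit of care.
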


Thus, in a computable random transition,
the probability of an increase of randomness deficiency by
\( m \) units (plus a constant \( c \)) is less than \( 2^{-m} \).
The constant \( c \) comes from the description complexity 
of the transition \( \Lg \).

A randomness conservation result related to 
Corollary~\ref{coroll:conservation-determ} was proved 
in~\cite{HertlingWeihrauchRand98}.
There, the measure over the space \( \bX_{0} \) is not the output
measure of the transformation, but is assumed to
obey certain inequalities related to the transformation.

\section{Test for a class of measures}

\subsection{From a uniform test}

A Bernoulli measure is what we get by tossing a (possibly biased) coin
repeatedly.

\begin{definition}
Let \( X=\dB^{\dN} \) be the set of infinite binary sequences, with the usual
sequence topology.
Let \( B_{p} \) be the measure on \( X \) that corresponds to tossing a coin
independently with probability \( p \) of success: it is called the \df{Bernoulli}
measure with parameter \( p \).
Let \( \cB \) denote the set of all Bernoulli measures. 
\end{definition}

Given a sequence \( x\in X \) we may ask the question whether \( x \) is random with
respect to at least \emph{some} Bernoulli measure.
(It can clearly not be random with respect to two different Bernoulli measures
since if \( x \) is random with respect to \( B_{p} \) then its relative frequencies
converge to \( p \).)
This idea suggests two possible definitions for a test of the property of
``Bernoulliness'': 
 \begin{enumerate}
  \item We could define \( \t_{\cB}(x) = \inf_{\mu\in\cB}(x) \).
  \item We could define the notion of a Bernoulli test as a lower semicomputable
    function \( f(x) \) with the property \( B_{p} f\le 1 \) for all \( p \).
 \end{enumerate}
We will see that in case of this class of measures the two definitions lead to
essentially the same test.

Let us first extend the definition to more general sets of measures, still
having a convenient property.
 
 \begin{definition}[Class tests]\label{def:C-test}
Consider a class \( \cC \) of measures that is effectively
compact in the sense of Definition~\ref{def:eff-compact-general} or (equivalently
for metric spaces) in the sense of Theorem~\ref{thm:eff-compact-metric}.
A lower semicomputable function \( f(x) \) is called a \( \cC \)-\df{test} if for all
\( \mu\in\cC \) we have \( \mu f\le 1 \).
It is a \df{universal} \( \cC \)-\df{test} if it dominates all other \( \cC \)-tests to
within a multiplicative constant.
\end{definition}

\begin{example}
It is easy to show that the class \( \cB \) is effectively compact.
One way is to appeal to the general theorem in
Proposition~\ref{propo:image-of-eff-compact} saying that applying a computable
function to an effectively compact set (in this case the interval
\( \clint{0}{1} \)), the image is also an effectively compact set.
\end{example}

For the case of infinite sequences, we can also define extended tests.

 \begin{definition}[Extended class test]
Let our space \( \bX \) be the Cantor space of infinite sequences \( \Sg^{\dN} \).
Consider a class \( \cC \) of measures that is effectively
compact in the sense of Definition~\ref{def:eff-compact-general} or (equivalently
for metric spaces) in the sense of Theorem~\ref{thm:eff-compact-metric}.
A lower semicomputable function \( f:\Sg^{*}\to\ol\dR_{+} \) 
is called an \df{extended} \( \cC \)-\df{test} if it is monotonic with respect to
the prefix relation and for all \( \mu\in\cC \) and integer \( n\ge 0 \) we have 
 \begin{align*}
  \sum_{x\in\Sg^{n}}\mu(x) f(x) \le 1.   
 \end{align*}
It is \df{universal} if it dominates all other extended \( \cC \)-tests to
within a multiplicative constant.
\end{definition}

The following observation is immediate.

 \begin{proposition}
A function \( f:\Sg^{\dN}\to\ol\dR_{+} \) is a class test if and only if it can be
represented as \( \lim_{n}g(\xi^{\le n}) \) where \( g(x) \) is an extended class test.
 \end{proposition}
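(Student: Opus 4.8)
The statement to prove is: a function $f:\Sg^{\dN}\to\ol\dR_{+}$ is a class $\cC$-test if and only if it can be represented as $\lim_{n}g(\xi^{\le n})$ where $g$ is an extended $\cC$-test. The plan is to prove the two directions separately; each is a small adaptation of the arguments already used for Proposition~\ref{propo:finite-test} and the ordinary universal test, together with the fact that $\cC$ is effectively compact (hence the measure-integral conditions over $\cC$ behave well).

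\textbf{Direction ($\Leftarrow$).} Suppose $f(\xi)=\lim_{n}g(\xi^{\le n})$ for an extended $\cC$-test $g$. Since $g$ is monotonic with respect to the prefix relation, the limit is actually a supremum, $f(\xi)=\sup_{n}g(\xi^{\le n})$. First I would check lower semicomputability: $f$ is a supremum of the lower semicomputable functions $\xi\mapsto g(\xi^{\le n})$, each of which depends only on a finite prefix, so $f$ is lower semicomputable on $\Sg^{\dN}$. Next I would check the test inequality $\mu f\le 1$ for each $\mu\in\cC$. By monotone convergence, $\mu f=\lim_{n}\int g(\xi^{\le n})\,\mu(d\xi)=\lim_{n}\sum_{x\in\Sg^{n}}\mu(x)g(x)\le 1$, using precisely the defining inequality of an extended $\cC$-test for each $n$. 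Hence $f$ is a $\cC$-test.

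\textbf{Direction ($\Rightarrow$).} Suppose $f$ is a $\cC$-test. I would use the representation of lower semicomputable functions on Cantor space (Proposition~\ref{propo:semicomp-rep.Cantor}, as used in Proposition~\ref{propo:finite-test}): write $f(\xi)=\sup_{n}g_{0}(\xi^{\le n})$ where $g_{0}:\Sg^{*}\to\ol\dR_{+}$ is lower semicomputable and monotone under the prefix relation. The only thing that may fail is the integral bound $\sum_{x\in\Sg^{n}}\mu(x)g_{0}(x)\le1$ for all $\mu\in\cC$; we only know $\mu f\le1$, i.e.\ the bound holds in the limit $n\to\infty$ but possibly not at finite stages. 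To repair this I would "trim'' $g_{0}$, exactly in the spirit of the proof of Theorem~\ref{thm:trim}: using effective compactness of $\cC$, the quantity $\sup_{\mu\in\cC}\sum_{x\in\Sg^{n}}\mu(x)h(x)$ is upper semicomputable for any computable $h$ on $\Sg^{n}$, so I can define $g$ from $g_{0}$ by writing $g_{0}=\sup_{i}h_{i}$ with $h_{i}$ computable and prefix-monotone, and keeping $h_{i}(x)$ for $x\in\Sg^{n}$ only once we have verified $\sup_{\mu\in\cC}\sum_{x'\in\Sg^{n}}\mu(x')h_{i}(x')<1$ (and otherwise replacing it by the largest already-accepted value along the branch, to preserve monotonicity). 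The resulting $g=\sup_{i}h'_{i}$ is lower semicomputable, prefix-monotone, satisfies $\sum_{x\in\Sg^{n}}\mu(x)g(x)\le1$ for all $\mu\in\cC$ and all $n$, so it is an extended $\cC$-test. Finally I would check $\lim_{n}g(\xi^{\le n})=f(\xi)$: whenever $\xi$ is such that $\mu f\le1$ leaves "room'' along the prefixes of $\xi$ the trimming never actually cuts $h_{i}(\xi^{\le n})$, and in the limit the accepted values recover $\sup_{n}g_{0}(\xi^{\le n})=f(\xi)$; for the remaining $\xi$ one argues that $f(\xi)$ must already be $\le$ the limiting accepted value, so equality holds.

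\textbf{Main obstacle.} The delicate point is the ($\Rightarrow$) direction: making the trimming construction preserve \emph{prefix-monotonicity} while using only a \emph{finite}-stage, effectively-checkable acceptance criterion, and then arguing that the trimmed function still converges to $f$ on every sequence rather than to something strictly smaller. This is where effective compactness of $\cC$ is essential (to make $\sup_{\mu\in\cC}\sum_{x\in\Sg^{n}}\mu(x)h_{i}(x)$ upper semicomputable, so that the event "this value is $<1$'' is recognizable), and it is the step most closely patterned on the proof of Theorem~\ref{thm:trim}; I would model the bookkeeping on that proof. The ($\Leftarrow$) direction and the semicomputability checks are routine.
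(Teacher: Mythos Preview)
Your $(\Leftarrow)$ direction is fine. But in the $(\Rightarrow)$ direction you have invented a difficulty that is not there, which is why the paper calls the whole proposition ``immediate'' and gives no proof.

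Once you write $f(\xi)=\sup_{n}g_{0}(\xi^{\le n})$ with $g_{0}$ lower semicomputable and prefix-monotone (via Proposition~\ref{propo:semicomp-rep.Cantor}), the integral bound is \emph{automatic}: for every $\xi$ extending $w\in\Sg^{n}$ we have $g_{0}(w)=g_{0}(\xi^{\le n})\le \sup_{m}g_{0}(\xi^{\le m})=f(\xi)$, hence
\[
\sum_{x\in\Sg^{n}}\mu(x)\,g_{0}(x)=\int g_{0}(\xi^{\le n})\,\mu(d\xi)\le \int f(\xi)\,\mu(d\xi)=\mu f\le 1
\]
for every $\mu\in\cC$. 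So $g_{0}$ itself is already an extended $\cC$-test and no trimming is needed. This is exactly the same one-line argument that yields property~(\ref{i:extended.sum}) in Proposition~\ref{propo:finite-test}. In particular, effective compactness of $\cC$ plays no role in this proposition (it is used elsewhere, to get a \emph{universal} class test, not here).

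Your trimming detour is not wrong in spirit, but it is unnecessary, and the closing step (``for the remaining $\xi$ one argues that $f(\xi)$ must already be $\le$ the limiting accepted value'') is left vague; you would have to work to make that precise, whereas the direct argument above is one line.
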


The following theorem defines a universal \( \cC \)-test.

\begin{sloppypar}
 \begin{theorem}
Let \( \t_{\mu}(x) \) be a universal uniform test.
Then \( \u(x) = \inf_{\mu\in\cC}\t_{\mu}(x) \) defines a universal \( \cC \)-test.
 \end{theorem}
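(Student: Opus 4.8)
The plan is to verify two things about $\u(x) = \inf_{\mu\in\cC}\t_\mu(x)$: first that it is a legitimate $\cC$-test, i.e. it is lower semicomputable and satisfies $\mu\u\le 1$ for every $\mu\in\cC$; second that it dominates every other $\cC$-test to within a multiplicative constant. The test conditions $\mu\u\le 1$ are the easy direction: for any fixed $\mu\in\cC$ we have $\u(x)\le\t_\mu(x)$ pointwise, hence $\mu\u\le\mu\t_\mu\le 1$ because $\t_\mu(\cdot)$ is itself a $\mu$-test. So the only real content is semicomputability of $\u$ and the universality claim.

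For lower semicomputability of $\u$, I would use the effective compactness of $\cC$. By Definition~\ref{def:eff-compact-general} (or Theorem~\ref{thm:eff-compact-metric}), we can effectively enumerate finite open covers of $\cC$ by basic balls in $\cM(\bX)$, and so $\cC$ has an effective ``net'': for each $k$ we can compute a finite set of measures (centers of balls) that is $2^{-k}$-dense in $\cC$. Using the representation $\t_\mu(x)=\sup_n g_n(\mu,x)$ as a supremum of a computable sequence of computable (hence continuous in $\mu$) functions, the infimum of $\t_\mu(x)$ over all of $\cC$ equals the infimum over any dense subset, and by the continuity of $g_n$ in $\mu$ one can approximate $\inf_{\mu\in\cC} g_n(\mu,x)$ from below by taking the minimum of $g_n$ over a fine enough finite net and correcting for the modulus of continuity. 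Taking the supremum over $n$ of these lower approximations exhibits $\u$ as a supremum of a computable sequence of lower semicomputable functions, hence lower semicomputable. I expect the bookkeeping with the modulus of continuity of $g_n$ in $\mu$ to be the main technical nuisance here, but it is routine given effective compactness.

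For universality, let $f$ be any $\cC$-test. The idea is to ``lift'' $f$ to a uniform test and invoke the universality of $\t_\mu$. Define $\phi_\mu(x) = f(x)\cdot \mathbf{1}[\mu\in\cC]$; more carefully, since $\cC$ is effectively compact (hence its indicator is, in the appropriate sense, upper semicomputable and $\cC$ itself is a $\Pi^0_1$-type set), one constructs a lower semicomputable function $F:\bX\times\cM(\bX)\to\ol\dR_+$ with $F_\mu = f$ for $\mu\in\cC$ and $\mu^x F_\mu(x)\le 1$ for \emph{all} $\mu\in\cM(\bX)$ — the point being that for $\mu\notin\cC$ one simply needs to damp $F$ down so the integral stays $\le 1$, which can be arranged using the trimming operation of Corollary~\ref{coroll:trim} (Uniform extension) applied to a lower semicomputable extension of $f$. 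Then $F$ is a uniform test, so by universality of $\t_\mu$ there is a constant $c$ with $F_\mu(x)\le c\,\t_\mu(x)$ for all $\mu,x$. Restricting to $\mu\in\cC$ gives $f(x)\le c\,\t_\mu(x)$ for every $\mu\in\cC$, and taking the infimum over $\mu\in\cC$ yields $f(x)\le c\,\u(x)$, as required.

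The step I expect to be the genuine obstacle is the construction of the uniform extension $F$ with a global integral bound: one must extend $f$ from $\cC$ to all measures in a lower semicomputable way while keeping $\mu^x F_\mu(x)\le 1$ off $\cC$ as well, and this is exactly what Corollary~\ref{coroll:trim} is designed to deliver (starting from any lower semicomputable $G$ with $G_\mu=f$ on $\cC$, the operation $e\mapsto e'$ produces a uniform test agreeing with $G$ wherever the integral condition already held, in particular on $\cC$). Once that reduction is in place, universality of $\u$ is immediate from universality of $\t_\mu$. Everything else — the test inequality and the semicomputability of $\u$ — follows from effective compactness of $\cC$ and the structure of $\t_\mu$ as established earlier.
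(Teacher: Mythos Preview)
Your proposal is correct and follows essentially the same route as the paper: the integral bound is handled exactly as you say, lower semicomputability of the infimum comes from effective compactness of \( \cC \) (the paper simply cites Proposition~\ref{propo:lsc-min} rather than sketching the net-and-modulus argument you outline), and universality is obtained by extending \( f \) to a uniform test via Corollary~\ref{coroll:trim} and then invoking universality of \( \t_{\mu} \). One simplification: the indicator-function detour is unnecessary, since \( f \) is already lower semicomputable as a function of \( (x,\mu) \) (it ignores \( \mu \)), so you may feed \( f/2 \) directly into Corollary~\ref{coroll:trim}; for \( \mu\in\cC \) the hypothesis \( \mu f \le 1 \) is met and the resulting uniform test agrees with \( f/2 \) there.
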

 \end{sloppypar}
 \begin{proof}
Let us show first that \( \u(x) \) is a \( \cC \)-test.
It is lower semicomputable according to Proposition~\ref{propo:lsc-min}.
Also, for each \( \mu \) we have \( \mu\u \le \mu \t_{\mu} \le 1 \), showing that
\( \u(x) \) is a \( \cC \)-test.

Let us now show that it is universal.
Let \( f(x) \) be an arbitrary \( \cC \)-test.
By Corollary~\ref{coroll:trim} there is a uniform test \( g_{\mu}(x) \) such that for
all \( \mu\in\cC \) we have \( g_{\mu}(x)=f(x)/2 \).
It follows from the universality of the uniform test \( \t_{\mu}(x) \)
that \( f(x) \lem g_{\mu}(x) \lem\t_{\mu}(x) \) for all \( \mu\in\cC \).
But then \( f(x) \lem\inf_{\mu\in\cC}\t_{\mu}(x) = \u(x) \).
 \end{proof}

For the case of sequences, the same statement can be made for extended tests.
(This is not completely automatic since a test is obtained from an
extended test via a supremum, on the other hand a class test is obtained,
according the theorem above, via an infimum.)

\subsection{Typicality and class tests}

The set of Bernoulli measures has an important property shared by many
classes considered in practice: namely that random sequences determine the
measure to which it belongs.

 \begin{definition}
Consider a class \( \cC \) of measures over a computable metric space
\( \bX = (X,d,D,\ag) \).
We will say that a lower semicomputable function
 \begin{align*}
  \s:X\times\cC\to\ol\dR_{+}
 \end{align*}
is a \df{separating test} for \( \cC \) if 
\begin{itemize}
\item \( \s_{\mu}(\cdot) \) is a test for each \( \mu\in\cC \).
\item If \( \mu\ne\nu \) then
\( \s_{\mu}(x)\lor\s_{\nu}(x)=\infty \) for all \( x\in X \).
\end{itemize}
Given a separating test \( \s_{\mu}(x) \) we call an element \( x \)
\df{typical} for \( \mu\in\cC \) if \( \s_{\mu}(x)<\infty \).
 \end{definition}

A typical element determines uniquely the measure \( \mu \) for which
it is typical.
Note that if a separating tests exists for a class then any two different
measures \( \mu_{1},\mu_{2} \) in the class are orthogonal to each other, that is there are
disjoint measureable sets \( A_{1},A_{2} \) with \( \mu_{j}(A_{j})=1 \).
Indeed, let \( A_{j}=\setof{x:s_{\mu_{j}}(x)<\infty} \).

Let us show a nontrivial example: the class of \( \cB \) of Bernoulli measures.
Recall that by Chebyshev's inequality we have
 \begin{align*}
   B_{p}(\setof{x\in\dB^{n}: |\sum_{i}x(i)-n p|\ge\lg n^{1/2}(p(1-p))^{1/2}}) \le \lg^{-2}.
 \end{align*}
Since \( p(1-p)\le 1/4 \), this implies
 \begin{align*}
   B_{p}(\setof{x\in\dB^{n}: |\sum_{i}x(i)-n p>\lg n^{1/2}/2}) < \lg^{-2}.
 \end{align*}
Setting \( \lg=n^{0.1} \) and ignoring the factor \( 1/2 \) gives
 \begin{align*}
   B_{p}(\setof{x\in\dB^{n}: |\sum_{i}x(i)-n p|> n^{0.6}})< n^{-0.2}.
 \end{align*}
Setting \( n=2^{k} \):
 \begin{align}\label{eq:2^k-Cheb}
   B_{p}(\setof{x\in\dB^{2^{k}}: |\sum_{i}x(i)-2^{k}p|> 2^{0.6 k}})< 2^{-0.2k}.
 \end{align}

Now, for the example.

\begin{example}\label{example:Bernoulli-sep}
For a sequence \( \xi \) in \( \bB^{\dN} \), and for \( p\in\clint{0}{1} \) let
 \begin{align*}
   g_{p}(x) = g_{B_{p}}(x) = \sup\setof{k: |\sum_{i=1}^{2^{k}}\xi(i)-2^{k}p|> 2^{0.6k}}.
 \end{align*}
Then we have
 \begin{align*}
   B_{p}^{\xi}g_{p}(\xi) \le \sum_{k} k\cdot 2^{-0.2 k} = c<\infty
 \end{align*}
for some constant \( c \), so \( \s_{p}(\xi)=g_{p}(x)/c \) is a test for each \( p \).
The property \( \s_{p}(\xi)<\infty \) implies that \( 2^{-k}\sum_{i=1}^{2^{k}}\xi(i) \)
converges to \( p \).
For a given \( \xi \) this is impossible for both \( p \) and \( q \) for \( p\ne q \), hence
\( \s_{p}(\xi)\lor\s_{q}(\xi)=\infty \).
\end{example}

The following structure theorem gives considerable insight.

\begin{theorem}\label{thm:test-sep}
Let \( \cC \) be an effectively compact class of measures,
let \( \t_{\mu}(x) \) be the universal uniform test and
let \( \t_{\cC}(x) \) be a universal class test for \( \cC \).
Assume that a separating test \( \s_{\mu}(x) \) exists for \( \cC \).
Then we have the representation
 \begin{align*}
  \t_{\mu}(x) \eqm \t_{\cC}(x)\lor \s_{\mu}(x)
 \end{align*}
for all \( \mu\in\cC \), \( x\in X \).
\end{theorem}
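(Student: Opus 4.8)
The plan is to prove the two inequalities $\t_{\mu}(x) \gem \t_{\cC}(x)\lor\s_{\mu}(x)$ and $\t_{\mu}(x) \lem \t_{\cC}(x)\lor\s_{\mu}(x)$ separately, for $\mu\in\cC$.

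The direction $\t_{\mu}(x) \gem \t_{\cC}(x)\lor\s_{\mu}(x)$ is the easy one. Since $\t_{\cC}(x) = \inf_{\nu\in\cC}\t_{\nu}(x) \le \t_{\mu}(x)$ by definition, it suffices to show $\t_{\mu}(x) \gem \s_{\mu}(x)$. By the Uniform extension corollary (Corollary~\ref{coroll:trim}), applied to a Gödel number of the separating test, there is a uniform test $g_{\nu}(x)$ with $g_{\nu}(x) = \s_{\nu}(x)/2$ for all $\nu\in\cC$ (since $\s_{\nu}(\cdot)$ is a $\nu$-test for each $\nu\in\cC$). By universality of $\t_{\mu}$ we get $\s_{\mu}(x) \lem g_{\mu}(x) \lem \t_{\mu}(x)$, and taking the max with $\t_{\cC}(x)$ finishes this direction.

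The direction $\t_{\mu}(x) \lem \t_{\cC}(x)\lor\s_{\mu}(x)$ is the substance of the theorem. The idea is to exhibit $(\mu,x)\mapsto \t_{\cC}(x)\lor\s_{\mu}(x)$ (or a constant multiple of it) as a uniform test, so that universality of $\t_{\mu}$ gives the bound. Lower semicomputability is fine: $\t_{\cC}$ is lower semicomputable, $\s_{\mu}(x)$ is lower semicomputable jointly in $(\mu,x)$, and the pointwise maximum of two lower semicomputable functions is lower semicomputable. The real work is the integral bound: I must show $\mu^{x}\bigl(\t_{\cC}(x)\lor\s_{\mu}(x)\bigr) \lem 1$ for every $\mu\in\cC$. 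Fix $\mu\in\cC$. The trick is that $\s_{\mu}$ being a separating test means that for $\nu\ne\mu$ in $\cC$, $\s_{\nu}(x)=\infty$ on a $\mu$-measure-one set? — no, that is not given directly; what is given is $\s_{\mu}(x)\lor\s_{\nu}(x)=\infty$ for all $x$. So on the set $\{x: \s_{\mu}(x)<\infty\}$ (the $\mu$-typical points, which has $\mu$-measure one since $\s_{\mu}$ is a $\mu$-test), we have $\s_{\nu}(x)=\infty$ for every $\nu\ne\mu$, hence $\t_{\nu}(x)\gem\s_{\nu}(x)=\infty$ by the first direction, hence on typical $x$ the infimum $\t_{\cC}(x)=\inf_{\nu}\t_{\nu}(x)$ is actually achieved (in the limit) only by $\nu=\mu$, giving $\t_{\cC}(x)\eqm\t_{\mu}(x)$ — wait, that would prove the theorem too strongly. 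Let me be careful: the point is that on the $\mu$-measure-one set of typical $x$, $\t_{\cC}(x)\lor\s_{\mu}(x)$ and $\t_{\mu}(x)$ should be comparable; we split the integral as $\mu^{x}(\t_{\cC}(x)\lor\s_{\mu}(x)) \le \mu^{x}\t_{\cC}(x) + \mu^{x}\s_{\mu}(x) \le 1 + 1 = 2$, using that $\t_{\cC}$ is a $\cC$-test (so $\mu\t_{\cC}\le 1$ since $\mu\in\cC$) and $\s_{\mu}$ is a $\mu$-test. So $\frac{1}{2}(\t_{\cC}\lor\s_{\mu})$ is a uniform test up to the joint-lower-semicomputability check, and universality gives $\t_{\mu}(x)\gem$... no — universality gives that the universal test $\t_{\mu}$ \emph{dominates} this test, i.e. $\t_{\cC}(x)\lor\s_{\mu}(x)\lem\t_{\mu}(x)$, which is again the first direction, not the one I want.

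So the decomposition $\lor$ into a sum does \emph{not} by itself give the hard direction; I need genuinely that $\t_{\mu}$ is \emph{dominated} by $\t_{\cC}\lor\s_{\mu}$. The hard part, and the main obstacle, is precisely this: I must show that the universal uniform test $\t_{\mu}(x)$ is not much bigger than $\t_{\cC}(x)\lor\s_{\mu}(x)$. The strategy I would pursue: build, from $\t_{\mu}$ itself, a $\cC$-test. Define $h(x) = \inf_{\nu\in\cC}\bigl(\t_{\nu}(x)\land(\text{something involving }\s_{\nu})\bigr)$; more precisely, for each fixed $x$, consider $\inf_{\nu\in\cC} \t_{\nu}(x)\lor\s_{\nu}(x)$ — but on typical points $\s_{\nu}=\infty$ for $\nu\ne\mu$, so this inf is dominated near $\nu=\mu$ by... hmm. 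Here is the cleaner route: observe that by the product/minimum construction, for \emph{each} $\nu\in\cC$ the function $f_{\nu}(x) := \t_{\nu}(x)\land\bigl(\text{const}\cdot(\t_{\cC}(x)\lor\s_{\nu}(x))\bigr)$ equals $\t_{\nu}(x)$ up to a constant when $x$ is $\nu$-typical (because then for this particular $\nu$ the bound $\t_{\cC}\lor\s_{\nu}\gem\t_{\nu}$ need not hold — this is circular). I think the correct argument uses effective compactness of $\cC$ together with the separating test to show: the function $x\mapsto\inf_{\nu\in\cC}\max(\t_{\cC}(x),\s_{\nu}(x))$, when compared to $\t_{\mu}(x)$, uses that only finitely many $\nu$ (up to precision) can have $\s_{\nu}(x)$ small, and a minimum over an effectively compact parameter set of lower semicomputable functions each of which integrates to $\le 1$ against its own measure can be upgraded — via Corollary~\ref{coroll:trim-univ} applied with the auxiliary space $\bY=\cC$ and $s\equiv 1$ — into a single uniform test that dominates $\t_{\mu}$. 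I would set $\phi(x,\nu,\mu) = \t_{\cC}(x)\lor\s_{\mu}(x)$ if a suitable relation between $\nu$ and $\mu$ holds and prove the integral condition holds with $\mu$ playing the role of the measure, then invoke the universal generalized test. The genuinely delicate point — and where I expect to spend the most effort — is verifying the Fubini-type integral inequality $\mu^{x}(\t_{\cC}(x)\lor\s_{\mu}(x))\le 2$ is \emph{strong enough}, combined with a separate argument that $\t_{\mu}(x)=\infty$ whenever $x$ is not $\mu$-typical (so that on non-typical $x$ the claimed inequality $\t_{\mu}\lem\t_{\cC}\lor\s_{\mu}$ holds because the right side, via $\s_{\mu}(x)=\infty$, is also $\infty$), reducing everything to typical $x$ where one then shows $\t_{\mu}(x)\lem\t_{\cC}(x)$ using that for typical $x$ every competing $\t_{\nu}(x)$ ($\nu\ne\mu$) is infinite so the infimum defining $\t_{\cC}$ is effectively over a neighborhood of $\mu$ and a compactness/continuity argument pins $\t_{\cC}(x)\eqm\t_{\mu}(x)$ there.
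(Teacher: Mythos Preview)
Your easy direction is correct and matches the paper exactly.

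For the hard direction $\t_{\mu}(x) \lem \t_{\cC}(x)\lor\s_{\mu}(x)$, you eventually land on the right idea in your last sentence, but you do not see how immediate it is, and the machinery you propose (``compactness/continuity argument'', ``neighborhood of $\mu$'', generalized tests via Corollary~\ref{coroll:trim-univ}) is entirely unnecessary.  The paper's argument is a pure pointwise case split, using nothing beyond the easy direction you already proved:
\begin{itemize}
  \item If $\s_{\mu}(x)=\infty$ then the right-hand side is $\infty$ and there is nothing to prove.
  \item If $\s_{\mu}(x)<\infty$ then by the separating property $\s_{\nu}(x)=\infty$ for every $\nu\in\cC$ with $\nu\ne\mu$; by the easy direction $\s_{\nu}\lem\t_{\nu}$, so $\t_{\nu}(x)=\infty$ for every such $\nu$.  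Therefore $\t_{\cC}(x)=\inf_{\nu\in\cC}\t_{\nu}(x)=\t_{\mu}(x)$ \emph{exactly}, and the inequality holds.
\end{itemize}
That is the whole proof.  You actually wrote the key fact---``for typical $x$ every competing $\t_{\nu}(x)$ ($\nu\ne\mu$) is infinite''---and then failed to draw the trivial consequence: an infimum over a family in which exactly one member is possibly finite equals that member.  There is no limiting process, no neighborhood, no effective compactness invoked at this step (effective compactness of $\cC$ is used earlier, to know $\t_{\cC}$ exists and is lower semicomputable, but not here).  All of your attempts to manufacture a uniform test and invoke universality were, as you noticed, pushing in the wrong direction; the point is that this inequality is not obtained from universality at all but from a direct pointwise comparison.
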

 \begin{proof}
First, we have \( \t_{\cC}(x)\lor\s_{\mu}(x) \lem \t_{\mu}(x) \).
Indeed as we know from the Uniform Extension Corollary~\ref{coroll:trim}, we can
extend \( \s_{\mu}(x)/2 \) to a uniform test, hence \( \s_{\mu}(x)\lem \t_{\mu}(x) \).
Also by definition \( \t_{\cC}(x)\le \t_{\mu}(x) \).

On the other hand, let us show \( \t_{\cC}(x)\lor\s_{\mu}(x) \ge \t_{\mu}(x) \).
Suppose first that 
\( x \) is not random with respect to any \( \nu\in\cC \): then \( \t_{\cC}(x)=\infty \).
Suppose now that \( x \) is random with respect to some \( \nu\in\cC \), 
\( \nu\ne\mu \).
Then \( \s_{\mu}(x)=\infty \).
Finally, suppose \( \t_{\mu}(x)<\infty \).
Then \( \t_{\nu}(x)=\infty \) for all \( \nu\in\cC \), \( \nu\ne\mu \), hence
\( \t_{\cC}(x) = \inf_{\nu\in\cC}\t_{\nu}(x) = \t_{\mu}(x) \), so
the inequality holds again.
 \end{proof}

The above theorem separates the randomness test into two parts.
One part tests randomness with respect to the class \( \cC \), the other one tests
typicality with respect to the measure \( \mu \).
In the Bernoulli example,
 \begin{itemize}
  \item Part \( \t_{\cB}(\xi) \) checks ``Bernoulliness'', that is independence.
It encompasses all the irregularity criteria.
  \item Part \( \s_{p}(\xi) \) checks
(crudely) for the law of large numbers: whether relative frequency converges
(fast) to \( p \).
 \end{itemize}
If the independence of the sequence is taken for granted, we may assume that
the class test is satisfied.
What remains is typicality testing, which is similar to 
ordinary statistical parameter testing.

 \begin{remarks}
   \begin{enumerate}
   \item 
Separation is the only requirement of the test \( \s_{\mu}(x) \),
otherwise, for example in the Bernoulli test case,
no matter how crude the convergence criterion expressed by
\( \s_{\mu}(x) \), the maximum \( \t_{\cC}(x)\lor\s_{\mu}(x) \) 
is always (essentially) the same universal test.
  \item
Though the convergence criterion can be crude, but one still seems to need some
kind of constructive convergence of the relative frequencies if the separation
test is to be defined in terms of relative frequency convergence.
   \end{enumerate}
 \end{remarks}

 \begin{example}
For \( \eps>0 \) let \( P(\eps) \) be the Markov chain \( X_{1},X_{2},\dots \)
with set of states \( \{0,1\} \),
with transition probabilities \( T(0,1)=T(1,0)=\eps \) and \( T(0,0)=T(1,1)=1-\eps \),
and with \( P\evof{X_{1}=0}=P\evof{X_{1}=1}=1/2 \).
Let \( \cC(\dg) \) be the class of all \( P(\eps) \) with \( \eps\ge\dg \).
For each \( \dg>0 \), this is an effectively compact class, and a separating test is easy to
construct since an effective law of large numbers holds for these Markov
chains. 

We can generalize to the set of \( m \)-state stationary Markov chains whose 
eigenvalue gap is \( \ge\eps \).
 \end{example}

This example is in contrast to V'yugin's example~\cite{VyuginErgodic98}
showing that, in the nonergodic case,
in general no recursive speed of convergence can be guaranteed in the
Ergodic Theorem (which is the appropriate generalization of the law of large
numbers)\footnote{This paper of Vyugin also shows that though there is no recursive speed of
convergence, a certain constructive 
proof of the pointwise ergodic theorem still gives rise to a test of
randomness.}.

We can show that if a computable Markov chain is ergodic then
its law of large numbers \emph{does} have a constructive speed of convergence.
Hopefully, this observation can be extended to some interesting 
compact classes of ergodic processes.

\subsection{Martin-L\"of's approach}

Martin-L\"of also gave a definition of Bernoulli tests in~\cite{MLof66art}.
For its definition let us introduce the following notation.

 \begin{notation} The set of sequences with a given frequency of 1's will be
   denoted as follows:
 \begin{align*}
   \dB(n,k) = \setof{x\in\dB^{n}: \sum_{i}x(i)=k}.
 \end{align*}
 \end{notation}

Martin-L\"of's definition translates to the integral constraint version as follows:

 \begin{definition}\label{def:combinat-Bernoulli}
Let \( X=\dB^{\dN} \) be the set of infinite binary sequences with the usual
metrics.
A \df{combinatorial Bernoulli test} is a function
\( f:\dB^{*}\to\ol\dR_{+} \) with the following constraints:
 \begin{alphenum}
  \item It is lower semicomputable.
  \item It is monotonic with respect to the prefix relation.
  \item For all \( 0\le k\le n \) we have
 \begin{align}\label{eq:combinat-Bernoulli-sum}
      \sum_{x\in\dB(n,k)} f(x) \le \binom{n}{k}.
 \end{align}
 \end{alphenum}
 \end{definition}

The following observation is useful.

 \begin{proposition}\label{propo:Bernoulli-extend}
If a combinatorial Bernoulli test \( f(x) \) is given on strings \( x \)
of length less than \( n \), then extending it to longer strings using 
monotonicity we get a function that is still a combinatorial Bernoulli test.
 \end{proposition}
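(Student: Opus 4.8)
The plan is to make the extension completely explicit, dispatch the two bookkeeping properties (lower semicomputability and prefix-monotonicity) quickly, and then spend the real effort on the combinatorial constraint~\eqref{eq:combinat-Bernoulli-sum}, which I would verify by induction on the length of the string using Pascal's identity. First I would observe that, since $f$ is already monotonic on strings of length less than $n$, ``extending by monotonicity'' can only mean assigning to each $w$ with $\len{w}\ge n$ the value $f(w^{\le n-1})$, the value of $f$ at the longest prefix of $w$ on which $f$ was already defined; equivalently, the extended function is $w\mapsto f(w^{\le\min(\len{w},\,n-1)})$. From this closed form, lower semicomputability is immediate: the truncation $w\mapsto w^{\le\min(\len{w},\,n-1)}$ is a computable map on $\dB^{*}$, and the composition of a computable map with a lower semicomputable function is lower semicomputable. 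Monotonicity follows from a short case split on $v\prefix w$: if both lengths are $<n$ it is the assumption; if both are $\ge n$ the two values are literally equal; and in the remaining case $\len{v}<n\le\len{w}$ one has $v\prefix w^{\le n-1}$ with both strings of length $<n$, so $f(v)\le f(w^{\le n-1})=f(w)$ by the monotonicity already assumed.

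The substance is the constraint $\sum_{x\in\dB(m,k)}f(x)\le\binom{m}{k}$. For $m<n$ it holds by hypothesis, so I would prove it for all $m\ge n-1$ by induction on $m$, with base case $m=n-1$ (given). For the inductive step, fix $m\ge n-1$ and $0\le k\le m+1$. Every $x\in\dB(m+1,k)$ factors uniquely as $x=yb$ with $y\in\dB^{m}$ and $b\in\dB$, and $f(x)=f(y)$ precisely because $\len{y}=m\ge n-1$, so the extension rule applies. Sorting by $b$, the case $b=1$ puts $y$ in $\dB(m,k-1)$ and $b=0$ puts $y$ in $\dB(m,k)$, whence
\[
 \sum_{x\in\dB(m+1,k)} f(x) = \sum_{y\in\dB(m,k-1)} f(y) + \sum_{y\in\dB(m,k)} f(y) \le \binom{m}{k-1}+\binom{m}{k} = \binom{m+1}{k},
\]
by the inductive hypothesis and Pascal's rule, with the conventions $\binom{m}{-1}=\binom{m}{m+1}=0$ handling $k=0$ and $k=m+1$ (one of the two subsums is then empty). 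This closes the induction and gives the constraint for all $n,k$, so the extended $f$ is again a combinatorial Bernoulli test.

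I do not expect a genuine obstacle here; the only points requiring a little care are the boundary behaviour of the binomial coefficients at $k=0$ and $k=m+1$, and the remark that the identity $f(yb)=f(y)$ is legitimate exactly because $\len{y}\ge n-1$, so the extension acts uniformly from that length onward and the single induction suffices.
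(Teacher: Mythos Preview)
Your proposal is correct and follows essentially the same route as the paper: the heart of both arguments is the Pascal-identity step
\[
\sum_{x\in\dB(m+1,k)} f(x)\le\sum_{y\in\dB(m,k-1)} f(y)+\sum_{y\in\dB(m,k)} f(y)\le\binom{m}{k-1}+\binom{m}{k}=\binom{m+1}{k}.
\]
The paper records only this single inductive step (and writes the first relation as an inequality rather than your equality, which is harmless), whereas you spell out the extension formula, the lower-semicomputability and monotonicity checks, the induction on $m$, and the boundary cases $k=0,\,k=m+1$; this extra care is welcome but does not change the underlying argument.
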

 \begin{proof}
It is sufficient to check the relation~\eqref{eq:combinat-Bernoulli-sum}.
We have (even for \( k=0 \), when \( \dB(n-1,k-1)=\emptyset \)):
 \begin{align*}
  \sum_{x\in\dB(n,k)} f(x) &\le
  \sum_{y\in\dB(n-1,k-1)}f(y)+\sum_{y\in\dB(n-1,k)} f(y) 
\\ &\le \binom{n-1}{k-1} + \binom{n-1}{k} = \binom{n}{k}.
 \end{align*}
 \end{proof}

The following can be shown using standard methods:
  
 \begin{proposition}[Universal combinatorial Bernoulli test]
There is a universal combinatorial Bernoulli test \( f(x) \), that is a combinatorial
Bernoulli test with the property that for every
combinatorial Bernoulli test \( h(x) \) there is a constant \( c_{h}>0 \) such that for
all \( x \) we have \( h(x) \le c_{h} g(x) \).
 \end{proposition}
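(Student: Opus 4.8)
The plan is the standard ``enumerate and trim'' construction for universal tests, with one wrinkle forced by the monotonicity-with-respect-to-prefix requirement: naive rescaling of an overflowing slice \( \dB(n,k) \) tends to break either monotonicity in the approximation stage (killing lower semicomputability) or the prefix-monotonicity (shrinking a node below its parent). So instead of rescaling I would trim by \emph{freezing} an approximation at the first finite stage where one of the constraints \( \sum_{x\in\dB(n,k)}(\cdot)\le\binom{n}{k} \) would be violated, and then invoke Proposition~\ref{propo:Bernoulli-extend} to certify that the frozen function — which has infinite support — is still a genuine combinatorial Bernoulli test.

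Concretely, I would fix a recursive enumeration \( f_{1},f_{2},\dots \) of all lower semicomputable functions \( \dB^{*}\to\ol\dR_{+} \) (available from the universal lower semicomputable function machinery, restricted to the domain \( \dB^{*} \)), each presented as \( f_{e}=\sup_{s}f_{e}^{(s)} \) with \( f_{e}^{(s)} \) rational, computable in \( (e,s,x) \), nondecreasing in \( s \), and — as one may arrange — vanishing on strings of length \( >s \). Replace \( f_{e}^{(s)} \) by its prefix-closure \( \hat f_{e}^{(s)}(x)=\max\setof{f_{e}^{(s)}(v):v\prefix x} \); this is again rational, computable, nondecreasing in \( s \), monotonic under \( \prefix \), and for strings of length \( \ge s \) it depends only on the length-\( s \) prefix, so beyond level \( s \) it is exactly the monotonic extension of its restriction to strings of length \( \le s \). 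Since at stage \( s \) the sums \( \sum_{x\in\dB(n,k)}\hat f_{e}^{(s)}(x) \) for \( n\le s \) are finite rational quantities, I can decide uniformly whether any constraint is violated; let \( S_{e} \) be the first such stage (\( S_{e}=\infty \) if none), and set \( f'_{e}=\sup_{s}\hat f_{e}^{(\min(s,\,S_{e}-1))} \), which is lower semicomputable uniformly in \( e \).

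I would then verify three things. First, each \( f'_{e} \) is a combinatorial Bernoulli test: lower semicomputability and prefix-monotonicity are inherited from the \( \hat f_{e}^{(s)} \), and for the summation constraint, if \( S_{e}=\infty \) it passes to the supremum stage by stage, while if \( S_{e}<\infty \) then \( f'_{e}=\hat f_{e}^{(S_{e}-1)} \) satisfies the constraints for all \( n<S_{e} \) and, beyond that level, is its own monotonic extension, so Proposition~\ref{propo:Bernoulli-extend} supplies the remaining constraints. Second, every combinatorial Bernoulli test \( h \) is recovered exactly: taking \( e \) with \( f_{e}=h \), monotonicity of \( h \) forces \( \hat f_{e}^{(s)}(x)\le\max_{v\prefix x}h(v)=h(x) \), so no constraint is ever violated, \( S_{e}=\infty \), and \( f'_{e}=\sup_{s}\hat f_{e}^{(s)}=h \) (swapping the supremum over \( s \) with the maximum over the finitely many prefixes of \( x \)). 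Third, the weighted sum \( g=\sum_{e\ge 1}2^{-e}f'_{e} \) is lower semicomputable and prefix-monotonic, satisfies \( \sum_{x\in\dB(n,k)}g(x)\le\sum_{e}2^{-e}\binom{n}{k}=\binom{n}{k} \), hence is a combinatorial Bernoulli test, and for any such test \( h=f'_{e} \) we get \( h(x)\le 2^{e}g(x) \), i.e.\ \( h\le c_{h}g \) with \( c_{h}=2^{e} \).

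The step I expect to be the crux is the case \( S_{e}<\infty \) of the first verification: the tempting move of rescaling the offending slices down to fit \( \binom{n}{k} \) fails because shrinking a node below an already-committed parent value violates prefix-monotonicity, and rescaling across stages need not be nondecreasing in \( s \); Proposition~\ref{propo:Bernoulli-extend} is exactly what makes freezing a clean substitute, since it guarantees that satisfying the finitely many constraints up to the frozen level already forces all the infinitely many constraints on the monotonic extension. Everything else — the existence of the enumeration, and closure of lower semicomputability under finite \( \max \) over prefixes and under \( 2^{-e} \)-weighted countable sums — is routine.
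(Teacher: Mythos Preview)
Your proposal is correct. The paper omits the proof entirely, saying only that it ``can be shown using standard methods,'' so your construction is a valid and careful instantiation of those methods. The point you single out as the crux is indeed the right one: the combinatorial Bernoulli setting differs from the generic ``enumerate, trim, sum'' template precisely because the prefix-monotonicity constraint rules out slice-by-slice rescaling, and your use of freezing plus Proposition~\ref{propo:Bernoulli-extend} is the clean way around it. Since the paper provides no argument to compare against, there is nothing further to add.
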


\begin{definition}
Let us fix a universal combinatorial Bernoulli test \( b(x) \) and extend it to
infinite sequences \( \xi \) by
 \begin{align*}
   b(\xi) = \sup_{n} b(\xi^{\le n}).
 \end{align*}
Let \( \t_{\cB}(\xi) \) be a universal class test for Bernoulli measures, for
\( \xi\in\dB^{\dN} \).
\end{definition}

Let us show that the general class test for Bernoulli measures and
Martin-L\"of's Bernoulli test yield the same random sequences.

 \begin{theorem}\label{thm:combinat-Bernoulli}
With the above definitions, we have \( b(\xi) \eqm \t_{\cB}(\xi) \).
In words: a sequence is nonrandom with respect to all Bernoulli measures if and
only if it is rejected by a universal combinatorial Bernoulli test.
 \end{theorem}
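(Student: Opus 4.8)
We have to establish the two multiplicative inequalities $b(\xi)\lem\t_{\cB}(\xi)$ and $\t_{\cB}(\xi)\lem b(\xi)$.

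\emph{The easy direction $b\lem\t_{\cB}$.} First I would note that a combinatorial Bernoulli test is automatically an extended $\cB$-test: if $f:\dB^{*}\to\ol\dR_{+}$ is prefix-monotone, lower semicomputable and satisfies $\sum_{x\in\dB(n,k)}f(x)\le\binom nk$ for all $0\le k\le n$, then for every $p\in\clint{0}{1}$,
\[
 \sum_{x\in\dB^{n}}B_{p}(x)f(x)=\sum_{k=0}^{n}p^{k}(1-p)^{n-k}\sum_{x\in\dB(n,k)}f(x)\le\sum_{k=0}^{n}p^{k}(1-p)^{n-k}\binom nk=1 .
\]
Hence the universal combinatorial Bernoulli test $b$, read on finite strings, is an extended $\cB$-test, and its extension $b(\xi)=\sup_{n}b(\xi^{\le n})$ is a $\cB$-test on $\dB^{\dN}$ (by monotone convergence $\int b\,dB_{p}=\sup_{n}\sum_{x\in\dB^{n}}B_{p}(x)b(x)\le1$). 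Since $\t_{\cB}$ is a universal class test, $b\lem\t_{\cB}$.

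\emph{The hard direction $\t_{\cB}\lem b$.} It is enough to produce one combinatorial Bernoulli test $h$ with $\t_{\cB}\lem h$; then $h\lem b$ by universality of $b$. Using that a class test on $\dB^{\dN}$ can be written as $\sup_{n}g(\xi^{\le n})$ for some extended $\cB$-test $g$ (lower semicomputable, prefix-monotone, $\sum_{x\in\dB^{n}}B_{p}(x)g(x)\le1$ for all $n,p$), I would take $\t_{\cB}(\xi)=\sup_{n}g(\xi^{\le n})$ and set, for a universal constant $C_{0}\ge1$ chosen below,
\[
 h(x)=\tfrac1{C_{0}}\,g\bigl(x^{\le\flo{\len x/2}}\bigr).
\]
This $h$ is lower semicomputable, and it is prefix-monotone because $\flo{\len x/2}$ is nondecreasing in $\len x$ and $x^{\le\flo{\len x/2}}$ is a prefix of $x'^{\le\flo{\len{x'}/2}}$ whenever $x\prefix x'$. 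Moreover $\flo{m/2}$ takes every value in $\dN$ as $m$ runs over $\dN$, so $h(\xi)=\sup_{m}h(\xi^{\le m})=\tfrac1{C_{0}}\sup_{n}g(\xi^{\le n})=\tfrac1{C_{0}}\t_{\cB}(\xi)$, giving $\t_{\cB}\lem h$ as soon as $h$ passes the test condition. Everything is thus reduced to checking $\sum_{x\in\dB(n,k)}h(x)\le\binom nk$ for a good $C_{0}$.

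\emph{The combinatorial bound, and where the work is.} Put $m_{0}=\flo{n/2}$, $m_{1}=n-m_{0}$ and $a_{j}=\sum_{y\in\dB(m_{0},j)}g(y)$. Counting extensions and using Vandermonde's identity,
\[
 \sum_{x\in\dB(n,k)}g\bigl(x^{\le m_{0}}\bigr)=\sum_{j}\binom{m_{1}}{k-j}a_{j},\qquad \binom nk=\sum_{j}\binom{m_{0}}{j}\binom{m_{1}}{k-j},
\]
so with $\hat a_{j}=a_{j}/\binom{m_{0}}{j}$ the ratio of the left side to $\binom nk$ equals $\mathbf E\,\hat a_{J}$, where $J$ is hypergeometric with population $n$, $k$ successes and $m_{0}$ draws. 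On the other hand $\sum_{j}\binom{m_{0}}{j}p^{j}(1-p)^{m_{0}-j}\hat a_{j}=\sum_{y\in\dB^{m_{0}}}B_{p}(y)g(y)\le1$ for all $p$; taking $p=j_{0}/m_{0}$ and a local central limit lower bound $\mathrm{Bin}(m_{0},p)(j)\ge c/\sigma_{j_{0}}$ valid for $j$ within $O(\sigma_{j_{0}})$ of $j_{0}$, where $\sigma_{j_{0}}$ is of order $\sqrt{j_{0}\land(m_{0}-j_{0})}$, one gets that the $\hat a$-sum over any window of length of order $\sigma_{j_{0}}$ about $j_{0}$ is $O(\sigma_{j_{0}})$ (in particular $\hat a_{j}=O(\sigma_{j})$). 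Combining this with the matching local-CLT upper bound on the hypergeometric mass function together with its sub-Gaussian tails — slicing the $J$-axis into windows of the appropriate width and summing the geometrically small contributions — gives $\mathbf E\,\hat a_{J}\le C_{0}$ with $C_{0}$ an absolute constant, uniformly in $n$, $k$ and in the test $g$. This yields $\sum_{x\in\dB(n,k)}h(x)\le\binom nk$, so $h$ is a combinatorial Bernoulli test, and the proof is complete. The naive choice $h(x)=c\,g(x)$ fails here: an extended $\cB$-test can genuinely overshoot the combinatorial normalization by a factor of order $\sqrt n$ on a single line $\dB(n,k)$ (a ``bump'' of height $\sqrt n$ supported on strings whose first $n$ bits have exactly $n/2$ ones is already a valid extended $\cB$-test), so no length-independent constant absorbs it; halving the length is precisely what buys the room, by re-expressing $\binom nk$ as $\sum_{j}\binom{m_{0}}{j}\binom{m_{1}}{k-j}$ and thereby spreading the mass of $g$ through a hypergeometric kernel of width $\gtrsim\sqrt{m_{0}}$. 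The one genuinely delicate point — the main obstacle — is to make the bound $\mathbf E\,\hat a_{J}\le C_{0}$ uniform, including the boundary regime where $k$ is close to $0$ or $n$: there one must run the window argument at the local scale $\sqrt k$ (resp.\ $\sqrt{n-k}$) instead of $\sqrt n$ and control the hypergeometric tails accordingly.
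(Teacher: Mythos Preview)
Your easy direction is exactly the paper's. The hard direction is where the approaches diverge.

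\textbf{The paper's truncation is much shorter.} Instead of looking at the first half of the string, the paper looks only at the first $n=\flo{\sqrt{N/2}}$ bits of a length-$N$ string: with $f(x)=\t_{\cB}(x)$ the extended class test and $F(x)=f(x^{\le n})$, one computes
\[
 \sum_{x\in\dB(N,K)}F(x)=\sum_{k=0}^{n}\binom{N-n}{K-k}\sum_{y\in\dB(n,k)}f(y),
\]
and then uses the elementary product estimate
\[
 \frac{\binom{N-n}{K-k}}{\binom{N}{K}}
 =\frac{K\cdots(K-k+1)\,(N-K)\cdots(N-K-(n-k)+1)}{N\cdots(N-n+1)}
 \le p^{k}(1-p)^{n-k}\Bigl(\frac{N}{N-n}\Bigr)^{n},
\]
with $p=K/N$. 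The crucial point is that $(N/(N-n))^{n}\le e^{n^{2}/(N-n)}\le e^{2}$ precisely because $n^{2}\le N/2$. This turns the sum into $e^{2}\binom{N}{K}\sum_{k}p^{k}(1-p)^{n-k}\sum_{y\in\dB(n,k)}f(y)\le e^{2}\binom{N}{K}$, using that $f$ is an extended $\cB$-test at parameter $p$. So $e^{-2}F$ is a combinatorial Bernoulli test, with no limit theorems needed.

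\textbf{Why your halving forces the CLT route.} With $m_{0}=\flo{n/2}$ the factor $(n/(n-m_{0}))^{m_{0}}$ is $2^{n/2}$, so the elementary bound above is useless; you are genuinely comparing the hypergeometric kernel $\text{Hyp}(n,m_{0},k)$ against the binomial $\text{Bin}(m_{0},k/n)$, both of which have standard deviation of the same order $\Theta(\sqrt{k(n-k)/n})$. Your slicing argument is in the right spirit---indeed, since the hypergeometric has roughly half the variance of the binomial with the same mean, one can show a pointwise bound $\text{Hyp}(j)\le C\cdot\text{Bin}(m_{0},k/n)(j)$ for a universal $C$, which would give $\mathbf{E}_{\text{Hyp}}\hat a\le C$ in one stroke. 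But you have not proved this; you have only asserted that the slicing works, and you yourself flag the boundary case $k\to 0$ or $k\to n$ as ``genuinely delicate''. So the proposal as written has a real gap: the uniform constant $C_{0}$ is claimed, not established.

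\textbf{Summary.} Your strategy can be pushed through, but the paper's choice of $n\approx\sqrt{N}$ rather than $n\approx N/2$ collapses the whole probabilistic comparison into a two-line product inequality. The square-root truncation is exactly the largest $n$ for which the crude bound $(N/(N-n))^{n}=O(1)$ holds, and that is what makes the paper's argument elementary and complete.
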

 \begin{proof}
We first show \( b(\xi)\le \t_{\cB}(\xi) \).
Moreover, we show that \( b(x) \) is an extended
class test for the class of Bernoulli measures.
We only need to check the sum condition, namely that for all
\( 0\le p \le 1 \), and all \( n>0 \) the inequality
\( \sum_{x\in\dB^{n}}B_{p}(x)b(x) \le 1 \) holds.
Indeed, we have
 \begin{align*}
 \sum_{x\in\dB^{n}}B_{p}(x)f(x)
     &= \sum_{k=0}^{n}p^{k}(1-p)^{n-k}\sum_{x\in\dB(n,k)}f(x)
\\  &\le \sum_{k=0}^{n}p^{k}(1-p)^{n-k}\binom{n}{k} = 1.
 \end{align*}
On the other hand, let \( f(x)=\t_{\cB}(x) \), \( x\in\dB^{*} \)
be the extended test for \( \t_{\cB}(\xi) \).
For all integers \( N>0 \) let \( n = \flo{\sqrt{N/2}} \).
Then as \( N \) runs through the integers, \( n \) also runs through all integers.
For \( x\in\dB^{N} \) let \( F(x)=f(x^{\le n}) \).
Since \( f(x) \) is lower semicomputable and monotonic with respect to the prefix
relation, this is also true of \( F(x) \).

We need to estimate \( \sum_{x\in\dB(N,K)} F(x) \).
For this, note that for \( y\in\dB(n,k) \) we have
 \begin{align}\label{eq:hyper}
   \card{\setof{x\in\dB(N,K):y\prefix x}} = \binom{N-n}{K-k}.
 \end{align}
Now for \( 0\le K\le N \) we have
 \begin{equation}\label{eq:Bernoulli-estim}
 \begin{aligned}
      \sum_{x\in\dB(N,K)} F(x) &= 
     \sum_{y\in\dB^{n}}f(y)\card{\setof{x\in\dB(N,K):y\prefix x}}
\\ &= \sum_{k=0}^{n}\binom{N-n}{K-k}\sum_{y\in\dB(n,k)}f(y).
 \end{aligned}%
 \end{equation}
Let us estimate \( \binom{N-n}{K-k}\big/\binom{N}{K} \).
If \( K=k=0 \) then this is 1.
If \( k=n \) then it is \( \frac{K\dotsm(K-n+1)}{N\dotsm(N-n+1)} \).
Otherwise, using \( p=K/N \):
\begin{align}\nonumber
\frac{\binom{N-n}{K-k}}{\binom{N}{K}}
 &= \frac{(N-n)(N-n-1)\dotsm(N-K-(n-k)+1)/(K-k)!}{N(N-1)\dotsm(N-K+1)/K!}
\\\nonumber  &= \frac{K\dotsm(K-k+1)\cdot(N-K)\dotsm(N-K-(n-k)+1)}{N\dotsm(N-n+1)}.
\\\label{eq:Bernoulli-estim-1}
  &\le \frac{K^{k}(N-K)^{n-k}}{(N-n)^{n}} = p^{k}(1-p)^{n-k}\Paren{\frac{N}{N-n}}^{n}.
 \end{align}
Thus in all cases the estimate
 \begin{align*}
\binom{N-n}{K-n}\bigg/\binom{N}{K} \le p^{k}(1-p)^{n-k}\Paren{\frac{N}{N-n}}^{n}
 \end{align*}
holds.
We have
 \begin{align*}
\Paren{\frac{N}{N-n}}^{n} =\Paren{1+\frac{n}{N-n}}^{n}
  \le e^{\frac{N}{2(N-n)}} \le e^{2},
 \end{align*}
since we assumed \( 2n^{2}\le N \).
Substituting into~\eqref{eq:Bernoulli-estim} gives
 \begin{align*}
  \sum_{x\in\dB(N,K)} F(x)
  &\le e^{2}\sum_{k=0}^{n}p^{k}(1-p)^{n-k}\sum_{y\in\dB(n,k)}f(y) \le e^{2},
 \end{align*}
since \( f(x) \) is an extended class test for Bernoulli measures.
It follows that \( e^{-2}F(x)\lem b(x) \), hence also \( F(x) \lem b(x) \).
But we have \( \t_{\cC}(\xi)=\sup_{n}f(\xi^{\le n}) = \sup_{n}F(\xi^{\le n}) \), hence
\( \t_{\cC}(\xi) \lem b(\xi) \).
 \end{proof}

\section{Neutral measure}\label{sec:neutral}

Let \( \t_{\mu}(x) \) be our universal uniform randomness test.
We call a measure \( M \) \df{neutral} if
\( \t_{M}(x) \le 1 \) for all \( x \).
If \( M \) is neutral then no experimental outcome \( x \) could 
refute the theory (hypothesis, model)
that \( M \) is the underlying measure to our experiments.
It can be used as ``apriori probability'', in a Bayesian approach to
statistics.
Levin's theorem says the following:

 \begin{theorem}\label{thm:neutral-meas}
If the space \( \bX \) is compact then there is a neutral measure over \( \bX \).
 \end{theorem}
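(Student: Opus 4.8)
The plan is to use a fixed-point / compactness argument. The universal test $\t_\mu(x)$ is lower semicomputable in the pair $(\mu,x)$, and for each fixed $\mu$ it satisfies $\mu\t_\mu\le 1$, so by Markov's inequality the set $B_\mu = \setof{x : \t_\mu(x) > 1}$ has $\mu(B_\mu) < 1$, i.e.\ its complement $G_\mu = \setof{x : \t_\mu(x)\le 1}$ has positive measure. A neutral measure is precisely one concentrated on its own ``good set'': $M(G_M) = 1$. I would like to produce such an $M$ by a topological fixed-point theorem, the natural candidate being the Kakutani fixed-point theorem (or a one-shot application of it hidden inside a limiting construction).

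First I would recall that since $\bX$ is compact, the space $\cM(\bX)$ of probability measures is compact (in the weak topology) and convex. The idea is to define a set-valued map $\Phi$ on $\cM(\bX)$ whose fixed points are neutral measures. For each $\mu$, let $c_\mu = \inf\setof{c : \mu\setof{x : \t_\mu(x) > c} = 0 \text{ is false}}$ — more usefully, pick a level set on which $\t_\mu$ is bounded and of positive measure, and let $\Phi(\mu)$ be the set of measures obtained by conditioning $\mu$ (or some perturbation) on that level set. One then checks $\Phi$ has nonempty convex closed values and closed graph, applies Kakutani to get $\mu_* \in \Phi(\mu_*)$, and argues that the fixed-point condition forces $\t_{\mu_*}\le 1$ everywhere — otherwise conditioning strictly shrinks the support, contradicting $\mu_* \in \Phi(\mu_*)$. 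Lower semicomputability of $\t$ is what makes the relevant level sets open, hence the conditioning well-behaved and the graph closed.

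An alternative route, which I would actually prefer to carry out because it avoids delicate continuity checks, is an iterative construction: start from any $\mu_0$, and given $\mu_n$, if $\t_{\mu_n}$ is not already $\le 1$ everywhere, replace $\mu_n$ by $\mu_{n+1}$ = ($\mu_n$ conditioned on a sublevel set $\setof{\t_{\mu_n}\le c_n}$ of positive measure). Each step moves mass away from the ``bad'' region. By compactness of $\cM(\bX)$ pass to a weakly convergent subsequence $\mu_{n_k}\to M$. The key point is to show that in the limit $\t_M(x)\le 1$ for all $x$: here one uses that $\t_\mu(x)$ is lower semicomputable, hence lower semicontinuous jointly in $(\mu,x)$, together with the fact that each $\mu_n$ put no mass where $\t_{\mu_{n-1}}$ was large, and that weak limits of measures supported on shrinking closed sets are supported on the intersection. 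Compactness of $\bX$ is used crucially to guarantee that the family of sublevel sets $\setof{\t_\mu \le c}$ is well-behaved (closed, and their complements are covered by finitely many basic open sets at each stage).

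The main obstacle is the last step: controlling $\t_M$ at \emph{every} point $x$, including points that may have been ``exposed'' only in the limit. Lower semicontinuity of $\t$ in $\mu$ goes the wrong way by itself (it could let $\t_M(x)$ jump \emph{up} in the limit), so the argument must exploit the compactness of $\bX$ to bound $\t_\mu$ uniformly on the relevant sublevel sets and to ensure that the ``bad sets'' $B_{\mu_n}$ do not escape to a genuinely new region in the limit. Making this precise — essentially showing that the construction cannot run forever without converging to a neutral measure — is where the real work lies, and it is exactly the place where compactness is indispensable (the theorem is in fact false without it).
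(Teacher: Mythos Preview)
Your characterization of neutrality is incorrect, and this breaks both of your proposed approaches. You write that ``a neutral measure is precisely one concentrated on its own good set: $M(G_M)=1$'', but neutrality requires $\t_M(x)\le 1$ for \emph{every} $x\in X$, i.e.\ $G_M=X$, which is strictly stronger. In fact $M(G_M)=1$ is very easy to achieve and tells you nothing about points off the support: for instance $M=\dg_{x_0}$ always satisfies $M(G_M)=1$ (since $M\t_M=\t_M(x_0)\le 1$), yet $\t_{\dg_{x_0}}(x_1)$ can be arbitrarily large at any other point $x_1$. Consequently, a Kakutani fixed point of your conditioning map $\Phi$, or a limit of your iterative conditioning scheme, would at best produce a measure supported on its own good set---but conditioning never touches the bad set outside the support, so nothing prevents $\t_M$ from blowing up there. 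The obstacle you flag at the end is therefore not a technicality about limits; it is the core of the problem, and your construction simply does not address it.

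The paper's argument avoids this by reversing the quantifiers: for each \emph{point} $x$ it considers the closed set $F_x=\setof{\mu:\t_\mu(x)\le 1}$ of measures, and seeks a measure in $\bigcap_x F_x$. Compactness of $\cM(\bX)$ reduces this to the finite intersection property, and for a finite set $x_1,\dots,x_k$ the paper applies (a continuous form of) Sperner's Lemma / the KKM lemma to the simplex $S(x_1,\dots,x_k)$ of measures supported on those points: since every such measure $\mu$ satisfies $\t_\mu(x_i)\le 1$ for \emph{some} $i$ (because $\mu\t_\mu\le 1$ forces $\t_\mu\le 1$ somewhere on the support), the sets $F_{x_i}$ form a KKM cover of the simplex and hence have a common point. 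The fixed-point content is real---Sperner is equivalent to Brouwer---but it enters through this dual ``one closed set per point'' picture rather than through a self-map on measures.
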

 The proof relies on a nontrivial combinatorial fact,
Sperner's Lemma, which also underlies the proof of the Brouwer fixpoint
theorem.
Here is a version of Sperner's Lemma, spelled out in continuous form:

 \begin{proposition}\label{propo:Sperner}
Let \( p_{1},\dots,p_{k} \) be points of some finite-dimensional space
\( \dR^{n} \).
Suppose that there are closed sets \( F_{1},\dots,F_{k} \)
with the property that for every subset \( 1 \le i_{1} < \dots < i_{j} \le k \)
of the indices, the simplex \( S(p_{i_{1}},\dots,p_{i_{j}}) \) spanned by
\( p_{i_{1}},\dots,p_{i_{j}} \) is covered by
the union \( F_{i_{1}} \cup \dots \cup F_{i_{j}} \).
Then the intersection \( \bigcap_{i} F_{i} \) of all these sets is not empty.
 \end{proposition}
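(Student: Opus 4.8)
The plan is to reduce the statement to the classical combinatorial Sperner Lemma by the standard Knaster--Kuratowski--Mazurkiewicz argument. I would work with the standard abstract simplex \( \Delta = \setof{ t = (t_{1},\dots,t_{k}) : t_{i} \ge 0,\ \sum_{i} t_{i} = 1 } \) and the affine map \( \pi : \Delta \to \dR^{n} \), \( \pi(t) = \sum_{i} t_{i} p_{i} \). For \( I \sbsq \{1,\dots,k\} \), writing \( \Delta_{I} = \setof{ t\in\Delta : t_{i} = 0 \text{ for } i\notin I } \) for the corresponding face, continuity and affinity of \( \pi \) give \( \pi(\Delta_{I}) = S(p_{i} : i\in I) \). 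Put \( C_{i} = \pi^{-1}(F_{i}) \); these are closed subsets of \( \Delta \), and the covering hypothesis \( S(p_{i}:i\in I) \sbsq \bigcup_{i\in I} F_{i} \) translates, for every \( I \), into \( \Delta_{I} \sbsq \bigcup_{i\in I} C_{i} \) (in particular \( \Delta = \bigcup_{i} C_{i} \)): so the \( C_{i} \) form a KKM cover of \( \Delta \). Once \( \bigcap_{i} C_{i} \ne \emptyset \) is established, any \( t^{*} \) in that intersection gives \( \pi(t^{*}) \in \bigcap_{i} F_{i} \), which is the conclusion. Note that the \( p_{i} \) need not be affinely independent — this is harmless, since the argument never uses injectivity of \( \pi \).

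To show \( \bigcap_{i} C_{i} \ne \emptyset \), I would take a sequence of triangulations \( \mathcal{T}_{m} \) of \( \Delta \) (e.g.\ iterated barycentric subdivisions) with mesh tending to \( 0 \). For each vertex \( v \) of \( \mathcal{T}_{m} \), let \( I = \operatorname{supp}(v) = \setof{ i : t_{i}(v) > 0 } \); since \( v \in \Delta_{I} \sbsq \bigcup_{i\in I} C_{i} \), choose a label \( \lambda(v) \in I \) with \( v \in C_{\lambda(v)} \). This is a proper Sperner labeling, so the combinatorial Sperner Lemma yields, for each \( m \), a cell \( \sigma_{m} \) of \( \mathcal{T}_{m} \) carrying all \( k \) labels; hence for every \( i \) there is a vertex \( v_{m,i} \in \sigma_{m} \) with \( v_{m,i} \in C_{i} \). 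By compactness of \( \Delta \), pass to a subsequence along which \( v_{m,1} \to x^{*} \); since \( \operatorname{diam}(\sigma_{m}) \to 0 \), all \( v_{m,i} \) converge to the same \( x^{*} \), and closedness of each \( C_{i} \) forces \( x^{*} \in \bigcap_{i} C_{i} \).

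The one genuinely nontrivial ingredient is the combinatorial Sperner Lemma itself (existence of a fully labeled cell) — exactly the ``nontrivial combinatorial fact'' the text alludes to — and this is the step I expect to be the main obstacle for a self-contained write-up; I would either cite it or include its standard parity/door-counting proof (in each top cell count the facets labeled \( 1,\dots,k-1 \), and follow the induced graph starting from the boundary). The remaining points are routine: checking \( \pi(\Delta_{I}) = S(p_{i}:i\in I) \) and closedness of \( C_{i} \), and disposing of the degenerate cases \( k = 1 \) and \( \card{I} = 1 \) (which merely say \( p_{i} \in F_{i} \)).
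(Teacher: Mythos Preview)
Your proof is correct and is precisely the standard Knaster--Kuratowski--Mazurkiewicz argument. Note, however, that the paper does not actually prove this proposition: it is introduced as ``a nontrivial combinatorial fact, Sperner's Lemma, which also underlies the proof of the Brouwer fixpoint theorem,'' and is then simply stated and used as a black box in the proof of the neutral-measure theorem. So there is no paper proof to compare against; you have supplied the standard one, including the useful observation that affine independence of the \( p_{i} \) is unnecessary because the argument lives on the abstract simplex.
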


The following lemma will also be needed.

 \begin{lemma}\label{lem:concentr}
For every closed set \( A \sbs \bX \) and measure \( \mu \), if \( \mu(A)=1 \) then
there is a point \( x\in A \) with \( \t_{\mu}(x) \le 1 \).
 \end{lemma}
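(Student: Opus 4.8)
The plan is to argue by contradiction, using only the defining integral inequality $\mu\t_{\mu}\le 1$ of the universal uniform test together with countable additivity of $\mu$; in particular, the compactness of $\bX$ assumed in Theorem~\ref{thm:neutral-meas} will play no role here (it is needed elsewhere, in the Sperner's Lemma step). So suppose, for contradiction, that $\t_{\mu}(x) > 1$ for every $x\in A$. Since $\t_{\mu}(\cdot)$ is lower semicomputable, hence lower semicontinuous, each set
\[
  B_{n} = \setof{ x\in A : \t_{\mu}(x) > 1 + 1/n }
\]
is the intersection of an open set with the closed set $A$, hence Borel measurable, and $A = \bigcup_{n\ge 1} B_{n}$ is an increasing union. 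Because $\mu(A) = 1 > 0$, continuity of measure from below yields an $n$ with $\dg := \mu(B_{n}) > 0$.

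Next I would split the integral $\mu\t_{\mu}$ along the partition $B_{n}\;\sqcup\;(A\setminus B_{n})\;\sqcup\;(\bX\setminus A)$. On $B_{n}$ we have $\t_{\mu}\ge 1 + 1/n$, so this part contributes at least $(1+1/n)\dg$; on $A\setminus B_{n}$ we still have $\t_{\mu} > 1$, so that part contributes at least $\mu(A\setminus B_{n}) = 1 - \dg$; and the part over $\bX\setminus A$ contributes a nonnegative amount since $\t_{\mu}\ge 0$. Adding these up,
\[
  \mu\t_{\mu} \;\ge\; (1 + 1/n)\dg + (1 - \dg) \;=\; 1 + \dg/n \;>\; 1,
\]
which contradicts $\mu\t_{\mu}\le 1$. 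Hence some $x\in A$ must satisfy $\t_{\mu}(x)\le 1$.

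I do not expect a genuine obstacle here. The only points needing a little care are: (i) confirming that $\t_{\mu}$ is $\mu$-measurable, so that the set-wise integral estimates are legitimate — this is immediate from lower semicomputability implying lower semicontinuity; and (ii) the possibility that $\t_{\mu}$ takes the value $+\infty$ on part of $A$, in which case the contradiction only becomes easier, since then $\mu\t_{\mu}$ is already infinite whenever that part has positive measure, and otherwise it can be discarded. If one prefers, step (i)–(ii) together can be phrased as: the inequality $\mu\t_{\mu}\le 1$ forces $\t_{\mu} < \infty$ $\mu$-almost everywhere, and then the above finite computation applies verbatim.
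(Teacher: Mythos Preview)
Your proof is correct and follows essentially the same approach as the paper's one-line argument, which simply notes that \( \mu\,\t_{\mu} = \mu^{x} 1_{A}(x)\t_{\mu}(x) \le 1 \) and leaves the standard average-value conclusion implicit. You have just spelled out that conclusion carefully via the increasing exhaustion by the sets \( B_{n} \), which is exactly the right way to turn the pointwise strict inequality into a strict integral inequality.
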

 \begin{proof}
This follows easily from 
\( \mu\, t_{\mu} = \mu^{x} 1_{A}(x)t_{\mu}(x) \le 1 \).
 \end{proof} 

 \begin{proof}[Proof of Theorem~\protect\ref{thm:neutral-meas}]
For every point \( x \in \bX \), let \( F_{x} \) be the set of measures for which
\( \t_{\mu}(x) \le 1 \).
If we show that for every finite set of points \( x_{1},\dots,x_{k} \), we
have 
 \begin{equation}\label{eq:finite-inters}
  F_{x_{1}}\cap\dots\cap F_{x_{k}} \ne \emptyset,
 \end{equation}
then we will be done.
Indeed, according to Proposition~\ref{propo:measures-compact}, the compactness
of \( \bX \) implies the compactness of the space \( \bM(\bX) \) of measures.
Therefore if every finite subset of the family \( \setof{F_{x} : x \in \bX} \)
of closed sets has a nonempty intersection, then the whole family has a
nonempty intersection: this intersection consists of the neutral
measures.

To show~\eqref{eq:finite-inters}, let \( S(x_{1},\dots,x_{k}) \) be the set of
probability measures concentrated on \( x_{1},\dots,x_{k} \).
Lemma~\ref{lem:concentr} implies that each such measure belongs to one of
the sets \( F_{x_{i}} \).
Hence \( S(x_{1},\dots,x_{k}) \sbs F_{x_{1}} \cup \dots \cup F_{x_{k}} \),
and the same holds for every subset of the indices \( \{1,\dots,k\} \).
Sperner's Lemma~\ref{propo:Sperner} implies 
\( F_{x_{1}} \cap \dots \cap F_{x_{k}} \ne\emptyset \).
 \end{proof}

When the space is not compact, there are generally no neutral probability
measures, as shown by the following example.

 \begin{proposition}\label{thm:no-neutral}
Over the discrete space \( \bX = \dN \) of natural numbers,
there is no neutral measure.
 \end{proposition}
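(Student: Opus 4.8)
\emph{Plan.} The strategy is to reduce the statement to producing a \emph{single} uniform test that is unbounded on every probability measure. Suppose, towards a contradiction, that $M\in\cM(\dN)$ is neutral, i.e. $\t_M(x)\le 1$ for all $x$. Since $\t_\mu$ is universal, every uniform test $f$ satisfies $f_\mu(x)\le c_f\,\t_\mu(x)$ for some constant $c_f$, so neutrality of $M$ forces $f_M(x)\le c_f$ for all $x\in\dN$. Hence it is enough to build a uniform test $f$ with $\sup_x f_\mu(x)=\infty$ for \emph{every} $\mu\in\cM(\dN)$. Here the non-compactness of $\dN$ is used in a positive way: the only feature shared by all $\mu$ is that the tails $F_\mu(x):=\mu(\setof{x,x+1,\dots})$ decrease to $0$ (by continuity of measure from above, using $\mu(\dN)=1<\infty$) — the single-point masses $\mu(\setof{n})$ need not, and indeed simple tests built only from them always leave a ``neutral-like'' gap.

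The test I would use is
\[
  f_\mu(x)=\tfrac12\,F_\mu(x)^{-1/2}=\frac{1}{2\sqrt{\mu(\setof{x,x+1,\dots})}},
\]
with value $\infty$ when the tail vanishes. Three things then need checking. \emph{Lower semicomputability in $(\mu,x)$:} the map $\mu\mapsto F_\mu(x)=1-\mu(\setof{0,\dots,x-1})$ is computable on $\cM(\dN)$, since $1_{\setof{0,\dots,x-1}}$ is a bounded computable function (every subset of $\dN$ is clopen), and $t\mapsto t^{-1/2}$, extended by $\infty$ at $0$, is lower semicomputable as the supremum of the continuous functions $t\mapsto\min(t^{-1/2},n)$; the discrete argument $x$ causes no trouble. \emph{The integral bound $\mu f\le 1$:} writing $a=F_\mu(x)$, $b=F_\mu(x+1)$, so $\mu(\setof{x})=a-b$, the elementary inequality $\tfrac{a-b}{\sqrt a}\le 2(\sqrt a-\sqrt b)$ (from $\sqrt a+\sqrt b\le 2\sqrt a$) yields, by telescoping and $F_\mu(x)\to 0$,
\[
  \mu f=\tfrac12\sum_{x\ge0}\frac{F_\mu(x)-F_\mu(x+1)}{\sqrt{F_\mu(x)}}
       \le\sum_{x\ge0}\bigl(\sqrt{F_\mu(x)}-\sqrt{F_\mu(x+1)}\bigr)=\sqrt{F_\mu(0)}=1 .
\]
\emph{Unboundedness:} since $F_\mu(x)=\mu(\setof{x,x+1,\dots})\downarrow 0$ for every probability measure $\mu$, we get $f_\mu(x)=\tfrac12F_\mu(x)^{-1/2}\to\infty$.

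Applying this to the hypothetical neutral $M$: on one hand $f_M(x)\to\infty$, on the other hand $f_M(x)\le c_f$ for all $x$ by neutrality — a contradiction. I expect the only genuinely delicate point to be the semicomputability bookkeeping: one must make sure the tail mass is accessed from above (it is, being computable here) so that its negative half-power stays lower semicomputable, and also that a universal uniform test over the non-compact space $\dN$ exists in the first place — which it does, since the construction in Corollary~\ref{coroll:trim-univ} does not use compactness. As a variant that makes the semicomputability even more transparent (at the cost of a slightly messier integral estimate), one could instead take $f_\mu(x)=\sup\setof{2^{k}:F_\mu(x)\le 4^{-k}}$.
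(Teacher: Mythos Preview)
Your proof is correct and follows essentially the same idea as the paper: both build a uniform test from the tail mass \( F_\mu(x)=\mu(\{x,x+1,\dots\}) \), which tends to \( 0 \) for every probability measure on \( \dN \). The paper uses \( t_\mu(x)=\sup\{k:F_\mu(x)<2^{-k}\} \) (essentially \( -\log_2 F_\mu(x) \)) with a layer-cake estimate for the integral, while you use \( \tfrac12 F_\mu(x)^{-1/2} \) with a telescoping estimate; your ``variant'' at the end is in fact almost exactly the paper's test.
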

 \begin{proof}
It is sufficient to 
construct a randomness test \( t_{\mu}(x) \) with the property that for
every measure \( \mu \), we have \( \sup_{x} t_{\mu}(x) = \infty \).
Let 
 \begin{equation}\label{eq:no-neutral}
   t_{\mu}(x) = \sup\setof{ k \in \dN: \sum_{y<x}\mu(y) > 1-2^{-k}}.
 \end{equation}
By its construction, this is a lower semicomputable function with
\( \sup_{x} t_{\mu}(x) = \infty \).
It is a test if \( \sum_{x}\mu(x)t_{\mu}(x) \le 1 \).
We have
 \[
  \sum_{x} \mu(x) t_{\mu}(x) 
= \sum_{k>0} \sum_{t_{\mu}(x) \ge k} \mu(x)
< \sum_{k>0} 2^{-k} \le 1.
 \]
 \end{proof}

Using a similar construction over the space \( \dN^{\dN} \) of infinite
sequences of natural numbers, we could show that 
for every measure \( \mu \) there is a sequence \( x \) with \( \t_{\mu}(x)=\infty \).

Proposition~\ref{thm:no-neutral} is a little misleading, since \( \dN \) can be
compactified into \( \ol\dN = \dN \cup \{\infty\} \)
(as in Part~\ref{i:compact.compactify} of Example~\ref{example:compact}).
Theorem~\ref{thm:neutral-meas} implies that there is a neutral probability
measure \( M \) over the compactified space \( \ol\dN \).
Its restriction to \( \dN \) is, of course, not a probability measure, since it
satisfies only \( \sum_{x < \infty} M(x) \le 1 \).
We called these functions \df{semimeasures}.

 \begin{remark}\label{rem:compactify}\
 \begin{enumerate}

   \item
 It is easy to see that 
Theorem~\ref{thm:test-charac-discr} characterizing randomness in terms of
complexity holds also for the space \( \ol\dN \).

   \item
The topological space of semimeasures
over \( \dN \) is not compact, and there is no neutral one among them.
Its topology is not the same as what we get when we restrict the topology
of probability measures over \( \ol\dN \) to \( \dN \).
The difference is that over \( \dN \), for example the set of measures
\( \setof{\mu : \mu(\dN) > 1/2} \) is closed, since \( \dN \) (as the whole space)
is a closed set.
But over \( \ol\dN \), this set is not necessarily closed, since
\( \dN \) is not a closed subset of \( \ol\dN \).
 \end{enumerate}
 \end{remark}

Neutral measures are not too simple, even over \( \ol\dN \), as the following
theorem shows.

 \begin{theorem}\label{thm:no-upper-semi-neutral}
There is no neutral measure over \( \ol\dN \) that is upper
semicomputable over \( \dN \) or lower semicomputable over \( \dN \).
 \end{theorem}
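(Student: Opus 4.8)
The plan is to prove the contrapositive: if $M$ is a measure on $\ol\dN$ whose restriction to $\dN$ is lower semicomputable, or upper semicomputable, then $M$ is not neutral, i.e.\ $\sup_{x}\t_{M}(x)>1$. The first move, common to both cases, is a reduction. It suffices to produce, for the given $M$, a \emph{single} lower semicomputable $M$-test $h:\ol\dN\to\ol\dR_{+}$ (so $Mh\le 1$) with $\sup_{x}h(x)=\infty$. Indeed, feeding the constant family $(x,\mu)\mapsto h(x)/2$ into the Uniform Extension operation of Corollary~\ref{coroll:trim} yields a uniform test that coincides with $h/2$ at $\mu=M$; by universality of $\t_{\mu}$ we get $h\lem\t_{M}$, hence $\t_{M}$ is unbounded and $M$ cannot be neutral. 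So the whole problem becomes: exhibit an unbounded $M$-test.

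Consider first the \emph{upper semicomputable} case. Put $h(x)=\m(x)/M(x)$ for $x\in\dN$ (with the convention $a/0=\infty$) and $h(\infty)=0$, where $\m$ is the universal lower semicomputable semimeasure of Theorem~\ref{thm:univsem}. Since $\m$ is lower semicomputable and $M$ on $\dN$ is upper semicomputable, $h$ is lower semicomputable; and $Mh=\sum_{x\in\dN,\,M(x)>0}\m(x)\le 1$, so $h$ is an $M$-test. It remains to check $\sup_{x\in\dN}\m(x)/M(x)=\infty$, which is the assertion that no upper semicomputable semimeasure multiplicatively dominates $\m$. If, on the contrary, $\m(x)\le cM(x)$ for all $x$, I would diagonalize: split $\dN$ into infinitely many infinite blocks $P_{N}$, and for each $N$, using the canonical decreasing rational approximation to $cM$, search (dovetailing over $x\in P_{N}$ and over stages) for an $x\in P_{N}$ whose current approximant of $cM(x)$ has dropped below $2^{-N}/N$ — such an $x$ exists because $\sum_{x\in P_{N}}M(x)<\infty$ — and at that moment freeze a lower semicomputable semimeasure $\rho$ by setting $\rho(x):=N\cdot(\text{that approximant})$, which is $<2^{-N}$ and $\ge NM(x)$. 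Then $\sum_{x}\rho(x)<1$, so $\rho\lem\m\lem cM$, contradicting $\rho(x)/M(x)\ge N$ for arbitrarily large $N$. Hence $h$ is unbounded and this case is finished.

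The \emph{lower semicomputable} case is where the real difficulty lies, and I expect it to be the main obstacle. The test $\m/M$ is useless here: with $M$ lower semicomputable, $1/M$ is upper — not lower — semicomputable, and more generally no test built as ``some reference semimeasure divided by $M$'' survives, since one can never effectively certify that $M(x)$ is small. The route I would take is to invoke the complexity characterization of randomness valid over $\ol\dN$ (the analogue for $\ol\dN$ of Theorem~\ref{thm:test-charac-discr}; cf.\ Remark~\ref{rem:compactify}). Since $M$ on $\dN$ is a lower semicomputable semimeasure we already have $M\lem\m$; the characterization, together with neutrality of $M$, should force the reverse inequality as well, reducing matters to the case $M\eqm\m$ on $\dN$, i.e.\ to showing that $\m$ — extended to $\ol\dN$ by $\m(\infty)=1-\sum_{x\in\dN}\m(x)$ — is itself not neutral. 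This last step is the crux. It should exploit that a name of a measure as a point of the space of measures over $\ol\dN$ carries strictly more than a lower approximation to its values: because every singleton $\{x\}$ is clopen in the one-point compactification, a name makes each value $\m(x)$ computable, and with the values now available on both sides one adapts the diagonalization of the previous paragraph to build, against $\m$, a lower semicomputable semimeasure it does not dominate, whence an unbounded test. That neutral measures nevertheless \emph{do} exist over $\ol\dN$ (Levin's Theorem~\ref{thm:neutral-meas}) is precisely why any such argument must lean on the extra effectivity hypothesis about $M$ on $\dN$ and cannot be purely topological.
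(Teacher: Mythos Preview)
Your upper-semicomputable case works, though by a longer route than the paper's. There one simply enumerates the r.e.\ set $\{(x,r):\nu(x)<r\}$, picks for each $n$ the first pair with $r<2^{-n}$, and calls its first coordinate $y_{n}$; then $\nu(y_{n})<2^{-n}$ while $K(y_{n})\lea K(n)$, and the complexity characterization over $\ol\dN$ gives $\d_{\nu}(y_{n})\gea n-K(n)$ directly. Your test $\m/M$ together with a separate diagonalization reaches the same conclusion with more machinery.

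The lower-semicomputable case does not go through as sketched. First, the closing move --- ``build, against $\m$, a lower semicomputable semimeasure it does not dominate'' --- is impossible on its face, since $\m$ dominates every such semimeasure by definition. Second, and more substantively, even after a reduction to $M\eqm\m$ you have not produced the witnessing points. Your clopen-singleton observation does let a \emph{name} of $M$ compute each value $M(x)$, but a uniform test must be lower semicomputable in the \emph{measure}; extracting, uniformly in $\mu$, a point of provably small $\mu$-measure with small $K(\cdot\mvert\mu)$ is precisely the obstruction --- if it were not, the same argument would refute neutrality for \emph{every} measure over $\ol\dN$, contradicting Theorem~\ref{thm:neutral-meas}. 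Some concrete use of the lower-semicomputability of $M$ is still missing from your outline.

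The paper supplies it by a direct construction that bypasses your reduction entirely. Fixing a recursive increasing sequence $\nu_{j}\nearrow\nu$, it defines the r.e.\ set $J=\{(n,k):\nu(\{0,\dots,2^{n}-1\})>k\cdot 2^{-n}\}$, and for each $(n,k)\in J$ a specific point $x_{n,k}<2^{n}$ with $\nu_{j(n,k)}(x_{n,k})<2^{-n+1}$, computable from $(n,k)$. The uniform test then places weight $2^{n-2}/(n(n+1))$ at $x_{n,k}$ whenever the open-in-$\mu$ conditions $\mu\in V_{n,k}$ and $\mu(x_{n,k})<2^{-n+2}$ both hold. Since any $\mu$ lies in at most two of the bands $V_{n,k}$ for each $n$, the integral constraint is met; at $\mu=\nu$ one verifies that both conditions hold, so $\t_{\nu}(x_{n,k})\gem 2^{n-2}/(n(n+1))$. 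The lower-semicomputability of $\nu$ enters exactly in making $J$ r.e.\ and the points $x_{n,k}$ computable --- data fixed in advance, outside the uniform test, and independent of the generic parameter $\mu$.
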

 \begin{proof}
Let us assume that \( \nu \) is a measure that is upper semicomputable over
\( \dN \).
Then the set
 \[
   \setof{(x,r) : x \in\dN,\; r\in\dQ,\; \nu(x) < r}
 \]
is recursively enumerable: let \( (x_{i},r_{i}) \) be a particular
enumeration. 
For each \( n \), let \( i(n) \) be the first \( i \) with \( r_{i} < 2^{-n} \), and let
\( y_{n} = x_{i(n)} \).
Then \( \nu(y_{n}) < 2^{-n} \), and at the same time \( K(y_{n}) \lea K(n) \).
As mentioned, in Remark~\ref{rem:compactify}, 
Theorem~\ref{thm:test-charac-discr} characterizing randomness in terms of
complexity holds also for the space \( \ol\dN \).
Thus, 
 \[
  \d_{\nu}(y_{n}) \eqa -\log\nu(y_{n}) - K(y_{n} \mvert \nu) \gea n - K(n).
 \]
Suppose now that \( \nu \) is lower semicomputable over \( \dN \).
The proof for this case is longer.
We know that \( \nu \) is the monotonic limit of a recursive sequence
\( i\mapsto \nu_{i}(x) \) of recursive semimeasures with rational values
\( \nu_{i}(x) \).
For every \( k=0,\dots,2^{n}-2 \), let 
 \begin{align*}
  V_{n,k} &= \setof{\mu \in \cM(\ol\dN) : 
k\cdot 2^{-n} < \mu(\{0,\dots,2^{n}-1\}) < (k+2)\cdot 2^{-n}},
\\       J &= \setof{(n,k): k\cdot 2^{-n} < \nu(\{0,\dots,2^{n}-1\})}.
 \end{align*}
The set \( J \) is recursively enumerable.
Let us define the functions \( j:J\to\dN \) and \( x:J\to\{0,\dots,2^{n}-1\} \)
as follows: \( j(n,k) \) is the smallest \( i \) with 
\( \nu_{i}(\{0,\dots,2^{n}-1\}) > k\cdot 2^{-n} \), and
 \[
   x_{n,k} = \min\setof{y < 2^{n}: \nu_{j(n,k)}(y) < 2^{-n+1}}.
 \]
Let us define the function \( f_{\mu}(x,n,k) \) as follows.
We set \( f_{\mu}(x,n,k)=2^{n-2} \) if the following conditions hold:
 \begin{alphenum}
  \item\label{i:no-upper-semi-neutral.mu-global} \( \mu \in V_{n,k} \);
  \item\label{i:no-upper-semi-neutral.mu-upper} \( \mu(x) < 2^{-n+2} \);
  \item\label{i:no-upper-semi-neutral.unique}
 \( (n,k) \in J \) and \( x=x_{n,k} \).
 \end{alphenum}
Otherwise, \( f_{\mu}(x,n,k)=0 \).
Clearly, the function \( (\mu,x,n,k) \mapsto f_{\mu}(x,n,k) \) is lower
semicomputable.
Condition~\eqref{i:no-upper-semi-neutral.mu-upper} implies
 \begin{equation}\label{eq:no-upper-semi-neutral.n-k-test}
 \sum_{y} \mu(y) f_{\mu}(y,n,k) \le
    \mu(x_{n,k})f_{\mu}(x_{n,k},n,k) < 2^{-n+2}\cdot 2^{n-2} = 1.
 \end{equation}
Let us show that \( \nu \in V_{n,k} \) implies
 \begin{equation}\label{eq:found-bad}
 f_{\nu}(x_{n,k},n,k) = 2^{n-2}.
 \end{equation}
Consider \( x=x_{n,k} \).
Conditions~\eqref{i:no-upper-semi-neutral.mu-global} 
and~\eqref{i:no-upper-semi-neutral.unique} are satisfied by definition.
Let us show that condition~\eqref{i:no-upper-semi-neutral.mu-upper} is also 
satisfied.
Let \( j=j(n,k) \).
By definition, we have \( \nu_{j}(x) < 2^{-n+1} \).
Since by definition \( \nu_{j}\in V_{n,k} \) and \( \nu_{j} \le \nu \in V_{n,k} \),
we have 
 \[
  \nu(x) \le \nu_{j}(x) + 2^{-n+1} < 2^{-n+1} + 2^{-n+1} = 2^{-n+2}.
 \]
Since all three conditions~\eqref{i:no-upper-semi-neutral.mu-global},
\eqref{i:no-upper-semi-neutral.mu-upper}
and~\eqref{i:no-upper-semi-neutral.unique} are satisfied, we have 
shown~\eqref{eq:found-bad}.
Now we define
 \[
   g_{\mu}(x) = \sum_{n\ge 2}\frac{1}{n(n+1)}\sum_{k}f_{\mu}(x,n,k).
 \]
Let us prove that \( g_{\mu}(x) \) is a uniform test.
It is lower semicomputable by definition, so we only need to prove
\( \sum_{x} \mu(x) f_{\mu}(x) \le 1 \).
For this, let \( I_{n,\mu} = \setof{k: \mu\in V_{n,k}} \).
Clearly by definition, \( \card{I_{n,\mu}}\le 2 \).
We have, using this last fact and the test 
property~\eqref{eq:no-upper-semi-neutral.n-k-test}:
 \[
  \sum_{x} \mu(x) g_{\mu}(x) = 
  \sum_{n\ge 2}\frac{1}{n(n+1)}
  \sum_{k\in I_{n,\mu}} \sum_{x}\mu(x) f_{\mu}(x,n,k)
       \le  \sum_{n\ge 2}\frac{1}{n(n+1)}\cdot 2 \le 1.
 \]
Thus, \( g_{\mu}(x) \) is a uniform test.
If \( \nu\in V_{n,k} \) then we have
 \[
 \t_{\nu}(x_{n,k}) 
\gem g_{\nu}(x_{n,k}) \ge \frac{1}{n(n+1)}f_{\mu}(x_{n,k},n,k) \ge 
  \frac{2^{n-2}}{n(n+1)}.
 \]
Hence \( \nu \) is not neutral.
 \end{proof}

 \begin{remark}
In~\cite{LevinUnif76} and~\cite{LevinRandCons84},
Levin imposed extra conditions on tests which allow to find a lower
semicomputable neutral semimeasure.
 \end{remark}

The universal lower semicomputable
semimeasure \( \m(x) \) has a certain property similar to neutrality.
According to Theorem~\ref{thm:defic-charac-cpt-seqs} specialized to
one-element sequences, for every computable measure \( \mu \) we have
\( \d_{\mu}(x) \eqa -\log\mu(x) - K(x) \)
(where the constant in \( \eqa \) depends on \( \mu \)).
So, for computable measures, the expression
 \begin{equation}\label{eq:ol-d}
  \ol\d_{\mu}(x) = -\log\mu(x) - K(x)
 \end{equation}
can serve as a reasonable deficiency of randomness.
(We will also use the test \( \ol\t = 2^{\ol\d} \).)
If we substitute \( \m \) for \( \mu \) in \( \ol\d_{\mu}(x) \), we get 0.
This substitution is not justified, of course.
The fact that \( \m \) is not a probability 
measure can be helped, using compactification as above, 
and extending the notion of randomness tests. 
But the test \( \ol\d_{\mu} \) can replace \( \d_{\mu} \)
only for computable \( \mu \), while \( \m \) is not computable.
Anyway, this is the sense in which all outcomes might
be considered random with respect to \( \m \), and the heuristic
sense in which \( \m \) may be considered ``neutral''.

 \section{Monotonicity, quasi-convexity/concavity}

 Some people find that \( \mu \)-tests as defined in Definition~\ref{def:mu-test} are
 too general, in case \( \mu \) is a non-computable measure.
 In particular, randomness with respect to computable measures has 
 a certain---intuitively meaningful---monotonicity property:
 roughly, if \( \nu \) is greater than \( \mu \) then if
 \( x \) is random with respect to \( \mu \), it should also be random with respect to
 \( \nu \).

  \begin{proposition}\label{propo:test-computable-mon}
 For computable measures \( \mu,\nu \) we have for all rational \( c>0 \):
  \begin{align}\label{eq:test-computable-mon}
   2^{-k}\mu\le \nu \imp \d_{\nu}(x)\lea \d_{\mu}(x)+ k + K(k).
  \end{align}
 Here the constant in \( \lea \) depends on \( \mu,\nu \), but not on \( k \).
  \end{proposition}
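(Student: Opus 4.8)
The plan is to rescale the universal test for $\nu$ so that it becomes a test for $\mu$, and then invoke universality of the universal test for $\mu$; the extra term $K(k)$ is precisely the price of feeding the number $k$ into that rescaling.

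First I would record the hypothesis in the form $\mu\le 2^{k}\nu$ (as measures). Writing $\t_\nu$ for the value of the fixed universal uniform test at $\nu$, the nonnegative function $2^{-k}\t_\nu$ then satisfies
\[
  \mu\bigl(2^{-k}\t_\nu\bigr)\;\le\;2^{k}\,\nu\bigl(2^{-k}\t_\nu\bigr)\;=\;\nu\t_\nu\;\le\;1,
\]
where the last step uses that $\t_\nu$ is a $\nu$-test. Since $\mu$ and $\nu$ are computable, $2^{-k}\t_\nu$ is lower semicomputable in $x$ (it does not even depend on $\mu$), so it is a $\mu$-test. By the Uniform Extension Corollary~\ref{coroll:trim} it agrees, up to the harmless factor $2$, with a uniform test, so by universality of $\t_\mu$ there is a constant $c$ with $2^{-k}\t_\nu(x)\le c\,\t_\mu(x)$ for all $x$.

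The only point that requires care is how $c$ depends on $k$. Here I would use the standard estimate already used, e.g., in the proof of Theorem~\ref{thm:univ-integrable-test}: the multiplicative constant by which $\t_\mu$ dominates a $\mu$-test produced by a fixed procedure from a parameter $d$ is $\lem 2^{K(d)}$, with the remaining constant depending only on $\mu$ and the procedure. In our case the procedure (compute $\t_\nu$, then multiply by $2^{-k}$) is fixed once $\nu$ is fixed, and its only parameter is the single number $k$, which we describe by a self-delimiting code of length $\lea K(k)$. Hence $c\lem 2^{K(k)}$, the hidden constant depending only on $\mu$ and $\nu$. Putting this together, $\t_\nu(x)\lem 2^{k+K(k)}\t_\mu(x)$, and taking logarithms gives $\d_\nu(x)\le\d_\mu(x)+k+K(k)+O(1)$ with the $O(1)$ depending on $\mu,\nu$ but not on $k$, which is the assertion.

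I expect this bookkeeping of the constant to be the only (minor) obstacle: one must check that $k$ enters the description of the test $2^{-k}\t_\nu$ only through an additively $K(k)$-sized piece, while the fixed programs for the universal uniform test and for $\nu$ contribute only a $(\mu,\nu)$-dependent constant. I would also note in passing that if one quotes the complexity characterization $\d_\mu(x)\eqa-\log\mu(x)-K(x)$ for computable $\mu$ (Theorem~\ref{thm:defic-charac-cpt-seqs}), the inequality follows at once from $-\log\nu(x)\le k-\log\mu(x)$, and one even obtains it without the $K(k)$ term; the argument above is preferred precisely because it stays inside the general theory and does not rely on that characterization.
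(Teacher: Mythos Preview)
Your first step---observing that \(2^{-k}\mu\le\nu\) gives \(\mu(2^{-k}\t_{\nu})\le\nu\t_{\nu}\le 1\), so \(2^{-k}\t_{\nu}\) is a \(\mu\)-test---is exactly how the paper begins. The divergence is in how you extract the \(K(k)\) dependence.

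You argue that the constant of domination by \(\t_{\mu}\) is \(\lem 2^{K(k)}\) because ``\(k\) is the only parameter,'' citing the pattern of Theorem~\ref{thm:univ-integrable-test}. But that theorem is for the discrete case, where the universal test is built from the a~priori semimeasure \(\m\) and the \(K(d)\) bound is genuine. In the present general setting the universal uniform test is constructed in Corollary~\ref{coroll:trim-univ} as \(\sum_{e}2^{-e}\phi'_{e}\), so what you get for free is a constant \(\lem 2^{e}\), with \(e\) the index of your test in the enumeration---not \(\lem 2^{K(k)}\). Your Uniform Extension step does not repair this: it just turns each fixed \(k\) into some index \(e'(k)\), and you have no control over the size of \(e'(k)\).

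The paper closes this gap by packaging all \(k\) at once: it forms the single lower semicomputable function
\[
  \sum_{k:\,2^{-k}\mu\t_{\nu}<1} 2^{-k-K(k)}\t_{\nu},
\]
checks (via the method of Theorem~\ref{thm:trim}) that this is a \(\mu\)-test, and then applies universality of \(\t_{\mu}\) \emph{once}, so the domination constant is independent of \(k\). Each summand then inherits the bound, giving \(2^{-k-K(k)}\t_{\nu}\lem\t_{\mu}\). This is precisely the mechanism that justifies your claimed \(2^{K(k)}\) constant; you should make it explicit rather than invoke a discrete-case analogy.

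Your closing remark about the characterization \(\d_{\mu}(x)\eqa-\log\mu(x)-K(x)\) is also off target here: the proposition lives in the general (metric-space, uniform-test) setting, where \(-\log\mu(x)\) need not make sense, so that shortcut is not available.
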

 \begin{proof}
 We have \( 1\ge\nu\t_{\nu}\ge 2^{-k}\mu\t_{\nu} \),
 hence \( 2^{-k}\t_{\nu} \) is a \( \mu \)-test.
 Using the method of Theorem~\ref{thm:trim} in finding universal tests, 
 one can show that the sum
  \begin{align*}
    \sum_{k: 2^{-k}\mu\t_{\nu}<1} 2^{-k-K(k)}\t_{\nu}
  \end{align*}
 is a \( \mu \)-test, and hence \( \lem \t_{\mu} \).
 Therefore this is true of each member of the sum, which is just what the theorem
 claims.
 \end{proof}

 There are other properties true for tests on computable measures that we may
 want to require for all measures.
 For the following properties, let us define quasi-convexity, which is a
 weakening of the notion of convexity.

 \begin{definition}\label{def:quasi-convex}
 A function \( f:V\to\dR \) defined on a vector space \( V \) is called
 \df{quasi-convex} if for every real number \( x \) the set \( \setof{v: f(v)\le x} \)
 is convex.
 It is  \df{quasi-concave} if \( -f \) is quasi-convex.
 \end{definition}

 It is easy to see that quasi-convexity is equivalent to the inequality
  \begin{align*}
  f(\lg u + (1-\lg)v)\le f(u)\lor f(v)
  \end{align*}
 for all \( u,v \) and \( 0<\lg<1 \), while quasi-concavity is equivalent to
  \begin{align*}
  f(\lg u + (1-\lg)v)\ge f(u)\land f(v)
  \end{align*}

The uniform test with respect to computable measures is approximately 
both quasi-convex and quasi-concave.
Let \( \nu=\lg\mu_{1}+(1-\lg)\mu_{2} \).

Quasi-convexity means, roughly, that if \( x \) is random with respect to both
\( \mu_{1} \) and \( \mu_{2} \) then it is also random with respect to \( \nu \).
This property strengthens monotonicity in the cases where it applies.

  \begin{proposition}\label{propo:test-computable-conv}
 Let \( \mu_{1},\mu_{2} \) be computable measures and \( 0<\lg<1 \) computable,
with \( \nu=\lg\mu_{1}+(1-\lg)\mu_{2} \).
 Then we have
  \begin{align*}
   \d_{\nu}(x) \lea\d_{\mu_{1}}(x)\lor\d_{\mu_{2}}(x) + K(\lg).
  \end{align*}
  \end{proposition}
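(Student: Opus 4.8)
The plan is to pass from the universal test \(\d\) to the ``naive'' deficiency \(\ol\d_{\mu}(x)=-\log\mu(x)-K(x)\) of \eqref{eq:ol-d}, prove the inequality \emph{without any additive term} at the level of \(\ol\d\), and then pay only for the two translations back and forth. At the level of \(\ol\d\) the statement is a triviality about weighted averages; the entire content of the \(K(\lg)\) term will come from the cost of converting \(\d_{\nu}\) into \(\ol\d_{\nu}\), a cost governed by the description complexity of \(\nu\).

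First I would record the clean inequality \(\ol\d_{\nu}(x)\le\ol\d_{\mu_{1}}(x)\lor\ol\d_{\mu_{2}}(x)\). Since \(0<\lg<1\), the value \(\nu(x)=\lg\mu_{1}(x)+(1-\lg)\mu_{2}(x)\) is a weighted average of \(\mu_{1}(x)\) and \(\mu_{2}(x)\) with strictly positive weights, hence \(\nu(x)\ge\mu_{1}(x)\land\mu_{2}(x)\). Taking \(-\log\) turns this into \(-\log\nu(x)\le(-\log\mu_{1}(x))\lor(-\log\mu_{2}(x))\), and subtracting \(K(x)\) from both sides gives exactly the displayed bound (with the convention that a zero value contributes \(+\infty\), so the degenerate cases \(\mu_{i}(x)=0\) are harmless).

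Next I would convert between \(\d\) and \(\ol\d\) using Theorem~\ref{thm:defic-charac-cpt-seqs} in its conditional form \(\d_{\mu}(x)\eqa-\log\mu(x)-K(x\mvert\mu)\), whose additive constant is \emph{universal}, i.e.\ independent of the computable measure \(\mu\). Combining this with \(K(x\mvert\mu)\le K(x)+O(1)\) and \(K(x)\le K(x\mvert\mu)+K(\mu)+O(1)\) yields the two-sided sandwich
\[
  \ol\d_{\mu}(x)\lea\d_{\mu}(x)\lea\ol\d_{\mu}(x)+K(\mu)
\]
with universal constants. Applying the left inequality to \(\mu_{1}\) and \(\mu_{2}\), then the inequality of the previous paragraph, and finally the right inequality to \(\nu\), I obtain \(\d_{\nu}(x)\lea(\d_{\mu_{1}}(x)\lor\d_{\mu_{2}}(x))+K(\nu)\). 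Since \(\nu\) is computed from the fixed data \(\mu_{1},\mu_{2}\) together with \(\lg\), we have \(K(\nu)\le K(\mu_{1})+K(\mu_{2})+K(\lg)+O(1)\lea K(\lg)\), the constant absorbing the fixed quantities \(K(\mu_{1}),K(\mu_{2})\); this closes the argument, with the \(\lea\)-constant depending only on \(\mu_{1},\mu_{2}\).

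The main obstacle is exactly producing the honest term \(K(\lg)\) rather than an additive constant that secretly blows up as \(\lg\to 0,1\). This is why it is essential to use the \emph{conditional} characterization, with its universal constant, for the \(\d/\ol\d\) conversion on \(\nu\), together with the bound \(K(\nu)\lea K(\lg)\). The merely qualitative statement ``\(\d_{\mu}\eqa\ol\d_{\mu}\) with a constant depending on \(\mu\)'' is useless here, since applied to \(\nu\) that constant would depend on \(\lg\) in an uncontrolled way; tracking the dependence through \(K(\nu)\) is the whole subtlety of the estimate.
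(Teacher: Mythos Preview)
Your argument via the pointwise deficiency \(\ol\d_{\mu}(x)=-\log\mu(x)-K(x)\) presupposes a discrete space: for a non-atomic measure on a general metric space one has \(\mu(\{x\})=0\) and the expression \(-\log\mu(x)\) is vacuous. The proposition, however, sits in the chapter on continuous spaces and is stated for the uniform test \(\t_{\mu}\) of Definition~\ref{def:mu-test}, with no discreteness assumption. So even granting the conditional complexity characterization you invoke, your proof does not cover the statement as written; the averaging step \(\nu(x)\ge\mu_{1}(x)\land\mu_{2}(x)\) has no analogue when the point masses vanish.

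The paper's argument avoids complexity altogether and works in full generality. From \(1\ge\nu\t_{\nu}=\lg\,\mu_{1}\t_{\nu}+(1-\lg)\,\mu_{2}\t_{\nu}\) one of the two integrals \(\mu_{i}\t_{\nu}\) must be \(\le 1\); that is, \(\t_{\nu}\) is itself a \(\mu_{i}\)-test for some \(i\in\{1,2\}\). Universality of \(\t_{\mu_{i}}\) then gives \(\t_{\nu}\lem\t_{\mu_{i}}\), the implicit constant being controlled by the description length of the lower semicomputable function \(\t_{\nu}\). With \(\mu_{1},\mu_{2}\) fixed, that description needs only \(\lg\) as additional data, so the constant is \(\lea K(\lg)\). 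Hence \(\d_{\nu}(x)\lea\d_{\mu_{i}}(x)+K(\lg)\le(\d_{\mu_{1}}(x)\lor\d_{\mu_{2}}(x))+K(\lg)\).

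In the discrete case your route does go through and gives a pleasant alternative: the inequality \(\nu(x)\ge\mu_{1}(x)\land\mu_{2}(x)\) is precisely quasi-convexity read off at the level of point masses, and your bookkeeping of the \(K(\lg)\) term via \(K(\nu)\) is correct. But the direct test-based argument is the one to internalize here, since it needs no translation layer through a complexity characterization and applies unchanged in the continuous setting.
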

 \begin{proof}
The relation \( 1\ge\nu\t_{\nu}=\lg\mu_{1}\t_{\nu}+(1-\lg)\mu_{2}\t_{\nu} \)
implies \( 1\ge \mu_{i}\t_{\nu} \) for some \( i\in\{1,2\} \).
Then \( \d_{\nu}\lea\d_{\mu_{i}}+K(\lg) \) (since \( \lg \) was used to define \( \nu \) and
thus \( \t_{\nu} \)).
 \end{proof}

Quasi-concavity means, roughly, that if \( x \) is non-random with respect to both
\( \mu_{1} \) and \( \mu_{2} \) then it is also nonrandom with respect to \( \nu \):

  \begin{proposition}\label{propo:test-computable-conc}
 Let \( \mu_{1},\mu_{2} \) be computable measures and \( 0<\lg<1 \) arbitrary
(not even necessarily computable), with \( \nu=\lg\mu_{1}+(1-\lg)\mu_{2} \).
 Then we have
  \begin{align*}
   \d_{\nu} \gea\d_{\mu_{1}}\land\d_{\mu_{2}}.
  \end{align*}
  \end{proposition}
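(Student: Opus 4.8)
The plan is to exhibit the pointwise minimum of the two tests for \( \mu_{1} \) and \( \mu_{2} \) as an (essentially) \( \nu \)-test, and then invoke the universality of \( \t_{\nu} \). Concretely, set
\( f(x) = \t_{\mu_{1}}(x)\land\t_{\mu_{2}}(x) \).
Since \( \mu_{1},\mu_{2} \) are computable and the universal uniform test \( \t_{\mu}(x) \) is lower semicomputable jointly in \( (\mu,x) \), each of \( x\mapsto\t_{\mu_{i}}(x) \) is lower semicomputable, and hence so is their minimum \( f \). In logarithmic notation the claim \( \d_{\nu}\gea\d_{\mu_{1}}\land\d_{\mu_{2}} \) is exactly \( f(x)\lem\t_{\nu}(x) \), so this is what I would aim to prove.

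The key computation is that \( \nu f\le 1 \). For \( i=1,2 \) we have \( f\le\t_{\mu_{i}} \) pointwise, hence \( \mu_{i}f\le\mu_{i}\t_{\mu_{i}}\le 1 \) because \( \t_{\mu_{i}}(\cdot) \) is a \( \mu_{i} \)-test. Therefore \( \nu f=\lg\,\mu_{1}f+(1-\lg)\mu_{2}f\le\lg+(1-\lg)=1 \). Note that this estimate uses only that \( \nu \) is a convex combination of \( \mu_{1} \) and \( \mu_{2} \); it never refers to the value of \( \lg \) beyond \( 0<\lg<1 \), which is precisely why no \( K(\lg) \) term appears in the conclusion (in contrast to Proposition~\ref{propo:test-computable-conv}).

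To finish, I would apply the Uniform Extension Corollary~\ref{coroll:trim} to the lower semicomputable function \( H(x,\mu)=f(x)/2 \) (constant in \( \mu \)). Since \( 2H(\cdot,\nu)=f \) is a \( \nu \)-test by the previous paragraph, the corollary yields a uniform test that agrees with \( f/2 \) at \( \mu=\nu \); by the universality of \( \t_{\mu} \), this uniform test is \( \lem\t_{\mu} \) for every \( \mu \), and in particular \( f(x)/2\lem\t_{\nu}(x) \), i.e. \( \t_{\mu_{1}}(x)\land\t_{\mu_{2}}(x)\lem\t_{\nu}(x) \). Taking logarithms gives \( \d_{\nu}(x)\gea\d_{\mu_{1}}(x)\land\d_{\mu_{2}}(x) \). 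I do not expect a genuine obstacle here; the only points requiring a little care are the verification that \( f \) is lower semicomputable (a minimum of two such functions) and that the extension lemma is applied to a function of the correct type — one should resist the temptation to claim \( f \) is already a uniform test, since \( \mu f\le 1 \) is only guaranteed for the specific convex combination \( \nu \), not for all \( \mu \), which is exactly what the trimming operation of Corollary~\ref{coroll:trim} repairs.
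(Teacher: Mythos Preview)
Your proof is correct and follows essentially the same line as the paper's: take \( f=\t_{\mu_{1}}\land\t_{\mu_{2}} \), note it is lower semicomputable and a \( \mu_{i} \)-test for each \( i \), hence a \( \nu \)-test, and conclude \( f\lem\t_{\nu} \) with a constant depending only on \( \mu_{1},\mu_{2} \). The paper states the last step tersely (``and as such is \( \lem\t_{\nu} \)''), while you spell out the invocation of Corollary~\ref{coroll:trim}, but the substance is identical.
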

  \begin{proof}
 The function \( \t_{\mu_{1}}\land\t_{\mu_{2}} \) 
is lower semicomputable, and is a \( \mu_{i} \)-test for each \( i \).
Therefore it is also a \( \nu \)-test, and as such is \( \lem\t_{\nu} \).
Here, the constant in the \( \lem \) depends only on (the programs for)
\( \mu_{1},\mu_{2} \), and not on \( \lg \).
  \end{proof}

 These properties do not survive for arbitrary measures and arbitrary constants.

  \begin{example}\label{example:monotonicity-counter}\
 Let measure \( \mu_{1} \) be uniform over the interval \( \clint{0}{1/2} \), 
let \( \mu_{2} \) be uniform over \( \clint{1/2}{1} \).
For \( 0\le x \le 1 \) let
 \begin{align*}
      \nu_{x} &= (1-x)\mu_{1} + x\mu_{2}.
  \end{align*}
Then \( \nu_{1/2} \) is uniform over \( \clint{0}{1} \).
Let \( \phi_{x} \) be the uniform distribution over \( \clint{x}{x+1/2} \),
and \( \psi_{x} \) the uniform distribution over
\( \clint{0}{x}\cup\clint{x+1/2}{1} \).

Let \( p<1/2 \) be random with respect to the uniform
distribution \( \nu_{1/2} \).

 \begin{enumerate}
  \item The relations
 \begin{align*}
  \nu_{1/2}\le p^{-1}\nu_{p},
  \quad \d_{\nu_{1/2}}(p)<\infty,\quad\d_{\nu_{p}}(p)=\infty
 \end{align*} 
show that any statement analogous to the monotonicity property of
Proposition~\ref{propo:test-computable-mon}
fails when the measures involved are not required to be computable.

  \item
For rational \( r \) with \( 0<r<p \) the relations
 \begin{align*}
   \nu:=(1-p)\nu_{r} + p\nu_{1-r},
\quad \d_{\nu}(p)=\infty,
\quad\d_{\nu_{r}}(p)<\infty,
\quad \d_{\nu_{1-r}}(p)<\infty 
 \end{align*}
provide a similar counterexample to Proposition~\ref{propo:test-computable-conv}.

  \item
The relations
 \begin{align*}
 \nu_{1/2} = (\nu_{p}+\nu_{1-p})/2, 
 \quad \d_{\nu_{1/2}}(p)<\infty,\quad \d_{\nu_{p}}(p)=\d_{\nu_{1-p}}(p)=\infty
 \end{align*}
provide a similar counterexample to Proposition~\ref{propo:test-computable-conc}.

The following counterexample relies on a less subtle effect:
 \begin{align*}
\nu_{1/2}=(\phi_{p}+\psi_{p})/2,
\quad\d_{\nu_{1/2}}(p)<\infty,
\quad\d_{\phi_{p}}(p)=\d_{\psi_{p}}(p)=\infty,
 \end{align*}
since
as a boundary point of the support, \( p \) is computable from both \( \phi_{p} \) and
\( \psi_{p} \) in a uniform way.
 \end{enumerate}
For a complete proof, uniform tests must be provided for each of the cases: this
is left as exercise for the reader.
  \end{example}

The non-monotonicity example could be used to argue that the 
we allowed too many \( \mu \)-tests, that the test \( \t_{\mu}(x) \)
should not be allowed to depend on properties of \( \mu \) that exploit the
computational properties of \( \mu \) so much stronger than its quantitative
properties.

\section{Algorithmic entropy}

Some properties of description complexity make it a good
expression of the idea of individual information content.

\subsection{Entropy}
The entropy of a discrete probability distribution $\mu$ is defined as
 \[
   \cH(\mu) = - \sum_{x} \mu(x) \log \mu(x).
 \]
To generalize entropy to continuous distributions the
\df{relative entropy} is defined as follows.
Let $\mu,\nu$ be two measures, where $\mu$ is taken (typically, but not
always), to be a probability measure, and $\nu$ another measure, that can
also be a probability measure but is most frequently not.
We define the \df{relative entropy} $\cH_{\nu}(\mu)$ as follows.
If $\mu$ is not absolutely continuous with respect to $\nu$ then 
$\cH_{\nu}(\mu) = -\infty$.
Otherwise, writing 
 \[
  \frac{d\mu}{d\nu} = \frac{\mu(dx)}{\nu(dx)} =: f(x)
 \] 
for the (Radon-Nikodym) derivative 
(density) of $\mu$ with respect to $\nu$, we define
 \[
   \cH_{\nu}(\mu) = - \int \log\frac{d\mu}{d\nu} d\mu 
     = - \mu^{x} \log\frac{\mu(dx)}{\nu(dx)} = -\nu^{x} f(x) \log f(x).
 \]
Thus, $\cH(\mu) = \cH_{\#}(\mu)$ is a special case.

 \begin{example}
 Let $f(x)$ be a probability density function for the distribution
$\mu$ over the real line, and let $\lg$ be the Lebesgue measure there.
Then
 \[
   \cH_{\lg}(\mu) = -\int f(x) \log f(x) d x.
 \]
 \end{example}

In information theory and statistics, when both $\mu$ and $\nu$ are
probability measures, then $-\cH_{\nu}(\mu)$ is also denoted
$D(\mu \parallel \nu)$, and called (after Kullback) the
information divergence of the two measures.
It is frequently used in the role of a distance between $\mu$ and $\nu$.
It is not symmetric, but can be shown to obey the triangle inequality, and
to be nonnegative.
Let us prove the latter property: in our terms, it says that relative
entropy is nonpositive when both $\mu$ and $\nu$ are probability measures.

 \begin{proposition}\label{propo:Kullback-pos}
Over a space $X$, we have
 \begin{equation}\label{eq:Kullback-pos}
   \cH_{\nu}(\mu) \le -\mu(X) \log\frac{\mu(X)}{\nu(X)}.
 \end{equation}
 In particular, if $\mu(X) \ge \nu(X)$ then $\cH_{\nu}(\mu) \le 0$.
 \end{proposition}
 \begin{proof}
The inequality $- a \ln a \le -a\ln b + b-a$
expresses the concavity of the logarithm function.
Substituting $a = f(x)$ and $b = \mu(X)/\nu(X)$ 
and integrating by $\nu$:
 \begin{align*}
 (\ln 2) \cH_{\nu}(\mu) &= 
 -\nu^{x} f(x) \ln f(x) \le -\mu(X) \ln\frac{\mu(X)}{\nu(X)}
  + \frac{\mu(X)}{\nu(X)} \nu(X) - \mu(X)
\\ &= -\mu(X) \ln\frac{\mu(X)}{\nu(X)},
 \end{align*}
giving~\eqref{eq:Kullback-pos}.
 \end{proof}

\subsection{Algorithmic entropy}

Le us recall some facts on description complexity.
Let us fix some (finite or infinite) 
alphabet $\Sg$ and consider the discrete space $\Sg^{*}$.

The universal lower semicomputable
semimeasure $\m(x)$ over $\Sg^{*}$ was defined 
in Definition~\ref{def:m}.
It is possible to turn $\m(x)$ into a measure, by compactifying the
discrete space $\Sg^{*}$ into 
 \[
  \ol{\Sg^{*}}=\Sg^{*}\cup\{\infty\}
 \]
(as in part~\ref{i:compact.compactify} of Example~\ref{example:compact};
this process makes sense also for a constructive discrete space),
and setting $\m(\infty) = 1-\sum_{x\in\Sg^{*}} \m(x)$.
The extended measure $\m$ is not quite lower semicomputable since
the number $\mu(\ol{\Sg^{*}} \xcpt \{0\})$ is not necessarily
lower semicomputable.

 \begin{remark}
A measure $\mu$ is computable over $\ol{\Sg^{*}}$ if and only if the
function $x \mapsto \mu(x)$ is computable for $x \in \Sg^{*}$.
This property does not imply that the number
 \[
 1 - \mu(\infty) = \mu(\Sg^{*}) = \sum_{x\in\Sg^{*}} \mu(x)
 \]
 is computable.
 \end{remark}

Let us allow, for a moment, measures $\mu$ that are not probability
measures: they may not even be finite.
Metric and computability can be extended to this case, the universal test
$\t_{\mu}(x)$ can also be generalized.
The Coding Theorem~\ref{thm:coding}
and other considerations suggest the introduction of
the following notation, for an arbitrary measure $\mu$:

 \begin{definition}
We define the \df{algorithmic entropy} of a point $x$ with respect to measure
$\mu$ as
 \begin{equation}\label{eq:alg-ent}
   H_{\mu}(x) = -\d_{\mu}(x) = -\log\t_{\mu}(x).
 \end{equation}
 \end{definition}
Then, with $\#$ defined as the counting measure over the discrete set
$\Sg^{*}$ (that is, $\#(S) = \card{S}$), we have
 \[
   K(x) \eqa H_{\#}(x).
 \]
This allows viewing $H_{\mu}(x)$ as a generalization of description
complexity.

The following theorem generalizes an earlier known theorem stating that
over a discrete space, for a computable measure,
entropy is within an additive constant the same as ``average complexity'':
$\cH(\mu) \eqa \mu^{x} K(x)$.

 \begin{theorem}
Let $\mu$ be a probability measure.
Then we have
 \begin{equation}\label{eq:entropy-less-avg-algentr}
  \cH_{\nu}(\mu) \le \mu^{x} H_{\nu}(x \mvert \mu).
 \end{equation}
If $X$ is a discrete space then the following estimate also holds:
  \begin{equation}\label{eq:entropy-gea-avg-algentr}
  \cH_{\nu}(\mu) \gea \mu^{x} H_{\nu}(x \mvert \mu).
  \end{equation}
 \end{theorem}
 \begin{proof} 
Let
$\dg$ be the measure with density $\t_{\nu}(x \mvert \mu)$ with respect to
$\nu$: $\t_{\nu}(x \mvert \mu) = \frac{\dg(dx)}{\nu(dx)}$.
Then $\dg(X) \le 1$.
It is easy to see from the maximality property of $\t_{\nu}(x \mvert \mu)$
that $\t_{\nu}(x \mvert \mu) > 0$, therefore according to
Proposition~\ref{propo:density-props}, we have
$\frac{\nu(dx)}{\dg(dx)} = \Paren{\frac{\dg(dx)}{\nu(dx)}}^{-1}$.
Using Proposition~\ref{propo:density-props} and~\ref{propo:Kullback-pos}: 
 \begin{align*}
     \cH_{\nu}(\mu)     &= - \mu^{x} \log\frac{\mu(dx)}{\nu(dx)},
\\   - \mu^{x} H_{\nu}(x \mvert \mu) &=  \mu^{x} \log \frac{\dg(dx)}{\nu(dx)}
              = - \mu^{x} \log \frac{\nu(dx)}{\dg(dx)},
\\    \cH_{\nu}(\mu) - \mu^{x} H_{\nu}(x \mvert \mu)
     &= - \mu^{x} \log \frac{\mu(dx)}{\dg(dx)} 
  \le -\mu(X) \log\frac{\mu(X)}{\dg(X)} \le 0.
 \end{align*}
This proves~\eqref{eq:entropy-less-avg-algentr}.

Over a discrete space $X$, the function
$(x,\mu,\nu) \mapsto \frac{\mu(dx)}{\nu(dx)} = \frac{\mu(x)}{\nu(x)}$
is computable, therefore by the maximality property of
$H_{\nu}(x \mvert \mu)$ we have 
$\frac{\mu(dx)}{\nu(dx)} \lem \t_{\nu}(x \mvert \mu)$,
hence $\cH_{\nu}(\mu) = -\mu^{x} \log \frac{\mu(dx)}{\nu(dx)} 
 \gea \mu^{x} H_{\nu}(x \mvert \mu)$.
 \end{proof} 

\subsection{Addition theorem}
The Addition Theorem~\eqref{eq:Haddit}
can be generalized to the algorithmic entropy $H_{\mu}(x)$
introduced in~\eqref{eq:alg-ent} (a somewhat similar generalization appeared
in~\cite{VovkVyugin93}).
The generalization, defining $H_{\mu,\nu} = H_{\mu\times\nu}$, is
 \begin{equation}\label{eq:addition-general}
   H_{\mu,\nu}(x,y)\eqa
  H_\mu(x \mvert \nu)+ H_\nu(y \mvert x,\; H_\mu(x \mvert \nu),\; \mu).
 \end{equation}
Before proving the general addition theorem, we establish a few useful
facts.

 \begin{proposition}\label{propo:int.H.of.xy}
 We have
 \[
  H_{\mu}(x \mvert \nu) \lea -\log \nu^{y} 2^{-H_{\mu,\nu}(x, y)}.
 \]
 \end{proposition}
 \begin{proof}
The function $f(x,\mu,\nu)$ that is the right-hand side, is upper
semicomputable by definition, and obeys $\mu^{x}2^{-f(x,\mu,\nu)} \le 1$.
Therefore the inequality follows from the minimum property of
$H_{\mu}(x)$.
 \end{proof}

Let $z \in \dN$, then the inequality
 \begin{equation}\label{eq:H.x.cond.z}
  H_{\mu}(x) \lea K(z) + H_{\mu}(x \mvert z)
 \end{equation}
 will be a simple consequence of the general addition theorem.
The following lemma, needed in the proof of the theorem,
generalizes this inequality somewhat:

  \begin{lemma}\label{lem:H.x.cond.z}
 For a com\-put\-able func\-tion $(y,z) \mapsto f(y,z)$ over $\dN$,
we have
 \[
  H_{\mu}(x \mvert y) \lea K(z) + H_{\mu}(x \mvert f(y,z)).
 \]
 \end{lemma}
 \begin{proof}
  The function
 \[
  (x,y,\mu) \mapsto g_{\mu}(x, y)=\sum_{z} 2^{-H_{\mu}(x \mvert f(y,z))-K(z)}
 \]
 is lower semicomputable,
and $\mu^{x} g_{\mu}(x, y) \le \sum_{z} 2^{-K(z)} \le 1$.
Hence $g_{\mu}(x, y) \lem 2^{-H_{\mu}(x \mvert y)}$.
The left-hand side is a sum, hence the inequality holds for each
element of the sum: just what we had to prove.
 \end{proof}

The following monotonicity property will be needed:

\begin{lemma}\label{lem:mon}
 For $i < j$ we have
 \[
   i + H_\mu(x\mvert i) \lea j + H_\mu(x\mvert j) .
 \]
\end{lemma}
\begin{proof}
  From Lemma~\ref{lem:H.x.cond.z}, with $f(i, n)=i + n$ we have
 \[
   H_{\mu}(x \mvert i) - H_{\mu}(x \mvert j) \lea K(j-i) \lea j-i.
 \]
\end{proof}

Let us generalize the minimum property of $H_{\mu}(x)$.

 \begin{proposition}\label{propo:univ-test-gener}
Let $(y,\nu)\mapsto F_{\nu}(y)$ be an upper semicomputable
function with values in $\ol\dZ=\dZ\cup\{-\infty,\infty\}$.
Then by Corollary~\ref{coroll:trim-univ} among the lower semicomputable functions
$(x,y,\nu) \mapsto g_{\nu}(x,y)$ with $\nu^{x}g_{\nu}(x,y)\le 2^{-F_{\nu}(y)}$
there is one that is maximal to within a multiplicative constant.
Choosing $f_{\nu}(x,y)$ as such a function we have
for all $x$ with $F_{\nu}(y) > -\infty$:
 \begin{align*}
  f_{\nu}(x,y) &\eqm 2^{-F_{\nu}(y)}\t_{\nu}(x \mvert y, F_{\nu}(y)),
 \end{align*}
or in logarithmic notation 
$-\log f_{\nu}(x,y) \eqa F_{\nu}(y)+H_{\nu}(x \mvert y, F_{\nu}(y))$.
 \end{proposition}
 \begin{proof}
  To prove the inequality $\gem$, define
 \[
   g_{\nu}(x,y,m) =\max_{i\ge m}2^{-i}\t_{\nu}(x \mvert y, i).
 \]
  Function $g_{\nu}(x,y,m)$ is lower semicomputable and decreasing
in $m$.
Therefore 
 \[
  g_{\nu}(x,y)  = g_{\nu}(x,y,F_{\nu}(y))
 \]
is also lower semicomputable since it is obtained by substituting an
upper semicomputable function for $m$ in $g_{\nu}(x,y,m)$.
The multiplicative form of Lemma~\ref{lem:mon} implies
 \begin{align*}
       g_{\nu}(x,y,m) &\eqm 2^{-m}\t_{\nu}(x \mvert y, m),
\\   g_{\nu}(x,y) &\eqm 2^{-F_{\nu}(y)}\t_{\nu}(x \mvert y, F_{\nu}(y)).
 \end{align*}
We have, since $\t_{\nu}$ is a test:
 \begin{align*}
   \nu^{x}2^{-m}\t_{\nu}(x \mvert y, m) &\le 2^{-m},
\\   \nu^{x}g_{\nu}(x,y) &\lem 2^{-F_{\nu}(y)},
 \end{align*}
implying $g_{\nu}(x,y)\lem f_{\nu}(x,y)$ by the optimality of $f_{\nu}(x,y)$.

To prove the upper bound, consider all lower semicomputable functions
$\phi_{e}(x,y,m,\nu)$ ($e=1,2,\dots$).
By Theorem~\ref{thm:trim}, there is a recursive mapping $e\mapsto e'$ with the
property that $\nu^{x}\phi_{e'}(x,y,m,\nu)\le 2^{-m+1}$, and for each $y,m,\nu$ if
$\nu^{x}\phi_{e}(x,y,m,\nu)< 2^{-m+1}$ then $\phi_{e}=\phi_{e'}$.
Let us apply this transformation to the function
$\phi_{e}(x,y,m,\nu)=f_{\nu}(x,y)$.
The result is a lower semicomputable function 
$f'_{\nu}(x,y,m)=\phi_{e'}(x,y,m,\nu)$
with the property that $\nu^{x}f'_{\nu}(x,y,m)\le 2^{-m+1}$, further
$\nu^{x}f_{\nu}(x,y)\le 2^{-m}$ implies $f'_{\nu}(x,y,m)=f_{\nu}(x,y)$.
Now $(x,y,m,\nu)\mapsto 2^{m-1}f'_{\nu}(x,y,m)$ is a uniform test
of $x$ conditional on $y,m$ and hence it is $\lem\t_{\nu}(x \mvert y, m)$.
Substituting $F_{\nu}(y)$ for $m$ the relation
$\nu^{x}f_{\nu}(x,y)\le 2^{-m}$ is satisfied and hence we have
 \begin{align*}
 f_{\nu}(x,y) =
 f'_{\nu}(x,y,F_{\nu}(y)) \lem 2^{-F_{\nu}(y)+1}\t_{\nu}(x\mvert y,F_{\nu}(y)).
 \end{align*}
 \end{proof}

As mentioned above, the
theory generalizes to measures that are not probability measures.
Taking $f_{\mu}(x,y)=1$ in Proposition~\ref{propo:univ-test-gener} gives
the inequality
 \[
   H_{\mu}(x \mvert \flo{\log \mu(X)}) \lea \log\mu(X),
 \]
with a physical meaning when $\mu$ is the phase space measure.
Using~\eqref{eq:H.x.cond.z}, this implies
 \begin{equation}\label{eq:unif.ub}
  H_\mu(x)\lea \log\mu(X) + K(\flo{\log\mu(X)}).
  \end{equation}

 \begin{theorem}[General addition]\label{thm:addition-general}
The following inequality holds:
 \[
   H_{\mu,\nu}(x,y) \eqa
  H_{\mu}(x \mvert \nu)+ H_{\nu}(y \mvert x,\; H_\mu(x \mvert \nu),\; \mu).
 \]
 \end{theorem}
 \begin{proof}
To prove the inequality $\lea$ define
$G_{\mu,\nu}(x,y,m) =\min_{i\ge m}\;i +  H_{\nu}(y \mvert x, i, \mu)$.
This function is upper semicomputable and increasing
in $m$.
Therefore function
  \[
   G_{\mu,\nu}(x,y)  = G_{\mu,\nu}(x, y, H_{\mu}(x \mvert \nu))
  \]
is also upper semicomputable since it is obtained by substituting an
upper semicomputable function for $m$ in $G_{\mu,\nu}(x,y,m)$.
Lemma~\ref{lem:mon} implies 
  \begin{align*}
     G_{\mu,\nu}(x,y,m) &\eqa m +  H_{\nu}(y \mvert x, m, \mu),
 \\  G_{\mu,\nu}(x,y)   &\eqa H_{\mu}(x \mvert \nu) + 
    H_{\nu}(y \mvert x, H_{\mu}(x \mvert \nu), \mu).
  \end{align*}
   Now, we have
  \begin{align*}
    \nu^{y} 2^{-m - H_{\nu}(y \mvert x, m, \mu)} &\le 2^{-m},
 \\   \nu^{y} 2^{-G_{\mu,\nu}(x,y)} &\lem 2^{-H_{\mu}(x \mvert \mu)}.
  \end{align*}
Integrating over $x$ by $\mu$ gives $\mu^{x}\nu^{y}2^{-G} \lem 1$, implying
$H_{\mu,\nu}(x,y) \lea G_{\mu,\nu}(x,y)$ 
by the minimality property of $H_{\mu,\nu}(x,y)$.
This proves the $\lea$ half of the theorem.

To prove the inequality $\gea$ let $f_{\nu}(x,y,\mu) = 2^{-H_{\mu,\nu}(x,y)}$.
Proposition~\ref{propo:int.H.of.xy} implies that there is a constant $c$ with
$\nu^{y} f_{\nu}(x,y,\mu) \le 2^{-H_{\mu}(x \mvert \nu)+c}$.
Let 
 \begin{align*}
  F_{\nu}(x, \mu) = H_{\mu}(x \mvert \nu).
 \end{align*}
Proposition~\ref{propo:univ-test-gener} gives (substituting $y$ for $x$ and
  $(x,\mu)$ for $y$):
 \begin{align*}
  H_{\mu,\nu}(x,y) = -\log f_{\nu}(x,y,\mu) 
   \gea F_{\nu}(x,\mu)+ H_{\nu}(y \mvert x,F_{\nu}(x,\mu),\mu),
 \end{align*}
which is what needed to be proved.
 \end{proof}

  The function $H_\mu(\og)$ behaves quite differently for different
kinds of measures $\mu$.
Recall the following property of complexity:
 \begin{equation} \label{eq:compl.of.fun}
  K(f(x)\mvert y)\lea K(x\mvert g(y)) \lea K(x) .
 \end{equation}
for any computable functions $f,g$.
This implies
 \[
   K(y)\lea K(x,y).
 \]
  In contrast, if $\mu$ is a probability measure then
 \[
    H_\nu(y) \gea H_{\mu,\nu}(\og,y).
 \]
  This comes from the fact that $2^{-H_\nu(y)}$ is a test for
$\mu\times\nu$.

\subsection{Information}

Mutual information has been defined in Definition~\ref{def:I*}
as $I^{*}(x : y) = K(x) + K(y) - K(x,y)$.
By the Addition theorem, we have
$I^{*}(x:y) \eqa K(y) - K(y \mvert x,\, K(x)) \eqa K(x) - K(x \mvert y,\,K(y))$.   
The two latter expressions show that in some sense, $I^{*}(x:y)$ is the
information held in $x$ about $y$ as well as the information held in $y$
about $x$.  
(The terms $K(x)$, $K(y)$ in the conditions are
logarithmic-sized corrections to this idea.)
Using~\eqref{eq:ol-d}, it is interesting to view
mutual information $I^{*}(x : y)$ as a deficiency of randomness of
the pair $(x,y)$
in terms of the expression $\ol\d_{\mu}$, with respect to $\m \times \m$:
 \[
   I^{*}(x : y) = K(x) + K(y) - K(x,y) = \ol\d_{\m \times \m}(x, y).
 \]
Taking $\m$ as a kind of ``neutral'' probability, even if it is not quite
such, allows us to view $I^{*}(x:y)$ as a ``deficiency of independence''.
Is it also true that $I^{*}(x:y) \eqa \d_{\m \times \m}(x)$?
This would allow us to deduce, as Levin did, ``information conservation''
laws from randomness conservation laws.\footnote{We cannot use the 
test $\ol\t_{\mu}$ for this, since it can be shown easily
that it does not to obey randomness conservation.}

Expression $\d_{\m \times \m}(x)$ must be understood
again in the sense of compactification, as in
Section~\ref{sec:neutral}.
There seem to be two reasonable ways to compactify the space
$\dN\times\dN$: we either compactify it directly, by adding a symbol
$\infty$, or we form the product $\ol\dN \times \ol\dN$.
With either of them, preserving
Theorem~\ref{thm:test-charac-discr}, we would
have to check whether $K(x,y \mvert \m \times \m) \eqa K(x,y)$.
But, knowing the function $\m(x)\times\m(y)$ we 
know the function $x \mapsto \m(x) \eqm \m(x) \times \m(0)$,
hence also the function $(x,y)\mapsto\m(x,y) = \m(\ang{x,y})$.
Using this knowledge, it is possible to develop an argument similar to
the proof of Theorem~\ref{thm:no-upper-semi-neutral}, showing that
$K(x,y \mvert \m \times \m) \eqa K(x,y)$ does not hold.

 \begin{question} 
Is there a neutral measure $M$ with the property that
$I^{*}(x:y) = \d_{M\times M}(x,y)$?
Is this true maybe for all neutral measures $M$?
If not, how far apart are the expressions $\d_{M\times M}(x,y)$ and
$I^{*}(x:y)$ from each other?
 \end{question}


 \section{Randomness and complexity}

We have seen in the discrete case that complexity and randomness are
closely related.
The connection is more delicate technically in the continuous case, but its
exploration led to some nice results.

\subsection{Discrete space}
It is known that for computable \( \mu \), the test \( \d_{\mu}(x) \) can be
expressed in terms of the description complexity of \( x \)
(we will prove these expressions below).
Assume that \( \bX \) is the (discrete) space of all binary strings.
Then we have 
\begin{equation}\label{eq:test-charac-fin-cpt}
   \d_{\mu}(x) = -\log \mu(x) - K(x) + O(K(\mu)).
\end{equation}
The meaning of this equation is the following.  
The expression
\( -\log\mu(x) \) is an upper bound (within \( O(K(\mu)) \)) of the complexity
\( K(x) \), and nonrandomness of \( x \) is measured by
the difference between the complexity and this upper bound.
Assume that \( \bX \) is the space of infinite binary sequences.
Then equation~\eqref{eq:test-charac-fin-cpt} must be replaced with
\begin{equation}\label{eq:test-charac-infin-cpt}
   \d_{\mu}(x) = 
 \sup_{n}\Paren{-\log \mu(x^{\le n}) - K(x^{\le n}) + O(K(\mu))}.
\end{equation}
For noncomputable measures, we cannot replace \( O(K(\mu)) \) in
these relations with anything finite, as shown in
the following example.
Therefore however attractive and simple, 
\( \exp(-\log \mu(x) - K(x)) \) is not a universal uniform test of randomness. 

 \begin{proposition}\label{propo:test-charac-counterexample}
There is a measure \( \mu \) over the discrete space \( \bX \) of binary strings
such that for each \( n \), there is an \( x \) with
\( \d_{\mu}(x) = n - K(n) \) and \( -\log \mu(x) - K(x) \lea 0 \).
 \end{proposition}
 \begin{proof}
Let us treat the domain of our measure \( \mu \) as a set of pairs \( (x,y) \).
Let \( x_{n} = 0^{n} \), for \( n=1,2,\dotsc \).
For each \( n \), let \( y_{n} \) be some binary string of length \( n \)
with the property \( K(x_{n}, y_{n}) > n \).
Let \( \mu(x_{n},y_{n})=2^{-n} \), and 0 elsewhere.
Then \( - \log \mu(x_{n},y_{n}) - K(x_{n},y_{n}) \le n - n = 0 \).
On the other hand, let \( t_{\mu}(x,y) \) be the test nonzero only on pairs of strings
\( (x,y) \) of the form \(( x_{n},y) \):
 \[
 t_{\mu}(x_{n}, y) = \frac{\m(n)}{\sum_{z \in \cB^{n}} \mu(x_{n}, z)}.
 \]
The form of the definition ensures semicomputability and we also have
 \[
 \sum_{x,y} \mu(x,y) t_{\mu}(x,y) \le \sum_{n} \m(n) < 1,
 \]
therefore \( t_{\mu} \) is indeed a test.
Hence \( \t_{\mu}(x,y) \gem t_{\mu}(x,y) \).
Taking logarithms, \( \d_{\mu}(x_{n}, y_{n}) \gea n - K(n) \).
 \end{proof}

 The same example shows that the test defined as
\( \exp(-\log\mu(x) - K(x)) \) over discrete sets,
does not satisfy the randomness conservation property.

 \begin{proposition}
The test defined as \( f_{\mu}(x) = \exp(-\log\mu(x) - K(x)) \) 
over discrete spaces \( \bX \) does
not obey the conservation of randomness.
 \end{proposition}
 \begin{proof}
   \begin{sloppyenv}
Let us use the example of Proposition~\ref{propo:test-charac-counterexample}.
Consider the function \( \pi : (x,y) \mapsto x \).
The image of the measure \( \mu \) under the projection is 
\( (\pi\mu)(x) = \sum_{y} \mu(x,y) \).
Thus, \( (\pi\mu)(x_{n}) = \mu(x_{n},y_{n}) = 2^{-n} \).
Then we have seen that \( \log f_{\mu}(x_{n},y_{n}) \le 0 \).
On the other hand,
 \[ 
 \log f_{\pi\mu} (\pi(x_{n},y_{n})) = -\log(\pi\mu)(x_{n}) - K(x_{n})
 \eqa n - K(n). 
 \]
Thus, the projection \( \pi \) takes a random pair \( (x_{n},y_{n}) \) into
an object \( x_{n} \) that is very nonrandom (when randomness is measured using
the test \( f_{\mu} \)).
\end{sloppyenv}
 \end{proof}
In the example, we have the abnormal situation that a pair is random but
one of its elements is nonrandom.
Therefore even if we would not insist on universality, the test 
\( \exp(-\log\mu(x) - K(x)) \) is unsatisfactory.

Looking into the reasons of the nonconservation in the example,
we will notice that it could only have happened because the 
test \( f_{\mu} \) is too special.
The fact that \( -\log (\pi\mu)(x_{n}) - K(x_{n}) \) is large should show that 
the pair \( (x_{n},y_{n}) \) can be enclosed into the ``simple'' set
\( \{x_{n}\} \times \bY \) of small probability; unfortunately, 
this observation does not reflect on 
\( -\log\mu(x,y) - K(x,y) \) (it does for computable \( \mu \)).
 
It is a natural idea to modify equation~\eqref{eq:test-charac-fin-cpt}
in such a way that the complexity \( K(x) \) is replaced with \( K(x \mvert \mu) \).
However, this expression must be understood properly.
We need to use the definition of \( K(x) \) as \( -\log\m(x) \) directly, and not 
as prefix complexity.

Let us mention the following easy fact:

 \begin{proposition}\label{propo:H/mu-computable}
If \( \mu \) is a computable measure then \( K(x \mvert \mu) \eqa K(x) \).
The constant in \( \eqa \) depends on the description complexity of \( \mu \).
 \end{proposition}

 \begin{theorem}\label{thm:test-charac-discr}
If \( \bX \) is the discrete space \( \Sg^{*} \) then we have
 \begin{equation}\label{eq:test-charac-discr}
   \d_{\mu}(x) \eqa -\log\mu(x) - K(x \mvert \mu).
 \end{equation}
 \end{theorem}
Note that in terms of the algorithmic entropy notation introduced 
in~\eqref{eq:alg-ent}, this theorem can be expressed as
 \[
   H_{\mu}(x) \eqa K(x \mvert \mu) + \log\mu(x).
 \]
 \begin{proof}
In exponential notation, equation~\eqref{eq:test-charac-discr} can be
written as \( \t_{\mu}(x) \eqm \m(x \mvert \mu)/\mu(x) \).
Let us prove \( \gem \) first.
We will show that the right-hand side of this inequality is a test, and
hence \( \lem \t_{\mu}(x) \).
However, 
the right-hand side is clearly lower semicomputable in \( (x, \mu) \) and
when we ``integrate'' it (multiply it by \( \mu(x) \) and sum it), its sum is
\( \le 1 \); thus, it is a test.

Let us prove \( \lem \) now.
The expression \( \t_{\mu}(x)\mu(x) \) is clearly lower semicomputable in
\( (x,\mu) \), and its sum is \( \le 1 \).
Hence, it is \( \lea \m(x \mvert \mu) \).
 \end{proof}

 \begin{remark}\label{rem:oracle-warning}
It important not to consider relative computation with respect to \( \mu \) 
as \emph{oracle computation} in the ordinary sense.
Theorem~\ref{thm:neutral-meas} below will show the existence of a measure with
respect to which every element is random.
If randomness is defined using \( \mu \) as an oracle then we can always find
elements nonrandom with respect to \( \mu \).

For similar reasons, the proof of the Coding Theorem does not transfer to the
function \( K(x\mvert \mu) \) 
since an interpreter function should have the property of \df{intensionality},
depending only on \( \mu \) and not on the sequence representing it.
(It does transfer without problem to an oracle version of \( K^{\mu}(x) \).)
The Coding Theorem still may hold, at least in some cases: this is currently not
known.
Until we know this, we cannot interpret \( K(x\mvert \mu) \) as description complexity
in terms of interpreters and codes.

(Thanks to Alexander Shen for this observation: this remark corrects
an error in the paper~\cite{GacsUnif05}.)
 \end{remark}

\subsection{Non-discrete spaces}
For non-discrete spaces, unfortunately, we can only provide a less
intuitive expression.

 \begin{proposition}\label{thm:test-charac}
let \( \bX=(X, d, D, \ag) \) be a complete
computable metric space, and let \( \cE \) be the enumerated set of bounded
Lipschitz functions introduced in~\eqref{eq:bd-Lip-seq}, 
but for the space \( \bM(\bX) \times \bX \).
The uniform test of randomness \( \t_{\mu}(x) \), can be expressed as
 \begin{equation}\label{eq:test-charac}
  \t_{\mu}(x) \eqm 
  \sum_{f \in \cE}f(\mu,x)\frac{\m(f \mvert \mu)}{\mu^{y} f(\mu, y)}.
 \end{equation}
 \end{proposition}
 \begin{proof}
For \( \gem \),
we will show that the right-hand side of the inequality is a test, and
hence \( \lem \t_{\mu}(x) \).
For simplicity, we skip the notation about the enumeration of \( \cE \) and
treat each element \( f \) as its own name.
Each term of the sum is clearly lower semicomputable in
\( (f, x, \mu) \), hence the sum is lower semicomputable in \( (x, \mu) \).
It remains to show that the \( \mu \)-integral of the sum is \( \le 1 \).
But, the \( \mu \)-integral of the generic term is \( \le \m(f \mvert \mu) \), and
the sum of these terms is \( \le 1 \) by the definition of the function
\( \m(\cdot \mvert \cdot) \).
Thus, the sum is a test.

For \( \lem \), note that \( (\mu,x) \mapsto \t_{\mu}(x) \), 
as a lower semicomputable function, is the supremum of functions in \( \cE \).
Denoting their differences by \( f_{i}(\mu,x) \),
we have \( \t_{\mu}(x) = \sum_{i} f_{i}(\mu,x) \).
The test property implies \( \sum_{i} \mu^{x} f_{i}(\mu,x) \le 1 \).
Since the function \( (\mu,i) \mapsto \mu^{x} f_{i}(\mu,x) \) is lower
semicomputable, this implies \( \mu^{x} f_{i}(\mu,x) \lem \m(i \mvert \mu) \), and
hence 
 \[
 f_{i}(\mu,x) \lem f_{i}(\mu,x) \frac{\m(i \mvert \mu)}{\mu^{x} f_{i}(\mu,x)}.
 \]
It is easy to see that for each \( f\in\cE \) we have
 \[
   \sum_{i : f_{i} = f} \m(i \mvert \mu) \le \mu(f \mvert \mu),
 \]
which leads to~\eqref{eq:test-charac}.
 \end{proof}

 \begin{remark}\label{rem:test-charac-lb}
If we only want the \( \gem \) part of the result, then \( \cE \) can be
replaced with any enumerated computable sequence of bounded computable
functions.
 \end{remark}

  \subsection{Infinite sequences}

In case of the space of infinite sequences and a computable measure,
Theorem~\ref{thm:defic-charac-cpt-seqs} gives a characterization of
randomness in terms of complexity.
This theorem does not seem to transfer to a more general situation, but
under some conditions, at least parts of it can be extended.

For arbitrary measures and spaces, we can say a little less:

 \begin{proposition}\label{propo:test-charac-seq-lb}
For all measures \( \mu\in\cM_{R}(X) \), 
for the deficiency of randomness \( \d_{\mu}(x) \), we have
 \begin{equation}\label{eq:defic-ineq-seq}
 \d_{\mu}(x) \gea \sup_{n}\Paren{-\log \ol\mu(x^{\le n}) - K(x^{\le n} \mvert \mu)}.
 \end{equation}
 \end{proposition}
 \begin{proof}
  Consider the function
 \[
 f_{\mu}(x) = \sum_{s} 1_{\Gg_{s}}(x) \frac{\m(s \mvert \mu)}{\ol\mu(\Gg_{s})}
  = \sum_{n} \frac{\m(x^{\le n} \mvert \mu)}{\ol\mu(x^{\le n})}
  \ge \sup_{n} \frac{\m(x^{\le n} \mvert \mu)}{\ol\mu(x^{\le n})}.
 \]
The function \( (\mu,x) \mapsto f_{\mu}(x) \) 
is clearly lower semicomputable and satisfies
\( \mu^{x} f_{\mu}(x) \le 1 \), and hence 
 \[
  \d_{\mu}(x) \gea \log f(x) \gea 
   \sup_{n}\Paren{-\log\ol\mu(x^{\le n}) - K(x^{\le n} \mvert \mu)}.
 \]
 \end{proof}

\begin{definition}
Let \( \cM_{R}(X) \) be the set of measures \( \mu \) with \( \mu(X)=R \).  
\end{definition}

We will be able to prove the \( \gea \) part of the statement of
Theorem~\ref{thm:defic-charac-cpt-seqs} in a more general
space, and without assuming computability.
Assume that a separating sequence \( b_{1},b_{2},\dots \) is given as defined
in Subsection~\ref{subsec:cells}, along with the set \( X^{0} \).
For each \( x \in X^{0} \), the binary sequence \( x_{1},x_{2},\dots \) 
has been defined.
Let
 \begin{align*}
  \ol\mu(\Gg_{s}) &= R  - \sum\setof{\mu(\Gg_{s'}) : \len{s}=\len{s'},\;s'\ne s}.
 \end{align*}
Then \( (s,\mu)\mapsto \mu(\Gg_{s}) \) is lower semicomputable, and 
\( (s,\mu)\mapsto \ol\mu(\Gg_{s}) \) is upper semicomputable.
And, every time that the functions \( b_{i}(x) \) form a partition
with \( \mu \)-continuity, we have \( \ol\mu(\Gg_{s})=\mu(\Gg_{s}) \) for all \( s \).

 \begin{theorem}\label{thm:defic-charac-compact}
Suppose that the space \( X \) is effectively compact.
Then for all computable measures \( \mu\in\cM_{R}^{0}(X) \), 
for the deficiency of randomness \( \d_{\mu}(x) \), 
the characterization~\eqref{eq:defic-charac-seq} holds.
 \end{theorem}
 \begin{proof}
The proof of part \( \gea \) of the inequality follows directly from
Proposition~\ref{propo:test-charac-seq-lb}, just as in the proof of 
Theorem~\ref{thm:defic-charac-cpt-seqs}.

The proof of \( \lea \) is also similar to the proof of that theorem.
The only part that needs to be reproved is the statement that
for every lower semicomputable function \( f \) over \( X \),
there are computable sequences \( y_{i}\in\dN^{*} \) and
\( q_{i}\in\dQ \) with \( f(x) = \sup_{i} q_{i} 1_{y_{i}}(x) \).
This follows now, since according to
Proposition~\ref{propo:compact-cell-basis}, the cells \( \Gg_{y} \) form a basis of
the space \( X \).
 \end{proof}

\subsection{Bernoulli tests}

In this part, we will give characterize Bernoulli sequences in terms of their
complexity growth.

Recall Definition~\ref{def:combinat-Bernoulli}:
A function \( f:\dB^{*}\to\dR \) is a combinatorial Bernoulli test if
 \begin{alphenum}
  \item It is lower semicomputable.
  \item It is monotonic with respect to the prefix relation.
  \item For all \( 0\le k\le n \) we have \( \sum_{x\in\dB(n,k)} f(x) \le \binom{n}{k} \).
 \end{alphenum}
According to Theorem~\ref{thm:combinat-Bernoulli}, a 
sequence \( \xi \) is nonrandom with respect to all Bernoulli measures if and
only if \( \sup_{n}b(\xi^{\le n})=\infty \), where \( b(x) \) is a combinatorial
Bernoulli test.

We need some definitions.

\begin{definition}
For a finite or infinite sequence \( x \) let \( S_{n}(x)=\sum_{i=1}^{n}x(i) \).

For \( 0\le p\le 1 \) and integers \( 0\le k\le n \), 
denote \( B_{p}(n,k)=\binom{n}{k}p^{k}(1-p)^{n-k} \).

An upper semicomputable function \( D:\dN^{2}\to\dN \), defined for \( n\ge 1 \),
\( 0\le k\le n \) will be called a \df{gap function} if
 \begin{align}\label{eq:dg-bd}
  \sum_{n\ge 1}\sum_{k=0}^{n}B_{p}(n,k)2^{-D(n,k)} \le 1
 \end{align}
holds for all \( 0\le p\le 1 \).
A gap function \( D(n,k) \) is \df{optimal} if for every other gap 
function \( D'(n,k) \) there is a \( c_{D'} \) with \( D(n,k)\le D'(n,k)+c_{D'} \).
\end{definition}

 \begin{proposition}
There is an optimal gap function \( D(n,k)\lea K(n) \).
 \end{proposition}
 \begin{proof}
The existence is proved using the technique of Theorem~\ref{thm:trim}.
For the inequality it is sufficient to note that \( K(n) \) is a gap function.
Indeed, we have
 \begin{align*}
  \sum_{n\ge 1}\sum_{k=0}^{n}B_{p}(n,k)2^{-K(n)} =\sum_{n\ge 1}2^{-K(n)}\le 1.
&\qedhere
 \end{align*}
 \end{proof}

\begin{definition}
Let us fix an optimal gap function and denote it by \( \Dg(n,k) \).
\end{definition}

Now we can state the test characterization theorem for Bernoulli tests.

 \begin{theorem}
Denoting by \( b(\xi) \) the universal class test for the Bernoulli sequences, we
have \( b(\xi) \eqm \ol b(\xi) \),
where
  \begin{align*}
  \log \ol b(\xi) = \sup_{n}\log\binom{n}{k}-K(\xi^{\le n}\mvert n,k,\Dg(n,k))-\Dg(n,k),
 \end{align*}
with \( k=S_{n}(\xi) \).
 \end{theorem}
 \begin{proof}
Let \( \dg(n,k) = 2^{-\Dg(n,k)} \).
 \begin{claim}\label{claim:gap}
Consider lower semicomputable functions \( \gm:\dB^{*}\to\dR_{+} \) such that
for all \( n,k \) we have
 \begin{align*}
   \sum_{y\in\dB(n,k)}\gm(y) \le 2^{-\Dg(n,k)}.
 \end{align*}
Among these functions there is one
that is optimal (maximal to within a multiplicative constant).
Calling it \( \dg(y) \) we have
 \begin{align*}
  \dg(y) \eqm \dg(n,k)\cdot\m(y\mvert n,k,\Dg(n,k)).
 \end{align*}
 \end{claim}
Thus, the right-hand side is equal, within a multiplicative constant,
to a lower semicomputable function of \( y \).
 \begin{proof}
This follows immediately from Proposition~\ref{propo:univ-test-gener}.
with \( \nu=\# \) (the counting measure) over the sets \( \dB(n,k) \).
 \end{proof}
Let us show \( \ol b(\xi)\lem b(\xi) \).
We have with \( k=S_{n}(\xi) \) in the first line:
 \begin{align*}
 \ol b(\xi) &=\sup_{n\ge 1}\binom{n}{k}\m(\xi^{\le n}\mvert n,k,\Dg(n,k))\dg(n,k)
\\ &= \sup_{n\ge 1}\sum_{k=0}^{n}\binom{n}{k}\dg(n,k)
     \sum_{y\in\dB(n,k)}1_{y}(\xi)\m(y\mvert n,k,\Dg(n,k))
\\ &\le \sum_{n\ge 1}\sum_{k=0}^{n}\binom{n}{k}
     \sum_{y\in\dB(n,k)}1_{y}(\xi)\dg(n,k)\m(y\mvert n,k,\Dg(n,k))
\\ &\eqm \sum_{n\ge 1}\sum_{k=0}^{n}\binom{n}{k}
     \sum_{y\in\dB(n,k)}1_{y}(\xi)\dg(y),
 \end{align*}
using the notation of Claim~\ref{claim:gap} above.
Let \( t(\xi) \) denote the right-hand side here, which is thus
a lower semicomputable function.
We have for all \( p \):
 \begin{align*}
   B_{p}^{\xi}t(\xi) &\eqm \sum_{n\ge 1}\sum_{k=0}^{n}B_{p}(n,k)\dg(n,k)
 \sum_{y\in\dB(n,k)}\m(y\mvert n,k,\Dg(n,k)) \le 1,
 \end{align*}
so \( t(\xi)\gem \ol b(\xi) \) is a Bernoulli test, showing \( \ol b(\xi)\lem b(\xi) \).

To show \( b(\xi)\lem \ol b(\xi) \) we will
follow the method of the proof of
Theorem~\ref{thm:defic-charac-cpt-seqs}.
Replace \( b(\xi) \) with a rougher version:
 \[
 b'(\xi) = \frac{1}{2}\max \setof{2^{n} : 2^{n} < b(\xi)},
 \]
then we have \( 2 b'<b \).
There are computable sequences \( y_{i}\in\dB^{*} \) and \( k_{i} \in \dN \) with
\( b'(\xi) = \sup_{i}\; 2^{k_{i}} 1_{y_{i}}(\xi) \)
with the property that if \( i<j \) and \( 1_{y_{i}}(\xi)=1_{y_{j}}(\xi)=1 \) then
\( k_{i}<k_{j} \).
As in the imitated proof, we have
\( 2b'(\xi) \ge \sum_{i} 2^{k_{i}} 1_{y_{i}}(\xi) \).
The function \( \gm(y)=\sum_{y_{i}=y}2^{k_{i}} \) is lower semicomputable.
We have
\begin{align}
 b\ge 2b'(\xi)\ge\sum_{i} 2^{k_{i}} 1_{y_{i}}(\xi)  = \sum_{y\in\dN^{*}}1_{y}(\xi)\gm(y)
 = \sum_{n}\sum_{k=0}^{n}\sum_{y\in\dB(n,k)}1_{y}(\xi)\gm(y).
\label{eq:gamma-expr}
 \end{align}
By Theorem~\ref{thm:combinat-Bernoulli} we can assume
\( \sum_{y\in\dB(n,k)}\gm(y)\le \binom{n}{k} \).
Let
 \begin{alignat*}{3}
           &\dg'(y)    &&= \gm(y)\binom{n}{k}^{-1} \le 1,
\\      &\dg'(n,k) &&=  \sum_{y\in\dB(n,k)}\dg'(y).
 \end{alignat*}
Since \( 1\ge B_{p}b \ge B_{p}(2 b') \) for all \( p \), we have 
 \begin{align*}
      1 &\ge \sum_{n}\sum_{k=0}^{n}\sum_{y\in\dB(n,k)}\gm(y)B_{p}^{\xi}1_{y}(\xi)
            = \sum_{n}\sum_{k=0}^{n}p^{k}(1-p)^{n-k}\sum_{y\in\dB(n,k)}\gm(y)
\\     &= \sum_{n}\sum_{k=0}^{n}B_{p}(n,k)\sum_{y\in\dB(n,k)}\dg'(y)
          = \sum_{n}\sum_{k=0}^{n}B_{p}(n,k)\dg'(n,k).
 \end{align*}
Thus \( \dg'(n,k) \) is a gap function, hence \( \dg'(n,k) \lem 2^{-\Dg(n,k)} \),
and by Claim~\ref{claim:gap} we have 
 \begin{align*}
 \gm(y)\binom{n}{k}^{-1}&=\dg'(y)\lem \dg(y)
   \eqm\dg(n,k)\cdot\m(y\mvert n,k,\Dg(n,k)).
 \end{align*}
Substituting back into~\eqref{eq:gamma-expr} finishes the proof of
\( b(\xi)\lem \ol b(\xi) \).
 \end{proof}

\section{Cells}\label{subsec:cells}

This section allows to transfer some of the results
on sequence spaces to more general spaces, by encoding the elements into
sequences.
The reader who is only interested in sequences can skip this section.

\subsection{Partitions}

The coordinates of the sequences into which we want to encode our elements will
be obtained via certain partitions.

Recall from Definition~\ref{def:continuity-set} that 
a measurable set \( A \) is said to be a \df{\( \mu \)-continuity set} if 
\( \mu(\partial A)=0 \) where \( \partial A \) is the boundary of \( A \).

 \begin{definition}[Continuity partition]
Let \( f:X\to\dR \) be a bounded computable function, and let
\( \ag_{1}<\dots<\ag_{k} \) be rational numbers,
and let \( \mu \) be a computable measure with the property
that \( \mu f^{-1}(\ag_{j})=0 \) for all \( j \).

In this case, we will say that \( \ag_{j} \) are \( \mu \)-\df{continuity points} of \( f \).
Let \( \ag_{0}=-\infty \), \( \ag_{k+1}=\infty \), and for \( j=0,\dots,k \), let
Let \( U_{j} = f^{-1}((j,j+1)) \).
The sequence of disjoint 
computably enumerable open sets \( (U_{0},\dots,U_{k}) \) will be called the
\df{partition generated by} \( f,\ag_{1},\dots,\ag_{k} \).

If we have several partitions \( (U_{i0},\dots,U_{i,k}) \), 
generated by different functions \( f_{i} \) (\( i=1,\dots,m \))
and cutoff sequences \( (\ag_{ij}: j=1,\dots,k_{i}) \)  made up of
\( \mu \)-continuity points of \( f_{i} \)
then we can form a new partition generated by all possible intersections
 \[
  V_{j_{1},\dots,j_{n}} = U_{1,j_{1}}\cap \dots \cap U_{m,j_{m}}.
 \]
A partition of this kind will be called a \df{continuity partition}.
The sets \( V_{j_{1},\dots,j_{n}} \) will be called the \df{cells} of this
partition. 
 \end{definition}

The following is worth noting.

 \begin{proposition}\label{propo:reg-partit-meas-cptable}
In a partition as given above, 
the values \( \mu V_{j_{1},\dots,j_{n}} \) are computable
from the names of the functions \( f_{i} \) and the cutoff points \( \ag_{ij} \).
 \end{proposition}
 \begin{proof}
Straightforward.
 \end{proof}

Let us proceed to defining cells.

 \begin{definition}[Separating sequence]
Assume that a computable sequence of functions 
\( b_{1}(x),b_{2}(x),\dots \) over \( X \) is given,
with the property that for every pair 
\( x_{1},x_{2}\in X \) with \( x_{1}\ne x_{2} \), there is a \( j \)
with \( b_{j}(x_{1})\cdot b_{j}(x_{2}) < 0 \).
Such a sequence will be called a \df{separating sequence}.
Let us give the correspondence between the set \( \dB^{\dN} \)
of infinite binary sequences and elements of the set
 \[
  X^{0} = \setof{x\in X: b_{j}(x)\ne 0,\;j=1,2,\dots}.
 \]
For a binary string 
\( s_{1}\dotsm s_{n} = s\in\dB^{*} \), let
 \[
   \Gg_{s}
 \]
be the set of elements of \( X \) with the property that for
\( j=1,\dots,n \), if \( s_{j}=0 \) then \( b_{j}(\og) < 0 \), otherwise
\( b_{j}(\og)>0 \).

The separating sequence will be called \( \mu \)-\df{continuity} sequence
if \( \mu(X^{0})=0 \).
 \end{definition}

This correspondence has the following properties.
 \begin{alphenum}
  \item \( \Gg_{\Lg}=X \).
  \item For each \( s\in \dB \), the sets \( \Gg_{s0} \) and 
\( \Gg_{s1} \) are disjoint and their union is contained in \( \Gg_{s} \).
  \item For \( x\in X^0 \), we have \( \{x\} = \bigcap_{x\in\Gg_{s}} \Gg_{s} \).
 \end{alphenum}

 \begin{definition}[Cells]
If string \( s \) has length \( n \) then \( \Gg_{s} \) will be called a \df{canonical
\( n \)-cell}, or simply canonical cell, or \( n \)-cell.
From now on, whenever \( \Gg \) denotes a subset of \( X \), it means a
canonical cell.
We will also use the notation
 \[
   \len{\Gg_{s}}=\len{s}.
 \]
 \end{definition}

The three properties above say that if we restrict ourselves to the set
\( X^{0} \) then the canonical cells behave somewhat like binary subintervals:
they divide \( X^{0} \) in half, then each half again in half, etc.  
Moreover, around each point, these canonical cells become ``arbitrarily
small'', in some sense (though, they may not be a basis of neighborhoods).
It is easy to see that if \( \Gg_{s_1},\Gg_{s_2} \) are two canonical
cells then they either are disjoint or one of them contains the other.
If \( \Gg_{s_1}\sbs\Gg_{s_2} \) then \( s_2 \) is a prefix of \( s_1 \).
If, for a moment, we write \( \Gg^0_s=\Gg_s\cap X^0 \) then we have the
disjoint union \( \Gg^0_s=\Gg^0_{s0}\cup\Gg^0_{s1} \).

Let us use the following notation.

 \begin{definition}
For an \( n \)-element binary string \( s \), for \( x \in \Gg_{s} \), we will write 
 \[
           s = x^{\le n} =  x_1\dotsm x_n,
\quad \mu(s) = \mu(\Gg_{s}).
 \]
The \( 2^n \) cells (some of them possibly empty)
of the form \( \Gg_s \) for \( \len{s}=n \) form a partition
 \[
  \cP_n
 \]
  of \( X^0 \).
 \end{definition}

  Thus, for elements of \( X^0 \), we can talk about the \( n \)-th bit \( x_n \)
of the description of \( x \): it is uniquely determined.

 \begin{examples}\label{example:cells}\
 \begin{enumerate}
  \item If \( \bX \) is the set of infinite binary sequences with its usual
topology, the functions \( b_{n}(x) = x_{n}-1/2 \) generate the usual
cells, and \( \bX^{0}=\bX \).
  \item If \( \bX \) is the interval \( \clint{0}{1} \), let 
\( b_{n}(x) = -\sin(2^{n}\pi x) \).
Then cells are open 
intervals of the form \( \opint{k\cdot 2^{-n}}{(k+1)\cdot 2^{n}} \),
the correspondence between infinite binary strings
and elements of \( X^0 \) is just the usual representation of \( x \) as the
binary decimal string \( 0.x_{1}x_{2}\dots \).
 \end{enumerate}
 \end{examples}

When we fix canonical cells,  
we will generally assume that the partition chosen is also
``natural''.
The bits \( x_1,x_2,\ldots \) could contain information about the
point \( x \) in decreasing order of importance from a macroscopic
point of view. 
For example, for a container of gas, the first few bits may
describe, to a reasonable degree of precision, the amount of gas in
the left half of the container, the next few bits may describe the
amounts in each quarter, the next few bits may describe the
temperature in each half, the next few bits may describe again the
amount of gas in each half, but now to more precision, etc.
From now on, whenever \( \Gg \) denotes a subset of \( X \), it means a
canonical cell.
From now on, for elements of \( X^0 \), we can talk about the \( n \)-th
bit \( x_n \) of the description of \( x \): it is uniquely determined.

The following observation will prove useful.

 \begin{proposition}\label{propo:compact-cell-basis}
Suppose that the space \( \bX \) is effectively compact\footnote{It was noted by
  Hoyrup and Rojas that the qualification ``effectively'' is necessary here.}
 and we have a separating sequence \( b_{i}(x) \) as given above.
Then the cells \( \Gg_{s} \) form a basis of the space \( \bX \).
 \end{proposition}
 \begin{proof}
We need to prove that for every ball \( B(x,r) \), the there is a cell
\( x\in\Gg_{s}\sbs B(x,r) \).
Let \( C \) be the complement of \( B(x,r) \).
For each point \( y \) of \( C \), there is an \( i \) such that
\( b_{i}(x)\cdot b_{i}(y) < 0 \).
In this case, let \( J^{0} = \setof{z: b_{i}(z) < 0} \),
\( J^{1} = \setof{z: b_{i}(z) > 0} \).
Let \( J(y)=J^{p} \) such that \( y\in J^{p} \).
Then \( C \sbs \bigcup_{y} J(y) \), and compactness implies that there is a
finite sequence \( y_{1},\dots,y_{k} \) with 
\( C \sbs \bigcup_{j=1}^{k} J(y_{j}) \).
Clearly, there is a cell 
 \[
  x \in \Gg_{s} \sbs B(x,r) \xcpt \bigcup_{j=1}^{k} J(y_{j}).
 \]
 \end{proof}

\subsection{Computable probability spaces}

If a separating sequence is given in advance, we may
restrict attention to the class of measures that make our sequence a
\( \mu \)-continuity sequence:

 \begin{definition}
Let \( \cM^{0}(X) \) be the set of those probability measures \( \mu \)
in \( \cM(X) \) for which \( \mu(X \xcpt X^{0})=0 \).
 \end{definition}

On the other hand, for each computable measure \( \mu \),
a computable separating sequence can be constructed that is a \( \mu \)-continuity
sequence.
Recall that \( B(x,r) \) is the ball of center \( x \) and radius \( r \).
Let \( D=\{s_{1},s_{2},\dots\}=\{\ag(1),\ag(2),\dots\} \) 
be the canonical enumeration of the canonical dense set \( D \).

\begin{theorem}[Hoyrup-Rojas]\label{thm:continuity-balls}
There is a sequence of uniformly computable reals \( (r_n)_{n\in \dN} \) such
that \( (B(s_{i},r_{n}))_{i,n} \) is a basis of balls that are \( \mu \)-continuity sets.
This basis is constructively equivalent to the original one, consisting of all 
balls \( B(s_{i},r) \), \( r\in\dQ \).
\end{theorem}

 \begin{corollary}
There is a computable separating sequence with the \( \mu \)-continuity property.
 \end{corollary}
 \begin{proof}
Let us list all balls \( B(s_{i},r_{n}) \) into a single sequence
\( B(s_{i_{k}},r_{n_{k}}) \).
The functions 
 \begin{align*}
  b_{k}(x)=d(s_{i_{k}},x)-r_{n_{k}}
 \end{align*}
give rise to the desired sequence.
 \end{proof}

For the proof of the theorem, we use some preparation.
Recall from Definition~\ref{def:atom} that an atom is a point with positive
measure.

\begin{lemma}\label{lem:zero_measure_points}
Let \( X \) be \( \dR \) or \( \dR^+ \) or \( \clint{0}{1} \). 
Let \( \mu \) be a computable probability measure on \( X \). 
Then there is a sequence of uniformly computable reals \( (x_{n})_{n} \)
which is dense in \( X \) and contains no atoms of \( \mu \).
\end{lemma}
\begin{proof}
Let \( I \) be a closed rational interval. 
We construct \( x\in I \) with \( \mu(\{x\})=0 \). 
To do this, we construct inductively a nested sequence of
closed intervals \( J_k \) of measure \( <2^{-k+1} \), with \( J_0=I \). 
Suppose \( J_k=\clint{a}{b} \) has been constructed, with \( \mu(J_k)<2^{-k+1} \). 
Let
\( m=(b-a)/3 \): one of the intervals \( \clint{a}{a+m} \) and \( \clint{b-m}{b} \) must
have measure \( <2^{-k} \), and we can find it effectively---let it be \( J_{k+1} \).

From a constructive enumeration \( (I_n)_n \) of all the dyadic intervals, we
can construct \( x_n\in I_n \) uniformly.
 \end{proof}

\begin{corollary}\label{coroll:sequence}
Let \( (\bX,\mu) \) be a computable metric space with a computable measure
and let \( (f_i)_i \) be a sequence of uniformly computable real
valued functions on \( X \). 
Then there is a sequence of uniformly computable reals \( (x_n)_n \) that is
dense in \( \dR \) and such that each \( x_{n} \) is a \( \mu \)-continuity point of each
\( f_{i} \). 
\end{corollary}
\begin{proof}
Consider the uniformly computable measures \( \mu_i=\mu\circ f_i^{-1} \) and define
\( \nu=\sum_i2^{-i}\mu_i \).
It is easy to see that \( \nu \) is a computable measure and then, by
Lemma~\ref{lem:zero_measure_points}, there is a sequence of uniformly computable
reals \( (x_n)_n \) which is dense in \( \dR \) and contains no atoms of \( \nu \). 
Since \( \nu(A)=0 \) iff \( \mu_i(A)=0 \) for all \( i \), we get \( \mu(\{f_i^{-1}(x_n)\})=0 \)
for all \( i,n \).
\end{proof}

\begin{proof}[Proof of Theorem~\protect\ref{thm:continuity-balls}]
Apply Corollary~\ref{coroll:sequence} to \( f_i(x)=d(s_i,x) \).

Since every ball \( B(s_{i},r) \) can be expressed as a computably enumerable union of the balls
of the type \( B(s_{i},r_{n}) \) just constructed, the two bases are constructively
equivalent. 
\end{proof}




\chapter{Exercises and problems}
 
\begin{exercise}
  Define, for any two natural numbers \( r,s \), a standard encoding \( \cnv_s^r \)
of base \( r \) strings \( x \) into base \( s \) strings with the property
 \begin{equation} \label{eq:cnv}
   |\cnv_s^r(x)| \le |x|\frac{\log r}{\log s} + 1.
 \end{equation} 
 \end{exercise}

\begin{proof}[Solution]
  We will use \(  \x = N^N, \x_{r} = Z_{r}^N  \) for the sets of infinite
strings of natural numbers and \( r \)-ary digits respectively.
  For a sequence \( p\in \x_{r} \), let \( [p]_{r} \) denote the real number in the
interval \( [0,1] \) which \( 0.p \) denotes in the base \( r \) number system.
  For \( p \) in \( \dS_{r} \), let \( [p]_{r} = \setof{ [pq]_{r}: q\in\x_{r}} \).

  For the \( r \)-ary string \( x \), let \( v \) be the size of the largest
\( s \)-ary intervals \( [y]_s \) contained in the \( r \)-ary interval \( [x]_{r} \).
If \( [z]_s \) is the leftmost among these intervals, then let
\( \cnv_s^r(x)=z \).
 This is a one-to-one encoding. We have \( r^{-|x|} < 2sv \), since any \( 2s \)
consecutive \( s \)-ary intervals of length \( v \) contain an \( s \)-ary interval of
length \( sv \).
 Therefore
 \[
 |z| = -\log_s v < |x| \frac{\log r}{\log s}+1+\frac{\log 2}{\log s}
 \]
  hence, since \( 2 \le s \) and \( |x| \) is integer, we have the 
inequality~\eqref{eq:cnv}.
 \end{proof}

\begin{exercise}
  A function \( A \) from \( \dS_{r} \times \dS \) to \( \dS \) is called an \( r \)-ary
interpreter.
  Prove the following generalization of the Invariance
Theorem.
  For any \( s \), there is a p.r.~\( s \)-ary interpreter \( U \) such that for any
p.r.~interpreter \( A \) there is a constant \( c < \infty \) such that for all
\( x,y \) we have
 \begin{equation} \label{eq:cnvopt}
   C_U(x \mvert y) \le C_A(x \mvert y) + c.
 \end{equation}
 \end{exercise}

\begin{proof}[Solution] 
  Let \( V  :  \dZ_s^\ast \times \dZ_s^\ast \times \dS \to \dS \) be a  
partial recursive
function which is \df{universal}: such that for any  p.r.~\( s \)-ary
interpreter \( A \), there is a string \( a \) such that for all \( p,x \), we have 
\( A(p,x) = V(a,p,x) \). 

  The machine computing \( U(p,x) \) tries to decompose \( p \) into \( u^ov \) and
outputs \( V(u,v,x) \). Let us verify that \( U \) is optimal. Let \( A \) be a 
p.r.~\( r \)-ary interpreter, \( B \) an \( s \)-ary  p.r.~interpreter such
that \( B(\cnv_s^r(p),x)=A(p,x) \) for all \( p,x \), \( a \) a binary string
such that \( B(p,x) = U(a,p,x) \) for all \( p,x \).
  Let \( x,y \) be two strings.
  If \( C_A(x \mvert y)=\infty \), then the inequality~\eqref{eq:cnvopt} holds
trivially.
  Otherwise, let \( p \) be a binary string of length \( C_A(x \mvert y)/\log r \) with
\( A(p,y)=x \).
  Then
 \[ U(a^o\cnv_s^r(p), y) = V(a,\cnv_s^r(p),y) 
   = B(\cnv_s^r(p),y) = A(p,y) = x.
 \]
  Since 
 \[ |\cnv_s^r(p)|\le |p|\log r/\log s + 1 = 
	C_A(x \mvert y)/\log s + 1,
 \]
  we have
 \[
   C_U(x \mvert y) \le (2|a| + C_A(x \mvert y)/\log s + 1)\log s
  = C_A(x \mvert y) + (2|a|)+1)\log s. 
 \]
 \end{proof}

\begin{exercise}[\( S^{m}_{n} \) -theorem]
  Prove that there is a binary string \( b \) 
such that \( U(p,q,x) = U(b^o p^o q,x) \) holds for all binary strings 
\( p,q \) and arbitrary strings x. 
 \end{exercise}

\begin{exercise}[Schnorr]
  Notice that, apart from the conversion between representations, what 
our optimal interpreter does is the following.
  It treats the program \( p \) as a pair \( (p(1),p(2)) \) whose first member is
the G\"odel number of some interpreter for the universal p.r.~function
\( V(p_{1},p_{2},x) \), and the second argument as a program for this interpreter.
  Prove that indeed, for any recursive pairing function \( w(p,q) \), if there
is a function \( f \) such that \( |w(p,q)| \le f(p) + |q| \) then \( w \) can be used
to define an optimal interpreter.
 \end{exercise}

 \begin{exercise} \label{xrc.need-prefix}
Refute the inequality \( C(x, y) \lea C(x) + C(y) \).
 \end{exercise}

\begin{exercise} 
 Prove the following sharpenings of Theorem~\ref{thm:addit}.
 \begin{align*} 	C(x,y)	&\lea J(C(x)) + C(y \mvert x,C(x)),
 \\      C(x,y) &\lea C(x) + J(C(y \mvert x,C(x))).
 \end{align*}
  Prove Theorem~\ref{thm:addit} from here.
 \end{exercise}

\begin{exercise}[Schnorr]
 Prove \( C(x+C(x)) \eqa C(x) \). Can you generalize this result?
 \end{exercise}

\begin{exercise}[Schnorr]
 Prove that if \( m < n \) then \( m+C(m) \lea n+C(n) \). 
 \end{exercise}

\begin{exercise}
 Prove 
 \[
     \log\binom{n}{k} \eqa k\log\frac{n}{k} + 
   (n-k)\log \frac{n}{n-k} + \frac{1}{2} \log\frac{n}{k(n-k)}.
 \]
 \end{exercise}
\begin{exercise}[Kamae] \label{xrc.Kamae} 
  Prove that for any natural number \( k \) there is a string \( x \) such that 
for for all but finitely many strings \( y \), we have 
 \[ C(x) - C(x \mvert y) \ge k.
 \]
 In other words, there are some strings \( x \) such that almost any string
\( y \) contains a large amount of information about them.
 \end{exercise}

\begin{proof}[Solution]
  Strings \( x \) with this property are for example the ones which contain
information obtainable with the help of any sufficiently large number.
  Let \( E \) be a r.e.\ set of integers. Let \( e_{0}e_{1}e_{2}\ldots \)
be the infinite string which is the characteristic function of \( E \),
and let \( x(k) = e_{0}\ldots e_{2^k} \). We can suppose that \( C(x(k)) \ge
k \).
  Let \( n_{1},n_{2},\ldots \) be a recursive enumeration of \( E \) without
repetition, and let \( \ag(k)=\max\{i : n_{i} \le 2^k\} \).
  Then for any number \( t \ge \ag(k) \) we have \( C(x(k) \mvert t) \lea \log
k \).
  Indeed, a binary string of length \( k \) describes the number \( k \).
  Knowing \( t \) we can enumerate \( n_{1},\ldots,n_{t} \) and thus learn \( x(k) \).
  Therefore with any string \( y \) of length \( \ge \ag(k) \) we have \( C(x) - C(x
\mvert y) \gea k - \log k \).
 \end{proof}

\begin{exercise}
 \begin{alphenum}
  \item Prove that a real function \( f \) is computable iff there is a
recursive function \( g(x,n) \) with rational values, and 
\( |f(x)-g(x,n)| < 1/n \). 
  \item Prove that a function \( f \) is semicomputable iff there exists a
recursive function with rational values, (or, equivalently, a
computable real function) \( g(x,n) \) nondecreasing in \( n \), with \( f(x) =
\lim_{n\to\infty}g(x,n) \). 
 \end{alphenum}
 \end{exercise}

\begin{exercise} 
   Prove that in Theorem~\ref{thm:nolb} one can write ``semicomputable'' for
``partial recursive''.
 \end{exercise}

\begin{exercise}[Levin]
   Show that there is an upper semicomputable function \( G(p,x,y) \) which for 
different finite binary strings \( p \) enumerates all upper semicomputable
functions \( F(x,y) \) satisfying the inequality~\eqref{eq:Kmajor}.
  Prove
 \[
  C(x\mvert y) \eqa \inf_{p} G(p,x,y)+J(|p|).
 \]
 \end{exercise}

\begin{exercise} Prove 
 \[
 	\sum_{ p \in \dB^{n} } \m(p) \eqm \m(n).
 \]
 \end{exercise}
 
\begin{exercise}[Solovay]
  Show that we cannot find effectively infinitely many places where some
recursive upper bound of \( K(n) \) is sharp.
  Moreover, suppose that \( F(n) \) is a recursive upper bound of \( K(n) \).
  Then there is no recursive function \( {\cD}(n) \) ordering to each \( n \) a
finite set of natural numbers (represented for example as a string) larger than
\( n \) such that for each \( n \) there is an \( x \) in \( {\cD}(n) \) with \( F(x) \le
K(x) \).
  Notice that the function \( \log n \) (or one almost equal to it) has this
property for \( C(n) \).
 \end{exercise}

\begin{proof}[Solution]
 Suppose that such a function exists.
  Then we can select a sequence \( n_{1} < n_{2} < \ldots \) of integers with the
property that the sets \( {\cD}(n_{i}) \) are all disjoint.
  Let
 \[
   a(n) = \sum_{x \in {\cD}(n)} 2^{-F(x)}.
 \]
  Then the sequence \( a(n_{k}) \) is computable and \( \sum_{k} a(n_{k}) < 1 \).
  It is now easy to construct a computable subsequence \( m_{i} = n_{k_{i}} \)
of \( n_{k} \)
and a computable sequence \( b_{i} \) such that \( b_{i}/a(m_{i}) \to\infty \) and
\( \sum_{i} b_{i} < 1 \).
  Let us define the semimeasure \( \mu \) setting
 \[
   \mu(x)=2^{-F(x)}b_{i}/a(m_{i})
 \]
  for any \( x \) in \( {\cD}(m_{i}) \) and 0 otherwise.
  Then for any \( x \) in \( {\cD}(m_{i}) \) we have
 \( K(x) \lea -\log\mu(x) = F(x) - \log c_{i} \) 
where \( c_{i}=b_{i}/a(m_{i})\to\infty \),  so we arrived a contradiction
with the assumption.
 \end{proof}

\begin{exercise}
 Prove that there is a recursive upper bound \( F(n) \) of \( K(n) \) and a 
constant \( c \) with the property that there are infinitely many natural 
numbers \( n \) such that for all \( k > 0 \), the quantity of numbers \( x \le n \) 
with \( K(x) < F(x) - k \) is less than \( cn2^{-k} \). 
 \end{exercise}

\begin{proof}[Solution]  Use the upper bound \( G \) found in 
Theorem~\ref{thm:infsharp} and define \( F(n)=\log n + G(\flo{ \log n }) \).
The property follows from the facts that \( \log n + K(\flo{ \log n }) \) is
a sharp upper bound for \( K(x) \) for ``most'' \( x \) less than \( n \) and that
\( G(k) \) is infinitely often close to \( K(k) \).
 \end{proof}
\begin{exercise}
  Give an example of a computable sequence \( a_{n} > 0 \) of with the property
that \( \sum_{n} a_{n} < \infty  \) but for any other computable sequence
 \( b_{n} > 0 \), if \( b_{n}/a_{n} \to\ infty \) then \( \sum_{n} b_{n} = \infty \).
 \end{exercise}

\begin{proof}[Hint:]
 Let \( r_{n} \) be a recursive, increasing sequence of 
rational numbers with \( \lim_{n} r_{n} = \sum_{x} \m(x) \) and let 
\( a_{n} = r_{n+1} - r_{n} \). 
 \end{proof}

\begin{exercise}[Universal coding, Elias]
 Let \( f(n)=\log_{2} n + 2 \log_{2}\log_{2} n \). 
Show that when \( P \) runs over 
all nonincreasing probability distributions over \( N \) then 
 \[
 \lim_{K(P) \to \infty} K(P)^{-1}\sum_{n} P(n)f(n) = 1.
 \]
  \end{exercise}

\begin{exercise}[T.~Cover]
  Let \( \log_{2}^* n = \log_{2} n + \log_{2}\log_{2} n + \ldots \) (all positive
terms).
  Prove that
 \[
 \sum_{n} 2^{-\log_{2}^* n} < \infty,
 \]
  hence \( K(n) \lea \log_{2}^\ast n \).
  For which logarithm bases does \( K(n) \lea \log_{2}^\ast n \) hold?
 \end{exercise}

\begin{exercise}
 Prove that Kamae's result in Exercise~\ref{xrc.Kamae}
does not hold for \( K(x \mvert y) \).
 \end{exercise}

\begin{exercise}
  Prove that for each \( \eps \) there is an \( m \) such that if \( \cH(P)>m \) then
\( |\sum_{x} P(x)K(x) / \cH(P) - 1| < \eps \).
 \end{exercise}

\begin{exercise}
  If a finite sequence \( x \) of length \( n \) has large complexity then its bits
are certainly not predictable.
Let \( h(k,n) = -(k/n)log(k/n) - (1-(k/n))log(1 - (k/n)) \).
A quantitative relation of this sort is the following.
If, for some \( k > n/2 \), a program of length \( m \) can predict \( x_{i} \) from
\( x_{1},\ldots,x_{i-1} \) for at least \( n-k \) values of \( i \) then
\( C(x) < m + n h(k,n) + o(n) \).
 \end{exercise}

\begin{exercise}
This is a more open-ended problem, having to do with the converse of
the previous exercise.
Show (if true) that for each \( c \) there is a \( d \) such that predictability
of \( x \) at significantly more than \( n/2 + c \) places is equivalent to the
possibility to enclose \( x \) into a set of complexity \( o(n) \) and size
\( n - dn \).
 \end{exercise}


\appendix

\chapter{Background from mathematics}

\section{Topology}

In this section, we summarize the notions and results of topology that
are needed in the main text.

\subsection{Topological spaces}\label{subsec:top}

A topological space is a set of points with
some kind of---not quantitatively expressed---closeness relation among them.

 \begin{definition}
A \df{topology} on a set \( X \) is defined by a class \( \tau \)
of its subsets called \df{open sets}.
It is required that the empty set and \( X \) are open, and that arbitrary
union and finite intersection of open sets is open.
The pair \( (X, \tau) \) is called a \df{topological space}.
A set is called \df{closed} if its complement is open.
 \end{definition}

Having a set of open and closed sets allows us to speak about closure
operations.

 \begin{definition}
A set \( B \) is called the \df{neighborhood} of a set \( A \) if \( B \) contains an
open set that contains \( A \).
We denote by 
\( \Cl{A}, \Int{A} \)
the closure (the intersection of all closed sets containing \( A \))
and the interior of \( A \) (the union of all open sets in \( A \)) respectively.
Let
 \[
   \partial A = \Cl{A} \xcpt \Int{A}
 \]
denote the boundary of set \( A \).
\end{definition}

An alternative way of defining a topological space is via a basis.

\begin{definition}
A \df{basis} of a topological space
is a subset \( \bg \) of \( \tau \) such that every open set is the
union of some elements of \( \bg \).
A \df{neighborhood} of a point is a basis element containing it.
A \df{basis of neighborhoods of a point} \( x \) is a set \( N \) of neighborhoods
of \( x \) with the property that each neighborhood of \( x \) contains an element
of \( N \).
A \df{subbasis} is a subset \( \sg \) of \( \tau \) such that every open set is the
union of finite intersections from \( \sg \).
 \end{definition}

\begin{examples}\label{example:topol}\
 \begin{enumerate}

  \item\label{i:x.topol.discr} 
Let \( X \) be a set, and let \( \bg \) be the set of all points of \( X \).
The topology with basis \( \bg \) is the \df{discrete topology} of the
set \( X \).
In this topology, every subset of \( X \) is open (and closed).

  \item\label{i:x.topol.real}
Let \( X \) be the real line \( \dR \), and let \( \bg_{\dR} \) be the set of
all open intervals \( \opint{a}{b} \).
The topology \( \tau_{\dR} \) obtained from this basis is the usual topology of
the real line.
When we refer to \( \dR \) as a topological space without qualification, 
this is the topology we will always have in mind.

  \item Let \( X = \ol\dR = \dR \cup \{-\infty,\infty\} \), 
and let \( \bg_{\ol\dR} \) consist of all open intervals \( \opint{a}{b} \)
and in addition of all intervals of the forms \( \lint{-\infty}{a} \) and
\( \rint{a}{\infty} \).
It is clear how the space \( \ol\dR_{+} \) is defined.

  \item\label{i:x.topol.real-upper-converg}
Let \( X \) be the real line \( \dR \).
Let \( \bg_{\dR}^{>} \) 
be the set of all open intervals \( \opint{-\infty}{b} \).
The topology with basis \( \bg_{\dR}^{>} \) is also a topology of the real
line, different from the usual one.
Similarly, let \( \bg_{\dR}^{<} \) 
be the set of all open intervals \( \opint{b}{\infty} \).

  \item\label{i:x.topol.Cantor}
Let \( \Sg \) be a finite or countable alphabet.
On the space \( \Sg^{\dN} \) of infinite sequences with elements in \( \Sg \), let 
\( \tg_{C} = \setof{A\Sg^{\dN} : A \sbsq \Sg^{*}} \) be called the topology of
the \df{Cantor space} (over \( \Sg \)).
Note that if a set \( E \) has the form \( E=A\Sg^{\dN} \) where \( A \) is finite 
then \( E \) is both open and closed.

 \end{enumerate}
\end{examples}

Starting from open sets, we can define some other kinds of set that are
still relatively simple:

 \begin{definition}
A set is called a \( G_{\dg} \) set if it is a countable intersection of open
sets, and it is an \( F_{\sg} \) set if it is a countable union of closed sets.
 \end{definition}

Different topologies over the same space have a natural partial order
relation among them:

 \begin{definition}
A topology \( \tau' \) on \( X \) is called \df{larger}, or \df{finer} than \( \tau \)
if \( \tau' \spsq \tau \).
For two topologies \( \tau_{1},\tau_{2} \) over the same set \( X \), we define
the topology \( \tau_{1}\lor \tau_{2} = \tau_{1} \cap \tau_{2} \), and
\( \tau_{1} \land \tau_{2} \) as the smallest topology containing 
\( \tau_{1} \cup \tau_{2} \).
In the example topologies of the real numbers above, we have
\( \tau_{\dR} = \tau_{\dR}^{<} \land \tau_{\dR}^{>} \).
 \end{definition}

Most topologies used in practice have some separion property.

 \begin{definition}
A topology is said to have the \( T_{0} \) \df{property} if
every point is determined by the class of open sets containing it.
This is the weakest one of a number of other possible separation
 \end{definition}

Both our above example topologies of the real line have this property.
All topologies considered in this survey will have the \( T_{0} \) property.
A stronger such property is the following:

 \begin{definition}
A space is called a \df{Hausdorff} space, or \( T_{2} \) space,
if for every pair of different
points \( x,y \) there is a pair of disjoint open sets
\( A,B \) with \( x\in A \), \( y\in B \).
 \end{definition}

The real line with topology \( \tau_{\dR}^{>} \) in
Example~\ref{example:topol}.\ref{i:x.topol.real-upper-converg} above is
not a Hausdorff space.
A space is Hausdorff if and only if every open set is the union of closures of
basis elements.

\subsection{Continuity}

We introduced topology in order to be able to speak about continuity:

 \begin{definition}
Given two topological spaces \( (X_{i}, \tau_{i}) \) (\( i=1,2 \)),
a function \( f :X_{1} \to X_{2} \) is
called \df{continuous} if for every open set \( G \sbsq X_{2} \) its inverse
image \( f^{-1}(G) \) is also open.
If the topologies \( \tau_{1},\tau_{2} \) are not clear from the context then
we will call the function \( (\tau_{1}, \tau_{2}) \)-continuous.
Clearly, the property remains the same if we require it only for all
elements \( G \) of a subbasis of \( X_{2} \).

If there are two continuous functions between \( X \) and \( Y \) that
are inverses of each other then the two spaces are called
\df{homeomorphic}.

We say that function \( f \) is continuous 
\df{at point} \( x \) if for every neighborhood
\( V \) of \( f(x) \) there is a neighborhood \( U \) of \( x \) with \( f(U) \sbsq V \).
 \end{definition}

Clearly, function \( f \) is continuous if and only if it is continuous in each point.

There is a natural sense in which every subset of a topological space is
also a topological space:

 \begin{definition}
A \df{subspace} of a topological space \( (X, \tau) \)
is defined by a subset \( Y \sbsq X \), and the topology
\( \tau_{Y} = \setof{G \cap Y : G \in \tau} \), called the \df{induced}
topology on \( Y \).
This is the smallest topology on \( Y \)
making the identity mapping \( x \mapsto x \) continuous.
A partial function \( f :X \to Z \) with \( \dom(f) = Y \)
is continuous iff \( f : Y \to Z \) is continuous.
 \end{definition}

Given some topological spaces, we can also form larger ones using for
example the product operation:

 \begin{definition}
For two topological spaces \( (X_{i}, \tau_{i}) \) (\( i=1,2 \)), 
we define the \df{product topology}
on their product \( X \times Y \): this is the
topology defined by the subbasis
consisting of all sets \( G_{1} \times X_{2} \) and all sets
\( X_{1} \times G_{2} \) with \( G_{i} \in \tau_{i} \).
The product topology is the smallest topology making the projection
functions \( (x,y) \mapsto x \), \( (x,y) \mapsto y \) continuous.
Given topological spaces \( X,Y,Z \) we
call a two-argument function \( f: X \times Y \mapsto Z \) continuous if
it is continuous as a function from \( X \times Y \) to \( Z \).
The product topology is defined similarly for
the product \( \prod_{i \in I} X_{i} \) of an arbitrary number of
spaces, indexed by some index set \( I \).
We say that a function is
\( (\tau_{1},\dots,\tau_{n},\mu) \)-continuous if it is
\( (\tau_{1} \times\dots\times\tau_{n},\mu) \)-continuous.
 \end{definition}

 \begin{examples}\label{example:prod}\
  \begin{enumerate}
  
   \item\label{i:x.prod.real}
The space \( \dR \times \dR \) with the product
topology has the usual topology of the Euclidean plane.

   \item\label{i:x.top-inf-seq}
Let \( X \) be a set with the discrete topology,
and \( X^{\dN} \) the set of infinite sequences with elements from \( X \),
with the product topology.
A basis of this topology is provided by all sets of the form \( uX^{\dN} \)
where \( u \in X^{*} \).
The elements of this basis are closed as well as open.
When \( X = \{0,1\} \) then this topology
is the usual topology of infinite binary sequences.

  \end{enumerate}
 \end{examples}

In some special cases, one of the topologies in the definition of
continuity is fixed and known in advance:

 \begin{definition}
A real function \( f : X_{1} \to \dR \) is called continuous if it is
\( (\tau_{1}, \tau_{\dR}) \)-continuous.
 \end{definition}

\subsection{Semicontinuity}

For real functions, a restricted notion of continuity is often useful.
 
 \begin{definition}\label{def:lower-semicont}
A function \( f:X\to\ol\dR \) is \df{lower semicontinuous} if the set
\( \setof{\tup{x,r}\in X\times\ol\dR: f(x) > r} \) is open.
The definition of upper semicontinuity is similar.
Lower semicontinuity on subspaces is defined similarly to continuity on
subspaces.
 \end{definition}

Clearly, a real function 
\( f \) is continuous if and only if it is both lower and upper semicontinuous.
The requirement of lower semicontinuity of \( f \) is equivalent to saying that
for each \( r \in \dR \), the set \( \setof{x: f(x) > r} \) is open.
This can be seen to be equivalent to the following characterization.

 \begin{proposition}\label{propo:semicont-topol}
A real function \( f \) over \( \bX=(X,\tau) \) is lower semicontinuous if and only
if it is  \( (\tau, \tau_{\dR}^{<}) \)-continuous.
 \end{proposition}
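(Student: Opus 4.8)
The plan is to route both directions of the equivalence through the intermediate condition ``$\setof{x : f(x) > b}$ is open for every $b\in\dR$'', which the paragraph preceding the proposition already half-identifies with lower semicontinuity.

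First I would record that $\bg_{\dR}^{<}=\setof{\opint{b}{\infty}:b\in\dR}$ is a basis (hence also a subbasis) of $\tau_{\dR}^{<}$, so by the remark following the definition of continuity, $f$ is $(\tau,\tau_{\dR}^{<})$-continuous exactly when $f^{-1}(\opint{b}{\infty})$ is open in $\tau$ for each $b$. Since $f^{-1}(\opint{b}{\infty})=\setof{x:f(x)>b}$, the right-hand side of the proposition \emph{is} the statement that all these sets are open. Thus the proposition reduces to showing that $f$ is lower semicontinuous in the sense of Definition~\ref{def:lower-semicont} if and only if $\setof{x:f(x)>b}$ is open for every $b\in\dR$.

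For the forward direction I would assume $U=\setof{\tup{x,r}\in X\times\ol\dR:f(x)>r}$ is open, fix $b$, and take any $x_{0}$ with $f(x_{0})>b$. Then $\tup{x_{0},b}\in U$, so there is a basic product-open box $G\times I\sbsq U$ with $x_{0}\in G\in\tau$ and $b\in I$; every $x\in G$ then satisfies $\tup{x,b}\in U$, i.e. $f(x)>b$, so $G\sbsq\setof{x:f(x)>b}$ and the latter is a neighborhood of $x_{0}$. Conversely, assuming each $\setof{x:f(x)>b}$ is open, I would take $\tup{x_{0},r_{0}}\in U$, choose a $b$ with $r_{0}<b<f(x_{0})$ (strict interpolation is immediate for a genuine real-valued $f$; if one also wants to allow $\ol\dR$-valued $f$, just take any $b>r_{0}$ when $f(x_{0})=\infty$), and note that $G\times I$ with $G=\setof{x:f(x)>b}$ and $I=\setof{r\in\ol\dR:r<b}$ is an open box contained in $U$ and containing $\tup{x_{0},r_{0}}$; hence $U$ is open.

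The argument is essentially routine and I do not anticipate a real obstacle; the only points worth stating carefully are the legitimacy of the subbasis reduction for continuity (available directly from the definition) and, in the reverse direction, the harmless interpolation step (which is why the definition was phrased over $X\times\ol\dR$ rather than $X\times\dR$). I would present the conclusion as an explicit chain: $f$ lower semicontinuous $\iff$ ($\forall b\in\dR$, $\setof{x:f(x)>b}$ open) $\iff$ ($\forall b\in\dR$, $f^{-1}(\opint{b}{\infty})$ open) $\iff$ $f$ is $(\tau,\tau_{\dR}^{<})$-continuous.
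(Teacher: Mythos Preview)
Your proposal is correct and follows exactly the route the paper indicates: the text immediately preceding the proposition asserts (without details) that lower semicontinuity is equivalent to openness of each set \( \setof{x:f(x)>r} \), and that this in turn is equivalent to \( (\tau,\tau_{\dR}^{<}) \)-continuity. You have simply supplied the routine verifications the paper omits, including the subbasis reduction and the product-topology argument for the forward direction.
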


 \begin{example}\label{example:open-set-indic}
The indicator function \( 1_{G}(x) \) of an arbitrary open set \( G \) is lower semicontinuous.
 \end{example}

The following is easy to see.

 \begin{proposition}\label{propo:semicont-sup}
The supremum of any set of lower semicontinuous functions
is lower semicontinuous.
 \end{proposition}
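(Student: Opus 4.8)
The statement to prove is Proposition~\ref{propo:semicont-sup}: the supremum of any family of lower semicontinuous functions is lower semicontinuous.

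The plan is to work directly from the definition of lower semicontinuity given just above (Definition~\ref{def:lower-semicont}): a function $f:X\to\ol\dR$ is lower semicontinuous precisely when the set $\setof{\tup{x,r}\in X\times\ol\dR : f(x) > r}$ is open, or equivalently—as noted in the text after the definition and in Proposition~\ref{propo:semicont-topol}—when for each real $r$ the superlevel set $\setof{x : f(x) > r}$ is open. I would use the superlevel-set characterization, since it makes the computation transparent. Let $\setof{f_i}_{i\in I}$ be a family of lower semicontinuous functions on $X$, and let $g = \sup_{i} f_i$.

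The key step is the set identity
\[
  \setof{x : g(x) > r} = \bigcup_{i\in I} \setof{x : f_i(x) > r}.
\]
To see the inclusion $\subseteq$: if $g(x) = \sup_i f_i(x) > r$, then by the definition of supremum there is some $i$ with $f_i(x) > r$, so $x$ lies in the right-hand side. The inclusion $\supseteq$ is immediate, since $f_i(x) > r$ gives $g(x) \ge f_i(x) > r$. Now each set $\setof{x : f_i(x) > r}$ is open by lower semicontinuity of $f_i$, and an arbitrary union of open sets is open by the axioms of a topology (Definition of topology in Subsection~\ref{subsec:top}). Hence $\setof{x : g(x) > r}$ is open for every $r$, which is exactly lower semicontinuity of $g$.

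There is essentially no obstacle here; the only point requiring the slightest care is the handling of the extended reals $\ol\dR$ as targets and the corresponding suprema (a supremum of a family in $\ol\dR$ always exists, possibly $+\infty$), but this does not affect the argument since the defining inequality $g(x) > r$ with $r$ ranging over $\dR$ still captures openness, and the supremum characterization of the union goes through verbatim. If one prefers to argue from the original product-space formulation of Definition~\ref{def:lower-semicont}, the same union identity holds for $\setof{\tup{x,r} : g(x) > r}$, so that version is equally routine.
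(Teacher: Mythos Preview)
Your argument is correct. The paper does not actually give a proof of this proposition; it only prefaces it with ``The following is easy to see.'' Your superlevel-set computation is the standard way to fill in this gap and is entirely in line with what the surrounding text sets up (the remark after Definition~\ref{def:lower-semicont} and Proposition~\ref{propo:semicont-topol}).
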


The following representation is then also easy to see.

 \begin{proposition}\label{propo:semicont-rep}
Let \( X \) be a topological space with basis \( \bg \).
The function \( f:X\to\ol\dR_{+} \) is lower semicontinuous if and only if
there is a function \( g:\bg\to\ol\dR_{+} \) with
\( f(x)=\sup_{x\in\bg}g(\bg) \).
 \end{proposition}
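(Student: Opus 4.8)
The plan is to prove the two implications separately, using the standard criterion that $f$ is lower semicontinuous if and only if for every $r\in\dR$ the superlevel set $\setof{x : f(x)>r}$ is open, together with Example~\ref{example:open-set-indic} and Proposition~\ref{propo:semicont-sup}. Throughout I read $\sup_{x\in\bg}g(\bg)$ as the supremum of $g(B)$ over those $B\in\bg$ with $x\in B$; note that this index set is always nonempty, since $X$ is open and hence a union of basis elements, so every point lies in some basis element.

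For the ``if'' direction, suppose $g:\bg\to\ol\dR_+$ is given with $f(x)=\sup\setof{g(B):B\in\bg,\ x\in B}$. For each $B\in\bg$ define $h_{B}(x)=g(B)$ for $x\in B$ and $h_{B}(x)=0$ otherwise. Each $h_{B}$ is lower semicontinuous: for $r\ge 0$ the set $\setof{x:h_{B}(x)>r}$ equals $B$ if $g(B)>r$ and is empty otherwise, hence open in either case (this is just a scaled version of Example~\ref{example:open-set-indic}, valid also when $g(B)=\infty$). Since $g\ge 0$ and every $x$ lies in some basis element, $\sup_{B\in\bg}h_{B}(x)=\sup\setof{g(B):x\in B}=f(x)$. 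Thus $f$ is a supremum of lower semicontinuous functions, so it is lower semicontinuous by Proposition~\ref{propo:semicont-sup}.

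For the ``only if'' direction, suppose $f$ is lower semicontinuous and define $g(B)=\inf_{x\in B}f(x)$ for $B\in\bg$ (with $\inf\emptyset=\infty$, which is harmless as the empty set never contributes to the relevant suprema). If $x\in B$ then $g(B)\le f(x)$, so $\sup\setof{g(B):x\in B}\le f(x)$. For the reverse inequality, let $r\in\dR$ be arbitrary with $r<f(x)$. The set $U=\setof{y:f(y)>r}$ is open and contains $x$, so by the basis property there is $B\in\bg$ with $x\in B\sbsq U$; then every $y\in B$ satisfies $f(y)>r$, hence $g(B)=\inf_{y\in B}f(y)\ge r$, and therefore $\sup\setof{g(B'):x\in B'}\ge r$. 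Letting $r$ range over all reals below $f(x)$ (this also covers the case $f(x)=\infty$) gives $\sup\setof{g(B'):x\in B'}\ge f(x)$, and combining the two bounds yields the claimed identity.

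There is no real obstacle here; the only points requiring a little care are the bookkeeping with the value $\infty$ (in the scaled indicators $h_{B}$ and in the infimum over a possibly empty basis element) and the observation that every point of $X$ lies in some basis element, which guarantees the suprema in the statement are taken over nonempty families.
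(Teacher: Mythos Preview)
Your proof is correct. The paper does not actually give a proof of this proposition: it introduces it with ``The following representation is then also easy to see'' and leaves it at that. Your argument supplies exactly the details one would expect --- scaled indicators $h_B=g(B)\cdot 1_B$ combined with Proposition~\ref{propo:semicont-sup} for the ``if'' direction, and $g(B)=\inf_{y\in B}f(y)$ together with the superlevel-set criterion for the ``only if'' direction --- and your handling of the boundary cases (nonemptiness of the index set, the value $\infty$) is careful and accurate.
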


 \begin{corollary}\label{coroll:semicont-extend}
Let \( X \) be a topological space with basis \( \bg \) 
and \( f:X\to\ol\dR_{+} \) a lower semicontinuous function defined on a
subset \( D \) of \( X \).
Then \( f \) can be extended in a lower semicontinuous way to the whole space \( X \).
 \end{corollary}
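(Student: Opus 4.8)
The plan is to reduce everything to the representation theorem for lower semicontinuous functions, Proposition~\ref{propo:semicont-rep}, by passing through the subspace $D$. First I would observe that if $\bg$ is a basis of $X$, then $\bg_{D}=\setof{B\cap D: B\in\bg}$ is a basis of the subspace topology on $D$ (immediate from the definition $\tau_{D}=\setof{G\cap D: G\in\tau}$ and the fact that every open $G$ is a union of elements of $\bg$). Applying Proposition~\ref{propo:semicont-rep} to the lower semicontinuous function $f:D\to\ol\dR_{+}$ on the space $D$ with this basis, I obtain a function $g_{D}:\bg_{D}\to\ol\dR_{+}$ with $f(x)=\sup\setof{g_{D}(U): U\in\bg_{D},\ x\in U}$ for all $x\in D$.

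Next I would transport $g_{D}$ up to $\bg$: define $g:\bg\to\ol\dR_{+}$ by $g(B)=g_{D}(B\cap D)$, and then define the candidate extension $\hat f:X\to\ol\dR_{+}$ by $\hat f(x)=\sup\setof{g(B): B\in\bg,\ x\in B}$. (The supremum is over a nonempty set for every $x$, since $X$ is open and hence covered by $\bg$.) By the ``if'' direction of Proposition~\ref{propo:semicont-rep}, $\hat f$ is lower semicontinuous on all of $X$. Finally I would check that $\hat f$ agrees with $f$ on $D$: for $x\in D$, the map $B\mapsto B\cap D$ sends $\setof{B\in\bg: x\in B}$ onto $\setof{U\in\bg_{D}: x\in U}$ (surjectivity because each such $U$ is $B\cap D$ for some $B\in\bg$ with necessarily $x\in B$), so $\hat f(x)=\sup\setof{g_{D}(B\cap D): B\in\bg,\ x\in B}=\sup\setof{g_{D}(U): U\in\bg_{D},\ x\in U}=f(x)$.

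There is essentially no hard part here; the only things requiring a moment's care are the subspace-basis fact and the bookkeeping that the two index sets of basis elements through $x$ match up (so that no spurious values are introduced and none are lost when $x\in D$). One could alternatively phrase the whole argument directly: extend $f$ by $\hat f(x)=\sup\setof{\inf_{z\in B\cap D}f(z): B\in\bg,\ x\in B}$, but routing through Proposition~\ref{propo:semicont-rep} keeps the verification shortest and reuses machinery already established.
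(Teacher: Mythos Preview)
Your proposal is correct and follows essentially the same route as the paper: obtain a representing function $g$ on basis elements via Proposition~\ref{propo:semicont-rep}, then use the same supremum formula to define $f$ on all of $X$. The paper's proof is terser and applies the proposition directly to get $g:\bg\to\ol\dR_{+}$ without explicitly passing through the subspace basis $\bg_{D}$; your version is more careful on exactly this point, making explicit the transport $g(B)=g_{D}(B\cap D)$ and verifying that the two index sets agree for $x\in D$.
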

 \begin{proof}
Indeed by the above proposition there is a function \( g:\bg\to\ol\dR_{+} \) with
\( f(x)=\sup_{x\in\bg}g(\bg) \) for all \( x\in D \).
Let us define \( f \) by this same formula for all \( x\in X \).   
 \end{proof}

In the important special case of Cantor spaces, 
the basis is given by the set of finite sequences.
In this case we can also require the function \( g(w) \) to be monotonic in the
words \( w \):

 \begin{proposition}\label{propo:semicont-rep.Cantor}
Let \( X=\Sg^{\dN} \) be a Cantor space as defined in
Example~\ref{example:topol}.\ref{i:x.topol.Cantor}.
Then \( f:X\to\ol\dR_{+} \) is lower semicontinuous if and only if there is a function
\( g:\Sg^{*}\to\ol\dR_{+} \) monotonic with respect to the relation \( u\prefix v \),
with \( f(\xi) = \sup_{u\prefix\xi} g(u) \).
 \end{proposition}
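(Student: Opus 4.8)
The plan is to prove the two implications separately. The ``if'' direction is a direct neighborhood argument; the ``only if'' direction specializes Proposition~\ref{propo:semicont-rep} to the canonical basis $\setof{u\Sg^{\dN}: u\in\Sg^{*}}$ of the Cantor space and then replaces the function obtained there by a monotonic one without changing the suprema taken along branches.

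First I would handle the ``if'' direction. Suppose $g:\Sg^{*}\to\ol\dR_{+}$ is monotonic with respect to $\prefix$ and $f(\xi)=\sup_{u\prefix\xi}g(u)$. To see that $f$ is lower semicontinuous it suffices to show that $\setof{\xi: f(\xi)>r}$ is open for every $r\in\dR$. If $f(\xi)>r$, pick a finite prefix $u\prefix\xi$ with $g(u)>r$, which exists by definition of the supremum. Then $u\Sg^{\dN}$ is a basic neighborhood of $\xi$, and for every $\eta\in u\Sg^{\dN}$ we have $u\prefix\eta$, so $f(\eta)\ge g(u)>r$. Hence $\setof{\xi:f(\xi)>r}$ contains a neighborhood of each of its points and is open.

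Next, the ``only if'' direction. Assume $f$ is lower semicontinuous. Since $\setof{u\Sg^{\dN}: u\in\Sg^{*}}$ is a basis of the topology and $\xi\in u\Sg^{\dN}$ holds exactly when $u\prefix\xi$, Proposition~\ref{propo:semicont-rep} provides a function $h:\Sg^{*}\to\ol\dR_{+}$ (writing $h(u)$ for the value at $u\Sg^{\dN}$) with $f(\xi)=\sup\setof{h(u): u\prefix\xi}$. This $h$ need not be monotonic, so I would set
\[
  g(u)=\max\setof{h(v): v\prefix u},
\]
a maximum over the $\len{u}+1$ prefixes of $u$, hence well defined, with $g(u)\ge h(u)$. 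If $u\prefix w$ then every prefix of $u$ is a prefix of $w$, so $g(u)\le g(w)$; thus $g$ is monotonic. Finally, since $\prefix$ is transitive and $\Lg\prefix\xi$, the prefixes of the prefixes of $\xi$ are precisely the prefixes of $\xi$, so
\[
  \sup_{u\prefix\xi}g(u)=\sup_{u\prefix\xi}\;\sup_{v\prefix u}h(v)=\sup_{v\prefix\xi}h(v)=f(\xi),
\]
which is the required identity.

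Neither step is a real obstacle. The only point that needs a moment of care is that the monotonization in the ``only if'' direction must not change the supremum along any branch $\xi$; this is exactly the last displayed identity, and it holds because passing to prefixes is transitive.
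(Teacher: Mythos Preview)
Your proof is correct. The paper itself does not supply a proof of this proposition; it is stated without argument, evidently regarded as a routine specialization of the preceding Proposition~\ref{propo:semicont-rep}. Your approach---invoking Proposition~\ref{propo:semicont-rep} for the canonical basis \( \setof{u\Sg^{\dN}:u\in\Sg^{*}} \) and then monotonizing via \( g(u)=\max_{v\prefix u}h(v) \)---is exactly the natural route, and the verification that the branch-wise supremum is unchanged is clean.
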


\subsection{Compactness}

There is an important property of topological spaces that, when satisfied,
has many useful implications.

 \begin{definition}
Let \( (X, \tau) \) be a topological space, and \( B \) a subset of \( X \).
An \df{open cover} of \( B \) is a family of open sets whose union contains
\( B \).
A subset \( K \) of \( X \) is said to be \df{compact} if every open cover of \( K \)
has a finite subcover.
 \end{definition}

Compact sets have many important properties: for example, a continuous
function over a compact set is bounded.

 \begin{example}\label{example:compact}\
 \begin{enumerate}

  \item\label{i:compact.compactify}
 Every finite discrete space is compact.
An infinite discrete space \( \bX = (X, \tau) \) is not compact, 
but it can be turned into a compact space \( \ol\bX \) 
by adding a new element called \( \infty \): let
\( \ol X = X\cup\{\infty\} \), and 
\( \ol\tau = \tau\cup\setof{\ol X \xcpt A: A \sbs X\txt{ closed }} \).
More generally, this simple operation can be performed with every space
that is \df{locally compact}, that each of its points has a compact
neighborhood.

  \item In a finite-dimensional Euclidean space, every bounded closed set
is compact.

  \item It is known that if \( (\bX_{i})_{i\in I} \) is a family of compact
spaces then their direct product is also compact.

 \end{enumerate}
 \end{example}

There are some properties that are equivalent to compactness in simple
cases, but not always:

 \begin{definition}
A subset \( K \) of \( X \) is said to be \df{sequentially compact} 
if every sequence in \( K \) has a convergent subsequence with limit in \( K \).
The space is \df{locally compact} if every point has a compact
neighborhood.
 \end{definition}

\subsection{Metric spaces}\label{subsec:metric}

Metric spaces are topological spaces with more structure: in them, the
closeness concept is quantifiable.
In our examples for metric spaces, and later in our treatment of the space
of probability measures, we refer to~\cite{BillingsleyConverg68}.

 \begin{definition}
A \df{metric space} is given by a set \( X \) and a distance function 
\( d : X\times X \to \dR_{+} \) satisfying the
\df{triangle inequality} \( d(x, z) \le d(x, y) + d(y, z) \) and also
property that \( d(x,y) = 0 \) implies \( x = y \).
For \( r \in\dR_{+} \), the sets 
 \[
  B(x,r) = \setof{y : d(x, y) < r},\quad
  \ol B(x,r) = \setof{y : d(x, y) \le r}
 \]
are called the open and closed \df{balls} with radius \( r \) and center \( x \).

A metric space is \df{bounded} when \( d(x,y) \) has an upper bound on \( X \).
 \end{definition}

A metric space is also a topological space, with the basis that is the set
of all open balls.
Over this space, the distance function \( d(x,y) \) is obviously continuous.

Each metric space is a Hausdorff space; moreover, it has the
following stronger property.

 \begin{definition}
A topological space is said to have the \( T_{3} \) \df{property} if
for every pair of different points \( x,y \) there is a continuous function 
\( f : X \to \dR \)  with \( f(x) \ne f(y) \).
 \end{definition}

To see that metric spaces are \( T_{3} \), take \( f(z) =  d(x, z) \).

 \begin{definition}
A topological space is called \df{metrizable} if its topology can be
derived from some metric space.
 \end{definition}

It is known that a topological space is metrizable if and only if it has
the \( T_{3} \) property.

\begin{notation}
For an arbitrary set \( A \) and point \( x \) let 
\begin{align}\nonumber
    d(x, A) &= \inf_{y \in A} d(x,y),
\\\label{eq:Aeps}
  A^{\eps} &= \setof{x : d(x, A) < \eps}.
  \end{align}
\end{notation}

 \begin{examples}\label{example:metric}\
 \begin{enumerate}

 \item\label{example:metric-discrete}
A discrete topological space \( X \) can be turned into a metric space as follows:
\( d(x,y)=0 \) if \( x=y \) and 1 otherwise. 

  \item The real line with the distance \( d(x,y) = |x - y| \) is a metric
space.
The topology of this space is the usual topology \( \tau_{\dR} \) of the real
line.

  \item The space \( \dR \times \dR \) with the Euclidean distance gives the
same topology as the product topology of \( \dR \times \dR \).

  \item An arbitrary set \( X \) with the distance \( d(x,y)=1 \) for all pairs
\( x,y \) of different elements, is a metric space that induces the discrete
topology on \( X \).

  \item\label{i:x.metric-inf-seq} 
Let \( X \) be a bounded metric space, and let
\( Y = X^{\dN} \) be the set of infinite sequences 
\( x = \tup{x_{1}, x_{2}, \dotsc} \)
with distance function \( d^{\dN}(x,y) = \sum_{i} 2^{-i} d(x_{i},y_{i}) \).
The topology of this space is the same as the product topology defined 
in Example~\ref{example:prod}.\ref{i:x.top-inf-seq}.

 \item\label{i:x.metric.Cantor}
   Specializing the above example, if \( \Sg \) is the discrete space defined in
Example~\ref{example:metric-discrete} above then we obtain a metrization of the Cantor
space of Example~\ref{example:topol}\ref{i:x.topol.Cantor}.
For every finite sequence \( x\in\Sg^{*} \) and every infinite sequence 
\( \xi\postfix x \) the ball \( B(\xi,2^{-\len{x}}) \) is equal to a basis element that is
the open-closed cylinder set \( x\Sg^{\dN} \).

  \item\label{i:x.metric-nonsep} 
Let \( X \) be a metric space, and let
\( Y = X^{\dN} \) be the set of infinite bounded sequences 
\( x = \tup{x_{1}, x_{2}, \dotsc} \)
with distance function \( d(x, y) = \sup_{i} d(x_{i}, y_{i}) \).

  \item\label{i:x.C(X)}
Let \( X \) be a topological space, and let
\( C(X) \) be the set of bounded continuous functions over \( X \) with
distance function \( d'(f, g) = \sup_{x} d(f(x), g(x)) \).
A special case is \( C\clint{0}{1} \) where the interval \( \clint{0}{1} \) of real
numbers has the usual topology.

  \item\label{i:x.l2}
Let \( l_{2} \) be the set of infinite sequences \( x = (x_{1}, x_{2}, \dotsc) \)
of real numbers with the property that \( \sum_{i} x_{i}^{2} < \infty \).
The metric is given by 
the distance \( d(x, y) = (\sum_{i} |x_{i} - y_{i}|^{2})^{1/2} \).

 \end{enumerate}
 \end{examples}

In metric spaces, certain previously defined topological objects have richer
properties.

 \begin{examples} Each of the following facts holds in metric spaces and
is relatively easy to prove.
\begin{enumerate}
  \item Every closed set is a
\( G_{\dg} \) set (and every open set is an \( F_{\sg} \) set).
  \item A set is compact if and only if it is sequentially compact.
  \item A set is compact if and only if it is closed and
for every \( \eps \), there is a finite set of
\( \eps \)-balls (balls of radius \( \eps \)) covering it.
\end{enumerate}
 \end{examples}

In metric spaces, the notion of continuity can be strengthened.

 \begin{definition}
A function \( f:X\to Y \) between metric spaces \( X,Y \) is \df{uniformly
continuous} if for each \( \eps>0 \) there is a \( \dg>0 \) such that
\( d_{X}(a,b)<\dg \) implies \( d_{Y}(f(a),f(b))<\eps \).
 \end{definition}

It is known that over a compact metric space, every continuous function is
uniformly continuous.

 \begin{definition}[Lipschitz]
Given a function \( f: X \to Y \) between metric spaces and \( \bg > 0 \),
let \( \Lip_{\bg}(X,Y) \) denote the
set of functions (called the Lipschitz\( (\bg) \) functions, or simply
Lipschitz functions) satisfying
 \begin{equation}\label{eq:Lipschitz}
   d_{Y}(f(x), f(y)) \le \bg d_{X}(x, y).
 \end{equation}
Let \( \Lip(X) = \Lip(X,\dR) = \bigcup_{\bg} \Lip_{\bg} \) 
be the set of real Lipschitz functions over \( X \).
 \end{definition}

As an example, every differentiable real function \( f(x) \) with
\( |f'(x)|\le 1 \) everywhere is a Lipschitz\( (1) \) function.

All these functions are uniformly continuous.

We introduce a certain fixed, enumerated sequence of Lipschitz functions
that will be used frequently as ``building blocks'' of other functions.

 \begin{definition}[Hat functions]\label{def:hat-funcs}
Let
 \begin{equation*}
  g_{u,r,\eps}(x) = |1 - |d(x, u) - r|^{+}/\eps|^{+}.
 \end{equation*}
This is a continuous function that is \( 1 \) in the ball
\( B(u,r) \), it is 0 outside the ball \( B(u, r+\eps) \), and takes intermediate
values in between.
It is clearly a Lipschitz\( (1/\eps) \) function.

If a dense set \( D \) is fixed, let \( \cF_{0}=\cF_{0}(D) \) be the set of functions
of the form \( g_{u,r,1/n} \) where \( u \in D \), \( r \) is rational, \( n = 1, 2, \dots \).
Let \( \cF_{1}=\cF_{1}(D) \) be the maximum of a finite number of elements of
\( \cF_{0}(D) \).
Each element \( f \) of \( \cF_{1} \) is bounded between 0 and 1.
Let 
 \begin{equation}\label{eq:bd-Lip-seq}
  \cE =\cE(D) =  \{g_{1}, g_{2}, \dots \} \sps\cF_{1}
 \end{equation}
 be the smallest set of functions containing \( \cF_{0} \)
and the constant 1, and closed under \( \lor \), \( \land \) and rational linear
combinations.  
For each element of \( \cE \), from its definition
we can compute a bound \( \bg \) such that \( f\in\Lip_{\bg} \).
 \end{definition}

For the effective representation of points in a topological space the
following properties are important.

 \begin{definition}
A topological space has the \df{first countability property} if each point
has a countable basis of neighborhoods.
 \end{definition}

Every metric space has the first countability property since we can
restrict ourselves to balls with rational radius.

\begin{sloppypar}
 \begin{definition}
Given a topological space \( (X, \tau) \) and a sequence \( x = (x_{1}, x_{2},
\dotsc) \) of elements of \( X \), we say that \( x \) \df{converges} to a point \( y \)
if for every neighborhood \( G \) of \( y \) there is a \( k \) such that for all 
\( m > k \) we have \( x_{m} \in G \).
We will write \( y = \lim_{n \to \infty} x_{n} \).
 \end{definition}
  \end{sloppypar}

It is easy to show that if spaces \( (X_{i}, \tau_{i}) \) \( (i=1,2) \)
have the first countability property then a function \( f : X \to Y \) is
continuous if and only if for every convergent sequence \( (x_{n}) \) we have
\( f(\lim_{n} x_{n}) = \lim_{n} f(x_{n}) \).

 \begin{definition}
A topological 
space has the \df{second countability property} if the whole space
has a countable basis.
 \end{definition}

For example, the space \( \dR \) has the second countability property
for all three topologies \( \tau_{\dR} \), \( \tau_{\dR}^{<} \), \( \tau_{\dR}^{>} \).
Indeed, we also
get a basis if instead of taking all intervals, we only take
intervals with rational endpoints.
On the other hand, the metric space of 
Example~\ref{example:metric}.\ref{i:x.metric-nonsep} does not have
the second countability property.

 \begin{definition}
In a topological space \( (X, \tau) \), a set \( B \) of points is called
\df{dense} at a point \( x \) if it intersects every neighborhood of \( x \).
It is called \df{everywhere dense}, or \df{dense}, if it is dense at every
point.
A metric space is called \df{separable} if it has a countable everywhere
dense subset.
 \end{definition}

It is easy to see that a metric space is separable if and only if as a topological
space it has the second countability property.

  \begin{example}\label{example:Cclint{0}{1}}
In Example~\ref{example:metric}.\ref{i:x.C(X)}, for \( X=\clint{0}{1} \), we can
choose as our everywhere dense set the set of all polynomials with rational
coefficients, or alternatively,
the set of all piecewise linear functions whose graph has
finitely many nodes at rational points.

More generally, let \( X \) be a \emph{compact separable} metric space
with a dense set \( D \).
Then it can be shown that
in the metric space \( C(X) \), the set of functions \( \cE(D) \) introduced
in Definition~\ref{def:hat-funcs} is dense, and turns it into a complete
(not necessarily compact!) separable metric space.
  \end{example}

 \begin{definition}
In a metric space, let us call a sequence \( x_{1}, x_{2},\dots \) a \df{Cauchy} sequence if
for all \( i<j \) we have \( d(x_{i},x_{j}) < 2^{-i} \).
 \end{definition}

It is easy to see that if an everywhere dense set \( D \) is given then every
element of the space can be represented as the limit of a Cauchy sequence
of elements of \( D \).
But not every Cauchy sequence needs to have a limit.

 \begin{definition}
A metric space is called \df{complete} if every Cauchy sequence in it has a
limit.
 \end{definition}

For example, if \( X \) is the real line with the point 0 removed
then \( X \) is not complete, since
there are Cauchy sequences converging to 0, but 0 is not in \( X \).

It is well-known that every metric space can be embedded (as an everywhere
dense subspace) into a complete space.

 \begin{example}
Consider the set \( D\clint{0}{1} \) of functions over
\( \clint{0}{1} \) that are right continuous and have left limits everywhere.
The book~\cite{BillingsleyConverg68} introduces two different metrics for
them: the Skorohod metric \( d \) and the \( d_{0} \) metric.
In both metrics, two functions are close if a slight monotonic continuous
deformation of the coordinates makes them uniformly close.
But in the \( d_{0} \) metric, the slope of the deformation must be close to 1.
It is shown that the two metrics give rise to the same topology;
however, the space with metric \( d \) is not complete, and the 
space with metric \( d_{0} \) is.
 \end{example}

We will develop the theory of randomness over separable
complete metric spaces.
This is a wide class of spaces encompassing most spaces of practical
interest.
The theory would be simpler if we restricted it to compact or locally
compact spaces; however, some important spaces like \( C\clint{0}{1} \)
(the set of continuouos functions over the interval \( \clint{0}{1} \), with
the maximum difference as their distance) are not locally compact.

\section{Measures}\label{sec:measures}

For a survey of measure theory, see for example~\cite{PollardUsers01}.

\subsection{Set algebras}

Event in a probability space are members of a class of sets that is
required to be a so-called \( \sg \)-algebra (sigma-algebra).

\begin{definition}
A (Boolean set-) \df{algebra} is a set of subsets of some set \( X \)
closed under intersection and complement (and then, of course, under
union).
It is a \df{\( \sg \)-algebra} (sigma-algebra) if it is also closed 
under countable intersection 
(and then, of course, under countable union).
A \df{semialgebra} is a set \( \cL \)
of subsets of some set \( X \) closed under
intersection, with the property that the complement of every element 
of \( \cL \) is the disjoint union of a finite number of elements of \( \cL \).
\end{definition}

If \( \cL \) is a semialgebra then the set of finite unions of elements of
\( \cL \) is an algebra.

 \begin{examples}\label{example:algebras}\
  \begin{enumerate}

   \item\label{i:x.algebras.l-closed}
The set \( \cL_{1} \) of left-closed intervals of the line (including intervals
of the form \( \opint{-\infty}{a} \)) is a semialgebra.

   \item
The set \( \cL_{2} \) of all intervals of the line
(which can be open, closed, left-closed or
right-closed), is a semialgebra.

  \item\label{i:x.algebras.inf-seq}
In  the set \( \{0,1\}^{\dN} \) of infinite 0-1-sequences, the
set \( \cL_{3} \) of all subsets of the form \( u\{0,1\}^{\dN} \) with
\( u\in\{0,1\}^{*} \), is a semialgebra.

   \item
The \( \sg \)-algebra \( \cB \) generated by \( \cL_{1} \), is the same as the one
generated by \( \cL_{2} \), and is also the same as the one generated by the
set of all open sets: it is called the family of \df{Borel sets} of the
line.
The Borel sets of the extended real line \( \ol\dR \) are defined similarly.

  \item More generally, the class of Borel sets in an arbitrary topological space
    is the smallest \( \sg \)-algebra containing all open sets.
  
   \item
Given \( \sg \)-algebras \( \cA,\cB \) in sets \( X,Y \), the product \( \sg \)-algebra
\( \cA\times\cB \) in the space \( X \times Y \) is the one generated by all
elements \( A \times Y \) and \( X \times B \) for \( A\in\cA \) and \( B\in\cB \).

  \end{enumerate}
 \end{examples}

\subsection{Measures}\label{subsec:measures}

Probability is an example of the more general notion of a measure.

 \begin{definition}
A \df{measurable space} is a pair \( (X, \cS) \) where \( \cS \) is a \( \sg \)-algebra
of sets of \( X \).
A \df{measure} on a measurable space \( (X, \cS) \) is a function 
\( \mu : B \to \ol\dR_{+} \) that is \df{\( \sg \)-additive}:
this means that for every countable family \( A_{1}, A_{2},\dotsc \) of
disjoint elements of \( \cS \) we have 
\( \mu(\bigcup_{i} A_{i}) = \sum_{i} \mu(A_{i}) \).
A measure \( \mu \) is \df{\( \sg \)-finite} if the whole space is the union of a
countable set of subsets whose measure is finite.
It is \df{finite} if \( \mu(X) < \infty \). 
It is a \df{probability measure} if \( \mu(X) = 1 \).
 \end{definition}

 \begin{example}[Delta function]\label{example:delta}
For any point \( x \), the measure \( \dg_{x} \) is defined as follows:
 \begin{align*}
   \dg_{x}(A) = \begin{cases}
                  1 &\text{if } x\in A,
     \\         0 &\text{otherwise}.
                \end{cases}
 \end{align*}
 \end{example}

\begin{definition}\label{def:atom}
If \( \mu \) is a measure, a point \( x \) is called an \df{atom} if \( \mu(x)>0 \).
\end{definition}

Generally, we will consider measures over either a countable set
(a discrete measure for which the union of atoms has total measure)
or an uncountable one, with no atoms.
But mixed cases are possible.

It is important to understand how a measure can be defined in practice.
Algebras are generally simpler to grasp constructively
than \( \sg \)-algebras; semialgebras are yet simpler.
Suppose that \( \mu \) is defined over a semialgebra \( \cL \) and is additive.
Then it can always be uniquely extended to an additive function over
the algebra generated by \( \cL \).
The following is an important theorem of measure theory.

 \begin{proposition}[Caratheodory's extension theorem]\label{propo:Caratheo-extension}
Suppose that a nonnegative set function defined over a semialgebra \( \cL \)
is \( \sg \)-additive.
Then it can be extended uniquely to the \( \sg \)-algebra generated by \( \cL \).
 \end{proposition}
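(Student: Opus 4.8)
The plan is the classical two-stage construction: first extend $\mu$ from the semialgebra $\cL$ to the algebra $\cA$ it generates, then use the Caratheodory outer-measure construction to pass from $\cA$ to $\sg(\cL)$. For the first stage, recall that finite unions of elements of $\cL$ form an algebra $\cA$, and that every element of $\cA$ is in fact a finite \emph{disjoint} union of elements of $\cL$; so I would set $\mu(\bigcup_{i} A_{i}) = \sum_{i} \mu(A_{i})$ for such a decomposition. A refinement argument (comparing two decompositions via their pairwise intersections) shows this is well-defined and finitely additive on $\cA$, and the $\sg$-additivity hypothesis is inherited: whenever a countable disjoint union of members of $\cA$ happens to lie in $\cA$, its measure equals the sum. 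This part is routine bookkeeping.

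The second stage is the outer measure. Define, for arbitrary $E \subseteq X$,
\[
  \mu^{*}(E) = \inf \sum_{i} \mu(A_{i}),
\]
the infimum over all countable covers of $E$ by sets $A_{i} \in \cA$ (and $\infty$ if no such cover exists). One checks the standard properties --- $\mu^{*}(\emptyset) = 0$, monotonicity, and countable subadditivity (the last by the usual $\eps 2^{-i}$ trick) --- so $\mu^{*}$ is an outer measure. Next one shows that every $A \in \cA$ is Caratheodory-measurable, i.e.\ splits every test set additively, $\mu^{*}(E) = \mu^{*}(E \cap A) + \mu^{*}(E \xcpt A)$: here ``$\le$'' is free from subadditivity, and ``$\ge$'' comes from approximating $E$ by a near-optimal cover and applying additivity on $\cA$. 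By the standard lemma that the $\mu^{*}$-measurable sets form a $\sg$-algebra $\cM$ on which $\mu^{*}$ is a measure, and since $\cA \sbsq \cM$, we get $\sg(\cL) = \sg(\cA) \sbsq \cM$, so $\mu^{*}$ restricted to $\sg(\cL)$ is the desired extension.

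The step I expect to be the real obstacle is verifying that $\mu^{*}$ actually \emph{agrees} with $\mu$ on $\cA$, so that $\mu^{*}$ is genuinely an extension rather than merely some measure dominating $\mu$. The inequality $\mu^{*}(A) \le \mu(A)$ is trivial; for the reverse, given a countable cover $\{A_{i}\}$ of $A \in \cA$ by algebra elements, one disjointifies the cover, intersects with $A$, and then invokes the $\sg$-additivity of $\mu$ on $\cA$ to conclude $\mu(A) \le \sum_{i} \mu(A_{i})$. This is exactly the place where $\sg$-additivity of the original set function (and not mere finite additivity) is essential.

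Finally, uniqueness. If $\nu$ is another measure on $\sg(\cL)$ agreeing with $\mu$ on $\cL$, I would apply the Dynkin $\pi$--$\lambda$ theorem: $\cL$ is a $\pi$-system (closed under intersection), and $\setof{A \in \sg(\cL) : \mu(A) = \nu(A)}$ is a $\lambda$-system, so the two measures coincide on $\sg(\cL)$. As usual, this last argument requires the measures to be finite, or $\sg$-finite with an exhausting sequence drawn from $\cL$ --- which is the setting in which the uniqueness assertion is intended --- and one then runs the $\pi$--$\lambda$ argument on each piece of the exhaustion and adds.
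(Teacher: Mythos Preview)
The paper does not prove this proposition: it is stated in the appendix as a standard background fact from measure theory, with no proof supplied. Your outline is the classical Caratheodory construction and is correct, including your identification of the one place where $\sg$-additivity (rather than mere finite additivity) is genuinely used, and your caveat that the uniqueness claim needs a $\sg$-finiteness hypothesis for the $\pi$--$\lambda$ argument to go through.
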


 \begin{examples}\label{example:semialgebra}\
  \begin{enumerate}

   \item Let \( x \) be point and let \( \mu(A) = 1 \) if \( x \in A \) and \( 0 \)
otherwise.
In this case, we say that \( \mu \) is \df{concentrated} on the point \( x \).

   \item\label{i:left-cont} Consider the the line \( \dR \), and the
algebra \( \cL_{1} \) defined 
in Example~\ref{example:algebras}.\ref{i:x.algebras.l-closed}.
Let \( f : \dR \to \dR \) be a monotonic real function.
We define a set function over \( \cL_{1} \) as follows.
Let \( \lint{a_{i}}{b_{i}} \), (\( i=1,\dots,n \)) be a set of disjoint left-closed
intervals.
Then \( \mu(\bigcup_{i} \lint{a_{i}}{b_{i}}) = \sum_{i} f(b_{i}) - f(a_{i}) \). 
It is easy to see that \( \mu \) is additive.
It is \( \sg \)-additive if and only if \( f \) is left-continuous.

  \item\label{i:measure-Cantor} Let \( B = \{0,1\} \), and consider the set 
\( B^{\dN} \) of infinite 0-1-sequences, and the
semialgebra \( \cL_{3} \) of
Example~\ref{example:algebras}.\ref{i:x.algebras.inf-seq}.
Let \( \mu : B^{*} \to \dR_{+} \) be a function.
Let us write \( \mu(uB^{\dN}) = \mu(u) \) for all \( u \in B^{*} \).
Then it can be shown that the following conditions are equivalent:
\( \mu \) is \( \sg \)-additive over \( \cL_{3} \); it is 
additive over \( \cL_{3} \); the equation \( \mu(u) = \mu(u0) + \mu(u1) \) 
holds for all \( u \in B^{*} \).

  \item The nonnegative linear combination of any finite number of measures
is also a measure.
In this way, it is easy to construct arbitrary measures concentrated on a
finite number of points.

  \item Given two measure spaces \( (X,\cA,\mu) \) and \( (Y,\cB,\nu) \) it is
possible to 
define the product measure \( \mu\times\nu \) over the measureable space
\( (X\times Y, \cA\times\cB) \).
The definition is required to satisfy
 \( \mu\times\nu(A\times B) = \mu(A)\times\nu(B) \), and is determined uniquely
by this condition.
If \( \nu \) is a probability measure then, of course,
 \( \mu(A) = \mu\times\nu(A \times Y) \).

  \end{enumerate}
 \end{examples}

Let us finally define measureable functions.

\begin{definition}[Measureable functions]
Let \( (X,\cA) \), \( (Y,\cB) \) be two measureable spaces.
A function \( f:X\to Y \) is called \df{measureable} if and only if
\( f^{-1}(E)\in\cA \) for all \( E\in\cB \).  
\end{definition}

The following is easy to check.

\begin{proposition}
  Let \( (X,\cA) \) be a measureable space and \( (\dR,\cB) \) be the measureable space
  of the real numbers, with the Borel sets.
Then \( f:X\to\dR \) is measureable if and only if all sets of the form
\( f^{-1}(\opint{r}{\infty})=\setof{x: f(x)>r} \) are measureable, where \( r \) is a
rational number.
\end{proposition}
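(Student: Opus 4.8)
The plan is to prove the two implications separately, with the forward direction immediate and the reverse direction handled by the standard ``good sets'' argument.

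The forward direction requires nothing: if $f$ is measurable then $f^{-1}(E)\in\cA$ for every Borel set $E$ by definition, and in particular for each rational $r$ the ray $\opint{r}{\infty}$ is open, hence a Borel set, so $f^{-1}(\opint{r}{\infty})\in\cA$.

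For the converse, assume $f^{-1}(\opint{r}{\infty})\in\cA$ for all $r\in\dQ$. The key step is to introduce the collection
\[
 \cD = \setof{E\sbsq\dR : f^{-1}(E)\in\cA}
\]
and to check that it is a $\sigma$-algebra on $\dR$. This follows routinely from the identities $f^{-1}(\dR\xcpt A) = X\xcpt f^{-1}(A)$ and $f^{-1}(\bigcup_i A_i) = \bigcup_i f^{-1}(A_i)$, together with the fact that $\cA$ is itself a $\sigma$-algebra. Hence, to conclude $\cB\sbsq\cD$, it suffices to exhibit a family generating $\cB$ that lies inside $\cD$. By hypothesis every rational ray $\opint{r}{\infty}$ belongs to $\cD$, and then, using closure of $\cD$ under countable intersection and complement, I would obtain successively $\lint{r}{\infty} = \bigcap_{n\ge 1}\opint{r-1/n}{\infty}\in\cD$, then $\opint{-\infty}{r} = \dR\xcpt\lint{r}{\infty}\in\cD$, and therefore $\opint{a}{b} = \opint{a}{\infty}\cap\opint{-\infty}{b}\in\cD$ for all rationals $a<b$. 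Since every open subset of $\dR$ is a countable union of such rational-endpoint intervals, all open sets lie in $\cD$, so $\cB\sbsq\cD$, which is exactly the statement that $f^{-1}(E)\in\cA$ for every Borel set $E$, i.e.\ that $f$ is measurable.

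There is no genuine obstacle here; the only two points deserving a few careful lines are the verification that $\cD$ is a $\sigma$-algebra and the elementary fact---already used implicitly when the Borel sets were described in Example~\ref{example:algebras}---that the open sets of $\dR$ are generated by intervals with rational endpoints. One could instead invoke that $\cB$ is generated by the semialgebra $\cL_{1}$ of left-closed intervals, which trims the final step slightly, but the argument through rational rays is the one that matches the hypothesis most directly.
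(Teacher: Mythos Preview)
Your proof is correct and is precisely the standard ``good sets'' argument one would expect. The paper itself does not give a proof: it introduces the proposition with the sentence ``The following is easy to check.'' and leaves the verification to the reader. Your write-up is exactly the routine check the paper has in mind.
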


 \begin{remark}\label{rem:measure.step-by-step}
Example~\ref{example:semialgebra}.\ref{i:measure-Cantor} shows a particularly
attractive way to define measures.
Keep splitting the values \( \mu(u) \) in an arbitrary way into
\( \mu(u0) \) and \( \mu(u1) \), and the resulting values on the semialgebra define
a measure.
Example~\ref{example:semialgebra}.\ref{i:left-cont} is less attractive,
since in the process of defining \( \mu \) on all intervals and only keeping
track of finite additivity, we may end up with
a monotonic function that is not left continuous, and thus with a measure
that is not \( \sg \)-additive.
In the subsection on probability measures over a metric space, we will find
that even on the real line, there is a way to define measures in a
step-by-step manner, and only checking for consistency along the way.
 \end{remark} 

A probability space, according to the axioms introduced by Kolmogorov, is
just a measureable space with a normed measure.

 \begin{definition}
A \df{probability space} is a triple \( (X, \cS, P) \) where \( (X, \cS) \) is a
measurable space and \( P \) is a probability measure over it.

Let \( (X_{i}, \cS_{i}) \) (\( i=1,2 \)) be measurable spaces, and let 
\( f : X \to Y \) be a mapping.
Then \( f \) is \df{measurable} if and only if for each element \( B \) of
\( \cS_{2} \), its inverse image \( f^{-1}(B) \) is in \( \cS_{1} \).
If \( \mu_{1} \) is a measure over \( (X_{1}, \cS_{1}) \) then
\( \mu_{2} \) defined by \( \mu_{2}(A) = \mu_{1}(f^{-1}(A)) \) is a measure over
\( X_{2} \) called the measure \df{induced} by  \( f \).
 \end{definition}

\subsection{Integral}\label{subsec:integral}

The notion of integral also generalizes to arbitrary measures, and
is sometimes also used to define measures.

First we define integral on very simple functions.

 \begin{definition}
A measurable function \( f : X \to \dR \) is called a \df{step function} if
its range is finite.

The set of step functions is closed with respect to linear combinations and
also with respect to the operations \( \land,\lor \).
Any such set of functions is called a \df{Riesz space}.
 \end{definition}


 \begin{definition}
Given a step function \( f \) which takes values \( x_{i} \) on sets \( A_{i} \), and a
finite measure \( \mu \), we define 
 \[
 \mu(f) = \mu f = \int f\,d\mu = \int f(x) \mu(d x) 
   = \sum_{i} x_{i} \mu(A_{i}).
 \]
 \end{definition}

This is a linear positive functional on the set of step functions.
Moreover, it can be shown that it
is continuous on monotonic sequences: if \( f_{i} \searrow 0 \)
then \( \mu f_{i} \searrow 0 \).
The converse can also be shown:
Let \( \mu \) be a linear positive functional on step functions 
that is continuous on monotonic sequences.
Then the set function \( \mu(A) = \mu(1_{A}) \) is a finite measure.

 \begin{proposition}\label{propo:Riesz-extension}
Let \( \cE \) be any Riesz space of functions with the property that
\( 1 \in \cE \).
Let \( \mu \) be a positive linear functional on \( \cE \) continuous on monotonic
sequences, with \( \mu 1 = 1 \).
The functional \( \mu \) can be extended to the set 
\( \cE_{+} \) of monotonic limits of nonnegative elements of \( \cE \), by
continuity.
In case when \( \cE \) is the set of all step functions, the set \( \cE_{+} \) is
the set of all nonnegative measurable functions.
 \end{proposition}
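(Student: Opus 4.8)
The plan is to define the extension by monotone continuity and then verify it is well-defined, linear, positive, and still continuous along monotone sequences. Concretely, for $f \in \cE_{+}$ I would set $\mu(f) = \lim_n \mu(f_n)$ whenever $f_n \in \cE$, $f_n \ge 0$, and $f_n \nearrow f$; the limit exists in $\ol\dR_{+}$ (possibly $+\infty$) because positivity of $\mu$ makes $\mu(f_n)$ nondecreasing. The very first thing to record is the \emph{upward} form of the hypothesis: if $g, g_n \in \cE$ with $g_n \nearrow g$, then $g - g_n \in \cE$ decreases to $0$, so $\mu(g) - \mu(g_n) = \mu(g - g_n) \searrow 0$, i.e. $\mu(g_n) \nearrow \mu(g)$. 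Likewise $\mu$ is monotone on $\cE$ (apply positivity to $b-a \ge 0$).

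The first substantive step is well-definedness, resting on the lemma: if $f_n \in \cE$, $f_n \ge 0$, $f_n \nearrow f$, and $g \in \cE$ with $0 \le g \le f$, then $\lim_n \mu(f_n) \ge \mu(g)$. The proof is the truncation trick: $g \wedge f_n \in \cE$ (Riesz space) and $g \wedge f_n \nearrow g \wedge f = g$, so by upward continuity $\mu(g \wedge f_n) \nearrow \mu(g)$; since $g \wedge f_n \le f_n$ and $\mu$ is monotone on $\cE$, $\mu(f_n) \ge \mu(g \wedge f_n)$, and letting $n \to \infty$ gives the claim. Taking $g$ to range over a second approximating sequence $g_m \nearrow f$ yields $\lim_n \mu(f_n) \ge \mu(g_m)$ for all $m$, hence $\ge \lim_m \mu(g_m)$; by symmetry the limits coincide, so $\mu(f)$ is unambiguous. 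Additivity and nonnegative homogeneity are then immediate from $f_n + g_n \nearrow f+g$ and $cf_n \nearrow cf$; monotonicity on $\cE_{+}$ follows by the same $\wedge$-truncation, comparing $f_n \wedge h_m$ when $0 \le f \le h$ in $\cE_{+}$.

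The part that takes more care — and what "by continuity" must mean for the extension to be useful — is that $\mu$ is again continuous along monotone sequences \emph{inside} $\cE_{+}$. Given $h_k \in \cE_{+}$ with $h_k \nearrow h$, I would choose for each $k$ an approximation $f_{k,n} \nearrow h_k$ with $f_{k,n} \in \cE$, $f_{k,n} \ge 0$, and form the diagonal $g_n = f_{1,n} \vee \dots \vee f_{n,n} \in \cE$. One checks that $g_n$ is nondecreasing in $n$, that $g_n \le h_n \le h$, and that $g_n \ge f_{k,n}$ for $n \ge k$; hence $\lim_n g_n$ lies between $\sup_k h_k = h$ and $h$, so $h = \lim_n g_n \in \cE_{+}$ and $\mu(h) = \lim_n \mu(g_n)$. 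Squeezing $\mu(f_{k,n}) \le \mu(g_n) \le \mu(h_n)$ and taking limits gives $\mu(h_k) \nearrow \mu(h)$. For the last claim, when $\cE$ is the space of step functions: every element of $\cE_{+}$ is a pointwise monotone limit of measurable step functions, hence nonnegative and measurable; conversely, for $f \ge 0$ measurable the dyadic staircase $f_n = \min\{\, n,\ 2^{-n}\lfloor 2^n f \rfloor\,\}$ is a nonnegative step function (its level sets are inverse images under $f$ of half-open dyadic intervals, so measurable) with $f_n \nearrow f$, so $f \in \cE_{+}$.

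I expect the diagonal argument for continuity on $\cE_{+}$ to be the main obstacle, not for depth but for bookkeeping: one must check that $h$ itself falls back into $\cE_{+}$ (not merely that $\mu$ converges along the $h_k$), and handle the cases where some values are $+\infty$. Everywhere else the $g \wedge f_n$ truncation lemma does the work almost mechanically.
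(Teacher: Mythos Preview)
The paper does not actually give a proof of this proposition: it is stated in the appendix as a background fact from measure theory, with no argument supplied. Your proof is correct and is exactly the standard Daniell-integral construction one finds in textbooks: the truncation lemma $g \wedge f_n \nearrow g$ for well-definedness, termwise limits for linearity, the diagonal $g_n = f_{1,n} \vee \dots \vee f_{n,n}$ for continuity on $\cE_{+}$, and the dyadic staircase for the step-function case. There is nothing to compare against, and nothing missing in what you wrote.
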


Now we extend the notion of integral to a wider class of functions.

 \begin{definition}
Let us fix a finite measure \( \mu \) over a measurable space \( (X, \cS) \).
A measurable function \( f \) is called \df{integrable} with respect to \( \mu \)
if \( \mu |f|^{+} < \infty \) and \( \mu |f|^{-} < \infty \).
In this case, we define \( \mu f = \mu |f|^{+} - \mu |f|^{-} \).
 \end{definition}

The set of integrable functions is a Riesz space, and the positive linear
functional \( \mu \) on it is continuous with respect to monotonic sequences.
The continuity over monotonic sequences also implies the following theorem.

 \begin{proposition}[Bounded convergence theorem]
Suppose that functions \( f_{n} \) are integrable and
\( |f_{n}| < g \) for some integrable function \( g \).
Then \( f = \lim_{n} f_{n} \) is integrable and \( \mu f = \lim_{n} \mu f_{n} \).
 \end{proposition}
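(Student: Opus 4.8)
The plan is to obtain the theorem directly from the continuity of $\mu$ on monotonic sequences of integrable functions, recorded just above; no further machinery is needed.

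First I would verify that $f=\lim_n f_n$ is integrable. It is measurable, being a pointwise limit of measurable functions (via the characterization of measurability through the sets $\setof{x:f(x)>r}$). Passing to the limit in $|f_n|<g$ gives $|f|\le g$, hence $\mu|f|^{+}\le\mu g<\infty$ and $\mu|f|^{-}\le\mu g<\infty$, so $f$ is integrable. Then, for each $n$, set
\[
  h_n=\sup_{k\ge n}|f_k-f|.
\]
This is a countable supremum of measurable functions, hence measurable; it is nonnegative and nonincreasing in $n$; and $h_n\le 2g$, so every $h_n$ is integrable and $\mu h_n$ is well defined. At each point $x$, the relation $f(x)=\lim_k f_k(x)$ means that for all $\eps>0$ there is an $N$ with $|f_k(x)-f(x)|<\eps$ for $k\ge N$, so $h_n(x)\le\eps$ for $n\ge N$; thus $h_n\searrow 0$ pointwise.

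Now apply the continuity of the functional $\mu$ on monotonic sequences of integrable functions to $h_n\searrow 0$ to get $\mu h_n\searrow 0$. Since $|f_n-f|\le h_n$, positivity and linearity of $\mu$ give $|\mu f_n-\mu f|\le\mu|f_n-f|\le\mu h_n\to 0$, which is exactly the assertion $\mu f=\lim_n\mu f_n$. The only points needing any care are the measurability of $f$ and of the auxiliary functions $h_n$, and the fact that $\mu h_n$ makes sense at all; both are handled by the bound $h_n\le 2g$ together with integrability of $g$, so I do not anticipate a genuine obstacle.
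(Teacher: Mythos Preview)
Your proof is correct and follows exactly the route the paper indicates: the paper does not spell out a proof but simply remarks that ``the continuity over monotonic sequences also implies the following theorem,'' and your argument via the monotone sequence \(h_n=\sup_{k\ge n}|f_k-f|\searrow 0\) is the standard way to make that implication precise.
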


 \begin{definition}
Two measurable functions \( f,g \) are called \df{equivalent} with respect to
measure \( \mu \) if \( \mu(f - g) = 0 \).
 \end{definition}

For two-dimensional integration, the following theorem holds.

 \begin{proposition}[Fubini theorem]\label{propo:fubini}
Suppose that function \( f(\cdot,\cdot) \) is integrable over
the space \( (X\times Y, \cA\times\cB, \mu\times\nu) \).
Then for \( \mu \)-almost all \( x \), the function \( f(x,\cdot) \) is integrable over
\( (Y,\cB,\nu) \), and the function \( x\mapsto\nu^{y}f(x,y) \) 
is integrable over \( (X,\cA,\mu) \)
with \( (\mu\times\nu) f = \mu^{x}\mu^{y}f \).
 \end{proposition}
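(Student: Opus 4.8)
The plan is to establish the identity in increasing generality, following the standard layering: measurable rectangles, then all sets of $\cA\times\cB$, then step functions, then nonnegative measurable functions, and finally integrable functions, reducing each stage to the previous one by linearity and by monotone continuity of the integral (Proposition~\ref{propo:Riesz-extension}). The one genuinely nontrivial stage is the set stage. First I would check that for a measurable rectangle $E=A\times B$ the section $E_{x}=\setof{y:(x,y)\in E}$ equals $B$ when $x\in A$ and $\emptyset$ otherwise, so that $\nu(E_{x})=1_{A}(x)\nu(B)$ is $\cA$-measurable and $\mu^{x}\nu(E_{x})=\mu(A)\nu(B)=(\mu\times\nu)(E)$ by the defining property of the product measure. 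Then I would argue that the family $\cM$ of all $E\in\cA\times\cB$ for which $E_{x}\in\cB$ for every $x$, $x\mapsto\nu(E_{x})$ is $\cA$-measurable, and $\mu^{x}\nu(E_{x})=(\mu\times\nu)(E)$ holds, is closed under complementation and under countable disjoint unions; since $\cM$ contains the rectangles, which form a $\pi$-system generating $\cA\times\cB$, the monotone class theorem (Dynkin's $\pi$-$\lambda$ theorem) gives $\cM=\cA\times\cB$. Finiteness of $\mu$ and $\nu$ is used here to subtract measures when passing to complements; the $\sigma$-finite case reduces to this by exhaustion, but for the stated finite measures nothing extra is needed.

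Having the section identity for indicators $1_{E}$, I would extend it to nonnegative step functions on $X\times Y$ by linearity of all three maps $g\mapsto(\mu\times\nu)g$, $g\mapsto\nu^{y}g(x,y)$ for fixed $x$, and $h\mapsto\mu^{x}h(x)$. For an arbitrary nonnegative $\cA\times\cB$-measurable $f$, write $f=\lim_{n}f_{n}$ with $f_{n}$ an increasing sequence of nonnegative step functions; for each $x$ the inner integrals $\nu^{y}f_{n}(x,y)$ increase to $\nu^{y}f(x,y)$ by continuity of $\nu$ on monotonic sequences, so $x\mapsto\nu^{y}f(x,y)$ is $\cA$-measurable as an increasing pointwise limit of measurable functions, and a second application of monotone continuity, now of $\mu$ and of $\mu\times\nu$, gives $(\mu\times\nu)f=\mu^{x}\nu^{y}f(x,y)$ in $\ol\dR_{+}$. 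This is the Tonelli version of the claim.

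Finally, for $f$ integrable over $(X\times Y,\cA\times\cB,\mu\times\nu)$ I would apply the Tonelli version to $|f|^{+}$ and $|f|^{-}$. Since $\mu^{x}\bigl(\nu^{y}|f|^{+}(x,y)\bigr)=(\mu\times\nu)|f|^{+}<\infty$ and similarly for $|f|^{-}$, the set of $x$ where either inner integral is infinite is a $\mu$-null set; off it, $f(x,\cdot)$ is $\nu$-integrable, $\nu^{y}f(x,y)=\nu^{y}|f|^{+}(x,y)-\nu^{y}|f|^{-}(x,y)$, and $x\mapsto\nu^{y}f(x,y)$ is a difference of two nonnegative $\mu$-integrable functions, hence $\mu$-integrable, with $\mu^{x}\nu^{y}f(x,y)=(\mu\times\nu)|f|^{+}-(\mu\times\nu)|f|^{-}=(\mu\times\nu)f$. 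I expect the main obstacle to be precisely the set stage: verifying that $\cM$ is a $\lambda$-system and that the relevant measurability persists under the limit operations, using finiteness of the measures — everything after that is routine bookkeeping with linearity and monotone limits.
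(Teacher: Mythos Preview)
The paper does not supply a proof of this proposition: it is stated in the appendix on measure-theoretic background as a standard result quoted without argument. Your outline is the classical proof of Fubini's theorem and is correct as written, so there is nothing in the paper to compare it against.
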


To express a continuity property of measures, we can say the following
(recall the definition of \( C(X) \) in Example~\ref{example:metric}.\ref{i:x.C(X)}).

\begin{proposition}\label{propo:measure-cont}
Let \( X \) be a metric space and \( \mu \) a measure.
Then \( \mu \) is a bounded (and thus continuous)
linear functional over the space \( C(X) \).
\end{proposition}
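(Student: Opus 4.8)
The plan is to reduce everything to the elementary properties of the integral developed above, once we notice that $C(X)$ with its metric $d'$ is a normed space under the sup norm $\sup_{x}|f(x)| = d'(f,0)$ (with $0$ the constant zero function), and that — as is implicit throughout this subsection — $\mu$ is a \emph{finite} measure; otherwise the assertion fails, e.g.\ for Lebesgue measure on the line.

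First I would check that each $f\in C(X)$ is integrable with respect to $\mu$. By the definition of $C(X)$, such an $f$ is bounded and continuous; continuity gives that $f^{-1}(\opint{r}{\infty})$ is open, hence Borel, for every rational $r$, so by the characterization of measurable real functions $f$ is measurable. Since $|f|^{+},|f|^{-}\le d'(f,0)<\infty$ pointwise and $\mu(X)<\infty$, we get $\mu|f|^{+},\mu|f|^{-}<\infty$, so $\mu f = \mu|f|^{+}-\mu|f|^{-}$ is well defined. Linearity of the map $f\mapsto\mu f$ on $C(X)$ is then immediate from the linearity of the integral on the Riesz space of integrable functions (every $f\in C(X)$ lying in that space).

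The key quantitative step is the bound $|\mu f|\le\mu|f|\le\mu(X)\cdot d'(f,0)$, valid for all $f\in C(X)$, which says that the linear functional $\mu$ has operator norm at most $\mu(X)$, i.e.\ is bounded. Finally, boundedness of a linear functional on a normed space yields continuity: given $\eps>0$, if $d'(f,g)<\eps/\mu(X)$ then $|\mu f-\mu g| = |\mu(f-g)|\le\mu(X)\,d'(f,g)<\eps$, so $\mu$ is (uniformly) continuous on $C(X)$.

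There is really no serious obstacle here; the only points requiring care are invoking the right facts established above — continuous functions are Borel measurable, a finite measure integrates every bounded measurable function, and the integral is a positive linear functional — and making explicit the standing assumption that $\mu$ is finite, since the statement is false for general $\sg$-finite measures.
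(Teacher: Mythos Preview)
Your proof is correct. The paper states this proposition without proof, so there is nothing to compare against; your argument is the standard one and fills the gap cleanly, including the observation (correct and worth making explicit) that the ambient standing assumption here is that \( \mu \) is a finite measure.
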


\subsection{Density}

When does one measure have a density function with respect to another?

 \begin{definition}
Let \( \mu, \nu \) be two measures over the same measurable space.
We say that \( \nu \) is \df{absolutely continuous} with respect to \( \mu \), or
that \( \mu \) \df{dominates} \( \nu \), if
for each set \( A \), \( \mu(A) = 0 \) implies \( \nu(A) = 0 \).
 \end{definition}

Every nonnegative integrable function \( f \) defines a new measure \( \nu \) via
the formula \( \nu(A) = \mu(f\cdot 1_{A}) \).
This measure \( \nu \) is absolutely continuous with respect to \( \mu \).
The Radon-Nikodym theorem says that the converse is also true.

 \begin{proposition}[Radon-Nikodym theorem]
If \( \nu \) is dominated by \( \mu \) then there is a nonnegative integrable
function \( f \) such that \( \nu(A) = \mu(f \cdot 1_{A}) \) for all measurable
sets \( A \).
The function \( f \) is defined uniquely to within equivalence with respect to
\( \mu \).
 \end{proposition}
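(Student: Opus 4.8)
The plan is to give von Neumann's Hilbert-space argument, which is short and suffices here since the ambient measure $\mu$ (and hence, once the theorem is proved, also $\nu$) is finite. First I would merge the two measures: put $\rho = \mu + \nu$, again a finite measure on $(X,\cS)$. On the Hilbert space $L^2(\rho)$ of square-$\rho$-integrable functions, the linear functional $g \mapsto \int g\, d\nu$ is bounded, because by Cauchy--Schwarz $|\int g\,d\nu| \le \int |g|\,d\rho \le \|g\|_{L^2(\rho)}\,\rho(X)^{1/2}$. By the Riesz representation theorem for Hilbert spaces there is an $h\in L^2(\rho)$ with $\int g\,d\nu = \int g h\, d\rho$ for every $g\in L^2(\rho)$.

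Next I would pin down $h$. Testing with $g = 1_{\{h<0\}}$ gives $\nu(\{h<0\}) = \int_{\{h<0\}} h\, d\rho \le 0$; since $\nu \ge 0$ this forces $\rho(\{h<0\}) = 0$. Testing with $g = 1_{\{h>1\}}$ gives $\nu(\{h>1\}) = \int_{\{h>1\}} h\, d\rho$, which would strictly exceed $\rho(\{h>1\}) \ge \nu(\{h>1\})$ if that set had positive $\rho$-measure; so $\rho(\{h>1\}) = 0$. Thus I may assume $0 \le h \le 1$ everywhere. Splitting $\rho = \mu+\nu$ in the representation identity and rearranging yields the key relation
\[
  \int g(1-h)\,d\nu = \int g h\, d\mu \qquad\text{for all } g\in L^2(\rho).
\]
Taking $g = 1_{\{h=1\}}$ gives $\mu(\{h=1\}) = 0$, and then $\nu(\{h=1\}) = 0$ by absolute continuity.

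Now define $f = h/(1-h)$ on $\{h<1\}$ and $f = 0$ on $\{h=1\}$; this is a nonnegative measurable function. To check $\nu(A) = \mu(f\cdot 1_A)$ for a measurable $A$, I would substitute the bounded (hence square-$\rho$-integrable) functions $g_n = 1_A\cdot\min\bigl(1/(1-h),\,n\bigr)\cdot 1_{\{h<1\}}$ into the key relation and let $n\to\infty$: on the left, $g_n(1-h) = 1_A\min(1,\,n(1-h))\,1_{\{h<1\}}$ increases to $1_A 1_{\{h<1\}}$, so by monotone convergence the left side tends to $\nu(A\cap\{h<1\}) = \nu(A)$; on the right, $g_n h$ increases to $1_A f$, so the right side tends to $\mu(f\cdot 1_A)$. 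Putting $A = X$ gives $\mu(f) = \nu(X) < \infty$, so $f$ is integrable. For uniqueness, if $f'$ is another integrable function with $\mu(f'\cdot 1_A) = \mu(f\cdot 1_A)$ for all measurable $A$, then applying this to $A = \{f > f'\}$ and to $A = \{f < f'\}$ forces $f = f'$ $\mu$-almost everywhere.

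There is no deep obstacle here; the care goes entirely into the null sets $\{h<0\}$, $\{h>1\}$, $\{h=1\}$ and into the truncation step, which is needed because $1/(1-h)$ itself need not lie in $L^2(\rho)$ and so cannot be fed into the Riesz identity directly — monotone convergence launders it. If one prefers to avoid Hilbert space, an alternative is the exhaustion argument: maximize $\mu(g)$ over the class of nonnegative $g$ with $\mu(g\cdot 1_A) \le \nu(A)$ for all $A$ (this class is closed under $\lor$ and under monotone limits, and the supremum is at most $\nu(X)$), obtain a maximizer $f$, and show the residual measure $A\mapsto \nu(A) - \mu(f\cdot 1_A)$ vanishes by applying a Hahn decomposition to $\nu(\cdot) - \mu(f\cdot 1_{(\cdot)}) - \varepsilon\mu$; this merely trades the Riesz theorem for the Hahn decomposition theorem. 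The $\sigma$-finite case reduces to the finite one by decomposing $X$ into countably many pieces of finite $\mu$-measure and patching the densities together.
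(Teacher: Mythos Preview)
Your proof is correct; it is the standard von Neumann Hilbert-space argument, carried out cleanly with the proper attention to the null sets $\{h<0\}$, $\{h>1\}$, $\{h=1\}$ and to the monotone-convergence truncation needed because $1/(1-h)$ need not lie in $L^2(\rho)$.

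The paper, however, does not prove this proposition at all: it is stated in the appendix on background material as a classical result, without argument. So there is no ``paper's own proof'' to compare against. Your write-up supplies what the paper deliberately omits; the alternative exhaustion/Hahn-decomposition route you sketch at the end is also standard and would serve equally well.
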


 \begin{definition}
The function \( f \) of the Radom-Nikodym Theorem above
is called the \df{density} of \( \nu \) with respect to \( \mu \).
We will denote it by
 \[
   f(x) = \frac{\mu(dx)}{\nu(dx)} = \frac{d\mu}{d\nu}.
 \]
 \end{definition}

The following theorem is also standard.

 \begin{proposition}[Chain rule and inverse function]
\label{propo:density-props}\
 \begin{enumerate}
  \item
Let \( \mu, \nu, \eta \) be measures such that \( \eta \) is absolutely continuous
with respect to \( \mu \) and \( \mu \) is absolutely continuous with respect to
\( \nu \).
Then the ``chain rule'' holds:
 \begin{equation}\label{eq:chain-rule}
  \frac{d\eta}{d\nu} = \frac{d\eta}{d\mu} \frac{d\mu}{d\nu}.
 \end{equation}

  \item
If \( \frac{\nu(dx)}{\mu(dx)} > 0 \) for all \( x \) then
\( \mu \) is also absolutely continuous with respect to \( \nu \) and
\( \frac{\mu(dx)}{\nu(dx)} = \Paren{\frac{\nu(dx)}{\mu(dx)}}^{-1} \).
 \end{enumerate}
 \end{proposition}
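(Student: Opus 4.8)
The plan is to derive both parts from a single substitution lemma: if $\nu$ is dominated by $\mu$ with density $g=d\nu/d\mu$, then for every nonnegative measurable function $f$ we have $\nu(f)=\mu(fg)$. First I would prove this lemma by the standard three-step argument. For $f=1_{A}$ it is just the defining property $\nu(A)=\mu(g\cdot 1_{A})$ of the density. By positivity and linearity it extends to nonnegative step functions, and then Proposition~\ref{propo:Riesz-extension}, which says that every nonnegative measurable function is a monotone limit of nonnegative step functions, together with the continuity of the integral on monotone sequences, lets me pass to the limit and obtain $\nu(f)=\mu(fg)$ for all nonnegative measurable $f$ (both sides allowed to be $+\infty$). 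I also record the trivial remark that whenever $\nu$ has a density $g$ against $\mu$, one automatically has $\nu\ll\mu$, since $\mu(A)=0$ forces $g\cdot 1_{A}=0$ $\mu$-almost everywhere, hence $\nu(A)=\mu(g\cdot 1_{A})=0$.

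\textbf{Chain rule.} Since $\eta\ll\mu\ll\nu$ we also have $\eta\ll\nu$, so by the Radon--Nikodym theorem the density $d\eta/d\nu$ exists. Write $g=d\mu/d\nu$ and $h=d\eta/d\mu$, both nonnegative and measurable. For any measurable set $A$, applying the defining property of $h$ and then the substitution lemma (to the measure pair $\mu\ll\nu$ and the nonnegative function $\varphi=h\cdot 1_{A}$) gives
\[
  \eta(A)=\mu(h\cdot 1_{A})=\nu\bigl((gh)\cdot 1_{A}\bigr).
\]
Thus $gh$ is a nonnegative function (integrable, since $\nu(gh)=\eta(X)<\infty$) that represents $\eta$ as a density against $\nu$, so by the uniqueness clause of the Radon--Nikodym theorem $gh=d\eta/d\nu$ to within $\nu$-equivalence, which is exactly~\eqref{eq:chain-rule}.

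\textbf{Inverse function.} Suppose now $g=d\nu/d\mu$ satisfies $g(x)>0$ for all $x$. If $\nu(A)=0$ then $\mu(g\cdot 1_{A})=0$; since $g\cdot 1_{A}\ge 0$ and $g>0$, this forces $\mu(A)=0$, so $\mu\ll\nu$ and the density $h=d\mu/d\nu$ exists. Combining the substitution lemma in both directions (for $\nu\ll\mu$ with density $g$, and for $\mu\ll\nu$ with density $h$), for every measurable $A$ we obtain
\[
  \mu(A)=\nu(h\cdot 1_{A})=\mu\bigl((hg)\cdot 1_{A}\bigr),
\]
so $hg=1$ $\mu$-almost everywhere; since $g$ is strictly positive, $h=1/g$ $\mu$-almost everywhere. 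Finally, because $\nu\ll\mu$ (by the trivial remark above, $\nu$ having a density against $\mu$), the exceptional set also has $\nu$-measure zero, so $h=1/g$ holds $\nu$-almost everywhere, which is the claimed identity $\frac{\mu(dx)}{\nu(dx)}=\bigl(\frac{\nu(dx)}{\mu(dx)}\bigr)^{-1}$.

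\textbf{Main obstacle.} There is no deep obstacle: the only real content is the substitution lemma, and the hard part will merely be presenting the indicator $\to$ step function $\to$ monotone limit approximation cleanly and invoking the monotone continuity of the integral at the right moment (this is precisely what Proposition~\ref{propo:Riesz-extension} supplies). The one place to be careful is bookkeeping about \emph{which} measure the ``almost everywhere'' refers to in the inverse-function part; this is dispatched by the observation that $\nu\ll\mu$ holds as soon as $\nu$ admits a density against $\mu$, so a $\mu$-null exceptional set is automatically $\nu$-null.
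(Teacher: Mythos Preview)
The paper does not actually prove this proposition: it is introduced with ``The following theorem is also standard'' and stated without proof. Your argument is correct and is exactly the standard route one would take---the substitution lemma $\nu(f)=\mu(f\cdot d\nu/d\mu)$ via indicator $\to$ step $\to$ monotone limit, followed by the obvious applications---so there is nothing to compare against and nothing to fix.
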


There is a natural distance to be used between measures, though
later we will see that it is not the preferred one in metric spaces.

 \begin{definition}[Total variation distance]\label{def:total-var}
Let \( \mu, \nu \) be two measures, then both are dominated by some measure
\( \eta \) (for example by \( \eta = \mu + \nu \)).
Let their densities with respect to \( \eta \) be \( f \) and \( g \).
Then we define the \df{total variation distance} of the two measures
as
 \[
  D(\mu, \nu)=\eta(|f - g|).
 \]
It is independent of the dominating measure \( \eta \).
 \end{definition}

 \begin{example}
Suppose that the space \( X \) can be partitioned into 
disjoint sets \( A,B \) such that \( \nu(A)=\mu(B) = 0 \).
Then \( D(\mu, \nu) = \mu(A) + \nu(B) = \mu(X) + \nu(X) \).
 \end{example}

\subsection{Random transitions}\label{subsec:transitions}

What is just a transition matrix in case of a Markov chain also
needs to be defined more carefully in the non-discrete cases.
We follow the definition given in~\cite{PollardUsers01}.

 \begin{definition}
Let \( (X, \cA) \), \( (Y, \cB) \) be measureable spaces (defined in
Subsection~\ref{subsec:measures}).
Suppose that a family of probability
measures \( \Lg = \setof{\lg_{x} : x \in X} \) on \( \cB \) is given.
We call it a \df{probability kernel}, (or Markov kernel, or conditional
distribution) if the map \( x \mapsto \lg_{x} B \) is measurable for each
\( B \in \cB \).
 \end{definition} 

When \( X,Y \) are finite sets then \( \lg \) is a Markov transition matrix.
The following theorem shows that \( \lg \) assigns a joint distribution over
the space \( (X \times Y, \cA\times\cB) \) to each input distribution \( \mu \).

 \begin{proposition} For each nonnegative \( \cA\times \cB \)-measureable
function \( f \) over \( X \times Y \),
 \begin{enumerate}
  \item The function \( y \mapsto f(x,y) \) is \( \cB \)-measurable for each fixed \( x \).
  \item The function \( x \mapsto \lg_{x}^{y} f(x, y) \) is \( \cA \)-measurable.
  \item The integral \( f \mapsto \mu^{x} \lg_{x}^{y} f(x, y) \) defines
a measure on \( \cA \times \cB \).
 \end{enumerate}
 \end{proposition}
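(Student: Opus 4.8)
The plan is to use the standard \emph{good sets} (functional monotone class) method: reduce everything to indicators of measurable rectangles, then build up by linearity and monotone limits.

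For part~(1) I would first treat indicators. Let \( \cD \) be the collection of sets \( E \in \cA\times\cB \) such that for every fixed \( x \) the section \( E_{x} = \setof{y : (x,y)\in E} \) lies in \( \cB \). Every rectangle \( A\times B \) belongs to \( \cD \), since its \( x \)-section is \( B \) or \( \emptyset \); and \( \cD \) is closed under complement and countable union because sectioning commutes with these set operations. Hence \( \cD \) is a \( \sg \)-algebra containing the generators of \( \cA\times\cB \), so \( \cD = \cA\times\cB \). For a general \( \cA\times\cB \)-measurable \( f \) and a Borel set \( C\subseteq\dR \), the set \( \setof{y : f(x,y)\in C} \) is exactly the section \( (f^{-1}(C))_{x} \), which is in \( \cB \) by the above; so \( y\mapsto f(x,y) \) is \( \cB \)-measurable, and in particular \( \lg_{x}^{y} f(x,y) \) is well defined in \( \ol\dR_{+} \) for each \( x \).

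For part~(2) I would again start from indicators: let \( \cE \) be the class of \( E\in\cA\times\cB \) for which \( x\mapsto \lg_{x}(E_{x}) = \lg_{x}^{y} 1_{E}(x,y) \) is \( \cA \)-measurable. For a rectangle \( A\times B \) this function is \( 1_{A}(x)\,\lg_{x}(B) \), which is measurable because \( x\mapsto\lg_{x}(B) \) is measurable by the definition of a probability kernel; finite disjoint unions of rectangles stay in \( \cE \) by additivity of \( \lg_{x} \), so \( \cE \) contains the algebra generated by the rectangles. Next I would check that \( \cE \) is a monotone class: if \( E_{n}\uparrow E \) then \( \lg_{x}((E_{n})_{x})\uparrow\lg_{x}(E_{x}) \) by continuity from below, and a pointwise supremum of measurable functions is measurable; if \( E_{n}\downarrow E \) the same works with continuity from above, which is valid precisely because each \( \lg_{x} \) is a \emph{finite} (indeed probability) measure. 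By the monotone class theorem \( \cE=\cA\times\cB \). Linearity extends the statement to nonnegative step functions, and for a general nonnegative measurable \( f \) I would write \( f=\sup_{n} f_{n} \) with \( (f_{n}) \) an increasing sequence of step functions as in Proposition~\ref{propo:Riesz-extension}; for each fixed \( x \), \( \lg_{x}^{y} f(x,y)=\sup_{n}\lg_{x}^{y} f_{n}(x,y) \) by monotone convergence, and a supremum of \( \cA \)-measurable functions of \( x \) is \( \cA \)-measurable.

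For part~(3), set \( \nu(E)=\mu^{x}\lg_{x}^{y} 1_{E}(x,y)=\mu^{x}\lg_{x}(E_{x}) \); this is well defined and nonnegative by parts~(1)--(2), with \( \nu(\emptyset)=0 \). Given pairwise disjoint \( E_{1},E_{2},\dots \) with union \( E \), the sections \( (E_{i})_{x} \) are pairwise disjoint with union \( E_{x} \), so \( \lg_{x}(E_{x})=\sum_{i}\lg_{x}((E_{i})_{x}) \) by \( \sg \)-additivity of \( \lg_{x} \); integrating in \( x \) and using monotone convergence to interchange \( \mu^{x} \) with the increasing partial sums gives \( \nu(E)=\sum_{i}\nu(E_{i}) \). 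Thus \( \nu \) is a measure on \( \cA\times\cB \); and if \( \mu \) and all \( \lg_{x} \) are probability measures then \( \nu(X\times Y)=\mu^{x}\lg_{x}(Y)=\mu^{x} 1=1 \). The only genuine subtlety in the whole argument is the downward step of the monotone class argument in part~(2): it fails for infinite kernels and is exactly where finiteness of \( \lg_{x} \) enters; everything else is routine bookkeeping with the good-sets principle and monotone convergence.
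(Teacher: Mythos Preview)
Your argument is the standard, correct monotone-class proof. Note however that the paper does not actually supply a proof of this proposition: it appears in the appendix ``Background from mathematics'', subsection on random transitions, as a stated result with the remark ``We follow the definition given in~\cite{PollardUsers01}'', and is immediately followed by a definition rather than a proof. So there is nothing in the paper to compare against; your write-up is exactly the textbook argument one would find in Pollard or any standard reference, and it is sound in all three parts.
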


The above proposition allows to define a mapping over measures.

 \begin{definition}
According to the above proposition, given a probability kernel \( \Lg \),
to each measure \( \mu \) over \( \cA \) corresponds a measure over
\( \cA \times \cB \).
We will denote its marginal over \( \cB \) as
 \begin{equation}\label{eq:Markov-op-meas}
   \Lg^{*} \mu.
 \end{equation}
For every measurable function \( g(y) \) over \( Y \), we can define the measurable 
function \( f(x) = \lg_{x} g = \lg_{x}^{y} g(y) \): we write
 \begin{equation}\label{eq:Markov-op-fun}
   f = \Lg g.
 \end{equation}
 \end{definition}

The operator \( \Lg \) is linear, and monotone with \( \Lg 1 = 1 \).
By these definitions, we have
 \begin{equation}\label{eq:Lg-Lg*}
  \mu(\Lg g) = (\Lg^{*}\mu) g.
 \end{equation}

An example is the simple case of a deterministic mapping:

 \begin{example}\label{example:determ-trans}
Let \( h : X \to Y \) be a measureable function, and
let \( \lg_{x} \) be the measure \( \dg_{h(x)} \) concentrated on the
point \( h(x) \).
This operator, denoted \( \Lg_{h} \) is, in fact, a deterministic transition,
and we have \( \Lg_{h} g = g \circ h \).
In this case, we will simplify the notation as follows:
 \[
   h^{*}\mu =  \Lg_{h}^{*}.
 \]
 \end{example}

\subsection{Probability measures over a metric
space}\label{subsec:measure-metric}

We follow the exposition of~\cite{BillingsleyConverg68}.
Whenever we deal with probability measures on a metric space, we will
assume that our metric space is complete and separable (Polish space).

Let \( \bX = (X, d) \) be a complete separable metric space.
Then \( \bX \) gives rise to a measurable space, where the measurable sets are its
Borel sets.
It can be shown that, if \( A \) is a Borel set and \( \mu \) is a finite measure
then there are sets
\( F \sbsq A \sbsq G \) where \( F \) is an \( F_{\sg} \) set, \( G \) is a \( G_{\dg} \) set,
and \( \mu(F) = \mu(G) \).

It can be shown that a measure is determined by its values on the
elements of any a basis of open sets closed under intersections.
The following proposition follows then essentially from
Proposition~\ref{propo:Caratheo-extension}.

 \begin{proposition}\label{propo:Caratheo-topol}
Let \( \cB \) be a basis of open sets closed under intersections.
Let \( \cB^{*} \) be the set algebra generated by this basis
and let \( \mu \) be any
\( \sigma \)-additive set function on \( \cB^{*} \) with \( \mu(X)=1 \).
Then \( \mu \) can be extended uniquely to a probability measure.
 \end{proposition}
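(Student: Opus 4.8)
The plan is to reduce the statement to Caratheodory's extension theorem, Proposition~\ref{propo:Caratheo-extension}. First I would observe that the algebra \( \cB^{*} \) is, in particular, a semialgebra in the sense used there: it is closed under intersection, and the complement of any \( A\in\cB^{*} \) is again a member of \( \cB^{*} \), hence trivially a disjoint union of finitely many (in fact, one) elements of \( \cB^{*} \). Since by hypothesis \( \mu \) is a nonnegative \( \sg \)-additive set function on \( \cB^{*} \), Proposition~\ref{propo:Caratheo-extension} applies verbatim and produces a unique extension of \( \mu \) to a measure on the \( \sg \)-algebra \( \sg(\cB^{*}) \) generated by \( \cB^{*} \). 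Because \( \cB \) covers \( X \), the algebra \( \cB^{*} \) contains \( X \), so \( \mu(X)=1 \) makes this extension a probability measure.

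The remaining task is to check that \( \sg(\cB^{*}) \) is exactly the Borel \( \sg \)-algebra of \( \bX \), since that is the \( \sg \)-algebra on which a ``probability measure'' over the metric space is understood to live. One direction is immediate: \( \sg(\cB^{*})=\sg(\cB) \) (as \( \cB^{*} \) arises from \( \cB \) by finitely many Boolean operations), and every element of \( \cB \) is open, so \( \sg(\cB) \) is contained in the Borel \( \sg \)-algebra. For the reverse inclusion I would use the standing assumption that \( \bX \) is separable, hence second countable, hence Lindel\"of: an arbitrary open set \( U \) is the union of the family \( \setof{B\in\cB : B\sbsq U} \), and by the Lindel\"of property (applied to the subspace \( U \)) this cover has a countable subcover, so \( U \) is a countable union of elements of \( \cB \) and therefore \( U\in\sg(\cB) \). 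Thus all open sets, and hence all Borel sets, lie in \( \sg(\cB)=\sg(\cB^{*}) \), so the extension constructed above is a Borel probability measure, unique by the uniqueness clause of Proposition~\ref{propo:Caratheo-extension} (uniqueness being unproblematic here since \( \mu \) is finite).

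I expect the only genuinely non-routine point to be this last identification of \( \sg \)-algebras: one must upgrade ``every open set is a union of basis elements'' to ``every open set is a \emph{countable} union of basis elements'', and it is precisely here that second countability (equivalently, separability of the metric space \( \bX \)) enters. Everything else is a bookkeeping application of the already-established extension theorem together with the elementary remark that every Boolean algebra is a semialgebra.
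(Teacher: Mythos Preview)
Your proposal is correct and matches the paper's approach: the paper simply says the proposition ``follows then essentially from Proposition~\ref{propo:Caratheo-extension}'' without giving further detail, and you have filled in exactly the expected reduction to Caratheodory together with the identification \( \sg(\cB)=\sg(\cB^{*})=\text{Borel} \) via the hereditary Lindel\"of property of separable metric spaces.
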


\subsubsection{Weak topology}\label{subsubsec:weak-top}
One can introduce the notion of convergence of measures in a number of
ways.
We have already introduced the total variation distance in
Definition~\ref{def:total-var} above.
But in some cases, the requirement of being close in this distance is too
strong.
Let 
 \[
   \cM(\bX)
 \]
be the set of probability measures on the metric space \( \bX \).
Let \( x_{n} \) be a sequence of points converging to point \( x \) but with 
\( x_{n} \ne x \).
We would like to say that the delta measure \( \dg_{x_{n}} \) 
(concentrated on point \( x_{n} \), see Example~\ref{example:delta}) converges to \( \dg_{x} \).
But the total variation distance \( D(\dg_{x_{n}}, \dg_{x}) \) is 2
for all \( n \).

 \begin{definition}[Weak convergence]
We say that a sequence of probability
measures \( \mu_{n} \) over a metric space \( (X, d) \)
\df{weakly converges} to measure \( \mu \) if for all bounded continuous
real functions \( f \) over \( X \) we have \( \mu_{n} f \to \mu f \).

For a bounded continuous function \( f \) and real numbers \( c \) let
 \[
   A_{f,c} = \setof{\mu : \mu f < c}
 \]
 \end{definition}

A \df{topology of weak convergence} \( (\cM, \tau_{w}) \)
can be defined using a number of different subbases.
The one used in the original definition
is the subbasis consisting of all sets of the form \( A_{f,c} \) above.

We also get a subbasis (see for example~\cite{PollardUsers01}) 
if we restrict ourselves to the set
\( \Lip(X) \) of Lipschitz functions defined in~\eqref{eq:Lipschitz}.
Another possible subbasis giving rise to the same topology
consists of all sets of the form
 \begin{equation}\label{eq:measure-on-open}
   B_{G,c} = \setof{\mu : \mu(G) > c}
 \end{equation}
for open sets \( G \) and real numbers \( c \).
Let us find some countable subbases.
Since the space \( \bX \) is separable, there is a sequence 
\( U_{1}, U_{2},\dotsc \) of open sets that forms a basis of \( \bX \).
Then we can restrict the subbasis of the space of measures to those sets 
\( B_{G, c} \) where \( G \) is the union of a finite number of basis elements
\( U_{i} \) of \( \bX \) and \( c \) is rational.
This way, the space \( (\cM, \tau_{w}) \) itself has the second countability
property.

It is more convenient to define a countable subbasis using bounded
continuous functions \( f \), since the function
\( \mu \mapsto \mu f \) is continuous on such functions, while 
\( \mu \mapsto \mu U \) is typically not continuous when \( U \) is an open set.

 \begin{example}
If \( \bX=\dR \) and \( U \) is the open interval \( \opint{0}{1} \), 
the sequence of probability measures \( \dg_{1/n} \) (concentrated on \( 1/n \))
converges to \( \dg_{0} \), but \( \dg_{1/n}(U)=1 \), and \( \dg_{0}(U)=0 \).
 \end{example}

For some fixed dense set \( D \), let \( \cF_{1}=\F_{1}(D) \) be the set of
functions introduced in~Definition~\ref{def:hat-funcs}.

 \begin{definition}\label{def:continuity-set}
We say that a set \( A \) is a \df{continuity set} of measure \( \mu \) if
\( \mu(\partial A) = 0 \): the boundary of \( A \) has measure 0.
 \end{definition}

 \begin{proposition}\label{propo:Portmanteau}
The following conditions are equivalent:
 \begin{enumerate}
  \item \( \mu_{n} \) weakly converges to \( \mu \).
  \item \( \mu_{n} f \to \mu f \) for all \( f \in \cF_{1} \).
  \item For every Borel set \( A \), that is a continuity set of \( \mu \), we have
\( \mu_{n}(A) \to \mu(A) \).
  \item For every closed set \( F \), \( \lim\inf_{n} \mu_{n}(F) \ge \mu(F) \).
  \item For every open set \( G \), \( \lim\sup_{n} \mu_{n}(G) \le \mu(G) \).
 \end{enumerate}
 \end{proposition}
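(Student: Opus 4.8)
The plan is to establish the five conditions equivalent by running the cycle $(1)\Rightarrow(2)\Rightarrow(5)\Rightarrow(4)\Rightarrow(3)\Rightarrow(1)$, together with the remark that $(4)$ and $(5)$ are in fact equivalent by complementation: since every $\mu_{n}$ and $\mu$ is a probability measure, $\mu_{n}(F)=1-\mu_{n}(X\xcpt F)$, so a lower bound for $\mu_{n}$ on open sets is exactly an upper bound on closed sets, with $\liminf$ and $\limsup$ interchanged. (Here I read $(4)$ as ``$\limsup_{n}\mu_{n}(F)\le\mu(F)$ for closed $F$'' and $(5)$ as ``$\liminf_{n}\mu_{n}(G)\ge\mu(G)$ for open $G$'', the forms for which the statement is correct.)

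\textbf{The easy links.} The implication $(1)\Rightarrow(2)$ I would dispatch immediately: every element of $\cF_{1}=\F_{1}(D)$ is a bounded continuous (indeed Lipschitz, $[0,1]$-valued) function, by Definition~\ref{def:hat-funcs}. For $(2)\Rightarrow(5)$ the key is that the hat functions approximate indicators of open sets from below. Given an open $G$ and a point $x\in G$, pick $\rho>0$ with $B(x,\rho)\sbs G$, then $u\in D$ with $d(x,u)<\rho/4$ and rationals $r$ and $1/n$ with $d(x,u)\le r$ and $r+1/n<\rho/2$; the hat function $g_{u,r,1/n}$ then equals $1$ at $x$ and vanishes outside $B(u,r+1/n)\sbs G$, so $g_{u,r,1/n}\le 1_{G}$. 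Any $h\in\cF_{1}$ with $h\le 1_{G}$ must vanish off $G$ (as $0\le h\le 1_{G}$), so $1_{G}$ is the pointwise supremum of the countable family of all such hat functions lying below it, hence the supremum of an increasing sequence of their finite maxima, which lie in $\cF_{1}$. Monotone convergence then gives $\mu(G)=\sup\setof{\mu(h):h\in\cF_{1},\ h\le 1_{G}}$; choosing for each $\dg>0$ an $h$ with $\mu(h)>\mu(G)-\dg$ yields $\liminf_{n}\mu_{n}(G)\ge\liminf_{n}\mu_{n}(h)=\mu(h)>\mu(G)-\dg$, and $\dg\to 0$ finishes $(5)$. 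As noted, $(5)\Leftrightarrow(4)$.

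\textbf{Closing the cycle.} For $(4)\,\&\,(5)\Rightarrow(3)$, if $A$ is a Borel continuity set of $\mu$ then $\mu(\Int A)=\mu(A)=\mu(\Cl A)$, so $\liminf_{n}\mu_{n}(A)\ge\liminf_{n}\mu_{n}(\Int A)\ge\mu(\Int A)=\mu(A)$ and $\limsup_{n}\mu_{n}(A)\le\limsup_{n}\mu_{n}(\Cl A)\le\mu(\Cl A)=\mu(A)$, whence $\mu_{n}(A)\to\mu(A)$. For $(3)\Rightarrow(1)$, take a bounded continuous $f$; after an affine change we may assume $0\le f\le 1$. The boundary of $\setof{f\ge t}$ is contained in $\setof{f=t}$ by continuity, and the level sets $\setof{f=t}$ ($0<t<1$) are pairwise disjoint, so all but countably many of them are $\mu$-null; for every other $t$ the set $\setof{f\ge t}$ is a $\mu$-continuity set, so $\mu_{n}(f\ge t)\to\mu(f\ge t)$. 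Combining the layer-cake identity $\mu f=\int_{0}^{1}\mu(f\ge t)\,dt$ (and the analogue for $\mu_{n}$) with the bounded convergence theorem gives $\mu_{n}f\to\mu f$, i.e.\ $(1)$.

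\textbf{Main obstacle.} The delicate step is $(2)\Rightarrow(5)$: one must check carefully that the finitely generated class $\cF_{1}$ already suffices to approach $1_{G}$ from below in $\mu$-measure for every open $G$ — this is exactly where separability of $\bX$ (so that the dense set $D$ exists) and the precise shape of the hat functions $g_{u,r,\eps}$ are used — and then push the approximation through a $\liminf$. The remaining steps are routine; the only other point needing attention is the ``countably many bad levels'' observation in $(3)\Rightarrow(1)$, which relies on $\mu$ being finite.
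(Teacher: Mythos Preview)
The paper does not supply a proof of this proposition; it is stated as background (the classical Portmanteau theorem, in the spirit of Billingsley~\cite{BillingsleyConverg68}) and then the text moves on. So there is no paper proof to compare against.

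Your argument is correct and is essentially the standard one, adapted to the specific generating class \( \cF_{1} \). A couple of remarks. First, you rightly observed that conditions (4) and (5) as printed have the \( \liminf/\limsup \) (equivalently, the inequality direction) swapped relative to the true statement; proceeding with the corrected reading is the only sensible thing to do. Second, the only place the argument is not purely classical is \( (2)\Rightarrow(5) \), where you must actually use the structure of \( \cF_{1}=\cF_{1}(D) \): your verification that for each \( x\in G \) one can pick \( u\in D \), rational \( r \), and \( 1/n \) so that \( g_{u,r,1/n}(x)=1 \) and \( \operatorname{supp} g_{u,r,1/n}\subset G \) is exactly what is needed, and the passage to finite maxima lands you in \( \cF_{1} \) by its very definition. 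The remaining implications \( (4)\Leftrightarrow(5) \), \( (4)\,\&\,(5)\Rightarrow(3) \), and \( (3)\Rightarrow(1) \) via the layer-cake formula and the ``at most countably many bad levels'' trick are the textbook steps and are fine as written.
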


 \begin{definition}
To define the topological space \( \cM(X) \) of the set of measures over the
metric space \( X \), we choose as subbasis
 \begin{equation}\label{eq:metric-measure-subbasis}
   \sg_{\cM}
 \end{equation}
the sets \( \setof{\mu : \mu f < r} \) and \( \setof{\mu : \mu f > r} \) for all 
\( f \in \cF_{1} \) and \( r \in \dQ \).
 \end{definition}

The simple functions we introduced can also be used to define measure and
integral in themselves.
Recall the definition of the set \( \cE \) in~\eqref{eq:bd-Lip-seq}.T
This set is a Riesz space as defined in Subsection~\ref{subsec:integral}.
A reasoning combining Propositions~\ref{propo:Caratheo-extension} 
and~\ref{propo:Riesz-extension} gives the following.

 \begin{proposition}\label{propo:metric-Riesz-extension}
Suppose that a positive linear functional \( \mu \) with \( \mu 1 = 1 \) is defined
on \( \cE \) 
that is continuous with respect to monotone convergence.
Then \( \mu \) can be extended uniquely to a probability
measure over \( \bX \) with \( \mu f = \int f(x) \mu(dx) \) for all \( f \in \cE \).
 \end{proposition}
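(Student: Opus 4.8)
*Suppose that a positive linear functional $\mu$ with $\mu 1 = 1$ is defined on $\cE$ that is continuous with respect to monotone convergence. Then $\mu$ can be extended uniquely to a probability measure over $\bX$ with $\mu f = \int f(x)\,\mu(dx)$ for all $f \in \cE$.*

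The plan is to combine the Daniell-style extension of a positive linear functional (Proposition~\ref{propo:Riesz-extension}) with the Caratheodory extension theorem (Proposition~\ref{propo:Caratheo-extension}), using the hat functions of Definition~\ref{def:hat-funcs} as the bridge between the functional-analytic and the set-theoretic pictures. First I would invoke Proposition~\ref{propo:Riesz-extension}: since $\cE$ is a Riesz space containing the constant $1$ and $\mu$ is a positive linear functional continuous on monotone sequences, $\mu$ extends by continuity to the set $\cE_{+}$ of monotone limits of nonnegative elements of $\cE$. The key observation is that for every open ball $B(u,r)$ the indicator $1_{B(u,r)}$ is a monotone (increasing) limit of hat functions $g_{u,r-1/n,1/n} \in \cF_{0}(D) \sbsq \cE$ as $n\to\infty$ (taking a dense $D$ and rational radii approximating $r$ from below), so $1_{B(u,r)} \in \cE_{+}$ and hence $\mu(1_{B(u,r)})$ is defined. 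This lets me set $\mu(B) := \mu(1_{B})$ for $B$ ranging over a countable basis of open balls with centers in $D$ and rational radii.

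Next I would check that this set function is $\sg$-additive on the algebra generated by such balls. Finite additivity follows from linearity of $\mu$ on $\cE_{+}$ together with the fact that indicators of disjoint sets add; the subtler point is countable additivity, which I would derive from the monotone continuity of $\mu$ on $\cE_{+}$: if $A_{n} \nearrow A$ are finite unions of basic balls then $1_{A_{n}} \nearrow 1_{A}$ pointwise, and monotone continuity gives $\mu(A_{n}) \to \mu(A)$. (One should be slightly careful because a countable union of algebra elements need not be in the algebra, but the standard reduction — it suffices to verify continuity from below along increasing sequences whose union lies in the algebra — applies.) Having $\sg$-additivity on the generating algebra, Caratheodory's extension theorem (Proposition~\ref{propo:Caratheo-extension}, in the topological form Proposition~\ref{propo:Caratheo-topol}) produces a unique probability measure $\mu$ on the Borel $\sg$-algebra agreeing with this set function, using $\mu(X) = \mu 1 = 1$ for normalization.

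Finally I would verify $\mu f = \int f\,d\mu$ for all $f \in \cE$. Each $f \in \cE$ is a bounded function built by rational linear combinations and $\lor,\land$ from hat functions; such $f$ can be uniformly approximated by step functions whose level sets are finite Boolean combinations of the basic balls, and on those step functions the two functionals agree by construction. Passing to the uniform limit (both functionals are continuous in the sup norm by boundedness and by Proposition~\ref{propo:measure-cont}) gives the identity on all of $\cE$. Uniqueness of the measure follows because $\cE$ separates points and a measure on a separable metric space is determined by its integrals against a dense set of bounded continuous functions (equivalently, by its values on a basis closed under intersection, Proposition~\ref{propo:Caratheo-topol}). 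The main obstacle I anticipate is the bookkeeping around countable additivity on the generated algebra — making precise that monotone continuity of the \emph{functional} on $\cE_{+}$ transfers to continuity from below of the \emph{set function}, and handling the technical gap that an increasing union of algebra elements landing in the algebra is the only case one needs — rather than any deep new idea; everything else is routine Daniell–Stone machinery applied to this particular Riesz space of Lipschitz functions.
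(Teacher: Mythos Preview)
Your proposal is correct and follows exactly the route the paper indicates: the paper does not give a detailed proof but simply states that the result follows by ``a reasoning combining Propositions~\ref{propo:Caratheo-extension} and~\ref{propo:Riesz-extension}'', which is precisely the Daniell--Caratheodory combination you have carried out. You have supplied the details the paper omits, and the one place you flag as needing care (transferring monotone continuity of the functional on $\cE_{+}$ to countable additivity of the induced set function) is indeed the only nontrivial bookkeeping step.
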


Having a topology over the set of measures we can also extend
Proposition~\ref{propo:measure-cont}:
 
 \begin{proposition}\label{propo:measure-func-cont}
Let \( X \) be a complete separable
metric space and \( \cM(X) \) the space of bounded measures over \( X \) with the
weak topology.
The function \( \tup{\mu,f}\mapsto \mu f \) is a continuous function
\( \cM(X) \times C(X) \to \dR \).
 \end{proposition}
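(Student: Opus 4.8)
The plan is to verify this joint continuity by reducing it to a statement about convergent sequences. Since $C(X)$ is a metric space (with the sup distance) and the weak topology on $\cM(X)$ has the second countability property, as noted in the discussion preceding Definition~\ref{def:continuity-set}, the product space $\cM(X)\times C(X)$ is first countable; hence it is enough to show that whenever $\tup{\mu_{n},f_{n}}\to\tup{\mu,f}$ in the product topology, we have $\mu_{n}f_{n}\to\mu f$. So fix such a sequence: $\mu_{n}\to\mu$ weakly, and $f_{n}\to f$ in $C(X)$, which means $\ag_{n}:=\sup_{x\in X}|f_{n}(x)-f(x)|\to 0$.

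The estimate is the triangle-inequality split
 \[
   |\mu_{n}f_{n}-\mu f| \le |\mu_{n}(f_{n}-f)| + |\mu_{n}f-\mu f|.
 \]
The second term tends to $0$ straight from the definition of weak convergence, since $f$ itself is a bounded continuous function. For the first term, I would bound $|\mu_{n}(f_{n}-f)|\le \mu_{n}(|f_{n}-f|)\le \ag_{n}\cdot\mu_{n}(X)$, so it suffices to know that the total masses $\mu_{n}(X)$ stay bounded. This comes for free: applying weak convergence to the constant function $1$ (which is bounded and continuous) gives $\mu_{n}(X)=\mu_{n}1\to\mu 1=\mu(X)$, hence $c:=\sup_{n}\mu_{n}(X)<\infty$. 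Therefore $|\mu_{n}(f_{n}-f)|\le c\,\ag_{n}\to 0$, and combining the two bounds yields $\mu_{n}f_{n}\to\mu f$.

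The only points that need a word of care are, first, the reduction to sequences (which rests on first countability of the product, itself a consequence of the second countability of the weak topology), and second, the local uniform bound on the total mass $\mu_{n}(X)$ — trivial in the probability-measure case but worth recording for bounded measures. If one prefers to avoid sequences altogether, the same computation works with neighborhoods directly: near a fixed $\mu_{0}$, the set $\setof{\mu:\mu 1<\mu_{0}(X)+1}$ is a weak neighborhood (a subbasic set of the form $\setof{\mu:\mu g<r}$ with $g\equiv 1$), on which $\mu(X)$ is bounded by $\mu_{0}(X)+1$, so the estimate above applies with a uniform constant while weak convergence controls the $\mu$-dependence. I do not expect a genuine obstacle here; beyond Proposition~\ref{propo:measure-cont} (which already gives continuity in $f$ for each fixed $\mu$), the whole content is the observation that testing weak convergence against the constant $1$ supplies the missing uniformity.
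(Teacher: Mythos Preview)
The paper states this proposition without proof---it sits in the appendix of background material, alongside several other standard facts (Caratheodory extension, Fubini, Prokhorov's theorem, etc.) that are quoted but not proved. So there is no paper proof to compare against.

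Your argument is correct and is the standard one. The reduction to sequences is justified (both factors are metrizable, hence first countable, and the product of first-countable spaces is first countable), the triangle split is the right move, and the only nontrivial point---that \(\sup_{n}\mu_{n}(X)<\infty\)---you handle cleanly by testing weak convergence against the constant function \(1\). The alternative neighborhood argument you sketch at the end is also fine and shows you understand what is actually being used.
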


As mentioned above, for an open set \( G \) the value \( \mu(G) \) is not a continuous
function of the measure \( \mu \).
We can only say the following:

 \begin{proposition}\label{propo:measure-lscont}
Let \( X \) be a complete separable
metric space, and \( \cM(X) \) the space of bounded measures over \( X \) with the
weak topology, and \( G\sbsq X \) an open set.
The function \( \mu\mapsto \mu(G) \) is lower semicontinuous.
 \end{proposition}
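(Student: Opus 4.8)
The plan is to exhibit $\mu\mapsto\mu(G)$ as the supremum of a sequence of continuous functions on $\cM(X)$ and then apply Proposition~\ref{propo:semicont-sup}, according to which any supremum of lower semicontinuous (in particular, continuous) functions is lower semicontinuous.

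First I would approximate the indicator $1_{G}$ from below by bounded continuous functions. Using the distance function $d(\cdot,A)$ of~\eqref{eq:Aeps} (with the convention $d(x,\emptyset)=+\infty$, so the cases $G=X$ and $G=\emptyset$ cause no trouble), put
\[
  f_{n}(x) = \min\bigl(1,\; n\cdot d(x, X\xcpt G)\bigr),\qquad n=1,2,\dots
\]
Each $f_{n}$ takes values in $[0,1]$ and is Lipschitz --- the map $x\mapsto d(x,X\xcpt G)$ is $1$-Lipschitz and $t\mapsto\min(1,nt)$ is $n$-Lipschitz --- hence bounded and continuous. Since $X\xcpt G$ is closed, every $x\in G$ has $d(x,X\xcpt G)>0$, so $f_{n}(x)\nearrow 1$ as $n\to\infty$, while $f_{n}(x)=0$ for every $x\notin G$ and every $n$; thus $f_{n}\nearrow 1_{G}$ pointwise. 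By continuity of the integral along monotone sequences (equivalently, by the bounded convergence theorem of Subsection~\ref{subsec:integral}, since $f_{n}\le 1$ and $\mu$ is finite), $\mu f_{n}$ increases to $\mu 1_{G}=\mu(G)$, so $\mu(G)=\sup_{n}\mu f_{n}$.

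It remains to note that for each fixed $n$ the map $\mu\mapsto\mu f_{n}$ is continuous on $\cM(X)$: this is immediate from Proposition~\ref{propo:measure-func-cont}, composing $\mu\mapsto\tup{\mu,f_{n}}$ with the jointly continuous map $\tup{\mu,f}\mapsto\mu f$. A continuous function is lower semicontinuous, and Proposition~\ref{propo:semicont-sup} then yields that $\mu\mapsto\mu(G)=\sup_{n}\mu f_{n}$ is lower semicontinuous, which is the claim. I do not expect a genuine obstacle here; the only point needing a little care is the monotone approximation of $1_{G}$ from below by continuous functions, which is precisely where the openness of $G$ (closedness of $X\xcpt G$) enters, via the standard metric-space fact that $d(\cdot,F)$ separates a point from a closed set $F$. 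Everything else is routine.
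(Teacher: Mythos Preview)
Your proof is correct and entirely standard: approximate $1_{G}$ from below by bounded Lipschitz functions, use monotone convergence to write $\mu(G)=\sup_{n}\mu f_{n}$, and invoke continuity of $\mu\mapsto\mu f_{n}$ together with Proposition~\ref{propo:semicont-sup}. Note that the paper states Proposition~\ref{propo:measure-lscont} as background without proof, so there is no argument to compare against; your write-up would serve well as the missing justification.
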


\subsubsection{Distances for the weak topology}\label{subsubsec:Prokh}
The definition of measures in the style of
Proposition~\ref{propo:metric-Riesz-extension}
is not sufficiently constructive.
Consider a gradual definition of the measure \( \mu \), extending it
to more and more elements of \( \cE \), while keeping the positivity and
linearity property.
It can happen that the function \( \mu \) we end up with in the limit, is not
continuous with respect to monotone convergence.
Let us therefore metrize the space of
measures: then an arbitrary measure can be defined as the limit
of a Cauchy sequence of simple meaures.

One metric that generates the topology of weak convergence is the
following.

 \begin{definition}[Prokhorov distance]\label{def:prokh}
The \df{Prokhorov distance} \( \rho(\mu, \nu) \) of two measures is
the infimum of all those \( \eps \) for which, for all Borel sets \( A \) we
have (using the notation~\eqref{eq:Aeps})
 \[
   \mu(A) \le \nu(A^{\eps}) + \eps.
 \]
 \end{definition}

It can be shown that this is a metric and it generates the weak topology.
In computer science, it has been reinvented by the name of ``earth mover
distance''.
The following important result helps visualize it:

 \begin{proposition}[Coupling Theorem, see~\protect\cite{Strassen65}]
\label{propo:coupling}
Let \( \mu,\nu \) be two probability measures over a complete separable metric
space \( \bX \) with \( \rho(\mu, \nu) \le\eps \).
Then there is a probability measure \( \Prob \) on the space \( \bX \times \bX \)
with marginals \( \mu \) and \( \nu \) such that for a pair of random variables
\( \xi,\eta \) having joint distribution \( \Prob \) we have
 \[
   \Pbof{d(\xi,\eta) > \eps} \le \eps.
 \]
 \end{proposition}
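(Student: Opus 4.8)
The plan is to split the statement into its two implications and treat the nontrivial one by reduction to finite spaces followed by a weak-compactness limit. The converse implication is a sanity check: if $\Prob$ is a coupling of $\mu,\nu$ with $\Pbof{d(\xi,\eta)>\eps}\le\eps$, then for every Borel set $A$,
\[
 \mu(A)=\Pbof{\xi\in A}\le\Pbof{\eta\in A^{\eps}}+\Pbof{d(\xi,\eta)>\eps}\le\nu(A^{\eps})+\eps,
\]
which is exactly $\rho(\mu,\nu)\le\eps$. So the content is to go from the neighborhood inequalities $\mu(A)\le\nu(A^{\eps})+\eps$ (for all Borel $A$) to the existence of the coupling. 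I would first do this for finite metric spaces and then lift it.

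\textbf{Finite case.} Assume $\bX$ is finite and $\mu(B)\le\nu(B^{\eps})+\eps$ for all $B\sbsq\bX$. Set up a transportation network: a source $s$, a sink $t$, a left and a right copy of $\bX$; arcs $s\to x$ of capacity $\mu(x)$, arcs $y\to t$ of capacity $\nu(y)$, and an arc $x\to y$ of infinite capacity whenever $d(x,y)\le\eps$. By max-flow--min-cut, the maximal $s$--$t$ flow equals the minimal cut; to avoid the infinite-capacity arcs a finite cut must keep, for the set $B$ of left vertices left on the source side, a superset of $B^{\eps}$ on the right side, and so has value $\ge (1-\mu(B))+\nu(B^{\eps})\ge 1-\eps$ by hypothesis. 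Hence there is a flow of value $m\ge 1-\eps$ supported on pairs with $d(x,y)\le\eps$; read it as a subprobability $f(x,y)$ with row sums $\le\mu(x)$, column sums $\le\nu(y)$, total mass $m$. Couple the two residual masses $\mu(x)-\sum_y f(x,y)$ and $\nu(y)-\sum_x f(x,y)$ (each of total $1-m\le\eps$) by an arbitrary product subcoupling, add it to $f$, and obtain a coupling $\Prob$ of $\mu,\nu$ with $\Pbof{d(\xi,\eta)>\eps}\le 1-m\le\eps$.

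\textbf{From finite to Polish.} Fix $\eps'>\eps$. Using separability, partition $\bX$ into countably many Borel pieces of small diameter, retain finitely many of largest $\mu$- and $\nu$-mass and lump the rest into a light remainder, and push all mass in each retained piece to a chosen representative point; this yields finitely supported $\mu',\nu'$ with $\rho(\mu,\mu')$ and $\rho(\nu,\nu')$ as small as desired, hence (triangle inequality for $\rho$) $\mu'(B)\le\nu'(B^{\eps''})+\eps''$ for some $\eps''$ slightly above $\eps$. Apply the finite case to get a coupling $\Prob'$ of $\mu',\nu'$ with $\Prob'\{d>\eps''\}\le\eps''$, transport it back by coupling within pieces arbitrarily, and after bookkeeping the diameters and the remainder mass obtain, for every $\eps'>\eps$, a coupling $\Prob_{\eps'}$ of $\mu,\nu$ with $\Prob_{\eps'}\{d(\xi,\eta)>\eps'\}\le\eps'$. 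The family $\{\Prob_{\eps'}\}$ has the fixed tight marginals $\mu,\nu$ on the Polish space $\bX$, hence is tight on $\bX\times\bX$; by the standard tightness/compactness criterion (Prokhorov's theorem) some sequence $\eps'_k\downarrow\eps$ has $\Prob_{\eps'_k}$ weakly convergent to a measure $\Prob$, which still has marginals $\mu,\nu$ because the coordinate projections are continuous. For each fixed $\eta>\eps$ the set $\{d>\eta\}$ is open and eventually $\eps'_k<\eta$, so by the Portmanteau theorem (Proposition~\ref{propo:Portmanteau}) $\Prob\{d>\eta\}\le\liminf_k\Prob_{\eps'_k}\{d>\eta\}\le\liminf_k\eps'_k=\eps$; letting $\eta\downarrow\eps$ and using $\{d>\eps\}=\bigcup_{\eta>\eps}\{d>\eta\}$ gives $\Pbof{d(\xi,\eta)>\eps}\le\eps$.

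\textbf{Main obstacle.} The genuinely delicate parts are the discretization bookkeeping — choosing the partition and the finitely many retained cells so that the Prokhorov distance between the discretized measures stays just above $\eps$ while the extra error introduced in the conclusion remains negligible — and the fact that the Prokhorov infimum underlying $\rho(\mu,\nu)\le\eps$ need not be attained, which forces one to work with $\eps'>\eps$ throughout and to recover the sharp bound only in the weak limit. The max-flow--min-cut step itself is routine once the network is laid out; its only subtlety is recognizing the minimal cuts as the sets $B$ and matching the cut value with the neighborhood inequality $\mu(B)\le\nu(B^{\eps})+\eps$.
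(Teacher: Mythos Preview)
The paper does not prove this proposition at all; it is merely stated with a citation to Strassen~\cite{Strassen65} and then used as a black box (for instance, in the comparison of the Prokhorov and Wasserstein metrics). So there is no ``paper's own proof'' to compare against.

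Your outline is one of the standard routes to Strassen's theorem and is essentially correct. The finite case via max-flow/min-cut (equivalently, a defect form of Hall's marriage theorem) is the right combinatorial core, and the cut computation you give---that a finite cut with left source-side $B$ forces the right source-side to contain $B^{\eps}$, yielding cut value $\ge (1-\mu(B))+\nu(B^{\eps})\ge 1-\eps$---is exactly the place where the Prokhorov hypothesis enters. The passage to Polish spaces by finite approximation plus tightness of couplings with fixed marginals (hence Prokhorov compactness on $\bX\times\bX$) is also standard and sound. Your Portmanteau step is fine once you spell out the containment you implicitly use: for $k$ large enough $\eps'_k<\eta$, so $\{d>\eta\}\subset\{d>\eps'_k\}$ and thus $\Prob_{\eps'_k}\{d>\eta\}\le\Prob_{\eps'_k}\{d>\eps'_k\}\le\eps'_k$; combined with $\Prob\{d>\eta\}\le\liminf_k\Prob_{\eps'_k}\{d>\eta\}$ for the open set $\{d>\eta\}$, this gives $\Prob\{d>\eta\}\le\eps$ and then $\Prob\{d>\eps\}\le\eps$ by monotone convergence in $\eta\downarrow\eps$. (Be aware that the inequalities in the paper's Proposition~\ref{propo:Portmanteau} are stated with the wrong direction---a typo---so cite the standard form.)

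The one place where a written-out proof needs care, as you note, is the discretization bookkeeping: you must keep track of how the diameters of the retained cells and the mass of the lumped remainder feed into both the Prokhorov distance between the discretized measures and the extra slack in the conclusion, and then absorb all of it into the single parameter $\eps'$. None of this is conceptually hard, but it is where sloppy write-ups tend to lose constants.
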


Since weak topology has the second countability property, 
the metric space defined by the distance \( \rho(\cdot,\cdot) \) is separable.
This can also be seen directly: let us define a dense set in weak topology.

 \begin{definition}
For each point \( x \), let us define by \( \dg_{x} \) the measure which
concentrates its total weight 1 in point \( x \).
Let \( D \) be a countable everywhere dense set of points in \( X \).
Let \( D_{\cM} \) be the set of finite convex rational combinations of measures of
the form \( \dg_{x_{i}} \) where \( x_{i}\in D \), that is
those probability measures that are concentrated on finitely many points of
\( D \) and assign rational values to them.
 \end{definition}

It can be shown that \( D_{\cM} \) is everywhere dense in the metric space
\( (\cM(X), \rho) \); so, this space is separable.
It can also be shown that \( (\cM(X), \rho) \) is complete.
Thus, a measure can be given as the limit of a Cauchy sequence of elements
\( \mu_{1},\mu_{2},\dots \) of \( D_{\cM} \).

The definition of the Prokhorov distance uses quantification over all Borel sets.
However, in an important simple case, it can be handled efficiently.

 \begin{proposition}\label{propo:simple-Prokhorov-ball}
Assume that measure \( \nu \) is concentrated on 
a finite set of points \( S \sbs X \).
Then the condition \( \rho(\nu,\mu) < \eps \) is equivalent to
the finite set of conditions
 \begin{equation}\label{eq:finite-Prokhorov}
   \mu(A^{\eps}) > \nu(A) - \eps
 \end{equation}
for all \( A \sbs S \).
 \end{proposition}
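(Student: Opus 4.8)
The plan is to prove the equivalence of the Prokhorov-ball condition $\rho(\nu,\mu)<\eps$ with the finite system~\eqref{eq:finite-Prokhorov}, using the fact that $\nu$ is concentrated on a finite set $S$. The forward direction is essentially immediate from the definition: if $\rho(\nu,\mu)<\eps$ then there is some $\eps'<\eps$ such that $\nu(A)\le\mu(A^{\eps'})+\eps'$ for \emph{all} Borel sets $A$, in particular for all $A\sbs S$; since $A^{\eps'}\sbs A^{\eps}$ and $\eps'<\eps$, this gives $\nu(A)<\mu(A^{\eps})+\eps$, which is~\eqref{eq:finite-Prokhorov}. (Here I should be slightly careful about whether the definition of $\rho$ as an infimum is attained; since only finitely many sets $A\sbs S$ matter on the $\nu$-side, and each $\mu(A^{\eps})$ is monotone in $\eps$, a standard argument lets me pass from ``$<\eps$'' to some uniform $\eps'$.)

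For the converse, suppose~\eqref{eq:finite-Prokhorov} holds for all $A\sbs S$. I need to show $\mu(B)\le\nu(B^{\eps})+\eps$ for \emph{every} Borel set $B$ — note the asymmetry, one must check the Prokhorov inequality in both directions, or invoke the known fact that the one-sided condition over all Borel sets already implies the two-sided one. The key reduction is: for an arbitrary Borel set $B$, let $A = \{s\in S : s\in B^{\eps}\}$, or rather work with $A=\{s\in S: B\cap \ol B(s,\eps)\ne\emptyset\}$ — something capturing ``which atoms of $\nu$ are within $\eps$ of $B$''. Then $B\sbs A^{\eps}$ up to the part of $B$ far from $S$, and since $\nu$ lives on $S$, $\nu(B^{\eps})=\nu(S\cap B^{\eps})=\nu(A)$ for the right choice of $A$. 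Plugging the hypothesis $\mu(A^{\eps})>\nu(A)-\eps$ and checking $\mu(B)\le\mu(A^{\eps})$ closes the loop. The bookkeeping of which finite set $A$ to associate to a given $B$, and verifying the set inclusions $B\sbs A^{\eps}$ and $A\sbs B^{\eps}$ (with the right $\eps$'s), is the technical heart.

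The main obstacle I anticipate is handling the open/closed and strict/non-strict inequality boundary issues cleanly: the Prokhorov distance is defined with $\le$ and an infimum over $\eps$, while~\eqref{eq:finite-Prokhorov} is stated with strict inequalities, so I will want a small lemma saying that $\rho(\nu,\mu)<\eps$ iff there exists rational $\eps'<\eps$ with $\mu(A^{\eps'})\ge\nu(A)-\eps'$ for all Borel $A$ — and then absorb the slack between $\eps'$ and $\eps$ into the passage between $A^{\eps'}$ and $A^{\eps}$. A secondary subtlety is the direction of the Prokhorov inequality: I will either cite the standard symmetry (that $\sup_A(\nu(A)-\mu(A^{\eps}))<\eps$ for all $A$ implies the same with $\mu,\nu$ swapped) or prove it directly in this finite-support case, where it is easier because one of the two measures is a finite convex combination of point masses. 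Once those two points are pinned down, the argument is a short chase through set inclusions of the form $B^{\eps}\cap S$, $(B^{\eps}\cap S)^{\eps}$, and so on, and I would not expect any further difficulty.
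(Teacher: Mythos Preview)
The paper states this proposition without proof, so there is no reference argument to compare against; I will assess your sketch on its own.

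Your forward direction is fine. The converse, however, has a real gap. You set out to prove $\mu(B)\le\nu(B^{\eps})+\eps$ for arbitrary Borel $B$, choosing $A=S\cap B^{\eps}$ and hoping that ``$\mu(B)\le\mu(A^{\eps})$ closes the loop''. But the hypothesis gives $\nu(A)<\mu(A^{\eps})+\eps$, i.e.\ an \emph{upper} bound on $\nu(A)$ in terms of $\mu(A^{\eps})$, whereas you need a \emph{lower} bound on $\nu(A)$ (namely $\nu(A)\ge\mu(B)-\eps$) to conclude. The two inequalities point the wrong way and do not chain. Moreover the inclusion $B\sbsq A^{\eps}$ fails whenever $B$ meets the complement of $S^{\eps}$, which you yourself flag but do not resolve.

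The confusion is that you are attacking the $\rho(\mu,\nu)$ side when the statement and the paper's definition of $\rho(\nu,\mu)$ ask directly for the $\nu$-first side: $\nu(B)\le\mu(B^{\eps'})+\eps'$ for all Borel $B$ and some $\eps'<\eps$. That direction is immediate. Given arbitrary Borel $B$, set $A=B\cap S$; since $\nu$ lives on $S$ we have $\nu(B)=\nu(A)$, and since $A\sbsq B$ we have $A^{\eps'}\sbsq B^{\eps'}$, hence $\mu(A^{\eps'})\le\mu(B^{\eps'})$. So it suffices to find a single $\eps'<\eps$ with $\nu(A)\le\mu(A^{\eps'})+\eps'$ for every $A\sbsq S$. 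This follows from the strict inequalities~\eqref{eq:finite-Prokhorov}: take $\dg=\min_{A\sbsq S}\bigl(\mu(A^{\eps})-\nu(A)+\eps\bigr)>0$ (finitely many $A$), then use $A^{\eps}=\bigcup_{\eps'<\eps}A^{\eps'}$ and continuity of $\mu$ from below to pick $\eps'\in(\eps-\dg/2,\eps)$ with $\mu(A^{\eps'})>\mu(A^{\eps})-\dg/2$ for all $A\sbsq S$. Then $\mu(A^{\eps'})+\eps'>\mu(A^{\eps})+\eps-\dg\ge\nu(A)$, and you are done. No appeal to the symmetry of $\rho$ is needed.
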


Another useful distance for measures over a bounded space is the
Wasserstein distance.

\begin{definition}
Over a bounded metric space, we define the Wasserstein distance by
\begin{align*}
   W(\mu,\nu) = \sup_{f\in\Lip_{1}(X)} |\mu f - \nu f|. 
 \end{align*}
\end{definition}

The Wasserstein distance also has a coupling representation:

 \begin{proposition}[See~\protect\cite{Villani2008}]
Let \( \mu,\nu \) be two probability measures over a complete separable metric
space \( \bX \) with \( \rho(\mu, \nu) \le\eps \).
Then there is a probability measure \( \Prob \) on the space \( \bX \times \bX \)
with marginals \( \mu \) and \( \nu \) such that
 \[
  \int d(x,y)\Prob(dx dy)=W(\mu,\nu).
 \]
 \end{proposition}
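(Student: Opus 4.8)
The plan is to derive this from the Kantorovich--Rubinstein duality together with a weak-compactness argument for the set of couplings. Write $\Pi(\mu,\nu)$ for the set of probability measures on $\bX\times\bX$ whose two marginals are $\mu$ and $\nu$; this set is nonempty, as it contains the product $\mu\times\nu$. The first step is the duality identity
\[
  W(\mu,\nu)=\sup_{f\in\Lip_{1}(\bX)}|\mu f-\nu f|=\inf_{\Prob\in\Pi(\mu,\nu)}\int d(x,y)\,\Prob(dx\,dy).
\]
The inequality ``$\le$'' is immediate: for any coupling $\Prob$ and any $f\in\Lip_{1}(\bX)$ we have $|\mu f-\nu f|=\bigl|\int (f(x)-f(y))\,\Prob(dx\,dy)\bigr|\le\int d(x,y)\,\Prob(dx\,dy)$, and one takes the supremum over $f$ and the infimum over $\Prob$. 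The reverse inequality is the substantial part; I would obtain it by a Hahn--Banach/minimax argument applied first to the bounded cost $d\land n$ and then letting $n\to\infty$, or --- since the statement is explicitly credited to~\cite{Villani2008} --- simply cite the duality from there.

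The second step is to show the infimum on the right is attained. On a complete separable metric space every probability measure is tight, so both $\{\mu\}$ and $\{\nu\}$ are tight; hence for each $\delta>0$ there is a compact $K\sbs\bX$ with $\mu(\bX\setminus K),\nu(\bX\setminus K)<\delta$, and then every $\Prob\in\Pi(\mu,\nu)$ satisfies $\Prob\bigl((\bX\times\bX)\setminus(K\times K)\bigr)<2\delta$. Thus $\Pi(\mu,\nu)$ is tight and, by Prokhorov's theorem (see~\cite{BillingsleyConverg68}), relatively compact in the weak topology; since each marginal map is weakly continuous, the marginal constraints pass to weak limits, so $\Pi(\mu,\nu)$ is in fact weakly compact.

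The third step is lower semicontinuity of the cost. The function $(x,y)\mapsto d(x,y)$ is nonnegative and continuous, hence the pointwise increasing limit of the bounded continuous functions $(x,y)\mapsto d(x,y)\land n$. By Proposition~\ref{propo:measure-func-cont} applied on the Polish space $\bX\times\bX$, the map $\Prob\mapsto\int (d\land n)\,d\Prob$ is weakly continuous for each $n$, and by Proposition~\ref{propo:semicont-sup} its supremum over $n$, namely $\Prob\mapsto\int d\,d\Prob$, is weakly lower semicontinuous. A lower semicontinuous function on a compact space attains its minimum, so there is $\Prob^{*}\in\Pi(\mu,\nu)$ with $\int d(x,y)\,\Prob^{*}(dx\,dy)=\inf_{\Prob\in\Pi(\mu,\nu)}\int d\,d\Prob$, which by the first step equals $W(\mu,\nu)$. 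This $\Prob^{*}$ is the required coupling; the hypothesis $\rho(\mu,\nu)\le\eps$ is irrelevant to the conclusion and can be dropped.

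The main obstacle is the nontrivial direction of the Kantorovich--Rubinstein duality: everything else is a routine tightness-plus-lower-semicontinuity packaging of the kind already used in the text (compare the proof of Theorem~\ref{thm:neutral-meas}), whereas the duality requires a genuine functional-analytic input --- a minimax theorem, or a separation argument in an appropriate space of measures. Since the proposition is attributed to~\cite{Villani2008}, I would invoke the duality from there rather than reprove it, and give in full only the compactness and semicontinuity steps that yield the optimal coupling.
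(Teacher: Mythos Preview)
The paper does not prove this proposition; it simply attributes the result to~\cite{Villani2008}. Your outline is the standard argument (duality plus tightness of couplings plus lower semicontinuity of the cost) and is correct, so there is nothing to contradict.

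Two small remarks. First, the paper defines the Wasserstein distance only over a \emph{bounded} metric space, so \(d\) is a bounded continuous function on \(\bX\times\bX\); hence \(\Prob\mapsto\int d\,d\Prob\) is already weakly continuous by Proposition~\ref{propo:measure-func-cont}, and your truncation \(d\land n\) and the appeal to Proposition~\ref{propo:semicont-sup} are unnecessary (though harmless). Second, you are right that the hypothesis \(\rho(\mu,\nu)\le\eps\) is spurious---it appears to be a copy-paste remnant from the preceding Coupling Theorem (Proposition~\ref{propo:coupling}) and plays no role in the conclusion.
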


 \begin{proposition}
The Prokhorov and Wasserstein metrics are equivalent:
the identity function creates a uniformly continuous homeomorphism
between the two metric spaces.
 \end{proposition}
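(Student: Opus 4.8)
The plan is to show that the identity map between $(\cM(X),\rho)$ (Prokhorov) and $(\cM(X),W)$ (Wasserstein) is uniformly continuous in both directions, since both spaces have the same underlying set; establishing uniform continuity of $\id$ in both directions is exactly what ``the identity function creates a uniformly continuous homeomorphism'' asks for, and it automatically implies the two metrics generate the same (weak) topology. The work splits into two one-sided estimates: (i) for every $\eps>0$ there is $\dg>0$ with $\rho(\mu,\nu)<\dg \imp W(\mu,\nu)<\eps$, and (ii) for every $\eps>0$ there is $\dg>0$ with $W(\mu,\nu)<\dg\imp \rho(\mu,\nu)<\eps$. Throughout I would assume, as the paper does in this subsection, that the metric space $\bX$ is bounded, say $\operatorname{diam}(\bX)\le R$, so that all the relevant integrals are finite and the Wasserstein distance is well defined.

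For direction (i), the clean route is through the Coupling Theorem (Proposition~\ref{propo:coupling}): if $\rho(\mu,\nu)\le\eps'$ then there is a coupling $\Prob$ of $\mu,\nu$ with $\Pbof{d(\xi,\eta)>\eps'}\le\eps'$. Then for any $f\in\Lip_{1}(X)$,
\[
  |\mu f-\nu f| = \Bigl|\int (f(x)-f(y))\,\Prob(dx\,dy)\Bigr|
    \le \int d(x,y)\,\Prob(dx\,dy)
    \le \eps' + R\eps',
\]
splitting the integral over $\{d(\xi,\eta)\le\eps'\}$ and its complement and using $|f(x)-f(y)|\le d(x,y)\le R$. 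Taking the supremum over $f\in\Lip_{1}(X)$ gives $W(\mu,\nu)\le(1+R)\eps'$, so $\dg=\eps/(1+R)$ works. (One should be slightly careful that the Coupling Theorem is stated for $\rho(\mu,\nu)\le\eps$, so one applies it with $\eps'$ slightly larger than $\rho(\mu,\nu)$ and lets $\eps'\downarrow\rho(\mu,\nu)$.)

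For direction (ii), I would go the elementary way: bound $\rho$ in terms of $W$ by using hat functions to approximate indicators of the ``blown-up'' sets $A^{\eps}$. Given a Borel set $A$ and $\eps>0$, the function $x\mapsto |1-d(x,A)/\eps|^{+}$ is a Lipschitz$(1/\eps)$ function, equal to $1$ on $A$ and $0$ outside $A^{\eps}$; call it $h_{A,\eps}$. Then $1_{A}\le h_{A,\eps}\le 1_{A^{\eps}}$, so
\[
  \mu(A)\le \mu h_{A,\eps} = \nu h_{A,\eps} + (\mu h_{A,\eps}-\nu h_{A,\eps})
    \le \nu(A^{\eps}) + \tfrac{1}{\eps}W(\mu,\nu),
\]
using that $\eps\, h_{A,\eps}\in\Lip_{1}(X)$. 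Hence if $W(\mu,\nu)<\eps^{2}$ then $\mu(A)\le\nu(A^{\eps})+\eps$ for all Borel $A$, which is precisely $\rho(\mu,\nu)\le\eps$; so $\dg=\eps^{2}$ works.

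Combining (i) and (ii): $\id:(\cM(X),\rho)\to(\cM(X),W)$ and its inverse are both uniformly continuous, hence $\id$ is a homeomorphism and the two metrics are equivalent. The main obstacle, such as it is, is bookkeeping with the Coupling Theorem's non-strict inequality in direction (i)—one has to pass to a limit of couplings or simply apply the theorem at level $\rho(\mu,\nu)+\delta$ and let $\delta\to 0$—and, more conceptually, making sure the boundedness hypothesis on $\bX$ is invoked wherever the factor $R=\operatorname{diam}(\bX)$ appears (it is needed for direction (i) but not for direction (ii)). Everything else is the routine hat-function manipulation already used in Definition~\ref{def:hat-funcs} and the triangle-type estimates standard for these two metrics.
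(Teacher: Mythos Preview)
Your proof is correct and follows essentially the same route as the paper's: the Coupling Theorem for the $\rho\Rightarrow W$ direction (yielding $W\le(1+R)\rho$), and the Lipschitz hat function $|1-d(x,A)/\eps|^{+}$ for the $W\Rightarrow\rho$ direction (yielding $W\le\eps^{2}\Rightarrow\rho\le\eps$). The paper's argument is line-for-line the same, including the same two explicit moduli of continuity.
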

 \begin{proof}
Let \( M=\sup_{x,y\in X} d(x,y) \).
The proof actually shows for \( \eps<1 \):
 \begin{align*}
       \rho(\mu,\nu)\le\eps   &\imp W(\mu,\nu) \le (M+1)\eps,
\\   W(\mu,\nu)\le\eps^{2} &\imp \rho(\mu,\nu) \le \eps.
 \end{align*}
Suppose \( \rho(\mu,\nu)\le\eps<1 \).
By the Coupling Theorem (Proposition~\ref{propo:coupling}), there are random
variables \( \xi,\eta \) over \( X \) with a joint distribution, and having respectively
the distribution \( \mu \) and \( \nu \), such that \( \Pbof{d(\xi,\eta)>\eps}\le\eps \).
Then we have
 \begin{align*}
   |\mu f - \nu f| &= |\Expv f(\xi) - \Expv f(\eta)| \le \Expv |f(\xi) - f(\eta)| 
\\   &\le \eps\Pbof{\rho(\xi,\eta)\le\eps} + M\Pbof{\rho(\xi,\eta)>\eps}
       \le (M+1)\eps.
 \end{align*}
Now suppose \( W(\mu,\nu)\le\eps^{2}<1 \).
For a Borel set \( A \) define \( g_{\eps}^{A}(x)=|1-\rho(x,A)/\eps|^{+} \).
Then we have \( \mu(A)\le \mu(g_{\eps}^{A}) \) and 
\( \nu g_{\eps}^{A} \le \nu(A^{\eps}) \).
Further  \( \eps g_{\eps}^{A}\in\Lip_{1} \), and hence \( W(\mu,\nu)\le\eps^{2} \)
implies \( \eps\mu(g_{\eps}^{A})\le \eps\nu(\eps g_{\eps}^{A})+\eps^{2} \).
This concludes the proof by
 \begin{align*}
 \mu(A) \le \mu(g_{\eps}^{A})\le \nu(\eps g_{\eps}^{A})+\eps \le
 \nu(A^{\eps})+\eps. 
 \end{align*}
 \end{proof}

\subsubsection{Relative compactness}

Convergence over the set of measures, even over a noncompact space, is
sometimes obtained via compactness properties.

 \begin{definition}
A set \( \Pi \)
of measures in \( (\cM(X), \rho) \) is called \df{sequentially compact} if every
sequence of elements of \( \Pi \) contains a convergent subsequence.

A set of \( \Pi \) of measures is called \df{tight} if for every \( \eps \) there
is a compact set \( K \) such that \( \mu(K) > 1 - \eps \) for all \( \mu \) in \( \Pi \).
 \end{definition}

The following important theorem is using our assumptions of separability and
completeness of the underlying metric space \( (X,\rho) \).

 \begin{proposition}[Prokhorov]
A set of measures is sequentially compact if
and only if it is tight and if and only if its closure is compact
in the metric space \( (\cM(X), \rho) \).
 \end{proposition}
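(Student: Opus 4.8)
Since $(\cM(\bX),\rho)$ is a metric space, sequential compactness and compactness coincide on it, and a subset $\Pi$ is sequentially compact in the sense defined above precisely when its closure is compact (a sequence in $\overline\Pi$ is approximated by one in $\Pi$, and limits stay in the closed set $\overline\Pi$). Thus the first and third conditions are equivalent for free, and it remains to prove that \emph{tightness} is equivalent to \emph{relative compactness} of $\Pi$ (i.e.\ compactness of $\overline\Pi$). The plan is to prove the two implications separately; completeness and separability of $\bX$ will be used only in the second one.

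First I would show \emph{tightness} $\Rightarrow$ \emph{relative compactness}. Let $(\mu_n)$ be a sequence in $\Pi$. Fix a countable dense set $D=\{d_1,d_2,\dots\}\subseteq X$ and let $\mathcal A$ be the countable family of finite unions of open balls $B(d_i,q)$, $q$ a positive rational. As $\mu_n(A)\in[0,1]$, a diagonal argument yields a subsequence --- still denoted $(\mu_n)$ --- along which $\alpha(A):=\lim_n\mu_n(A)$ exists for all $A\in\mathcal A$. For open $G$ put $\beta(G)=\sup\{\alpha(A):A\in\mathcal A,\ \Cl{A}\subseteq G\}$ and for arbitrary $E$ put $\mu^*(E)=\inf\{\beta(G):G\text{ open},\ E\subseteq G\}$; the standard Carath\'eodory-type verifications show $\mu^*$ is an outer measure whose measurable sets include all Borel sets, hence $\mu^*$ restricts to a Borel measure $\mu$. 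The one substantive point is that tightness prevents loss of mass: given $\eps$, take the compact $K$ with $\mu_n(K)>1-\eps$ for all $n$, cover $K$ by finitely many balls from $\mathcal A$ (slightly enlarging them to control closures) to conclude $\mu(\bX)\ge 1-\eps$; hence $\mu(\bX)=1$ and $\mu\in\cM(\bX)$. Finally I would verify $\mu_n\to\mu$ weakly by checking the open-set inequality among the equivalent conditions in Proposition~\ref{propo:Portmanteau}, which drops out of the definition of $\beta$ together with $\mu(\bX)=1$. This produces a weakly convergent subsequence, so $\overline\Pi$ is compact.

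Next, \emph{relative compactness} $\Rightarrow$ \emph{tightness}. The crux is the lemma: for every $\eps,\delta>0$ there is a finite union $B$ of open $\delta$-balls with $\mu(B)>1-\eps$ for all $\mu\in\Pi$. To prove it, assume the contrary; choose an increasing sequence $B_1\subseteq B_2\subseteq\cdots$ of finite unions of $\delta$-balls centered in $D$ with $\bigcup_kB_k=\bX$, and $\mu_k\in\Pi$ with $\mu_k(B_k)\le 1-\eps$. By relative compactness a subsequence converges weakly, $\mu_{k_j}\to\nu$. For fixed $m$ and $k_j\ge m$ the open set $B_m$ satisfies $\mu_{k_j}(B_m)\le\mu_{k_j}(B_{k_j})\le 1-\eps$, so Proposition~\ref{propo:Portmanteau} gives $\nu(B_m)\le 1-\eps$; letting $m\to\infty$ yields $\nu(\bX)\le 1-\eps<1$, a contradiction. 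Now apply the lemma with $\delta=1/n$ and $\eps$ replaced by $\eps 2^{-n}$ to get finite unions $B_n$ of $(1/n)$-balls with $\mu(\bX\setminus B_n)<\eps 2^{-n}$ for all $\mu\in\Pi$. Then $K:=\bigcap_n\Cl{B_n}$ is closed and totally bounded (for each $n$ it is covered by the finitely many balls whose union is $\Cl{B_n}$, up to enlarging radii), hence compact since $\bX$ is complete, and $\mu(\bX\setminus K)\le\sum_n\mu(\bX\setminus\Cl{B_n})<\eps$, so $\mu(K)>1-\eps$ for all $\mu\in\Pi$. Thus $\Pi$ is tight.

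The main obstacle is the first implication: the passage from the subsequential limits $\alpha(A)$ on the countable family $\mathcal A$ to a genuine Borel \emph{probability} measure $\mu$ --- the Carath\'eodory extension bookkeeping, and above all the use of tightness to show $\mu(\bX)=1$ (no escape of mass) followed by the verification of weak convergence via Proposition~\ref{propo:Portmanteau}. The converse implication is more mechanical once the lemma is in place, and it is there that completeness (total boundedness plus closedness implies compactness) and separability (the countable dense set $D$) of $\bX$ are essential.
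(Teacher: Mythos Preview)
The paper does not prove this proposition: it sits in the appendix on background measure theory and is quoted as a known result (the surrounding text cites Billingsley), so there is no ``paper's own proof'' to compare against. Your plan is the standard Billingsley-style argument and is correct in outline: the diagonal extraction on a countable family followed by an outer-measure construction (with tightness guaranteeing no mass escapes) for one direction, and the finite-cover lemma plus total boundedness for the other, are exactly the classical steps. One small caveat: the open/closed-set inequalities in the paper's statement of Proposition~\ref{propo:Portmanteau} appear to be typos (the directions are reversed from the usual Portmanteau theorem); your argument uses the correct inequality $\liminf_n\mu_n(G)\ge\mu(G)$ for open $G$, so just be sure to invoke the standard version rather than the paper's printed form.
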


The following, simplest example is interesting in its own right.

\begin{example}
The one-element set \( \{\mu\} \) is compact and therefore
by Prokhorov's theorem tight.
Here, tightness says that for each \( \eps \) a mass of size \( 1-\eps \) of \( \mu \) is
concentrated on some compact set.
\end{example}

The following theorem strengthens the observation of this example.

\begin{proposition}\label{propo:inner-regular}
A finite measure \( \mu \) over a separable complete metric space has the
property
 \begin{align*}
   \mu(B) = \sup\setof{\mu(K) :\txt{ compact } K \sbsq B}
 \end{align*}
for all Borel sets \( B \).
\end{proposition}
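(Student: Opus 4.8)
The plan is to combine two ingredients already available in this subsection: the inside approximation of Borel sets by closed sets, and the tightness of a single finite measure on a Polish space. The inequality $\mu(B) \ge \sup\setof{\mu(K) : \txt{compact } K\sbsq B}$ is immediate from monotonicity of $\mu$, so only the reverse direction requires an argument. Fix a Borel set $B$ and $\eps > 0$.

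\textbf{Step 1: approximation by a closed set.} I would invoke the fact recalled at the start of Subsection~\ref{subsec:measure-metric}: since $\mu$ is finite, there are $F\sbsq B\sbsq G$ with $F$ an $F_{\sg}$ set, $G$ a $G_{\dg}$ set, and $\mu(F)=\mu(G)$; in particular $\mu(F)=\mu(B)$. Writing $F=\bigcup_n F_n$ with each $F_n$ closed and (after replacing $F_n$ by $F_1\cup\dots\cup F_n$) increasing, continuity of $\mu$ from below gives some closed $F'=F_n\sbsq B$ with $\mu(B\setminus F')<\eps/2$.

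\textbf{Step 2: tightness.} The one-element family $\{\mu\}$ is trivially compact in $(\cM(X),\rho)$, hence tight by Prokhorov's theorem, exactly as noted in the Example preceding this proposition. (Equivalently, one constructs the compact set directly: at each scale $2^{-m}$ cover $X$ by countably many balls, pick finitely many whose union $U_m$ satisfies $\mu(X\setminus U_m)<\eps 2^{-m-2}$, and let $K_0$ be the closure of $\bigcap_m U_m$, which is complete and totally bounded, hence compact; this is where completeness and separability are used.) Thus there is a compact $K_0\sbsq X$ with $\mu(X\setminus K_0)<\eps/2$.

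\textbf{Step 3: conclusion.} Put $K = F'\cap K_0$. This is a closed subset of the compact set $K_0$, hence compact, and $K\sbsq F'\sbsq B$. Then
\[
  \mu(B\setminus K) \le \mu(B\setminus F') + \mu(X\setminus K_0) < \eps,
\]
so $\mu(K) > \mu(B)-\eps$. As $\eps>0$ was arbitrary, $\mu(B) \le \sup\setof{\mu(K):\txt{compact } K\sbsq B}$, and equality follows. The only substantive input is Prokhorov's tightness theorem (or the explicit covering construction sketched in Step 2); everything else is routine, so there is no real obstacle beyond recognizing that inner regularity by closed sets together with tightness yields inner regularity by compact sets.
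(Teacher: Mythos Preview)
The paper does not actually supply a proof of this proposition: it is stated in the background appendix as a known fact (``The following theorem strengthens the observation of this example'') and left unproved, like most of the standard results in that section. Your argument is correct and is the standard one; in particular it makes exactly the intended use of the two facts the paper records nearby---the $F_{\sg}/G_{\dg}$ approximation of Borel sets and the tightness of a single measure coming from the Example preceding the proposition---so there is nothing to compare against, and nothing to fix.
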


In case of compact metric spaces, the following known theorem helps.

  \begin{proposition}\label{propo:measures-compact}
The metric space \( (\cM(\bX), \rho) \) of measures is compact if and
only if the underlying  metric space \( (X, d) \) is compact.
  \end{proposition}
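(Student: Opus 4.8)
The metric space $(\cM(\bX), \rho)$ of measures is compact if and only if the underlying metric space $(X, d)$ is compact.

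The plan is to prove both directions, using Prokhorov's theorem (the immediately preceding proposition) together with the tightness characterization of sequential compactness and the equivalence of compactness with sequential compactness in metric spaces.

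First I would prove the easier implication: if $(X,d)$ is compact, then $(\cM(\bX),\rho)$ is compact. The whole set $\cM(\bX)$ of probability measures is trivially tight, since for $\eps > 0$ we may take $K = X$ itself, which is compact, and $\mu(X) = 1 > 1-\eps$ for every $\mu$. By Prokhorov's theorem, tightness of $\cM(\bX)$ implies that the closure of $\cM(\bX)$ is compact; but $\cM(\bX)$ is its own closure (it is closed: a weak limit of probability measures is a probability measure, since applying the limit to the constant function $1$ gives $\mu(X) = \lim \mu_n(X) = 1$, and monotone limits preserve nonnegativity). Hence $\cM(\bX)$ is compact. Alternatively one could observe that $\cM(\bX)$ embeds as a weak-* closed subset of the unit ball of $C(X)^{*}$, which is weak-* compact by Banach--Alaoglu when $X$ is compact (so $C(X)$ is separable), but the Prokhorov route is cleaner given what is available in the text.

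For the converse, suppose $(\cM(\bX),\rho)$ is compact; I want to conclude $(X,d)$ is compact. The key is that $X$ embeds isometrically (up to a constant factor, and homeomorphically onto its image) into $\cM(\bX)$ via the Dirac map $x \mapsto \dg_x$. Concretely, $\rho(\dg_x, \dg_y) = \min(d(x,y), 1)$ — or at any rate this map is a homeomorphism onto its image, which is the standard fact that weak convergence $\dg_{x_n} \to \dg_x$ holds iff $x_n \to x$. Moreover the image $\setof{\dg_x : x \in X}$ is closed in $\cM(\bX)$: if $\dg_{x_n} \to \nu$ weakly, then along a convergent subsequence (using that $\cM(\bX)$ is compact, hence sequentially compact) $\nu$ must itself be a Dirac measure — one shows $\nu$ has an atom of full mass by a tightness/variance argument, or more simply: $\nu$ cannot be spread out, since for any two disjoint open sets the mass of $\dg_{x_n}$ eventually sits entirely in one of them. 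Thus $\setof{\dg_x : x \in X}$ is a closed subset of a compact metric space, hence compact, hence $X$ itself (being homeomorphic to it) is compact.

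The main obstacle I anticipate is the converse direction, specifically verifying cleanly that the Dirac image is \emph{closed} in $(\cM(\bX),\rho)$ — i.e., that a weak limit of Dirac measures is again a Dirac measure. One has to rule out the limit being, e.g., $\frac12\dg_a + \frac12\dg_b$; this cannot actually happen as a weak limit of $\dg_{x_n}$, but making the argument rigorous requires either: (i) showing $x_n$ is Cauchy in $X$ (if $x_n$ had two subsequential cluster behaviors at distance $\delta$, pick a Lipschitz function separating the two clusters and derive a contradiction with convergence of $\mu_n f$); or (ii) directly showing the limit measure $\nu$ satisfies $\nu(\{x\}) = 1$ for some $x$ by exploiting that $\nu$ is a limit of point masses and hence must be "rigid". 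Approach (i) is probably the shortest: show that $(\dg_{x_n})$ convergent implies $(x_n)$ Cauchy in $(X,d)$, conclude $x_n \to x$ for some $x$ (here separability/completeness of $X$ is used — actually only completeness, which we are assuming throughout this subsection), and then the limit is $\dg_x$. Combined with sequential compactness of $\cM(\bX)$, this gives: every sequence in $X$ has a subsequence $(x_{n_k})$ with $(\dg_{x_{n_k}})$ convergent, hence $(x_{n_k})$ convergent in $X$; so $X$ is sequentially compact, hence compact. This last formulation sidesteps the closedness claim entirely and is the route I would actually take.
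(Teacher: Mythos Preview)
The paper states this proposition as a ``known theorem'' and does not supply a proof, so there is nothing to compare against. Your argument is correct in both directions. The forward direction via Prokhorov with \(K=X\) is standard. For the converse, your final route is the clean one: from \(\rho(\dg_{x},\dg_{y})=\min(d(x,y),1)\) you get that a \(\rho\)-Cauchy sequence of Diracs yields a \(d\)-Cauchy sequence of points, and completeness of \(X\) (which is a standing assumption in this subsection) finishes it. This indeed sidesteps the closedness issue entirely, and you were right to prefer it over the messier attempt to show directly that a weak limit of Diracs is a Dirac.
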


So, if \( (X, d) \) is not compact then the set of measures is not compact.
But still, by Proposition~\ref{propo:inner-regular}, each finite measure
is ``almost'' concentrated on a compact set.

\subsubsection{Semimeasures}

Let us generalize then notion of semimeasure for the case of general Polish
spaces.
We use Proposition~\ref{propo:metric-Riesz-extension} as a starting point.

\begin{definition}
  A function \( f\mapsto \mu f \) defined over the set of all bounded
continuous functions  is called a \df{semimeasure} if it has the following
properties:
\begin{alphenum}
  \item Nonnegative: \( f\ge 0 \) implies \( \mu f \ge 0 \).
  \item Positive homogenous: we have \( \mu(a f) = a \mu f \) for all
nonnegative real \( a>0 \).
  \item Superadditive:  \( \mu(f+g) \ge \mu f + \mu g \).
  \item Normed: \( \mu 1 =1 \).
\end{alphenum}
\end{definition}

\begin{remark}
  The weaker requirement \( \mu 1 \le 1 \) suffices, but if we have \( \mu 1 \le 1 \)
we can always set simply \( \mu 1 = 1 \) without violating the other requirements.
\end{remark}

\subsubsection{Functionals of measures}

For a deeper study of randomness tests, we will need to characterize
certain functionals of finite measures over a Polish space.

\begin{proposition}
Let \( \bX = (X,d) \) be a complete separable metric space with \( \tilde\cM(\bX) \) the
set of finite measures over it.
A weakly continuous linear function \( F:\tilde\cM(\bX)\to\dR \) can always be
written as \( F(\mu) = \mu f \) for some bounded continuous function \( f \) over \( \bX \).
\end{proposition}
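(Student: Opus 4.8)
The plan is to take $f(x) = F(\dg_{x})$, where $\dg_{x}$ is the Dirac measure concentrated at $x$, and to show that this $f$ does the job. First I would check that $f$ is continuous. If $x_{n}\to x$ in $X$, then for every bounded continuous $g$ we have $\dg_{x_{n}}g = g(x_{n})\to g(x) = \dg_{x}g$, so $\dg_{x_{n}}\to\dg_{x}$ in the weak topology; since $F$ is weakly continuous, $f(x_{n}) = F(\dg_{x_{n}})\to F(\dg_{x}) = f(x)$. As $X$ is a metric space, sequential continuity gives continuity, so $f$ is continuous.

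Next I would show that $f$ is bounded. Suppose not; then there is a sequence $x_{n}$ with $|f(x_{n})|\ge n$, and after passing to a subsequence we may assume $f(x_{n})>0$ for all $n$ (the case of negative values is symmetric). Put $\mu_{n} = f(x_{n})^{-1}\dg_{x_{n}}$, a finite (indeed sub-probability) measure. For every bounded continuous $g$ with $\sup|g| = M$ we get $|\mu_{n}g|\le M/n\to 0$, so $\mu_{n}$ converges weakly to the zero measure $0\in\tilde\cM(\bX)$. By homogeneity $F(0)=0$, so weak continuity of $F$ forces $F(\mu_{n})\to 0$; but by homogeneity $F(\mu_{n}) = f(x_{n})^{-1}F(\dg_{x_{n}}) = 1$ for every $n$, a contradiction. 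Hence $f$ is bounded, i.e.\ $f\in C(X)$.

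It remains to prove $F(\mu) = \mu f$ for every finite $\mu$. By linearity this already holds for every finitely supported measure $\mu = \sum_{i=1}^{k} a_{i}\dg_{x_{i}}$ with $a_{i}\ge 0$, since both sides equal $\sum_{i} a_{i}f(x_{i})$. Now consider $G(\mu) = F(\mu) - \mu f$: it is weakly continuous, the first term by hypothesis and the second by Proposition~\ref{propo:measure-func-cont} (stated there for all bounded measures), and it vanishes on the finitely supported measures. The finitely supported measures are weakly dense in $\tilde\cM(\bX)$: for $\mu$ of total mass $t>0$ write $\mu = t\nu$ with $\nu$ a probability measure, approximate $\nu$ weakly by elements $\nu_{n}$ of the dense set $D_{\cM}$, and observe that $t\nu_{n}\to t\nu = \mu$ weakly (for bounded continuous $g$, $t\nu_{n}g\to t\nu g$); the case $t=0$ is trivial. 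Since a continuous function vanishing on a dense set is identically zero, $G\equiv 0$, which is the assertion.

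The step I expect to be the main obstacle is the boundedness of $f$: it is where linearity and continuity on the \emph{cone} of finite measures must be combined, via the rescaled point masses $f(x_{n})^{-1}\dg_{x_{n}}$ shrinking to the zero measure, and one must be sure that "weak convergence of finite measures'' genuinely includes convergence to $0$ in the topology carried by $\tilde\cM(\bX)$. The remaining ingredients---continuity of $\mu\mapsto\mu f$ and weak density of finitely supported measures---are routine, though the latter needs the trivial rescaling noted above because the set $D_{\cM}$ introduced earlier consists only of probability measures.
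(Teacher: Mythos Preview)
Your proof is correct and follows essentially the same approach as the paper: define \( f(x)=F(\dg_{x}) \), verify continuity and boundedness, then conclude by density of finitely supported measures and continuity of both sides. The only variant is in the boundedness step, where the paper instead forms the single probability measure \( \mu=\sum_{n}2^{-n}\dg_{x_{n}} \) (with \( f(x_{n})>2^{n} \)) and notes that continuity plus linearity would force \( F(\mu)=\sum_{n}2^{-n}f(x_{n})=\infty \); your rescaling argument \( \mu_{n}=f(x_{n})^{-1}\dg_{x_{n}}\to 0 \) is equally valid, and your caveats about the zero measure and about rescaling \( D_{\cM} \) to reach all finite measures are well placed.
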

\begin{proof}
  Define \( f(x)=F(\dg_{x}) \).
The weak continuity of \( F \) implies that \( f(x) \) is continuous.
Let us show that it is bounded.
Suppose it is not.
Then there is a sequence of distinct points \( x_{n}\in X \), \( n=1,2,\dots \)
with \( f(x_{n})>2^{-n} \).
Let \( \mu \) be the measure \( \sum_{n}2^{-n}\dg_{x_{n}} \), then by linearity we have
\( L(\mu)=\sum_{n}2^{-n}f(x_{n})>\sum_{n} 1=\infty \), which is not allowed.

The function \( \mu\mapsto\mu f \) is continuous and coincides with \( F(\mu) \) on the
dense set of points \( D_{\cM} \), so they are equal.
\end{proof}

\chapter{Constructivity}
\section{Computable topology}

\subsection{Representations}\label{subsec:notation-repr}

There are several equivalent ways that notions of computability can be
extended to spaces like real numbers, metric spaces, measures, and so
on.
We use the 
the concepts of numbering (notation) and representation, as defined 
in~\cite{WeihrauchComputAnal00}.

\begin{notation}
We will denote by \( \dN \) the set of natural numbers and by \( \dB \) the set
\( \{0,1\} \).

Given a set (an alphabet) \( \Sg \) we denote by \( \Sg^{\dN}=\Sg^{\dN} \) the set
of infinite sequences with elements in \( \Sg \).

If for some finite or infinite sequences \( x,y,z,w \), we have
\( z = wxy \) then we write \( w \prefix  z \) (\( w \) is a \df{prefix} of \( z \)) and 
\( x \ltri z \).
After~\cite{WeihrauchComputAnal00}, 
let us define the \df{wrapping function} \( \ig : \Sg^{*} \to \Sg^{*} \) by
 \begin{equation}\label{eq:ig}
   \ig(a_{1}a_{2}\dotsm a_{n}) = 110a_{1}0a_{2}0\dotsm a_{n}011.
 \end{equation}
Note that 
 \begin{equation}\label{eq:ig-len}
   \len{\ig(x)} = (2 \len{x} + 5)\lor 6.
 \end{equation}
For strings \( x,x_{i} \in \Sg^{*} \), \( p, p_{i} \in \Sg^{\dN} \), \( k \ge 1 \),
appropriate tupling functions \( \tupcod{x_{1},\dots,x_{k}} \),
\( \tupcod{x,p} \), \( \tupcod{p,x} \), and so on can be defined with the help of
\( \tupcod{\cdot,\cdot} \) and \( \ig(\cdot) \).
\end{notation}

 \begin{definition}
Given a countable set \( C \), a \df{numbering} (or \df{notation}) 
of \( C \) is a surjective partial mapping \( \dg :\dN \to C \).
Given some finite alphabet \( \Sg \spsq \{0,1\} \) and an arbitrary set \( S \),
a \df{representation} of \( S \) is a surjective partial mapping
\( \chi :\Sg^{\dN} \to S \).
A \df{naming system} is a notation or a representation.
 \end{definition}
Here are some standard naming systems:
 \begin{enumerate}

  \item \( \id \), the identity over \( \Sg^{*} \) or \( \Sg^{\dN} \).

  \item \( \nu_{\dN} \), \( \nu_{\dZ} \), \( \nu_{\dQ} \) for the set of natural
numbers, integers and rational numbers.

  \item \( \Cf : \Sg^{\dN} \to 2^{\dN} \), the \df{characteristic function
representation} of sets of natural numbers, is defined by
\( \Cf(p) = \setof{i : p(i) = 1} \).

  \item \( \En : \Sg^{\dN} \to 2^{\dN} \), the \df{enumeration representation} of
sets of natural numbers, is defined by
\( \En(p) = \setof{n \in \dN : 110^{n+1}11 \ltri p} \).

  \item For \( \Dg \sbsq \Sg \), \( \En_{\Dg} : \Sg^{\dN} \to 2^{\Dg^{*}} \), 
the \df{enumeration representation} of subsets of \( \Dg^{*} \), is defined by
\( \En_{\Dg}(p) = \setof{w \in \Sg^{*} : \ig(w) \ltri p} \).
 \end{enumerate}

Using Turing machines with infinite input tapes, work tapes and output
tapes, one can define names for all
computable functions between spaces that are Cartesian 
products of terms of the kind \( \Sg^{*} \) and \( \Sg^{\dN} \).
(One wonders whether \( \Sg^{*}\cup\Sg^{\dN} \) is not needed, since a Turing
machine with an infinite input tape may still produce only a finite output.
But in this case one can also encode the result into an infinite sequence.)
Then, the notion of computability can be transferred to other spaces as
follows.

 \begin{definition}
Let \( \dg_{i} : Y_{i} \to X_{i} \), \( i=1,0 \) be naming systems of the spaces
\( X_{i} \).
Let \( f : \sbsq X_{1} \to X_{0} \), \( g : \sbsq Y_{1} \to Y_{0} \).
We say that function \( g \) \df{realizes} function \( f \) if
 \begin{equation}\label{eq:realize}
   f(\dg_{1}(y)) = \dg_{0}(g(y))
 \end{equation}
holds for all \( y \) for which the left-hand side is defined.

Realization of multi-argument functions is defined similarly.
We say that a function \( f : X_{1} \times X_{2} \to X_{0} \) 
is \df{\( (\dg_{1},\dg_{2},\dg_{0}) \)-computable} if
there is a computable function \( g : \sbsq Y_{1} \times Y_{2} \to Y_{0} \)
realizing it.
In this case, a name for \( f \) is naturally derived from a name of 
\( g \).\footnote{Any function \( g \) realizing \( f \) via~\eqref{eq:realize}
automatically has a
certain \df{extensionality} property: if \( \dg_{1}(y) = \dg_{1}(y') \) then
\( g(y) = g(y') \).}
 \end{definition}

 \begin{definition}
For representations \( \xi,\eta \), 
we write \( \xi \le \eta \) if there is a computable partial function 
\( f :\Sg^{\dN} \to \Sg^{\dN} \) with \( \xi(x) = \eta(f(x)) \).
In words, we say that \( \xi \) is \df{reducible} to \( \eta \), or that \( f \)
reduces (translates) \( \xi \) to \( \eta \).
There is a similar definition of reduction for notations.
We write \( \xi \equiv \eta \) if \( \xi \le \eta \) and \( \eta \le \xi \).
 \end{definition}

\subsection{Constructive topological space}

Let us start with the definition of topology with the help of a subbasis of
(possibly empty) open sets.

 \begin{definition}
A \df{constructive topological space} \( \bX = (X, \sg, \nu) \)
is a topological space over a set \( X \) with a subbasis \( \sg \) effectively
enumerated (not necessarily without repetitions) as a list 
\( \sg =\set{\nu(1),\nu(2),\dots} \), 
and having the \( T_{0} \) property (thus, every point is determined uniquely
by the subset of elements of \( \sg \) containing it).

By definition, a constructive topological space satisfies the second
countability axiom.
We obtain a basis
  \[
   \sg^{\cap}
 \]
of the space \( \bX \) by taking all possible finite
intersections of elements of \( \sg \).
It is easy to produce an effective enumeration for \( \sg^{\cap} \) from \( \nu \).
We will denote this enumeration by \( \nu^{\cap} \).
This basis will be called the \df{canonical basis}.

For every nonempty subset \( Y \) of the space \( X \), the subspace of \( X \) will
naturally get the same structure \( \bY = (Y,\sg,\nu) \) defined by
\( \setof{V\cap Y: V\in\sg} \).

The \df{product operation} is defined over constructive topological spaces
in the natural way.
 \end{definition}

 \begin{remark}
The definition of a subspace shows that a constructive topological space is
not a ``constructive object'' by itself, since the set \( X \) itself is not
necessarily given effectively.
For example, any nonempty subset of the real line is a constructive topological
space, as a subspace of the real line.
 \end{remark}

 \begin{examples}\label{example:constr-topol}
The following examples will be used later.
 \begin{enumerate}

  \item A discrete topological space, where the underlying
set is finite or countably infinite, with a fixed enumeration.
 
  \item\label{i:x.constr-topol.real}
The real line, choosing the basis to be the open intervals
with rational endpoints with their natural enumeration.
Product spaces can be formed
to give the Euclidean plane a constructive topology.

  \item\label{i:x.constr-topol.upper-real}
The real line \( \dR \), with the subbasis \( \sg_{\dR}^{>} \) defined as
the set of all open intervals \( \opint{-\infty}{b} \) with rational endpoints
\( b \).
The subbasis \( \sg_{\dR}^{<} \), defined similarly, leads to another topology.
These two topologies differ from each other and from
the usual one on the real line, and they are not Hausdorff spaces.
  
  \item\label{i:x.constr-topol.Cantor}
This is the constructive version of Example~\ref{example:topol}.\ref{i:x.topol.Cantor}.
Let \( X \) be a set with a constructive discrete topology,
and \( X^{\dN} \) the set of infinite sequences with elements from \( X \),
with the product topology: a natural enumerated basis is also easy to
define.
 
 \end{enumerate}
 \end{examples}

 \begin{definition}
Due to the \( T_{0} \) property, every point in our space is determined
uniquely by the set of open sets containing it.
Thus, there is a representation \( \gm_{\bX} \) of \( \bX \) defined as follows.
We say that \( \gm_{\bX}(p) = x \) if
\( \En_{\Sg}(p) = \setof{w : x \in \nu(w)} \).
If \( \gm_{\bX}(p) = x \) then we say that the infinite sequence
\( p \) is a \df{complete name} of \( x \):
it encodes all names of all subbasis elements containing \( x \).
From now on, we will call \( \gm_{\bX} \) the \df{complete standard
representation of the space \( \bX \)}.
 \end{definition}

 \begin{remark}
Here it becomes important that representations are allowed to be partial
functions.
 \end{remark}

 \begin{definition}[Constructive open sets]
In a constructive topological space \( \bX = (X, \sg, \nu) \),
a set \( G \sbsq X \) is called \df{constructive open}, or \df{lower semicomputable
open} in set \( B \) if there is a computably enumerable set \( E \) with 
\( G = \bigcup_{w \in E} \nu^{\cap}(w) \cap B \).
It is constructife open if it is constructive open in \( X \).
 \end{definition}

In the special kind of spaces in which randomness has been developed until
now, constructive open sets have a nice characterization:

 \begin{proposition}\label{propo:constr-open-nice-charac}
Assume that the space \( \bX = (X, \sg, \nu) \)
has the form \( Y_{1}\times \dots \times Y_{n} \) where
each \( Y_{i} \) is either \( \Sg^{*} \) or \( \Sg^{\dN} \).
Then a set \( G \) is constructive open iff it is open and the set
\( \setof{\tup{w_{1},\dots,w_{n}} : \bigcap_{i}\nu(w_{i}) \sbs G} \) 
is recursively enumerable.
 \end{proposition}
 \begin{proof}
The proof is not difficult, but it relies on the discrete nature of
the space \( \Sg^{*} \) and on the fact that the space \( \Sg^{\dN} \) is compact
and its basis consists of sets that are open and closed at the same time.
 \end{proof}

It is easy to see that if two sets are constructive open then so is their union.
The above remark implies that a space having
the form \( Y_{1}\times \dots \times Y_{n} \) where
each \( Y_{i} \) is either \( \Sg^{*} \) or \( \Sg^{\dN} \), also the intersection of
two recursively open sets is recursively open.
We will see that this statement holds, more generally, in all computable
metric spaces.

\subsection{Computable functions}

 \begin{definition}
Let \( \bX_{i} = (X_{i}, \sg_{i}, \nu_{i}) \) be constructive topological
spaces, and let \( f : X_{1} \to X_{0} \) be a function.
As we know, \( f \) is continuous iff the inverse image \( f^{-1}(G) \) of each
open set \( G \) is open in its domain.
Computability is an effective version of continuity:
it requires that the inverse image
of basis elements is uniformly constructively open.
More precisely, \( f : X_{1} \to X_{0} \) is \df{computable} if the set 
 \[
  \bigcup_{V \in \sg_{0}^{\cap}} f^{-1}(V) \times \{V\}
 \]
is a constructive open subset of \( X_{1} \times \sg_{0}^{\cap} \).
Here the basis \( \sg_{0}^{\cap} \) of \( \bX_{0} \) is treated as a discrete
constructive topological space, with its natural enumeration.

A partial function is computable if its restriction to the subspace that is
its domain, is computable.
 \end{definition}

The above definition depends on the enumerations \( \nu_{1},\nu_{0} \).
The following theorem
shows that this computability coincides with the one
obtained by transfer via the representations \( \gm_{\bX_{i}} \).
 
 \begin{proposition}[Hertling]\label{propo:hertling-computable}
For \( i=0,1 \), let
\( \bX_{i} = (X_{i}, \sg_{i}, \nu_{i}) \) be constructive topological spaces.
Then a function \( f : X_{1} \to X_{0} \) is
computable iff it is \( (\gm_{\bX_{1}},\gm_{\bX_{0}}) \)-computable for the 
representations \( \gm_{\bX_{i}} \) defined above.
 \end{proposition}


The notion of computable functions helps define morphisms between constructive
topological spaces.

 \begin{definition}
Let us call two spaces \( X_{1} \) and \( X_{0} \) \df{effectively homeomorphic}
if there are computable maps between them that are inverses of each
other.
In the special case when \( X_{0}=X_{1} \), we say 
that the enumerations of subbases
\( \nu_{0},\nu_{1} \) are \df{equivalent} if the identity
mapping is a effective homeomorphism.
This means that there are recursively enumerable sets \( F,G \) such that
 \[
  \nu_{1}(v) = \bigcup_{(v, w) \in F} \nu_{0}^{\cap}(w) \txt{ for all \( v \)},
\quad
  \nu_{0}(w) = \bigcup_{(w, v) \in G} \nu_{1}^{\cap}(v) \txt{ for all \( w \)}.
 \]
 \end{definition}

\subsection{Computable elements and sequences}
\label{subsec:computable-elements}

Let us define computable elements.

 \begin{definition}
Let \( \bU = (\{0\}, \sg_{0}, \nu_{0}) \) 
be the one-element space turned into a trivial constructive
topological space, and let \( \bX = (X, \sg, \nu) \) be another constructive
topological space.
We say that an element \( x \in X \) is \df{computable} if the function
\( 0 \mapsto x \) is computable.
It is easy to see that this is equivalent to the requirement that
the set \( \setof{u : x \in \nu(u)} \) is recursively enumerable.
Let \( \bX_{j}= (X_{j}, \sg_{j}, \nu_{j}) \), 
for \( i=0,1 \) be constructive topological spaces.
A sequence \( f_{i} \), \( i=1,2,\dots \) of functions with
\( f_{i} : X_{1} \to X_{0} \) is a
\df{computable sequence of computable functions} if 
\( \tup{i, x} \mapsto f_{i}(x) \) is a computable function.
 \end{definition}

It is easy to see that this
statement is equivalent to the statement that there is a recursive function
computing from each \( i \) a name for the computable function \( f_{i} \).
To do this formally, one sometimes using the s-m-n theorem of recursion
theory, or calls the method ``currification''.

The notion of computability can be relativized.

 \begin{definition}
Let \( X_{j} \), \( j=1,2 \) be two constructive topological spaces (or more
general, representations) and let \( x_{j}\in X_{j} \) be given.
We say that \( x_{2} \) is \( x_1 \)-\df{computable} if there is a computable
partial function \( f:X_{1}\to X_{2} \) with \( f(x_{1})=x_{2} \).
 \end{definition}

 \begin{remark}
The requirement that \( f \) computes \( y \) from \emph{every} representation of
\( x \) makes \( x \)-computability a very different requirement
of mere Turing reducibility of \( y \) to \( x \).
We will point out an important implication of this difference later, 
in the notion of \( \mu \)-randomness.
It is therefore preferable not to use the term ``oracle computation''
when referring to computations on representations .
 \end{remark}
 
\subsection{Semicomputability}

Lower semicomputability is a constructive version of lower
semicontinuity, as given in
Definition~\ref{def:lower-semicont}, but the sets that are required to be open
there are required to be constructive open here.
The analogue of Proposition~\ref{propo:semicont-sup} and
Corollary~\ref{coroll:semicont-extend}
holds also: a lower semicomputable function is the supremum of simple constant
functions defined on basis elements, and a lower semicomputable function 
defined on a subset can always be extended to one over the whole space.

 \begin{definition}
Let \( \bX = (X, \sg, \nu) \) be a constructive topological space.
A partial function \( f :X \to \ol\dR_{+} \)  with domain \( D \)
is called \df{lower semicomputable}
if the set \( \setof{(x,r): f(x) > r} \) is a constructive open subset of \( D\times\ol\dR_{+} \).

We define the notion of \( x \)-lower semicomputable similarly to the notion
of \( x \)-computability.
 \end{definition}

Let \( \bY = (\ol\dR_{+}, \sg_{\dR}^{<}, \nu_{\dR}^{<}) \) be the effective
topological space introduced in
Example~\ref{example:constr-topol}.\ref{i:x.constr-topol.real},
in which \( \nu_{\dR}^{>} \) is an enumeration of all open intervals of the
form \( \rint{r}{\infty} \) with rational \( r \).
The following characterization is analogous to Proposition~\ref{propo:semicont-topol}.

 \begin{proposition}\label{propo:semicomp-topol}
A function \( f:X\to\dR \) is lower semicomputable if and only if it 
is \( (\nu,\nu_{\dR}^{>}) \)-computable.
 \end{proposition}

As a name of a computable function, we can use the name of the enumeration
algorithm derived from the definition of computability, or the name
derivable using this representation theorem.

The following example is analogous to Example~\ref{example:open-set-indic}.

 \begin{example}\label{example:re-open-set-indic}
The indicator function \( 1_{G}(x) \) of an arbitrary constructive open
set \( G \) is lower semicomputable.
 \end{example}

The following fact is analogous to Proposition~\ref{propo:semicont-sup}:

 \begin{proposition}\label{propo:semicomp-sup}
Let \( f_{1},f_{2},\dots \) be a computable sequence of lower
semicomputable functions (via their names) over a constructive topological
space \( \bX \).
Then \( \sup_{i} f_{i} \) is also lower semicomputable.
 \end{proposition}

The following fact is analogous to Proposition~\ref{propo:semicont-rep}:

 \begin{proposition}\label{propo:semicomp-rep}
Let \( \bX=(X,\sg,\nu) \) be a constructive
topological space with enumerated basis \( \bg=\sg^{\cap} \)
and \( f:X\to\ol\dR_{+} \) a lower semicomputable function.
Then there is a lower semicomputable function \( g:\bg\to\ol\dR_{+} \) 
(where \( \bg \) is taken with the discrete topology) with
\( f(x)=\sup_{x\in\bg}g(\bg) \).
 \end{proposition}

In the important special case of Cantor spaces, 
the basis is given by the set of finite sequences.
In this case we can also require the function \( g(w) \) to be monotonic in the
words \( w \):

 \begin{proposition}\label{propo:semicomp-rep.Cantor}
Let \( X=\Sg^{\dN} \) be a Cantor space as defined in
Example~\ref{example:constr-topol}.\ref{i:x.constr-topol.Cantor}.
Then \( f:X\to\ol\dR_{+} \) is lower semicomputable if and only if there is a
lower semicomputable function 
\( g:\Sg^{*}\to\ol\dR_{+} \) (where \( \Sg^{*} \) is taken as a discrete space)
monotonic with respect to the relation \( u\prefix v \),
with \( f(\xi) = \sup_{u\prefix\xi} g(u) \).
 \end{proposition}

 \begin{remark}\label{rem:semicomp.cptable-rep}
In the above representation, we could require the function \( g \) to be computable.
Indeed, we can just replace \( g(w) \) with \( g'(w) \) where
\( g'(w) = \max_{v\prefix w}g(v,\len{w}) \), and \( g(v,n) \) is as much of \( g(v) \) as can be
computed in \( n \) steps.
 \end{remark}

Upper semicomputability is defined analogously:

\begin{definition}
We will call a closed set \df{upper semicomputable} (co-recursively enumerable) if
its complement is a lower semicomputable (r.e.) open set.
\end{definition}

The proof of the following statement is not difficult.

 \begin{proposition}\label{propo:one-arg-cpt}
Let \( \bX_{i} = (X_{i}, \sg_{i}, \nu_{i}) \) 
for \( i=1,2,0 \) be constructive topological spaces, and
let \( f: X_{1} \times X_{2} \to X_{0} \), and assume that \( x_{1} \in X_{1} \) is
a computable element.
 \begin{enumerate}

  \item If \( f \) is computable 
then \( x_{2} \mapsto f(x_{1}, x_{2}) \) is also computable.

  \item If \( \bX_{0} = \ol\dR \), and \( f \) is lower semicomputable
then \( x_{2} \mapsto f(x_{1}, x_{2}) \) is also lower semicomputable.

 \end{enumerate}
 \end{proposition}

\subsection{Effective compactness}

Compactness is an important property of topological spaces.
In the constructive case, however, a constructive version of compactness is
frequently needed.

  \begin{definition}\label{def:eff-compact-general}
 A constructive topological space \( X \) with canonical basis \( \bg \)
 has \df{recognizeable covers} if the set 
  \[
    \setof{ S\sbsq\bg : S\txt{ is finite and }\bigcup_{U \in S} U=X}
  \]
 is recursively enumerable.

 A compact space with recognizeable covers will be called
 \df{effectively compact}. 
  \end{definition}

 \begin{example}
Let \( \ag\in\clint{0}{1} \) be a real number such that the set of rationals
less than \( \ag \) is recursively enumerable but the set of rationals larger
than \( \ag \) are not.
(It is known that there are such numbers, for example
 \( \sum_{x\in \dN} 2^{-K(x)} \).)
Let \( X \) be the subspace of the real line on the interval \( \clint{0}{\alpha} \),
with the induced topology.
A basis of this topology is the set \( \bg \) of all nonempty intervals of the form
\( I\cap\clint{0}{\alpha} \), where \( I \) is an open interval with rational endpoints.

This space is compact, but not effectively so.
 \end{example}

 The following is immediate.

 \begin{proposition}
In an effectively compact space, in
every recursively enumerable set of basis elements covering
the space one can effectively find a finite covering.
 \end{proposition}

It is known that every closed subset of a compact space is compact.
This statement has a constructive counterpart:

  \begin{proposition}
Every upper
semicomputable closed subset \( E \) of an effectively compact space \( X \)
is also effectively compact.
  \end{proposition}
  \begin{proof}
\( U_{1},\dots,U_{k} \) be a finite set covering \( E \).
 Let us enumerate a sequence \( V_{i} \) of canonical basis elements
 whose union is the complement of \( E \).
 If \( U_{1}\cup\dots\cup U_{k} \) covers \( E \) then along with the sequence
 \( V_{i} \) it will cover \( X \), and this will be recognized in a finite number
 of steps.
  \end{proof}

On an effectively compact space, computable functions have computable extrema:

 \begin{proposition}\label{propo:lsc-min}
Let \( X \) be an effectively compact space, let \( f(x) \) be
lower semicomputable function on \( X \).
Then the infimum (which is always a minimum)
of \( f(x) \) is lower semicomputable uniformly from the
definition of \( X \) and \( f \).
 \end{proposition}
 \begin{proof}
For any rational number \( r \) we must be able to recognize that the minimum
is greater than \( r \).
For each \( r \) the set \( \setof{x:f(x)>r} \) is a lower semicomputable open
set.
It is the union of an enumerated sequence of canonical basis elements.
If it covers \( X \) this will be recognized in a finite number of steps.
 \end{proof}

It is known that a continuous function map compact sets into compact ones.
This statement also has a constructive counterpart.

 \begin{proposition}\label{propo:image-of-eff-compact}
Let \( X \) be an effectively compact space, and \( f \) a computable function from \( X \)
into another constructive topological space \( Y \).
Then \( f(X) \) is effectively compact.
 \end{proposition}
\begin{proof}
Let \( \bg_{X} \) and \( \bg_{Y} \) be the enumerated bases of the space \( X \)
and \( Y \) respectively.
Since \( f \) is computable, there is a recursively enumerable set 
\( \cE\sbsq\bg_{X}\times\bg_{Y} \)
such that \( f^{-1}(V)=\bigcup_{(U,V)\in E} U \) holds for all \( V\in\bg_{Y} \).
Let \( \cE_{V}=\setof{U: (U,V)\in \cE} \), then \( f^{-1}(V) = \bigcup \cE_{V} \).

Consider some finite cover
\( f(X)\sbsq V_{1}\cup\dots\cup V_{k} \) with some \( V_{i}\in\bg_{Y} \).
We will show that it will be recognized.
Let \( \cU=\bigcup_{i}\cE_{V_{i}} \), then \( X\sbsq \bigcup\cU \), 
and the whole set \( \cU\sbsq\bg_{X} \) is enumerable.
By compactness, there is a finite number of elements \( U_{1},\dots,U_{n}\in\cU \)
with \( X\sbsq\bigcup_{i} U_{i} \).
By effective compactness, every one of these finite covers will be recognized.
And any one such recognition 
will serve to recognize that \( V_{1},\dots,V_{k} \) is a cover of \( f(X) \).
\end{proof}

\subsection{Computable metric space}

Following~\cite{BrattkaPresserMetric03}, we define
a computable metric space as follows.

 \begin{definition}
A \df{constructive metric space} is a tuple \( \bX = (X, d, D, \ag) \) where \( (X,d) \)
is a  metric space, with a countable dense subset \( D \)
and an enumeration \( \ag \) of \( D \).
It is assumed that the real function \( d(\ag(v),\ag(w)) \) is computable.
 \end{definition}

As \( x \) runs through elements of \( D \) and \( r \) through positive rational
numbers, we obtain the enumeration of
a countable basis \( \setof{B(x, r) : x \in D, r\in \dQ} \) (of balls or radius \( r \)
and center \( x \)) of \( \bX \),  
giving rise to a constructive topological space \( \tilde\bX \).

 \begin{definition}
Let us call a sequence \( x_{1}, x_{2},\dots \) a \df{Cauchy} sequence if
for all \( i<j \) we have \( d(x_{i},x_{j}) \le 2^{-i} \).
To connect to the type-2 theory of computability developed above,
the \df{Cauchy-representation} \( \dg_{\bX} \) of the space is defined in a
natural way.
 \end{definition}

It can be shown that as a representation of \( \tilde\bX \), it is equivalent
to \( \gm_{\tilde\bX} \): \( \dg_{\bX} \equiv \gm_{\tilde\bX} \).

\begin{examples}\label{example:constr-metric}\
 \begin{enumerate}

  \item\label{i:x.constr-metric.Cclint{0}{1}} 
Example~\protect\ref{example:Cclint{0}{1}} is a constructive
metric space, with either of the two (equivalent)
choices for an enumerated dense set.

  \item\label{i:x.constr-metric.Cantor} 
Consider the metric space of
Example~\ref{example:metric}.\ref{i:x.metric.Cantor}: the Cantor space \( (X,d) \).
Let  \( s_{0}\in\Sg \) be a distinguished element.
For each finite sequence \( x\in\Sg^{*} \) let us define the infinite sequence
\( \xi_{x}=x s_{0} s_{0}\dots \).
The elements \( \xi_{x} \) form a (naturally enumerated) dense set in the space \( X \),
turning it into a constructive metric space.

 \end{enumerate}
\end{examples}

Let us point out a property of metric spaces that we use frequently.
 
 \begin{definition}
For balls \( B_{j}=B(x_{j},r_{j}) \), \( j=1,2 \) we will say \( B_{1}< B_{2} \) if
\( d(x_{1},x_{2})< r_{2}-r_{1} \).
In words, we will say that \( B_{1} \) is \df{manifestly included} in \( B_{2} \).
This relation is useful since it is an easy way to see \( B_{1}\sbs B_{2} \).
 \end{definition}

The property can be generalized to some constructive topological spaces.

 \begin{definition}
A constructive topological space \( (X,\bg,\nu) \) with basis \( \bg \)
has the \df{manifest inclusion property} if there is a relation
\( b<b' \) among its basis elements with the following properties.
  \begin{alphenum}
   \item \( b<b' \) implies \( b\sbsq b' \).
    \item The set of pairs \( b<b' \) is recursively enumerable.
    \item For every point \( x \), and pair of basis elements \( b,b' \)
      containing \( x \) there is a basis element \( b'' \) containing \( x \) with
      \( b''< b,b' \).
  \end{alphenum}
We express this relation by saying that \( b \) is \df{manifestly included}
in \( b' \).
In such a space, a sequence \( b_{1}>b_{2}>\dots \) with 
\( \bigcap_{i}b_{i}=\{x\} \) is called a \df{manifest representation} of \( x \).
 \end{definition}
Note that if the space has the manifest inclusion property then
for every pair \( x\in b \) there is a manifest representation of \( x \) beginning
with \( b \).

A constructive
metric space has the manifest inclusion property as a topological space,
and Cauchy representations are manifest.

Similarly to the definition of a computable sequence of computable
functions in Subsection~\ref{subsec:computable-elements}, we can
define the notion of a computable sequence of bounded computable functions,
or the computable sequence \( f_{i} \) of computable Lipschitz functions: 
the bound and the Lipschitz constant of \( f_{i} \) are required to be
computable from \( i \).
The following statement shows, in an effective form,
that a function is lower semicomputable if and only if it is the supremum
of a computable sequence of computable functions.

 \begin{proposition}\label{propo:semicomp-as-limit}
Let \( \bX \) be a computable metric space.
There is a computable mapping that to each name of a nonnegative
lower semicomputable
function \( f \) assigns a name of a computable sequence of computable 
bounded Lipschitz functions \( f_{i} \) whose supremum is \( f \).
 \end{proposition}
\begin{proof}[Proof sketch]
Show that \( f \) is the supremum of a computable sequence of
functions \( c_{i} 1_{B(u_{i}, r_{i})} \) where \( u_{i}\in D \) and 
\( c_{i}, r_{i} > 0 \) are rational.
Clearly, each indicator function \( 1_{B(u_{i},r_{i})} \) is the supremum
of a computable sequence of computable functions \( g_{i,j} \).
We have \( f = \sup_{n} f_{n} \) where \( f_{n} = \max_{i \le n} c_{i} g_{i,n} \).
It is easy to see that the bounds on the functions \( f_{n} \) are computable
from \( n \) and that they all are in \( \Lip_{\bg_{n}} \) for a
\( \bg_{n} \) that is computable from \( n \).
 \end{proof}

The following is also worth noting.

 \begin{proposition}
In a computable metric space, the intersection of two constructive open sets is
constructive open.
 \end{proposition}
 \begin{proof}
Let \( \bg = \setof{B(x, r) : x \in D, r\in \dQ} \) be a basis of our space.
For a pair \( (x,r) \) with \( x \in D \), \( r \in \dQ \), let
 \[
  \Gg(x,r) = \setof{(y,s): y\in D,\;s\in \dQ,\; d(x,y)+s < r}.
 \]
If \( U \) is a constructive open set, then there is a computably enumerable set
\( S_{U} \sbs D \times \dQ \) with \( U = \bigcup_{(x,r) \in S_{U}} B(x,r) \).
Let \( S'_{U} = \bigcup\setof{\Gg(x,r) : (x,r) \in S_{U}} \), then we have
\( U = \bigcup_{(x,r) \in S'_{U}} B(x,r) \).
Now, it is easy to see 
 \[
  U\cap V = \bigcup_{(x,r) \in S'_{U} \cap S'_{V}} B(x,r).
 \]
 \end{proof}

The following theorem is very useful.

\begin{theorem}\label{thm:eff-compact-metric}
A computable metric space \( X \) is \df{effectively compact} if and only if
from each (rational) \( \eps \) one can compute a finite set of \( \eps \)-balls
covering \( X \). 
\end{theorem}
 \begin{proof}
Suppose first that the space is effectively compact.
For each \( \eps \), let \( B_{1},B_{2},\dots \) be a
list of all canonical balls with radius \( \eps \).
This sequence covers the space, so already some \( B_{1},\dots,B_{n} \) covers the
space, and this will be recognized.

Suppose now that for every rational 
\( \eps \) one can find a finite set of \( \eps \)-balls covering the space.
Let \( \cS\sbsq\bg \) be a finite set of basis elements (balls) covering the space.
For each element \( G=B(u,r)\in\cS \), let \( G_{\eps}=B(u,r-\eps) \)
be its \( \eps \)-interior, and \( \cS_{\eps} = \setof{G_{\eps}: G\in \cS} \).
Then \( G=\bigcup{\eps>0} G_{\eps} \), and \( X = \bigcup_{\eps>0}\bigcup\cS_{\eps} \).
Compactness implies that there is an \( \eps>0 \) such that already
\( \cS_{eps} \) covers the space.
Let \( B_{1},\dots,B_{n} \) be a finite set of \( \eps/2 \)-balls \( B_{i}=B(c_{i},r_{i}) \)
covering the space that can be computed from \( \eps \).
Each of these balls \( B_{i} \) intersects one of the the sets \( G_{\eps}=B(u,r-\eps) \),
\( d(u,c_{i})\le r-\eps/2 \).
But then \( B_{i}\sbsq B(u,r) \) in a recognizeable way.
Once all the relations \( d(u,c_{i})\le r-\eps/2 \) will be recognized we will also
recognize that \( \cS \) covers the space.
 \end{proof}

We can strengthen now Example~\ref{example:Cclint{0}{1}}:
 \begin{example}[Prove!]\label{example:C(X)-cptable}
Let \( X \) be an effectively compact computable metric space.
Then the metric space \( C(X) \) with the dense set of functions \( \cE(D) \) introduced
in Definition~\ref{def:hat-funcs} is a computable metric space.
 \end{example}

The structure of a constructive metric space will be inherited on certain
subsets:

\begin{definition}\label{def:constr-metric-subspace}
Let \( \bX=(X,\d,D,\ag) \) be a constructive metric space, and \( G\sbs X \) a
constructive open subset.
Then there is an enumeration \( \ag_{G} \) of the set \( D\cap G \) that creates a
constructive metric space \( \bG=(G,\d,D,\ag_{G}) \).
\end{definition}

\begin{remark}
An arbitrary subset of a constructive metric space will inherit the
constructive topology.
It will also inherit the metric, but not necessarily the structure of a
constructive metric space.
Indeed, first of all it is not necessarily a complete metric space.
It also does not inherit an enumerated dense subset.
\end{remark}

\killtext{
\section{Enumerative lattices}\label{sec:lattice}

This section, adapted from Hoyrup-Rojas,
is not strictly necessary to follow the rest of the notes.
Its most important implications will also be stated separately.
It allows to see lower semicomputability from a wider perspective.
I added some complications in order to show how an enumerative lattice
also gives rise to a constructive topological space.
 
\begin{definition}
In a lattice, let us call an element different from the minimum element \( \bot \)
\df{discrete} if it is not the supremum of the set of elements smaller than it.

An \df{enumerative lattice} 
is a tuple \( (X,\le,\cP,\nu) \) with the following properties.
 \begin{romanenum}

  \item \( (X,\le) \) is a partial order in which all sets have an upper bound.
 
 \item The (finite or infinite) set \( \cP\sbsq X \) with the property
that every element \( x \) of \( X \) is the supremum of some subset of  \( \cP \).

 \end{romanenum}

An element of an enumerative lattice can be \df{represented} by the
(possibly empty) list of elements of \( \cP \) whose supremum it is.

Let \( \cD \) be the set of discrete elements in \( X \).
We have necessarily \( \cD\sbsq \cP \), otherwise we could not get the elements
of \( \cD \) as suprema of elementes of \( \cP \).
We call an enumerative lattice \df{discrete-aware} if
the set of pairs \( \setof{\tup{p,q}: p\in \cP,\;q\in\cD,\; p\le q} \)  
is recursively enumerable.

An element of the enumerative lattice is called \df{computable} if it is
represented by a computable sequence.
\end{definition}

An enumerative lattice can be turned into a
constructive topological space in a simple way.

 \begin{definition}
From every discrete-aware enumerative lattice \( (X,\le,\cP,\nu) \) we create
a constructive topological space, by defining for each element
\( p\in \cP \) a subbasis element \( V_{p}=\setof{x\in X: p<x} \)
and for each discrete element \( d \) a subbasis element \( W_{d}=\setof{x\in X: d\le x} \).
 \end{definition}

\begin{proposition}
We indeed defined a constructive topological space this way.
\end{proposition}
 \begin{proof}
Only the \( T_{0} \) property must be checked.
Let \( x_{1},x_{2} \) be two distinct elements, let \( Q_{j} \) be the set of
elements of \( \cP \) smaller than \( x_{j} \).
Assume first that none of them is discrete. 
Then \( Q_{1}\ne Q_{2} \) and there will be an element in one of these
distinguishing \( x_{1} \) from \( x_{2} \) topologically.

Suppose now that \( x_{1} \) is discrete and \( x_{2} \) is not.
If \( Q_{1}\ne Q_{2} \) then what we had previously still holds.
If \( Q_{1}=Q_{2} \) then \( x_{1}>x_{2} \), hence \( \setof{z: z\ge x_{1}} \)
distinguishes \( x_{1} \) from \( x_{2} \).

Suppose that both of them are discrete.
If \( x_{1}>x_{2} \) then the previous argument still works.
Otherwise the set \( \setof{z: z\ge x_{2}} \) distinguishes the two elements.
 \end{proof}

Let us show a construction in the other direction.

It is easy to check that this construction gives a special enumerated
lattice.

Recall that an element of a constructive topological space is computable if
the set of subbasis elements containing it is recursively enumerable.

 \begin{proposition}
For a discrete-aware enumerative lattice \( (X,\le,\cP,\nu) \).
an element \( x \) of \( X \) is constructive if and only if it is computable in
the corresponding topological space.
 \end{proposition}
 \begin{proof}
Suppose that \( x \) is constructive.
Let \( B \) be the set of subbasis elements containing \( x \).
If \( x\in \cD \) then the enumerability of \( B \) follows from
discrete-awareness.

If \( x\not\in \cD \) then it is not discrete, there is an enumerated sequence
\( Q \)  of elements of \( \cP \) smaller than \( x \) whose supremum it is, and \( B \) is
the set of subbasis elements determined by these.

Suppose now that \( x \) is computable in the topological space.
If it is in \( \cP \) it is constructive by definition.
If it is not in \( \cP \) then its computability implies that the set of all
elements \( x \) of \( \cP \) with \( x<\cP \) (which defines \( B \)) is recursively
enumerable.
Its supremum is \( x \).
 \end{proof}

\begin{examples}
These examples will be used later.
 \begin{enumerate}
  \item
 \( (\ol\dR,\le,\dQ,\emptyset) \) where \( \ol\dR=\dR\cup\{-\infty,\infty\} \).
The computable elements are the lower semicomputable real numbers.
  \item \( (2^{\dN}, \sbsq, \txt{ singletons}, \txt{ singletons}) \).
So, here all elements of \( \cP \) are discrete.
The computable elements are the recursively enumerable sets.
  \item \( (\set{\bot,\top}, \le, \{\top\}, \{\top\}) \), where \( \bot < \top \).
 \end{enumerate}
\end{examples}

Each of these examples also has the following useful property.

 \begin{definition}
 We say that an enumerative directed lattice \( \dL = (X,\le,\cP,\nu) \) has
 the \df{locality property} if for every infinite subset \( Q \) and element \( p \) of \( \cP \),
  if \( \sup Q>p \) then there is a finite subset \( T \) of \( Q \) with \( \sup T >p \).
 \end{definition}

The following simple proposition does not hold for topological spaces in
general, and one of the reasons to introduce enumerative lattices.

\begin{proposition}
Let \( (X,\le,\cP,\cD) \) be an enumerative lattice.
The computable elements can be enumerated, in such a way that the
representing sequence of each is enumerated uniformly.
\end{proposition}

 \begin{remark}
On complete lattices, it is more usual to introduce the Scott topology.
It seems much finer and less constructive than the topology we
have introduced here.

In particular, the following statement (true with the Scott topology)
does not seem to be true:
Let \( (X_{j},\le,\cP_{j},\nu_{i}) \) for \( j=1,2 \) be discrete-aware enumerative
lattices (say, with the locality property) and \( f : X_{1}\to X_{2} \) a function.
Then \( f \) is continuous in the
corresponding topology if and only if it commutes with suprema.
 \end{remark}

Here is a more general example.

 \begin{example}\label{example:lattice-open-sets}
Let \( (X,\sg,\nu_{0}) \) be a constructive topological space.
Let \( \bg\sg^{\cap} \) be the enumerated basis obtained from \( \sg \) by taking
all finite intersections.
A point of \( x\in X \) is called an \df{atom} if \( \{x\} \) is an open set : in
this case \( \{x\} \) is necessarily in \( \bg \).

Let \( (Y,\le,\cP,\mu) \) be the enumerated lattice defined as follows.
The elements of \( Y \) are the open sets of \( (X,\sg,\nu) \), where \( \le \) denotes
inclusion.
The set \( \cP \) is \( \sg \), and \( \mu \) is the enumeration.

If also an enumeration of all atoms (as one-element sets) is given then
\( (Y,\le,\cP,\mu) \) is discrete-aware and thus can be turned into a constructive
topological space itself.
 \end{example}

This example can be generalized in a useful way.

 \begin{definition}\label{def:enum-C(X,Y)}
Let \( \bX=(X,\bg,\mu) \) be a constructive topological space in which \( \bg \) is an
enumerated basis (not just subbasis) and \( \bY=(Y,\le,\cP,\nu) \)
an enumerative lattice.
For \( b=\mu(i) \) and \( p=\nu(j) \), we define the \df{step function}
 \begin{align*}
   f_{b,p}(x) = \begin{cases}
                          p & \txt{if } x\in b
\\              \bot    & \txt{otherwise}.
                      \end{cases}   
 \end{align*}
This introduces a natural enumeration \( \eta \) on the set \( \cF \) of step functions.
The pointwise ordering will be denoted by \( \ptwise \).
Let \( \cC(\bX,\bY) \) be the set of pointwise suprema of step functions.
 \end{definition}

\begin{proposition}[Check!]
If the enumerative lattice \( \bY=(Y,\le,\cP,\nu) \) is discrete-aware then the set
\( C(X,Y) \) is the set of continuous functions from \( X \) to the topological space
corresponding to \( \bY \).

If \( \bY=\bR^{<} \) is the topological space or the real numbers generated by the
subbasis of intervals of form \( \opint{r}{\infty} \) then \( \bC(\bX,\bY) \) is the set of
lower semicontinuous real functions over \( \bX \).
\end{proposition}

 \begin{proposition}[Check!]\label{propo:enum-C(X,Y)}
The above construction gives an enumerative lattice
\( (\cC(\bX,\bY),\ptwise,\cF,\eta) \).
If the set of atoms of \( \bX \) can be enumerated then the defined
lattice is discrete-aware.
 \end{proposition}

The following proposition is very useful.

 \begin{proposition}\label{propo:extend}
Let \( (X,\bg,\mu) \) be a constructive topological space with the
manifest inclusion property (for example a metric space),
and \( (Y,\le,\cP,\nu) \) an enumerative lattice.
Let \( \cA \) be an algorithm whose input is a manifest representation 
\( \vek{b}=\tup{b_{1},b_{2},\dots} \) of an element 
\( x \) of \( X \), and whose output will be interpreted as a sequence
\( \vek{q}=\tup{q_{1},q_{2},\dots}=\cA(\vek{b}) \) 
of elements of \( \cP \) where \( \sup_{i} q_{i} \) will represent some element of \( Y \).
We can define (uniformly in the program of \( \cA \)) a function
\( f_{\cA}\in\cC(\bX,\bY) \) with the property that for all \( x \) if \( \cA \) is extensional
on \( x \) then \( \cA(\vek{b}) \) represents \( f_{\cA}(x) \).
 \end{proposition}

What is remarkable about this proposition is
that even if the algorithm \( \cA \) defines only a partial function (since it
might not even be extensional for all \( x \)), the function \( f_{\cA} \) extending
this to a total computable function in \( \cC(\bX,\bY) \) always exists.

 \begin{proof}
The program enumerates simultaneously all computations of \( \cA \)
on all finite sequences \( b_{1}>\dots >b_{i} \).
If in step \( n \) an output \( p_{j} \) is produced from \( b_{1}>\dots >b_{i} \)
then it outputs the function \( f_{b_{i},p_{j}} \).
The result \( f_{\cA}(x) \) is the (pointwise) supremum of all these
functions outputted.

It is easy to check that indeed, for all points \( x \) on which the algorithm
\( \cA \) is extensional, it outputs \( f_{\cA}(x) \).
Namely, for every step function \( f_{b,p} \) it outputs, for every \( x\in b \)
the basis element \( b \) is the starting point of some manifest
representation of \( x \).
Therefore if \( \cA \) is extensional on \( x \) then outputting \( f_{b,p} \) never
commits \( f_{\cA} \) to something against the algorithm \( \cA \) would do on a
representation of \( x \).
 \end{proof}

 \begin{corollary} We have the following for \( X,Y \) as above.
 \begin{enumerate}
  \item
The computable elements  of \( \cC(\bX,\bY) \) are exactly the computable
functions from \( X \) to \( Y \) defined by the given representations.
  \item The \( x \)-computable elements of \( \bY \) are exactly the images of \( x \)
by \emph{total} computable functions from \( \bX \) to \( \bY \).
 \end{enumerate}
 \end{corollary}

 The following concept arises frequently in lattices.

 \begin{definition}
A function \( f : \bX\to \bY \) between two complete lattices is called 
\df{Scott continuous} if it is monotonic and interchangeable with suprema.
 \end{definition}

The following proposition is easy to prove but is very useful.

 \begin{proposition}\label{propo:Scott}
Let \( (X,\le,\cP) \) and \( (X',\le,\cP') \) be enumerative lattices.
Let \( f:X\to Y \) be a Scott continuous function.
If \( f \) is computable if and only if it
is uniformly computable on elements of the form
\( \sup\set{p_{1},\dots,p_{k}} \) for \( p_{i}\in\cP \).
 \end{proposition}
}

\section{Constructive measure theory}\label{sec:constr-meas}

The basic concepts and results of measure theory are recalled in
Section~\ref{sec:measures}.
For the theory of measures over metric spaces, see
Subsection~\ref{subsec:measure-metric}.
We introduce a certain fixed, enumerated sequence of Lipschitz functions
that will be used frequently.
Let \( \cE \) be the set of functions introduced in
Definition~\ref{def:hat-funcs}. 
The following construction will prove useful later.

 \begin{proposition}\label{propo:bd-Lip-set}
All bounded continuous functions can be 
obtained as the limit of an increasing sequence of functions from
the enumerated countable set \( \cE \) of bounded computable
Lipschitz functions introduced in~\eqref{eq:bd-Lip-seq}.
 \end{proposition}
The proof is routine.

\subsection{Space of measures}

Let \( \bX = (X, d, D, \ag) \) be a computable metric space.
In Subsection~\ref{subsec:measure-metric}, the space
\( \cM(\bX) \) of measures over \( \bX \) is defined, along with a natural 
enumeration \( \nu = \nu_{\cM} \) for a subbasis \( \sg = \sg_{\cM} \)
of the weak topology.
This is a constructive topological space \( \bM \) which can be metrized by
introducing, as in~\ref{subsubsec:Prokh}, the \df{Prokhorov distance} \( \rho(\mu, \nu) \).
Recall that we defined \( D_{\bM} \) as the set of
those probability measures that are concentrated on finitely many points of
\( D \) and assign rational values to them.
Let \( \ag_{\bM} \) be a natural enumeration of \( D_{\bM} \).
Then 
 \begin{equation}\label{eq:metric-measures}
   (\cM, \rho, D_{\bM}, \ag_{\bM})
 \end{equation}
is a computable metric space whose constructive topology is equivalent to
\( \bM \).
Let \( U=B(x, r) \) be one of the balls in \( \bX \), 
where \( x\in D_{\bX} \), \( r \in \dQ \).
The function \( \mu \mapsto \mu(U) \) is typically not computable,
since as mentioned in~\eqref{subsubsec:weak-top}, it is not even continuous.
The situation is better with \( \mu \mapsto \mu f \).
The following theorem is the computable strengthening of
part of Proposition~\ref{propo:measure-func-cont}: 

 \begin{proposition}\label{propo:computable-integral}
Let \( \bX = (X, d, D, \ag) \) be a computable metric space, and let
\( \bM = (\cM(\bX), \sg, \nu) \) be the constructive topological space of
probability measures over \( \bX \).
If the function \( f : \bX \to \dR \) is bounded and computable
then \( \mu \mapsto \mu f \) is computable.
 \end{proposition}
 \begin{proof}[Proof sketch]
To prove the theorem for bounded Lipschitz functions, we can invoke
the Strassen coupling theorem~\ref{propo:coupling}.

The function \( f \) can be obtained as a limit of a computable
monotone increasing sequence of computable Lipschitz functions \( f^{>}_{n} \),
and also as a limit of a computable monotone decreasing 
sequence of computable Lipschitz functions \( f^{<}_{n} \).
In step \( n \) of our computation of \( \mu f \),
we can approximate \( \mu f^{>}_{n} \) from above 
to within \( 1/n \), and \( \mu f^{<}_{n} \) from below to within \( 1/n \).
Let these bounds be \( a^{>}_{n} \) and \( a^{<}_{n} \).
To approximate \( \mu f \) to within \( \eps \), 
find a stage \( n \) with \( a^{>}_{n} - a^{<}_{n} +2/n < \eps \).
 \end{proof}

Using Example~\ref{example:C(X)-cptable}, we can extend this as follows:

 \begin{proposition}[Prove!]\label{propo:measure-funct-comp}
Let \( \bX \) be an effectively compact
metric space (and thus \( C(\bX) \) is a computable metric space).
Then the mapping \( \tup{\mu,f}\mapsto \mu f \) over \( \cM(\bX)\times C(\bX) \) is 
computable.
 \end{proposition}

Talking about open sets, only a weaker statement can be made.

 \begin{proposition}\label{propo:measure-lscomp}
Let \( G\sbsq X \) be a constructive open set.
The function \( \mu\mapsto \mu(G) \) is lower semicomputable.
 \end{proposition}

 \begin{remark}
   Using the notion of an enumerative lattice defined by Hoyrup and Rojas,
one can make this a statement about the two-argument function
\( \tup{\mu,G}\to \mu(G) \).
 \end{remark}




It is known that if our metric space \( \bX \) is compact then so is the space
\( \cM(\bX) \) of measures.
This can be strengthened:

 \begin{proposition}[Prove!]
If a complete computable metric space \( \bX \) is effectively compact
then \( \cM(\bX) \) is also effectively compact.
 \end{proposition}

\subsection{Computable and semicomputable measures}
\label{subsec:computable-meas}
A measure \( \mu \) is called \df{computable} if it is a computable element of the
space of measures.
Let \( \{g_{i}\} \) be the set of bounded Lipschitz functions over \( \bX \)
introduced  in Definition~\ref{def:hat-funcs}.

 \begin{proposition}\label{propo:computable-meas-crit}
Measure \( \mu \) is computable if and only if so is the function
\( i \mapsto \mu g_{i} \).
 \end{proposition}
 \begin{proof}
The ``only if'' part follows from Proposition~\ref{propo:computable-integral}.
For the ``if'' part, note that in order to 
trap \( \mu \) within some Prokhorov neighborhood of size \( \eps \),
it is sufficient to compute \( \mu g_{i} \) within a small
enough \( \dg \), for all \( i\le n \) for a large enough \( n \).
 \end{proof}

A non-computable density function can lead to a computable measure:

 \begin{example}
Let our probability space be the set \( \dR \) of real numbers with its
standard topology.
Let \( a < b \) be two computable real numbers.
Let \( \mu \) be the probability distribution with density function
 \( f(x) = \frac{1}{b-a}1_{\clint{a}{b}}(x) \) 
(the uniform distribution over the interval \( \clint{a}{b} \)).
Function \( f(x) \) is not computable, since it is not even continuous.
However, the measure \( \mu \) is computable: indeed,
\( \mu g_{i} = \frac{1}{b-a} \int_{a}^{b} g_{i}(x) dx \) is a computable
sequence, hence 
Proposition~\ref{propo:computable-meas-crit} implies that \( \mu \) is computable.
 \end{example}

The following theorem compensates somewhat for the fact
mentioned earlier, that the
function \( \mu \mapsto \mu(U) \) is generally not computable.

 \begin{proposition}
Let \( \mu \) be a finite computable measure.
Then there is a computable map \( h \) with the property that for every
bounded computable function \( f \) with \( |f| \le 1 \) with
the property \( \mu(f^{-1}(0))=0 \),
if \( w \) is the name of \( f \) then \( h(w) \) is the
name of a program computing the value \( \mu\setof{x: f(x) < 0} \).
 \end{proposition}
 \begin{proof}
Straightforward.
 \end{proof}

Can we construct a measure just using the pattern of
Proposition~\ref{propo:computable-meas-crit}?
Suppose that there is a computable function \( \tup{i,j}\mapsto m_{i}(j) \)
with the following properties:
 \begin{alphenum}
  \item \( i < j_{1} < j_{2} \) implies \( |m_{i}(j_{1})-m_{i}(j_{2})| < 2^{-j_{1}} \).
  \item For all \( n \), there is a probability measure \( \mu_{n} \) with
\( m_{i}(n) = \mu_{n} g_{i} \) for all \( i < n \).
 \end{alphenum}
Thus, the sequences converge, and for each \( n \), all values \( m_{i}(n) \) for \( i\le n \)
are consistent with coming from a probability
measure \( \nu_{n} \) assigning this value to \( g_{i} \). 
Is there a probability measure \( \mu \) with the property that for each \( i \) we have
\( \lim_{j} m_{i}(j) = \mu g_{i} \)?
Not necessarily, if the space is not compact.

 \begin{example}
Let \( X = \{1,2,3,\dots\} \) with the discrete topology.
Define a probability measure \( \mu_{n} \) with  \( \mu_{n} g_{i} = 0 \) for \( i<n \) and
otherwise arbitrarily.
Since we only posed \( n-1 \) linear conditions on a finite number of variables, it
is easy to see that such a \( \mu_{n} \) exists.
Then define \( m_{i}(n)=\mu_{n}(i) \) for all \( i \).

Now all the numbers \( m_{i}(n) \) converge with \( n \) to 0, but \( \mu=0 \) is not a
probability measure.
 \end{example}

To guarantee that the sequences \( m_{i}(j) \) indeed define a probability
measure, progress must be made, for example, in terms of the narrowing of
Prokhorov neighborhoods.

\subsection{Random transitions}\label{subsec:random-trans}

Consider random transitions now.

 \begin{definition}[Computable kernel]
Let \( \bX,\bY \) be computable metric spaces, giving
rise to measurable spaces with \( \sg \)-algebras \( \cA, \cB \)
respectively.
Let \( \Lg = \setof{\lg_{x} : x \in X} \) be a probability kernel from \( X \) to
\( Y \) (as defined in Subsection~\ref{subsec:transitions}).
Let \( \{g_{i}\} \) be the set of bounded Lipschitz functions over \( Y \) 
introduced in Definition~\ref{def:hat-funcs}.
To each  \( g_{i} \), the kernel assigns a (bounded) measurable function 
 \[
   f_{i}(x) = (\Lg g_{i})(x) =  \lg_{x}^{y} g_{i}(y).
 \]
We will call the kernel \( \Lg \) \df{computable} if so is the assignment 
\( (i, x) \mapsto f_{i}(x) \).
 \end{definition}

When \( \Lg \) is computable, each function \( f_{i}(x) \) is of course continuous. 
The measure \( \Lg^{*}\mu \) is determined by the values 
\( \Lg^{*} g_{i} = \mu (\Lg g_{i}) \), which are computable from \( (i, \mu) \) and
so the mapping \( \mu \mapsto \Lg^{*}\mu \) is computable.

The following example is the simplest case, when the transition is actually
deterministic. 

 \begin{example}[Computable deterministic kernel]
\label{example:computable-determ-trans}
A computable function \( h : X \to Y \) defines an operator 
\( \Lg_{h} \) with \( \Lg_{h} g = g \circ h \) (as in Example~\ref{example:determ-trans}).
This is a deterministic computable transition, in which
\( f_{i}(x) = (\Lg_{h} g_{i})(x) = g_{i}(h(x)) \) is, 
of course, computable from \( (i,x) \).
We define \( h^{*}\mu = \Lg_{h}^{*}\mu \).
 \end{example}




\backmatter

 \bibliographystyle{plain}
 \bibliography{ait,alg,gacs-publ}

\end{document}